\newtheorem{theorem}{Theorem}
\newcommand{\R}{\mathbb{R}}
\DeclareMathOperator*{\argmin}{arg\,min}
\newcommand*{\addFileDependency}[1]{
  \typeout{(#1)}
  \@addtofilelist{#1}
  \IfFileExists{#1}{}{\typeout{No file #1.}}
}
\begin{document}
%
\title{
Hybrid Modeling of Regional COVID-19 Transmission Dynamics in the U.S.}
%
%
%

\author{Yue Bai, 
        Abolfazl Safikhani,
        and~George Michailidis
\thanks{Y. Bai is with the Department of Statistics, University of Florida, Gainesville, FL 32611 USA (email: baiyue@ufl.edu).} 
\thanks{A. Safikhani is with the Department of Statistics and the Informatics Institute, University of Florida, Gainesville, FL 32611 USA (email: a.safikhani@ufl.edu).}
\thanks{G. Michailidis is with the Department of Statistics and the Informatics Institute, University of Florida, Gainesville, FL 32611 USA (email: gmichail@ufl.edu). }
}

\maketitle

\begin{abstract}
The fast transmission rate of COVID-19 worldwide has made this virus the most important challenge of year 2020. Many mitigation policies have been imposed by the governments at different regional levels (country, state, county, and city) to stop the spread of this virus. Quantifying the effect of such mitigation strategies on the transmission and recovery rates, and predicting the rate of new daily cases are two crucial tasks. In this paper, we propose a hybrid modeling framework which not only accounts for such policies but also utilizes the spatial and temporal information to characterize the pattern of COVID-19 progression. Specifically, a piecewise susceptible-infected-recovered (SIR) model is developed while the dates at which the transmission/recover rates change significantly are defined as ``break points'' in this model. A novel and data-driven algorithm is designed to locate the break points using ideas from fused lasso and thresholding. In order to enhance the forecasting power and to describe additional temporal dependence among the daily number of cases, this model is further coupled with spatial smoothing covariates and vector auto-regressive (VAR) model. The proposed model is applied to several U.S. states and counties, and the results confirm the effect of ``stay-at-home orders'' and some states' early ``re-openings'' by detecting break points close to such events. Further, the model provided satisfactory short-term forecasts of the number of new daily cases at regional levels by utilizing the estimated spatio-temporal covariance structures. {They were also better or on par with other proposed models in the literature, including flexible deep learning ones.} Finally, selected theoretical results and empirical performance of the proposed methodology on synthetic data are reported which justify the good performance of the proposed method.
\end{abstract}

\begin{IEEEkeywords}
COVID-19, break point detection, spatio-temporal model, short-term forecast.
\end{IEEEkeywords}

%

\section{Introduction}
\label{s:intro}

%
%
%
%
\IEEEPARstart{S}{ince} the first officially reported case in China in late December 2019, the SARS-CoV-2 virus spread worldwide within weeks. As of late May 2021, there have been $\sim 34$ million confirmed cases of COVID-19 in the United States alone and more than 170 million worldwide.
In response to the rapid growth of confirmed cases, followed by hospitalizations and fatalities, initially in the Hubei Province and in particular its capital Wuhan in China, and subsequently in Northern Italy, Spain and the tri-state area of New York, New Jersey and Connecticut, various mitigation strategies were put rapidly in place with the most stringent one being ``stay-at-home'' orders. The key purpose of such strategies was to reduce the virus transmission rate and consequently pressure on public health infrastructure \cite{anderson2020will}.
To that end, the California governor issued a ``stay-at-home'' order on March 19, 2020, that was quickly followed by another 42 states by early April.
All states with such orders proceeded with multi-phase reopening plans starting in early May, allowing various non-essential business to operate, possibly at reduced capacity levels to enforce social distancing guidelines.
In addition, mask wearing mandates also came into effect \cite{cdcmask} as emerging evidence from clinical and laboratory studies showed that masks reduce the spread \cite{leung2020respiratory}. However, these reopening plans led to a substantial increase in the number of confirmed COVID-19 cases in many US states, followed by increased number of fatalities throughout the summer of 2020, concentrated primarily in the Southern US states. Different states and local communities adopted and implemented different  non-pharmaceutical interventions to reduce infections, but a ``3rd wave" emerged in the fall of 2020 with cooling temperatures and people spending more time indoors. Further, during late fall of 2020,
various variants of concerns started emerging around the world, characterized by higher transmission capabilities and potentially increased severity based on hospitalizations and fatalities \cite{horby2021nervtag}.
Variants that exhibited a certain degree of spread in the US include B.1.1.7 (first detected in the United Kingdom), B.1.351 (first detected in South Africa), B.1.427 and B.1.429 (first detected in California) and P.1 (first detected in Brazil). The B.1.1.7 variant went on to become the dominant one in the US by March 2021, displacing the original dominant strain B.1.2, while the B.1.427/429 ones represented about 15\% of the total based on genomic surveillance studies
\footnote{\url{https://covid.cdc.gov/covid-data-tracker/##variant-proportions}}.

The emergence of the COVID-19 pandemic led to the development of many data science and signal processing modeling approaches addressing diverse issues, including
forecasting progress of the disease, impact of non-pharmaceutical intervention strategies \cite{flaxman2020estimating}, methods to estimate the Infection and Case Fatality Rates (IFR/CFR) \cite{meyerowitz2020systematic}, pre-existing conditions and clinical factors that impact the CFR \cite{williamson2020factors}, computer audition for diagnosing COVID-19 \cite{qian2021recent}, image analysis for COVID-19 \cite{yaron2021point}, phylogenetic network analysis of Covid-19 genomes \cite{forster2020phylogenetic}, impact of aerosol transmission to public health \cite{anderson2020consideration}, guidelines for reopening critical social activities such as schools \cite{honein2021data}. Note that as Covid-19 progressed together with our knowledge about it, the range of topics addressed significantly expanded, while the focus also exhibited certain shifts. As an example, early on (March 2020) it was believed that the virus can spread through contaminated surfaces, known as fomites, and this informed both the World Health Organization (WHO)
and the Center of Disease Control (CDC) recommendations 
\footnote{\url{https://www.who.int/news-room/commentaries/detail/transmission-of-sars-cov-2-implications-for-infection-prevention-precautions}}
on surface cleaning and disinfection. However, subsequent studies and investigations of outbreaks pointed that the majority of transmissions occur through droplets and aerosols that led to a revision of recommendations by the WHO and the CDC \footnote{\url{https://www.cdc.gov/coronavirus/2019-ncov/science/science-briefs/sars-cov-2-transmission.html}}. 
It also led to new research on aerosol dispersion models and on the role of ventilation to mitigate transmission \cite{bazant2021guideline}. Nevertheless, forecasting the spread of the epidemic throughout its course (initially with the imposition of various mitigation strategies, and more recently through the emergence of more transmissible variants and the increased pace of vaccination campaigns around the world) has remained a key task and a number of signal processing approaches have been developed as briefly summarized next.

\vspace{-0.5cm}

\subsection{Related Work}

A number of epidemic models have been developed to analyze and predict COVID-19 transmission dynamics.
Mathematical models, such as the class of susceptible-infectious-recovered (SIR) models are widely used to model and forecast epidemic spreads. 
\cite{chen2020time} proposed a time-dependent SIR model and tracked transmission and recovery rates at each time point by employing ridge regression while \cite{anastassopoulou2020data} proposed a discrete-time susceptible-infectious-recovered-dead (SIRD) model and provided estimations of the basic reproduction number ($R_0$), and the infection mortality and recovery rates by least squares method. 
Moreover, {\cite{wang2020epidemiological} and \cite{wangping2020extended} built an extended SIR model with time-varying transmission rates and implemented a Markov Chain Monte Carlo algorithm (MCMC) to obtain posterior estimates and credible intervals of the model parameters.}

A number of models focused on identifying a change in the parameters of the underlying model employed. For example, 
\cite{dehning2020inferring} combined the widely used SIR model (see Section \ref{sec:models}) with Bayesian parameter inference through MCMC algorithms, assuming a time-dependent transmission rate.  Instead of directly estimating a change point in the transmission rate and the other parameters in the SIR model, they assumed a fixed value on the number of the change points, and imposed informative prior distributions on their locations, as well as the transmission rate based on information from intervention policies. Further, \cite{jiang2020time} proposed to model the time series of the log-scaled cumulative confirmed cases and deaths of each country via a piecewise linear trend model. They combined the self-normalization (SN) change-point test with the narrowest-over-threshold (NOT) algorithm \cite{baranowski2016narrowest} to achieve multiple change-point estimation. Moreover, \cite{voko2020effect} and \cite{Wagner2020}  analyzed the effect of social distancing measures adopted in Europe and the United States, respectively, using an interrupted time series (ITS) analysis of the confirmed case counts. Their work aim to find a change point in the time series data of confirmed cases counts for which there is a significant change in the growth rate.
In \cite{voko2020effect}'s paper, the change points were determined by linear threshold regression models of the logarithm of daily cases while \cite{Wagner2020} used an algorithm developed in \cite{fearnhead2019detecting}, based on an $L_0$ penalty on changes in slope to identify the change points.  Finally, \cite{bertozzi2020challenges}
utilized a branching process for modeling and forecasting the spread of COVID-19.

Another line of work employed spatio-temporal models for parameter estimation and forecasting the spread of COVID-19. For example, \cite{wang2020spatiotemporal} introduced 
an additive varying coefficient model and coupled it with a non-parametric approach for modeling the data, to study spatio-temporal patterns in the spread of COVID-19 at the county level. Further, \cite{srivastava2020learning} proposed a heterogeneous infection rate model with human mobility from multiple regions and trained it using weighted least squares at regional levels while \cite{qi2020covid} fitted a generalized additive model (GAM) to quantify the province-specific associations between meteorological variables and the daily cases of COVID-19 during the period under consideration.

In addition to mathematical methods, many machine learning/deep learning methods were applied for forecasting of COVID-19 transmission. {For example,
\cite{moftakhar2020exponentially} 
and \cite{chimmula2020time}  employed Artificial Neural Networks (ANN) and Long Short-Term Memory (LSTM) type deep neural networks to forecast future COVID-19 cases in Iran and Canada, respectively,} while \cite{hu2020artificial} developed a modified stacked auto-encoder for modeling the transmission dynamics of the epidemic.  Review paper \cite{rahimi2021review} presents a summary of recent COVID-19 forecasting models. 

The previous brief overview of the literature indicates that there are two streams of models, the first mechanistic and the second statistical in nature. The former (SIR/SIRD) describe key components of the transmission chain and its dynamics and have proved useful in assessing scenarios of the evolution of a contagious disease, by altering the values of key model parameters. However, they are macroscopic in nature and can not easily incorporate additional information provided either by mitigation strategies or other features, such as movement of people assessed through cell phone data. Statistical models can easily utilize such information in the form of covariates to improve their forecasting power. However, they primarily leverage correlation patterns in the available data, that may be noisy, especially at more granular spatio-temporal scales (e.g., county or city level) that are of primary interest to public health officials and policy makers.

To that end, this paper aims to develop an interpretable \textit{hybrid} model that combines a mechanistic and a statistical model, that respects the theoretical transmission dynamics of the former, but also incorporates additional spatio-temporal characteristics resulting in improved forecasting capabilities at fairly granular spatio-temporal scales.

Specifically, we analyze confirmed cases and deaths related to COVID-19 from several states and counties/cities in the United States from March 1st, 2020 to March 31st, 2021. In the absence of non-pharmaceutical interventions, the spread of COVID-19 can be modeled by a SIR model with fixed transmission and recovery rates. One of the main reasons to select that model as the building block for the proposed methodology is that the transmission and recovery rates are easy to interpret and hence can be used in policy decision making. However, many diverse mitigation policies were put in place at different regional levels in the U.S. Thus, the simple SIR model may not be a good fit for the data. Instead, we propose a piecewise stationary SIR model (Model~1), i.e. the SIR model parameters may change at certain (unknown) time points. Such time points are defined as ``break (change) points". Unlike some other methods discussed in the literature review, in our modeling framework, the number of change points and their locations are assumed to be \textit{unknown} and must be inferred from the data. Such flexibility on the modeling front allows inferring potentially different temporal patterns across different regions (states or counties), and yields a data-driven segmentation of the data which subsequently improves the fit (see more details in Section~\ref{sec:results}), but also complicates the model fitting procedures. To that end, a novel data-driven algorithm is developed to detect all break points, and to estimate the model parameters within each stationary segment. Specifically, we define certain time blocks and assume the model parameters are fixed during each block of time points. Then, a fused lasso penalty is used to estimate all model parameters \cite{tibshirani2005sparsity}. This procedure is further coupled with hard-thresholding and exhaustive search steps to estimate the number and location of change points (details provided in Section~\ref{sec:models}). To enhance the forecasting power of the model and to capture additional spatial and temporal dependence not explained through the SIR model, the piecewise constant SIR model is coupled with spatial smoothing (Model~2) and time series components (Model~3). The former is accomplished through the addition of a spatial effect term which accounts for the effect of neighboring regions, while the latter through a Vector Auto-Regressive (VAR)  component to capture additional auto-correlations among new daily cases and deaths. Capturing the spatio-temporal dependence through Model~3 aids in reducing the prediction error significantly (sometimes around 80\%) compared to the piecewise SIR model which confirms the usefulness of a hybrid modeling framework (for more details see Section~\ref{sec:results}). To verify the applicability of the proposed methodology to other data sets with similar characteristics, the developed algorithm is tested over several simulation settings and exhibits very satisfactory performance (details in Section~\ref{sec:sims}) and some theoretical properties of the proposed method (prediction consistency, as well as detection accuracy) are established in Appendix~\ref{sec:main_proof}.

The remainder of the paper is organized as follows. In Section~\ref{sec:models}, proposed statistical models are introduced and data-driven algorithms are described to estimate their parameters. The proposed algorithms are tested on various simulation settings and the results are reported in Section~\ref{sec:sims}. The proposed models are applied to several U.S. states and counties and the results are described in Section~\ref{sec:results}. Finally, some concluding remarks are drawn in Section~\ref{sec:conclusion}.


\section{A Family of Spatio-temporal Heterogeneous SIR Models}\label{sec:models}


The proposed class of hybrid models leverages the framework of the SIR model, which is presented next to set up key concepts.

\subsection{The Standard SIR Model with Fixed Transmission and Recovery Rates}

The standard SIR model \cite{kermack1927contribution} is a mechanistic model, wherein the total population is divided into the following three compartments: susceptible (uninfected),  infected, and recovered (healed or dead). 
It is assumed that each infected individual, on average, infects $\beta$ other individuals per unit time, and each infected individual recovers at rate $\gamma$.  The two key model parameters, the transmission rate $\beta$ and recovery rate $\gamma$, are assumed to be fixed over time.
The temporal evolution of the SIR model is governed by the following system of three ordinary differential equations: 
\begin{equation}\label{SIR-ode}
\frac{d S}{dt} = -\beta\frac{S I_f}{N}, \ \ \ \
\frac{d I_f}{dt}= \beta\frac{S I_f}{N} - \gamma I_f, \ \ \ \ 
\frac{d R}{dt} = \gamma I_f,
\end{equation}
where $S$,  $I_f$ and $R$ represent the individuals in the population in the susceptible, infected and recovered stages, respectively. Note that the variables $S$, $I_f$ and $R$ always satisfy $S + I_f+ R = N$, where $N$ is the total population size.  In this formulation, we ignore the change in the total population, so that $N$ remains constant over time. Due to the fact that COVID-19 records are discrete in time ($\Delta t = 1$ day), we consider the discrete-time version of SIR model, so that for each $t= 1, \dots, T-1$, the system comprises of the following three difference equations
\begin{IEEEeqnarray}{lll}\label{SIR-diff-eqn}
S(t+1)-S(t) =-\beta  \frac{S(t)I_f(t)}{N}, \\ 
I_f(t + 1) - I_f(t) = \beta\frac{S(t)I_f(t)}{N} - \gamma I_f(t),\\ 
R(t + 1) - R(t) = \gamma I_f(t),
\end{IEEEeqnarray}
where $S(t)$ stand for the number of susceptible individuals at time $t$, $I_f(t)$ for the number of infected ones and finally $R(t)$ for those recovered. Note that these three variables $S(t)$, $I_f(t)$ and $R(t)$ still satisfy the constraint $S(t)+I_f(t)+R(t)=N$.

Notice that the number of infected cases $I_f(t)$ is not observable. Specifically, confirmed COVID-19 case counts may not capture the total infected cases due to limited testing availability/capacity, especially at the beginning of the pandemic (testing has been primarily restricted to individuals with moderate to severe symptoms). For example, in the United States, over 90\% of COVID-19 infections were not identified/reported at the beginning of the pandemic \cite{wu2020substantial, lau2020evaluating}. To that end, we define a relationship between the true infected cases and observed/recorded infected cases through an under-reporting function. Specifically, define $\Delta I(t) = \Delta I_f(t) \times  (1-u(t+1))$ for $t = 1, \dots, T-1$, where $\Delta I(t) =  I(t + 1) - I(t)$, $\Delta I_f(t) =  I_f(t + 1) - I_f(t)$, $I_f(t)$ is the true infected cases, $I(t)$ is the observed/recorded infected cases, and finally $u(t)$ is the under-reporting function. We consider a parametric model for the function $u(t)$ (see more details in Section~\ref{sec:sims}). Note that we could also use non-parametric method to solve the under-reporting issue, but since the sample size of the time series is limited, in this paper, we consider a parametric method instead. Given the observed infected cases $I(t)$ and under-reporting function $u(t)$, one can transform the data back to $I_f(t)$ by
\begin{IEEEeqnarray}{lll}\label{eq:transform}
I_f(1) = \frac{I(1)}{1- u(1)},\\
I_f(t)= \frac{\Delta I(t-1)}{1- u(t)} + I_f(t-1) = \sum_{i = 2}^t \frac{\Delta I(i-1)}{1- u(i)}   + \frac{I(1)}{1- u(1)}, 
\end{IEEEeqnarray}
for $t = 2, \dots, T$. 

Combining the difference equation~\eqref{SIR-diff-eqn} normalized by the total population with the transformations stated in Equation~\eqref{eq:transform} yield to the following simple linear equations: 
\begin{IEEEeqnarray}{lll}\label{two_eq_1}
 \underbrace{\left ( \begin{array}{c}
    \frac{\Delta I(t)}{N(1- u(t+1))}\\
        \frac{\Delta R(t)}{N} 
    \end{array}\right)}_{Y_t}
= \underbrace{\left ( \begin{array}{cc}
        \frac{S(t)}{N^2} I_f(t) & - \frac{1}{N} I_f(t) \\
       0  & \frac{1}{N}I_f(t)
    \end{array}\right)}_{X_t}
    \underbrace{\left ( \begin{array}{c}
         \beta  \\
         \gamma
    \end{array}\right)}_{B},
\end{IEEEeqnarray}
for each $t= 1, \dots, T-1$ where $\Delta I(t) =  I(t + 1) - I(t)$ and $\Delta R(t)= R(t + 1) - R(t)$.

Next, we extend the standard SIR model to accommodate temporal and spatial heterogeneity as well as to include stochastic temporal components. The former is achieved, by allowing the transmission and recovery rates to vary over time and through the inclusion of an additional term in \eqref{two_eq_1} that captures spatial effects while the latter is achieved through adding a vector auto-regressive component \cite{lutkepohl2005new}.

\vspace{-0.35cm}

\subsection{Modeling Framework: A Stochastic Piecewise Stationary SIR Model with Spatial Heterogeneity}


Compared to the standard SIR model, the proposed modeling framework makes three major changes/modifications. First, the assumptions underlying the transmission and recovery rates of the standard SIR model are stringent. Both environmental factors and changes in population behavior can lead to time varying behavior and this has been the case for Covid-19; see, e.g., discussion in \cite{giordano2020modelling}. Variants of the SIR model with time varying parameters have been proposed in the literature \cite{liu2012infectious}. For our application, we assume that the transmission and recovery rates are \textit{piecewise constant} over time, reflecting the fact that their temporal evolution is impacted by intervention strategies and environmental factors (Model~1). Second, the standard SIR model and its piecewise stationary counterpart do not account for any influence due to inter-region mobility and travel activity. We incorporate such inter-region information by considering the influence exerted by its few neighboring regions such as cities, counties or states (Model~2); see also  \cite{srivastava2020learning}. Third, the standard homogeneous SIR model, previously discussed, is deterministic; hence, the output of the model is fully determined by the parameter values of the transmission and recovery rates and the initial conditions. Its stochastic counterpart \cite{bartlett1949some, bailey1953total}, possesses some inherent randomness. Alternatively, the general stochastic epidemic model can be approximated by the  stochastic differential equation (see e.g. \cite{greenwood2009stochastic}):
\begin{equation}\label{two_eq_sde}
dX(t) = f(X(t)) dt + G(X(t)) dW(t),
\end{equation}
where the random variables $S(t)$ and $I_f(t)$ are continuous, 
\begin{equation}
X(t) = \left ( \begin{array}{c}
   S(t)\\[2pt]
    I_f(t)
    \end{array}\right)
 , \  
 f =\left ( \begin{array}{c}
  -\beta \frac{SI_f}{N}\\[2pt]
  \beta \frac{SI_f}{N} - \gamma I_f
    \end{array}\right) ,\ 
 G =  \left ( \begin{array}{cc}
  -\sqrt{\beta \frac{SI_f}{N}} & 0 \\[2pt]
 \sqrt{\beta \frac{SI_f}{N}} &  -\sqrt{\gamma I_f }
    \end{array}\right), 
\end{equation}
and $W = (W_1, W_2)^\prime$ is a vector of two independent Wiener processes, i.e., $W_i(t) \sim \mathcal{N}(0, t)$. Given \eqref{two_eq_sde}, the stochastic SIR model can be written as:
{\begin{align}\label{two_eq_approx}
   Y_t
= {X_t}{B}
    +  {\epsilon_t},
\end{align}}
where
{\small\begin{align}\label{two_eq_approx_detail}
 &Y_t =  {\left ( \begin{array}{c}
   \frac{\Delta I(t)}{1- u(t+1)}\\
       \Delta R(t)
    \end{array}\right)},
    B = 
    {\left ( \begin{array}{c}
         \beta  \\
         \gamma
    \end{array}\right)}, \nonumber\\
 &   X_t 
= {\left ( \begin{array}{cc}
       \frac{S(t)}{N} I_f(t) & -I_f(t) \\
       0  & I_f(t)
    \end{array}\right)}, \nonumber\\
  & {\epsilon_t}= {\left ( \begin{array}{c}
      \sqrt{ \frac{\beta S(t)}{N} I_f(t) }\Delta W_1(t) -\sqrt{\gamma I_f(t) }\Delta W_2(t) \\
      \sqrt{\gamma I_f(t) }\Delta W_2(t) 
    \end{array}\right)},\nonumber
\end{align}}
with the increments $\Delta W_1(t)$ and $\Delta W_2(t)$ being two independent normal random variables, i.e., $\Delta W_i(t)\sim \mathcal{N} (0, \Delta t)$. It can be seen that the resulting regression model, based on the discrete analogue of  \eqref{two_eq_approx}, will have an error term exhibiting temporal correlation, driven by $I_f(t)$ and $R(t)$. 
An examination of the temporal correlation patterns in COVID-19 data (see left two panels in Figures~\ref{fig:acf} and \ref{fig:acf_county} in the supplementary material) supports this finding. To that end, we model the error process as a Vector Auto-Regressive (VAR) with lag $p$ (Model~3). The corresponding temporal correlation plots for the residuals after inclusion of the VAR($p$) component are depicted in the right two panels in  Figures~\ref{fig:acf} and \ref{fig:acf_county} in the supplementary material and clearly show the importance of considering such an error structure. The piecewise stationary SIR model with spatial heterogeneity and a VAR($p$) error process is given by
\begin{IEEEeqnarray}{lll}\label{eq:model_var}
    Y_t = \sum_{j = 1}^{m_0+1}  X_t B^{(j)}  \mathbbm{1}_{\{t_{j-1}\leq t <t_j \}} + \alpha Z_{t} \nonumber\\
    \quad  + \phi_1 \varepsilon_{t-1} + \dots + \phi_p \varepsilon_{t-p}  + e_t, t =  1, \dots, T-1, 
\end{IEEEeqnarray}
where $\left \{t_1 , \dots , t_{m_0} \right\}$ are unknown $m_0$ ``change points" such that the transmission and recovery rates exhibit a change from $B^{(j)} = (\beta^{(j)}, \gamma^{(j)})^\prime$ to $B^{(j+1)} = (\beta^{(j+1)}, \gamma^{(j+1)})^\prime$ at time point $t_j$, while it remains fixed until the next break point. Hence, these break points divide the time series data into stationary segments. Moreover, $Z_t = \sum_{j = 1}^q\omega_{j} \left (  \frac{\Delta  I^j(t-1)}{N^j(1- u(t))}, \frac{\Delta  R^j(t-1)}{N^j} \right)^\prime $ is the weighted average of spatial effect over the neighboring regions at time $t$ where $N^j$  denotes the total population in neighboring region $j$; $\alpha$ is a spatial effect parameter; 
$\omega_{j}$'s are spatial weights such that $\sum_{j=1}^q \omega_{j} = 1$. The latter two parameters capture inter-region mobility patterns. Finally, $e_t$ is white noise with mean 0 and variance $\sigma^2$, and $\phi_1, \dots, \phi_p$ are the corresponding autoregressive parameters. In the sequel, model~\eqref{eq:model_var} with $\alpha=\phi_1= \ldots =\phi_p = 0$ is considered as Model~1 (piecewise SIR), the case of only restricting $\phi_1= \ldots =\phi_p = 0$ is considered as Model~2 (piecewise SIR with spatial effects) while the full model with no constrains is considered as Model~3. Notice that the number of change points $m_0$ and their locations are unknown and must be estimated from the data together with all other model parameters including $B^{(j)}$'s, $\alpha$, and $\phi_1, \ldots, \phi_p$. A brief discussion of the proposed algorithm to perform all such estimations is presented next.

\vspace{-0.5cm}

\subsection{Algorithm}

The estimation of the model parameters is accomplished in the following three steps: \textit{Step 1}: Fit Model 1 for each region of interest to obtain the change points; \textit{Step 2}: Obtain the transmission and recovery rates and spatial effect as in Model 2. \textit{Step 3}: Compute the residuals ($\widehat{\epsilon}_t$) from Step 2 and fit a VAR model to them, see \cite{lutkepohl2005new}. The rationale behind this step-wise algorithm is that assuming that the influence of the spatial effect component $Z_t$ is small, we can use Model 1 for each region of interest to estimate both the change points and the corresponding transmission and recovery rates. Denoting the final estimated change points by Model 1 as $ \widetilde{\mathcal{A}}_n^f = \left\lbrace \widetilde{t}_1^f, \ldots, \widetilde{t}_{\widetilde{m}^f}^f  \right\rbrace $, segment-specific transmission and recovery rates combined with an overall spatial effect can be readily estimated using least squares applied to augmented linear model which includes all segments concatenated to each other at time points $\widetilde{t}_j^f$'s and the spatial effect. Finally, the residuals of this augmented linear model utilizing the least squares estimates can be computed and additional least squares estimates on the residuals with its previous values in the design matrix can yield to estimates of autoregressive parameters. The difficult part of the algorithm is to estimate the number and locations of break points. Details of the algorithm are presented in the Appendix~\ref{sec:algo} while a brief summary is provided next. The first  step of the algorithm  aims  to select  candidate change points $\widehat{A}_n$ among blocks by solving a block fused lasso problem. The estimated change points obtained by the block fused lasso step includes all points with non-zero estimated parameters, which leads to overestimating the number of the true change points in the model. Nevertheless, the block fused lasso parameter estimates enjoy a prediction consistency property, which implies that the prediction error converges to zero with high probability as $n \rightarrow + \infty$. This result is stated and proved (see Theorem~1 in the Appendix~\ref{sec:main_proof}) under some mild conditions on the behaviour of the tail distributions of error terms. A hard-thresholding step is then added to reduce the over-selection problem from the fused lasso step by ``thinning out'' redundant change points exhibiting small changes in the estimated coefficients. After the hard-thresholding step, those candidate change points located far from any true change points will be eliminated when the block size is appropriate. On the other hand, there may be more than one selected change points remaining in small neighborhoods of each true change point. To remedy this issue, the remaining estimated change points are clustered while in each cluster, an exhaustive search examines every time point inside the neighborhood search region based on the cluster of candidate change points and selects the best time point as the final estimated change point.

\vspace{-0.3cm}

\section{Simulation Studies}\label{sec:sims}

We evaluate the performance of the proposed models on their predictive accuracy, change point detection and parameter estimation. We consider three simulation scenarios (see additional simulation settings in \ref{sec:sim} in the supplementary material). The details of the simulation settings for each scenario are explained in Section  \ref{Sim_Scenarios}. All results are averaged over 100 random replicates.  

We assess the results for the three models presented: Model 1, the piecewise stationary SIR model; Model 2, the piecewise stationary SIR model with spatial effect; and Model 3, the piecewise stationary SIR model with spatial effect and a VAR($p$) error process. 
The out-of-sample Mean Relative Prediction Error (MRPE) is used as the performance criterion defined as:
\begin{equation}\label{eq:MRPE}
     \mbox{MRPE}(I)= \frac{1}{n^{test}}\sum_{t= T+1}^{T + n^{test}} \left\vert\frac{\widehat{I}(t) - I(t) }{I(t)} \right\vert,
\end{equation}
where $n^{test}$ is the number of time points for prediction,
$\widehat{I}(t)$ is the predicted count of infected cases at time $t$, and $I(t)$ the observed one.
The MRPE of $R(t)$ can be obtained by respectively replacing the $\widehat{I}(t)$  and $I(t)$ with $\widehat{R}(t)$ and $R(t)$.
The predicted number of infected cases and recovered cases are defined as 
\begin{align}
    \widehat{I}(t) = I(t-1) + \widehat{\Delta I}(t-1), \ \widehat{R}(t) = R(t-1) + \widehat{\Delta R}(t-1), 
    \label{IR_est}
\end{align}
for all $t = T+1,\cdots, T+n^{test} $.
For change point detection, we report the locations of the estimated change points and the percentage of replicates that correctly identifies the change point.
This percentage is calculated as the proportion of replicates, where the estimated change points are close to each of the true break points. 
Specifically, to compute the selection rate,  
a selected break point is counted as a ``success'' for the $j$-th true break point, $t_j$, if it falls in the interval $ [ t_j - \frac{t_{j} - t_{j-1}}{5}, t_j + \frac{t_{j+1} - t_{j}}{5}  ] $, $j = 1,\dots, m_0$. 
We also report the mean and standard deviation of estimated parameters for each models. All results are reported in  Table~\ref{table_sim}.

\vspace{-0.4cm}

\subsection{Simulation Scenarios}\label{Sim_Scenarios}
We consider three different simulation settings. The SIR model's coefficients and under-reporting functions are depicted in Figure~\ref{fig:sim_rate} in the Section \ref{sec:sim}. 

\textit{Simulation Scenario A (Model 3 with no under-reporting):} the data are generated based on \eqref{eq:model_var} with piecewise constant transmission and recovery rates.  
We set the number of time points $T=200$, $m_0 =1$, the change point $t_1 = \lfloor \frac{T}{2}\rfloor =  100  $,  $\beta^{(1)} = 0.10$,  $\beta^{(2)} = 0.05$, $\gamma^{(1)} = 0.04$ and $\gamma^{(2)} = 0.04$ . For the spatial effect, we set $\alpha = 1$,
$\beta_s(t)= 0.10  - \frac{0.05t}{T-1}$,
$\gamma_s(t)= 0.04$,
$t = 1, \dots, T-1$.
We first generate the spatial effect data from SIR model in \eqref{eq:model_var} with parameter $\beta_s(t)$ and $\gamma_s(t)$   and generate the error term by VAR(1) model with the covariance matrix of the noise process $\Sigma_\varepsilon = 0.1 \mathbb{I}_2$, where $\mathbb{I}_2$ is the two-dimensional identity matrix.
By plugging in the spatial effect data and error term data, we generate the dataset of the response variable $Y_t$ from \eqref{eq:model_var}.
The autoregressive coefficient matrix has entries 0.8, 0, 0.2, 0.7 from top left to bottom right.  
We assume no under-reporting issue in this scenario, i.e., $u(t) = 0$, hence $\Delta I(t) = \Delta I_f(t)$, for $t = 1, \dots, T-1$.

\textit{Simulation Scenario B (Model 1 with exponentially decreasing under-reporting rate):} In this scenario, 
we set the number of time points $T= 250$, $m_0 =2$, the change points $t_1 =100 $ and $t_2 = 200$. 
We choose $\beta^{(1)} = 0.10$,  $\beta^{(2)} = 0.05$,  $\beta^{(3)} = 0.10$,
$\gamma^{(1)} = 0.04$, $\gamma^{(2)} = 0.06$, $\gamma^{(3)} = 0.04$.
Results are based on data generated from the SIR model in \eqref{eq:model_var} with 
$\beta(t) \sim  \text{Lognormal}(\sum_{j=1}^{m_0+1}\beta^{(j)}\mathbbm{1}_{\{t_{j-1}\leq t <t_j \}} , 0.005)$ and $\gamma(t) \sim  \text{Lognormal}(\sum_{j=1}^{m_0+1}\gamma^{(j)}\mathbbm{1}_{\{t_{j-1}\leq t <t_j \}} , 0.005)$.
 The under-reporting rate is chosen to change over time. Specifically, we set the under-reporting rate $u(t) = 1- \frac{1}{1 + be^{-a(t-1)} }$, $t = 1, \dots, T$, with $a = 0.05$ and $b = 10$. 

\textit{Simulation Scenario C (Model 3 with quadratically
decreasing under-reporting rate):} the data are generated based on \eqref{eq:model_var} with piecewise constant transmission and recovery rates. All the settings are exactly the same as those in scenario A except for the under-reporting rate. In this  scenario, we set under-reporting rate $u(t) =  1- \left(\frac{t + aT}{(1+ a)T}\right)^2$, $t = 1, \dots, T$, with $a = 0.5$. 

\begin{table*}[!ht]
\caption{\label{table_sim} Simulation results including selection rate, estimated parameters, and out-of-sample mean relative prediction error (MRPE). Note that $IR$ includes both $I(t)$ and $ R(t)$. }
\centering
\resizebox{0.70\textwidth}{!}{
\begin{tabular}{cccccccccc} 
 \hline
  \hline
 &  change point & truth   & mean  & std & selection rate \\
   \hline
  Scenario A & 1 & 0.5 & 0.5 & 0 & 1 \\ 
   \multirow{ 2}{*}{Scenario B}& 1 & 0.4 & 0.4 & 0 & 1 \\ 
   & 2 & 0.8 & 0.8 & 4e-04 & 1 \\ 
    Scenario C & 1 & 0.5 & 0.5 & 0 & 1 \\ 
\hline \hline
    & parameter  &true value &  mean & std \\
  \hline
    \multirow{ 5}{*}{Scenario A} & $\beta_1$& 0.1 &0.1 & 3e-04 \\  
     & $\beta_2$& 0.05  & 0.05 & 1e-04 \\  
     & $\gamma_1$ & 0.04 & 0.04 & 1e-04 \\ 
     & $\gamma_2$ & 0.04  & 0.04 & 1e-04 \\  
     & $\alpha$ &1  & 0.9945 & 0.0318 \\  
   \hline
  \multirow{ 7}{*}{Scenario B} & $\beta_1$ & 0.1  & 0.1001 & 0.0078 \\ 
   & $\beta_2$ & 0.05 &  0.0488 & 0.0106 \\ 
   & $\beta_3$ & 0.1 &  0.0986 & 0.0087 \\ 
   & $\gamma_1$ & 0.04 &  0.0392 & 0.0074 \\ 
   & $\gamma_2$ & 0.06 &  0.0589 & 0.0114 \\ 
   & $\gamma_3$ & 0.04 &  0.0391 & 0.0054 \\ 
   & $a$ & 0.05 &  0.0515 & 0.016 \\ 
  \hline
  \multirow{ 6}{*}{Scenario C} & $\beta_1$ & 0.1 & 0.0904 & 0.016 \\ 
  & $\beta_2$ & 0.05 & 0.0427 & 0.0123 \\ 
  & $\gamma_1$ & 0.04 & 0.0336 & 0.0108 \\ 
  & $\gamma_2$ & 0.04 & 0.0341 & 0.0099 \\ 
  & $\alpha$ & 1 & 1.84 & 1.6373 \\ 
  & $a$ & 0.5 & 0.3905 & 0.1977 \\ 
   \hline
    \hline
 & Model &  MRPE($IR$)   & MRPE($I(t)$) & MRPE($R(t)$)\\
   \hline
\multirow{ 3}{*}{Scenario A}
   & Model 1  & 0.000252 & 0.000362 & 0.000142 \\ 
   & Model 2  & 2.6e-05 & 3.9e-05 & 1.3e-05 \\ 
   & Model 3  & 2.4e-05 & 3.7e-05 & 1.2e-05 \\ 
   Scenario B & Model 1 & 0.00415 & 0.005888 & 0.002413 \\
   \multirow{ 2}{*}{Scenario C}
   & Model 3 (Transformed by $u(t)$) & 0.000162 & 0.000175 & 0.000149 \\ 
   & Model 3 (Not transformed) & 0.000911 & 0.000829 & 0.000993 \\ 
   \hline
\end{tabular}}
\end{table*}

\vspace{-0.4cm}

\subsection{Simulation Results}

The mean and standard deviation of the  location of selected change point, relative to the the number of time points $T$ -- i.e., $\widetilde{t}^f_1/T$ -- for all simulation scenarios are summarized in Table~\ref{table_sim}. The results clearly indicate that, in the piecewise constant setting, our procedure accurately detects the location of change points. The results of the estimated transmission rate $\widehat{\beta}$, recovery rate $\widehat{\gamma}$, spatial effect $\widehat{\alpha}$ and parameter of the  under-reporting rate function $\widehat{a}$ suggest that our procedure produces accurate estimates of the parameters, under the various under-reporting function settings ($b$ is assumed to be known). We generate additional 20 days worth of data to measure the prediction performance. The MRPE results for $I(t)$ and $R(t)$ are provided in Table~\ref{table_sim}. The results in scenario A  indicate that adding the spatial effect can significantly improve the prediction, when the spatial component influences the individual data series. The results in scenario C indicate that adding the under-reporting function $u(t)$ can significantly improve the prediction, when there is under-reporting in the data series.

\vspace{-0.3cm}

\section{Application to State and County Level COVID-19 Data in the U.S.}\label{sec:results}

\subsection{Data Description}

The COVID-19 data used in this study are obtained from \cite{NY:2020}. The curated data and code used in the analysis are available at the authors' GitHub  repository\footnote{\url{https://github.com/ybai69/COVID-19-Change-Point-Detection}}. The analysis is performed both at the state and county level and the raw data include both cases and deaths, as reported by state and local health departments and compiled by the NY Times. However, due to lack of complete information on recovered individuals (which is an important covariate in the models considered, but the daily number of recovered cases is only reported at the national level \cite{wang2020comparing}), we calculate the number of recovered cases for each region (state/county) as follows: 
the number of deaths in the region, multiplied by the nationwide cumulative recovered cases and divided by the nationwide deaths. Specifically, we assume that the recovery versus deceased ratio for each state/county is fixed, and can be well approximated by the nationwide recovery-to-death ratio.         
As coronavirus infections increase, while laboratory testing faces capacity constraints, reporting only confirmed cases and deaths leads to (possibly severe) under-estimation of the disease’s impact.
On April 14, 2020, CDC advised states to count both confirmed and probable cases and deaths. As more states and localities did so since then, in this study we focus on the combined cases, which include both confirmed and probable cases. 
The populations of states and counties are obtained from \cite{population2020}. Further, to decide which neighboring states/counties to include in Model 2, their distance to the target state/county of interest is used. The latter is obtained from the \cite{distance2010}. In the results presented, we define regions within 500 miles for states and 100 miles for counties/cities as neighboring ones in Models 2 and 3. For those areas with a large number of neighboring regions, such as New York state, we only consider the top five regions with the smallest distances.
We assume the probability rate of becoming susceptible again after having recovered from the infection to be 0.5\%. The reinfection rate in the short run ($\sim 6$ months) is believed to be very low. Some evidence from health care workers (median age 38 years) estimates it at 0.2\% \cite{lumley2020antibody}. Hence, the selected one seems a reasonable upper bound for the task at hand. Note that small variation of this rate did not impact the results.

We analyzed the daily count of COVID-19 cases at the state-level from March 1, 2020, to March 31, 2021 for the selected five states (NY, OR, FL, CA, TX), and at the county-level from the first day after March 1, 2020, when the region records at least one positive COVID-19 case to March 31, 2021. In addition, we analyzed the COVID-19 cases in the state of Michigan from March 1, 2020, to May 15, 2021.
In particular, the sample size $n$ for the five states presented next (NY, OR, FL, CA, TX) is $n = 395$ while for state of Michigan, Riverside County (CA) and Santa Barbara County (CA), $n= 432, 389, 381$, respectively. 
{For the under-reporting rate function $u(t)$, both quadratic function -$u(t) =  1- \left(\frac{t + aT}{(1+ a)T}\right)^2$- and exponential function -$u(t) = 1- \frac{1}{1 + be^{-a(t-1)} }$- are considered. The quadratic function achieved better performance in change point selection in the real data application (change point detection results using the exponential function are presented in Table~\ref{table_plan_all_exp} in the Supplement). Therefore, all presented results in this Section are based on the quadratic function.} Finally, to estimate the under-reporting parameter $a$, we perform a grid search within the interval $[0.1,0.3]$. The main reason for selecting this interval is that it matches with around 90\% of COVID-19 under-reporting rate at the beginning of the pandemic as investigated and reported in \cite{wu2020substantial,lau2020evaluating}. Note that we also assume that the COVID-19 under-reporting rate after December is very low as most of the regions built-up their testing capacity. Therefore, we set $u(t) = 1$ after December 2020.

Most of the states were selected due to being severely affected for a certain period of time during the course of COVID-19. The remaining regions illustrate interesting patterns gleaned from the proposed models. Let $I(t)$ and $R(t)$ denote the number of infected and recovered individuals (cases) on day $t$. Day 1 refers to the first day after March for which the region records at least one positive COVID-19 case. Figure \ref{fig:numbers_states} in the Supplement depicts the actual case numbers $I(t)$ and $R(t)$ in the six states considered.

\vspace{-0.4cm}

\subsection{Results for Selected U.S. States}
\label{sec:state}

The various models considered are applied on scaled versions (divided by their standard deviations) of the predictors matrix $X_t$  and the response vector $Y_t$. 
For Model 2, we consider four different types of weights: equal weights (Model 2.1),  distance-based weights (Model 2.2),  similarity-based weights (Models 2.3 and 2.4).
In Models 2.1 and 2.2, the neighboring regions are selected based on distance. For states, a threshold of 500 miles is used and the resulting neighbors are displayed in Table \ref{table_adj} in the supplementary material. When the spatial resolution is high (county level aggregated data), neighboring regions may exhibit similar patterns in terms of the evolution of transmission and recovery rates. Therefore, constructing weight matrices based on distance is meaningful as it is a common practice in spatial statistics \cite{cressie2015statistics}. Such spatial smoothing through proper weight matrices is especially helpful in increasing the statistical power through increasing the sample size, thus yielding more accurate predictions.

However, the evolution of COVID-19 may exhibit different patterns across neighboring states due to the coarse spatial resolution. Thus, defining weight matrices based on distance may not be ideal. Hence, Models 2.3 and 2.4 select regions based on similarities of infected/recovered cases. Similarity between the region of interest and the $j$-th potential similar region is defined as
\begin{equation}
    s_j =  \sqrt{\sum_{t=1}^{T-1} \left(\frac{\Delta I_f(t)}{N} - \frac{\Delta I^j_f(t)}{N^j}\right)^2 +\sum_{t=1}^{T-1} \left(\frac{\Delta R(t)}{N} - \frac{\Delta R^j(t)}{N^j}\right)^2 },
\end{equation}
where  $\Delta I_f(t) = \frac{\Delta I(t)}{ (1-u(t+1))}$ and $\Delta I_f^j(t) = \frac{\Delta I^j(t)}{ (1-u(t+1))}$.
Model 2.3 uses the top five regions with the smallest similarity score, while Model 2.4 uses all states in the country. For states, the resulting neighbors for Model 2.3 are displayed in Table \ref{table_similar_all}.

In the equal weight setting (Model 2.1), $\omega_j= 1/q$ for any $j = 1, \dots, q$.
In both distance-based weight and similarity-based weight settings, power distance weights are used, wherein the weight is a function of the distance/similarity to the neighboring region $ \omega_j = \frac{1}{d_j}, \quad \omega_j = \frac{1}{s_j}$ where $d_j$ is the distance score and $s_j$ is the similarity score for the $j$-th region. Under the constraint that $\sum_{j=1}^q{\omega_j}=1$, we obtain the normalized weights as $\omega_j = \frac{d_j^{-1}}{\sum_{k=1}^q d_k^{-1}}, \quad  \omega_j = \frac{s_j^{-1}}{\sum_{k=1}^q s_k^{-1}}$.

\begin{table}
\caption{\label{table_similar_all} 
Neighboring states and cities/counties by similarity score (for Model 2.3). (states: selected from all states in the country; counties: in the same state) }
\centering
{
\begin{tabular}{l l}
  \hline
  \hline
Region &  Neighboring Regions     \\
  \hline
New York & Massachusetts, New Jersey, D.C., Pennsylvania, Michigan \\
 Oregon & Maine, Washington, Vermont, West Virginia, New Hampshire  \\
   Florida & South Carolina, Nevada, Texas, Alabama, Mississippi  \\ 
   California & 
    North Carolina, Texas, Nevada, West Virginia, Mississippi \\
   Texas & 
   California, North Carolina, South Carolina, Nevada, Alabama \\
Riverside& Orange, Los Angeles, Ventura, San Diego, Monterey \\ 
  Santa Barbara& San Diego, Ventura, Orange, San Francisco, Contra Costa \\ 
 \hline
\end{tabular}}
\end{table}


Before applying Model 3, we compare the in-sample and out-of-sample MRPEs (defined in Section~\ref{sec:sims}) in all four variants of Model 2 and select the parameter values estimated by the best-performing model. In subsequent analysis, the results from Model 2.3 are reported, since it proved to be the best performing one.

As expected, change points detected for state data are related to ``stay-at-home'' orders, or phased reopening dates issued by state governments. We define the reopening date as the time when either the ``stay-at-home'' order expired or state governments explicitly lifted orders and allowed (selected or even all) businesses to reopen \cite{NYtimesreopen}. The ``stay-at-home'' and reopening dates for all states are shown in Table \ref{table_plan_all}.

\begin{table*}[ht!]
\caption{\label{table_plan_all}
Statewide ``Stay-at-home'' plan and reopening plan begin dates, along with the detected change points (CPs) in states and counties/cities.}
\centering
\resizebox{0.98\textwidth}{!}{
\begin{tabular}{lccccc} 
  \hline
  \hline
Region  & ``Stay-at-home'' plan  &Reopening plan (statewide) & Detected change points \\
\hline
New York & March 22 & July 6 (Phase 3) &  April 04 2020  \\ 
Oregon  & March 23 & May 15 (Phase 1) &   April 06 2020, June 06 2020, July 18 2020, Jan 16 2021 \\
Florida &   April 3 & June 5 (Phase 2) &   April 13 2020, June 17 2020, July 25 2020\\
California  & March 19 & June 12 (Phase 2)  &April 11 2020, April 29 2020, Aug 12 2020, Dec 06 2020, Jan 19 2021\\
Texas  & April 2  & June 3 (Phase 3) &  April 18 2020,  June 03 2020, June 15 2020, July 23 2020, Sep 20 2020, Jan 03 2021, Feb 10 2021\\
Michigan & March 24 & June 8 (phase 5)&  April 11 2020, June 04 2020, Nov 01 2020, Nov 21 2020, Dec 11 2020, March 21 2021\\
Riverside (CA) & March 19 & June 12 (Phase 2)  & April 17 2020,  June 20 2020, July 26 2020, Nov 28 2020\\
Santa Barbara (CA) & March 19 & June 12 (Phase 2)  & April 15 2020, May 11 2020, June 10 2020, July 25 2020, Dec 27 2020\\
 \hline
 \multicolumn{4}{l}{*NY reopening dates vary in counties. Here, we use the NYC reopening dates instead.}\\
\end{tabular}}
\end{table*}

In Model 1, a change point is detected from March to April for all five states: New York, Oregon, Florida, California and Texas.  
These change points coincide with the onset of ``stay-at-home'' orders and correspond to a significant decrease in the transmission rate.
The first change points detected are around two weeks after the state's Governors signed a statewide ``stay-at-home'' order (three weeks for CA), which is consistent with the fact that COVID-19 symptoms develop 2 days to 2 weeks following exposure to the virus. As can be seen from Figure \ref{fig:rates_states_smooth} in the supplementary material, in addition to the downward trend after lockdowns have been put in place,  Oregon, Florida and Texas have a clear upward trend after state reopenings began in May. 
The model detects a change point in June for these states, which relate to their reopenings.

Note that the restriction in either phase 2 reopening plan in Florida and the phase 3 reopening plan in Texas are quite similar in terms of restaurants, bars, and entertainment businesses.  
Starting June 5, restaurants and bars in Florida could increase their indoor seating to 50\% capacity. Movie theaters, concert venues, arcades, and other entertainment businesses could also open at 50\% capacity.
Starting June 3, all businesses in Texas could expand their occupancy to 50\% with certain exceptions. Moreover, bars could increase their capacity to 50\% as long as patrons are seated.

Assuming that Florida had not begun the phase 2 reopening plan on June 5, our model predicts 
36,626
infected cases by June 12 while the actual number of infected cases is 
39,327
($7.3\%$ higher). 
Similarly, by June 19, our model predicts 
41,728
infected cases, while the actual number of infected cases is 55,607
($33.3\%$ higher).
Similarly, suppose that Texas had not begun the phase 3 reopening plan on June 3, 
then by June 17, our model predicts 
74,204
infected cases, while the actual number of infected cases is 76,377
($1\%$ higher).

In July, many states paused plans to reopen, amid rising infected case counts\footnote{ \url{https://www.usatoday.com/story/news/nation/2020/06/30/covid-cases-states-pausing-reopening-plans-list/3284513001/}}. These pausing actions effectively slowed the spread of COVID-19, as can be clearly seen in the downward trend of the transmission rate in July and August in Oregon, Florida, California and Texas. Interestingly, a change point in July is detected in Oregon, Florida and Texas, and a change point in August is detected in California, mainly related to this pausing.

The left panel of Figure~\ref{fig:under-reporting_states} depicts the estimated function $u(t)$ while the right panel displays the performed daily test normalized by the populations for the six states under consideration. Comparing these two plots indicate that states at which the testing capacity has been limited, the under-reporting rate was estimated higher and vise versa. For example, New York had more daily tests based on its population compared with the other states, which coincides with its lower estimated under-reporting rate at the beginning while the estimated under-reporting rate is higher for Texas and Florida which could be due to limited capacity in daily testing at the beginning of pandemic for these states. Note that the under-reporting rate is estimated around $90\%$ in March for all states. Such high under-reporting rates at the beginning of the pandemic are reported in other countries as well, such as China \cite{li2020substantial}.


\begin{figure*}[!ht]
     \centering
\resizebox{0.60\textwidth}{!}{
          \begin{subfigure}[b]{0.3\textwidth}
         \centering
         \includegraphics[width=\textwidth]{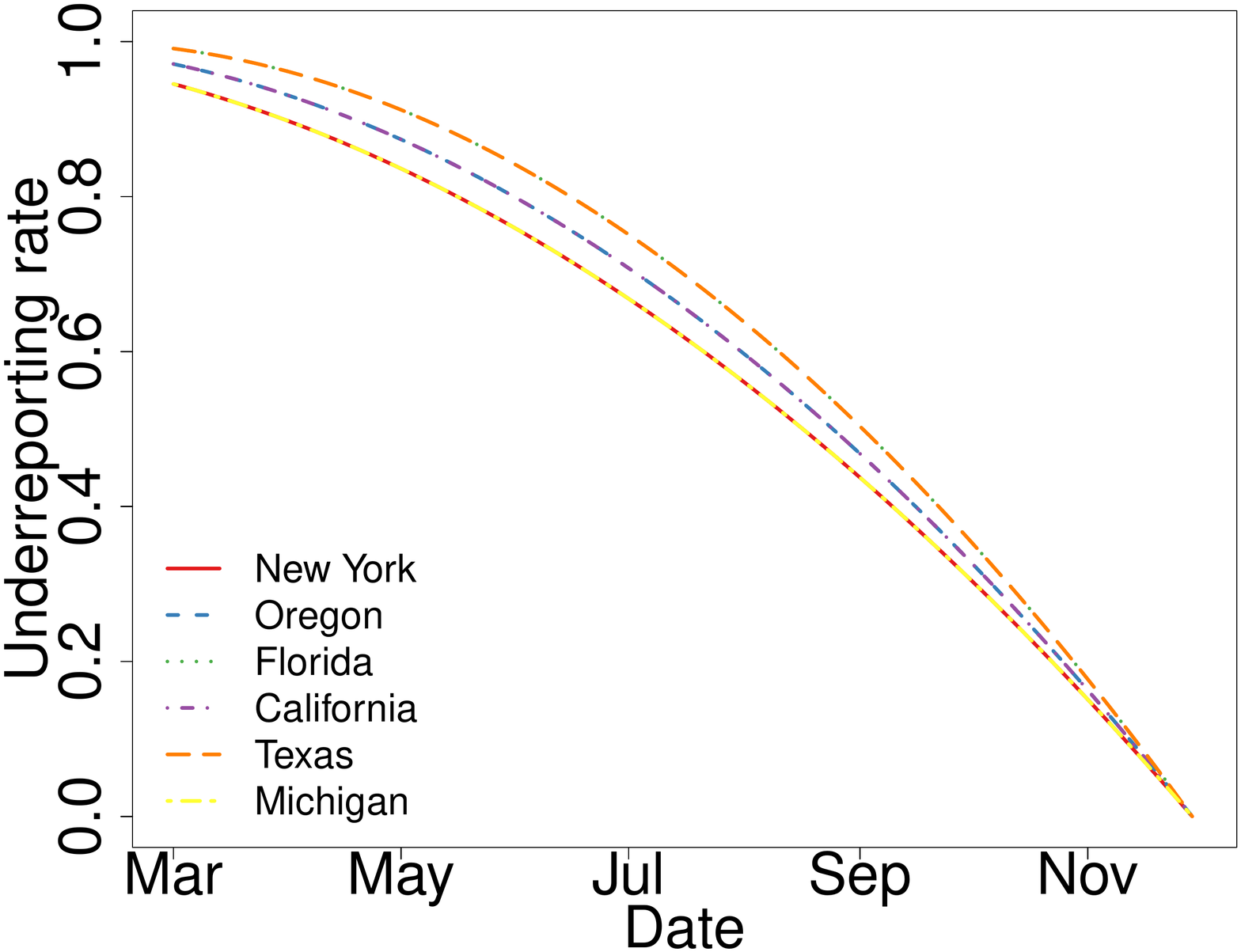}
         \subcaption{under-reporting function}
     \end{subfigure}
     \begin{subfigure}[b]{0.3\textwidth}
         \centering
         \includegraphics[width=\textwidth]{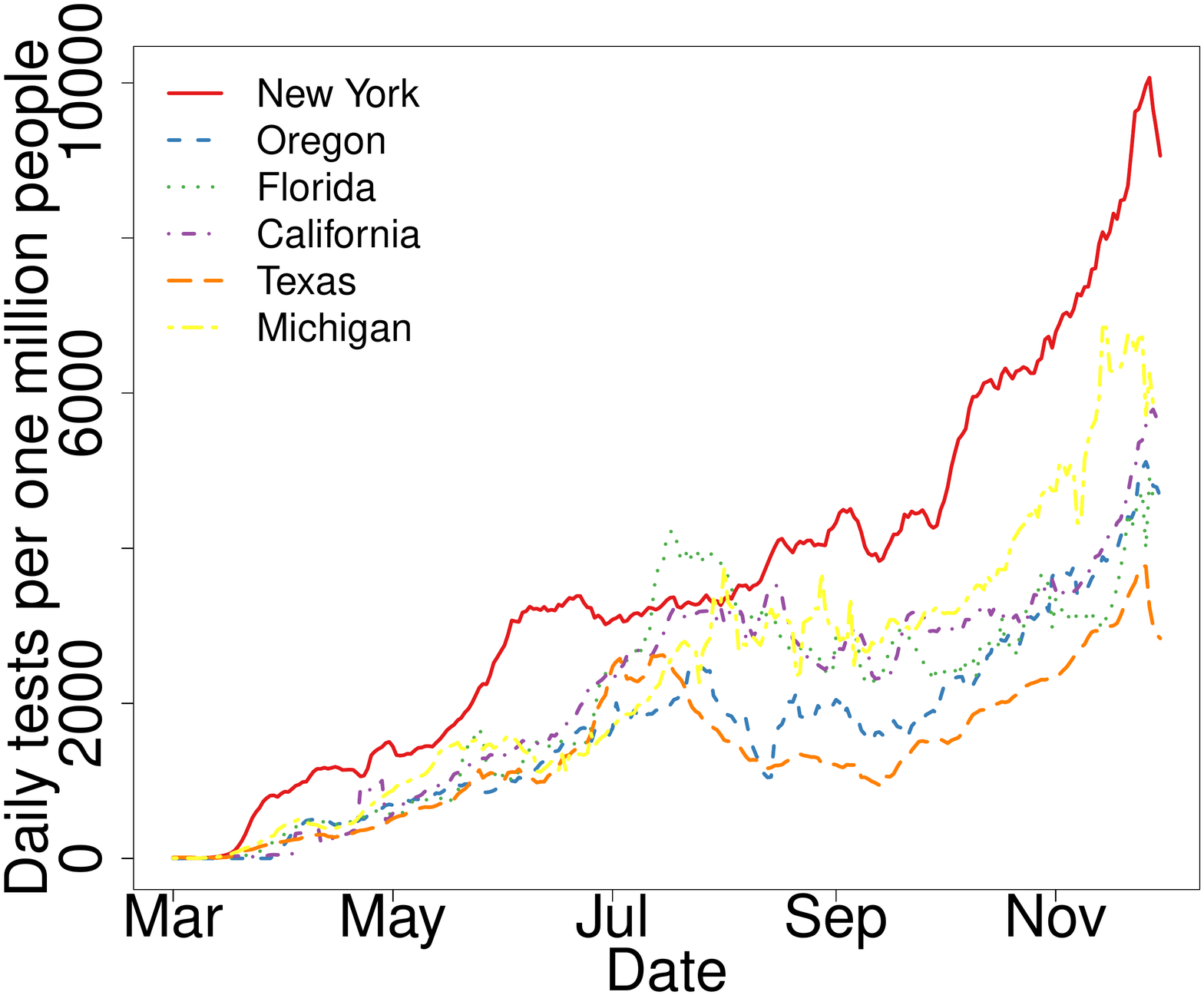}
         \subcaption{daily testing}
     \end{subfigure}}
        \caption{
        (left panel) Estimated under-reporting rate functions for all six states; (right panel) 7-day moving average of daily testing for all six states.
        }
        \label{fig:under-reporting_states}
\end{figure*}

The observed and fitted number of infected cases are displayed in Figure \ref{fig:number of infected}. Note that the fitted number of infected cases and recovered cases are defined as 
\begin{align}
    \widetilde{I}(t) = I(1) + \sum_{k=1}^{t-1}\widehat{\Delta I}(k), \quad \widetilde{R}(t) = R(1) + \sum_{k=1}^{t-1}\widehat{\Delta R}(k),
\end{align}
for all $t = 2,\dots, T$. In summary, the piecewise constant SIR model (Model 1) with detected change points significantly improves the performance in prediction of the number of infected cases 
compared with the piecewise constant SIR model with pre-specified change points. Pre-specified change points are defined as stay-at-home or reopening order dates for each region. It can be seen from the first two rows in Figure~\ref{fig:number of infected} that estimating when break points occurred  using the developed algorithm improves the fit significantly, especially for states in which multiple change points are selected such as Oregon, California and Texas. Model 2.3 further improves the fit of the data for Florida and California, due to the addition of a spatial smoothing effect. Finally, Model 3 provides further improvements, in particular for New York, which justifies empirically the use of hybrid modeling to analyze regional transmission dynamics of COVID-19.

\begin{figure*}[ht!]
     \centering
     \resizebox{0.91\textwidth}{!}{
     \begin{subfigure}[b]{0.19\textwidth}
         \centering
         \includegraphics[width=\textwidth]{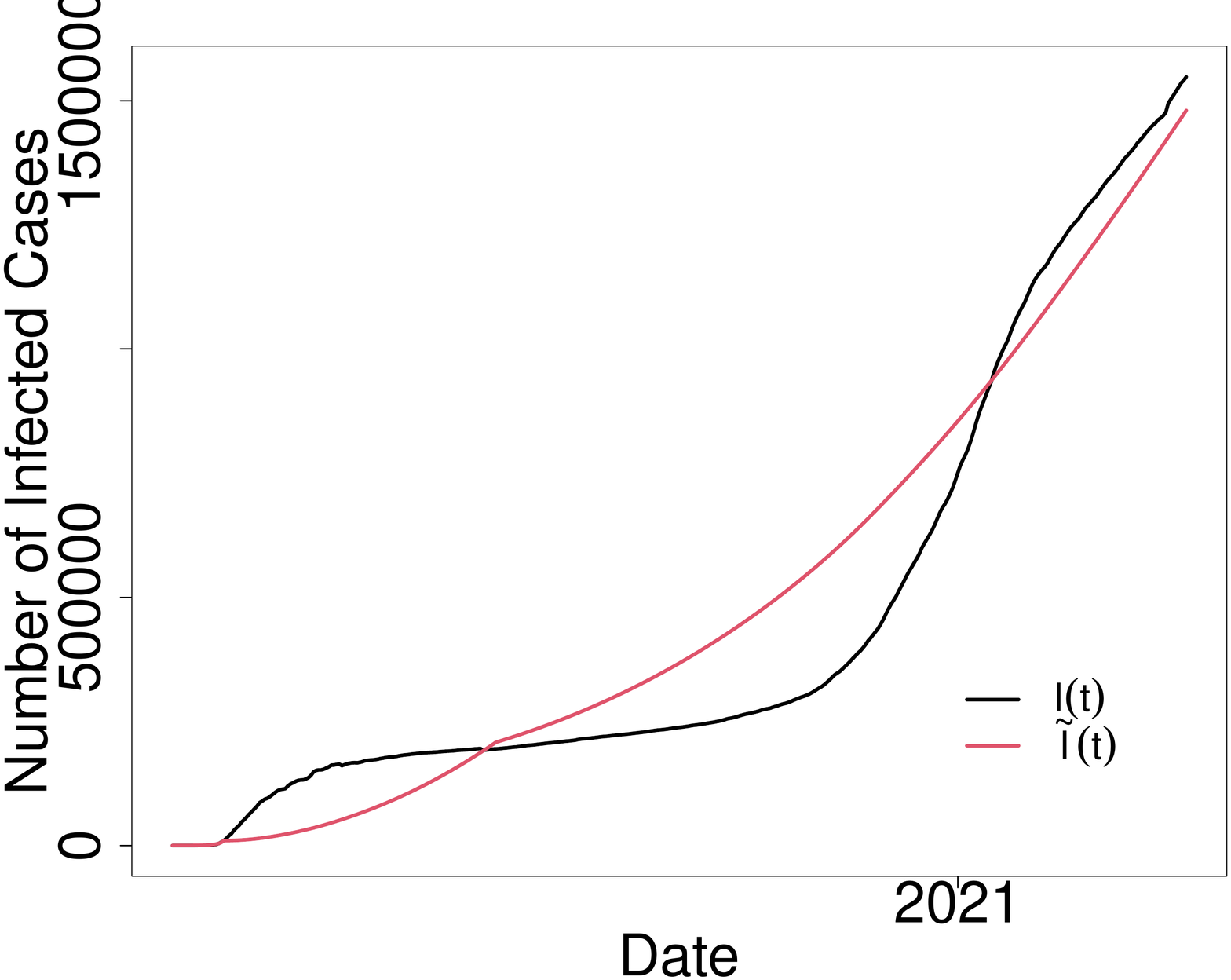}
         \subcaption{NY (Model 1 with pre-specified change points)}
     \end{subfigure}
    \begin{subfigure}[b]{0.19\textwidth}
         \centering
         \includegraphics[width=\textwidth]{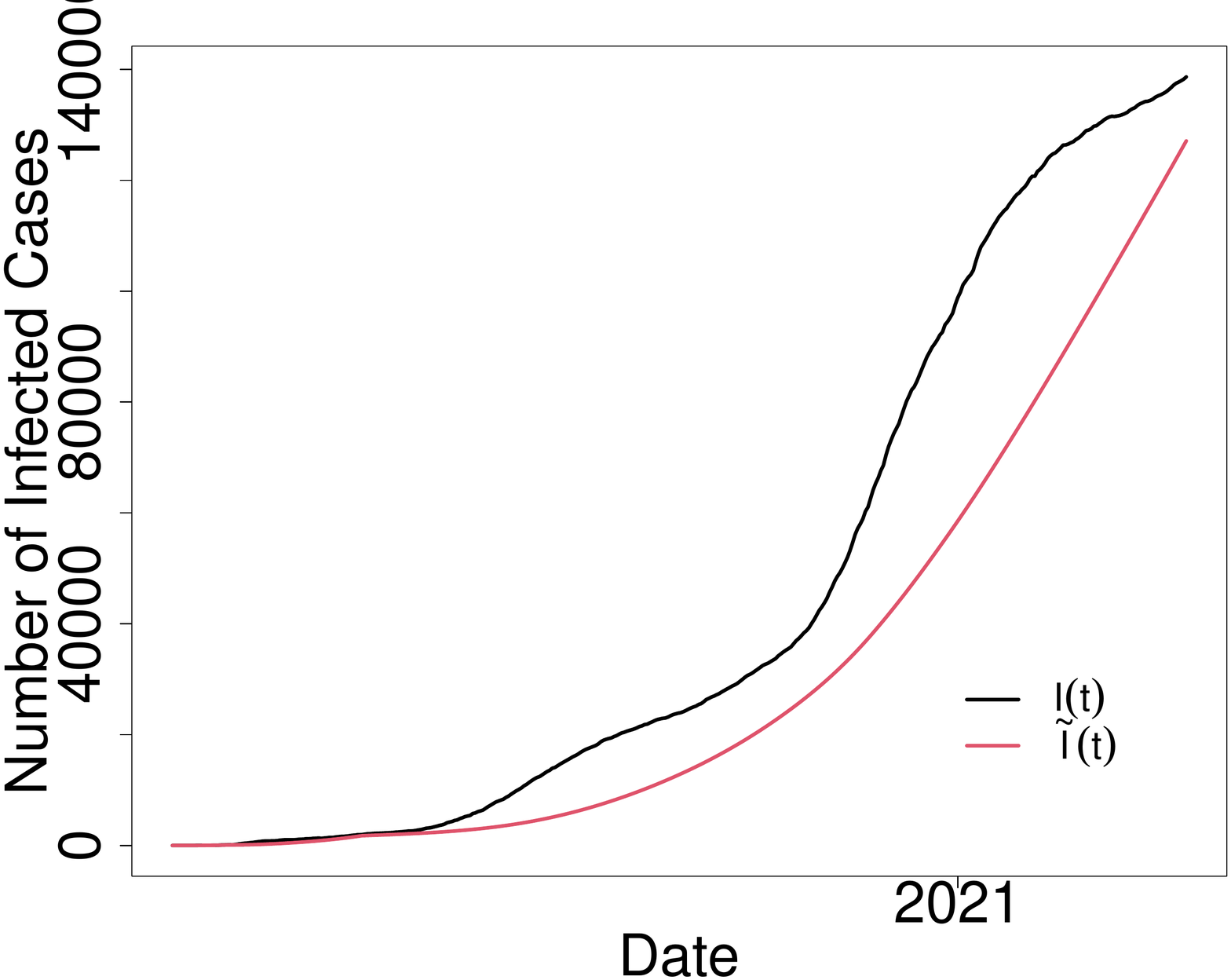}
         \subcaption{OR (Model 1 with pre-specified change points)}
     \end{subfigure}
     \begin{subfigure}[b]{0.19\textwidth}
         \centering
         \includegraphics[width=\textwidth]{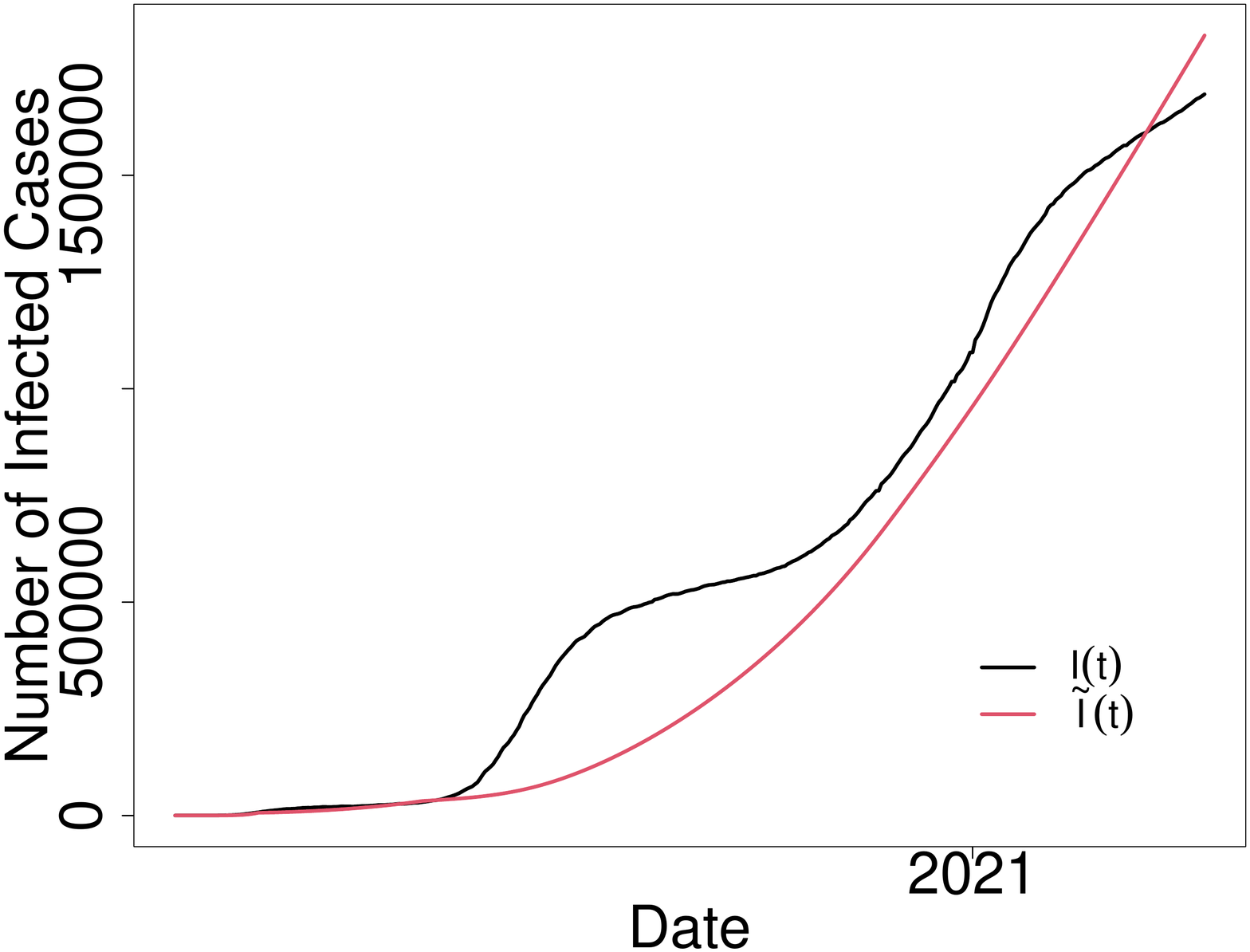}
          \subcaption{FL (Model 1 with pre-specified change points)}
     \end{subfigure}
     \begin{subfigure}[b]{0.19\textwidth}
         \centering
         \includegraphics[width=\textwidth]{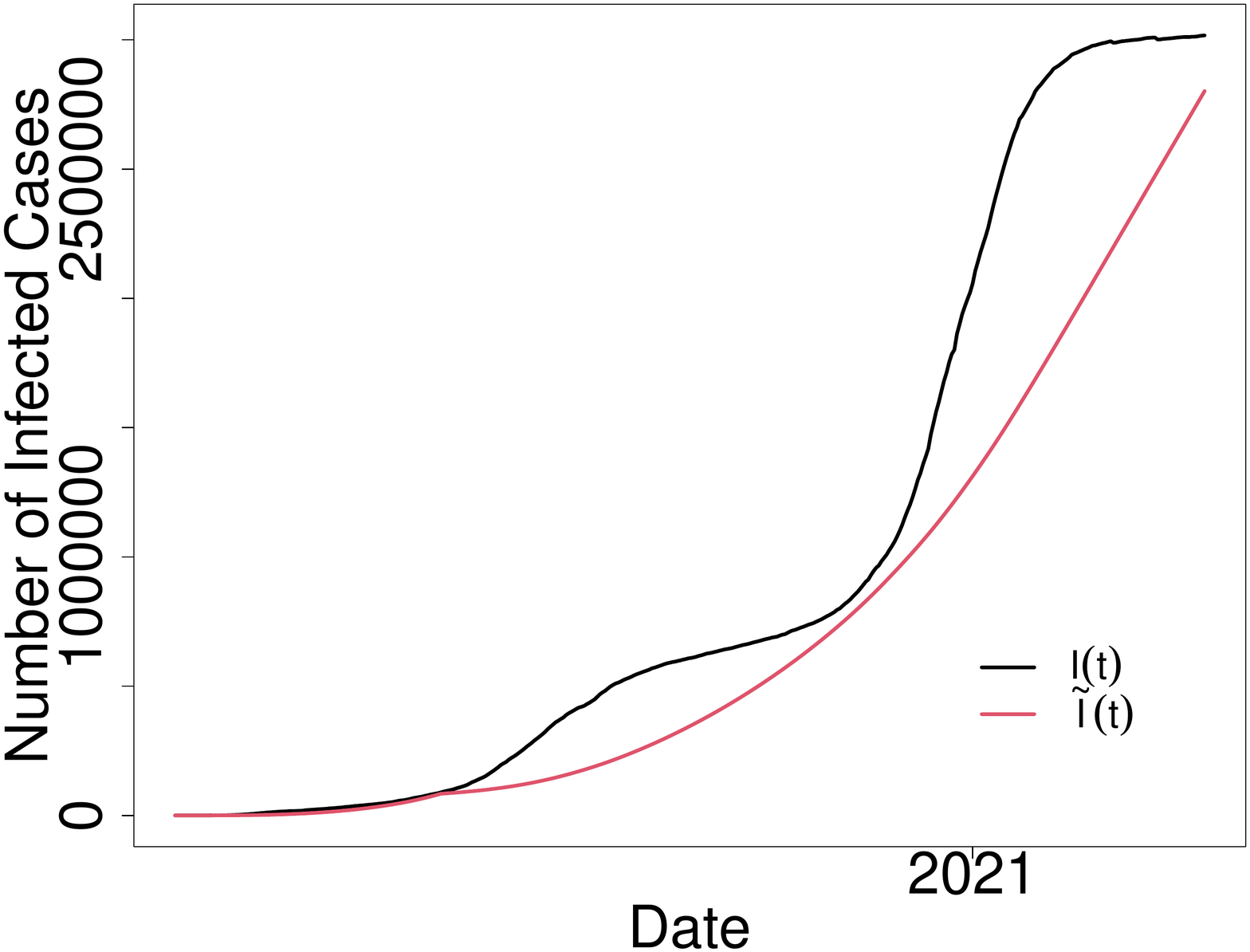}
         \subcaption{CA (Model 1 with pre-specified change points)}
     \end{subfigure}
     \begin{subfigure}[b]{0.19\textwidth}
         \centering
         \includegraphics[width=\textwidth]{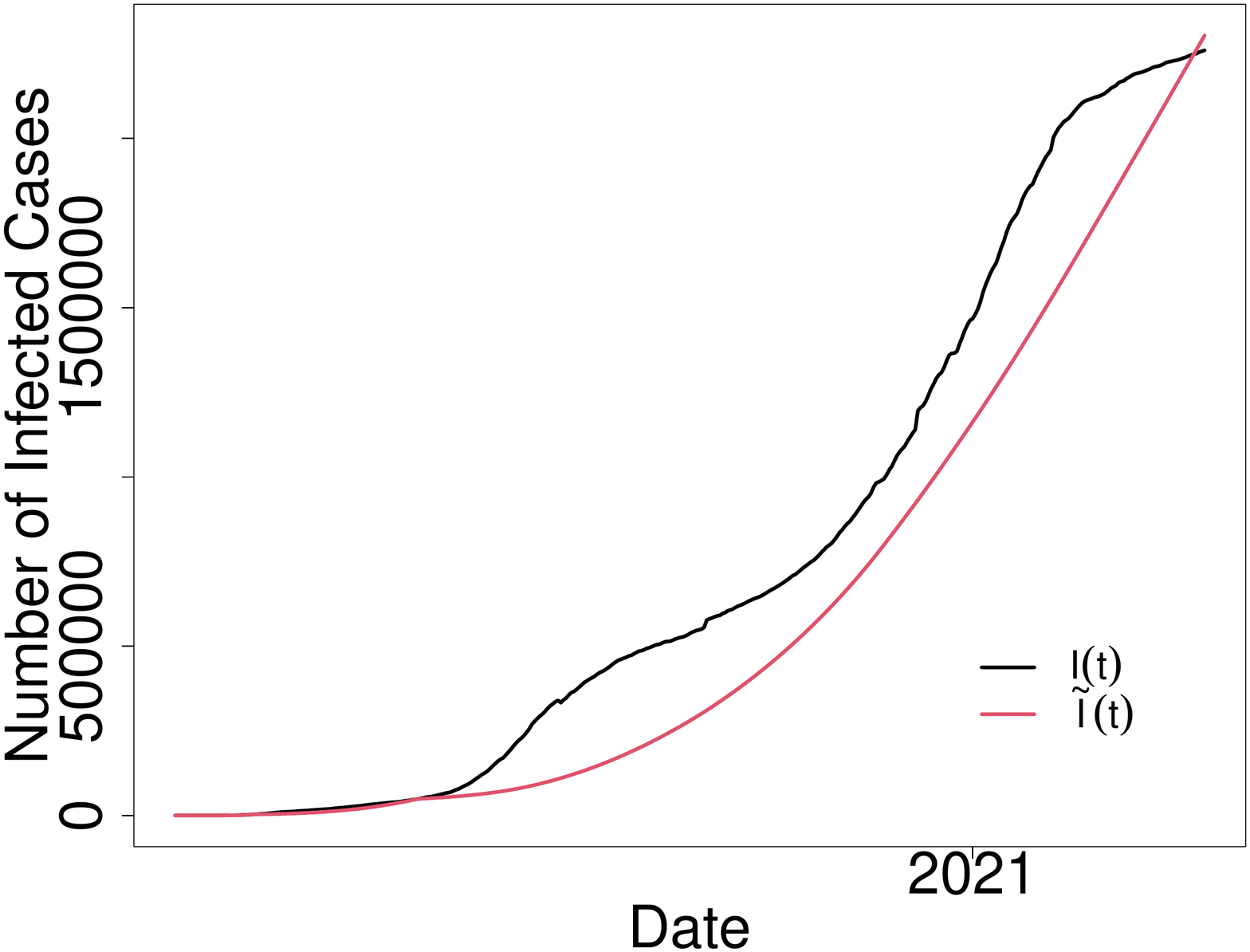}
         \subcaption{TX (Model 1 with pre-specified change points)}
     \end{subfigure}}
     
     \resizebox{0.91\textwidth}{!}{
     \begin{subfigure}[b]{0.19\textwidth}
         \centering
         \includegraphics[width=\textwidth]{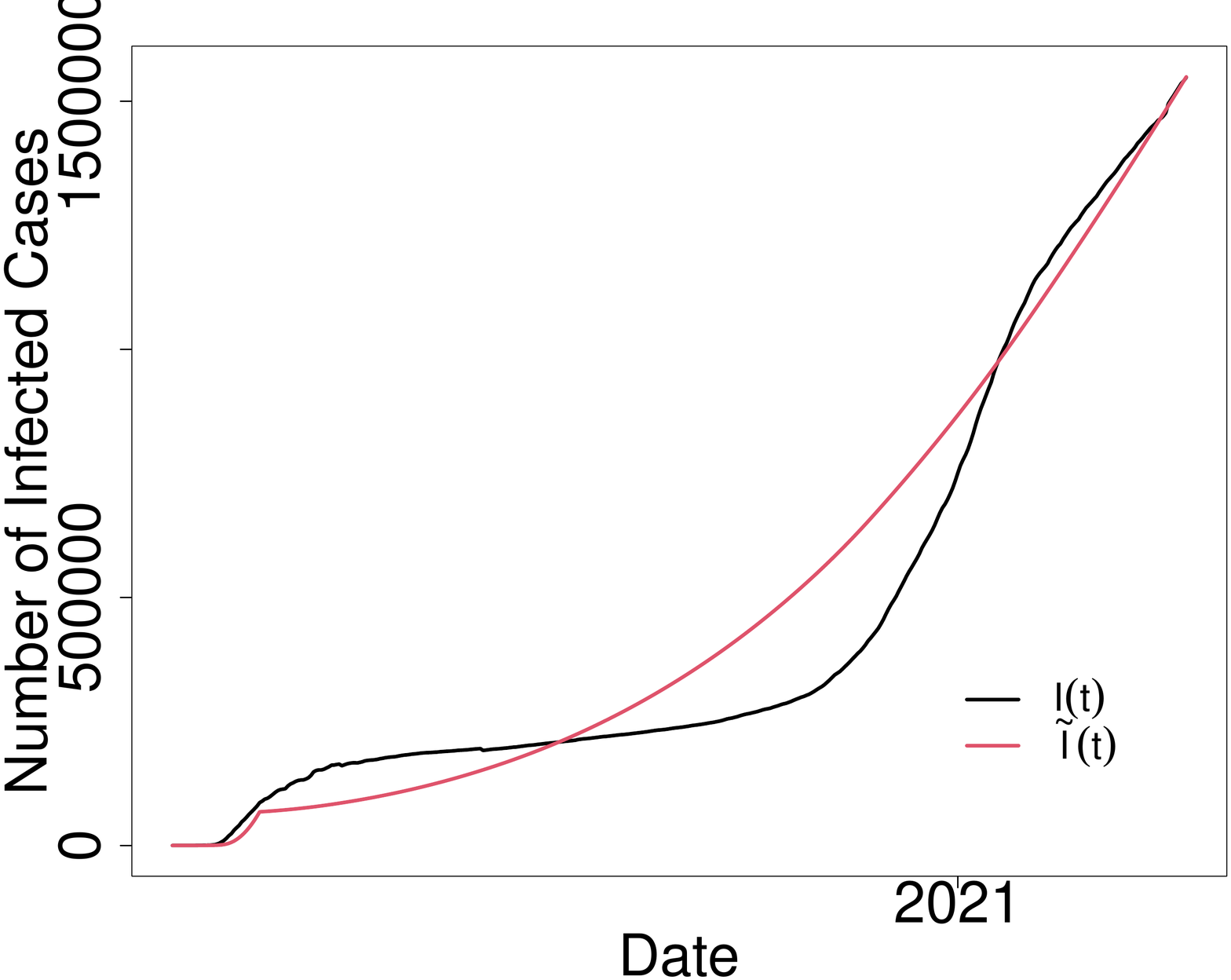}
         \subcaption{NY (Model 1 with detected change points)}
     \end{subfigure}
    \begin{subfigure}[b]{0.19\textwidth}
         \centering
         \includegraphics[width=\textwidth]{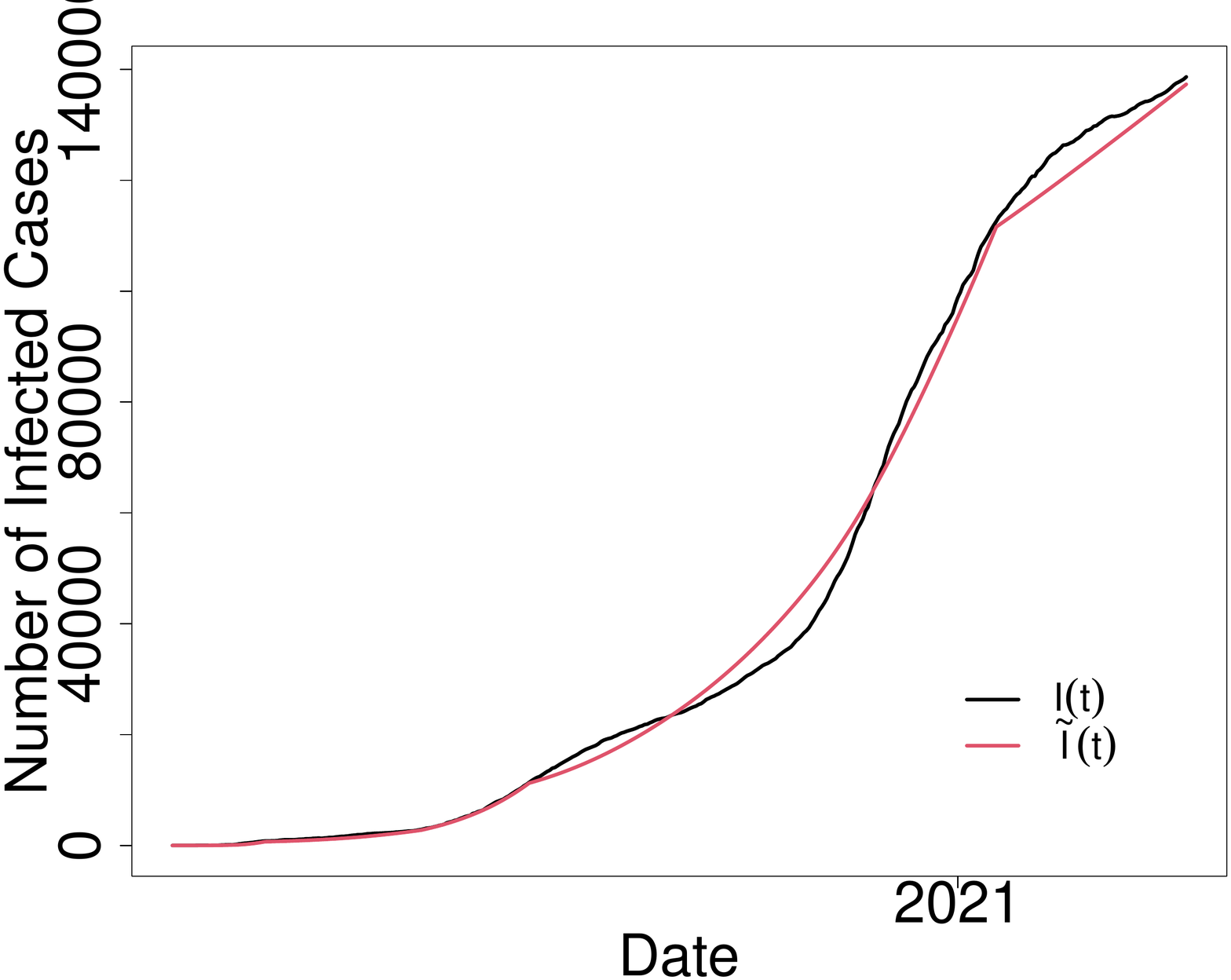}
         \subcaption{OR (Model 1 with detected change points)}
     \end{subfigure}
     \begin{subfigure}[b]{0.19\textwidth}
         \centering
         \includegraphics[width=\textwidth]{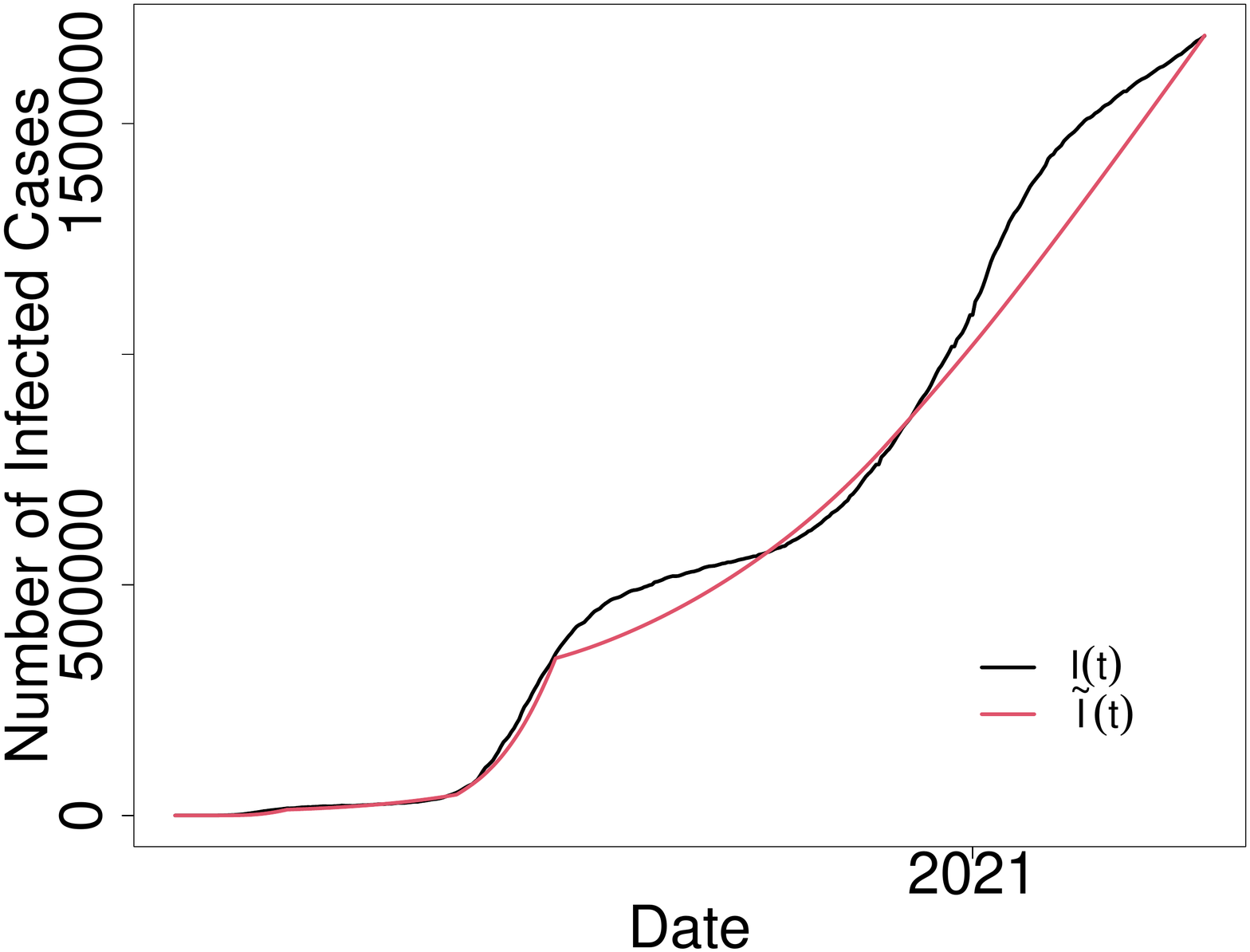}
          \subcaption{FL (Model 1 with detected change points)}
     \end{subfigure}
     \begin{subfigure}[b]{0.19\textwidth}
         \centering
         \includegraphics[width=\textwidth]{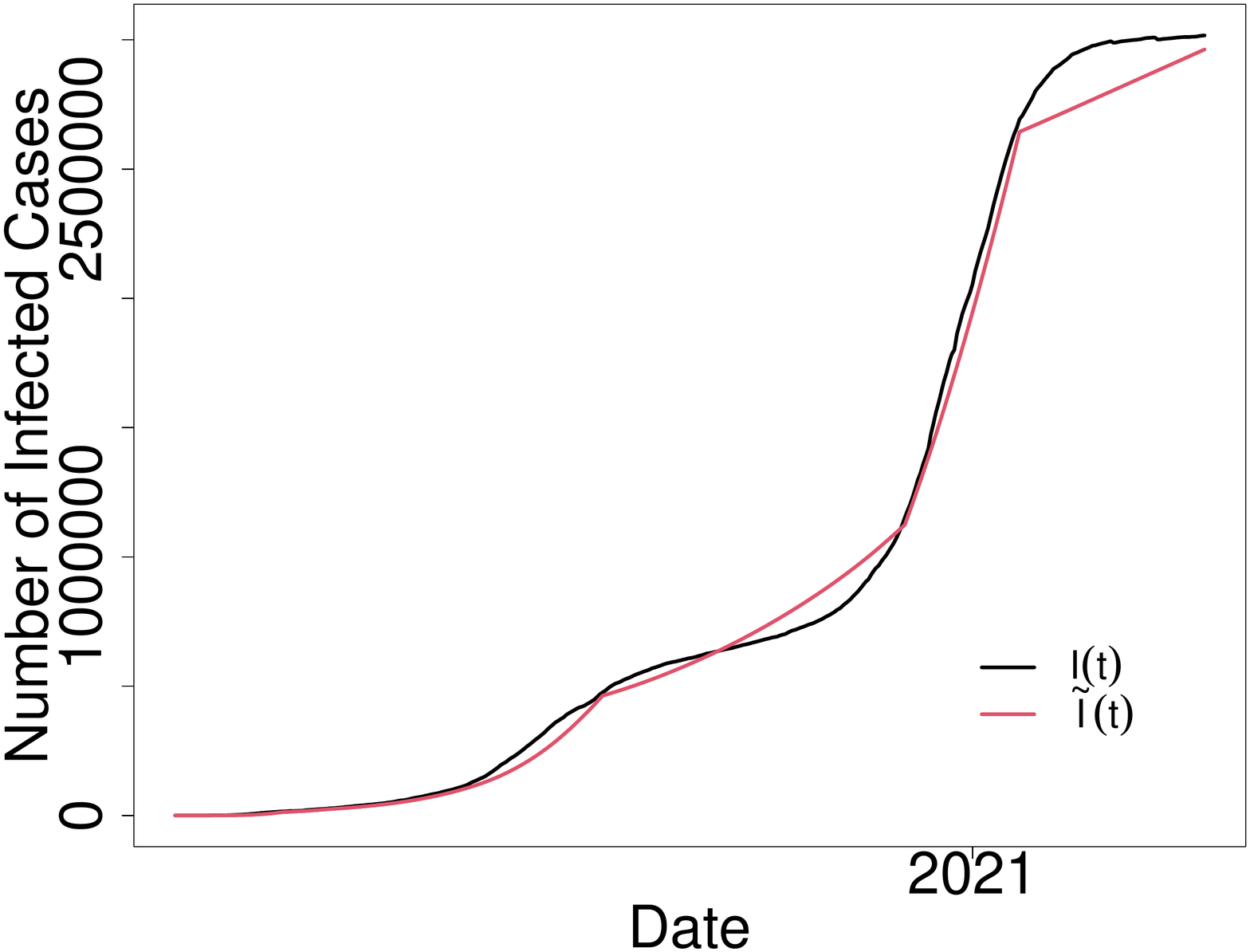}
         \subcaption{CA (Model 1 with detected change points)}
     \end{subfigure}
     \begin{subfigure}[b]{0.19\textwidth}
         \centering
         \includegraphics[width=\textwidth]{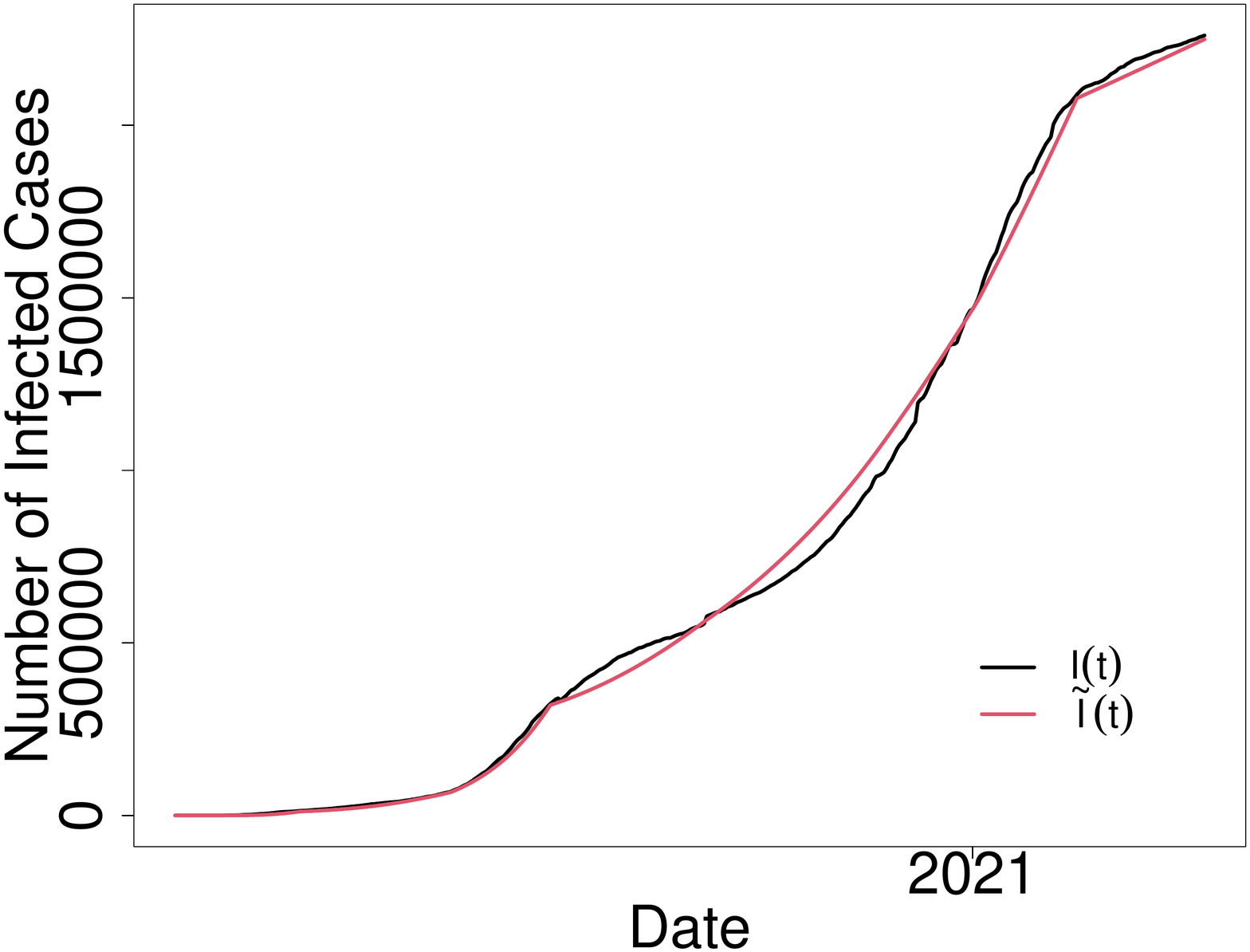}
         \subcaption{TX (Model 1 with detected change points)}
     \end{subfigure}}
    
    \resizebox{0.91\textwidth}{!}{
     \begin{subfigure}[b]{0.19\textwidth}
         \centering
         \includegraphics[width=\textwidth]{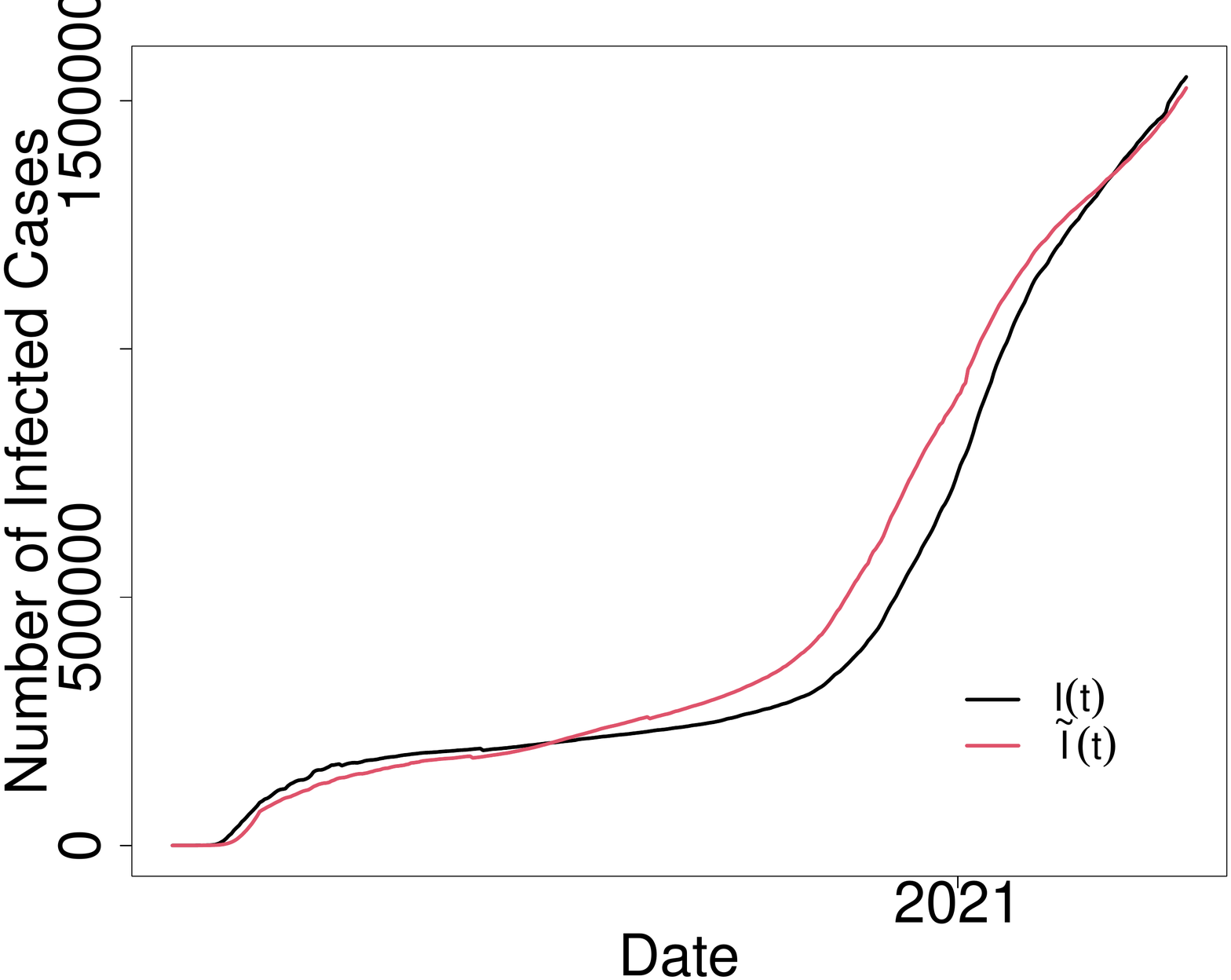}
         \subcaption{NY (Model 2.3)}
     \end{subfigure}
    \begin{subfigure}[b]{0.19\textwidth}
         \centering
         \includegraphics[width=\textwidth]{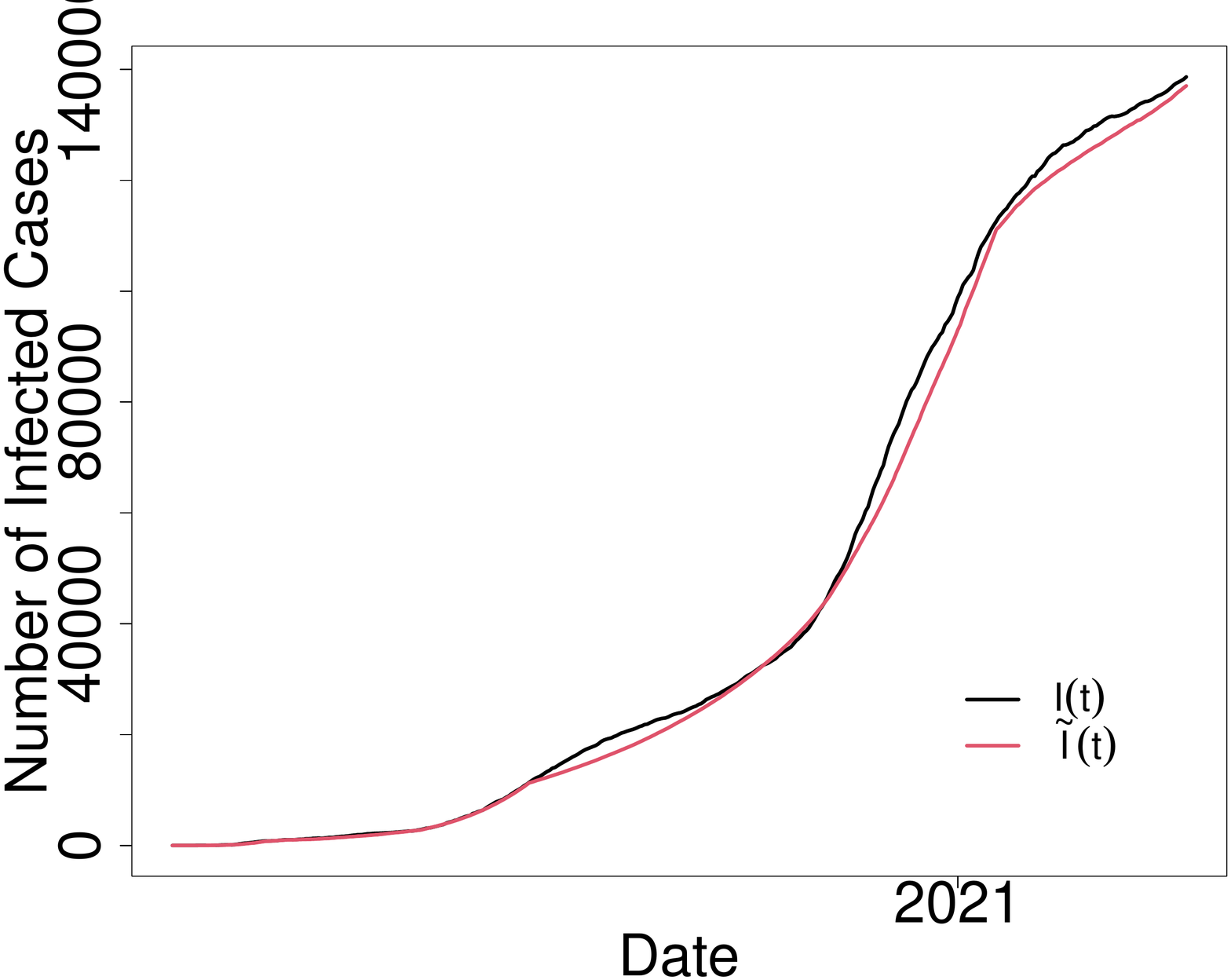}
         \subcaption{OR (Model 2.3)}
     \end{subfigure}
     \begin{subfigure}[b]{0.19\textwidth}
         \centering
         \includegraphics[width=\textwidth]{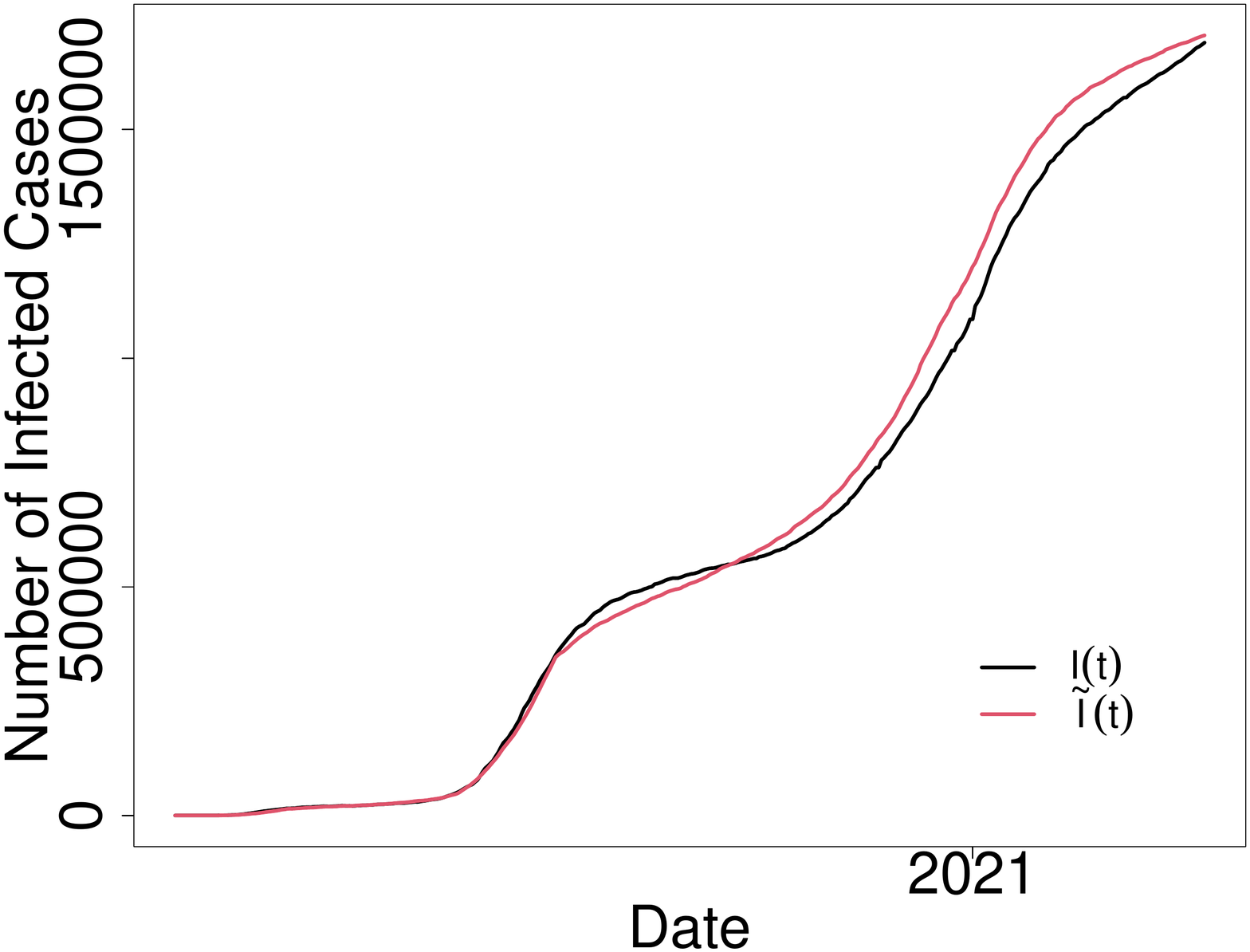}
          \subcaption{FL (Model 2.3)}
     \end{subfigure}
     \begin{subfigure}[b]{0.19\textwidth}
         \centering
         \includegraphics[width=\textwidth]{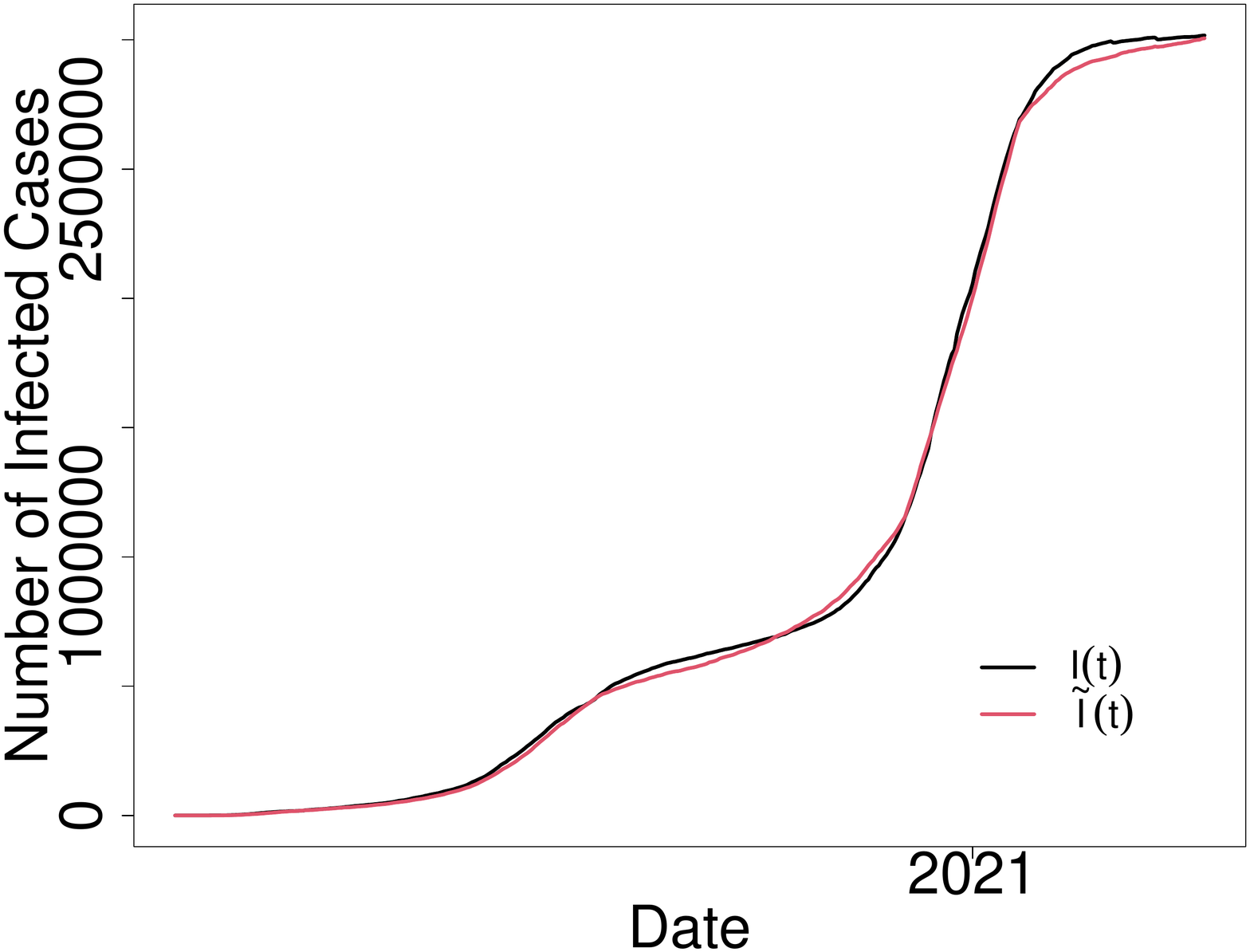}
         \subcaption{CA (Model 2.3)}
     \end{subfigure}
     \begin{subfigure}[b]{0.19\textwidth}
         \centering
         \includegraphics[width=\textwidth]{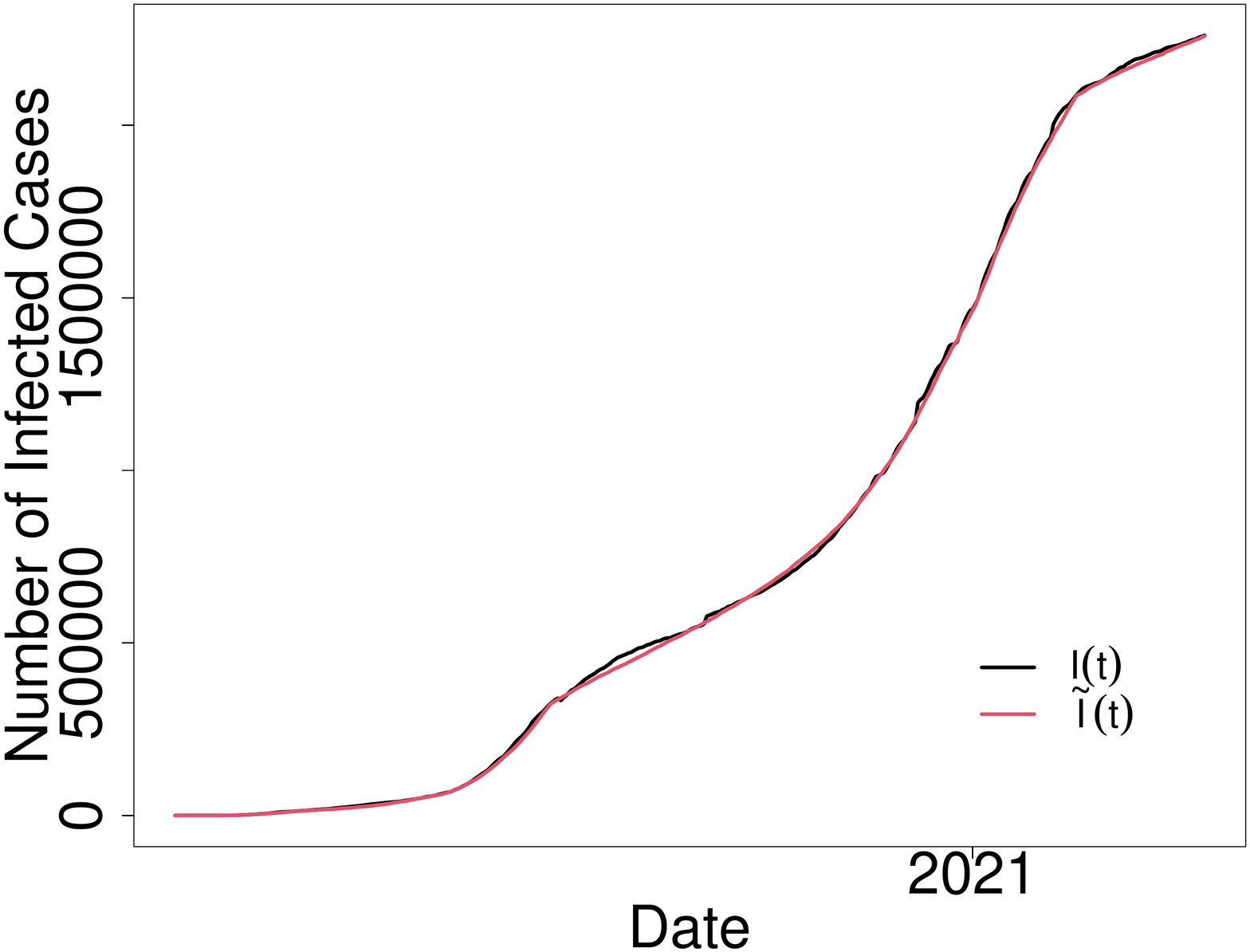}
         \subcaption{TX (Model 2.3)}
     \end{subfigure}}
     
     \resizebox{0.91\textwidth}{!}{
     \begin{subfigure}[b]{0.19\textwidth}
         \centering
         \includegraphics[width=\textwidth]{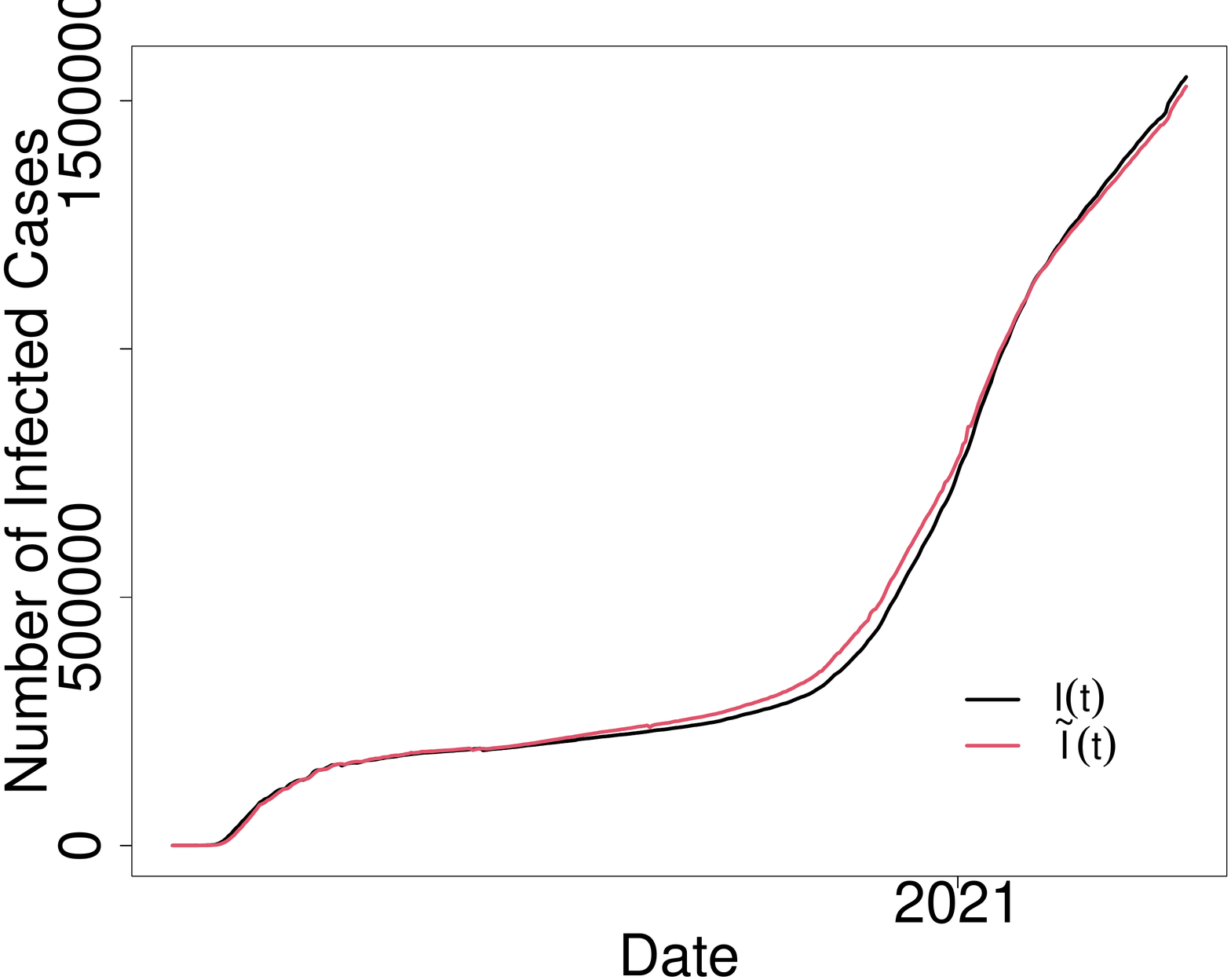}
         \subcaption{NY (Model 3)}
     \end{subfigure}
      \begin{subfigure}[b]{0.19\textwidth}
         \centering
         \includegraphics[width=\textwidth]{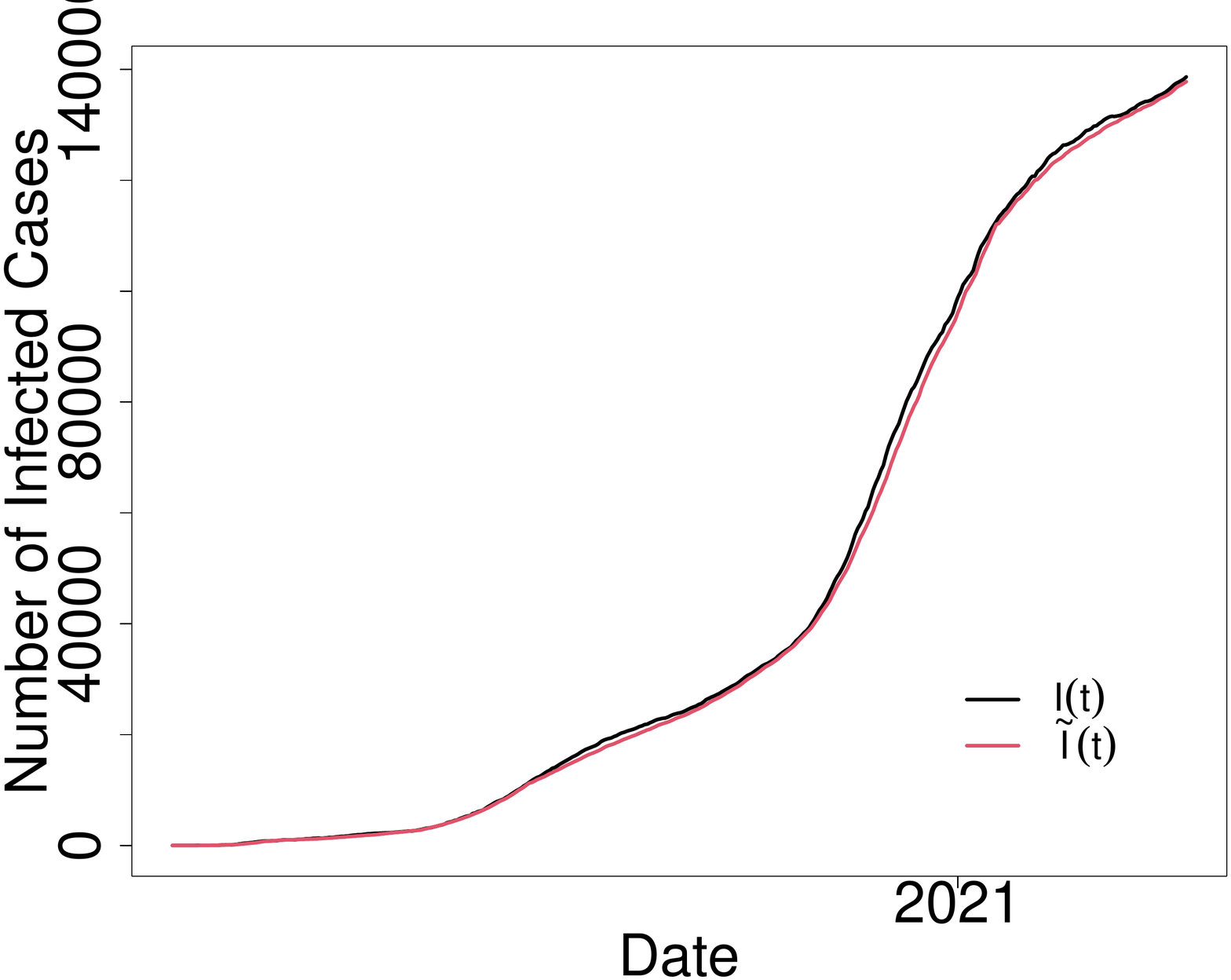}
         \subcaption{OR (Model 3)}
     \end{subfigure}
     \begin{subfigure}[b]{0.19\textwidth}
         \centering
         \includegraphics[width=\textwidth]{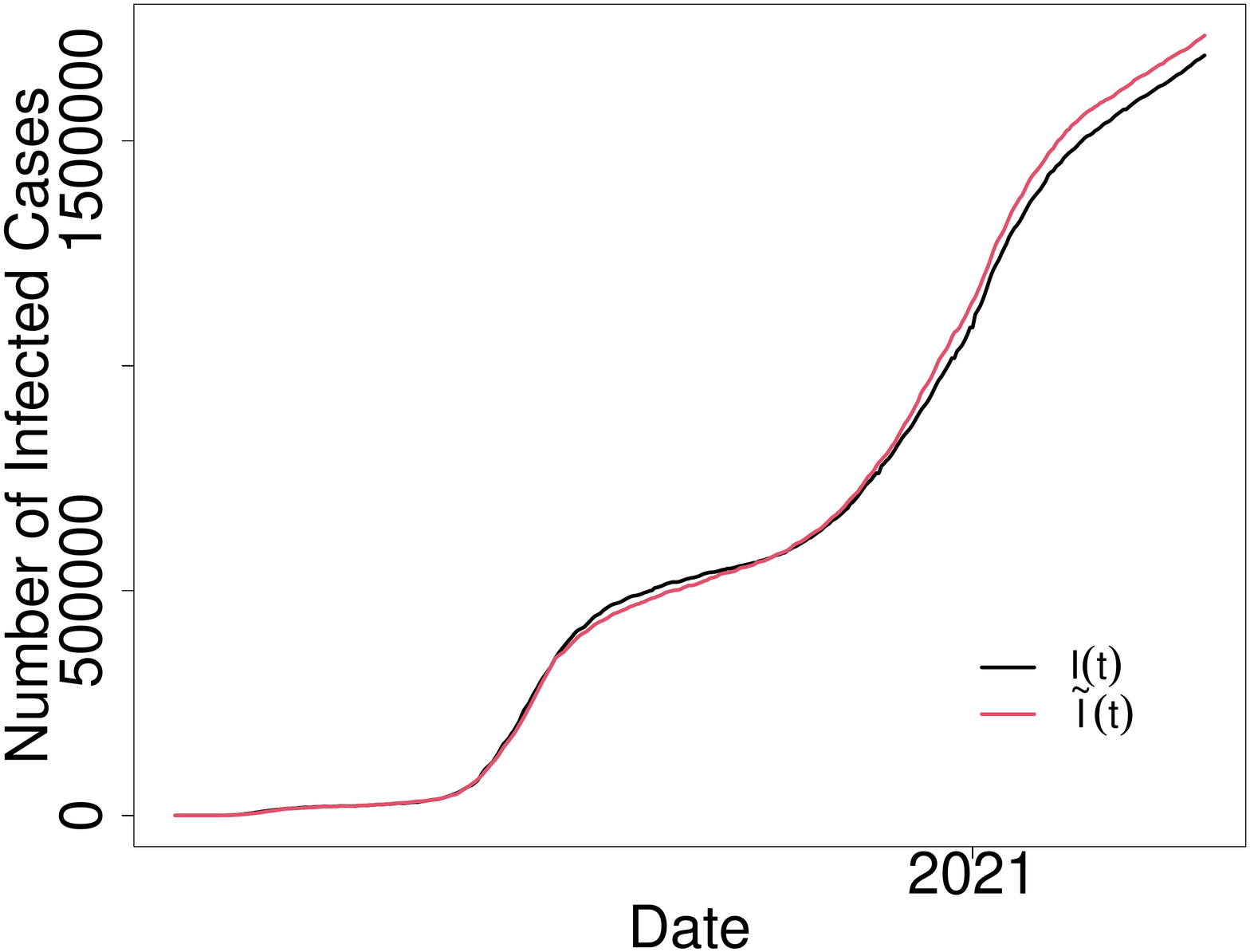}
          \subcaption{FL (Model 3)}
     \end{subfigure}
     \begin{subfigure}[b]{0.19\textwidth}
         \centering
         \includegraphics[width=\textwidth]{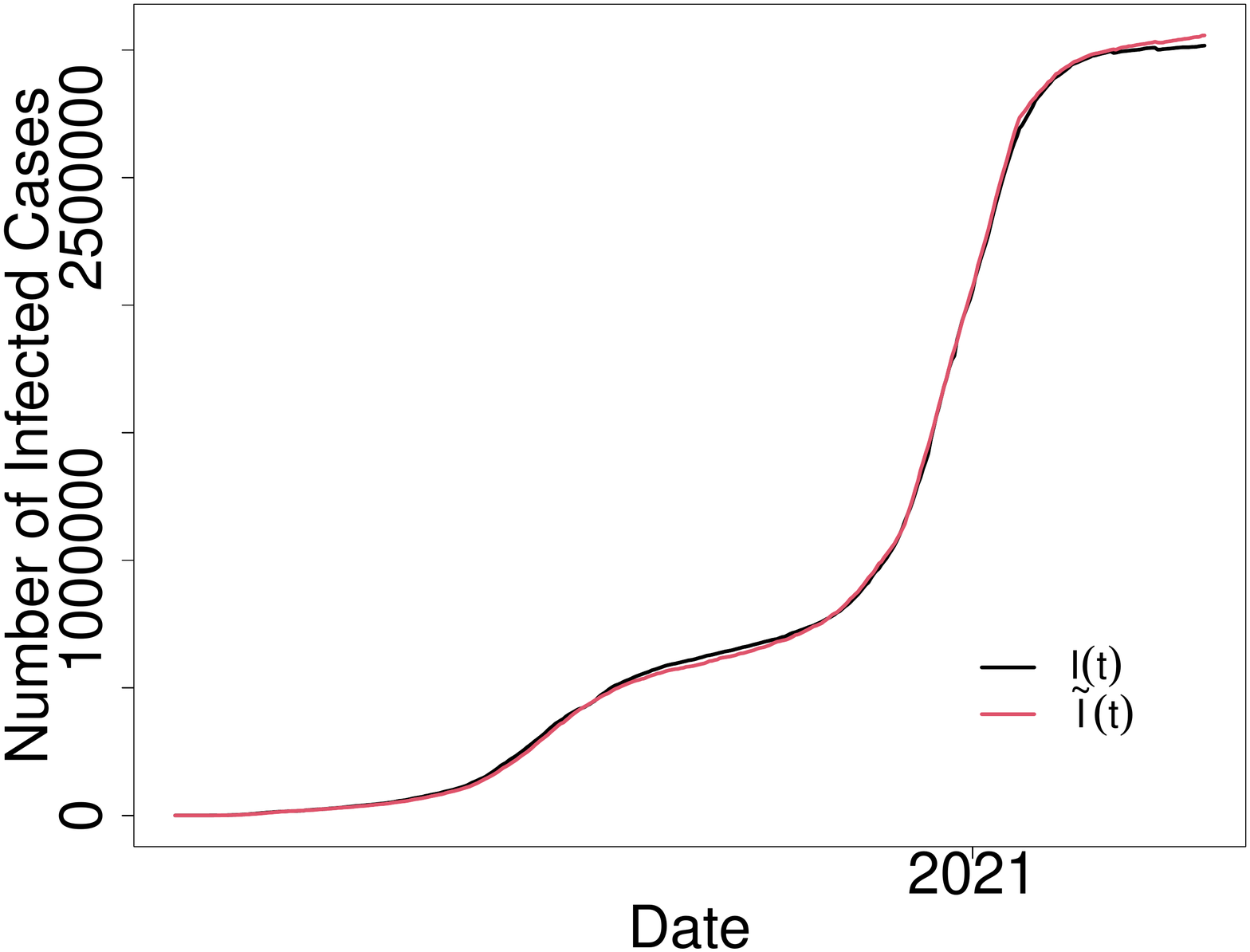}
         \subcaption{CA (Model 3)}
     \end{subfigure}
     \begin{subfigure}[b]{0.19\textwidth}
         \centering
         \includegraphics[width=\textwidth]{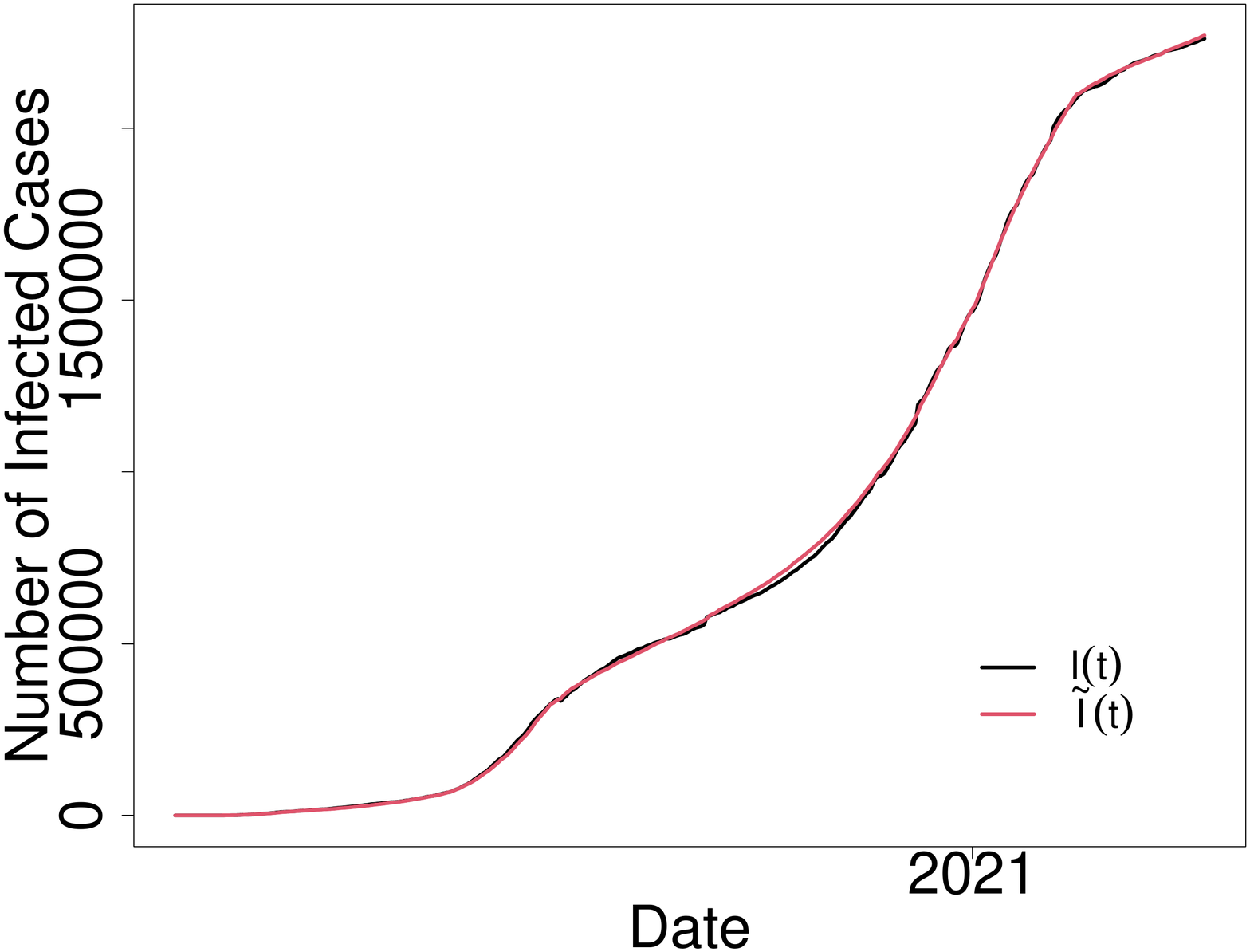}
         \subcaption{TX (Model 3)}
     \end{subfigure}}
        \caption{Observed (black) and fitted (red) number of infected cases estimated by three models in selected five states. 
        The pre-specified change points  are two fixed change points derived from the statewide lockdown date and reopening date.
       }
        \label{fig:number of infected}
\end{figure*}

To determine the significance level of the spatial effect in Model 2, we provide the estimate, p-value, and 95\% confidence intervals for the parameter $\alpha$ in Table \ref{table_alpha} in the Supplement. We find that the influence of the infected or recovered cases in adjacent states is statistically significant ($\mbox{p-value} < 0.05$) for all states.

Next, we assess the prediction performance for the three proposed models. The out-of-sample MRPE is used as the performance measurement, given by \eqref{eq:MRPE}.
We set the last two weeks of our observation period as the testing period and use the remaining time points for training the model. Note that  predicted number of infected and recovered cases are defined by \eqref{IR_est}.

The results of out-of-sample MRPE of $I(t)$ and $R(t)$  in the selected regions are reported in
\ref{table_MPE_all_out_refit}. The calculated MRPEs of $I(t)$ show that Model 3 which includes spatial effects and VAR temporal component outperforms the other models. Spatial smoothing itself (Model 2) reduces the prediction error significantly in some states. For example, in New York, the spatial smoothing reduced the MRPE (I) by 64\% when using Model 2.3 (similarity-based weight).
Finally, the reduction in MRPEs using Model~3 justifies the presence of the VAR component in the modeling framework. 
Additional results related to the VAR component (including the estimated auto-regressive parameters) are reported in Section \ref{sec:var-results} in the Supplement.

We also compare the developed model and associated methodology with the extended SIR, the ANN and the LSTM based models that were proposed in \cite{wangping2020extended}, \cite{moftakhar2020exponentially}, and \cite{chimmula2020time}, respectively.
The extended SIR model was trained using the ``eSIR" package in the \texttt{R} programming language.
The transmission rate modifier $\pi(t)$ was specified according to actual interventions at different times and regions as described in \cite{wangping2020extended}.
The ANN was trained using the ``nnfor" package in the \texttt{R} programming language. It comprises of 3 hidden layers with 10 nodes in each and a linear output activation function, which is the exact architecture in \cite{moftakhar2020exponentially}. 
The number of repetitions for this algorithm was set to be 20 for $I(t)$ and 10 for $R(t)$.
The LSTM architecture was implemented in \texttt{PyTorch}.
The \cite{chimmula2020time} did not specify the network architecture setting (number of layers, number of neurons). Thus, we performed a grid search over number of layers ranging from 1 to 3 with number of neurons as 10, 50, and 100. The best architecture in terms of minimizing the prediction error in the validation data is selected as the optimal LSTM architecture. Note that the prediction results with different network architecture in the LSTM were very similar. The selected number of layers and number of neurons based on grid search and corresponding prediction error for each region are also provided in Tables  \ref{table_MPE_I_out_refit_comparison} and  \ref{table_MPE_R_out_refit_comparison}.


The results of out-of-sample MRPE for $I(t)$ and $R(t)$ in the five states under consideration together with their sample standard deviations in the bracket are reported in
Tables \ref{table_MPE_I_out_refit_comparison} and  \ref{table_MPE_R_out_refit_comparison}, respectively. The proposed method clearly outperforms the extended SIR (eSIR) model across all five states for both $I(t)$ and $R(t)$. Further, it broadly matches the performance of ANN and LSTM for most states. Further, note that the proposed model is easy to interpret since its key parameters (infection and recovery rates) are routinely used by policy makers (see also the discussion in Section~\ref{subsec:counties}).

\begin{table}[ht!]
\caption{\label{table_MPE_I_out_refit_comparison}
Out-of-sample mean relative prediction error (MRPE) of $I(t)$. The number in between the brackets stands for the standard deviation of the relative prediction error .}
\centering
{
{ \begin{tabular}{lccccc} 
  \hline
  \hline
   & NY & OR & FL & CA & TX\\
   &MRPE(I)  &MRPE(I) &MRPE(I) &MRPE(I) &MRPE(I)\\
  \hline
  Model 1 & 0.0011(7e-04) & 9e-04(7e-04) & 0.0016(0.001) & 9e-04(2e-04) & 7e-04(4e-04) \\ 
  Model 2.1  & 0.0011(7e-04) & 0.001(8e-04) & 0.0018(7e-04) & 5e-04(2e-04) & 8e-04(6e-04) \\ 
  Model 2.2  & 0.0011(7e-04) & 0.001(7e-04) & 0.0017(7e-04) & 6e-04(2e-04) & 9e-04(8e-04) \\ 
  Model 2.3  & \textbf{4e-04}(5e-04) & \textbf{7e-04}(6e-04) & 0.002(8e-04) & \textbf{4e-04}(3e-04)& \textbf{6e-04}(4e-04) \\ 
  Model 2.4  & 0.0016(8e-04) & 0.001(8e-04) & 0.001(7e-04) & 7e-04(4e-04) & 8e-04(5e-04) \\ 
  Model 3  & \textbf{4e-04}(4e-04) & \textbf{7e-04}(6e-04) & 0.001(8e-04)& \textbf{4e-04}(3e-04) & \textbf{6e-04}(4e-04) \\ 
  eSIR  & 0.0062(9e-04)&0.0076(0.0013) & 0.0066(0.001)& 0.0106(6e-04)&0.0095(0.001)\\
    ANN  &5e-04(4e-04)& 0.0013(0.0013)& 8e-04(8e-04)&\textbf{4e-04}(3e-04)&8e-04(4e-04)\\ 
LSTM & 7e-04(4e-04)& 0.0012(0.001) & \textbf{7e-04}(8e-04)
& 6e-04(8e-04)& \textbf{6e-04}(5e-04)  \\
  \hline
  LSTM architecture (layer,neurons) & $(1,10)$ & $(3,10)$ & $(1,50)$& $(1,100)$&$(3,50)$\\
\hline
\end{tabular}} }
\end{table}

\begin{table}[!ht]
\caption{\label{table_MPE_R_out_refit_comparison}
Out-of-sample mean relative prediction error (MRPE)   of $R(t)$. The number in between the brackets stands for the standard deviation of the relative prediction error .}
\centering
{
{\begin{tabular}{lccccc} 
  \hline
  \hline
& NY & OR & FL & CA & TX\\
    &MRPE(R)  &MRPE(R) &MRPE(R) &MRPE(R) &MRPE(R)\\
  \hline
  Model 1  & 0.0019(3e-04) & 0.0028(0.0011) & 0.0028(7e-04)& 0.0042(8e-04) & 0.002(7e-04) \\ 
  Model 2.1  & 4e-04(3e-04) & \textbf{0.0017}(0.002) & 0.001(6e-04) & 0.0029(7e-04) & 0.0025(0.0029) \\ 
  Model 2.2  & 4e-04(3e-04) & 0.0018(0.0019) & 0.001(6e-04) & 0.0029(7e-04) & 0.0028(0.0042) \\ 
  Model 2.3  & 4e-04(2e-04) & 0.0022(0.0019) & 0.0012(8e-04) & 0.0022(7e-04) & 0.001(7e-04) \\ 
  Model 2.4  & 5e-04(8e-04) & 0.002(0.0024) & 9e-04(0.001) & 0.0029(0.001) & 0.0015(7e-04) \\ 
  Model 3  & 4e-04(2e-04) & 0.0022(0.0018) & 0.001(6e-04) & 0.0022(7e-04) & 0.001(7e-04) \\
    eSIR  &0.013(0.0011) &0.014(0.0031) & 0.0103(0.0012)& 0.0117(0.0012)&0.0113(0.001)\\
    ANN  &\textbf{3e-04}(2e-04)& \textbf{0.0017}(0.0021)& 8e-04(4e-04) & \textbf{0.0012}(0.001)& 8e-04(5e-04)\\ 
LSTM   &\textbf{3e-04}(3e-04)  &0.0025(0.0021) & \textbf{6e-04}(4e-04)&  0.0027(0.0036)& \textbf{6e-04}(6e-04) \\ 
  \hline
   LSTM architecture (layer,neurons) & $(1,10)$ & $(3,10)$ & $(1,50)$&$(1,100)$& $(3,50)$\\
\hline
\end{tabular}}}
\end{table}





\subsection{Results for Selected U.S. Counties}\label{subsec:counties}
We worked on nine counties/cities. Due to limited space, results for two counties in the state of California (Riverside and Santa Barbara) are presented here while rest are described in Section~\ref{sec:counties} in the supplementary materials. For determining their neighbors, a threshold of 100 miles is used. Model 2.3 uses the top five counties in the corresponding state with the smallest similarity score, while Model 2.4 uses all counties in the given state. The resulting neighbors for Model 2.3 are displayed in Table \ref{table_similar_all}.
The statewide and countywide policy start dates and the detected change points are shown in Tables~\ref{table_plan_all}.

\begin{table}[!ht]
\caption{\label{table_MPE_all_out_refit}
Out-of-sample mean relative prediction error (MRPE) of $I(t)$ and $R(t)$ for selected regions. }
\centering
{
\begin{tabular}{lcccccccccc} 
  \hline
  \hline
 & \multicolumn{2}{c}{New York}& \multicolumn{2}{c}{California} & \multicolumn{2}{c}{Riverside} & \multicolumn{2}{c}{Santa Barbara}\\
    &I  & R  &I  & R &I  & R &I  & R \\
  \hline
  Model 1 & 0.0011 & 0.0019 & 9e-04 & 0.0042 & 0.0056 & 0.0036 & 0.0037 & 0.0071 \\ 
  Model 2.1 & 0.0011 & 4e-04 & 5e-04 & 0.0029 & 0.0043 & 0.0028 & 0.0016 & 0.0039 \\ 
  Model 2.2 & 0.0011 & 4e-04 & 6e-04 & 0.0029 & 0.0045 & 0.0027 & 0.0014 & 0.0032 \\ 
  Model 2.3 & 4e-04 & 4e-04 & 4e-04 & 0.0022 & 0.0014 & 0.0025 & 0.0014 & 0.0028 \\ 
  Model 2.4 & 0.0016 & 5e-04 & 7e-04 & 0.0029 & 0.0013 & 0.0027 & 0.0017 & 0.0033 \\ 
  Model 3 & 4e-04 & 4e-04 & 4e-04 & 0.0022 & 0.0012 & 0.0016 & 0.0014 & 0.0028 \\ 
  \hline
\end{tabular}
}
\end{table}


In December, Southern California was experiencing a fast and sustained outbreak, believed to be driven by a new strain designated as CAL.20C \footnote{\url{https://www.newswise.com/coronavirus/local-covid-19-strain-found-in-over-one-third-of-los-angeles-patients2/}}. 
To that end, we analyzed the daily count of cases in Riverside and Santa Barbara counties in CA. As seen from Figure \ref{fig:rates_county_smooth} in the supplementary material, three of the detected change points occurred on April 17 2020, July 20 2020, and July 26 2020 in Riverside County. The first one can be related to the decreased transmission rate in April mainly caused by the statewide lockdown, while the July ones could be due to the pause of reopening to halt the spread of COVID-19.
Similarly, five change points are detected in Santa Barbara County. The first four change points can also be related to the lockdown, reopening and pause of reopening.
An additional change point in Riverside County and Santa Barbara County is detected on November 28 and December 27, respectively, which may be driven by the new CAL.20C variant. 
 


We also provide the out-of-sample MRPE of $I(t)$ and $R(t)$ of selected counties  in Table~  \ref{table_MPE_all_out_refit}. 
The MRPE of $I(t)$ results show that adding the spatial effect can significantly improve the MRPE of $I(t)$ in both the Riverside County and Santa Barbara County. Adding the VAR($p$) (Model 3) performs the best in  Riverside County.
In Riverside County, the spatial smoothing reduced the MRPE (I) by 76\% when using Model 2.4 (similarity-based weight) while in Santa Barbara County, the spatial smoothing reduced the MRPE (I) by approximately 62\% when using Model 2.3 (similarity-based weight).
 The MRPE of $R(t)$ results show that the piecewise constant model with spatial effect (Model 2) performs the best in  Santa Barbara County while Adding the VAR($p$) (Model 3) performs the best in Riverside County.  

{
\begin{figure*}[ht!]
     \centering
          \captionsetup[sub]{font=scriptsize, labelfont={bf,sf}}
        \begin{subfigure}[b]{0.3\textwidth}
         \centering
         \includegraphics[width=\textwidth]{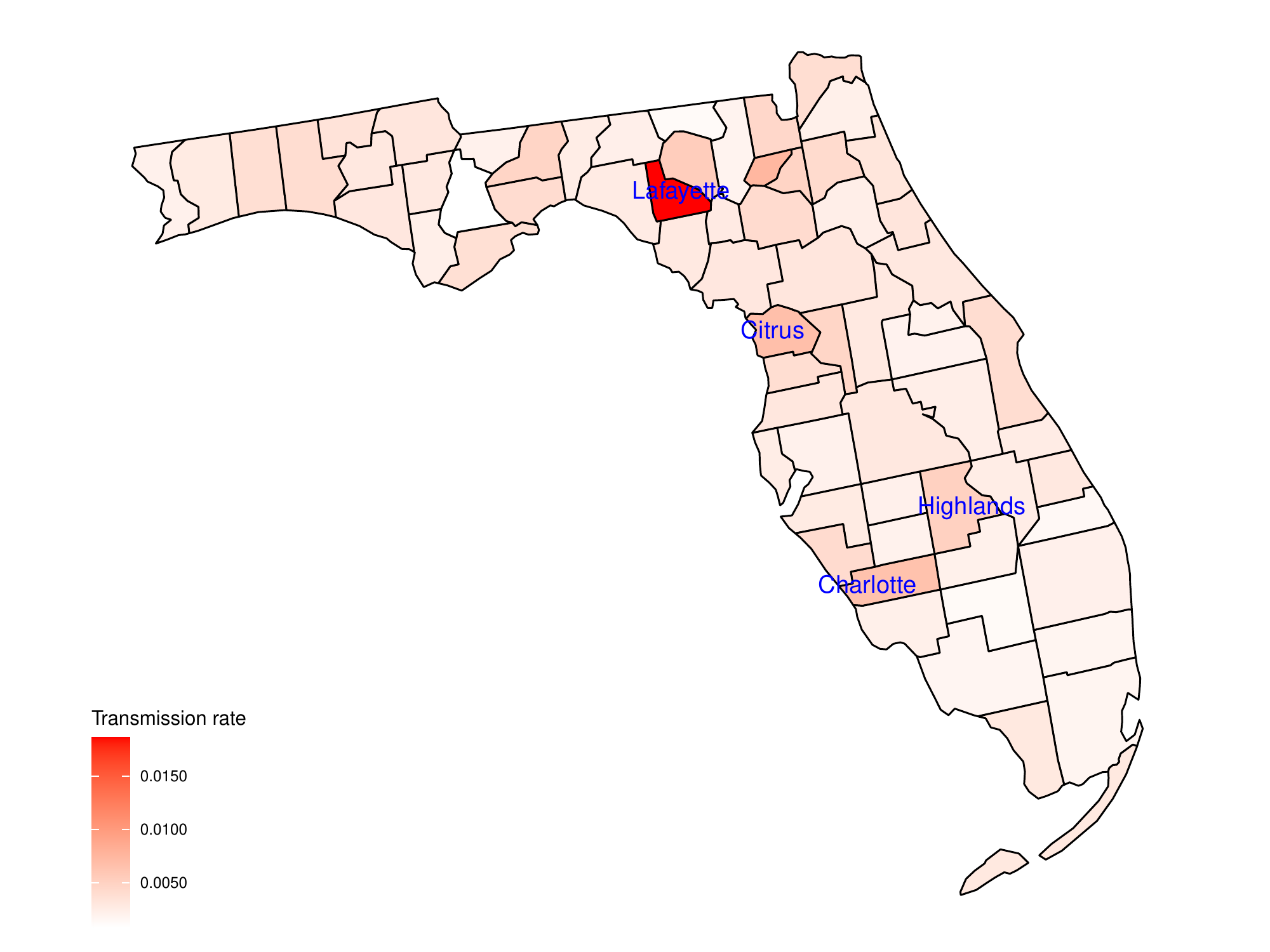}
         \subcaption{Averaged empirical transmission rate $\widetilde{\beta}$}
     \end{subfigure}      
     \begin{subfigure}[b]{0.3\textwidth}
         \centering
         \includegraphics[width=\textwidth]{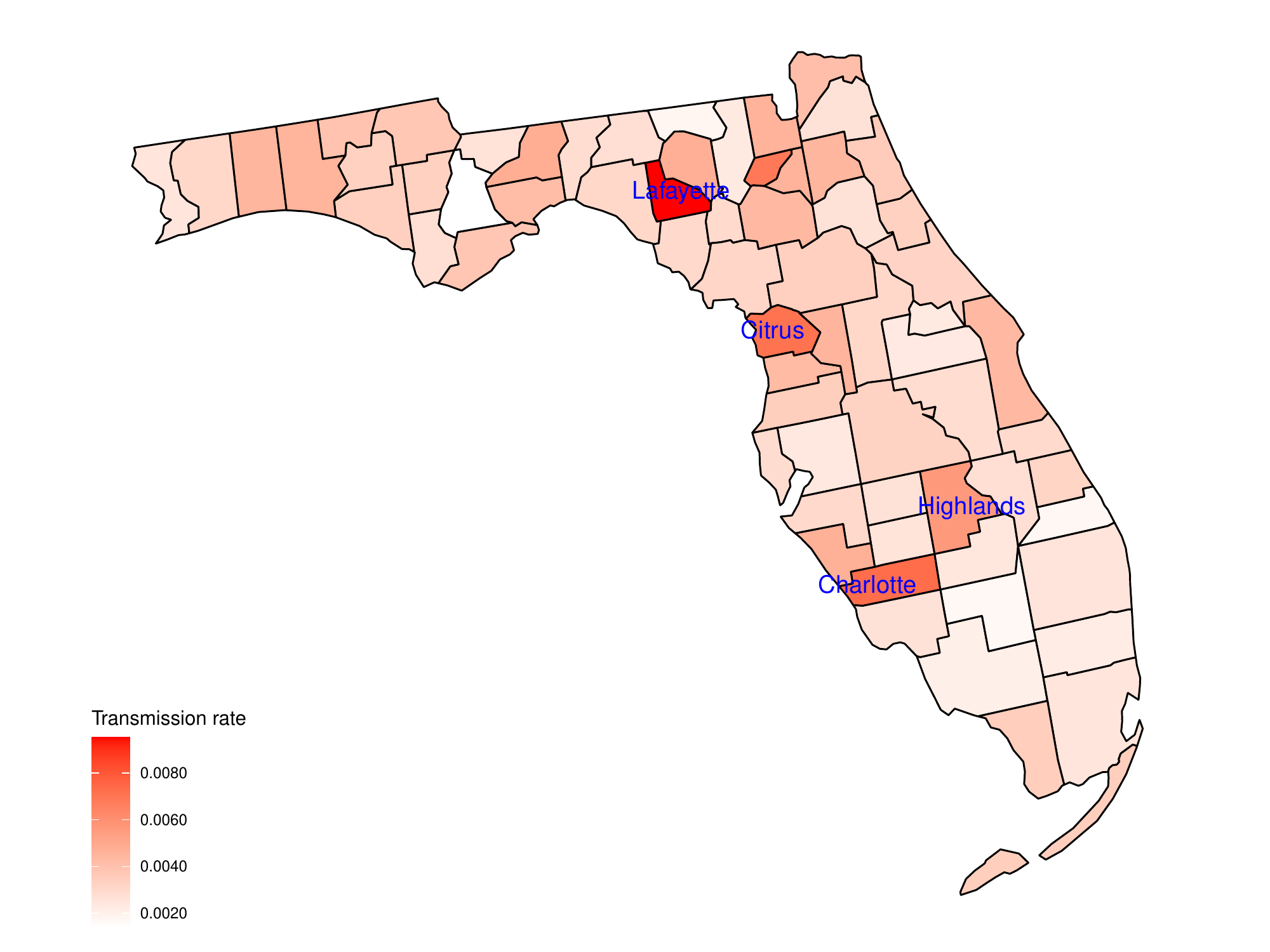}
         \subcaption{Estimated transmission rate $\widehat{\beta}$}
     \end{subfigure}
     \begin{subfigure}[b]{0.3\textwidth}
         \centering
         \includegraphics[width=\textwidth]{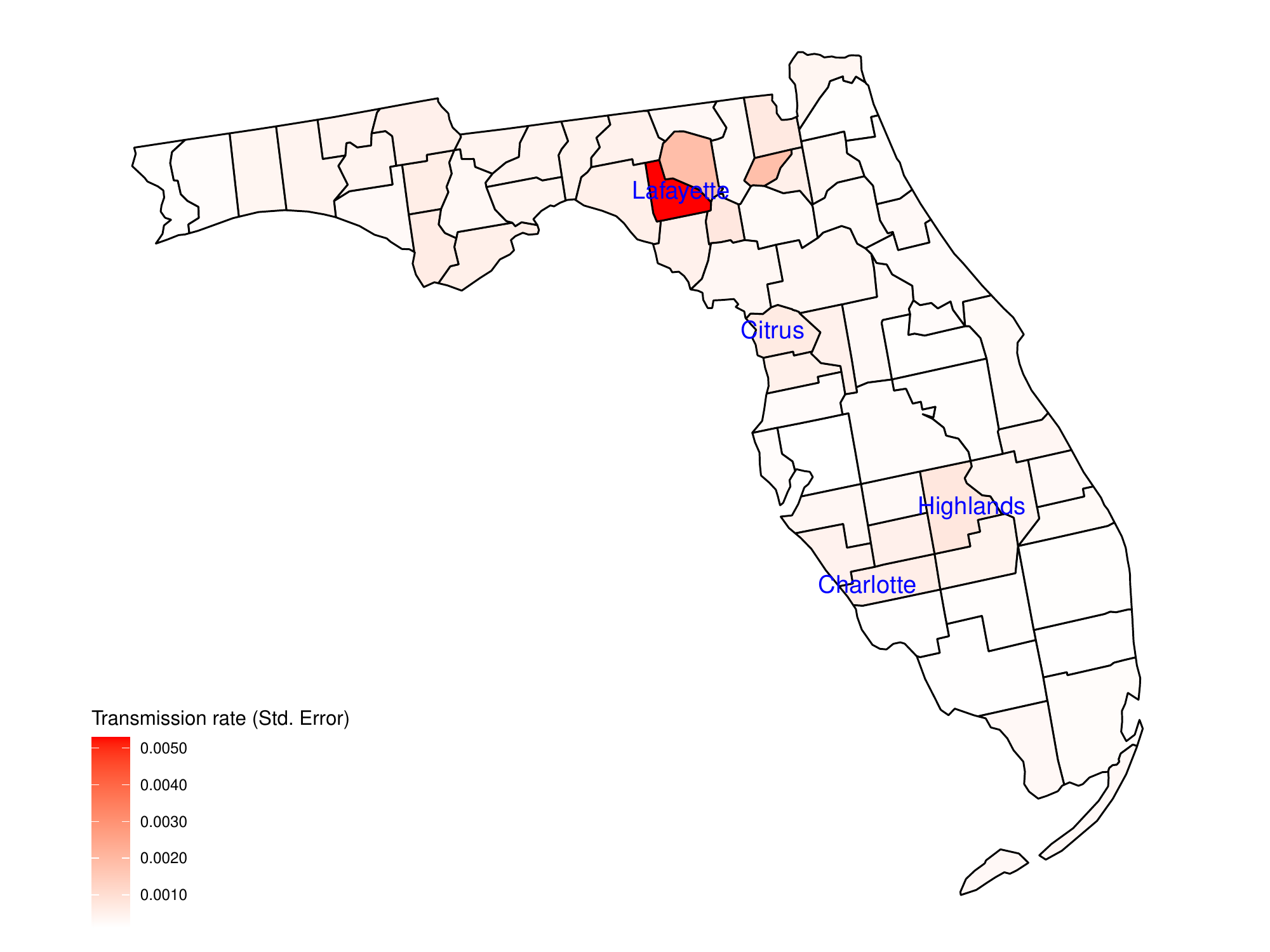}
         \subcaption{Standard deviation of estimated transmission rate $\widehat{\beta}$}
     \end{subfigure}
     
     \begin{subfigure}[b]{0.3\textwidth}
         \centering
         \includegraphics[width=\textwidth]{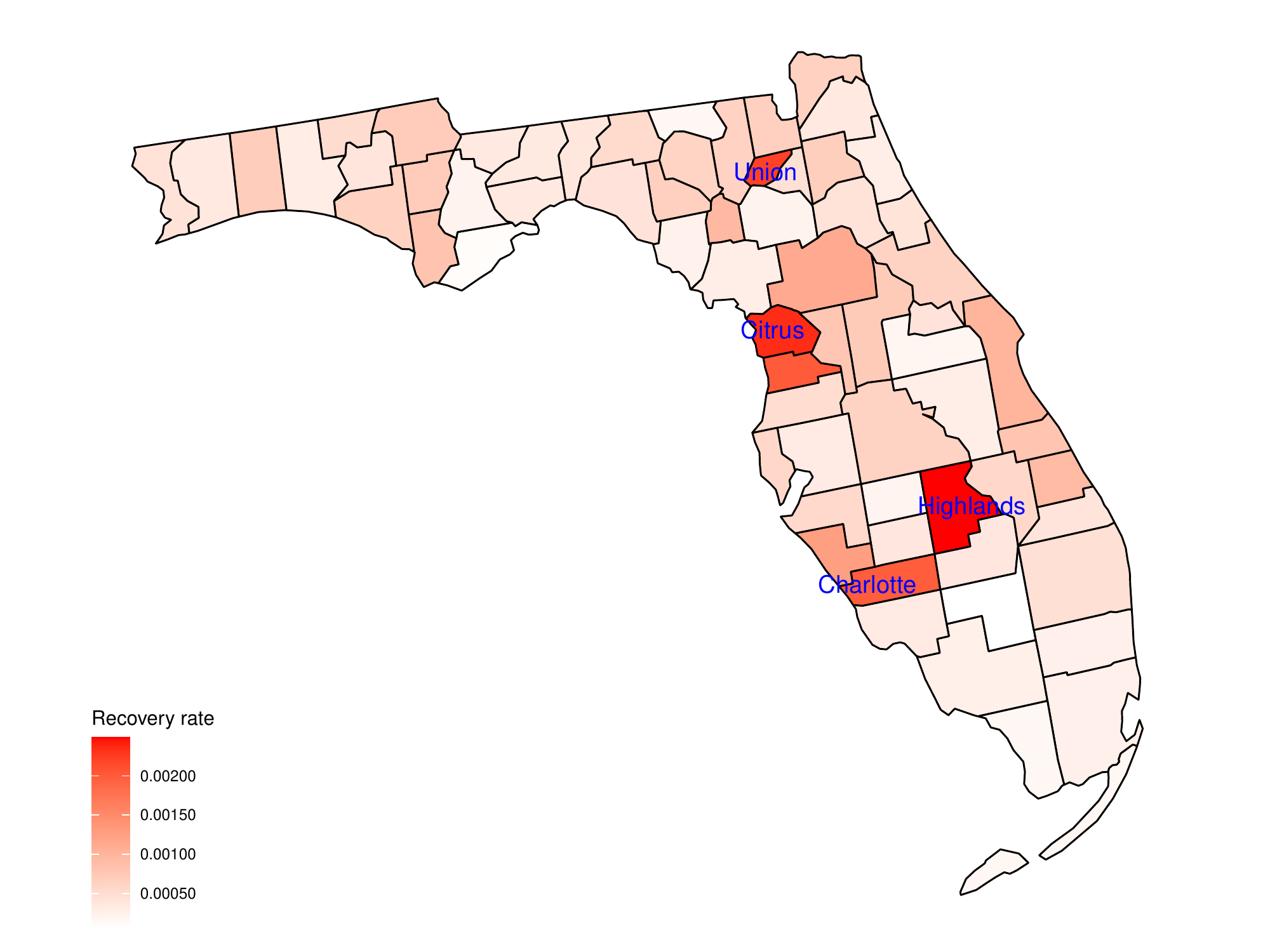}
         \subcaption{Averaged empirical recovery rate $\widetilde{\gamma}$}
     \end{subfigure} 
     \begin{subfigure}[b]{0.3\textwidth}
         \centering
         \includegraphics[width=\textwidth]{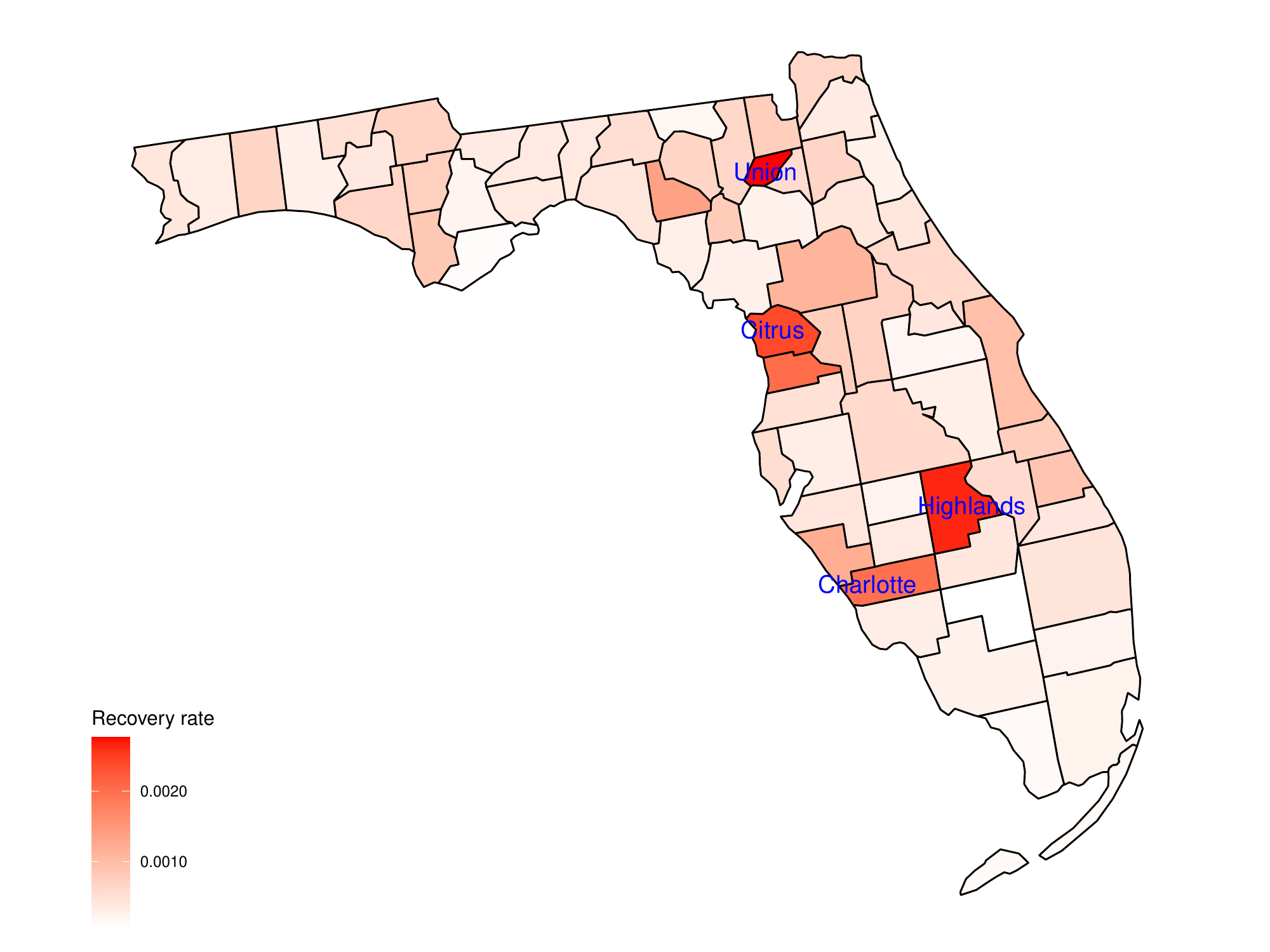}
         \subcaption{Estimated Recovery rate $\widehat{\gamma}$}
     \end{subfigure}
     \begin{subfigure}[b]{0.3\textwidth}
         \centering
         \includegraphics[width=\textwidth]{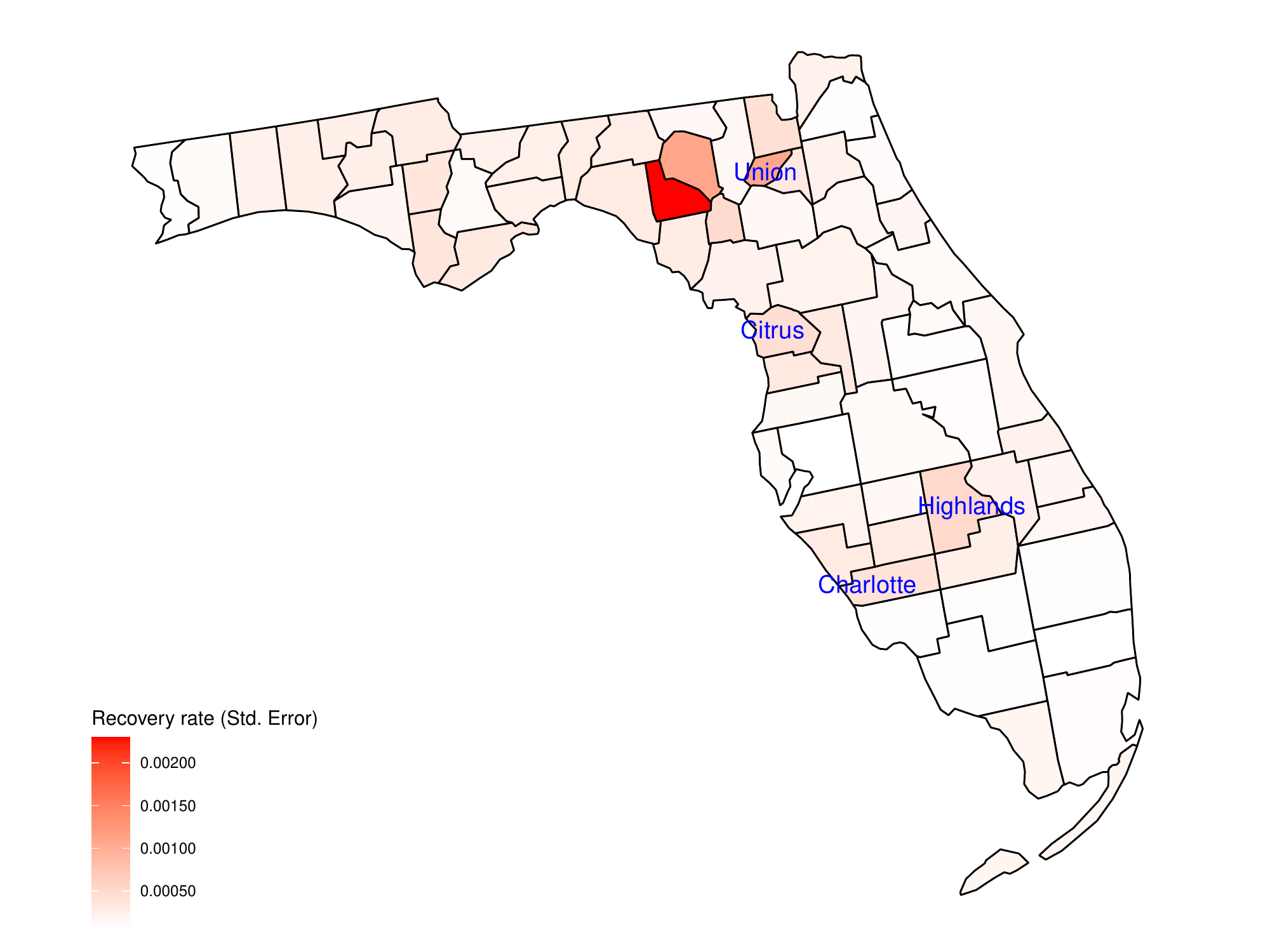}
         \subcaption{Standard deviation of estimated recovery rate $\widehat{\gamma}$}
     \end{subfigure}
        \caption{Heatmap of transmission rate and recovery rate in Florida (county-level).
        }
        \label{fig:FL_counties}
\end{figure*}

In Figure \ref{fig:FL_counties}, we provide heatmaps of the transmission and recovery rates in 67 counties in Florida, based on data from August 1st 2020 to December 1st 2020, a stable period in that state with no change point detected. The top left plot depicts the averaged empirical transmission rate  $\widetilde{\beta}$, wherein it is calculated as the average $\widetilde{\beta}(t)$ within the given time interval, i.e., 
$\widetilde{\beta}(t) =\left( \frac{\Delta I(t)}{(1- u(t+1))}+{\Delta R(t)} \right) /\left(\frac{S(t)}{N} I_f(t) \right)$; the top middle plot, the heatmeap of the estimated transmission rate  $\widehat{\beta}$ and finally the top right plot the heatmeap of the standard error of $\widehat{\beta}$ using linear regression analysis
(for more details, see Section~\ref{sec:heatmap} in the supplementary material).
The bottom plots are the corresponding heatmaps of the recovery rate, wherein the
bottom left plot depicts the averaged empirical recovery rate  $\widetilde{\gamma}$, which is the average $\widetilde{\gamma}(t)$ within the given time interval, and $\widetilde{\gamma}(t) = {\Delta R(t)} / I_f(t)$.
As shown in Figure \ref{fig:FL_counties}, the estimated rates are very close to the averaged empirical transmission and recovery rates. This point confirms the interpretability of the proposed hybrid modeling framework. During this time period, Lafayette County has the highest transmission rate,
while Union County has the highest recovery rate. 
Moreover, Citrus, Charlotte and Highlands Counties have both (significantly) high transmission and recovery rates based on standard errors of the estimated parameters.
}

\section{Concluding Remarks}\label{sec:conclusion}
COVID-19 has posed a number of challenges for modellers, both due to the lack of adequate data (especially early on in the course of the pandemic) and its characteristics (relative long period before emergence of symptoms compared to SARS and other respiratory viruses). A plethora of models -a number of them briefly summarized in the introductory section- were developed, most aiming to provide short and long term predictions of the spread of COVID-19. This work contributes to that goal by developing a hybrid model that enhances a piecewise stationary (mechanistic) SIR model with neighboring effects and temporal dependence to model the spread of COVID-19 at both state-level and county-level in the United States.
The reasonable forecasts of Model 3 (including spatial effects and the VAR component) confirm the existence of spatial and temporal dependence among new daily cases which can not be accounted by the homogeneous deterministic SIR model. Further, the detection of change points in  neighboring counties can provide insights into how the spread of COVID-19 impacted different communities at different points in time and also that of mitigation policies adopted by county (state) health administrators.


%

\vspace{-0.5cm}

{\appendices

\section{Algorithm Details}\label{sec:algo}


The key steps in our proposed detection strategy are summarized next while a summary is outlined in Algorithm~\ref{alg:piecewise constant} in the supplementary material. First, few notations are defined. \\


\noindent
\textit{Notation.} 
Denote the indicator function of a subset A as $\mathbbm{1}_{A}$.
$\mathbb{R}$ denotes the set of real numbers. 
For any vector $v \in \mathbb{R}^p$, 
we use $\Vert v \Vert_{1}$, $\Vert v \Vert_{2}$  and $\Vert v \Vert_{\infty}$  to denote $\sum_{i = 1}^p\vert v_i\vert$, $\sqrt{\sum_{i = 1}^p\vert v_i\vert^2}$ and $\max_{1 \leq i\leq p}\vert v_i\vert$, respectively.
The transpose of a matrix $\bm{A}$ is denoted by $\bm{A}^\prime$.\\

\vspace{-0.3cm}

\noindent
\textit{Steps of the Proposed Algorithm.}

\noindent
\textit{Block Fused Lasso:} the objective is to partition the observations into blocks of size $b_n$, wherein the model parameter $B_t$ remains fixed within each block and select only those blocks for which the corresponding change in the parameter vector is much larger than the others. Specifically, let $n = T-1$ be the number of the times points for the response data $Y_t$ and define a sequence of time points $1 = r_0 < r_1 < \dots < r_{k_n} = n+1$ for block segmentation, such that $r_{i} - r_{i-1} = b_n $ for $i =1 ,\dots, k_n-1$, $b_n \leq r_{k_n} - r_{k_n-1} < 2b_n $ , where $k_n = \lfloor \frac{n}{b_n} \rfloor$ is the total number of blocks. For ease of presentation, it is further assumed that $n$ is divisible by $b_n$ such that $r_{i} - r_{i-1} = b_n$ for all $i =1 ,\dots, k_n$. By partitioning
the observations into blocks of size $b_n$ and fixing the model parameters within each block, we set $\theta_1 = B^{(1)}$ and 
\begin{equation*}
  \theta_i = \begin{cases}
      B^{(j+1)} -B^{(j)} , & \text{when}\ t_j \in [r_{i-1}, r_{i})\  \text{for some}\ j  \\
      \bm{0}, & \text{otherwise,}
    \end{cases}  
\end{equation*}
for $i = 2,3,\dots, k_n$.
Note that $\theta_i \neq \bm{0}$ for $i \geq 2$ implies that $\theta_i$ has at least one non-zero entry and hence a change in the parameters. Next, we formulate the following linear regression model in terms of $\Theta(k_n) = (\theta_1^\prime, \dots, \theta_{k_n}^\prime)^\prime $:
\begin{equation}
\underbrace{\left(\begin{array}{c}
\bm{Y}_{1}\\
\bm{Y}_{2}\\
\vdots \\
\vdots \\
\bm{Y}_{k_n}
\end{array}\right)}_{\mathcal{Y}}= 
\underbrace{\left(\begin{array}{cccccc}
\bm{X}_{1}  & \mathbf{0} &\dots&\dots & \mathbf{0} \\
\bm{X}_{2}  & \bm{X}_{2} &\dots&\dots & \mathbf{0} \\
\vdots & \vdots & \ddots &  &\vdots  \\
\vdots & \vdots &  & \ddots & \vdots \\
\bm{X}_{k_n}  & \bm{X}_{k_n}  &\dots&\dots & \bm{X}_{k_n} \\
\end{array}\right)}_{\mathcal{X}}
\underbrace{
\begin{pmatrix}
{\theta_{1}} \\
{\theta_{2}}\\
\vdots \\
\theta_{k_n}
\end{pmatrix}}_{\Theta(k_n)}
+\underbrace{
\left(\begin{array}{c}
\bm{\mathcal{E}}_1\\
\bm{\mathcal{E}}_2\\
\vdots \\
\vdots \\ 
\bm{\mathcal{E}}_{k_n}\\
\end{array}\right)}_{E},   \nonumber
\end{equation}
where $\bm{Y}_i = \left(Y_{r_{i-1}}, \dots, Y_{r_{i} - 1} \right)^\prime$, $\bm{X}_i = \left(X_{r_{i-1}}, \dots, X_{r_{i} - 1} \right)^\prime$,
$\bm{\mathcal{E}}_i = \left({\varepsilon}_{r_{i-1}}, \dots, \varepsilon_{r_{i} - 1} \right)^\prime$,
$i = 1, \dots, k_n$.
 $\mathcal{Y} \in \R^{2n} $, $\mathcal{X} \in \R^{2n \times 2k_n } $, $\Theta(k_n) \in \R^{2k_n}$ and $E \in \R^{2n } $.

 A simple estimate of parameters $\Theta(k_n)$ can be obtained by using an $\ell_1$-penalized least squares regression of the form
\begin{equation}\label{eq:estimation_block}
\widehat{\Theta}(k_n) = \argmin_{\Theta\in \R^{2k_n}} \left\{ \frac{1}{2n} \| \mathcal{Y} -\mathcal{X}\Theta \|_2^2 + \lambda_{n} \|\Theta \|_1  \right \} ,
\end{equation}
which uses a fused lasso penalty to control the number of change points 
in the model. This penalty term  encourages the parameters across consecutive time blocks to be similar or even identical; hence, only large changes are registered, thus aiding in identifying the change points.
Further, a hard-thresholding procedure is added to cluster the jumps into two sets: large and small ones,  so that those redundant change points with small changes in the estimated parameters can be removed.  We only declare that there is a change point at the end point of a block, when associated with large jump of the model parameters.

\noindent
\textit{Hard Thresholding:} is based on a data-driven procedure for selecting the threshold $\eta$. 
The idea is to combine the $K$-means clustering method \cite{hartigan1979algorithm} with the BIC criterion \cite{schwarz1978estimating} to cluster the changes in the parameter matrix into two subgroups. The main steps are:
\begin{itemize}
    \item Step 1 (initial state): Denote the jumps for each block by $v_k = \left\|\widehat{\theta}_k \right\|_2^2 $, $k = 2, \cdots, k_n$ and let $v_1 = 0 $. Denote the set of selected blocks with large jumps as $J$ (initially, this is an empty set) and set $\text{BIC}^{old} = \infty $.  
     \item Step 2 (recursion state): 
     Apply $K$-means clustering to the jump vector $V= (v_1, v_2, \cdots, v_{k_n})$ with two centers. Denote the sub-vector with a smaller center as the small subgroup, $V_S$ , and the other sub-vector as the large subgroup, $V_L$.
     Add the corresponding blocks in the large subgroup into $J$. Compute the BIC by using the estimated parameters $\widehat{\Theta}$ after setting $\widehat{\theta}_{i} = \bm{0}$ for each block $i \notin J$ and denote it by $\text{BIC}^{new}$. Compute the difference $\text{BIC}^{\text{diff}}= \text{BIC}^{new} - \text{BIC}^{old}$and update $\text{BIC}^{old}= \text{BIC}^{new}$. Repeat this step until $\text{BIC}^{\text{diff}} \geq 0$. 
\end{itemize}

\noindent
\textit{Block clustering:} the Gap statistic \cite{tibshirani2001estimating} is applied to determine the number of clusters of the candidate change points. The basic idea is to run a clustering method (here, $K$-means is selected) over a grid of possible number of clusters, and to pick the optimal one by comparing the changes in within-cluster dispersion with that expected under an appropriate reference null distribution (for more details, see Section 3 in \cite{tibshirani2001estimating}).

\noindent
\textit{Exhaustive search:}
Define $l_i = (\min(C_i)-b_n) \mathbbm{1}_{\{\vert C_i\vert = 1\}} + \min (C_i) \mathbbm{1}_{\{\vert C_i\vert > 1\}}$ and $u_i = (\max(C_i)+b_n) \mathbbm{1}_{\{\vert C_i\vert = 1\}} + \max (C_i) \mathbbm{1}_{\{\vert C_i\vert > 1\}}$, where $C_i$'s are the subsets of candidates blocks by block clustering procedure. Denote the subset of corresponding block indices by $J_i$.
Define the following local coefficient parameter estimates:
\begin{equation}\label{eq:beta_est_local}
  {\widehat{\bm{B}}}_{i} = \sum_{k=1}^{\frac{1}{2}\left(\max(J_{i-1})+\min(J_{i})\right)} \widehat{{\theta}}_k, \text{ for } i = 1, \dots, \widetilde{m}^f + 1,
\end{equation}
where  $\widetilde{m}^f$ is the number of clusters obtained in the block clustering procedure,
 $J_0 = \{1\}$ and $J_{\widetilde{m}^f+1} = \{k_n\}$.

Now, given a subset $C_i$,  we apply the exhaustive search method for each time point $s$ in the interval $(l_i, u_i)$ to the data set truncated by the two end points in time, $ \min(C_i)-b_n $ and $ \max(C_i)+b_n$, i.e. only consider the data within the interval $ [\min(C_i)-b_n , \max(C_i)+b_n) $. Specifically, define the final estimated change point $ \widetilde{t}_i^f $ as
\begin{align}\label{eq:Exhaust:final}
\widetilde{t}_i^f = \argmin_{s \in (l_i, u_i)} \Bigg \{ \sum_{t= \min(C_i)-b_n}^{s-1} \left\| Y_{t}  - X_{t} \widehat{\bm{B}}_{i}\right\|_2^2  \nonumber\\
+  \sum_{t= s}^{\max (C_i)+b_n-1} \left\| Y_{t}  - X_{t}\widehat{\bm{B}}_{i+1} \right\|_2^2 \Bigg \},
\end{align}
for $ i = 1, \ldots, \widetilde{m}^f $, where $\widehat{\bm{B}}_{i}$'s  are the local coefficient parameter estimates based on the first step by block fused lasso.
Denote the set of final estimated change points from \eqref{eq:Exhaust:final} by $ \widetilde{\mathcal{A}}_n^f = \left\lbrace \widetilde{t}_1^f, \ldots, \widetilde{t}_{\widetilde{m}^f}^f  \right\rbrace $.

\noindent
{\em Remark:} An alternative approach for detection of break points is to run a full exhaustive search procedure for both single and multiple change point problems. Such procedures are computationally expensive, and not scalable for large data sets. Simple fused lasso is (block of size one) another method, which although computationally fast, it leads to over-estimating the number of break points; hence, it requires additional ``screening'' steps to remove redundant break points found using the fused lasso algorithm \cite{harchaoui2010multiple,safikhani2020joint}. Such screening steps usually include tuning several hyper-parameters. This task not only slows down the detection method, but is also not robust. The proposed approach (block fused lasso coupled with hard-thresholding) aims to solve this issue by choosing appropriate block sizes, while it only needs a single tuning parameter (the threshold) to be estimated.

\noindent
{\em Estimation of Infection and Recovery rates:} Once the locations of break points are obtained, one can estimate the model parameters by running a separate regression for each identified stationary segment of the time series data. The work of \cite{safikhani2020joint} shows that this strategy yields consistent model parameter estimates.

\noindent
{\em Grid search of parameter in the under-reporting function:} 
We use grid search to estimate the parameter $a$ in the under-reporting function $u(t)$.  
Given a parameter grid of $a$, we transform the observed infected data $I(t)$ by $I_f(t) = \Delta I(t)/ (1-u(t) )  +I_f(t-1) $, then apply the transformed data to the above method and compute the in-sample mean relative prediction error (MRPE) of $\Delta I_f(t)$. Choose the value $a$ that minimizes the in-sample MRPE of $\Delta I_f(t)$. 

\section{Theoretical Properties}\label{sec:main_proof}

In this section, we establish the prediction consistency of the estimator from \eqref{eq:estimation_block}.
To  establish  prediction consistency  of  the  procedure,  the  following    assumptions  are needed:

\begin{itemize}
\item [(A1.)] (Deviation bound)
 There exist constants $c_i > 0$ such that with probability at least $1  - c_1 \text{exp} ( - c_2 (\log 2n ))$, we have
 \begin{equation}\label{eq:db}
     \left\Vert  \frac{\mathcal{X}^\prime E}{2n} \right\Vert_{\infty} \leq c_3 \sqrt{\frac{\log 2n}{2n}}.
 \end{equation}
 \item[(A2.)] There exists a positive constant $M_{B}  > 0$ such that
    \[\text{max}_{1 \leq j \leq m_0 +1}\lVert B^{(j)} \rVert_{\infty} \leq M_{B} .\]
\end{itemize}

\begin{theorem}{1}\label{thm:pred_consistency}
Suppose A1-A2 hold. Choose 
$\lambda_{n} = 2C_1 \sqrt{\frac{ \log 2n  }{2n}}$
for some large constant $C_1> 0$, and assume $m_0 \leq m_n$ with $m_n = o(\lambda_{n}^{-1})$. Then with high probability approaching to 1 and $n \rightarrow + \infty$
, the following holds:
\begin{equation}
   \frac{1}{2n} \left\| \mathcal{X}(\widehat{\Theta} - \Theta) \right\|_2^2   \leq 8 M_{B}\lambda_{n} m_0.
\end{equation}
\end{theorem}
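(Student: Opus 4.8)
The plan is to run the standard \emph{basic inequality} argument for an $\ell_1$-penalized least-squares estimator: optimality of $\widehat{\Theta}$ produces a quadratic-plus-penalty inequality, assumption A1 controls the stochastic (cross) term, assumption A2 together with the block sparsity of the true $\Theta$ controls its $\ell_1$ norm, and the growth condition $m_0 = o(\lambda_n^{-1})$ is what ultimately forces the resulting bound to vanish, delivering prediction consistency.

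First I would use that $\widehat{\Theta}$ minimizes the objective in \eqref{eq:estimation_block} while the true $\Theta$ is feasible, so
\[
\frac{1}{2n}\|\mathcal{Y}-\mathcal{X}\widehat{\Theta}\|_2^2 + \lambda_n\|\widehat{\Theta}\|_1 \le \frac{1}{2n}\|\mathcal{Y}-\mathcal{X}\Theta\|_2^2 + \lambda_n\|\Theta\|_1 .
\]
Substituting $\mathcal{Y}=\mathcal{X}\Theta+E$, expanding the squared residual on the left, and cancelling the common $\frac{1}{2n}\|E\|_2^2$ term yields the core inequality
\[
\frac{1}{2n}\|\mathcal{X}(\widehat{\Theta}-\Theta)\|_2^2 \le \frac{1}{n}\langle \mathcal{X}'E,\, \widehat{\Theta}-\Theta\rangle + \lambda_n\|\Theta\|_1 - \lambda_n\|\widehat{\Theta}\|_1 .
\]

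Next I would bound the cross term by H\"older's inequality and the deviation bound. Writing $\frac{1}{n}\mathcal{X}'E = 2\cdot\frac{\mathcal{X}'E}{2n}$ and invoking \eqref{eq:db} on the event of A1 gives $\frac{1}{n}\langle \mathcal{X}'E, \widehat{\Theta}-\Theta\rangle \le 2c_3\sqrt{\tfrac{\log 2n}{2n}}\,\|\widehat{\Theta}-\Theta\|_1$, and choosing $C_1\ge c_3$ in $\lambda_n=2C_1\sqrt{\tfrac{\log 2n}{2n}}$ bounds this by $\lambda_n\|\widehat{\Theta}-\Theta\|_1$. Applying the triangle inequality $\|\widehat{\Theta}-\Theta\|_1\le \|\widehat{\Theta}\|_1+\|\Theta\|_1$ and cancelling the $\pm\lambda_n\|\widehat{\Theta}\|_1$ terms collapses the right-hand side to $2\lambda_n\|\Theta\|_1$. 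It then remains to control $\|\Theta\|_1$ through its block structure: by construction $\Theta$ has at most $m_0+1$ nonzero blocks $\theta_i$ (the base level $\theta_1=B^{(1)}$ together with the $m_0$ blocks straddling a change point), each being a level or a difference of two-dimensional vectors $B^{(j)}$. Since A2 gives $\|B^{(j)}\|_\infty\le M_B$, every active block satisfies $\|\theta_i\|_1\le 4M_B$, so $\|\Theta\|_1 \le 4M_B m_0$ after the single base-level block is absorbed, and hence $\frac{1}{2n}\|\mathcal{X}(\widehat{\Theta}-\Theta)\|_2^2 \le 8M_B\lambda_n m_0$. This holds on the A1 event, whose probability is at least $1-c_1(2n)^{-c_2}\to 1$; and since $m_0\le m_n = o(\lambda_n^{-1})$ forces $\lambda_n m_0 = o(1)$, the bound itself tends to zero, giving prediction consistency.

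I expect the only delicate point to be the bookkeeping of constants: reaching the clean factor $8M_B$ requires carefully tracking the factor-of-two normalizations hidden in $\lambda_n$ and in the $\frac{1}{2n}$ scaling, and deciding how the unpenalized base-level block is folded into the $m_0$ count. None of this is conceptually hard. The genuinely substantive work for this model lies in \emph{establishing} the deviation bound A1, where $\mathcal{X}$ is random (built from $S(t)$ and $I_f(t)$) and $E$ inherits VAR-induced temporal dependence, so standard i.i.d. concentration does not apply directly; but because A1 is taken as a hypothesis here, the theorem reduces to the routine convex-analysis argument sketched above.
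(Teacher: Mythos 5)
Your proposal is correct and follows essentially the same route as the paper's own proof: the basic inequality from optimality, H\"older's inequality with the deviation bound (A1) absorbed into the choice of $\lambda_n$, the triangle inequality collapsing the penalty terms to $2\lambda_n\|\Theta\|_1$, and A2 with the block-sparse structure of $\Theta$ giving the final $8M_B\lambda_n m_0$ bound. The only cosmetic difference is that the paper carries out the triangle-inequality step block by block over the active set $\mathcal{A}$ rather than globally, and it shares the same mild bookkeeping looseness you flagged regarding the unpenalized base block $\theta_1=B^{(1)}$.
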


\begin{proof}[Proof of Theorem 1]
By the definition of $\widehat{\Theta}$ in \eqref{eq:estimation_block}, the value of the function in \eqref{eq:estimation_block} is minimized at $\widehat{\Theta}$. Therefore, we have
\begin{equation} 
\frac{1}{2n}\left \| \mathcal{Y} -\mathcal{X}\widehat{\Theta} \right\|_2^2 + \lambda_{n} \left \|\widehat{\Theta} \right\|_1 \leq \frac{1}{2n} \left\| \mathcal{Y} -\mathcal{X}\Theta \right\|_2^2 + \lambda_{n} \|\Theta \|_1.
\label{eq:obj_1}
\end{equation}
Denoting $\mathcal{A}= \{t_1, t_2, \dots, t_{m_0}\}$ as the set of true change points, we have
{\small
\begin{align}
&\frac{1}{2n} \left\| \mathcal{X}\left(\widehat{\Theta} - \Theta \right) \right\|_2^2 \nonumber\\
\leq& \frac{1}{n} \left(\widehat{\Theta} - \Theta \right)^\prime \mathcal{X}^\prime E +
\lambda_{n} \left (  \left \|\Theta \right \|_1  - \left\|\widehat{\Theta} \right\|_1\right )
\nonumber\\
\leq &\frac{1}{n} \left(\widehat{\Theta} - \Theta \right)^\prime \mathcal{X}^\prime E +
\lambda_{n} \left ( \sum_{i=1}^{k_n}\|\theta_i\|_1 - \sum_{i=1}^{k_n}\|\widehat{\theta}_i \|_1  \right )
\nonumber\\
\leq& 2 \sum_{i=1}^{k_n} \left\Vert\widehat{\theta}_i - \theta_i \right\Vert_1 \left\Vert \frac{\mathcal{X}^\prime E}{2n} \right\Vert_{\infty} 
+
\lambda_{n} \sum_{i\in \mathcal{A}} \left (  \|\theta_i\|_1  - \left \|\widehat{\theta}_i \right\|_1 \right ) \nonumber\\
& \quad \quad \quad - \lambda_{n}  \sum_{i\notin \mathcal{A}}\|\widehat{\theta}_i \|_1 
\nonumber\\
\leq & \lambda_{n}  \sum_{i\in \mathcal{A}} \left \|\widehat{\theta}_i -\theta_i \right \|_1  +
\lambda_{n} \sum_{i\in \mathcal{A}} \left (  \left\|\theta_i\right\|_1  - \left \|\widehat{\theta}_i \right\|_1 \right ) 
\nonumber\\
\leq  &
2\lambda_{n} \sum_{i\in \mathcal{A}}  \|\theta_i\|_1  
\nonumber\\
\leq & 
2\lambda_{n} m_0 \max_{1 \leq j\leq m_0 } \left\|B^{(j+1)} - B ^{(j)}\right\|_1  
\nonumber\\
\leq  &
8 M_B \lambda_{n} m_0,
\nonumber
 \label{eq:ineq_1}
\end{align}}
with high probability approaching to deviation bound in \eqref{eq:db}.
This completes the proof.
\end{proof}

Theoretical properties of lasso have been have been studied by several authors \cite{ bickel2009simultaneous, van2011adaptive, loh2012, negahban2012unified}.  In controlling the statistical error, a suitable deviation conditions on $\mathcal{X}^\prime E/{2n}$ is needed. 
 The deviation bound conditions (e.g. the assumption A1) are known to hold with high probability under several mild conditions.  Under the condition that the error term $E \sim \mathcal{N}(0, \sigma^2 I_{2n})$, the deviation bound condition holds with high probability by Lemme 3.1 in \cite{van2011adaptive}.  Given that the $p$ (the number of time series components) is small and fixed, we have $n \gg \log p$, therefore, in the case where the $\mathcal{X}$ is a zero-mean sub-Gaussian matrix with parameters ($\Sigma_x$, $\sigma_x^2$), and the error term $E$ is a zero-mean sub-Gaussian matrix with parameters ($\Sigma_e$, $\sigma_e^2$), the deviation bound condition holds with high probability by Lemme 14 in \cite{loh2012}.

\textit{Detection Accuracy}. When the block size is large enough, such that $\log n / b_n$ remains small, if the selected change point $ \widehat{t}_j $ is close to a true change point , the estimated $ \widehat{{\Theta}}_j $ will be large (asymptotically similar to the true jump size in the model parameters); if the selected change point $ \widehat{t}_j $ is far away from all the true change points, the estimated $ \widehat{{\Theta}}_j $ will be quite small (converges to zero as sample size tend to infinity).  
Therefore, after the hard-thresholding, the candidate change points that are located far from any true change points will be eliminated. In other words, for any selected change point $ \widetilde{t}_j  \in \widetilde{\mathcal{A}}_n$, there would exist a true change point $t_{j'} \in \mathcal{A}_n$ close by, with the distance being at most $ b_n$. Thus, the number of clusters (by radius $ b_n$ ) seems to be a reasonable estimate for the true number of break points in the model.

On the other hand,  since the set of true change points $\mathcal{A}_n$ has cardinality less than or equal to the cardinality of the set of selected change points $\widetilde{\mathcal{A}}_n$, i.e., $m_0 \leq \widetilde{m}$, there may be more than one selected change points remaining in the set $\widetilde{\mathcal{A}}_n$ in $b_n$-neighborhoods of each true change point. For a set $A$, define cluster $(A, x)$ to be the minimal partition of $A$, where the diameter for each subset is at most $x$. Denote the subset in $ \mbox{cluster} \left( \widetilde{\mathcal{A}}_n, b_n \right)$ by $ \mbox{cluster} \left( \widetilde{\mathcal{A}}_n, b_n \right) = \left\lbrace C_1, C_2,  \ldots, C_{m_0} \right\rbrace $, where each subset $C_i$ has a diameter at most $ b_n $, i.e., $\max_{a,b \in C_i} \vert a - b\vert \leq b_n$.
Then with high probability converging to one, the number of subsets in $ \mbox{cluster} \left( \widetilde{\mathcal{A}}_n, b_n \right) $ is exactly $ m_0 $. All candidate change points in $\widetilde{A}_n$ are within a $b_n$-neighborhood of at least one true change point and therefore, with high probability converging to one, there is a true change point $t_i$ within the interval $(C_i-b_n , C_i+b_n)$. 
The distance between the estimated change point and the true change point will be less then $2b_n$. Therefore, by selecting $b_n = c \log n$ for a large enough constant $c>0$, one can conclude that the proposed detection algorithm locates the true break points with an error bounded by the order $ \mathcal{O} \left( \log n \right)$.

}





\ifCLASSOPTIONcaptionsoff
  \newpage
\fi



\bibliography{bibtex/bib/ref.bib}{}

\begin{thebibliography}{10}
\providecommand{\url}[1]{#1}
\csname url@samestyle\endcsname
\providecommand{\newblock}{\relax}
\providecommand{\bibinfo}[2]{#2}
\providecommand{\BIBentrySTDinterwordspacing}{\spaceskip=0pt\relax}
\providecommand{\BIBentryALTinterwordstretchfactor}{4}
\providecommand{\BIBentryALTinterwordspacing}{\spaceskip=\fontdimen2\font plus
\BIBentryALTinterwordstretchfactor\fontdimen3\font minus
  \fontdimen4\font\relax}
\providecommand{\BIBforeignlanguage}[2]{{%
\expandafter\ifx\csname l@#1\endcsname\relax
\typeout{** WARNING: IEEEtran.bst: No hyphenation pattern has been}%
\typeout{** loaded for the language `#1'. Using the pattern for}%
\typeout{** the default language instead.}%
\else
\language=\csname l@#1\endcsname
\fi
#2}}
\providecommand{\BIBdecl}{\relax}
\BIBdecl

\bibitem{anderson2020will}
R.~M. Anderson, H.~Heesterbeek, D.~Klinkenberg \emph{et~al.}, ``How will
  country-based mitigation measures influence the course of the covid-19
  epidemic?'' \emph{The Lancet}, vol. 395, no. 10228, pp. 931--934, 2020.

\bibitem{cdcmask}
{Centers for Disease Control and Prevention}, ``Considerations for wearing
  masks,'' 2020,
  \url{https://www.cdc.gov/coronavirus/2019-ncov/prevent-getting-sick/cloth-face-cover-guidance.html}.

\bibitem{leung2020respiratory}
N.~H. Leung, D.~K. Chu, E.~Y. Shiu \emph{et~al.}, ``Respiratory virus shedding
  in exhaled breath and efficacy of face masks,'' \emph{Nature medicine},
  vol.~26, no.~5, pp. 676--680, 2020.

\bibitem{horby2021nervtag}
P.~Horby, C.~Huntley, N.~Davies, J.~Edmunds, N.~Ferguson, G.~Medley, and
  C.~Semple, ``Nervtag note on b. 1.1. 7 severity,'' \emph{NERVTAG}, vol.~1,
  no.~7, 2021.

\bibitem{flaxman2020estimating}
S.~Flaxman \emph{et~al.}, ``Estimating the effects of non-pharmaceutical
  interventions on covid-19 in europe,'' \emph{Nature}, vol. 584, no. 7820, pp.
  257--261, 2020.

\bibitem{meyerowitz2020systematic}
G.~Meyerowitz-Katz and L.~Merone, ``A systematic review and meta-analysis of
  published research data on covid-19 infection-fatality rates,''
  \emph{International Journal of Infectious Diseases}, 2020.

\bibitem{williamson2020factors}
E.~J. Williamson \emph{et~al.}, ``Factors associated with covid-19-related
  death using opensafely,'' \emph{Nature}, vol. 584, no. 7821, pp. 430--436,
  2020.

\bibitem{qian2021recent}
K.~Qian, B.~W. Schuller, and Y.~Yamamoto, ``Recent advances in computer
  audition for diagnosing covid-19: An overview,'' in \emph{2021 IEEE 3rd
  Global Conference on Life Sciences and Technologies (LifeTech)}.\hskip 1em
  plus 0.5em minus 0.4em\relax IEEE, 2021, pp. 181--182.

\bibitem{yaron2021point}
D.~Yaron, D.~Keidar, E.~Goldstein, Y.~Shachar, A.~Blass, O.~Frank, N.~Schipper,
  N.~Shabshin, A.~Grubstein, D.~Suhami \emph{et~al.}, ``Point of care image
  analysis for covid-19,'' in \emph{ICASSP 2021-2021 IEEE International
  Conference on Acoustics, Speech and Signal Processing (ICASSP)}.\hskip 1em
  plus 0.5em minus 0.4em\relax IEEE, 2021, pp. 8153--8157.

\bibitem{forster2020phylogenetic}
P.~Forster, L.~Forster, C.~Renfrew, and M.~Forster, ``Phylogenetic network
  analysis of sars-cov-2 genomes,'' \emph{Proceedings of the National Academy
  of Sciences}, vol. 117, no.~17, pp. 9241--9243, 2020.

\bibitem{anderson2020consideration}
E.~L. Anderson, P.~Turnham, J.~R. Griffin, and C.~C. Clarke, ``Consideration of
  the aerosol transmission for covid-19 and public health,'' \emph{Risk
  Analysis}, vol.~40, no.~5, pp. 902--907, 2020.

\bibitem{honein2021data}
M.~A. Honein, L.~C. Barrios, and J.~T. Brooks, ``Data and policy to guide
  opening schools safely to limit the spread of sars-cov-2 infection,''
  \emph{Jama}, vol. 325, no.~9, pp. 823--824, 2021.

\bibitem{bazant2021guideline}
M.~Z. Bazant and J.~W. Bush, ``A guideline to limit indoor airborne
  transmission of covid-19,'' \emph{Proceedings of the National Academy of
  Sciences}, vol. 118, no.~17, 2021.

\bibitem{chen2020time}
Y.-C. Chen, P.-E. Lu, C.-S. Chang, and T.-H. Liu, ``A time-dependent sir model
  for covid-19 with undetectable infected persons,'' \emph{IEEE Transactions on
  Network Science and Engineering}, vol.~7, no.~4, pp. 3279--3294, 2020.

\bibitem{anastassopoulou2020data}
C.~Anastassopoulou, L.~Russo, A.~Tsakris \emph{et~al.}, ``Data-based analysis,
  modelling and forecasting of the covid-19 outbreak,'' \emph{PloS one},
  vol.~15, no.~3, p. e0230405, 2020.

\bibitem{wang2020epidemiological}
L.~Wang, Y.~Zhou, J.~He \emph{et~al.}, ``An epidemiological forecast model and
  software assessing interventions on the covid-19 epidemic in china,''
  \emph{Journal of Data Science}, vol.~18, no.~3, pp. 409--432, 2020.

\bibitem{wangping2020extended}
J.~Wangping, H.~Ke, S.~Yang, C.~Wenzhe, W.~Shengshu, Y.~Shanshan, W.~Jianwei,
  K.~Fuyin, T.~Penggang, L.~Jing \emph{et~al.}, ``Extended sir prediction of
  the epidemics trend of covid-19 in italy and compared with hunan, china,''
  \emph{Frontiers in medicine}, vol.~7, p. 169, 2020.

\bibitem{dehning2020inferring}
J.~Dehning, J.~Zierenberg, F.~P. Spitzner \emph{et~al.}, ``Inferring change
  points in the spread of covid-19 reveals the effectiveness of
  interventions,'' \emph{Science}, 2020.

\bibitem{jiang2020time}
F.~Jiang, Z.~Zhao, and X.~Shao, ``Time series analysis of covid-19 infection
  curve: A change-point perspective,'' \emph{Journal of econometrics}, 2020.

\bibitem{baranowski2016narrowest}
R.~Baranowski, Y.~Chen, and P.~Fryzlewicz, ``Narrowest-over-threshold detection
  of multiple change points and change-point-like features,'' \emph{Journal of
  the Royal Statistical Society: Series B (Statistical Methodology)}, vol.~81,
  no.~3, pp. 649--672.

\bibitem{voko2020effect}
Z.~Vok{\'o} and J.~G. Pitter, ``The effect of social distance measures on
  covid-19 epidemics in europe: an interrupted time series analysis,''
  \emph{GeroScience}, pp. 1--8, 2020.

\bibitem{Wagner2020}
\BIBentryALTinterwordspacing
A.~B. Wagner, E.~L. Hill, S.~E. Ryan \emph{et~al.}, ``Social distancing merely
  stabilized covid-19 in the united states,'' \emph{Stat}, vol.~9, no.~1, p.
  e302, 2020, e302 sta4.302. [Online]. Available:
  \url{https://onlinelibrary.wiley.com/doi/abs/10.1002/sta4.302}
\BIBentrySTDinterwordspacing

\bibitem{fearnhead2019detecting}
P.~Fearnhead, R.~Maidstone, and A.~Letchford, ``Detecting changes in slope with
  an l 0 penalty,'' \emph{Journal of Computational and Graphical Statistics},
  vol.~28, no.~2, pp. 265--275, 2019.

\bibitem{bertozzi2020challenges}
A.~L. Bertozzi, E.~Franco, G.~Mohler, M.~B. Short, and D.~Sledge, ``The
  challenges of modeling and forecasting the spread of covid-19,''
  \emph{Proceedings of the National Academy of Sciences}, vol. 117, no.~29, pp.
  16\,732--16\,738, 2020.

\bibitem{wang2020spatiotemporal}
L.~Wang, G.~Wang, L.~Gao \emph{et~al.}, ``Spatiotemporal dynamics, nowcasting
  and forecasting of covid-19 in the united states,'' \emph{arXiv preprint
  arXiv:2004.14103}, 2020.

\bibitem{srivastava2020learning}
A.~Srivastava and V.~K. Prasanna, ``Learning to forecast and forecasting to
  learn from the covid-19 pandemic,'' \emph{arXiv preprint arXiv:2004.11372},
  2020.

\bibitem{qi2020covid}
H.~Qi, S.~Xiao, R.~Shi \emph{et~al.}, ``Covid-19 transmission in mainland china
  is associated with temperature and humidity: A time-series analysis,''
  \emph{Science of the Total Environment}, p. 138778, 2020.

\bibitem{moftakhar2020exponentially}
L.~Moftakhar, S.~Mozhgan, and M.~S. Safe, ``Exponentially increasing trend of
  infected patients with covid-19 in iran: a comparison of neural network and
  arima forecasting models,'' \emph{Iranian Journal of Public Health}, vol.~49,
  no. Supple 1, pp. 92--100, 2020.

\bibitem{chimmula2020time}
V.~K.~R. Chimmula and L.~Zhang, ``Time series forecasting of covid-19
  transmission in canada using lstm networks,'' \emph{Chaos, Solitons \&
  Fractals}, p. 109864, 2020.

\bibitem{hu2020artificial}
Z.~Hu, Q.~Ge, L.~Jin, and M.~Xiong, ``Artificial intelligence forecasting of
  covid-19 in china,'' \emph{arXiv preprint arXiv:2002.07112}, 2020.

\bibitem{rahimi2021review}
I.~Rahimi, F.~Chen, and A.~H. Gandomi, ``A review on covid-19 forecasting
  models,'' \emph{Neural Computing and Applications}, pp. 1--11, 2021.

\bibitem{tibshirani2005sparsity}
R.~Tibshirani \emph{et~al.}, ``Sparsity and smoothness via the fused lasso,''
  \emph{Journal of the Royal Statistical Society: Series B (Statistical
  Methodology)}, vol.~67, no.~1, pp. 91--108, 2005.

\bibitem{kermack1927contribution}
W.~O. Kermack and A.~G. McKendrick, ``A contribution to the mathematical theory
  of epidemics,'' \emph{Proceedings of the royal society of london. Series A,
  Containing papers of a mathematical and physical character}, vol. 115, no.
  772, pp. 700--721, 1927.

\bibitem{wu2020substantial}
S.~L. Wu, A.~N. Mertens, and Y.~S. Crider, ``Substantial underestimation of
  sars-cov-2 infection in the united states,'' \emph{Nature communications},
  vol.~11, no.~1, pp. 1--10, 2020.

\bibitem{lau2020evaluating}
H.~Lau \emph{et~al.}, ``Evaluating the massive underreporting and undertesting
  of covid-19 cases in multiple global epicenters,'' \emph{Pulmonology}, 2020.

\bibitem{lutkepohl2005new}
H.~L{\"u}tkepohl, \emph{New introduction to multiple time series
  analysis}.\hskip 1em plus 0.5em minus 0.4em\relax Springer Science \&
  Business Media, 2005.

\bibitem{giordano2020modelling}
G.~Giordano \emph{et~al.}, ``Modelling the covid-19 epidemic and implementation
  of population-wide interventions in {I}taly,'' \emph{Nature Medicine}, pp.
  1--6, 2020.

\bibitem{liu2012infectious}
X.~Liu and P.~Stechlinski, ``Infectious disease models with time-varying
  parameters and general nonlinear incidence rate,'' \emph{Applied Mathematical
  Modelling}, vol.~36, no.~5, pp. 1974--1994, 2012.

\bibitem{bartlett1949some}
M.~Bartlett, ``Some evolutionary stochastic processes,'' \emph{Journal of the
  Royal Statistical Society. Series B (Methodological)}, vol.~11, no.~2, pp.
  211--229, 1949.

\bibitem{bailey1953total}
N.~T. Bailey, ``The total size of a general stochastic epidemic,''
  \emph{Biometrika}, pp. 177--185, 1953.

\bibitem{greenwood2009stochastic}
P.~E. Greenwood and L.~F. Gordillo, ``Stochastic epidemic modeling,'' in
  \emph{Mathematical and statistical estimation approaches in
  epidemiology}.\hskip 1em plus 0.5em minus 0.4em\relax Springer, 2009, pp.
  31--52.

\bibitem{NY:2020}
{The New York Times}, 2020, coronavirus (Covid-19) data in the United States,
  \url{https://github.com/nytimes/covid-19-data}.

\bibitem{wang2020comparing}
G.~Wang, Z.~Gu, X.~Li \emph{et~al.}, ``Comparing and integrating us covid-19
  daily data from multiple sources: A county-level dataset with local
  characteristics,'' \emph{arXiv:2006.01333}, 2020.

\bibitem{population2020}
{U.S. Census Bureau}, 2019, population and Housing Unit Estimates Datasets,
  \url{https://www.census.gov/programs-surveys/popest/data/data-sets.html}.

\bibitem{distance2010}
{National Bureau of Economic Research}, 2010, population and Housing Unit
  Estimates Datasets, \url{http://data.nber.org/distance/2010/sf1/}.

\bibitem{lumley2020antibody}
S.~F. Lumley \emph{et~al.}, ``Antibody status and incidence of sars-cov-2
  infection in health care workers,'' \emph{New England Journal of Medicine},
  2020.

\bibitem{cressie2015statistics}
N.~Cressie, \emph{Statistics for spatial data}.\hskip 1em plus 0.5em minus
  0.4em\relax John Wiley \& Sons, 2015.

\bibitem{NYtimesreopen}
S.~Mervosh, J.~C. Lee, L.~Gamio \emph{et~al.}, ``See how all 50 states are
  reopening,'' 2020,
  \url{https://www.nytimes.com/interactive/2020/us/states-reopen-map-coronavirus.html}.

\bibitem{li2020substantial}
R.~Li, S.~Pei, B.~Chen, and other, ``Substantial undocumented infection
  facilitates the rapid dissemination of novel coronavirus (sars-cov-2),''
  \emph{Science}, vol. 368, no. 6490, pp. 489--493, 2020.

\bibitem{hartigan1979algorithm}
J.~A. Hartigan and M.~A. Wong, ``Algorithm as 136: A k-means clustering
  algorithm,'' \emph{Journal of the royal statistical society. series c
  (applied statistics)}, vol.~28, no.~1, pp. 100--108, 1979.

\bibitem{schwarz1978estimating}
G.~Schwarz \emph{et~al.}, ``Estimating the dimension of a model,'' \emph{The
  annals of statistics}, vol.~6, no.~2, pp. 461--464, 1978.

\bibitem{tibshirani2001estimating}
R.~Tibshirani, G.~Walther, and T.~Hastie, ``Estimating the number of clusters
  in a data set via the gap statistic,'' \emph{Journal of the Royal Statistical
  Society: Series B (Statistical Methodology)}, vol.~63, no.~2, pp. 411--423,
  2001.

\bibitem{harchaoui2010multiple}
Z.~Harchaoui and C.~L{\'e}vy-Leduc, ``Multiple change-point estimation with a
  total variation penalty,'' \emph{Journal of the American Statistical
  Association}, vol. 105, no. 492, pp. 1480--1493, 2010.

\bibitem{safikhani2020joint}
A.~Safikhani and A.~Shojaie, ``Joint structural break detection and parameter
  estimation in high-dimensional non-stationary var models,'' \emph{Journal of
  the American Statistical Association}, no. just-accepted, pp. 1--26, 2020.

\bibitem{bickel2009simultaneous}
P.~J. Bickel, Y.~Ritov, and A.~B. Tsybakov, ``Simultaneous analysis of lasso
  and dantzig selector,'' \emph{The Annals of statistics}, vol.~37, no.~4, pp.
  1705--1732, 2009.

\bibitem{van2011adaptive}
S.~van~de Geer, P.~B{\"u}hlmann, and S.~Zhou, ``The adaptive and the
  thresholded lasso for potentially misspecified models (and a lower bound for
  the lasso),'' \emph{Electronic Journal of Statistics}, vol.~5, pp. 688--749,
  2011.

\bibitem{loh2012}
P.-L. Loh and M.~J. Wainwright, ``High-dimensional regression with noisy and
  missing data: Provable guarantees with nonconvexity,'' \emph{Ann. Statist.},
  vol.~40, no.~3, pp. 1637--1664, 06 2012.

\bibitem{negahban2012unified}
S.~N. Negahban, P.~Ravikumar, M.~J. Wainwright, and B.~Yu, ``A unified
  framework for high-dimensional analysis of $ m $-estimators with decomposable
  regularizers,'' \emph{Statistical science}, vol.~27, no.~4, pp. 538--557,
  2012.

\bibitem{hastie2009elements}
T.~Hastie, R.~Tibshirani, and J.~Friedman, \emph{The elements of statistical
  learning: data mining, inference, and prediction}.\hskip 1em plus 0.5em minus
  0.4em\relax Springer Science \& Business Media, 2009.

\bibitem{seber2012linear}
G.~A. Seber and A.~J. Lee, \emph{Linear regression analysis}.\hskip 1em plus
  0.5em minus 0.4em\relax John Wiley \& Sons, 2012, vol. 329.

\end{thebibliography}
\bibliographystyle{IEEEtran}
\newpage
\title{Supplementary Material for Hybrid Modeling of Regional COVID-19 Transmission Dynamics in the U.S.}
\author{Yue Bai, 
        Abolfazl Safikhani,
        and~George Michailidis
}
\maketitle

\setcounter{section}{0}
In this file, summary of main steps of the proposed detection algorithm are presented in Section~\ref{sec:algo_more} while additional simulation settings are reported in Section~\ref{sec:sim}.
In Section~\ref{sec:states}, additional COVID-19 data analysis results for U.S. states are reported while additional results for U.S. counties are reported in Section~\ref{sec:counties}. Finally, additional details of VAR(p) model results are summarized in Section~\ref{sec:var-results}.

\section{Algorithm Details}\label{sec:algo_more}
The key steps in our proposed change point detection strategy are outlined in Algorithm~\ref{alg:piecewise constant}.

\begin{algorithm}[!ht]
    \DontPrintSemicolon
    \KwInput{Time series data $\{I_u(t), R(t)\}, t = 1, 2, \cdots, T$; regularization parameter $\lambda_n$; block size $b_n$ (value of $n$ as a function of $T$ specified in the sequel).
    }
    
    \textbf{Block Fused Lasso}: Partition the time points into blocks of size $b_n$ and fix the coefficient parameters within each block. Then, obtain candidate change points $\widehat{A}_n$ together with estimates of parameters  for each  block $\widehat{\Theta}$ by solving \eqref{eq:estimation_block}. 
    
    \textbf{Hard-thresholding}: For $i= 2, \cdots, k_n$, if $ \|\widehat{\theta}_i \|_2^2 \leq \eta$, declare that there are no change-points within the $i$-th block; otherwise, select the first time point $r_{i-1}$ in the $i$-th block as a candidate change point. Denote the set of candidate change points by $\widetilde{A}_n$.
    
    \textbf{Block clustering}:  Partition the $\widetilde{m}$ candidate change points $\widetilde{\mathcal{A}}_T = \left\{\widetilde{t}_1 ,\cdots \widetilde{t}_{\widetilde{m} } \right\}$ into $m_0 (\leq \widetilde{m})$ subsets $C = \{C_1, C_2, \cdots , C_{m_0}\}$, so as to minimize the within-cluster distance. 
    
    \textbf{Exhaustive Search}:  Set $l_i = (C_i-b_n)$ and $u_i = (C_i+b_n)$. 
        Apply an exhaustive search method for each time point $s$ in the search domain $(l_i, u_i)$ and estimate the change point $\widetilde{t}_i^f$ by minimizing the mean square error (MSE) of observations within the interval $[C_i - b_n, C_i + b_n )$ .

    \KwOutput{The estimated parameter matrix $\widehat{\theta}_j$ in each segment, $j = 1, 2, \ldots, \widehat{m} $ using the block fused lasso. The final estimated change points $ \widetilde{\mathcal{A}}_n^f = \left\lbrace \widetilde{t}_1^f, \ldots, \widetilde{t}_{m_0}^f  \right\rbrace $.  }
    \caption{\textbf{ Change Point Detection for the Piecewise Stationary SIR model}}
    \label{alg:piecewise constant}
\end{algorithm}

\section{Additional Simulation Results}\label{sec:sim}

We consider {five} additional simulation scenarios.
In the main paper, we only report the case where the  block size $b_n = 8$. {In this section,  we consider three different block size settings in the first three scenarios, where the block sizes are selected to be $b_n = 4, 8, 12$. In the last two  scenarios, the block sizes are selected to be $b_n = 8$.}

\begin{figure*}[ht!]
     \centering
     \begin{subfigure}[b]{0.32\textwidth}
         \centering
         \includegraphics[width=\textwidth]{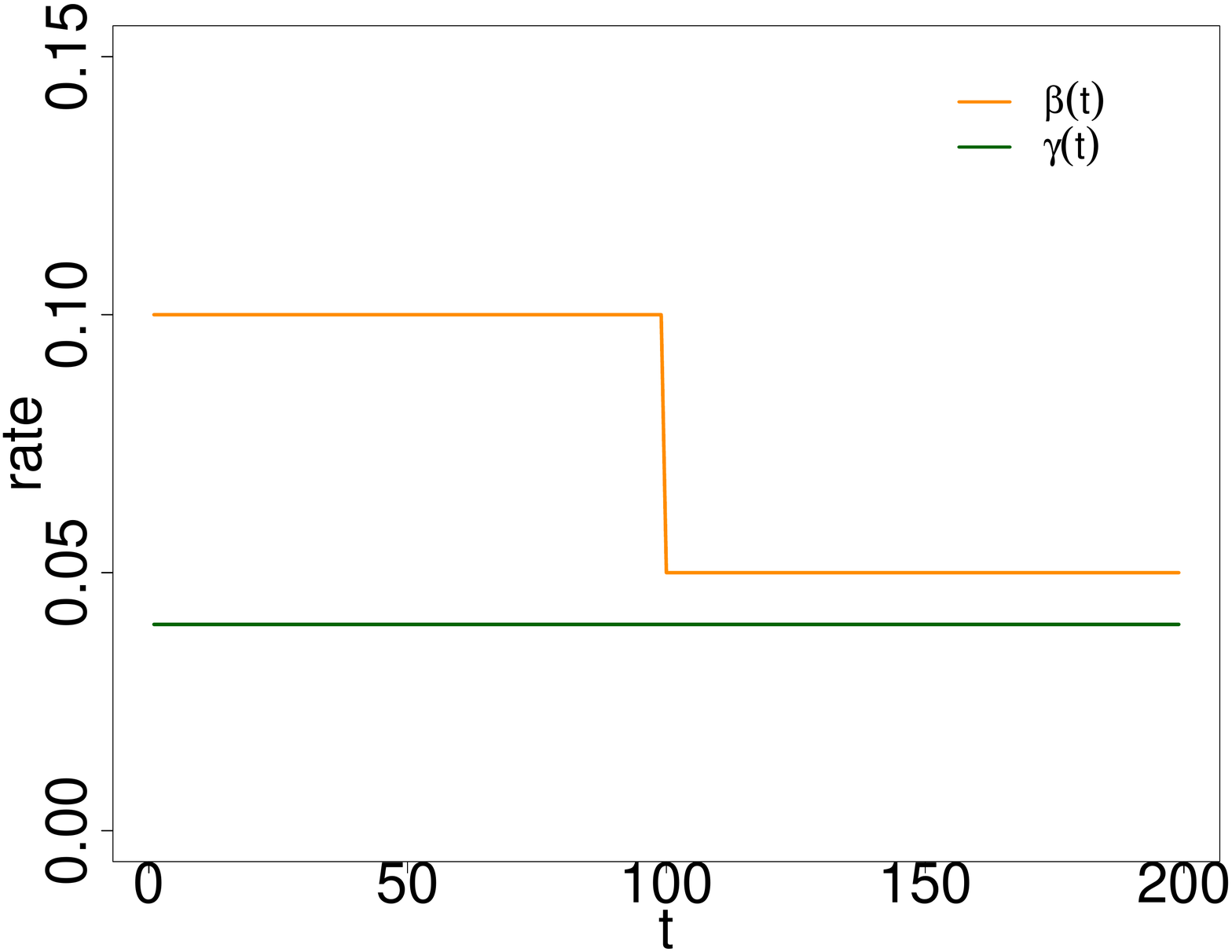}
         \subcaption{Scenario A, C, E (SIR  model rate)}
     \end{subfigure}
      \begin{subfigure}[b]{0.32\textwidth}
         \centering
         \includegraphics[width=\textwidth]{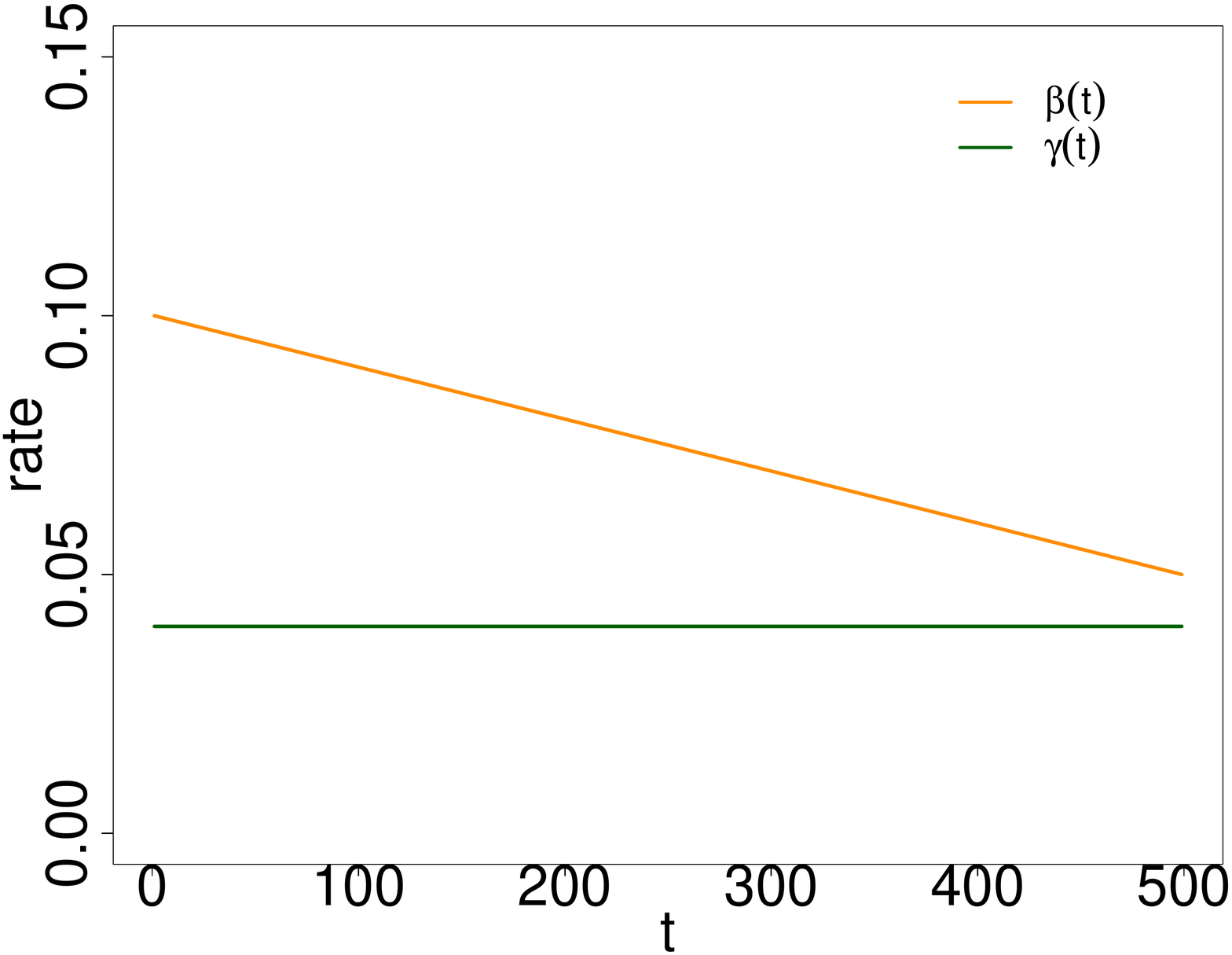}
         \subcaption{Scenario A, C, E (spatial SIR rate)}
     \end{subfigure}

    \begin{subfigure}[b]{0.32\textwidth}
         \centering
         \includegraphics[width=\textwidth]{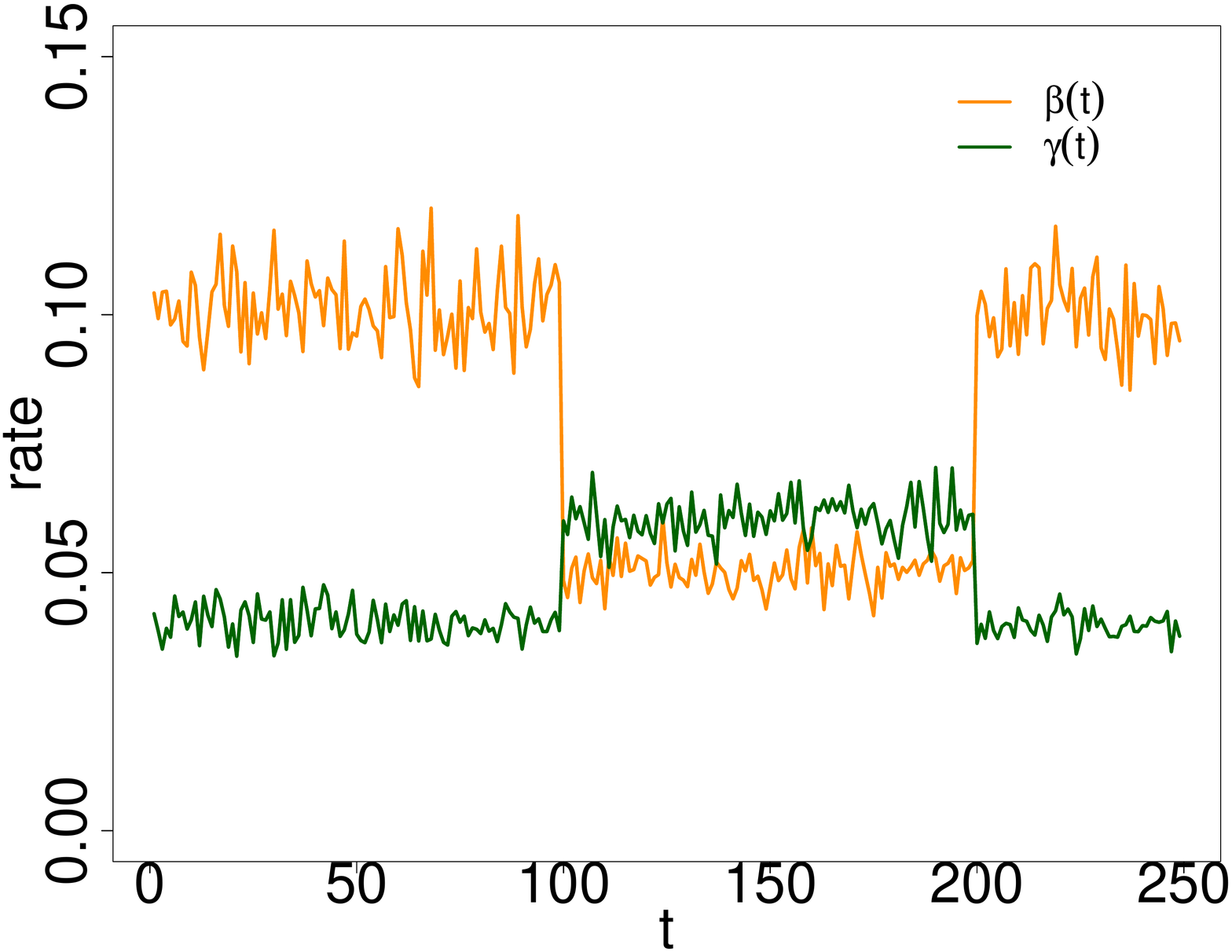}
         \subcaption{Scenario B, F (SIR  model rate)}
         \label{fig:sim5_rate}
     \end{subfigure}
     \begin{subfigure}[b]{0.32\textwidth}
         \centering
         \includegraphics[width=\textwidth]{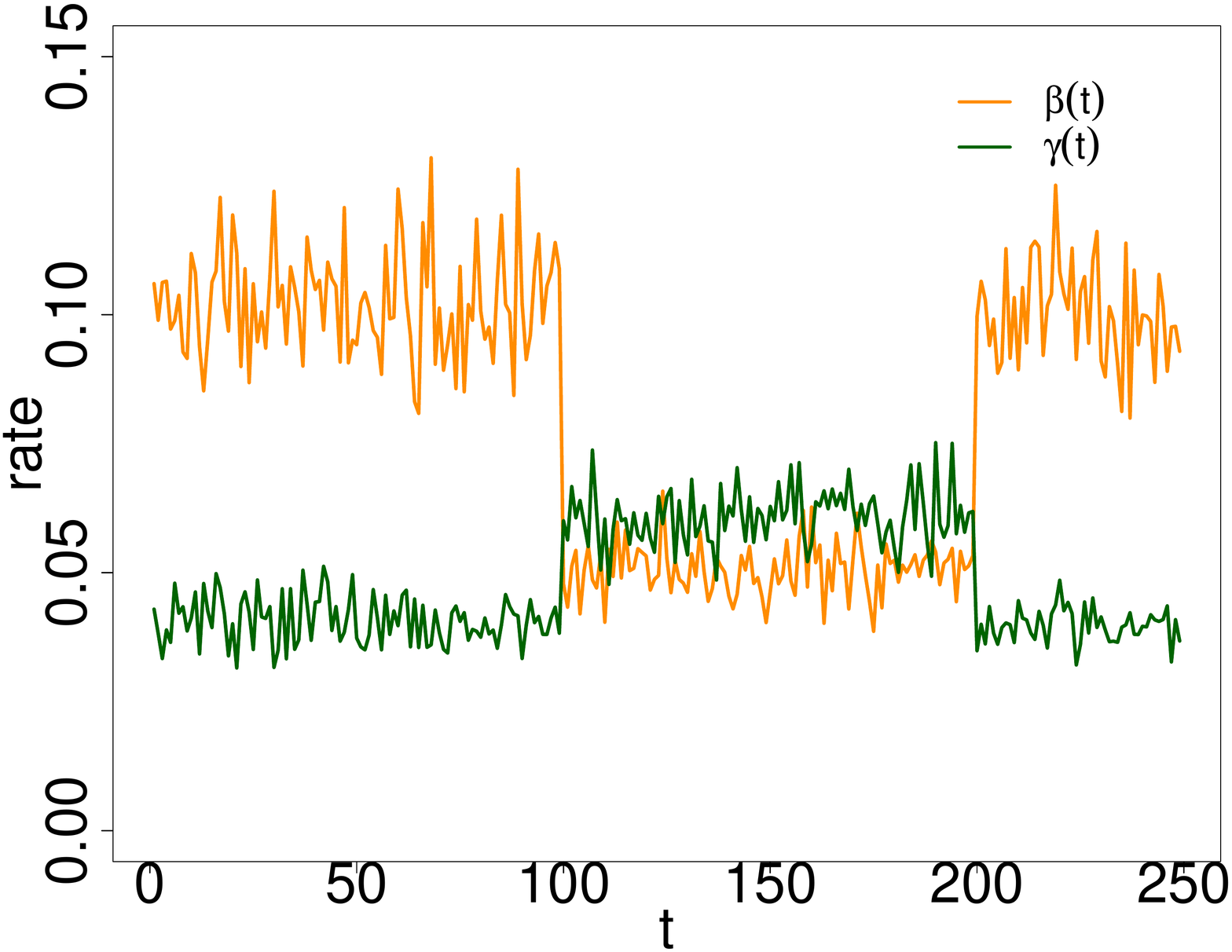}
         \subcaption{Scenario D (SIR model rate)}
         \label{fig:sim1}
     \end{subfigure}
     
     \begin{subfigure}[b]{0.32\textwidth}
         \centering
         \includegraphics[width=\textwidth]{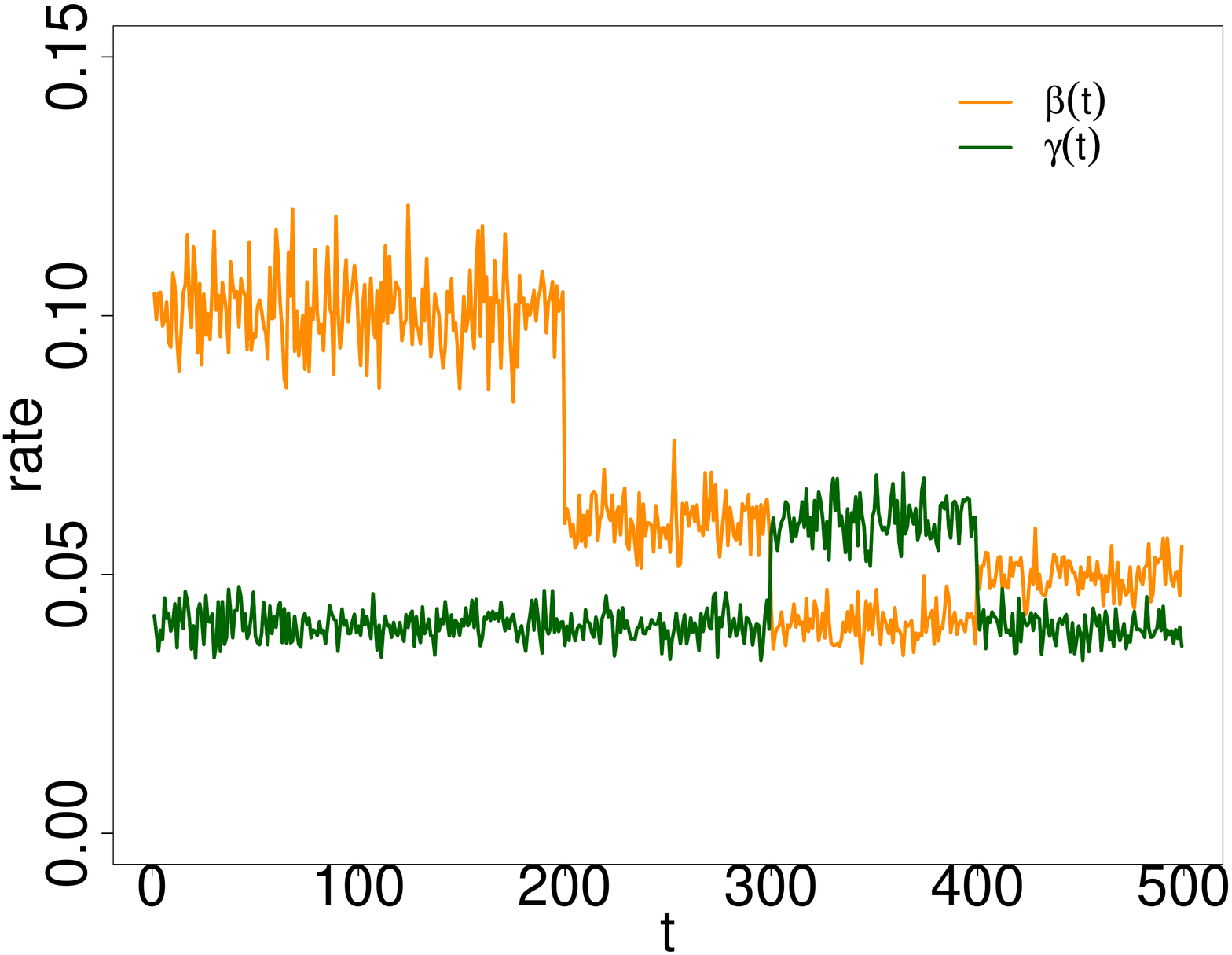}
         \subcaption{Scenario G (SIR  model rate)}
     \end{subfigure}
     \begin{subfigure}[b]{0.32\textwidth}
         \centering
         \includegraphics[width=\textwidth]{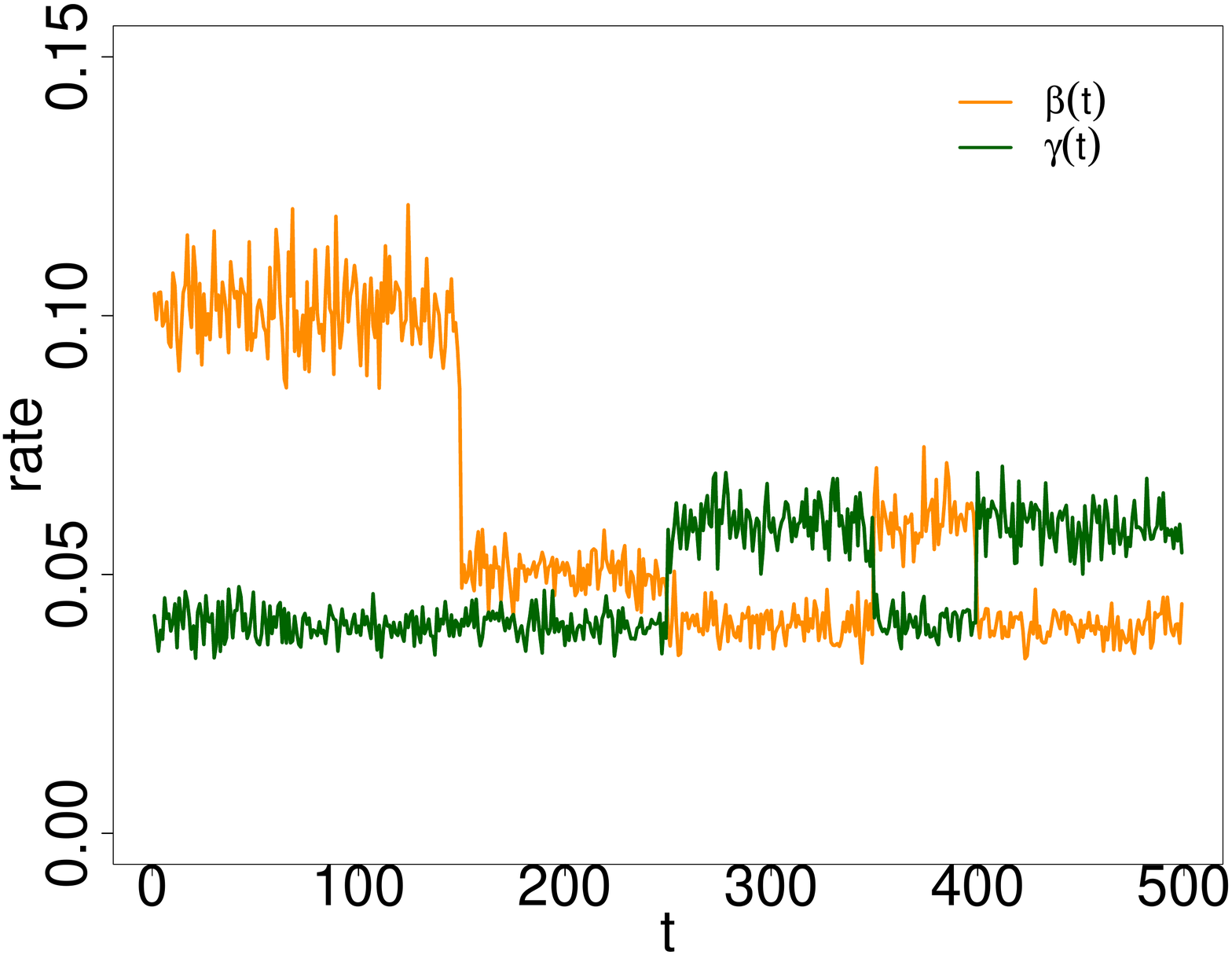}
         \subcaption{Scenario H (SIR  model rate)}
     \end{subfigure}
     
     \begin{subfigure}[b]{0.32\textwidth}
         \centering
         \includegraphics[width=\textwidth]{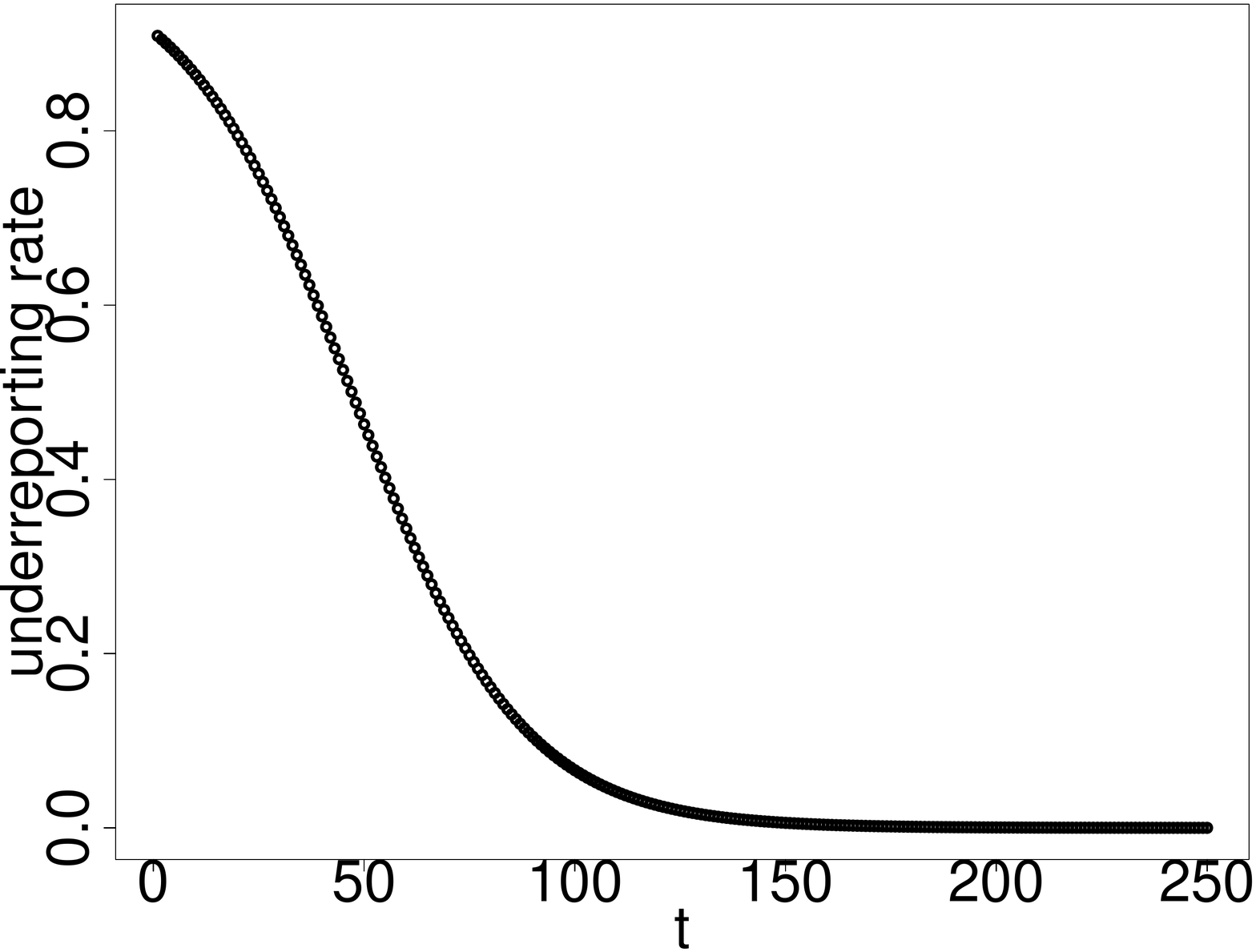}
         \subcaption{Scenario B (under-reporting rate)}
         \label{fig:sim7_1}
     \end{subfigure}
     \begin{subfigure}[b]{0.32\textwidth}
         \centering
         \includegraphics[width=\textwidth]{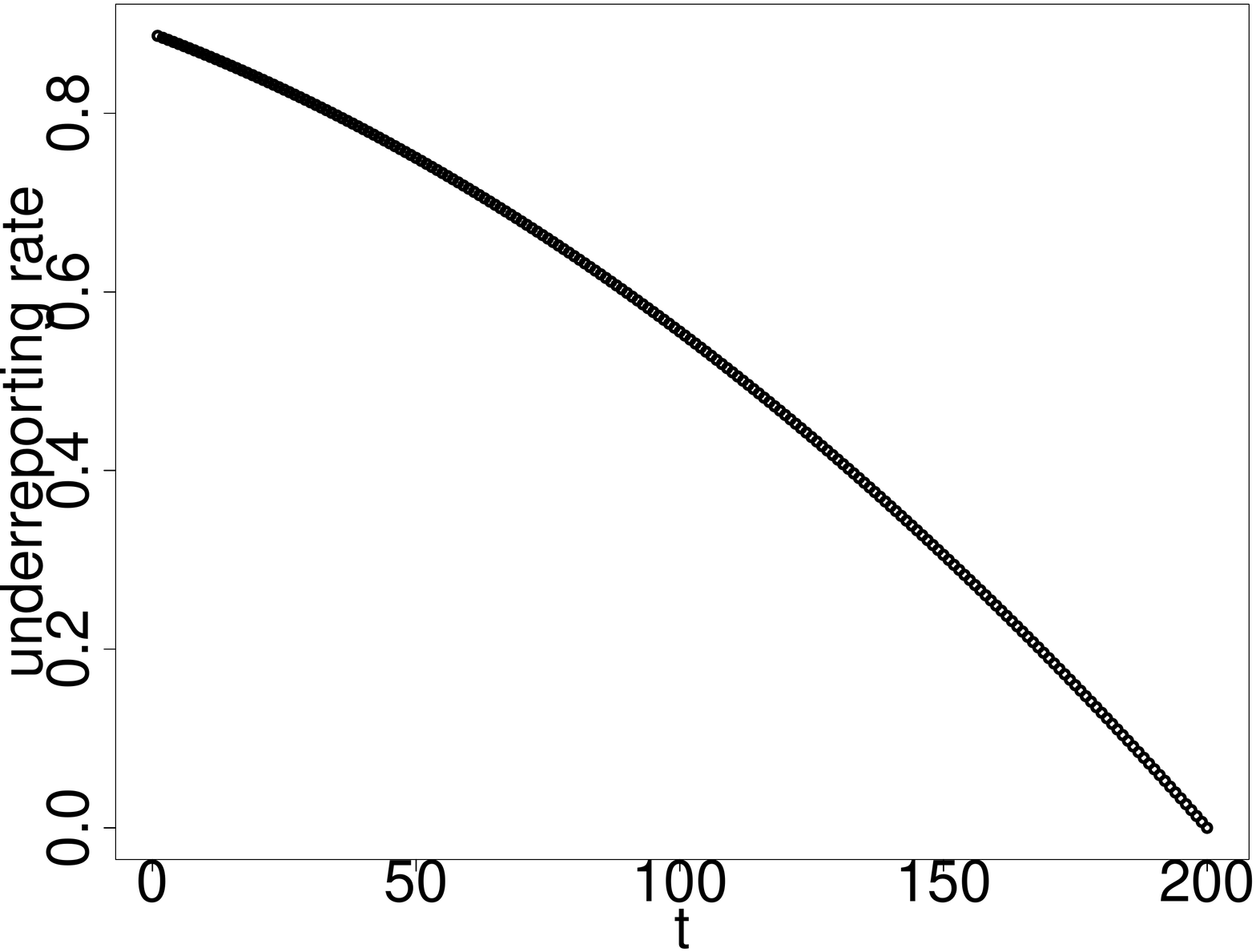}
         \subcaption{Scenario C (under-reporting rate)}
         \label{fig:sim8_1}
     \end{subfigure}
     \begin{subfigure}[b]{0.32\textwidth}
         \centering
         \includegraphics[width=\textwidth]{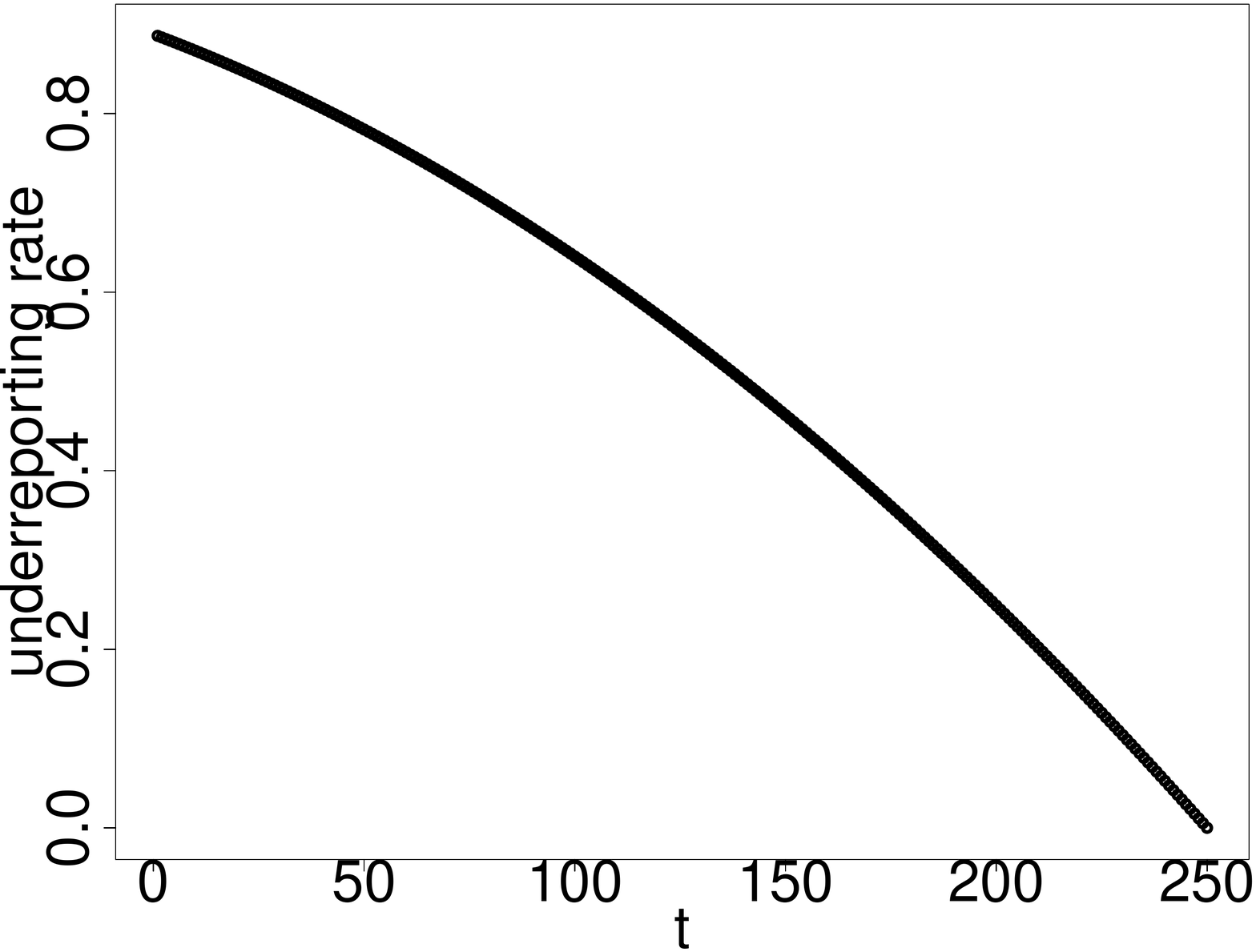}
         \subcaption{Scenario F (under-reporting rate)}
         \label{fig:sim6_1}
     \end{subfigure}
        \caption{True transmission rate, recovery rate and under-reporting rate in Scenario A-H.
        The top left plot provides the true transmission rate $\beta(t)$ and recovery rate $\gamma(t)$ in the SIR model in Scenario A, C, E; the top right plot provides the true transmission rate $\beta_s(t)$ and recovery rate $\gamma_s(t)$ for generating the spatial component in Scenario A, C, E. 
        the middle four plots provide the true transmission rate $\beta(t)$ and recovery rate $\gamma(t)$ in the SIR model in Scenario B, D, F, G, H; 
        The rest three plots are under-reporting rates in Scenario B, C, F, respectively.
        }
        \label{fig:sim_rate}
\end{figure*}

\noindent
\textit{Simulation Scenario D (Model 1):} the transmission and recovery rates are chosen to be piecewise constant. In this scenario, 
we set the number of time points $T=250$, $m_0 =2$, the change point $t_1 =100 $ and $t_2 = 200$. 
We choose $\beta^{(1)} = 0.10$,  $\beta^{(2)} = 0.05$,  $\beta^{(3)} = 0.10$,$\gamma^{(1)} = 0.04$, $\gamma^{(2)} = 0.06$, $\gamma^{(3)} = 0.04$.
Results are based on data generated from the SIR model in \eqref{eq:model_var} in main paper with 
$\beta(t) \sim  \text{Lognormal}(\sum_{j=1}^{m_0+1}\beta^{(j)}\mathbbm{1}_{\{t_{j-1}\leq t <t_j \}} , 0.01)$ and $\gamma(t) \sim  \text{Lognormal}(\sum_{j=1}^{m_0+1}\gamma^{(j)}\mathbbm{1}_{\{t_{j-1}\leq t <t_j \}} , 0.01)$.
The coefficients for the SIR model are depicted in Figure~\ref{fig:sim_rate}. We assume no under-reporting issue in this scenario, i.e., $u(t) = 1$, $\Delta I(t) = \Delta I_u(t)$, $t = 1, \dots, T-1$.

\noindent
\textit{Simulation Scenario E (Model 2):} the transmission  and recovery rates are chosen to be piecewise constant. In this scenario, we set the number of time points $T=200$, $m_0 =1$, the change point $t_1 = \lfloor \frac{T}{2}\rfloor =  100  $. We choose  $\alpha = 1$, $\beta^{(1)} = 0.10$,  $\beta^{(2)} = 0.05$, $\gamma^{(1)} = 0.04$, $\gamma^{(2)} = 0.04$,
$\beta_s(t)= 0.10  - \frac{0.05t}{T-1}$,
$\gamma_s(t)= 0.04$,
$t = 1, \dots, T-1$.
We generate  the spatial effect data from SIR model in \eqref{eq:model_var} in main paper. By plugging in the spatial effect data, we generate the response variable $Y_t$ with additional white noise error term from $\mathcal{N} (0, I_2)$.
The SIR model's coefficients for the response variable and the spatial effect variable are depicted in Figure~\ref{fig:sim_rate}.
We assume no under-reporting issue in this scenario, i.e., $u(t) = 1$, $\Delta I(t) = \Delta I_u(t)$, $t = 1, \dots, T-1$.

\textit{Simulation Scenario F (Model 1 with decreasing under-reporting rate, quadratic function):} the transmission and recovery rates are chosen to be piecewise constant. In this scenario, 
we set the number of time points $T= 250$, $m_0 =2$, the change point $t_1 =100 $ and $t_2 = 200$. 
We choose $\beta^{(1)} = 0.10$,  $\beta^{(2)} = 0.05$,  $\beta^{(3)} = 0.10$,
$\gamma^{(1)} = 0.04$, $\gamma^{(2)} = 0.06$, $\gamma^{(3)} = 0.04$.
Results are based on data generated from the SIR model in \eqref{two_eq_1} with 
$\beta(t) \sim  \text{Lognormal}(\sum_{j=1}^{m_0+1}\beta^{(j)}\mathbbm{1}_{\{t_{j-1}\leq t <t_j \}} , 0.005)$ and $\gamma(t) \sim  \text{Lognormal}(\sum_{j=1}^{m_0+1}\gamma^{(j)}\mathbbm{1}_{\{t_{j-1}\leq t <t_j \}} , 0.005)$.
The true coefficients for the SIR model are depicted in Figure~\ref{fig:sim5_rate}. The under-reporting rate is chosen to change over time, as shown in Figure~\ref{fig:sim6_1}. In this  scenario, the under-reporting rate changes over time. 
Specifically, we set under-reporting rate $u(t) =  1- \left(\frac{t + aT}{(1+ a)T}\right)^2$, $t = 1, \dots, T$, $a = 0.5$. The grid search values for $a$ are $0.1, 0.25, 0.5, 0.75, 1$. 

\noindent
\textit{Simulation Scenario G (Model 1):} the transmission and recovery rates are chosen to be piecewise constant. In this scenario, 
we set the number of time points $T=500$, $m_0 =3$, the change point $t_1 =200 $, $t_2 =300 $ and $t_3 = 400$. 
We choose $\beta^{(1)} = 0.10$,  $\beta^{(2)} = 0.06$,  $\beta^{(3)} = 0.04$,  $\beta^{(4)} = 0.05$, $\gamma^{(1)} = 0.04$,  $\gamma^{(2)} = 0.04$,  $\gamma^{(3)} = 0.06$, $\gamma^{(4)} = 0.04$.
Results are based on data generated from the SIR model in \eqref{eq:model_var} in main paper with 
$\beta(t) \sim  \text{Lognormal}(\sum_{j=1}^{m_0+1}\beta^{(j)}\mathbbm{1}_{\{t_{j-1}\leq t <t_j \}} , 0.005)$ and $\gamma(t) \sim  \text{Lognormal}(\sum_{j=1}^{m_0+1}\gamma^{(j)}\mathbbm{1}_{\{t_{j-1}\leq t <t_j \}} , 0.005)$.
The coefficients for the SIR model are depicted in Figure~\ref{fig:sim_rate}. We assume no under-reporting issue in this scenario, i.e., $u(t) = 1$, $\Delta I(t) = \Delta I_u(t)$, $t = 1, \dots, T-1$.

\noindent
\textit{Simulation Scenario H (Model 1):} the transmission and recovery rates are chosen to be piecewise constant. In this scenario, 
we set the number of time points $T=500$, $m_0 =4$, the change point $t_1 =150$, $t_2 =250 $,$t_3 = 350$ and $t_3 = 400$. 
We choose $\beta^{(1)} = 0.10$,  $\beta^{(2)} = 0.05$,  $\beta^{(3)} = 0.04$,  $\beta^{(4)} = 0.06$, $\beta^{(5)} = 0.04$, 
$\gamma^{(1)} = 0.04$,  $\gamma^{(2)} = 0.04$,  $\gamma^{(3)} = 0.06$, $\gamma^{(4)} = 0.04$, $\gamma^{(5)} = 0.06$.
Results are based on data generated from the SIR model in \eqref{eq:model_var} in main paper with 
$\beta(t) \sim  \text{Lognormal}(\sum_{j=1}^{m_0+1}\beta^{(j)}\mathbbm{1}_{\{t_{j-1}\leq t <t_j \}} , 0.005)$ and $\gamma(t) \sim  \text{Lognormal}(\sum_{j=1}^{m_0+1}\gamma^{(j)}\mathbbm{1}_{\{t_{j-1}\leq t <t_j \}} , 0.005)$.
The coefficients for the SIR model are depicted in Figure~\ref{fig:sim_rate}. We assume no under-reporting issue in this scenario, i.e., $u(t) = 1$, $\Delta I(t) = \Delta I_u(t)$, $t = 1, \dots, T-1$.


\begin{table*}[!htbp]
\caption{\label{table_sim_selection} Results of the mean and standard deviation of estimated change point location and the selection rate under the piecewise constant setting.}
\centering
\footnotesize
\begin{tabular}{ccccccccccccc} 
  \hline
  \hline
 & Model & change point & truth   & mean  & std & selection rate \\
   \hline
Scenario D &\\
&Model 1 ($b_n = 4$) & 1 & 0.4 & 0.4012 & 0.0096 & 0.98 \\ 
  &Model 1 ($b_n = 4$) & 2 & 0.8 & 0.8 & 0 & 0.97 \\ 
  &Model 1 ($b_n = 8$) & 1 & 0.4 & 0.4 & 4e-04 & 0.99 \\ 
  &Model 1 ($b_n = 8$) & 2 & 0.8 & 0.8003 & 0.0032 & 0.99 \\ 
  &Model 1 ($b_n = 12$) & 1 & 0.4 & 0.4 & 0 & 1 \\ 
  &Model 1 ($b_n = 12$) & 2 & 0.8 & 0.7997 & 0.0032 & 1 \\ 
  Scenario E &\\
&Model 1 ($b_n = 4$) & 1 & 0.5 & 0.4988 & 0.0039 & 1 \\ 
  &Model 1 ($b_n = 8$) & 1 & 0.5 & 0.5 & 0 & 1 \\ 
  &Model 1 ($b_n = 12$) & 1 & 0.5 & 0.5 & 0 & 1 \\
 Scenario F &\\    
  &Model 1 ($b_n = 4$) & 1 & 0.4 & 0.4 & 0 & 1 \\ 
  &Model 1 ($b_n = 4$) & 2 & 0.8 & 0.8005 & 0.0048 & 1 \\ 
  &Model 1 ($b_n = 8$) & 1 & 0.4 & 0.4 & 0 & 0.96 \\ 
  &Model 1 ($b_n = 8$) & 2 & 0.8 & 0.8011 & 0.0056 & 1 \\ 
  &Model 1 ($b_n = 12$) & 1 & 0.4 & 0.4 & 0 & 1 \\ 
  &Model 1 ($b_n = 12$) & 2 & 0.8 & 0.7994 & 0.0045 & 1 \\
 Scenario G &  \\ 
  &Model 1 ($b_n = 8$) & 1 & 0.4 & 0.4 & 6e-04 & 0.99 \\ 
  &Model 1 ($b_n = 8$) & 2 & 0.6 & 0.6 & 0 & 0.97 \\ 
  &Model 1 ($b_n = 8$) & 3 & 0.8 & 0.8 & 0 & 1 \\ 
  Scenario H &\\  
  &Model 1 ($b_n = 8$) & 1 & 0.3 & 0.3 & 0 & 1 \\ 
  &Model 1 ($b_n = 8$) & 2 & 0.5 & 0.5 & 0 & 0.98 \\ 
  &Model 1 ($b_n = 8$) & 3 & 0.7 & 0.7 & 0 & 0.97 \\ 
  &Model 1 ($b_n = 8$) & 4 & 0.8 & 0.7999 & 0.0014 & 0.98 \\
   \hline
\end{tabular}
\end{table*}

\begin{table}[!htbp]
\caption{\label{table_sim_1} Results of the mean and standard deviation of estimated parameters $\widehat{\beta}$, $\widehat{\gamma}$ in simulation scenario D. }
\centering
\footnotesize
\begin{tabular}{cccccccccc} 
  \hline
  \hline
  Model  & parameter  &true value & mean & std  \\
   \hline
 \multirow{ 6}{*}{Model 1 ($b_n = 4$)} & $\beta_1$ & 0.1 & 0.1005 & 0.0026 \\ 
   & $\beta_2$ & 0.05 & 0.0513 & 0.0068 \\ 
   & $\beta_3$ & 0.1 & 0.0915 & 0.0194 \\ 
   & $\gamma_1$ & 0.04 & 0.0401 & 9e-04 \\ 
   & $\gamma_2$ & 0.06 & 0.0599 & 0.0028 \\ 
   & $\gamma_3$ & 0.04 & 0.0438 & 0.0077 \\ 
  \multirow{ 6}{*}{Model 1 ($b_n = 8$)} & $\beta_1$ & 0.1 & 0.1002 & 0.005 \\ 
   & $\beta_2$ & 0.05 & 0.0513 & 0.0072 \\ 
   & $\beta_3$ & 0.1 & 0.0994 & 0.0076 \\ 
   & $\gamma_1$ & 0.04 & 0.0403 & 0.0019 \\ 
   & $\gamma_2$ & 0.06 & 0.06 & 0.0028 \\ 
   & $\gamma_3$ & 0.04 & 0.0407 & 0.003 \\ 
  \multirow{ 6}{*}{Model 1 ($b_n = 12$)} & $\beta_1$ & 0.1 & 0.1005 & 0.0027 \\ 
   & $\beta_2$ & 0.05 & 0.0503 & 7e-04 \\ 
   & $\beta_3$ & 0.1 & 0.1004 & 0.0024 \\ 
   & $\gamma_1$ & 0.04 & 0.0401 & 9e-04 \\ 
   & $\gamma_2$ & 0.06 & 0.0604 & 7e-04 \\ 
   & $\gamma_3$ & 0.04 & 0.0403 & 8e-04 \\  
   \hline
\end{tabular}
\end{table}

 \begin{table}[!htbp]
\caption{\label{table_sim_alpha} Results of the mean and standard deviation of estimated parameters $\widehat{\beta}$, $\widehat{\gamma}$, $\widehat{\alpha}$ in simulation scenario E. }
\centering
\footnotesize
\begin{tabular}{cccccccccc} 
  \hline
  \hline
  Model  & parameter  &true value & mean & std  \\
   \hline
 \multirow{ 5}{*}{Model 2 ($b_n = 4$)} & $\beta_1$ & 0.1& 0.1 & 2e-04 \\  
     & $\beta_2$ & 0.05& 0.0501 & 3e-04 \\ 
     & $\gamma_1$& 0.04 & 0.04 & 1e-04  \\ 
    & $\gamma_2$ & 0.04& 0.04 & 1e-04  \\ 
     & $\alpha$ &1 & 0.9881 & 0.0487  \\  
    \multirow{ 5}{*}{Model 2 ($b_n = 8$)} & $\beta_1$& 0.1 & 0.1 & 1e-04  \\  
     & $\beta_2$& 0.05 & 0.05 & 1e-04 \\  
     & $\gamma_1$ & 0.04& 0.04 & 1e-04  \\ 
     & $\gamma_2$ & 0.04& 0.04 & 1e-04   \\  
     & $\alpha$ &1 & 1.0012 & 0.0236  \\  
    \multirow{ 5}{*}{Model 2 ($b_n = 12$)} & $\beta_1$ & 0.1& 0.1 & 1e-04  \\  
     & $\beta_2$ & 0.05& 0.05 & 1e-04 \\ 
     & $\gamma_1$& 0.04& 0.04 & 1e-04  \\ 
     & $\gamma_2$ & 0.04& 0.04 & 1e-04 \\  
     & $\alpha$ &1 & 1.0012 & 0.0239  \\ 
   \hline
\end{tabular}
\end{table}

\begin{table}[!htbp]
\caption{\label{table_sim_underreport} Results of the mean and standard deviation of estimated parameters $\widehat{\beta}$, $\widehat{\gamma}$, $\widehat{a}$ in simulation scenario F. }
\centering
\footnotesize
\begin{tabular}{cccccccccc} 
  \hline
  \hline
  Model  & parameter  &true value & mean & std  \\
   \hline
 \multirow{ 7}{*}{Model 1 ($b_n = 4$)} & $\beta_1$ & 0.1 & 0.1004 & 0.0109  \\ 
   & $\beta_2$ & 0.05 & 0.0517 & 0.0127  \\ 
   & $\beta_3$ & 0.1 & 0.0985 & 0.0145  \\ 
   & $\gamma_1$ & 0.04 & 0.0405 & 0.0082  \\ 
   & $\gamma_2$ & 0.06 & 0.061 & 0.0132 \\ 
   & $\gamma_3$ & 0.04 & 0.0405 & 0.0073  \\ 
   & $a$ & 0.5 & 0.5235 & 0.1767  \\ 
  \multirow{ 7}{*}{Model 1 ($b_n = 8$)} & $\beta_1$ & 0.1 & 0.0998 & 0.0103  \\ 
   & $\beta_2$ & 0.05 & 0.0532 & 0.0155 \\ 
   & $\beta_3$ & 0.1 & 0.0997 & 0.0059 \\ 
   & $\gamma_1$ & 0.04 & 0.0422 & 0.0117  \\ 
   & $\gamma_2$ & 0.06 & 0.0602 & 0.0124  \\ 
   & $\gamma_3$ & 0.04 & 0.0399 & 0.0024 \\ 
   & $a$ & 0.5 & 0.5375 & 0.1873  \\ 
  \multirow{ 7}{*}{Model 1 ($b_n = 12$)} & $\beta_1$ & 0.1 & 0.1011 & 0.0101 \\ 
   & $\beta_2$ & 0.05 & 0.0518 & 0.0114 \\ 
   & $\beta_3$ & 0.1 & 0.1001 & 0.0055 \\ 
   & $\gamma_1$ & 0.04 & 0.041 & 0.0078  \\ 
   & $\gamma_2$ & 0.06 & 0.062 & 0.0126\\ 
   & $\gamma_3$ & 0.04 & 0.0401 & 0.0023 \\ 
   & $a$ & 0.5 & 0.534 & 0.1784  \\ 
   \hline
\end{tabular}
\end{table}

\begin{table}[!htbp]
\caption{\label{table_sim_G} Results of the mean and standard deviation of estimated parameters $\widehat{\beta}$, $\widehat{\gamma}$ in simulation scenario G. }
\centering
\footnotesize
{
\begin{tabular}{cccccccccc} 
  \hline
  \hline
  Model  & parameter  &true value & mean & std  \\
   \hline
\multirow{ 6}{*}{Model 1 ($b_n = 8$)} & $\beta_1$ & 0.1 & 0.0997 & 0.0046 \\ 
   & $\beta_2$ & 0.06 & 0.0596 & 0.0027 \\ 
   & $\beta_3$ & 0.04 & 0.0405 & 0.002 \\ 
   & $\beta_4$ & 0.05 & 0.0501 & 0.0011 \\ 
   & $\gamma_1$ & 0.04 & 0.0401 & 7e-04 \\ 
   & $\gamma_2$ & 0.04 & 0.0406 & 0.0026 \\ 
   & $\gamma_3$ & 0.06 & 0.0595 & 0.004 \\ 
   & $\gamma_4$ & 0.04 & 0.0403 & 0.0022 \\ 
   \hline
\end{tabular}}
\end{table}

\begin{table}[!htbp]
\caption{\label{table_sim_H} Results of the mean and standard deviation of estimated parameters $\widehat{\beta}$, $\widehat{\gamma}$ in simulation scenario H. }
\centering
\footnotesize
{
\begin{tabular}{cccccccccc} 
  \hline
  \hline
  Model  & parameter  &true value & mean & std  \\
   \hline
\multirow{ 8}{*}{Model 1 ($b_n = 8$)} & $\beta_1$ & 0.1 & 0.1002 & 0.002 \\ 
   & $\beta_2$ & 0.05 & 0.05 & 6e-04 \\ 
   & $\beta_3$ & 0.04 & 0.0403 & 0.0013 \\ 
   & $\beta_4$ & 0.06 & 0.06 & 0.0027 \\ 
   & $\beta_5$ & 0.04 & 0.0402 & 0.0016 \\ 
   & $\gamma_1$ & 0.04 & 0.0399 & 7e-04 \\ 
   & $\gamma_2$ & 0.04 & 0.0403 & 0.0011 \\ 
   & $\gamma_3$ & 0.06 & 0.0597 & 0.0022 \\ 
   & $\gamma_4$ & 0.04 & 0.0404 & 0.0026 \\ 
   & $\gamma_5$ & 0.06 & 0.0599 & 0.002 \\ 
   \hline
\end{tabular}}
\end{table}

The mean and standard deviation of the  location of the selected change point, relative to the the number of time points $T$ -- i.e., $\widetilde{t}^f_1/T$ -- for all simulation scenarios are summarized in Table~\ref{table_sim_selection}.  
The results clearly indicate that, in the piecewise constant setting, our procedure accurately detects the location of change points. 
It also indicates  the proposed algorithm's robustness with respect to block size selection. 
The results of the estimated transmission rate $\widehat{\beta}$, recovery rate $\widehat{\gamma}$ and spatial effect $\widehat{\alpha}$ in Table~\ref{table_sim_1} and Table~\ref{table_sim_alpha} also suggest that our procedure produces accurate estimates of the parameters, under the various settings considered.   
The results of the estimated transmission rate $\widehat{\beta}$, recovery rate $\widehat{\gamma}$ and  parameter of the  under-reporting rate function $\widehat{a}$ in Table~\ref{table_sim_underreport} also suggest that our procedure produces accurate estimates of the parameters, under the various under-reporting function settings.

The estimated transition matrices of the VAR component are close to the true values, regardless of block size selection. For example, for $b_n=4$, the entries of estimated autoregressive coefficients matrix which are averaged among 100 replicates are 0.7577, -0.0057, 0.2361, 0.6663, with standard deviations 0.1125, 0.0461, 0.3582, 0.1147, respectively, while the entries of true autoregressive coefficients matrices are 0.8, 0, 0.2, 0.7 from top left to bottom right. Moreover, for $b_n=8$, the entries of estimated autoregressive coefficients matrix are 0.7652 -0.0057, 0.2088,  0.6767, with standard deviations 0.0533, 0.0464, 0.0459, 0.0573, respectively. These results confirm the good performance of the proposed algorithm in finite sample while also indicates its robustness with respect to block size selection.

\section{Additional Results for U.S. States}\label{sec:states}

The block size is set to $b_n = 7$, i.e. partition the observations into blocks of 7 days for all regions. 
Finally, the tuning parameter $\lambda_n$ is selected via a cross-validated grid search \cite{hastie2009elements}.

\begin{figure*}[!ht]
     \centering
     \begin{subfigure}[b]{0.19\textwidth}
         \centering
         \includegraphics[width=\textwidth]{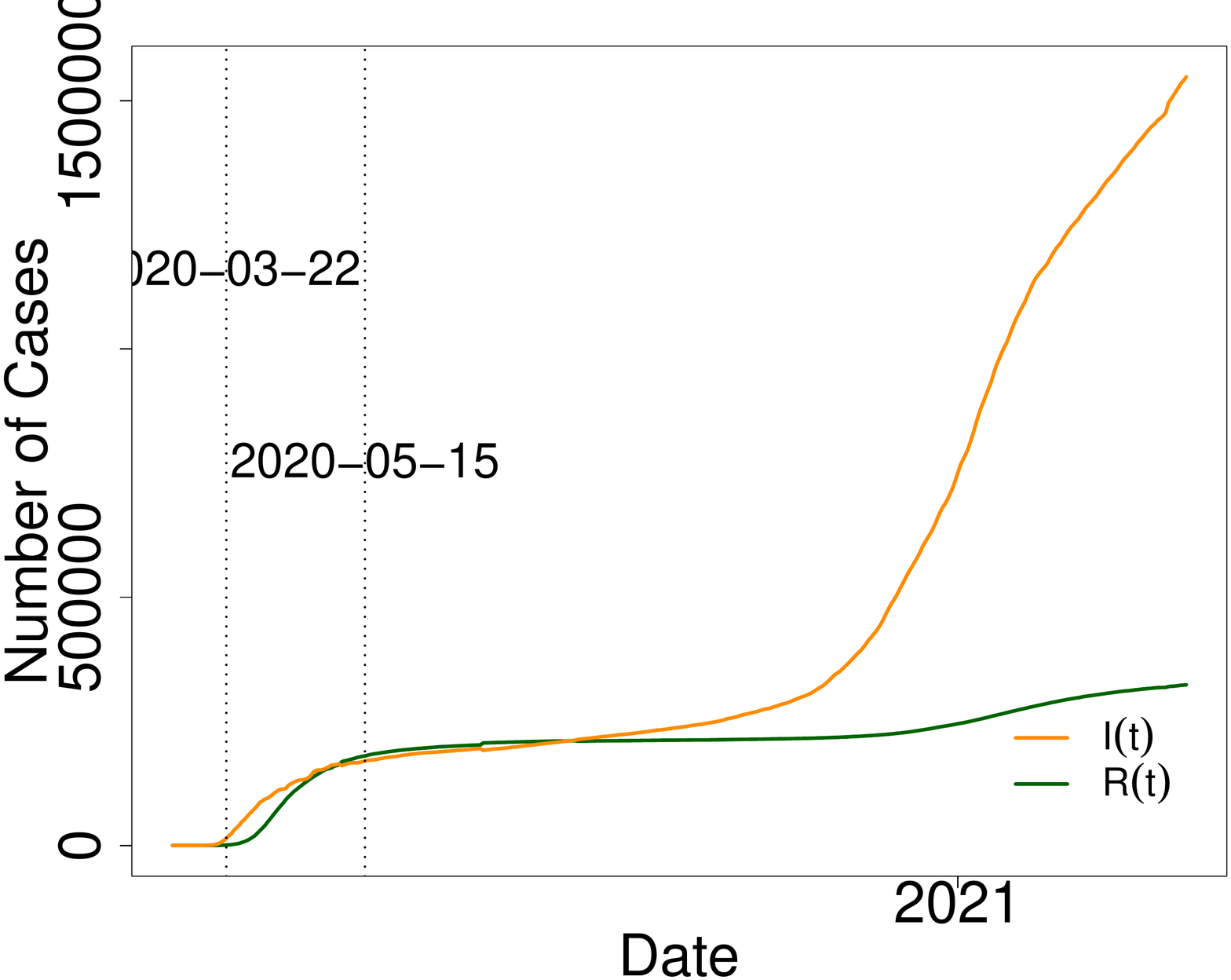}
         \subcaption{New York}
     \end{subfigure}
     \begin{subfigure}[b]{0.19\textwidth}
         \centering
         \includegraphics[width=\textwidth]{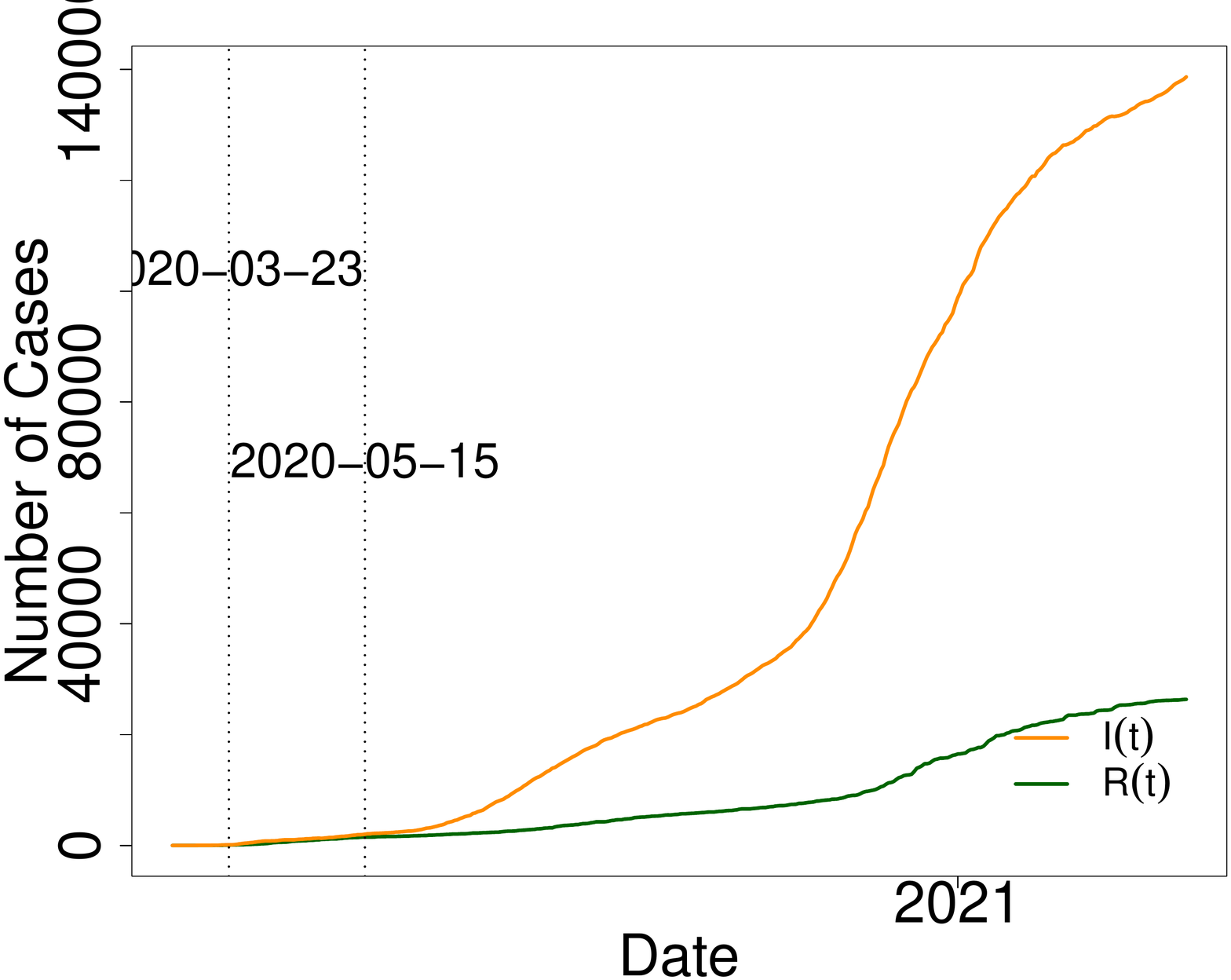}
         \subcaption{Oregon}
     \end{subfigure}
     \begin{subfigure}[b]{0.19\textwidth}
         \centering
         \includegraphics[width=\textwidth]{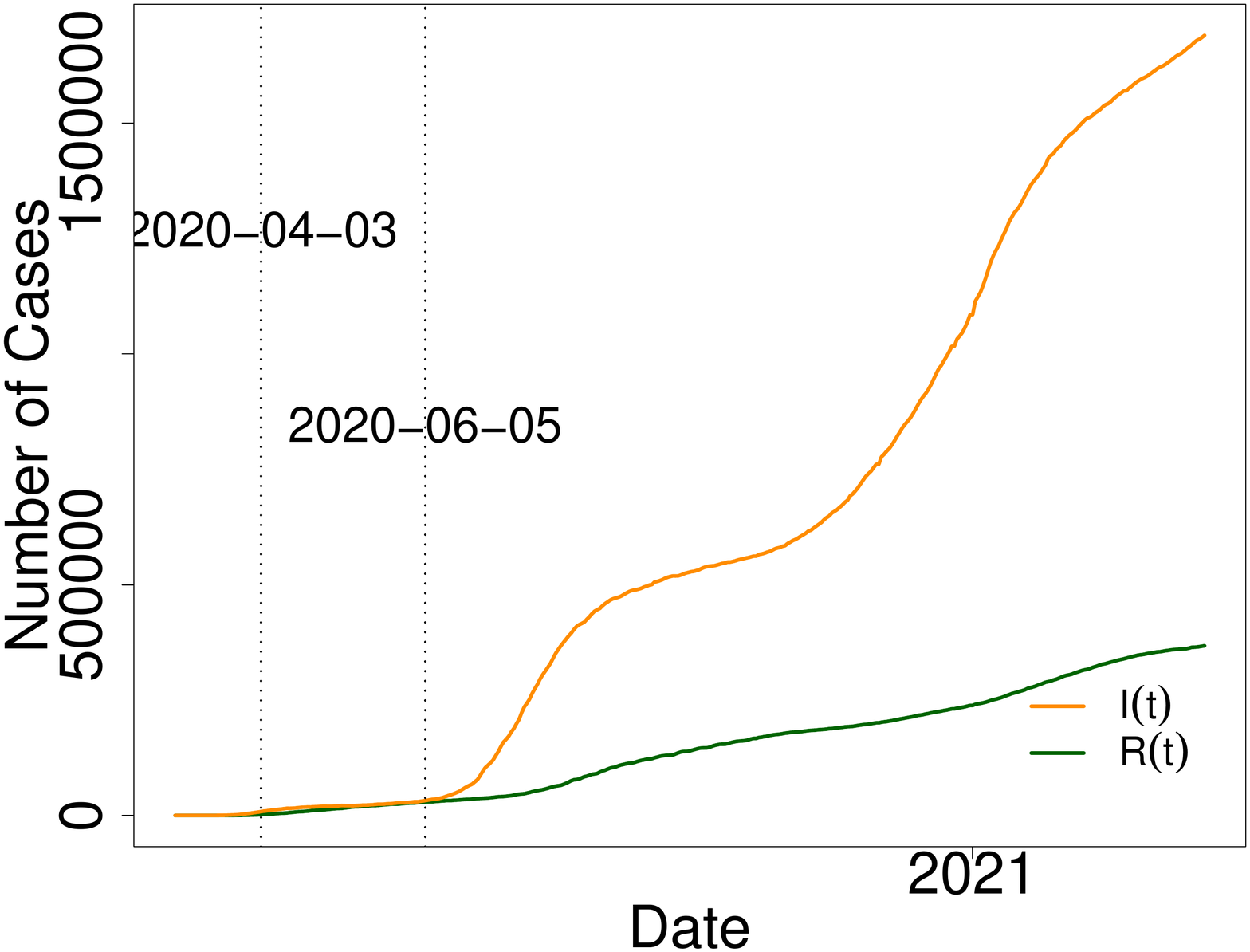}
         \subcaption{Florida}
     \end{subfigure}
     \begin{subfigure}[b]{0.19\textwidth}
         \centering
         \includegraphics[width=\textwidth]{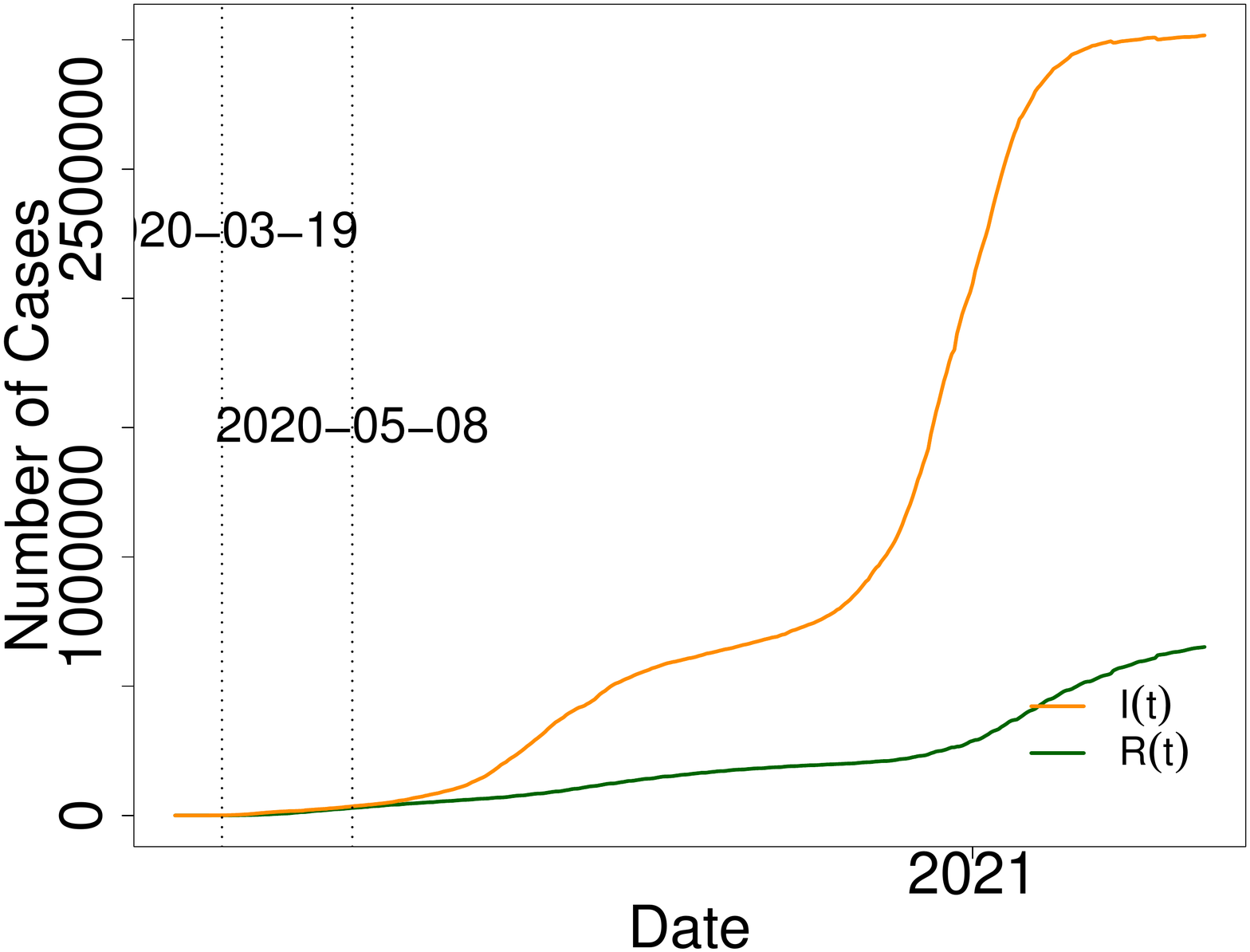}
         \subcaption{California}
     \end{subfigure}
     \begin{subfigure}[b]{0.19\textwidth}
         \centering
         \includegraphics[width=\textwidth]{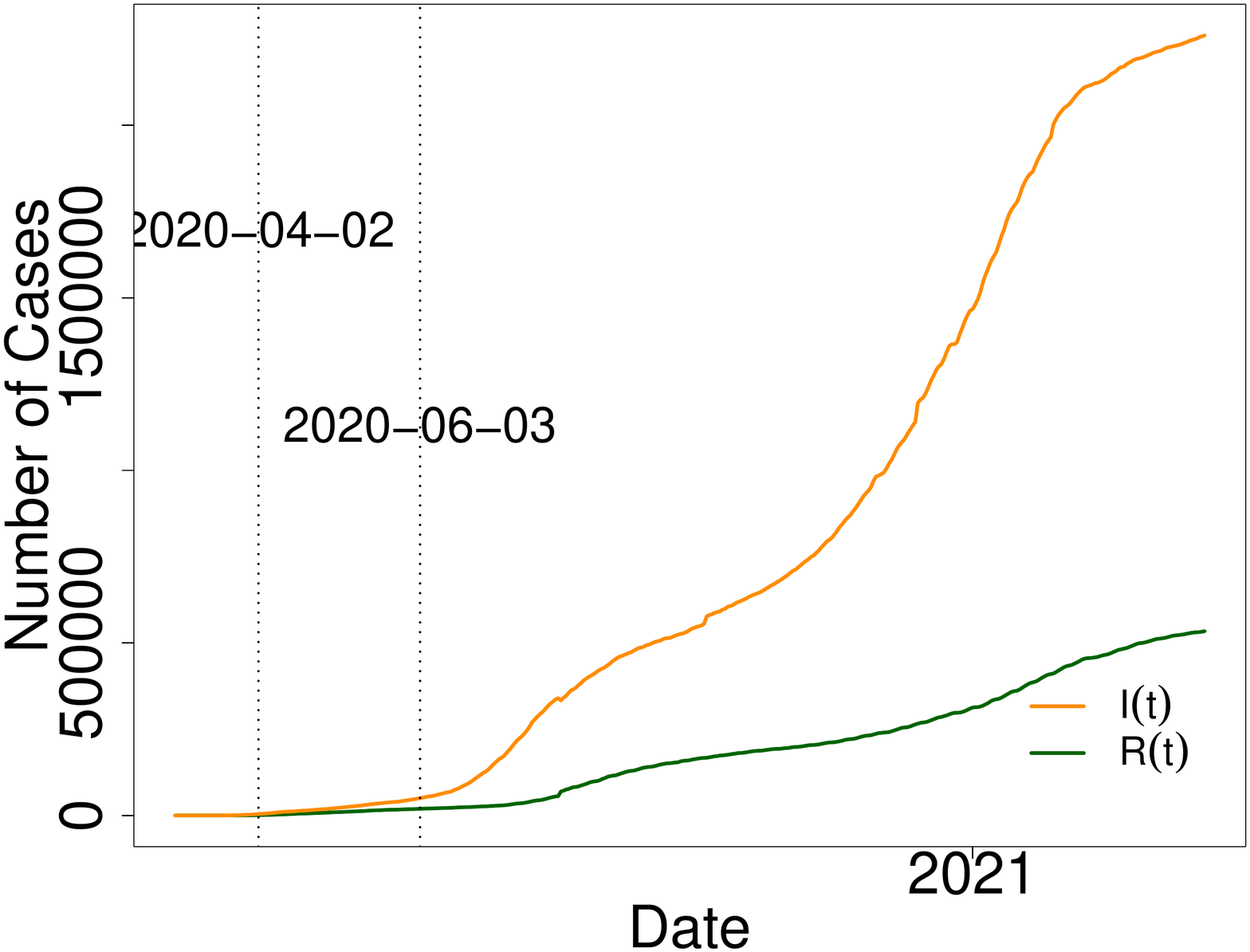}
         \subcaption{Texas}
     \end{subfigure}
        \caption{
        Number of infected cases (dark orange) and recovered cases (dark green) in several states. The first vertical black dotted line indicates the ``stay-at-home'' order start date  for each state, while the second vertical black dotted line indicates the ``reopening'' start date.}
        \label{fig:numbers_states}
\end{figure*}

\begin{table}[!ht]
\caption{\label{table_adj}Neighboring states by distance (for models 2.1 and 2.2). }
\centering
{
\begin{tabular}{l l}
  \hline
  \hline
Region &  Neighboring Regions  (states: within 500 miles)   \\
  \hline
  New York  &  Connecticut, New Hampshire, New Jersey, Pennsylvania, Vermont\\
   Oregon & California, Idaho, Nevada, Washington\\
   Florida & Alabama, Georgia, South Carolina \\
   California   &  Arizona, Nevada, Oregon, Utah\\
   Texas &  Arkansas, Kansas, Louisiana, New Mexico, Oklahoma\\
 \hline
\end{tabular}}
\end{table}

\begin{figure*}[ht!]
     \centering
      \captionsetup[sub]{font=small, labelfont={bf,sf}}
     \begin{subfigure}[b]{0.19\textwidth}
         \centering
         \includegraphics[width=\textwidth]{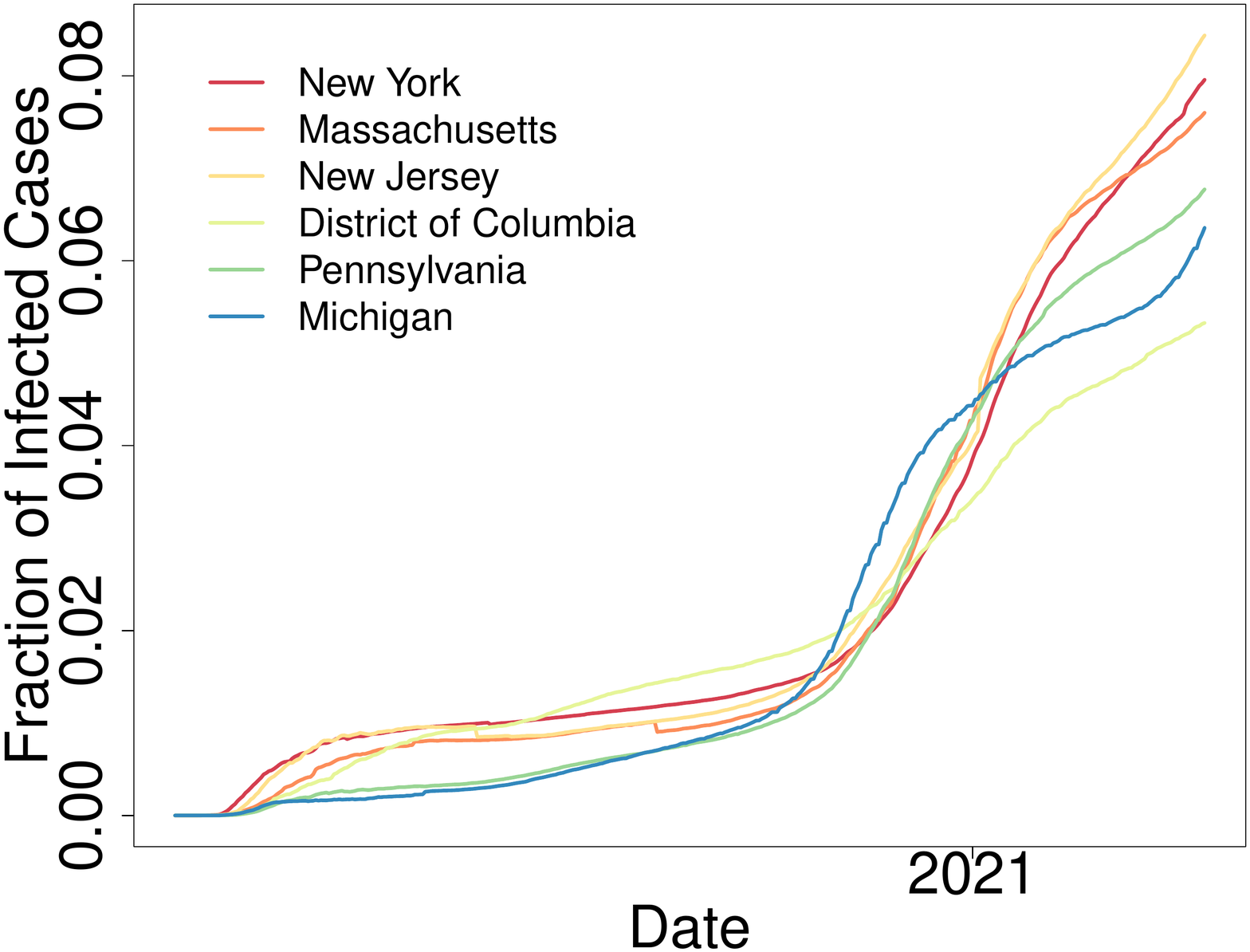}
         \subcaption{New York}
     \end{subfigure}
    \begin{subfigure}[b]{0.19\textwidth}
         \centering
         \includegraphics[width=\textwidth]{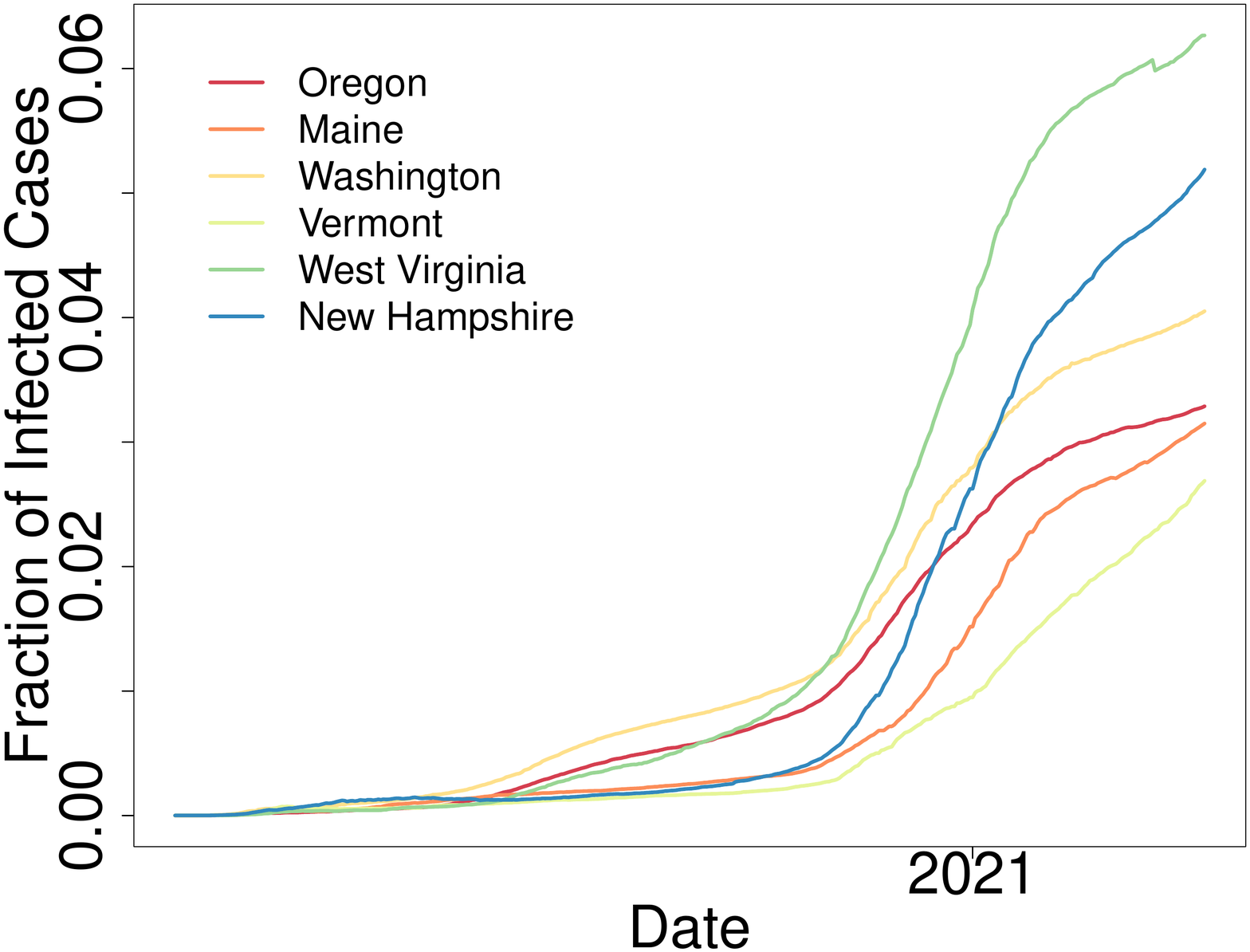}
         \subcaption{Oregon}
     \end{subfigure}
    \begin{subfigure}[b]{0.19\textwidth}
         \centering
         \includegraphics[width=\textwidth]{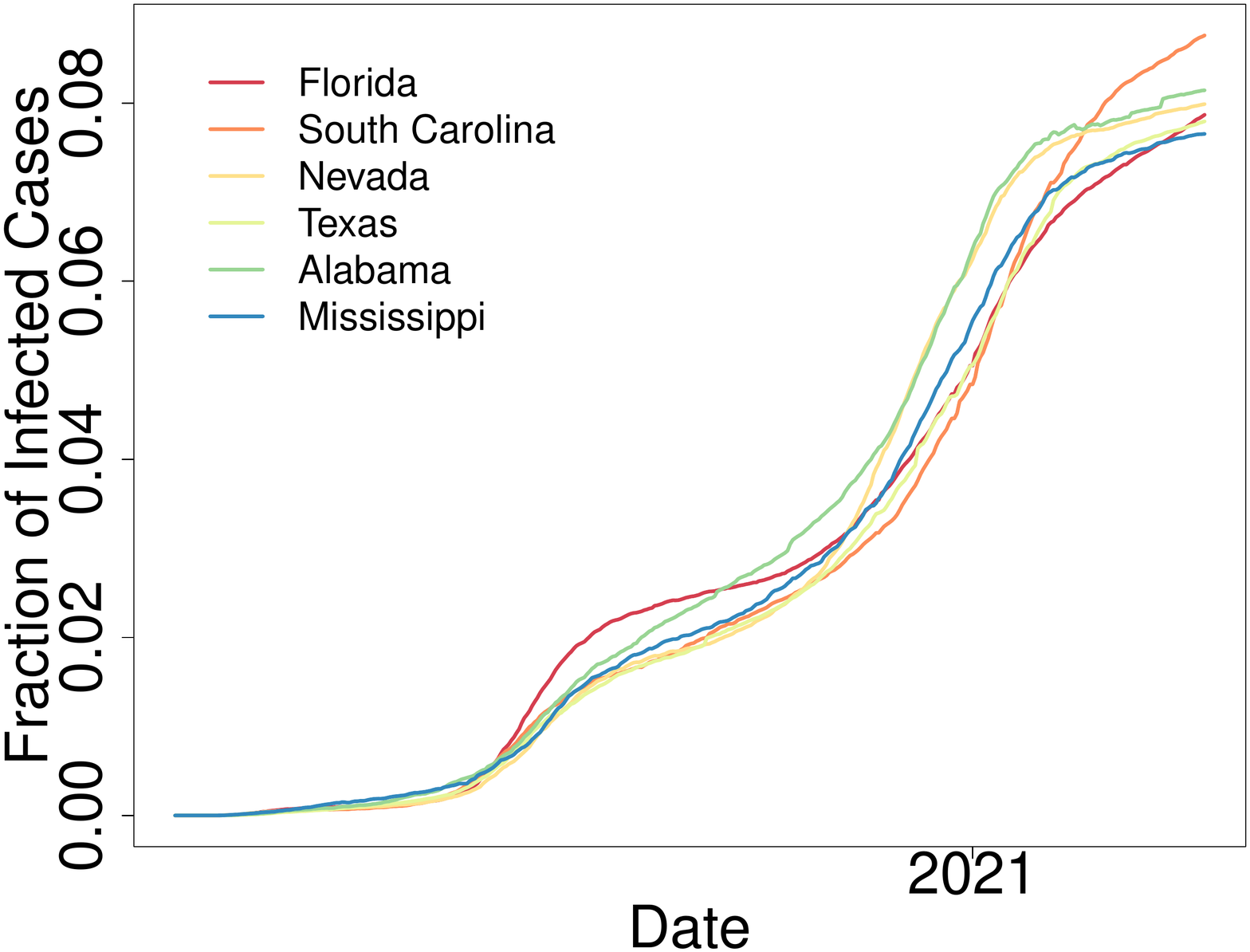}
         \subcaption{Florida}
     \end{subfigure}
     \begin{subfigure}[b]{0.19\textwidth}
         \centering
         \includegraphics[width=\textwidth]{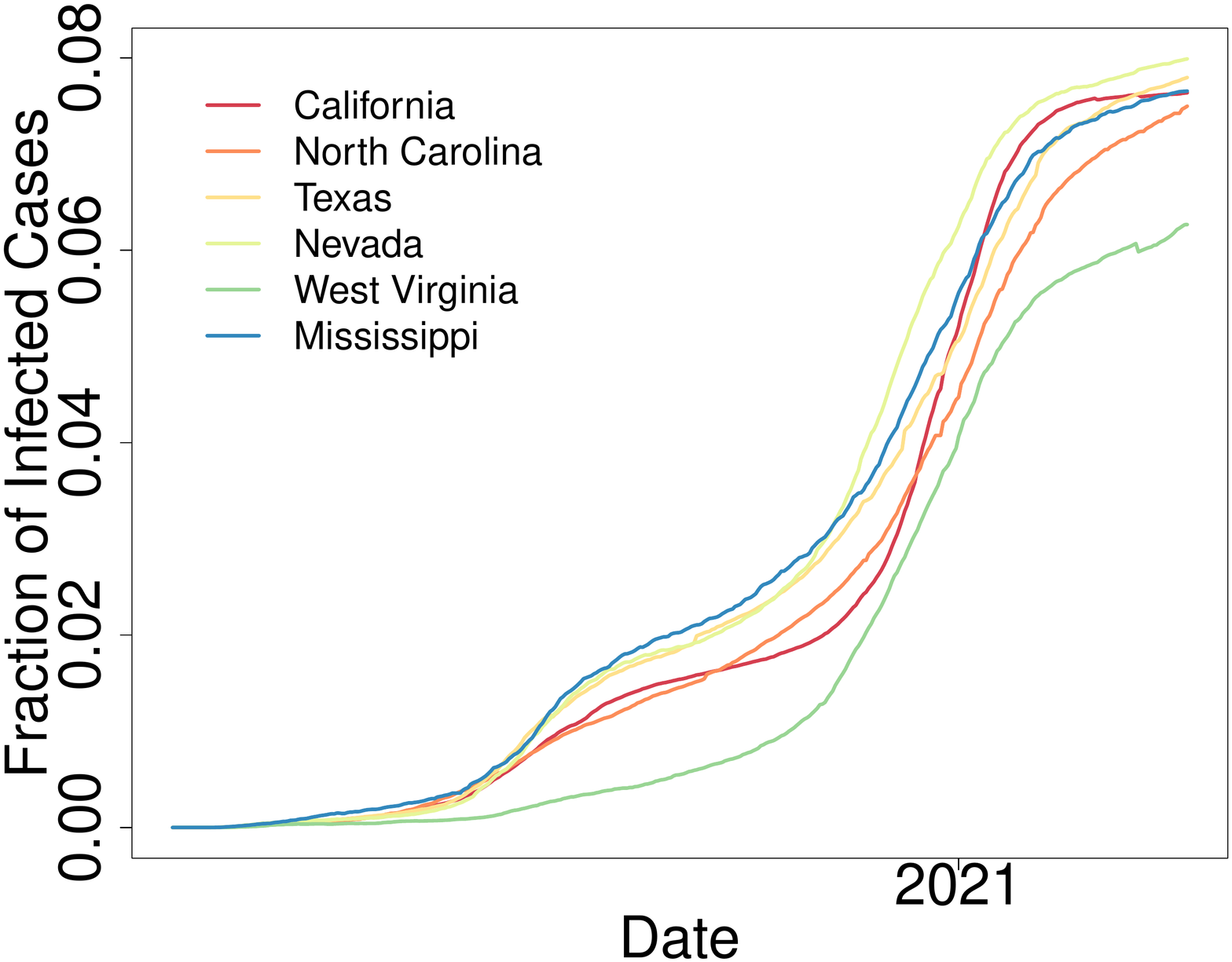}
         \subcaption{California}
     \end{subfigure}
     \begin{subfigure}[b]{0.19\textwidth}
         \centering
         \includegraphics[width=\textwidth]{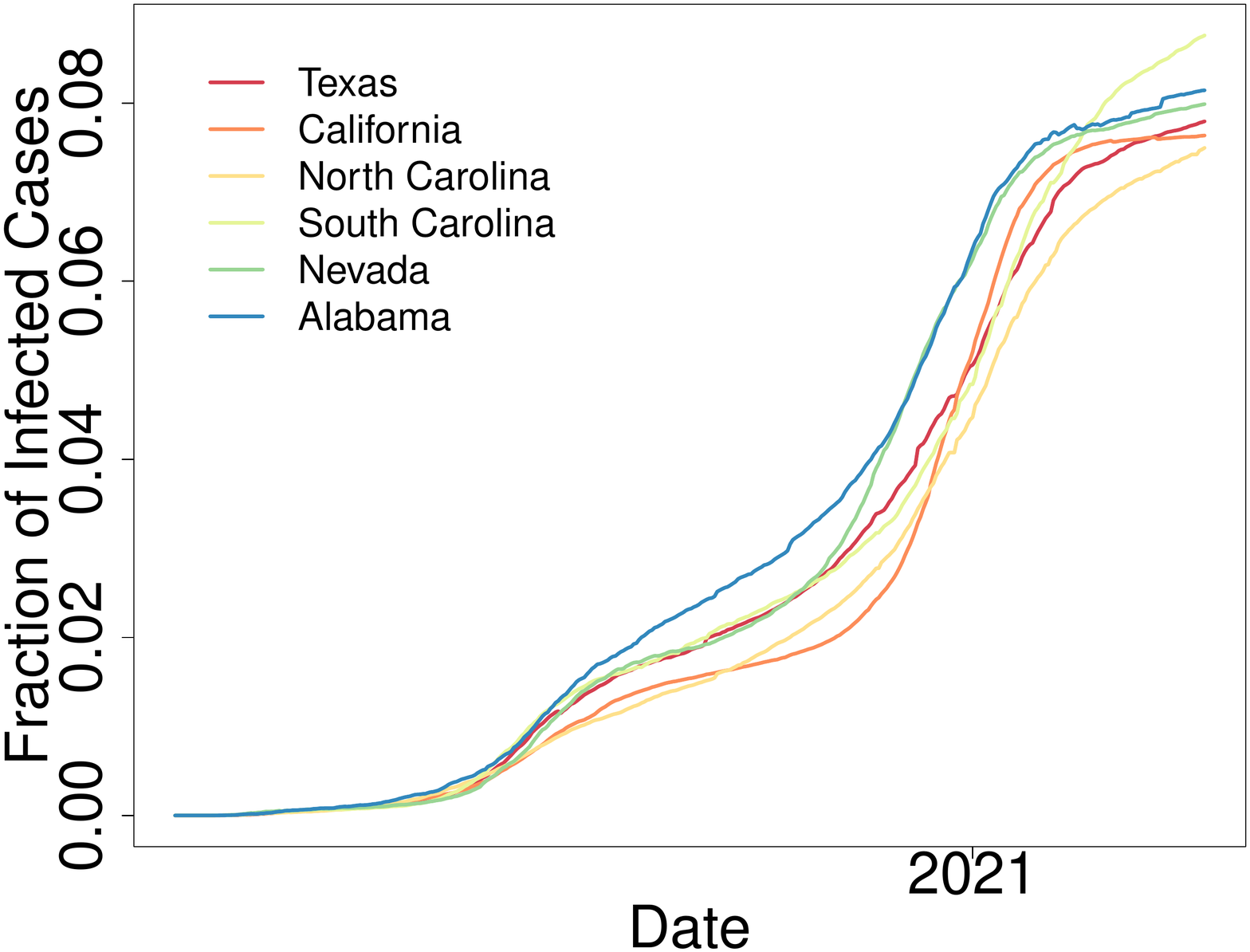}
         \subcaption{Texas}
     \end{subfigure}
     
           \begin{subfigure}[b]{0.19\textwidth}
         \centering
         \includegraphics[width=\textwidth]{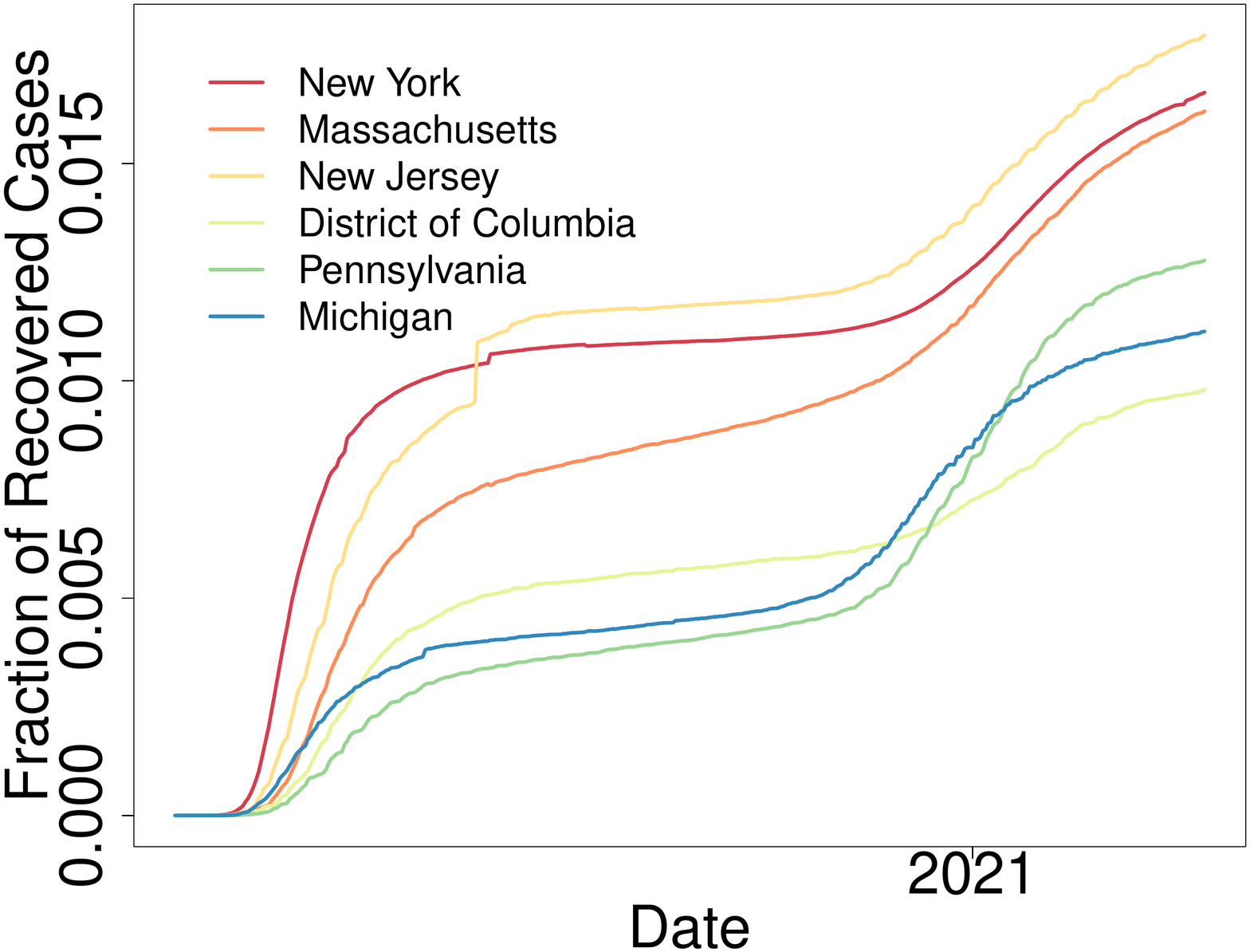}
         \subcaption{New York}
     \end{subfigure}
    \begin{subfigure}[b]{0.19\textwidth}
         \centering
         \includegraphics[width=\textwidth]{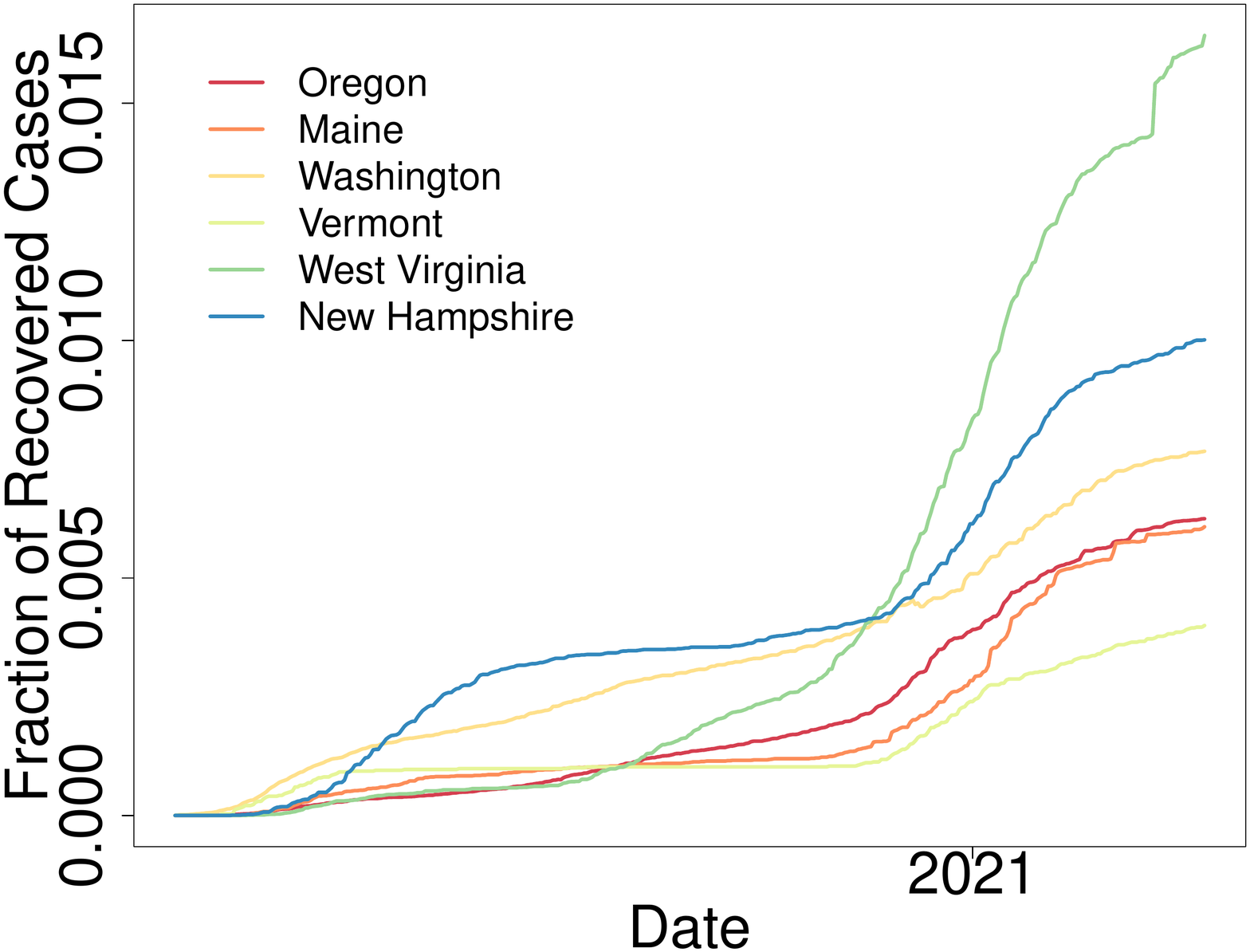}
         \subcaption{Oregon}
     \end{subfigure}
     \begin{subfigure}[b]{0.19\textwidth}
         \centering
         \includegraphics[width=\textwidth]{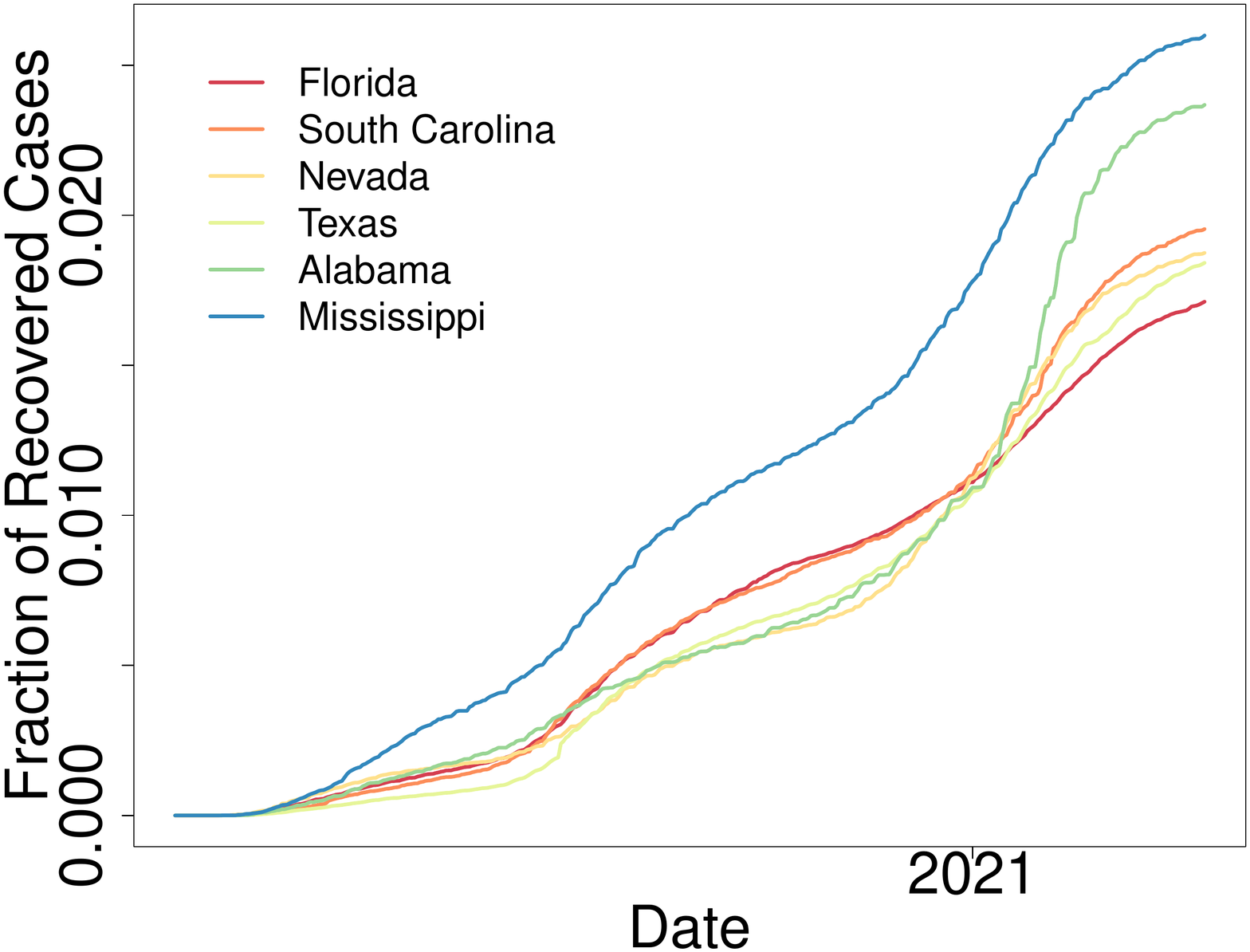}
         \subcaption{Florida}
     \end{subfigure}
     \begin{subfigure}[b]{0.19\textwidth}
         \centering
         \includegraphics[width=\textwidth]{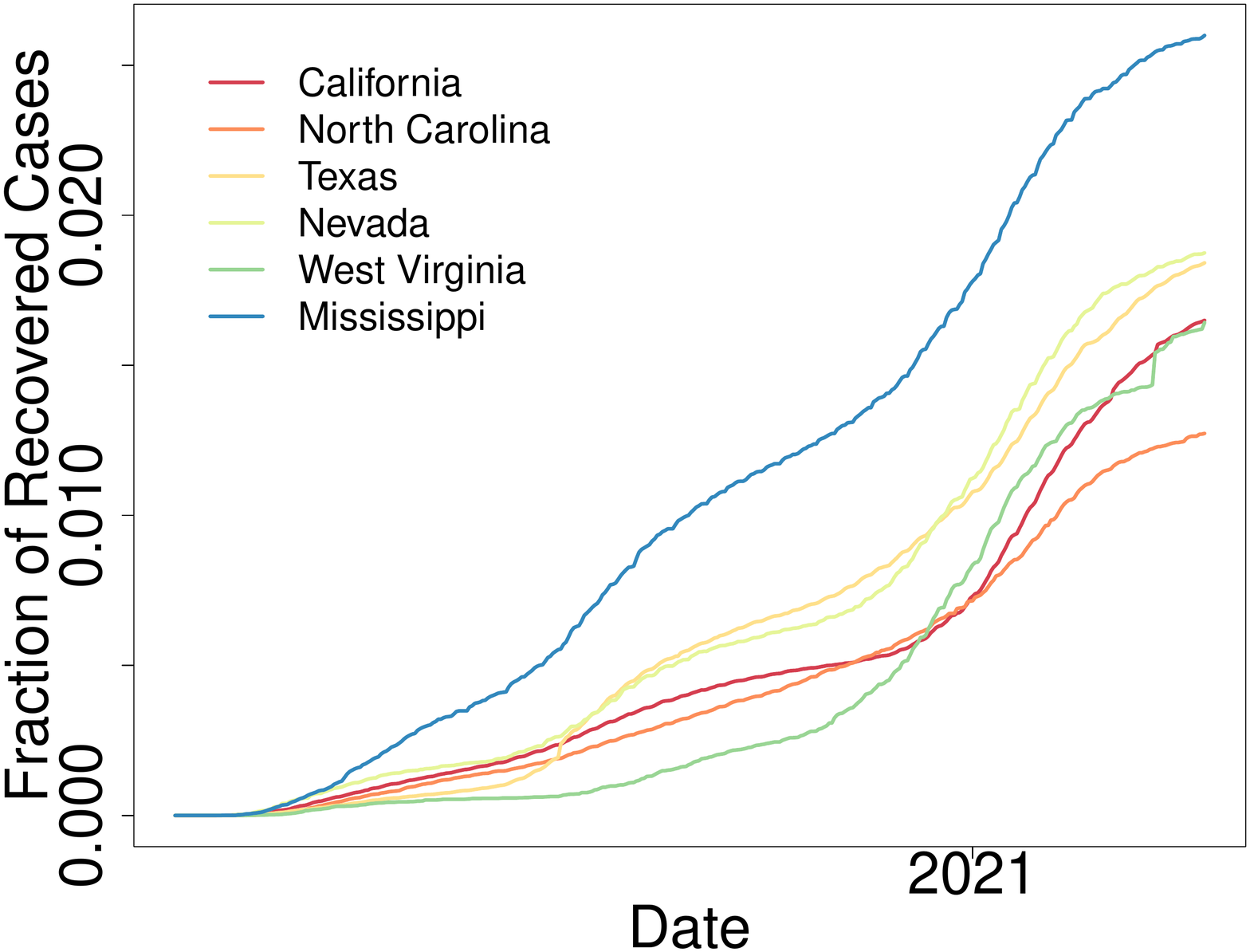}
         \subcaption{California}
     \end{subfigure}
     \begin{subfigure}[b]{0.19\textwidth}
         \centering
         \includegraphics[width=\textwidth]{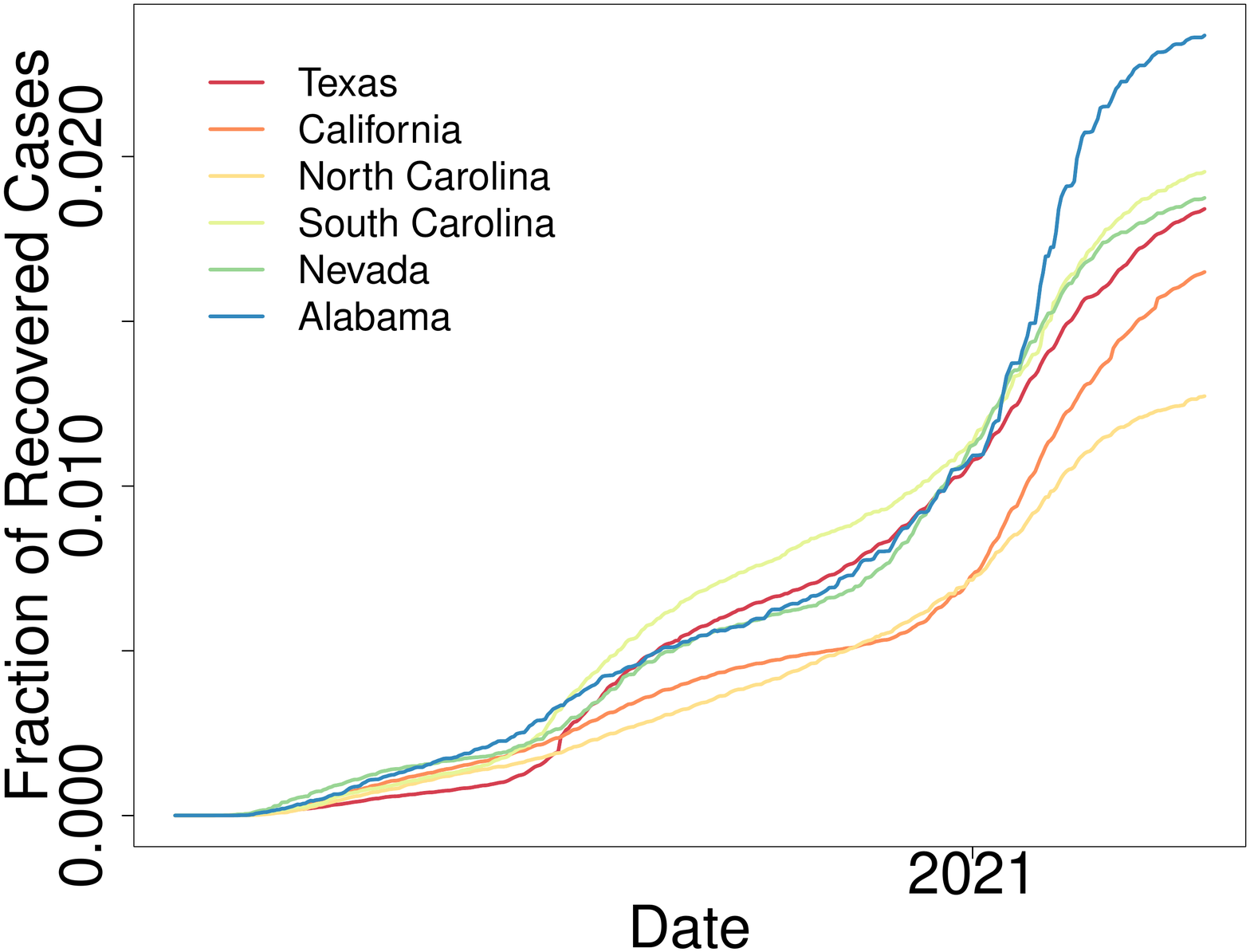}
         \subcaption{Texas}
     \end{subfigure}
    \caption{Fractions of infected (top) and recovered (bottom) in several states and their neighboring states chosen by similarity score. }
        \label{fig:fraction_adj}
\end{figure*}

\begin{figure*}[ht!]
     \centering
     \captionsetup[sub]{font=small, labelfont={bf,sf}}
     \begin{subfigure}[b]{0.24\textwidth}
         \centering
         \includegraphics[width=\textwidth]{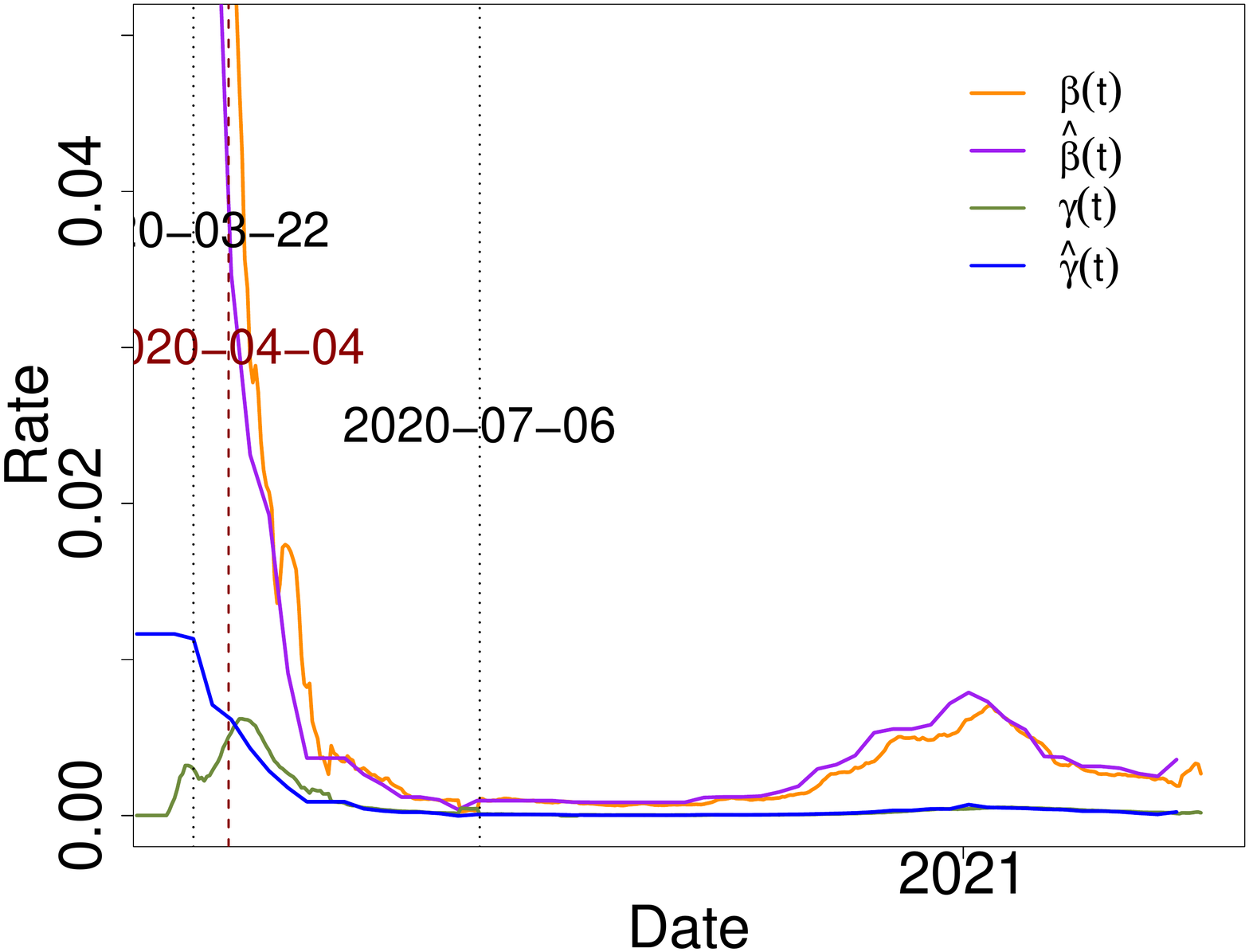}
         \subcaption{NY (Model 1)}
     \end{subfigure}
     \begin{subfigure}[b]{0.24\textwidth}
         \centering
         \includegraphics[width=\textwidth]{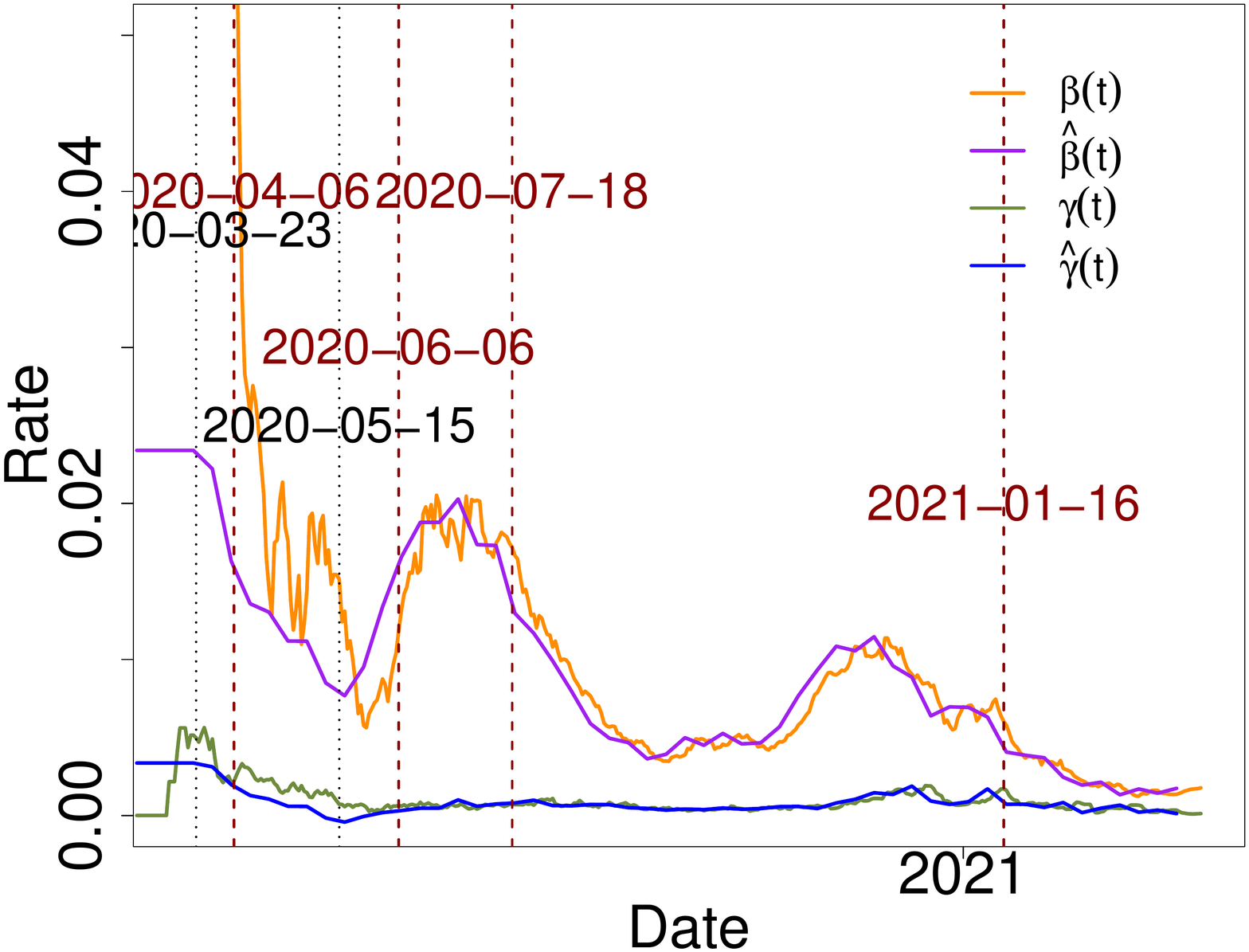}
         \subcaption{OR (Model 1)}
     \end{subfigure}
     \begin{subfigure}[b]{0.24\textwidth}
         \centering
         \includegraphics[width=\textwidth]{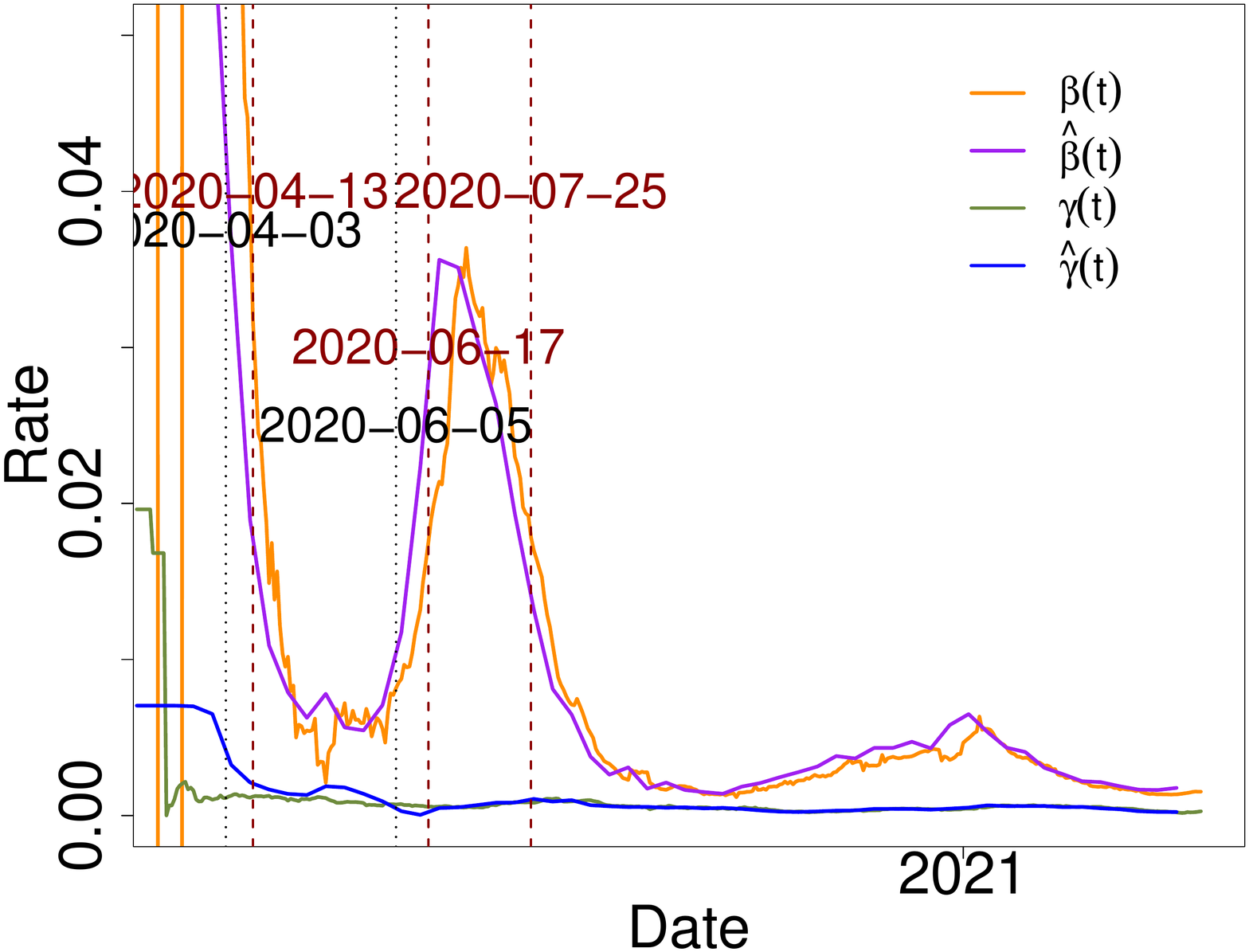}
         \subcaption{FL (Model 1)}
     \end{subfigure}
     
     \begin{subfigure}[b]{0.24\textwidth}
         \centering
         \includegraphics[width=\textwidth]{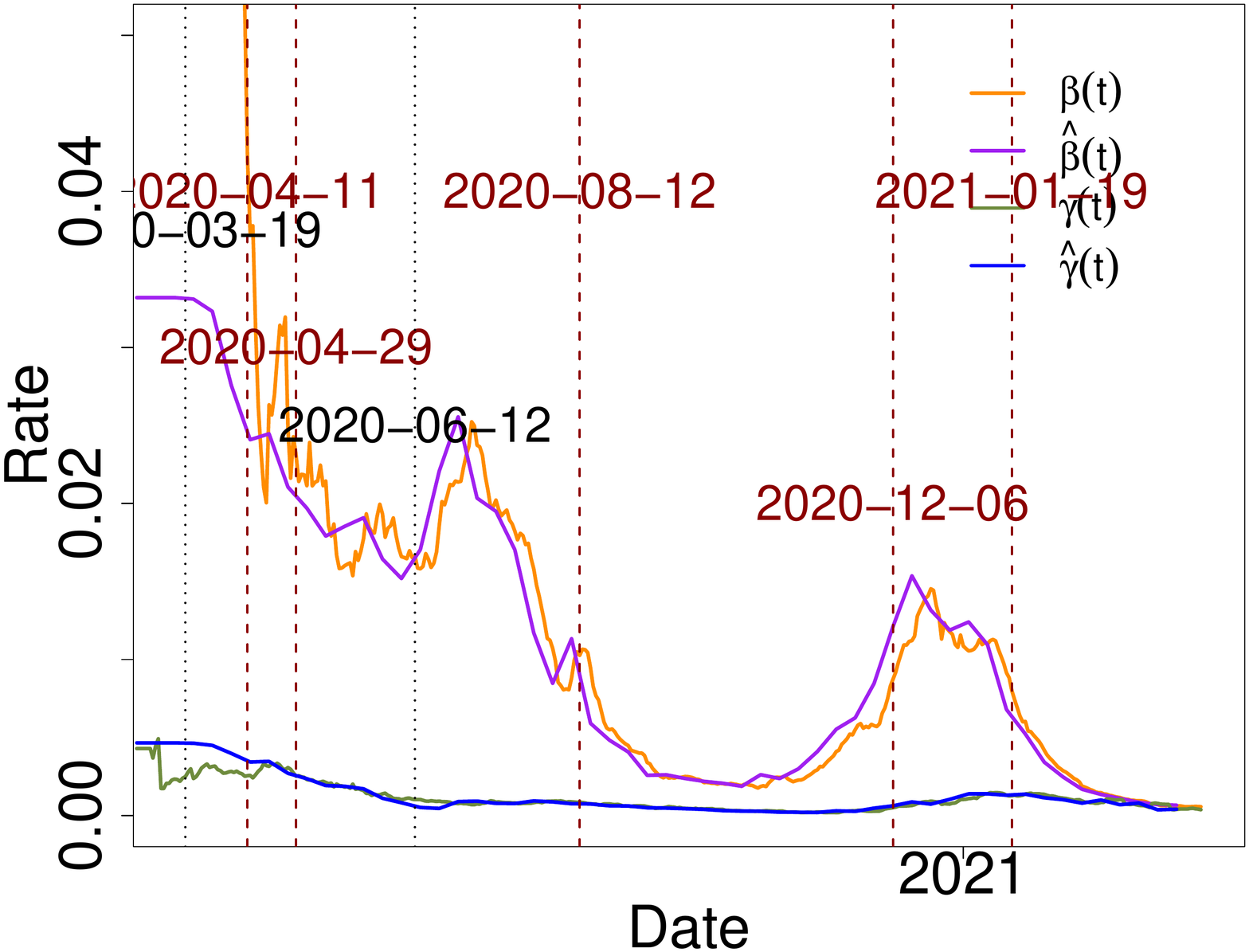}
         \subcaption{CA (Model 1)}
     \end{subfigure}
     \begin{subfigure}[b]{0.24\textwidth}
         \centering
         \includegraphics[width=\textwidth]{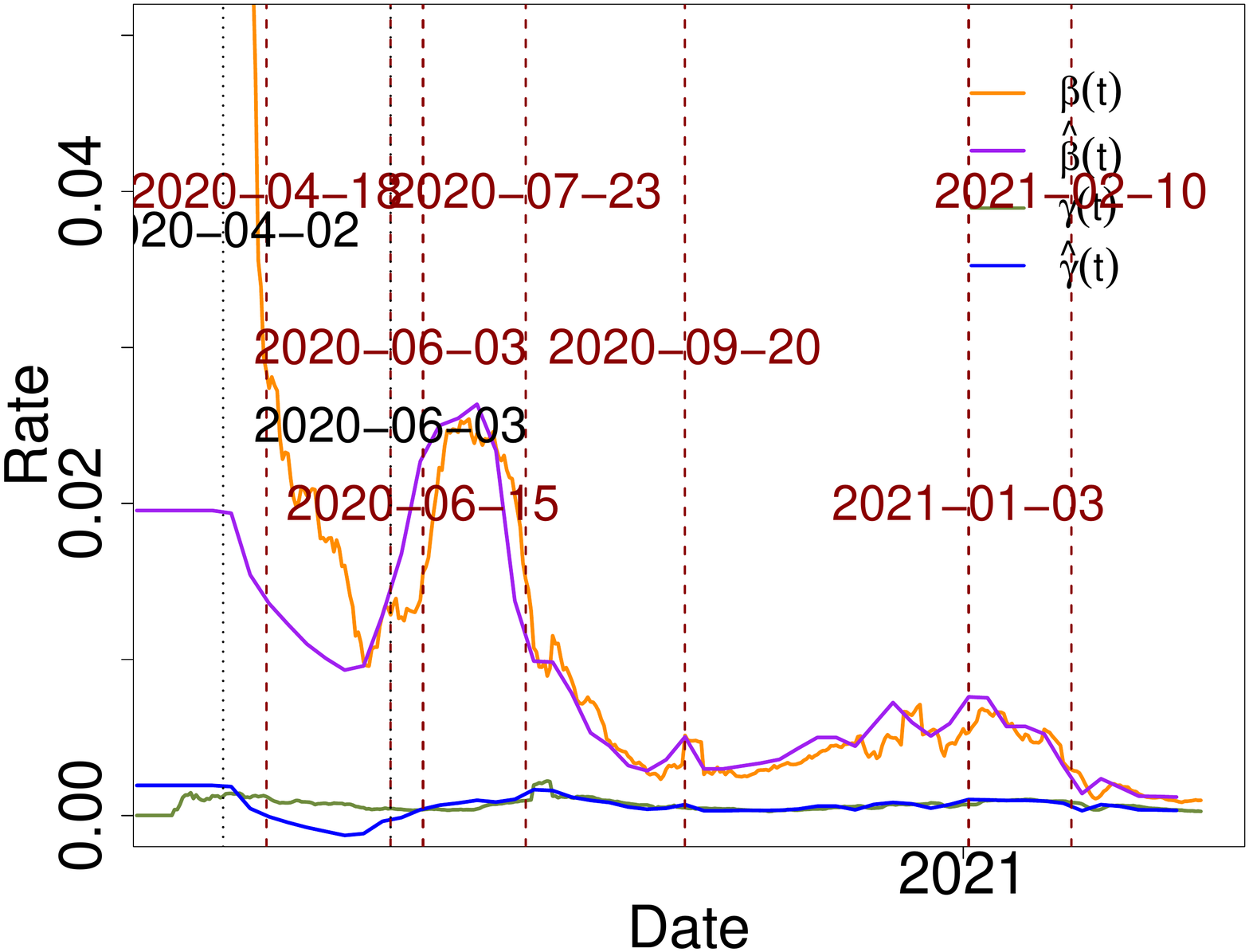}
         \subcaption{TX (Model 1)}
     \end{subfigure}
     \begin{subfigure}[b]{0.24\textwidth}
         \centering
         \includegraphics[width=\textwidth]{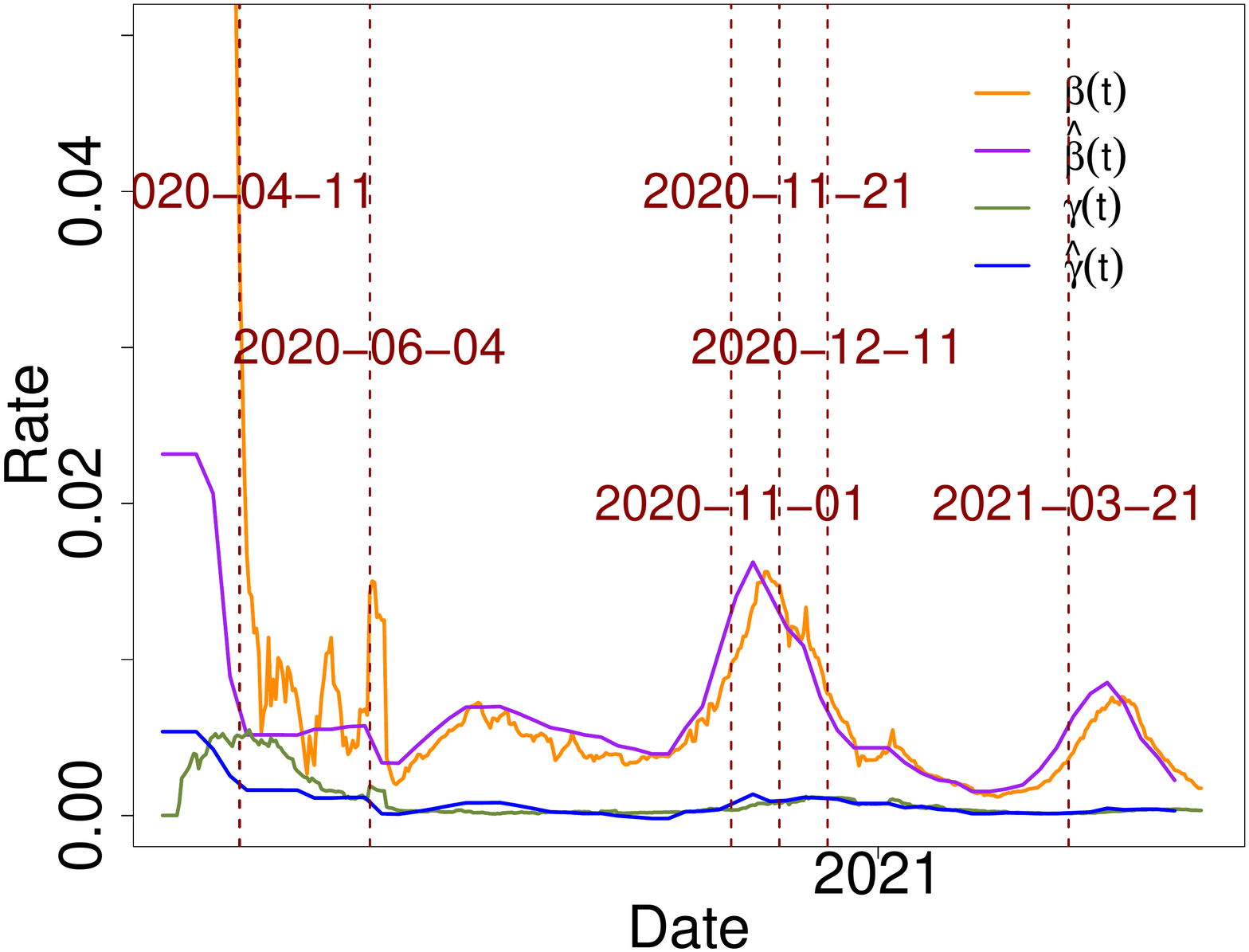}
         \subcaption{MI (Model 1)}
     \end{subfigure}
        \caption{The 7-day moving average of observed transmission rate $\beta(t)$  (orange) and recovery rate  $\gamma(t)$(green) along with the estimated transmission rate  $\widehat{\beta}(t)$ (purple) and recovery rate $\widehat{\gamma}(t)$ (blue)  in several states. The vertical black dotted line indicates the statewide ``stay-at-home'' order begin date or reopening begin date.
        The vertical dark red dashed line indicates the estimated change time point for each state.
        }
        \label{fig:rates_states_smooth}
\end{figure*}

Few additional plots and tables related to the results of applying our method to some U.S. states are provided in this section. 
Let $I(t)$ and $R(t)$ denote the number of infected and recovered individuals (cases) on day $t$. Figure \ref{fig:numbers_states} depicts the actual case numbers $I(t)$ and $R(t)$ in the five states considered.
In Figure \ref{fig:fraction_adj}, we provide fractions of infected and recovered cases in given states and their neighboring states selected by similarity score in the Model 2.3. 

In Figure \ref{fig:rates_states_smooth}, we show the 7-day moving average of measured transmission rate $\beta(t)$ and recovery rate $\gamma(t)$
along with the estimated $\widehat{\beta}(t)$ and $\widehat{\gamma}(t)$ from Model 1.
To further examine the spread of the COVID-19, one could consider
\[R_0(t) = \frac{\beta(t)}{\gamma(t)},\]
where $R_0(t)$ is the basic reproduction number of a newly infected person at time $t$. It mean that an infected person can further infect on average  $R_0(t)$ persons. 
Generally, when $R_0 > 1$, the infection will be able to start spreading in a population with an exponential rate, and it will be hard to control the epidemic.

Further, the observed and fitted number of infected cases and recovered cases are displayed in Figure~\ref{fig:number of infected} and Figure~\ref{fig:number of recovered}, respectively. The fitted number of infected cases and recovered cases are defined as 
\begin{align}
    \widetilde{I}(t) = I(1) + \sum_{k=1}^{t-1}\widehat{\Delta I}(k), \quad \widetilde{R}(t) = R(1) + \sum_{k=1}^{t-1}\widehat{\Delta R}(k),
\end{align}
for all $t = 2,\dots, T$.

\begin{figure*}[ht!]
     \centering
     \begin{subfigure}[b]{0.19\textwidth}
         \centering
         \includegraphics[width=\textwidth]{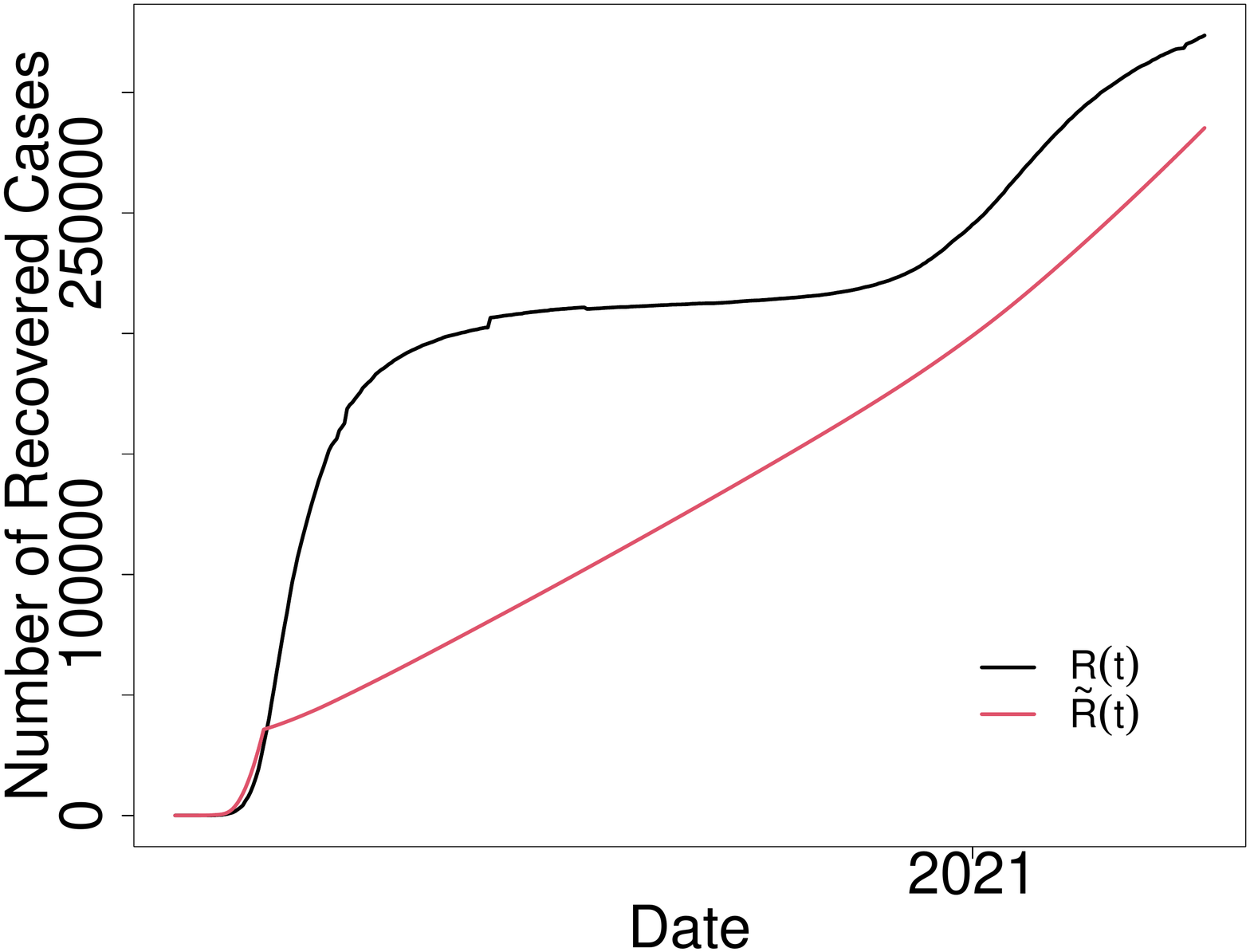}
         \subcaption{NY (Model 1)}
     \end{subfigure}
    \begin{subfigure}[b]{0.19\textwidth}
         \centering
         \includegraphics[width=\textwidth]{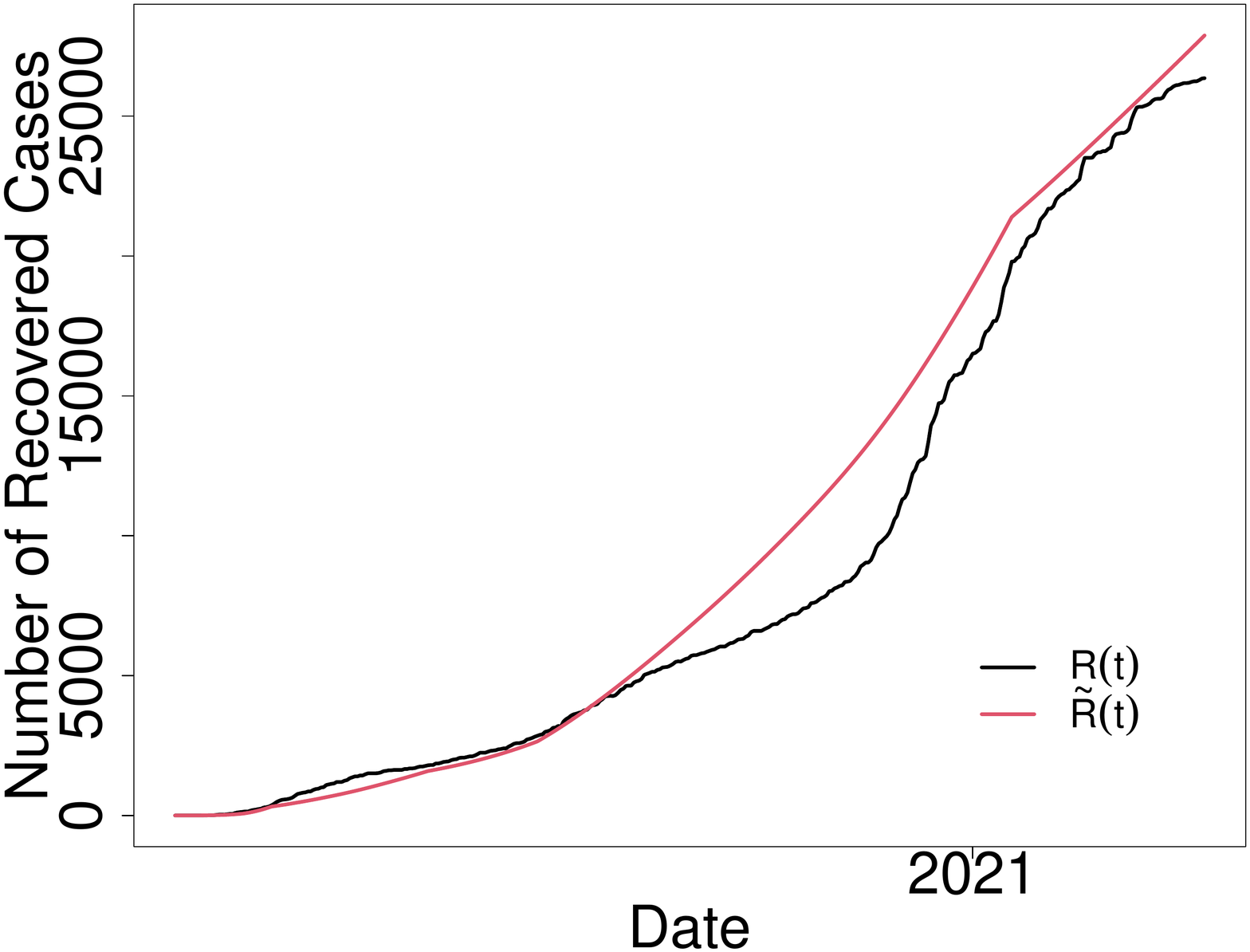}
         \subcaption{OR (Model 1)}
     \end{subfigure}
     \begin{subfigure}[b]{0.19\textwidth}
         \centering
         \includegraphics[width=\textwidth]{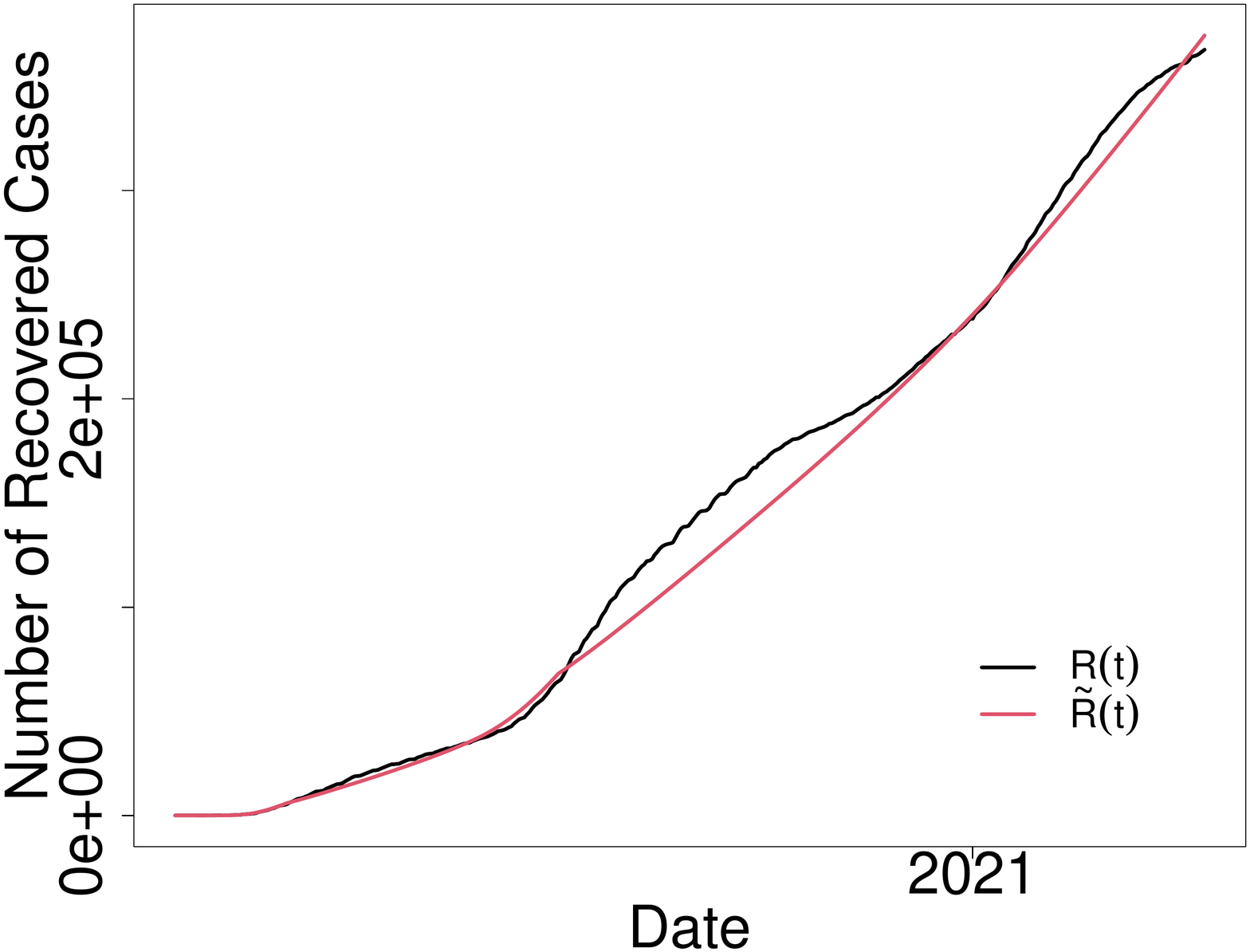}
         \subcaption{FL (Model 1)}
     \end{subfigure}
     \begin{subfigure}[b]{0.19\textwidth}
         \centering
         \includegraphics[width=\textwidth]{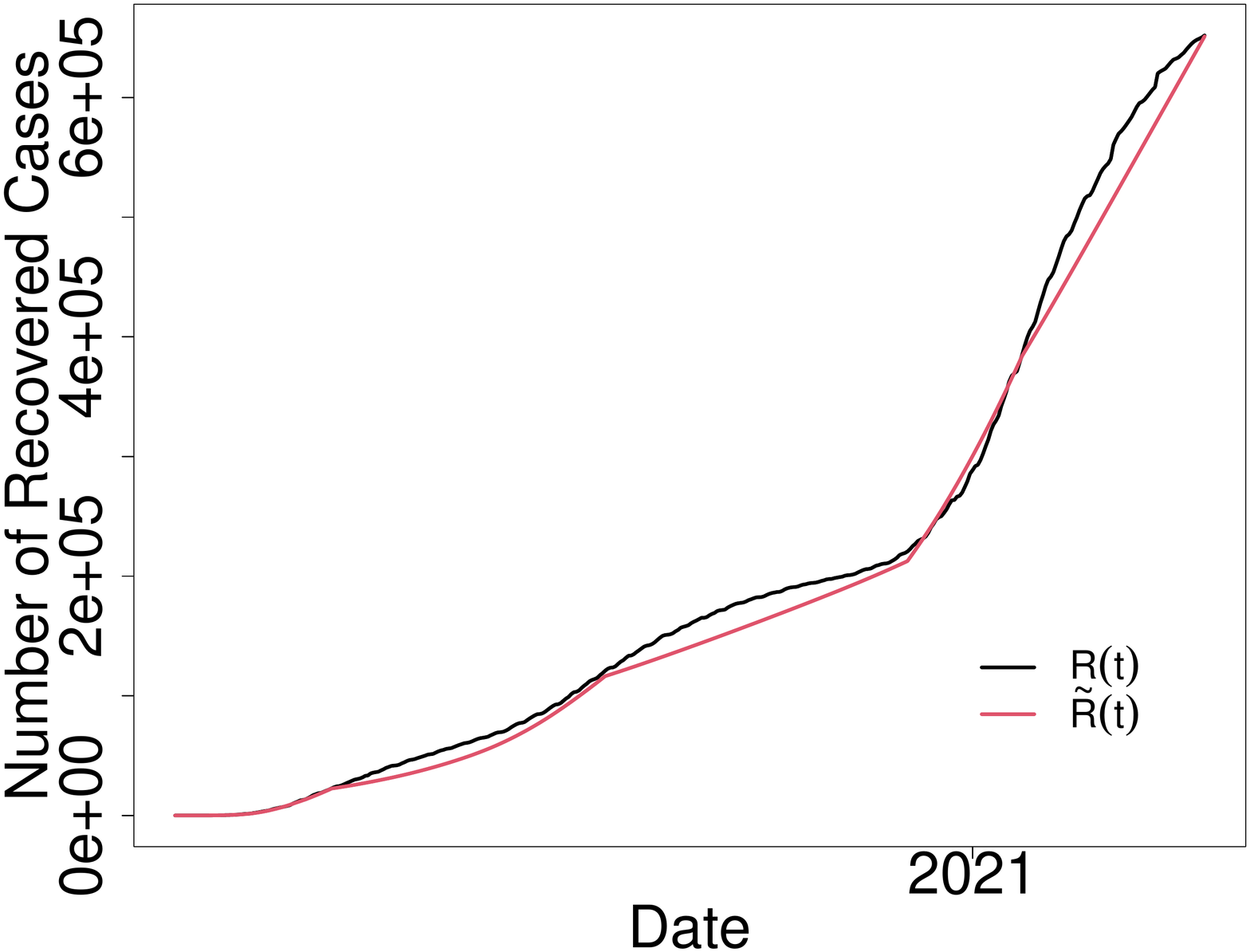}
         \subcaption{CA (Model 1)}
     \end{subfigure}
     \begin{subfigure}[b]{0.19\textwidth}
         \centering
         \includegraphics[width=\textwidth]{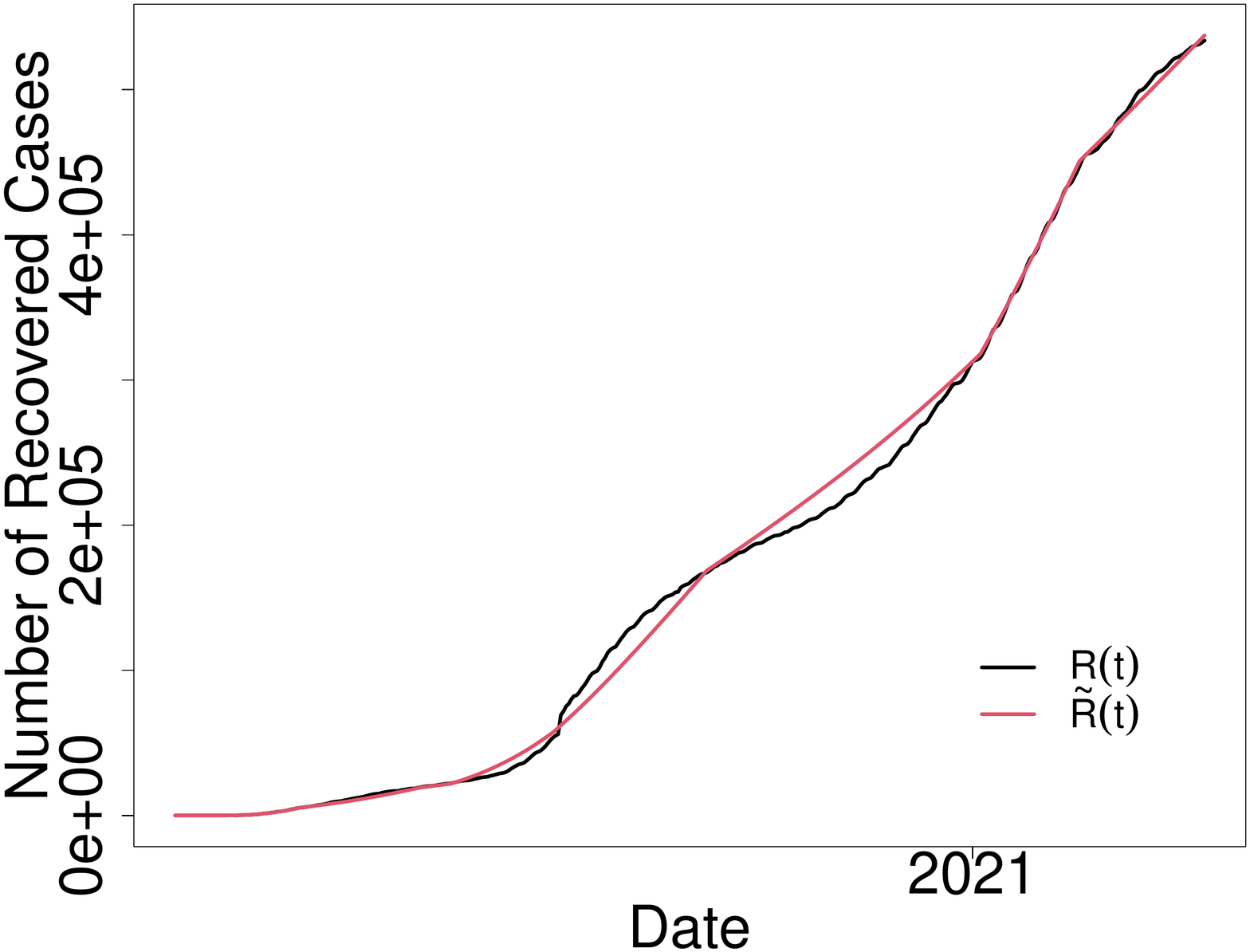}
         \subcaption{TX (Model 1)}
     \end{subfigure}
     
     \begin{subfigure}[b]{0.19\textwidth}
         \centering
         \includegraphics[width=\textwidth]{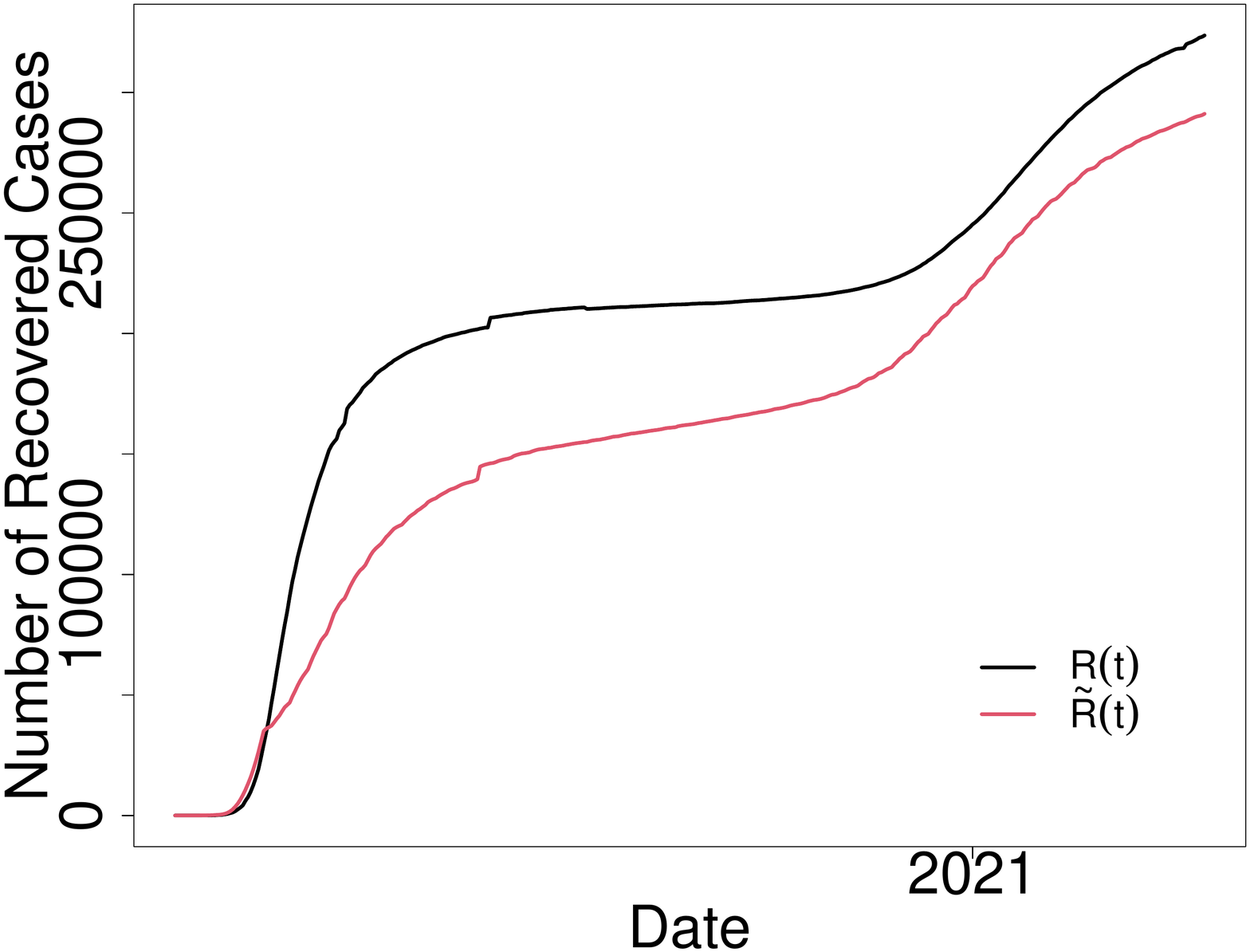}
         \subcaption{NY (Model 2.3)}
     \end{subfigure}
    \begin{subfigure}[b]{0.19\textwidth}
         \centering
         \includegraphics[width=\textwidth]{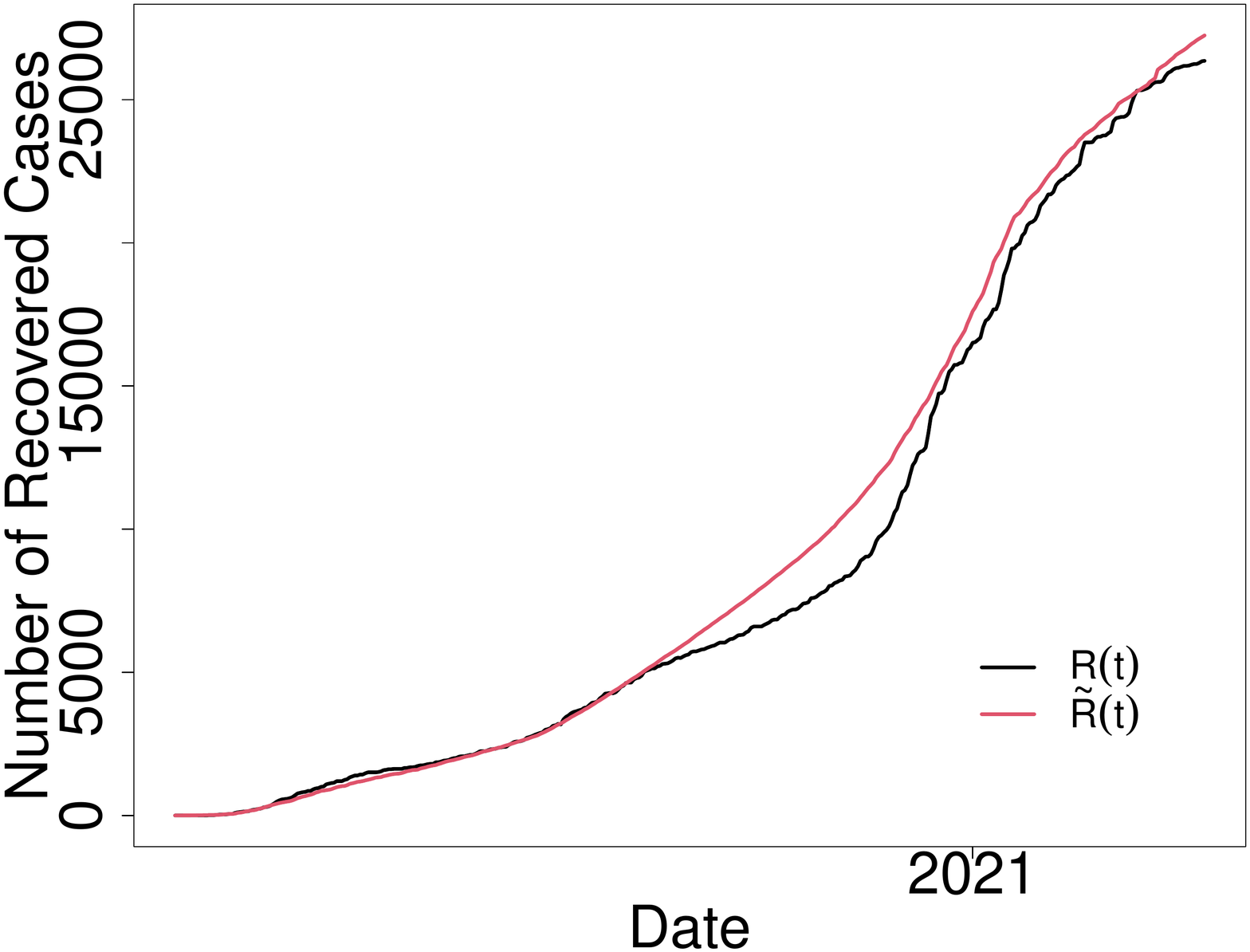}
         \subcaption{OR (Model 2.3)}
     \end{subfigure}
     \begin{subfigure}[b]{0.19\textwidth}
         \centering
         \includegraphics[width=\textwidth]{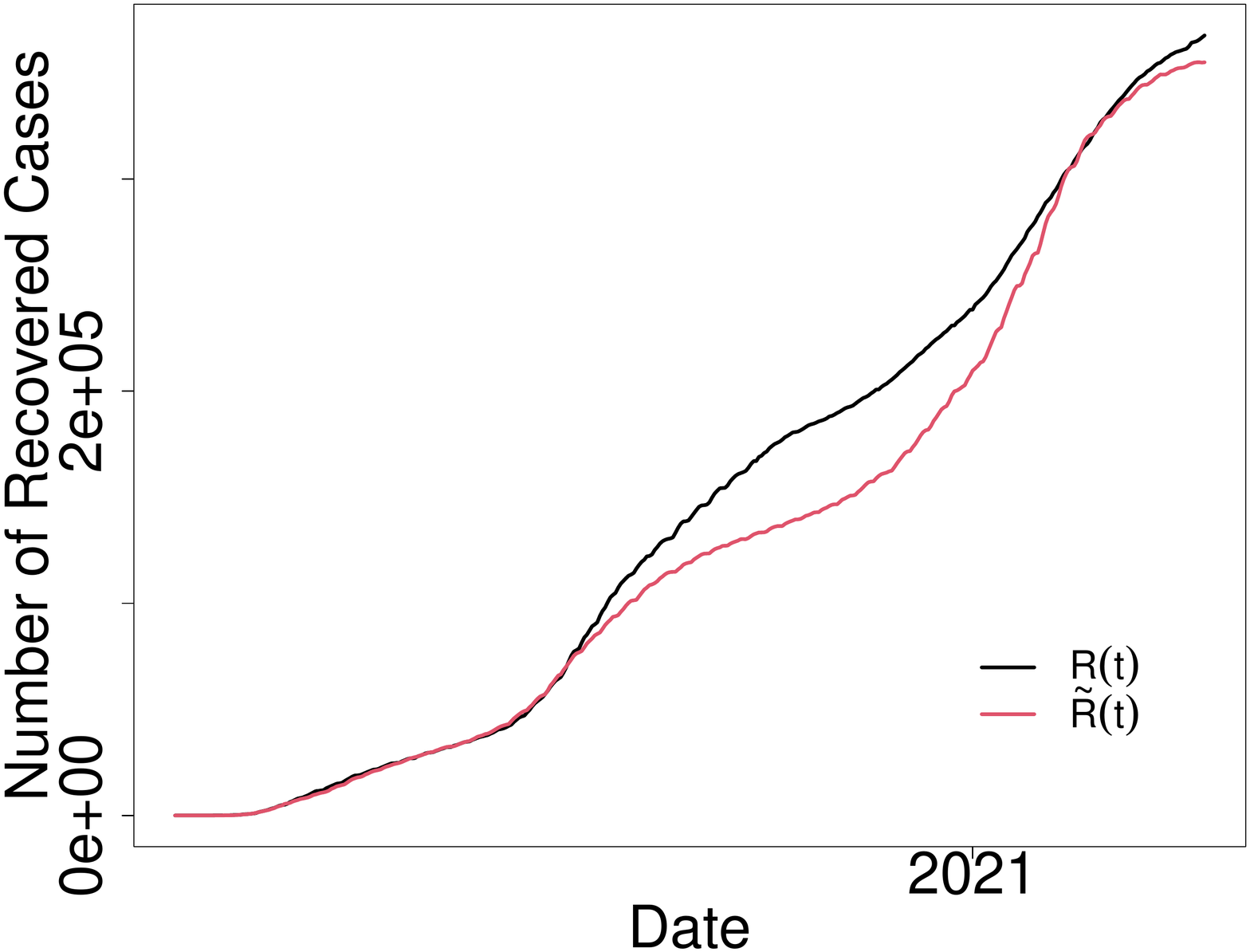}
         \subcaption{FL (Model 2.3)}
     \end{subfigure}
     \begin{subfigure}[b]{0.19\textwidth}
         \centering
         \includegraphics[width=\textwidth]{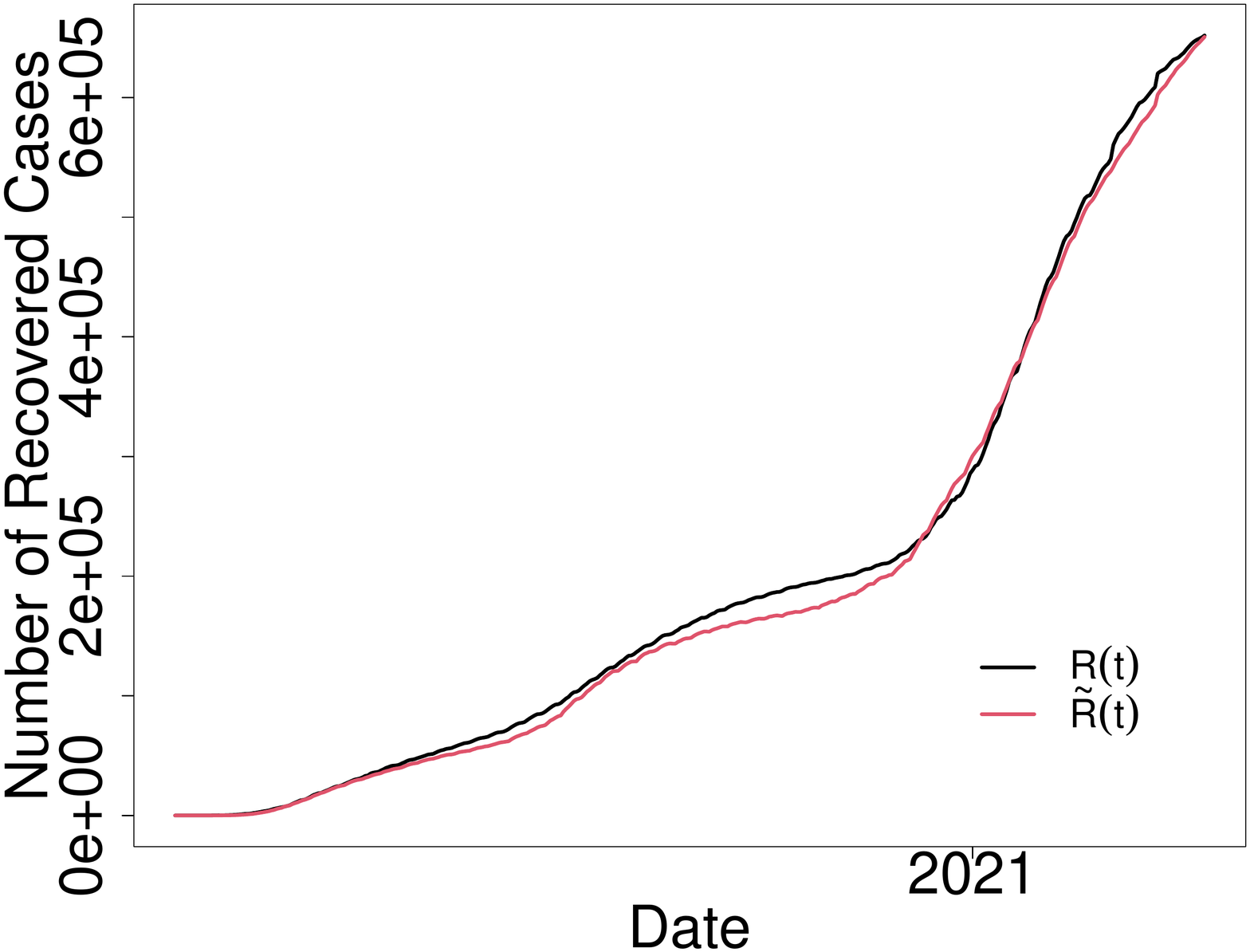}
         \subcaption{CA (Model 2.3)}
     \end{subfigure}
     \begin{subfigure}[b]{0.19\textwidth}
         \centering
         \includegraphics[width=\textwidth]{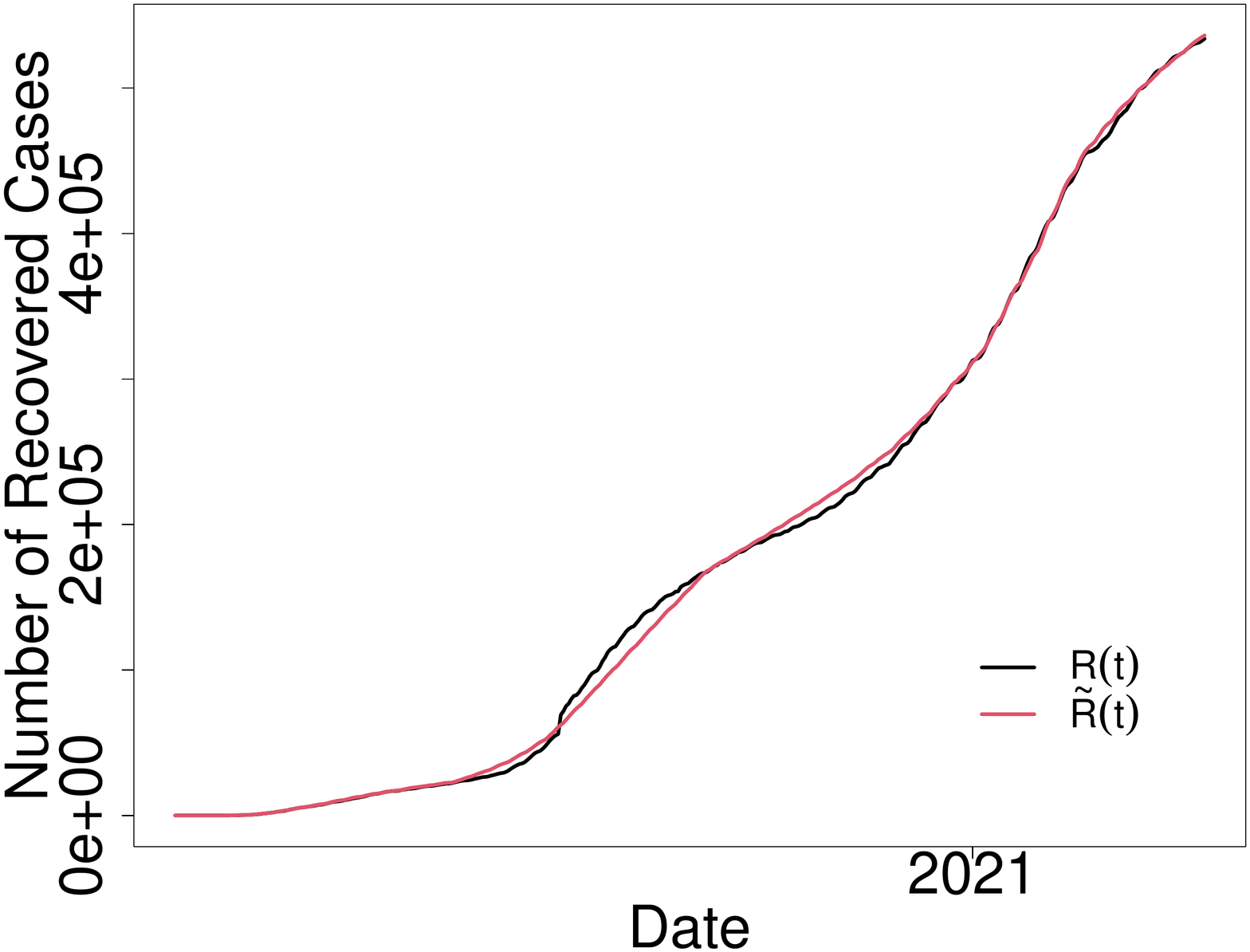}
         \subcaption{TX (Model 2.3)}
     \end{subfigure}
     
      \begin{subfigure}[b]{0.19\textwidth}
         \centering
         \includegraphics[width=\textwidth]{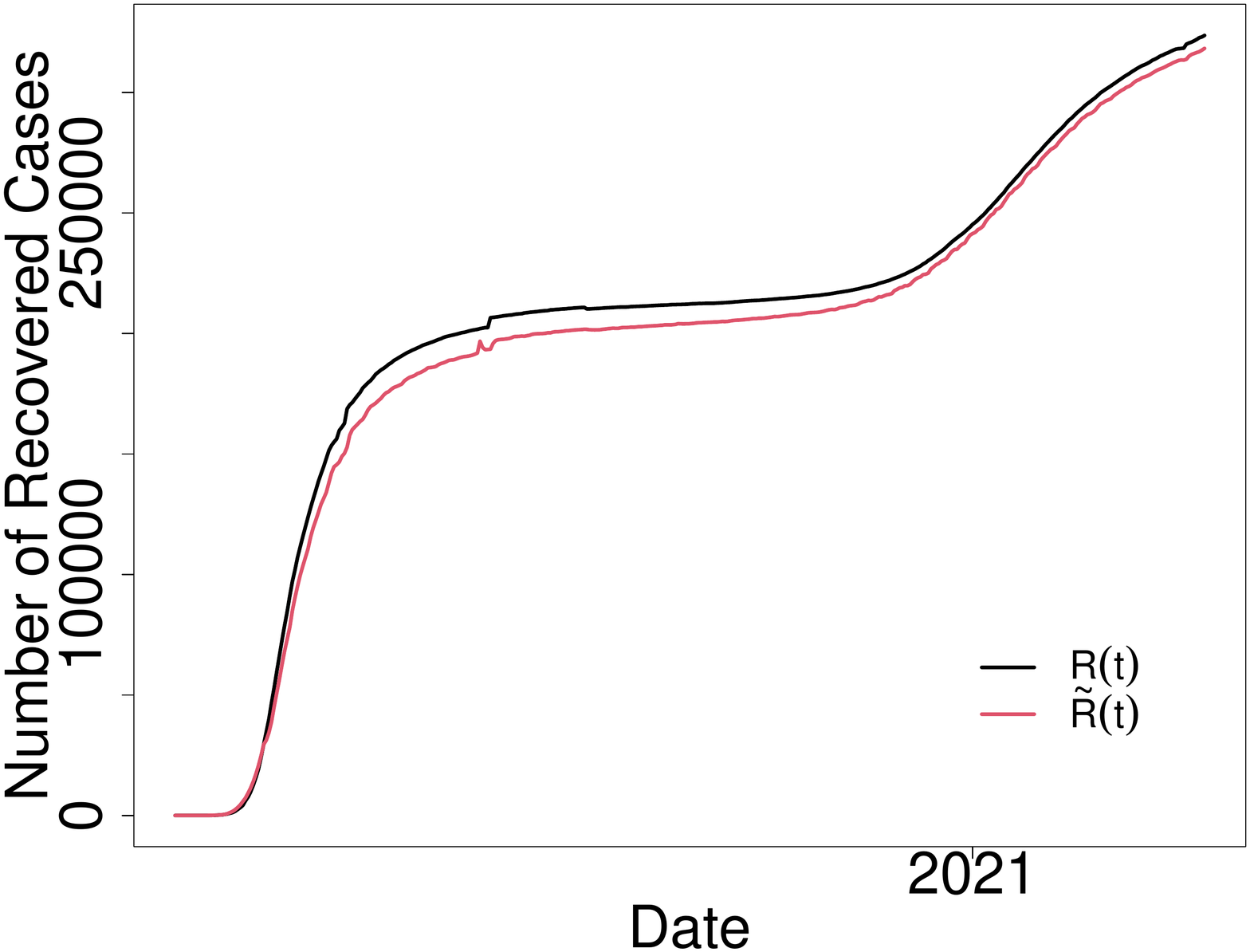}
         \subcaption{NY (Model 3)}
     \end{subfigure}
     \begin{subfigure}[b]{0.19\textwidth}
         \centering
         \includegraphics[width=\textwidth]{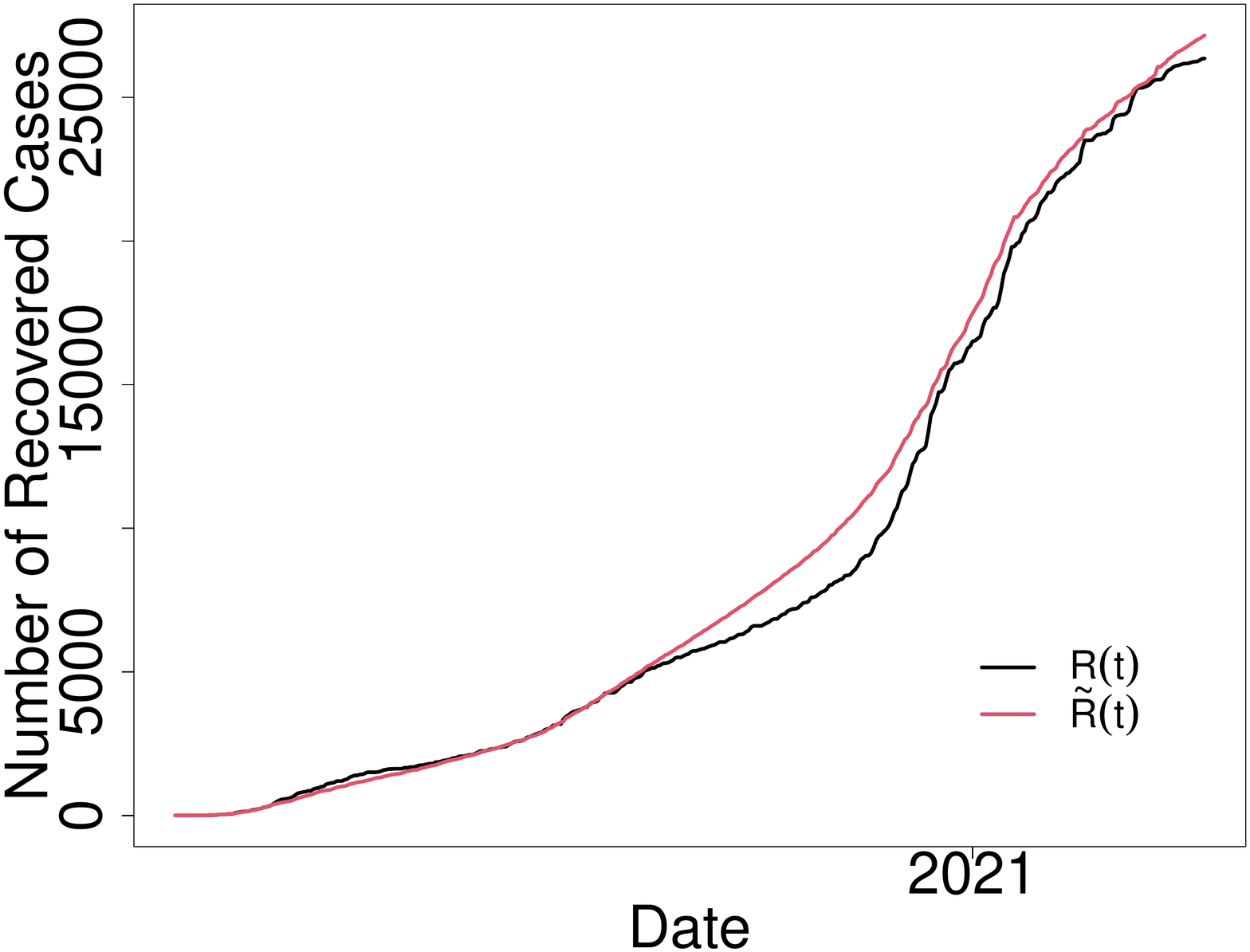}
         \subcaption{OR (Model 3)}
     \end{subfigure}
     \begin{subfigure}[b]{0.19\textwidth}
         \centering
         \includegraphics[width=\textwidth]{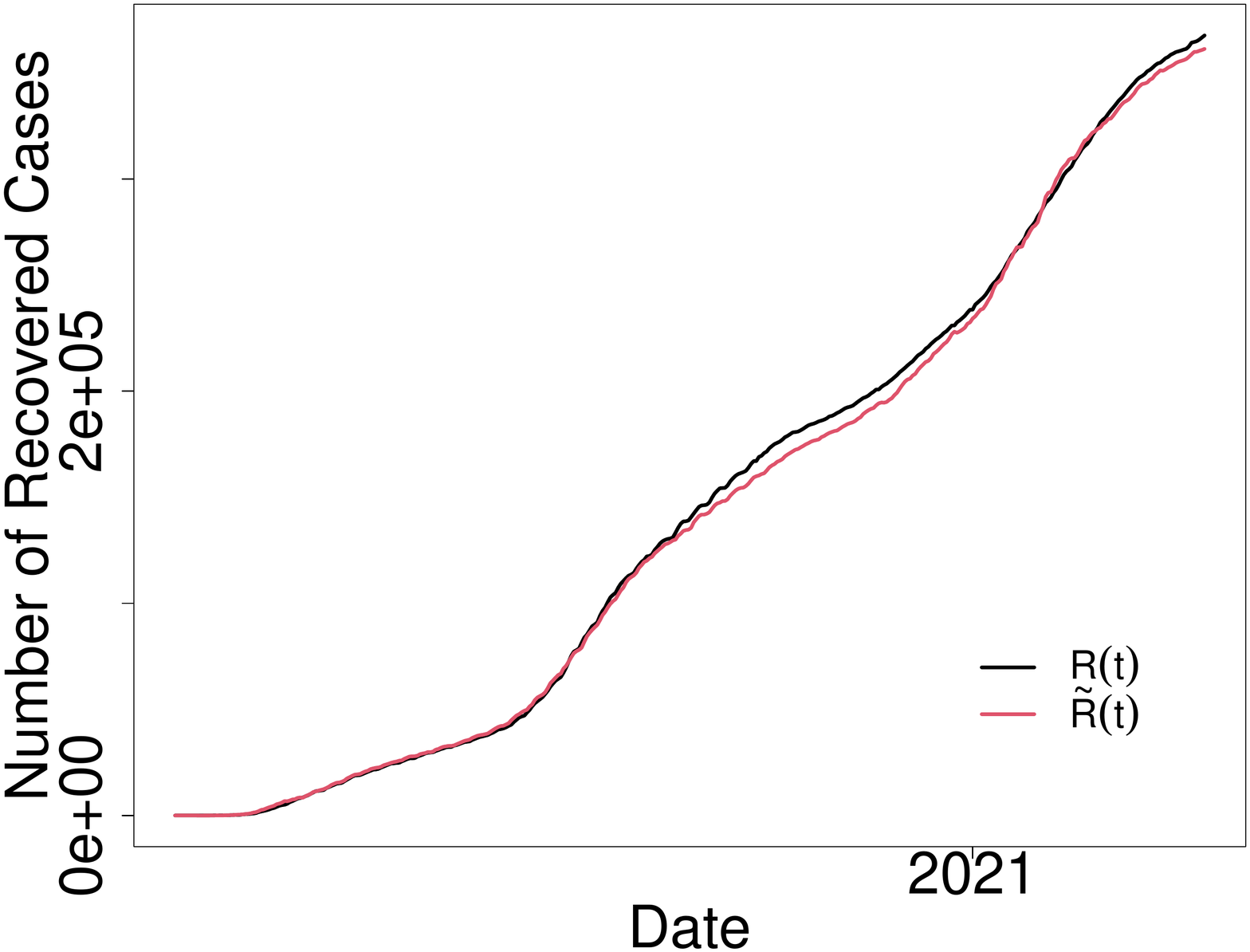}
         \subcaption{FL (Model 3)}
     \end{subfigure}
     \begin{subfigure}[b]{0.19\textwidth}
         \centering
         \includegraphics[width=\textwidth]{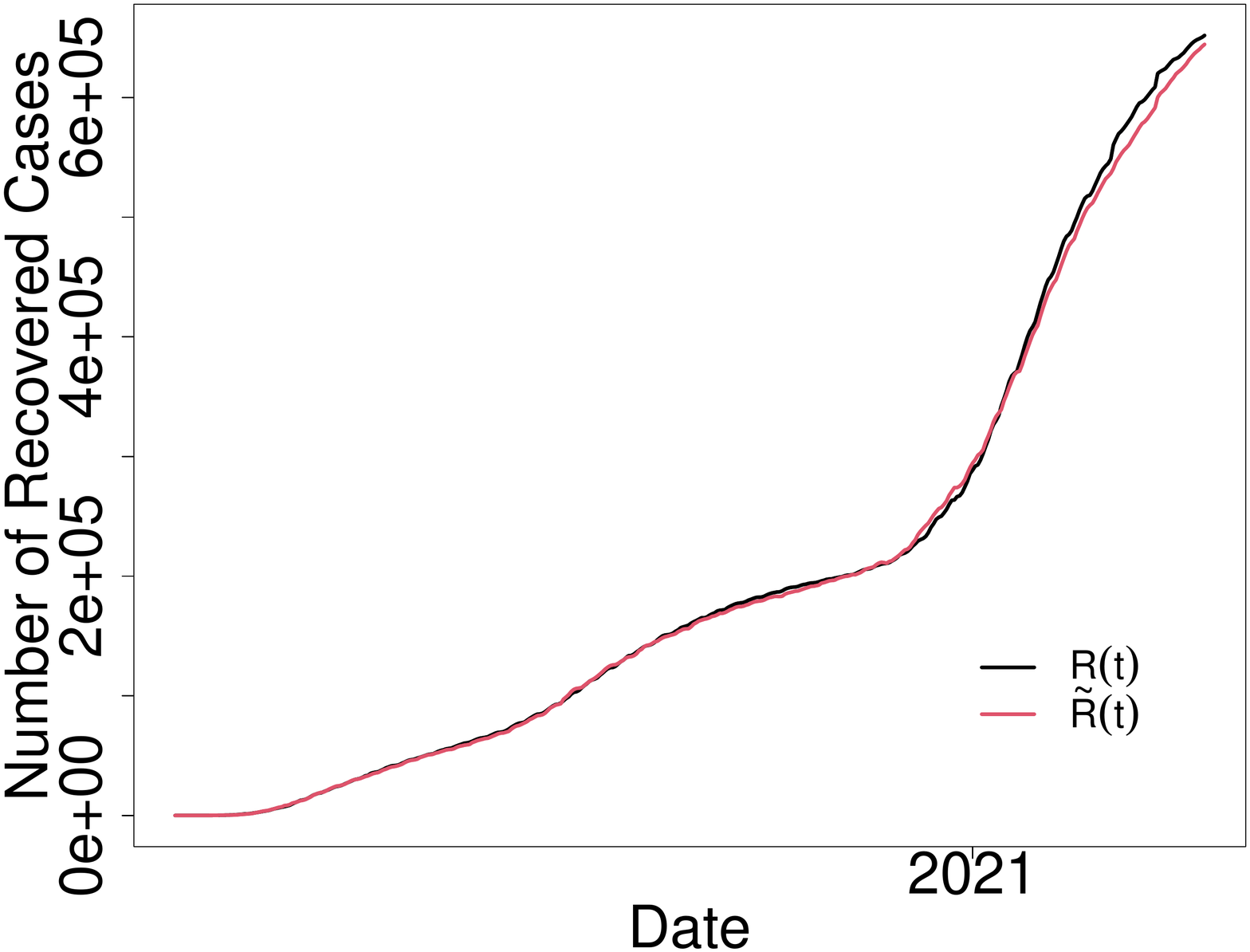}
         \subcaption{CA (Model 3)}
     \end{subfigure}
     \begin{subfigure}[b]{0.19\textwidth}
         \centering
         \includegraphics[width=\textwidth]{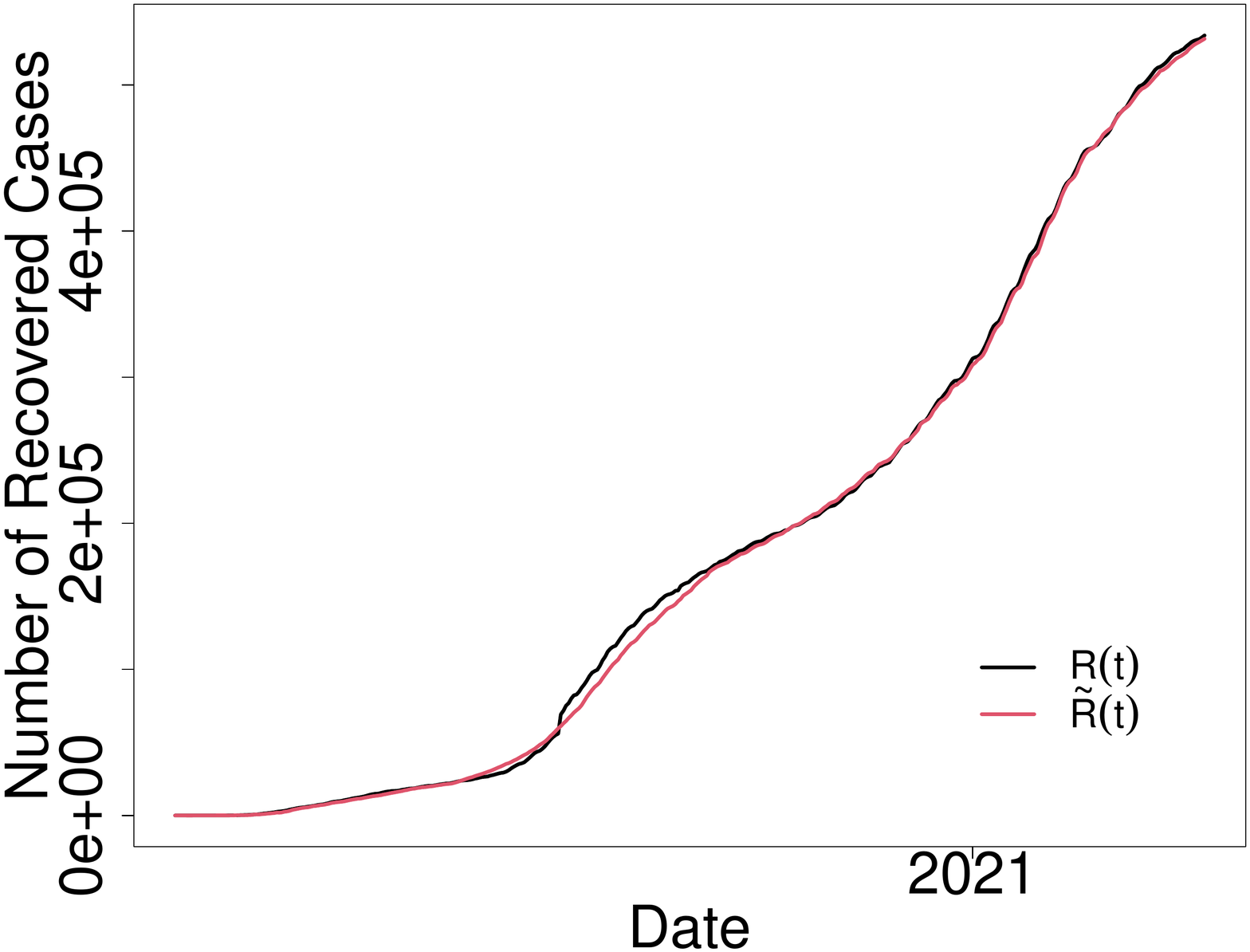}
         \subcaption{TX (Model 3)}
     \end{subfigure}
        \caption{Observed (black) and fitted (red) number of recovered cases estimated by three models in five states. }
        \label{fig:number of recovered}
\end{figure*}

\begin{table}[!ht]
\caption{\label{table_alpha} 
Spatial effect estimation in Model 2, including the estimate, p-value, and 95\% confidence intervals for the parameter $\alpha$ in five states.}
\scriptsize
\centering
\begin{tabular}{lcccc} 
  \hline
  \hline
 & Model  & Estimate & P-value & Confidence interval\\ 
   \hline
\multirow{ 4}{*}{NY} & Model 2.1 & 0.9227 & 2.3705e-50 & ( 0.8098 , 1.0355 ) \\ 
   & Model 2.2 & 0.9323 & 7.7901e-50 & ( 0.8176 , 1.047 ) \\ 
   & Model 2.3 & 0.8725 & 7.3138e-71 & ( 0.7859 , 0.9592 ) \\ 
   & Model 2.4 & 0.8355 & 8.6397e-21 & ( 0.6651 , 1.006 ) \\ 
  \multirow{ 4}{*}{OR} & Model 2.1 & 0.2576 & 1.6331e-57 & ( 0.2286 , 0.2867 ) \\ 
   & Model 2.2 & 0.2632 & 1.4148e-56 & ( 0.2332 , 0.2932 ) \\ 
   & Model 2.3 & 0.3118 & 8.9138e-31 & ( 0.261 , 0.3626 ) \\ 
   & Model 2.4 & 0.3891 & 9.2506e-85 & ( 0.3546 , 0.4235 ) \\ 
  \multirow{ 4}{*}{FL} & Model 2.1 & 0.7748 & 2.9034e-92 & ( 0.7098 , 0.8397 ) \\ 
   & Model 2.2 & 0.7525 & 8.5616e-90 & ( 0.6883 , 0.8167 ) \\ 
   & Model 2.3 & 0.8723 & 5.1120e-104 & ( 0.8048 , 0.9398 ) \\ 
   & Model 2.4 & 0.793 & 1.0635e-62 & ( 0.7081 , 0.8779 ) \\ 
  \multirow{ 4}{*}{CA} & Model 2.1 & 0.4665 & 1.2187e-61 & ( 0.4161 , 0.517 ) \\ 
   & Model 2.2 & 0.4739 & 1.0232-62 & ( 0.4232 , 0.5247 ) \\ 
   & Model 2.3 & 0.7472 & 4.7916e-88 & ( 0.6826 , 0.8117 ) \\ 
   & Model 2.4 & 0.6639 & 7.1997e-57 & ( 0.5885 , 0.7393 ) \\ 
  \multirow{ 4}{*}{TX} & Model 2.1 & 0.2127 & 6.7714e-17 & ( 0.1639 , 0.2616 ) \\ 
   & Model 2.2 & 0.2231 & 3.3484e-17 & ( 0.1724 , 0.2738 ) \\ 
   & Model 2.3 & 0.4885 & 4.6776e-36 & ( 0.4159 , 0.561 ) \\ 
   & Model 2.4 & 0.4808 & 3.3429e-29 & ( 0.4001 , 0.5615 ) \\ 
   \hline
\end{tabular}
\end{table}

In Table \ref{table_alpha},  we provide the estimate, p-value, and 95\% confidence intervals for the parameter $\alpha$ in Model 2 for all five states. 

For a stationary time series $X_t$, the auto-covariance function (ACVF) at lag $h$ is defined as
\begin{equation}
    \gamma(h) = \text{Cov}(X_t, X_{t+h}),
\end{equation}
and the auto-correlation function (ACF) at lag $h$ is defined as
\begin{equation}
    \rho(h) =\gamma(h)/ \gamma(0).
\end{equation}

Let $\widehat{\epsilon}$ and $\widetilde{\epsilon}$ be the residuals from models 2 and 3, respectively. We use auto-correlation function (ACF)  to measure the degree of dependence among the residual processes at different times. Figure \ref{fig:acf}  shows the ACF of these residual time series. As can be seen from these plots,  in all states, there is a statistically significant auto-correlation in the process $\widehat{\epsilon}$ at different times while this temporal dependence has been reduced significantly in the residuals of Model~3 which confirms the applicability of VAR modeling for the considered data sets.

\begin{figure*}[ht!]
     \centering
     \begin{subfigure}[b]{0.19\textwidth}
         \centering
         \includegraphics[width=\textwidth]{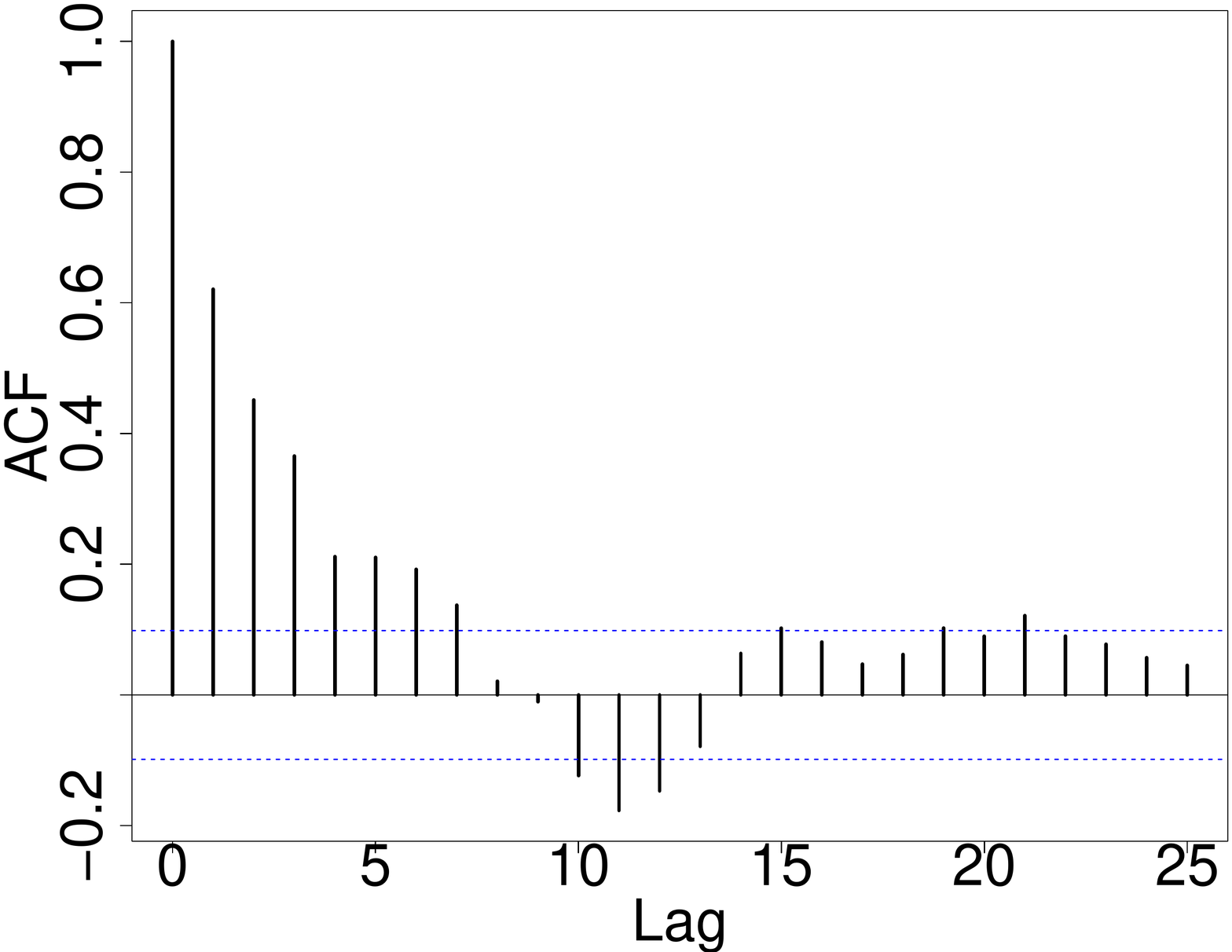}
         \subcaption{NY $\widehat{\epsilon}(t) (\Delta I)$}
     \end{subfigure}
     \begin{subfigure}[b]{0.19\textwidth}
         \centering
         \includegraphics[width=\textwidth]{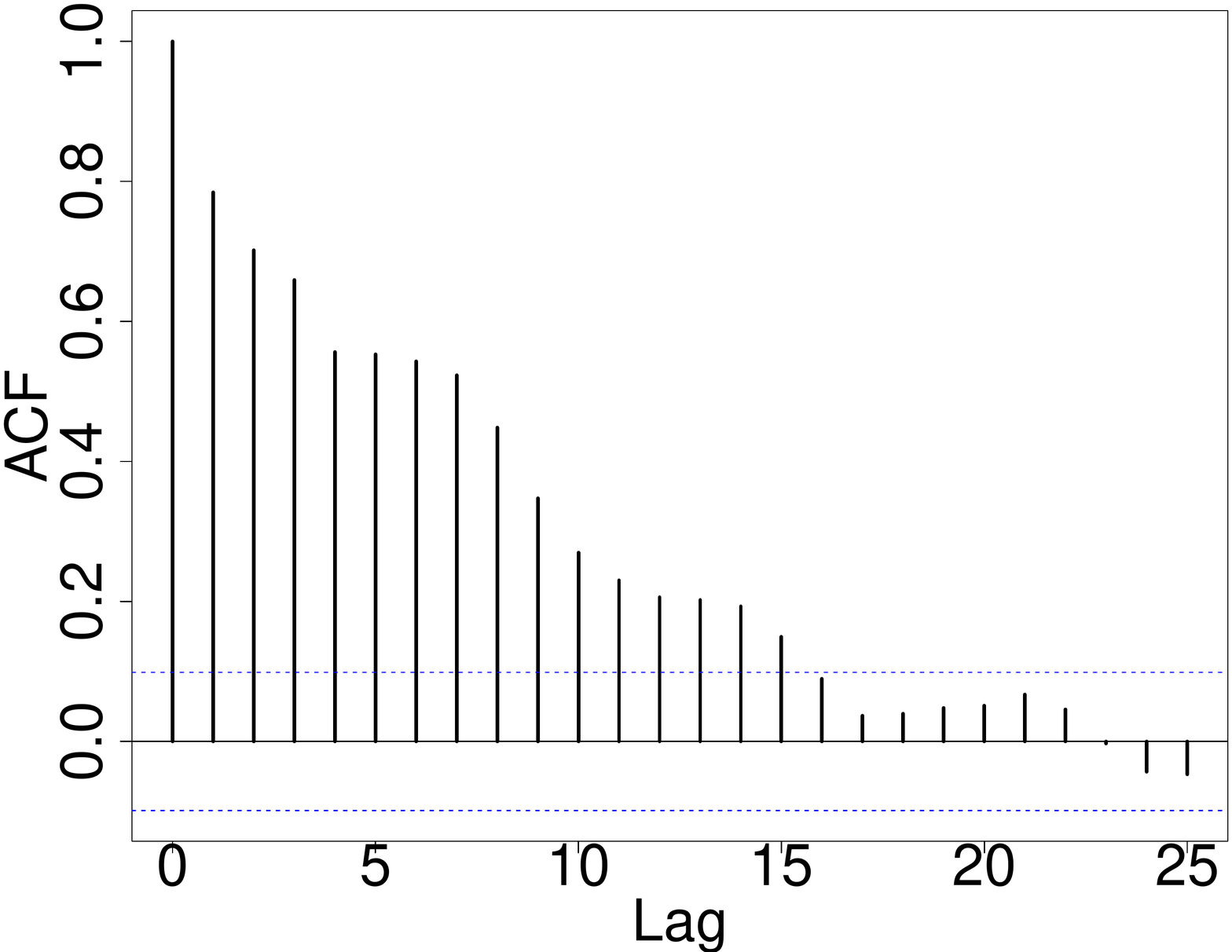}
         \subcaption{NY $\widehat{\epsilon}(t) (\Delta R)$}
     \end{subfigure}
     \begin{subfigure}[b]{0.19\textwidth}
         \centering
         \includegraphics[width=\textwidth]{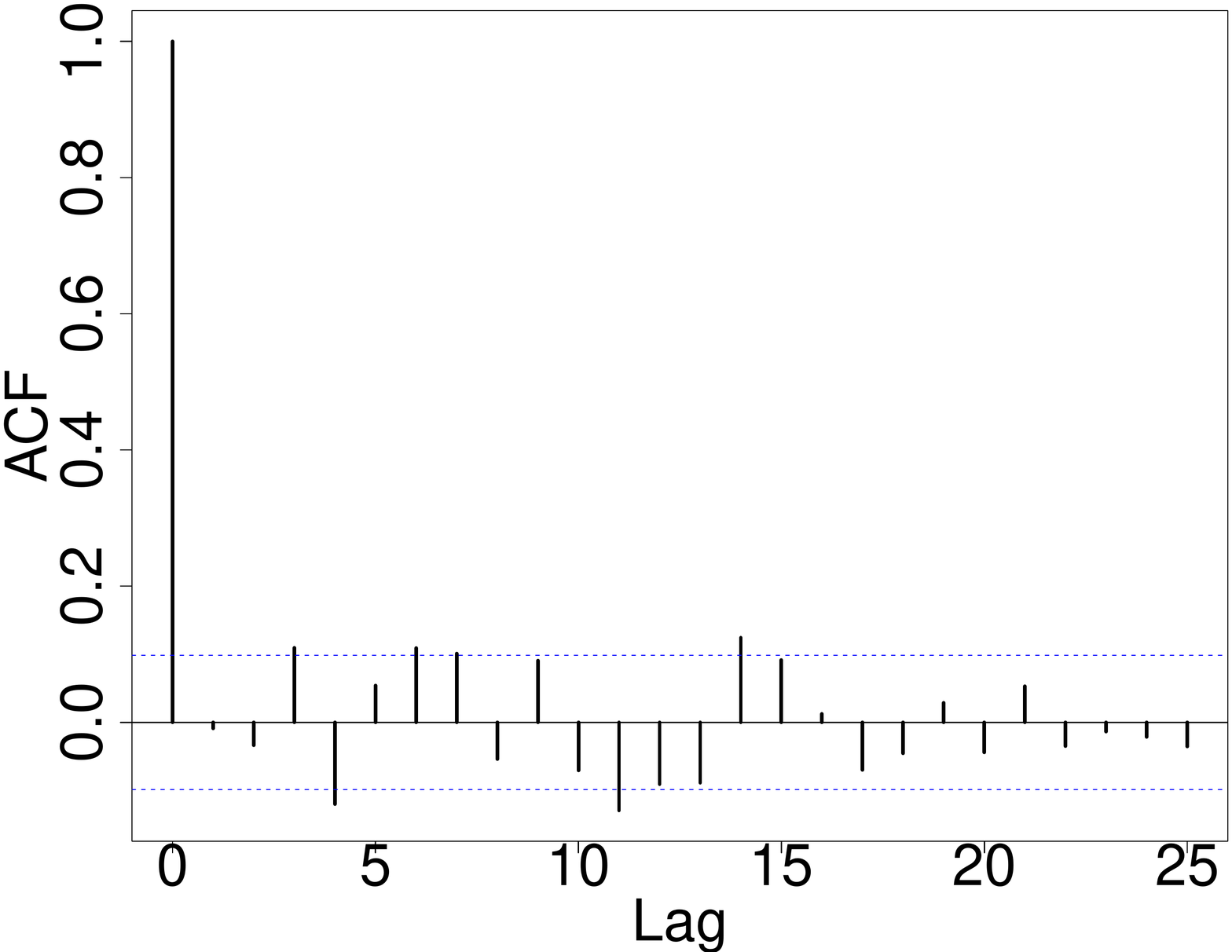}
         \subcaption{NY $\widetilde{\epsilon}(t) (\Delta I)$}
     \end{subfigure}
     \begin{subfigure}[b]{0.19\textwidth}
         \centering
         \includegraphics[width=\textwidth]{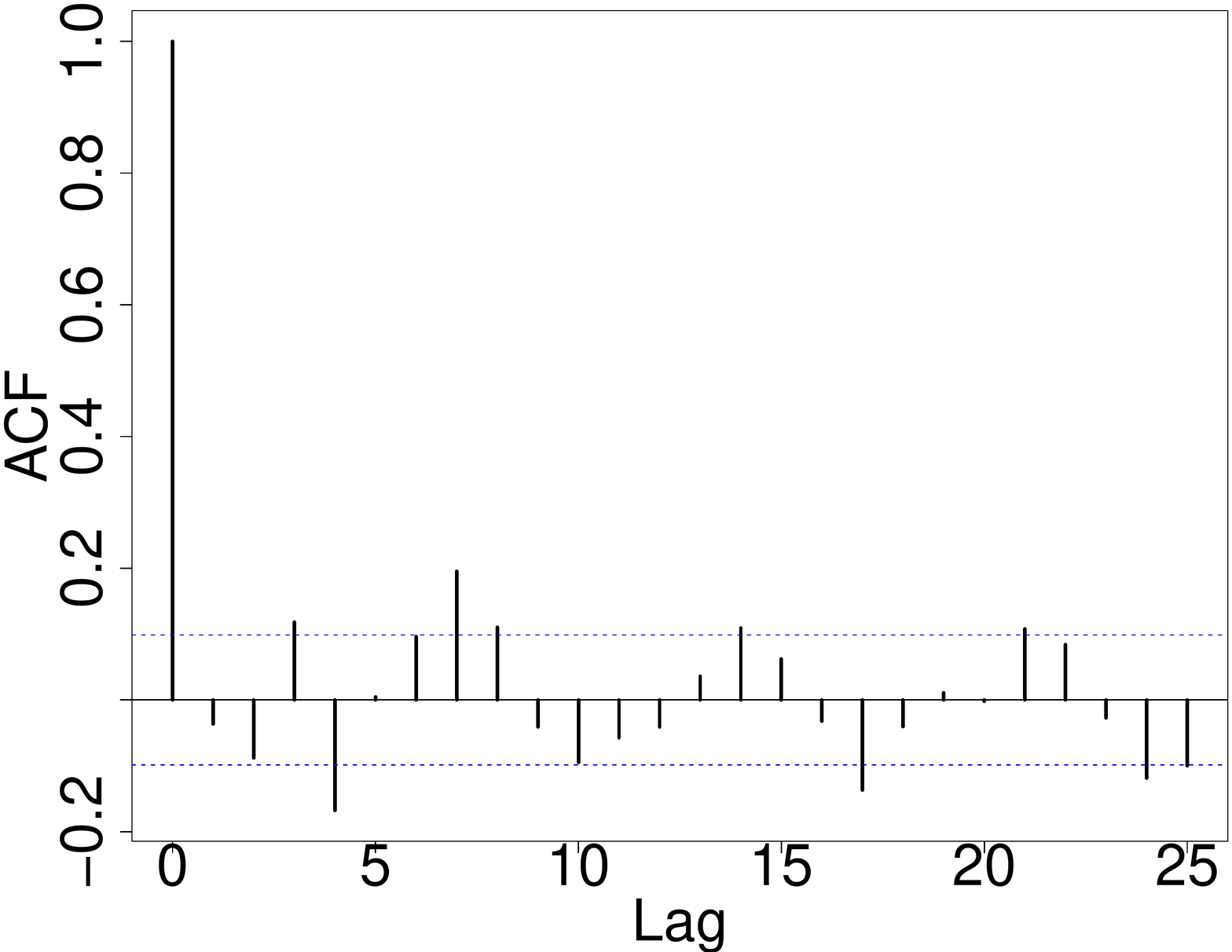}
         \subcaption{NY $\widetilde{\epsilon}(t) (\Delta R)$}
     \end{subfigure}
     
     \begin{subfigure}[b]{0.19\textwidth}
         \centering
         \includegraphics[width=\textwidth]{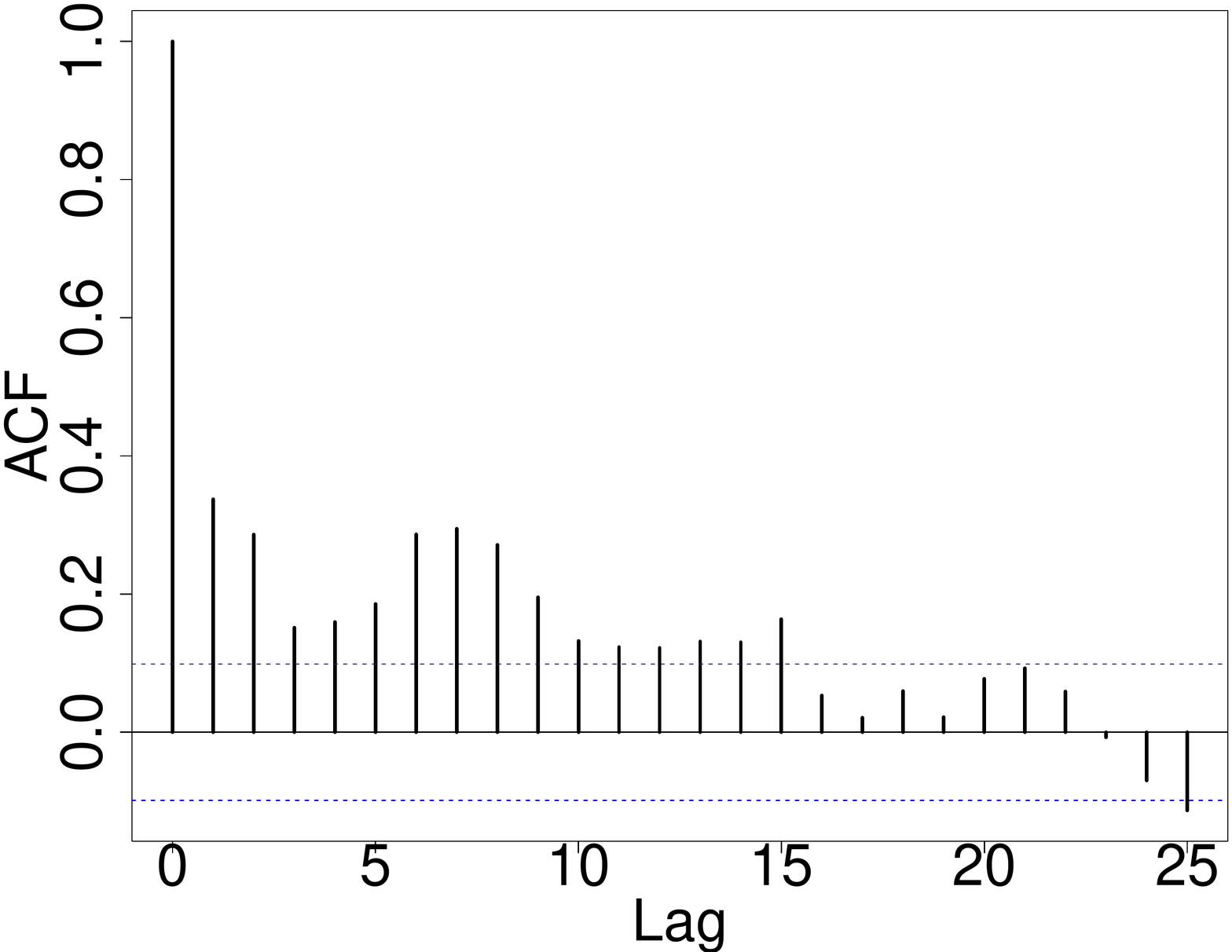}
         \subcaption{OR $\widehat{\epsilon}(t) (\Delta I)$}
     \end{subfigure}
     \begin{subfigure}[b]{0.19\textwidth}
         \centering
         \includegraphics[width=\textwidth]{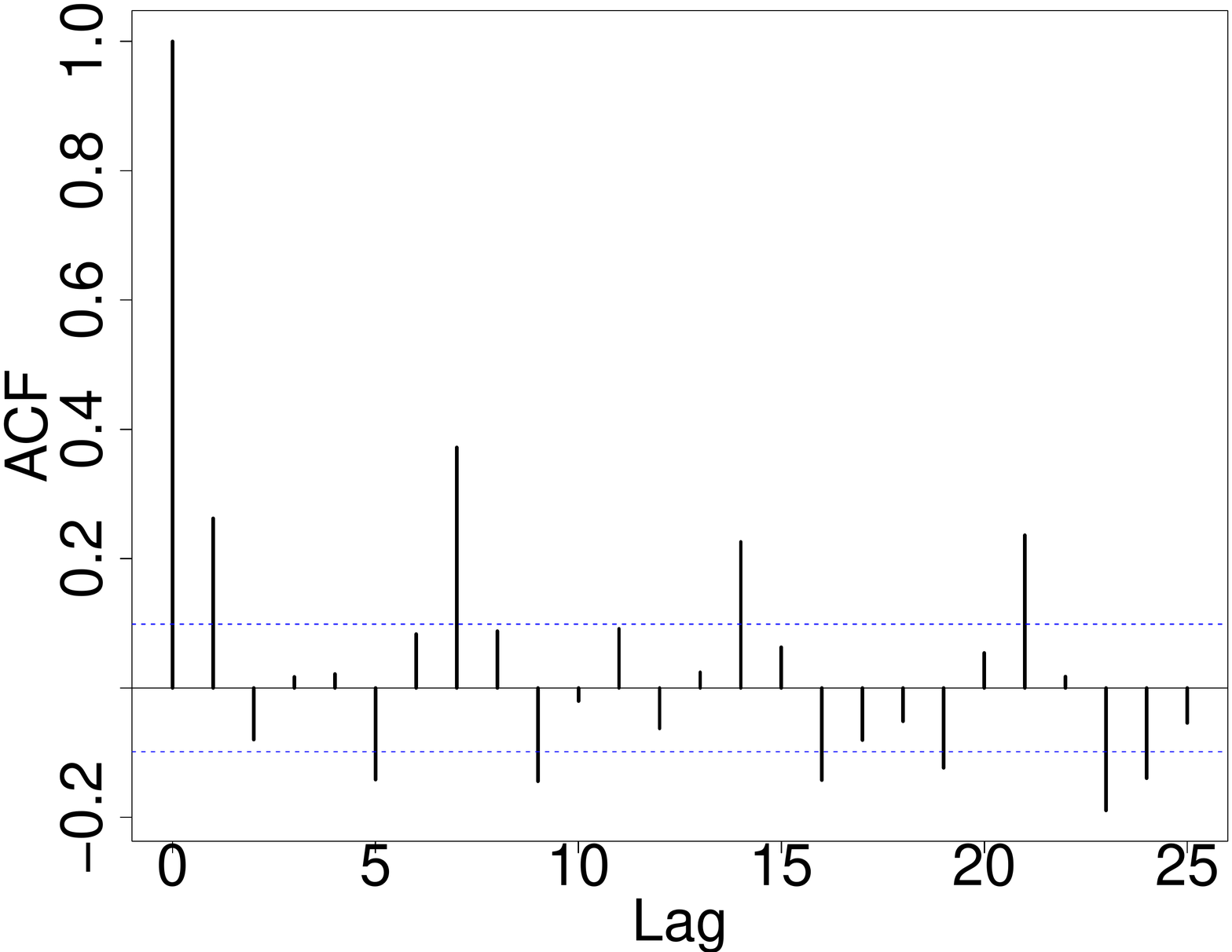}
         \subcaption{OR $\widehat{\epsilon}(t) (\Delta R)$}
     \end{subfigure}
     \begin{subfigure}[b]{0.19\textwidth}
         \centering
         \includegraphics[width=\textwidth]{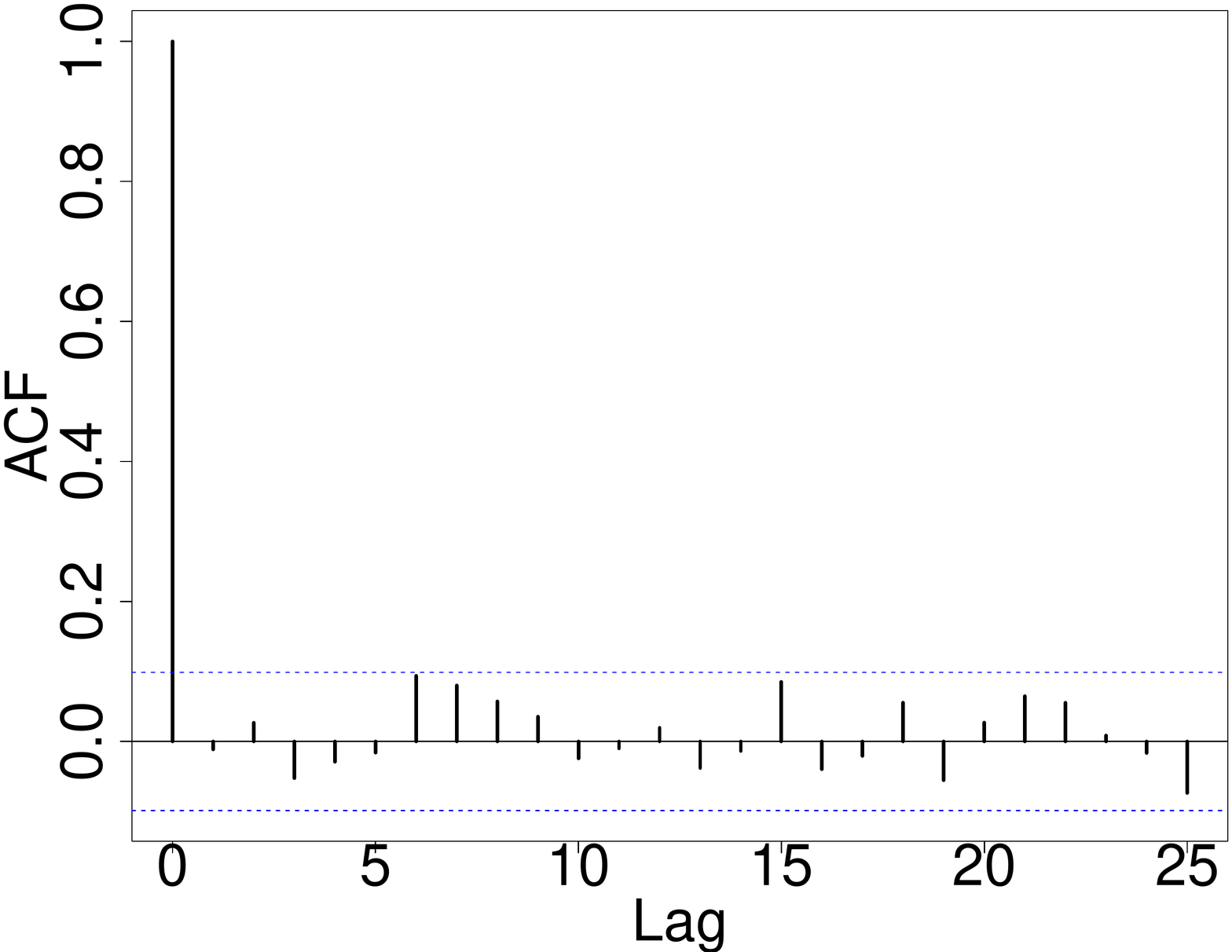}
         \subcaption{OR $\widetilde{\epsilon}(t) (\Delta I)$}
     \end{subfigure}
     \begin{subfigure}[b]{0.19\textwidth}
         \centering
         \includegraphics[width=\textwidth]{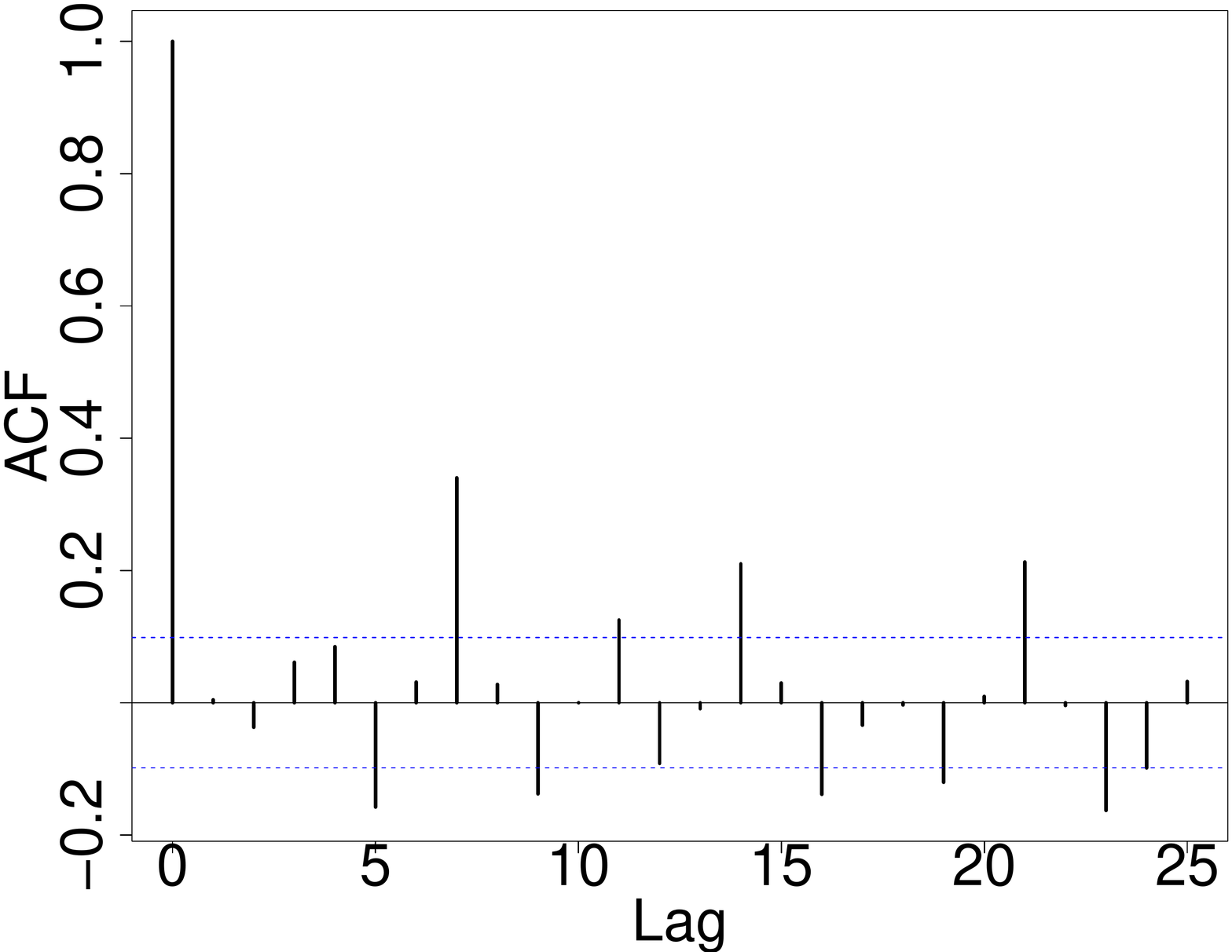}
         \subcaption{OR $\widetilde{\epsilon}(t) (\Delta R)$}
     \end{subfigure}
     
     \begin{subfigure}[b]{0.19\textwidth}
         \centering
         \includegraphics[width=\textwidth]{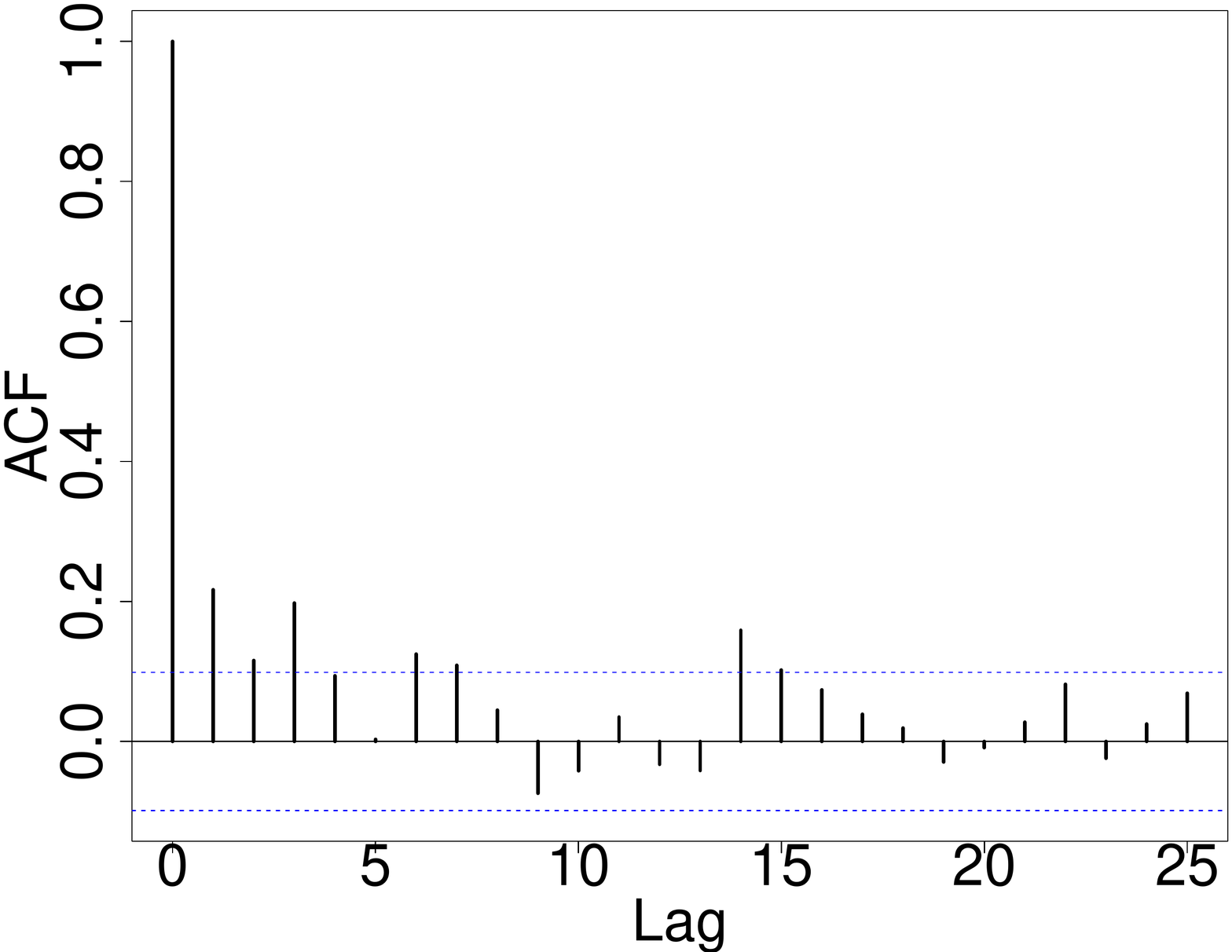}
         \subcaption{FL $\widehat{\epsilon}(t) (\Delta I)$}
     \end{subfigure}
     \begin{subfigure}[b]{0.19\textwidth}
         \centering
         \includegraphics[width=\textwidth]{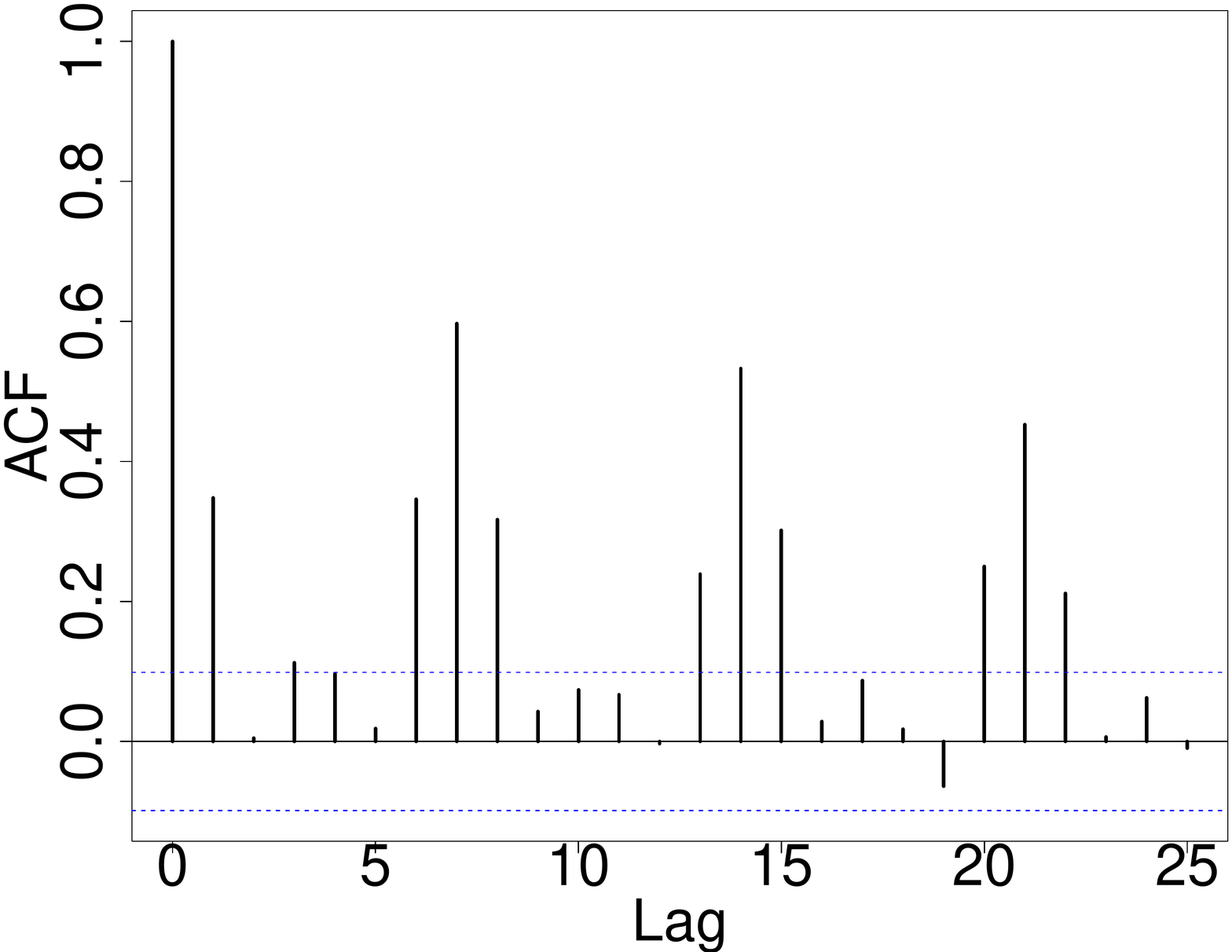}
         \subcaption{FL $\widehat{\epsilon}(t) (\Delta R)$}
     \end{subfigure}
     \begin{subfigure}[b]{0.19\textwidth}
         \centering
         \includegraphics[width=\textwidth]{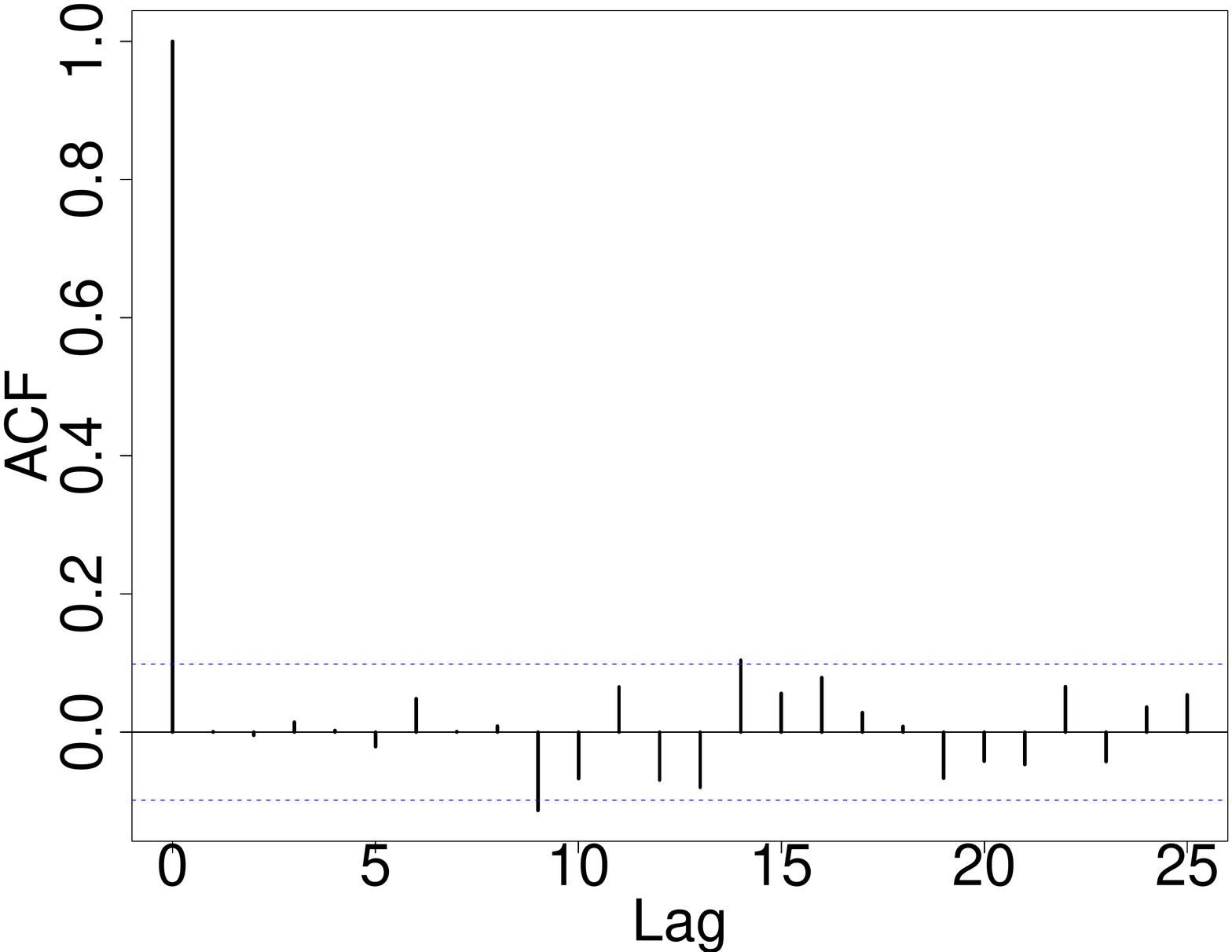}
         \subcaption{FL $\widetilde{\epsilon}(t) (\Delta I)$}
     \end{subfigure}
     \begin{subfigure}[b]{0.19\textwidth}
         \centering
         \includegraphics[width=\textwidth]{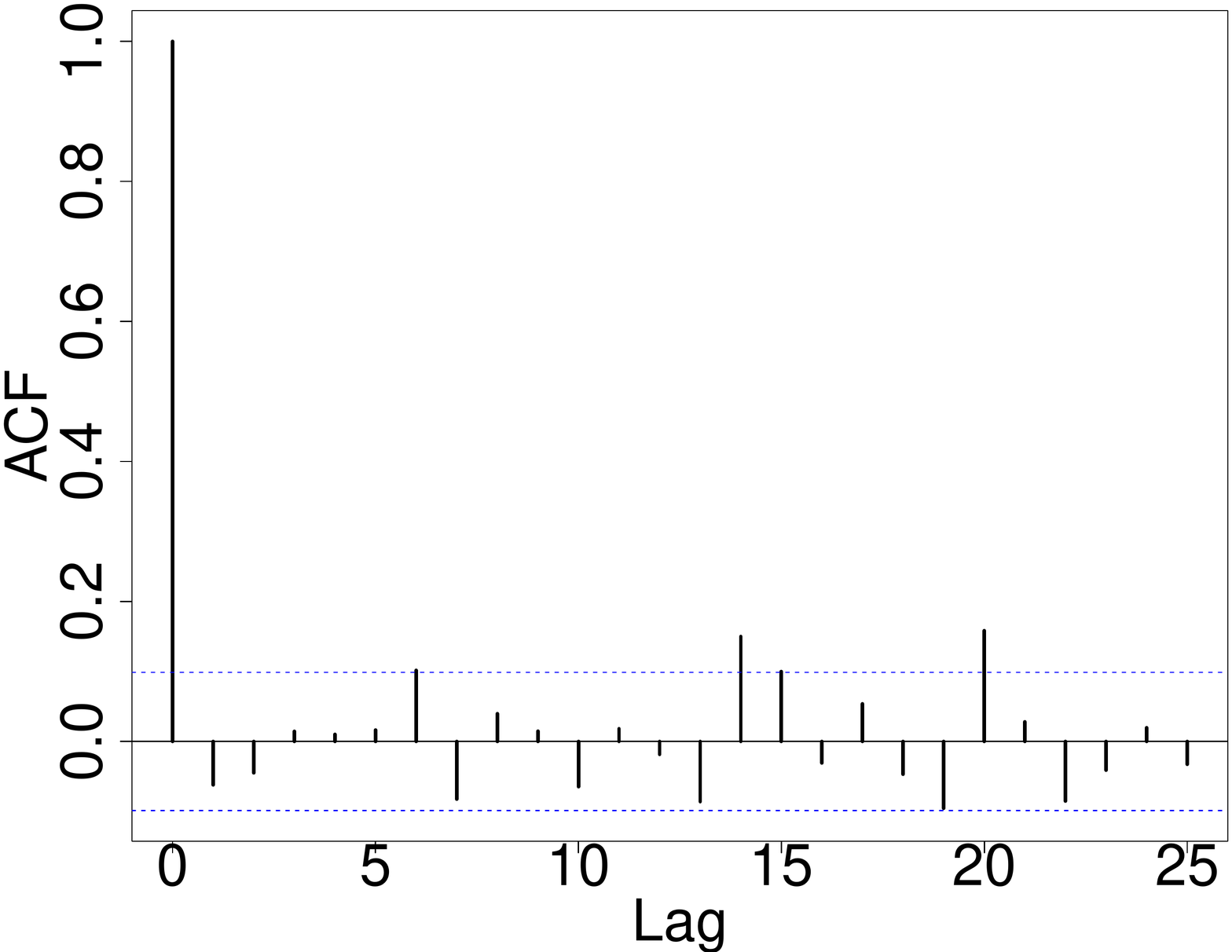}
         \subcaption{FL $\widetilde{\epsilon}(t) (\Delta R)$}
     \end{subfigure}
     
     \begin{subfigure}[b]{0.19\textwidth}
         \centering
         \includegraphics[width=\textwidth]{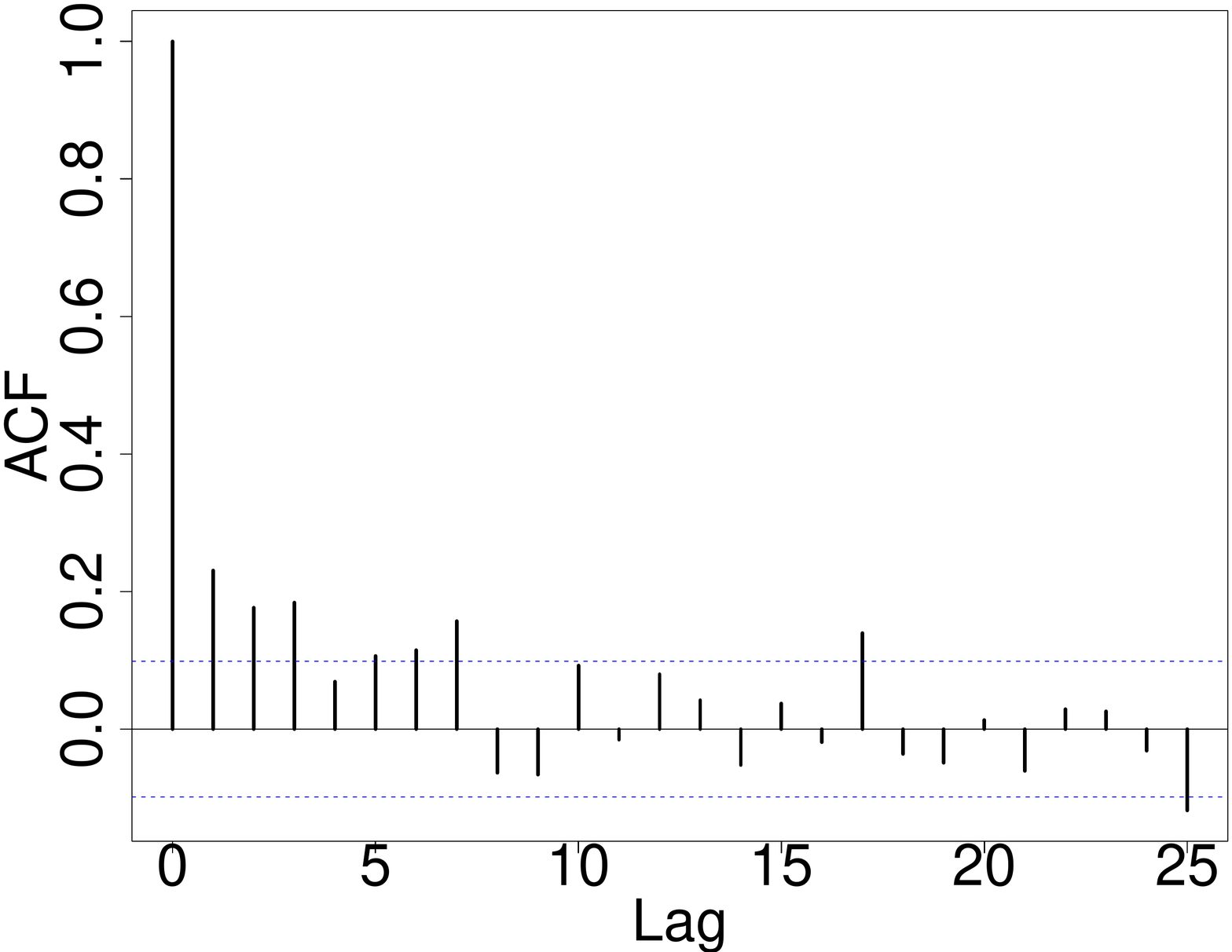}
         \subcaption{CA $\widehat{\epsilon}(t) (\Delta I)$}
     \end{subfigure}
     \begin{subfigure}[b]{0.19\textwidth}
         \centering
         \includegraphics[width=\textwidth]{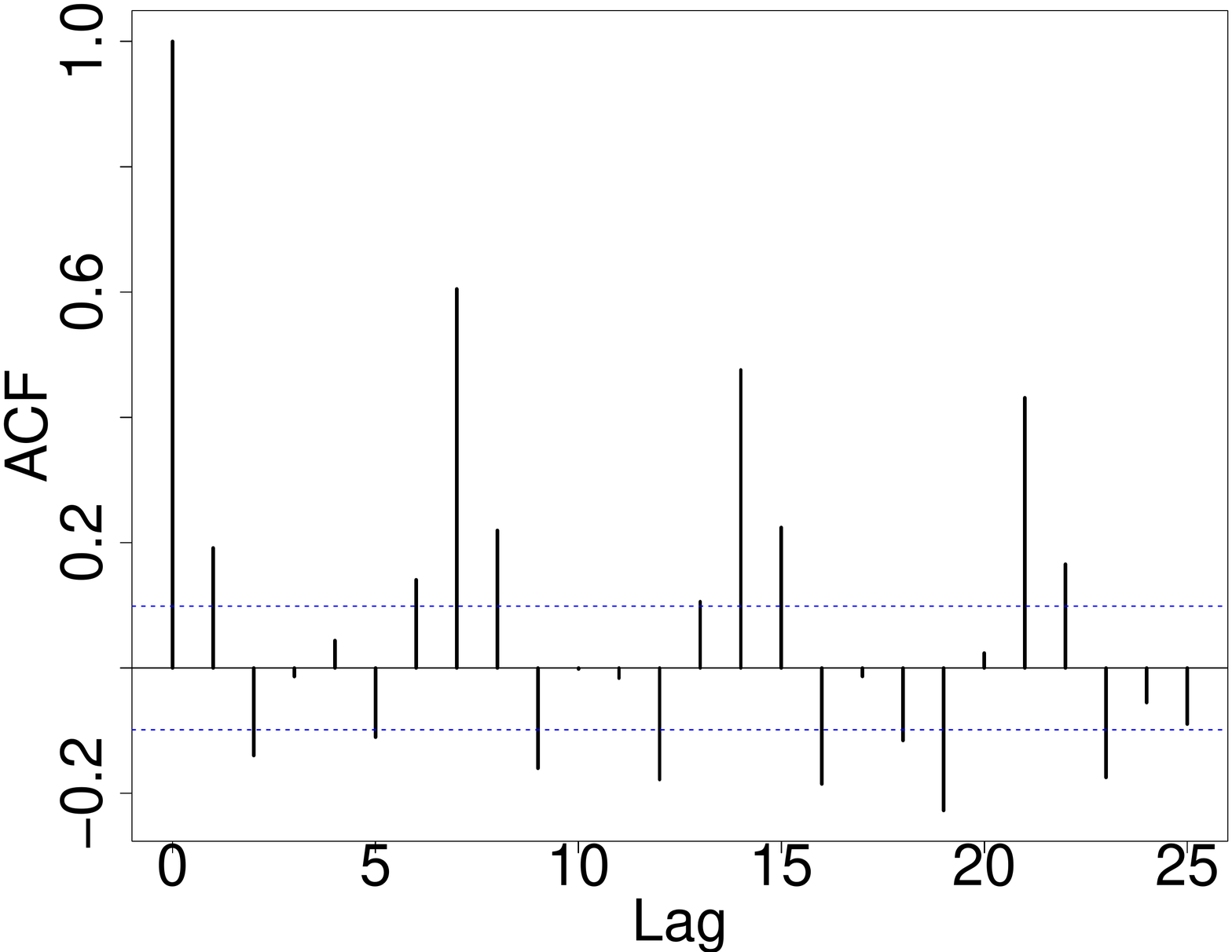}
         \subcaption{CA $\widehat{\epsilon}(t) (\Delta R)$}
     \end{subfigure}
     \begin{subfigure}[b]{0.19\textwidth}
         \centering
         \includegraphics[width=\textwidth]{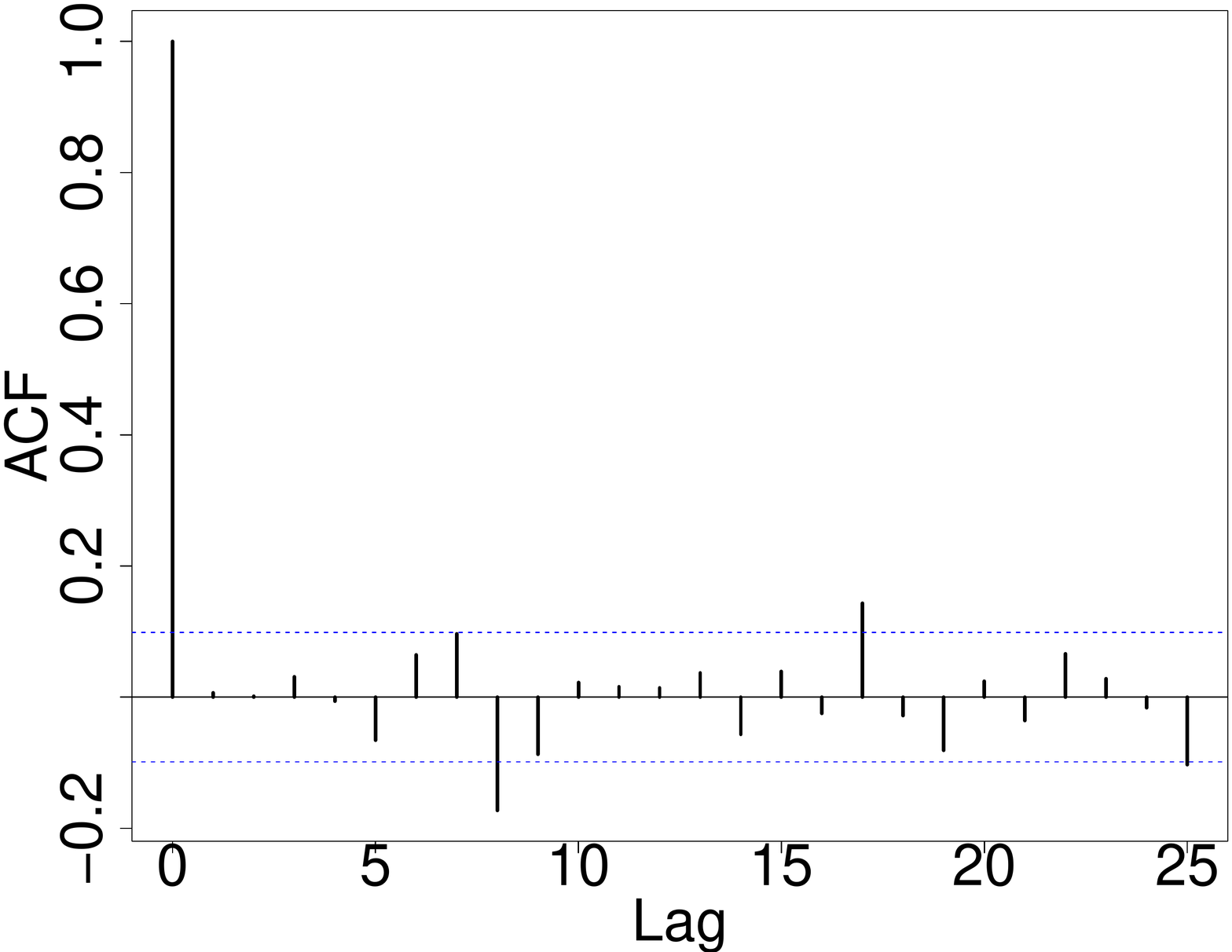}
         \subcaption{CA $\widetilde{\epsilon}(t) (\Delta I)$}
     \end{subfigure}
     \begin{subfigure}[b]{0.19\textwidth}
         \centering
         \includegraphics[width=\textwidth]{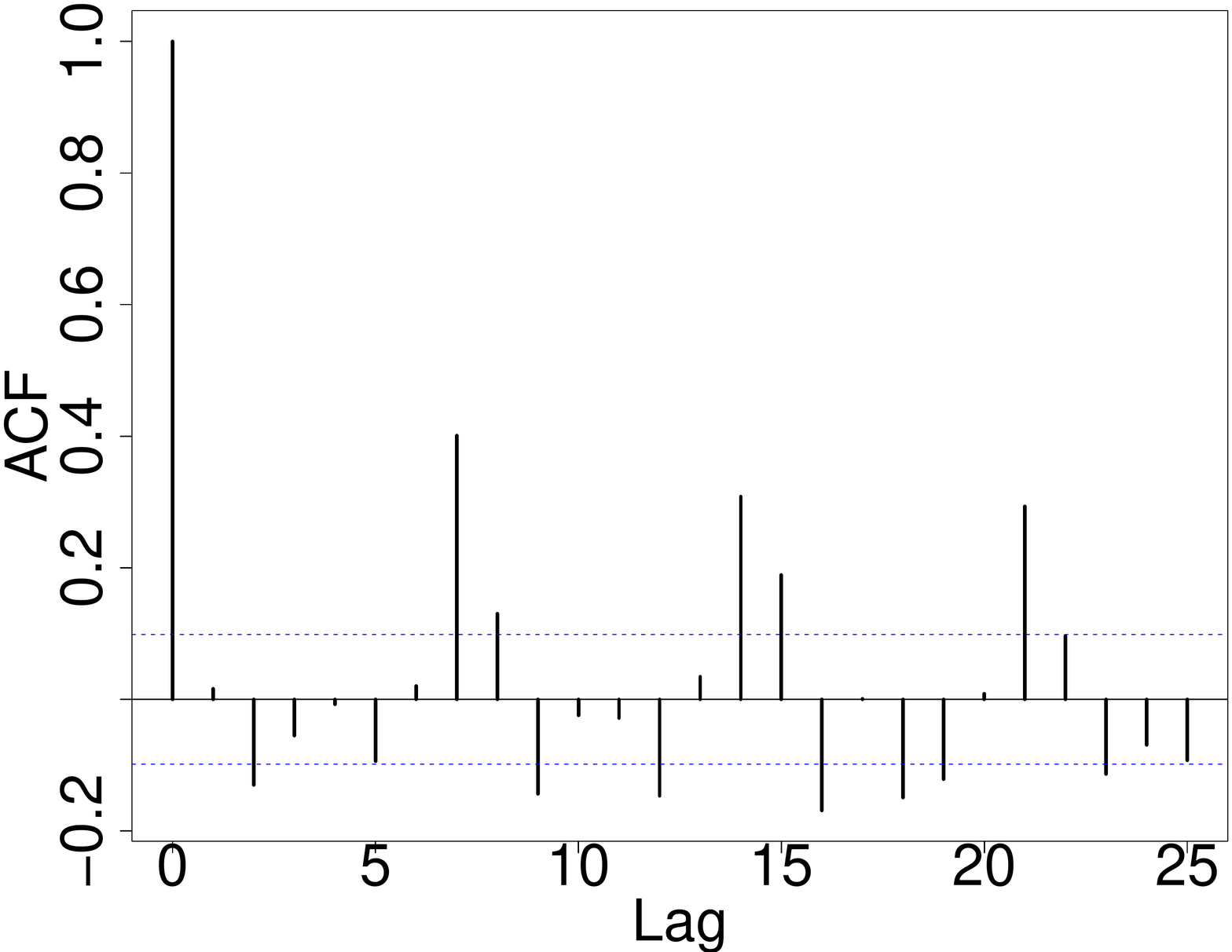}
         \subcaption{CA $\widetilde{\epsilon}(t) (\Delta R)$}
     \end{subfigure}
     
     \begin{subfigure}[b]{0.19\textwidth}
         \centering
         \includegraphics[width=\textwidth]{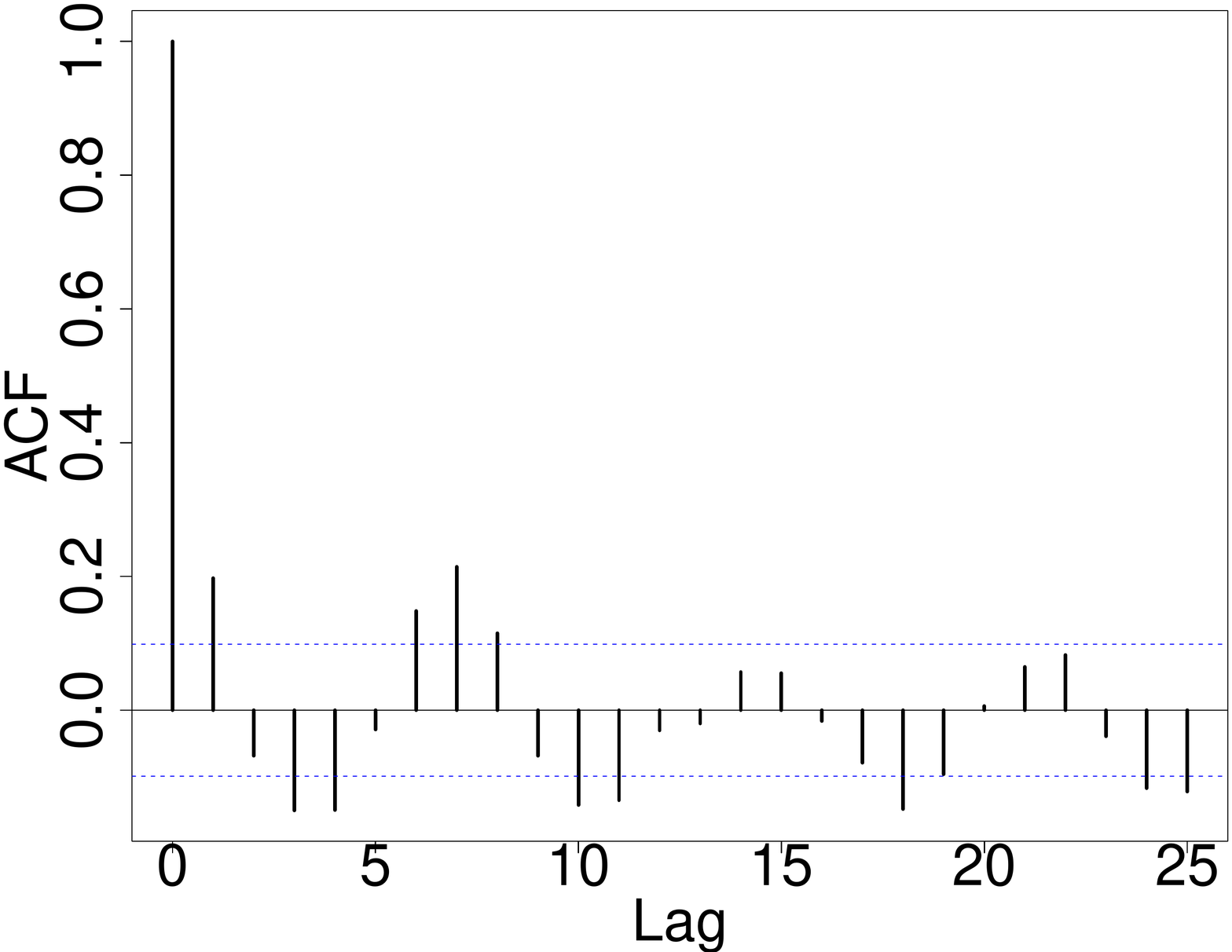}
         \subcaption{TX $\widehat{\epsilon}(t) (\Delta I)$}
     \end{subfigure}
     \begin{subfigure}[b]{0.19\textwidth}
         \centering
         \includegraphics[width=\textwidth]{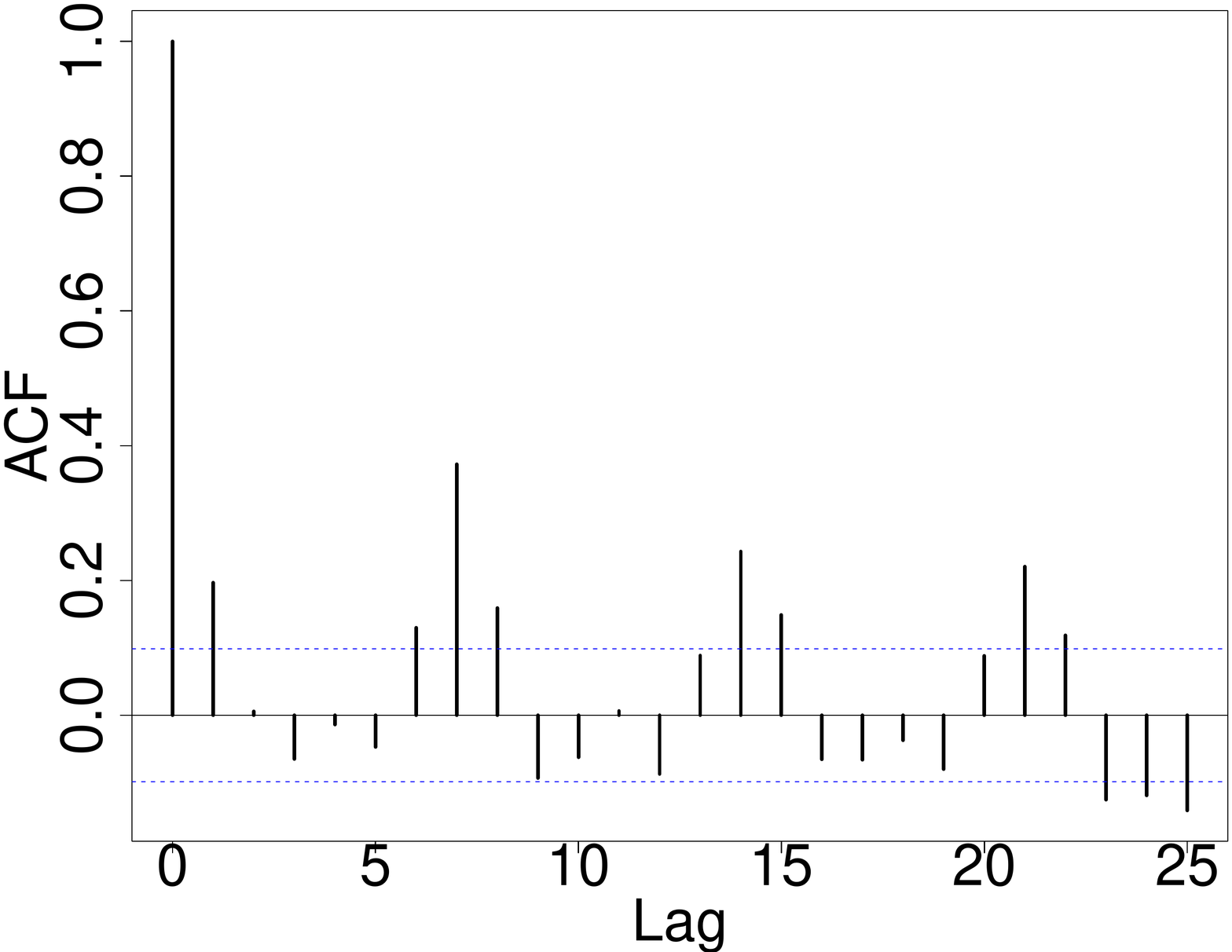}
         \subcaption{TX $\widehat{\epsilon}(t) (\Delta R)$}
     \end{subfigure}
     \begin{subfigure}[b]{0.19\textwidth}
         \centering
         \includegraphics[width=\textwidth]{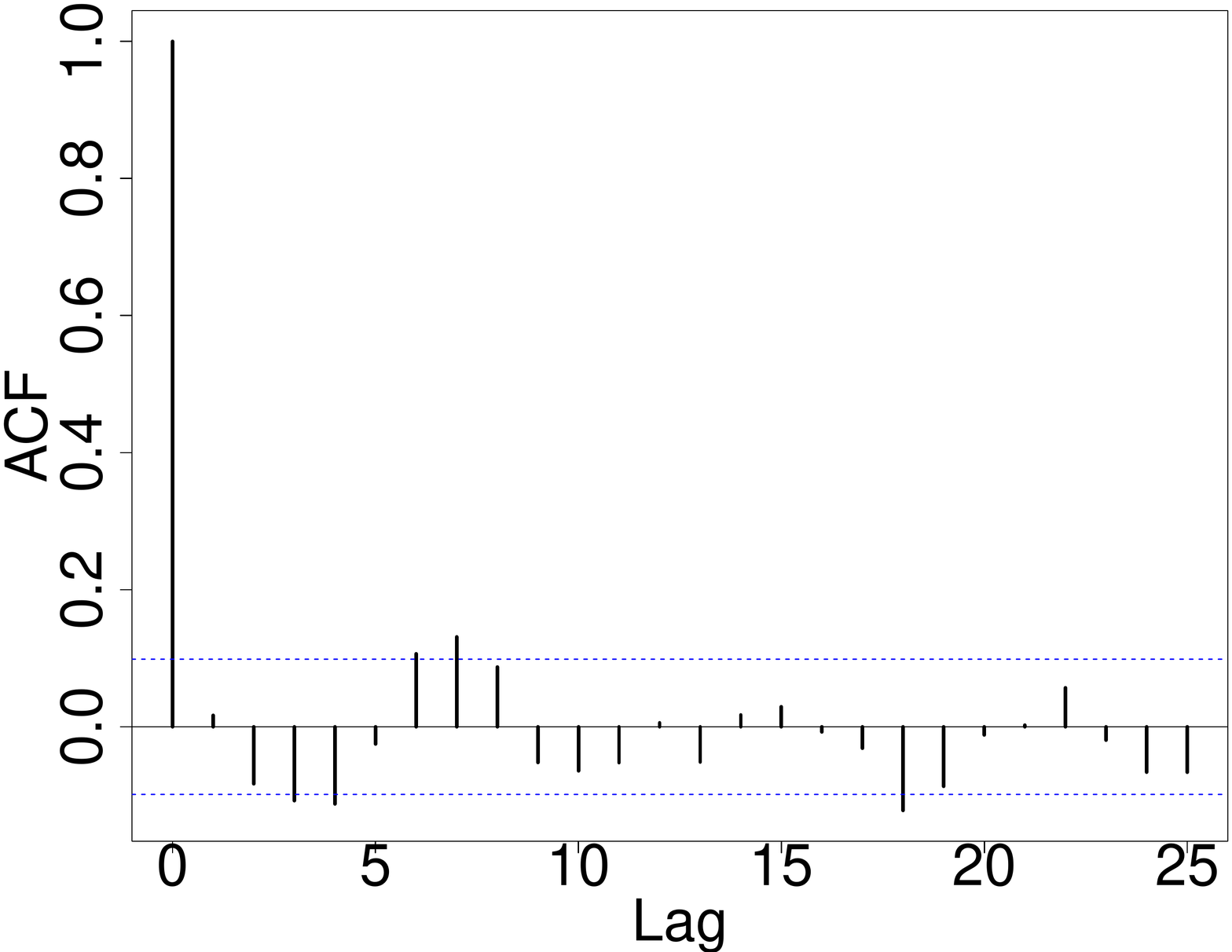}
         \subcaption{TX $\widetilde{\epsilon}(t) (\Delta I)$}
     \end{subfigure}
     \begin{subfigure}[b]{0.19\textwidth}
         \centering
         \includegraphics[width=\textwidth]{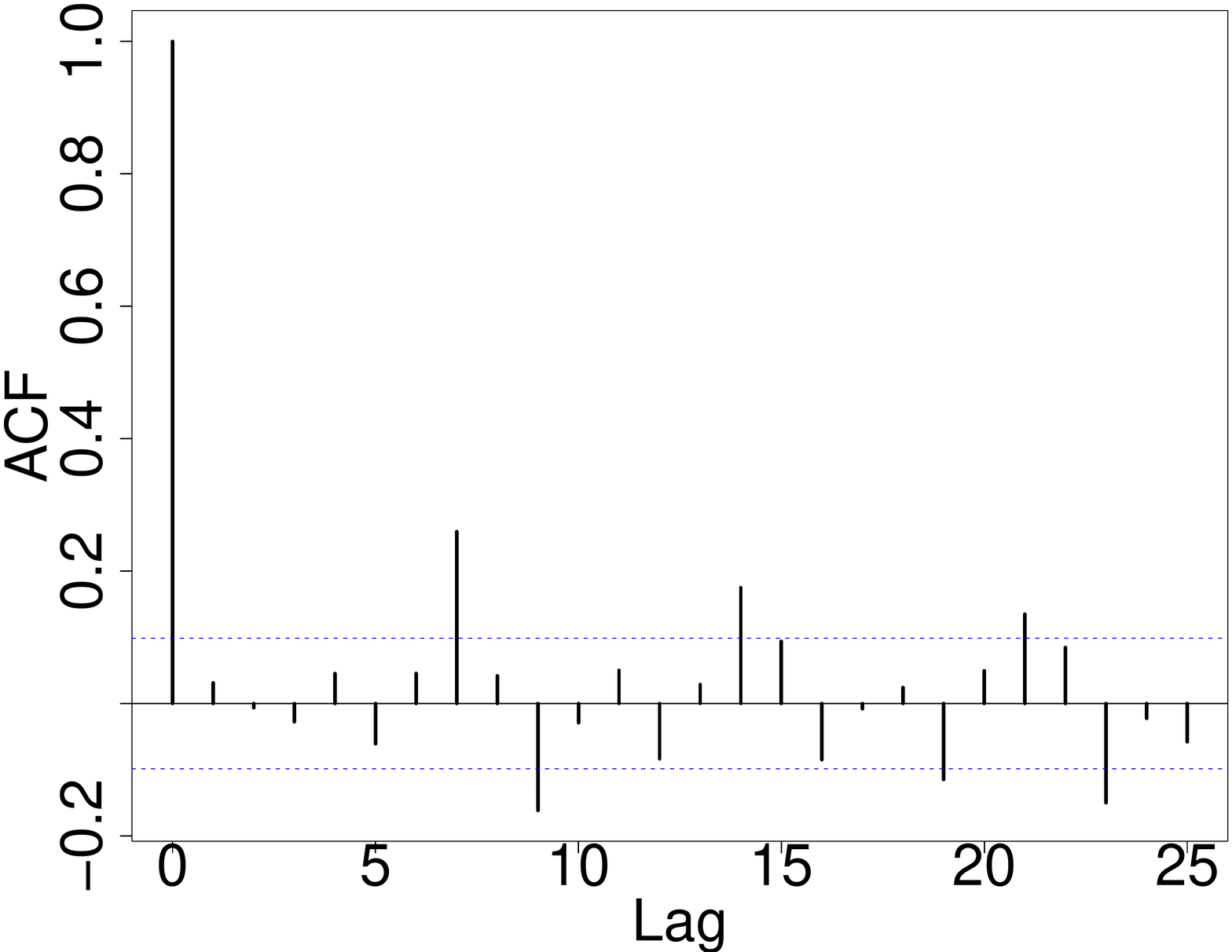}
         \subcaption{TX $\widetilde{\epsilon}(t) (\Delta R)$}
     \end{subfigure}
        \caption{Auto-correlation plot of residuals by Model 2.3 (left: piecewise constant SIR model + spatial effect ) and Model 3 (right: piecewise constant SIR model + spatial effect + VAR($p$) in five states. }
        \label{fig:acf}
\end{figure*}

The results of out-of-sample MRPE of $I(t)$ and $R(t)$  in the five states are reported in
Table \ref{table_MPE_I_out_refit} and Table \ref{table_MPE_R_out_refit}. The calculated MRPEs of $I(t)$ show that the Model 3 which includes spatial effects and VAR temporal component outperforms the other models. Spatial smoothing itself (Model 2) reduces the prediction error significantly in some states.

\begin{table}[ht!]
\caption{\label{table_MPE_I_out_refit}
Out-of-sample mean relative prediction error (MRPE)   of $I(t)$.  }
\scriptsize
\centering
{
\begin{tabular}{lccccc} 
  \hline
  \hline
   & NY & OR & FL & CA & TX\\
   &MRPE(I)  &MRPE(I) &MRPE(I) &MRPE(I) &MRPE(I)\\
  \hline
  Model 1 & 0.0011 & 9e-04 & 0.0016 & 9e-04 & 7e-04 \\ 
  Model 2.1  & 0.0011 & 0.001 & 0.0018 & 5e-04 & 8e-04 \\ 
  Model 2.2  & 0.0011 & 0.001 & 0.0017 & 6e-04 & 9e-04 \\ 
  Model 2.3  & 4e-04 & 7e-04 & 0.002 & 4e-04 & 6e-04 \\ 
  Model 2.4  & 0.0016 & 0.001 & 0.001 & 7e-04 & 8e-04 \\ 
  Model 3  & 4e-04 & 7e-04 & 0.001 & 4e-04 & 6e-04 \\ 
  \hline
\end{tabular}}
\end{table}

\begin{table}[!ht]
\caption{\label{table_MPE_R_out_refit}
Out-of-sample mean relative prediction error (MRPE)   of $R(t)$. }
\centering
{
\begin{tabular}{lccccc} 
  \hline
  \hline
& NY & OR & FL & CA & TX\\
    &MRPE(R)  &MRPE(R) &MRPE(R) &MRPE(R) &MRPE(R)\\
  \hline
  Model 1  & 0.0019 & 0.0028 & 0.0028 & 0.0042 & 0.002 \\ 
  Model 2.1  & 4e-04 & 0.0017 & 0.001 & 0.0029 & 0.0025 \\ 
  Model 2.2  & 4e-04 & 0.0018 & 0.001 & 0.0029 & 0.0028 \\ 
  Model 2.3  & 4e-04 & 0.0022 & 0.0012 & 0.0022 & 0.001 \\ 
  Model 2.4  & 5e-04 & 0.002 & 9e-04 & 0.0029 & 0.0015 \\ 
  Model 3  & 4e-04 & 0.0022 & 0.001 & 0.0022 & 0.001 \\ 
  \hline
\end{tabular}}
\end{table}

\begin{figure*}[ht!]
     \centering
        \captionsetup[sub]{font=scriptsize, labelfont={bf,sf}}
     \begin{subfigure}[b]{0.16\textwidth}
         \centering
         \includegraphics[width=\textwidth]{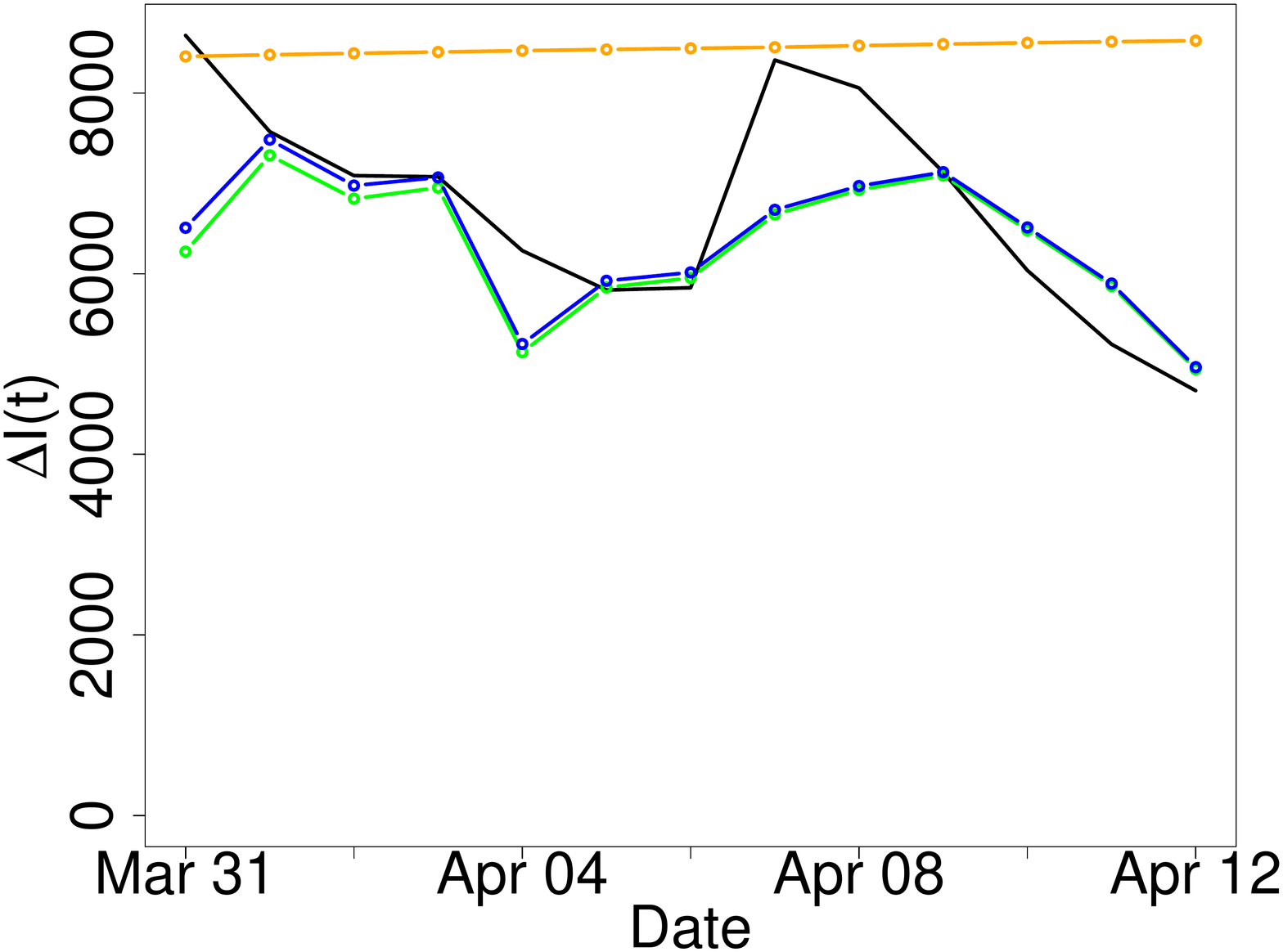}
         \subcaption{NY $\widehat{\Delta I}(t)$}
     \end{subfigure}
     \begin{subfigure}[b]{0.16\textwidth}
         \centering
         \includegraphics[width=\textwidth]{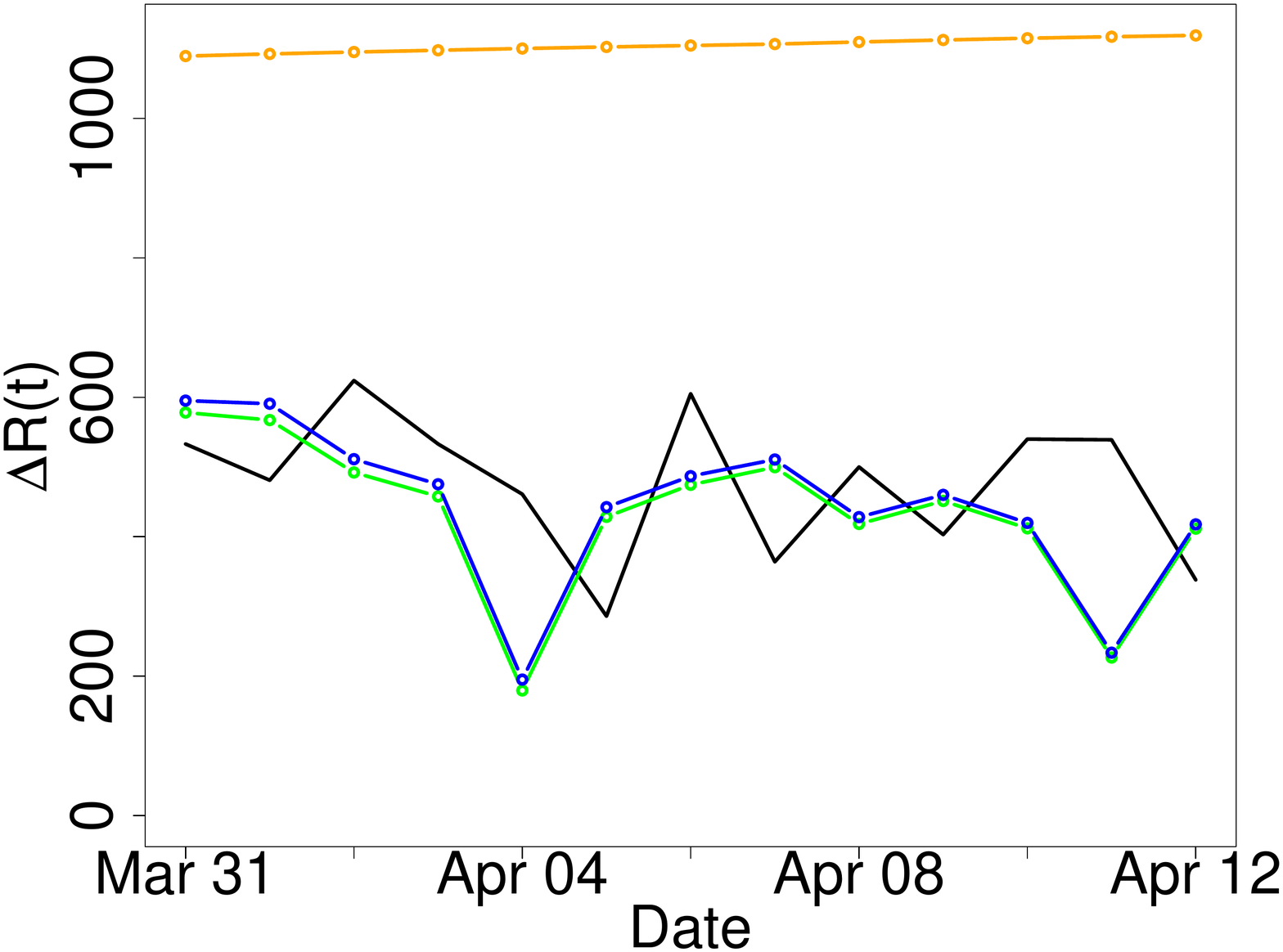}
         \subcaption{NY $\widehat{\Delta R}(t)$}
     \end{subfigure}
     \begin{subfigure}[b]{0.16\textwidth}
         \centering
         \includegraphics[width=\textwidth]{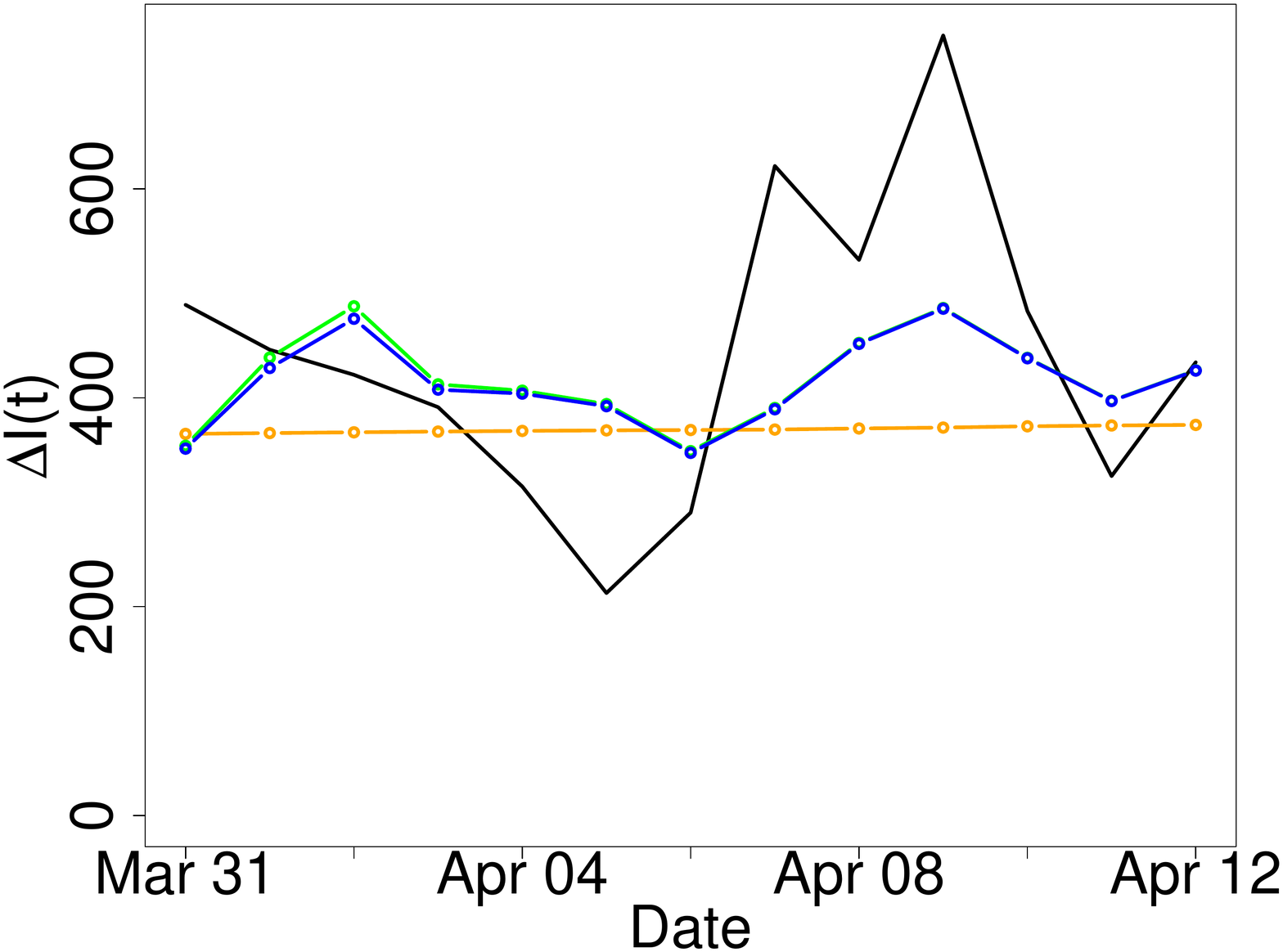}
         \subcaption{OR $\widehat{\Delta I}(t)$}
     \end{subfigure}
     \begin{subfigure}[b]{0.16\textwidth}
         \centering
         \includegraphics[width=\textwidth]{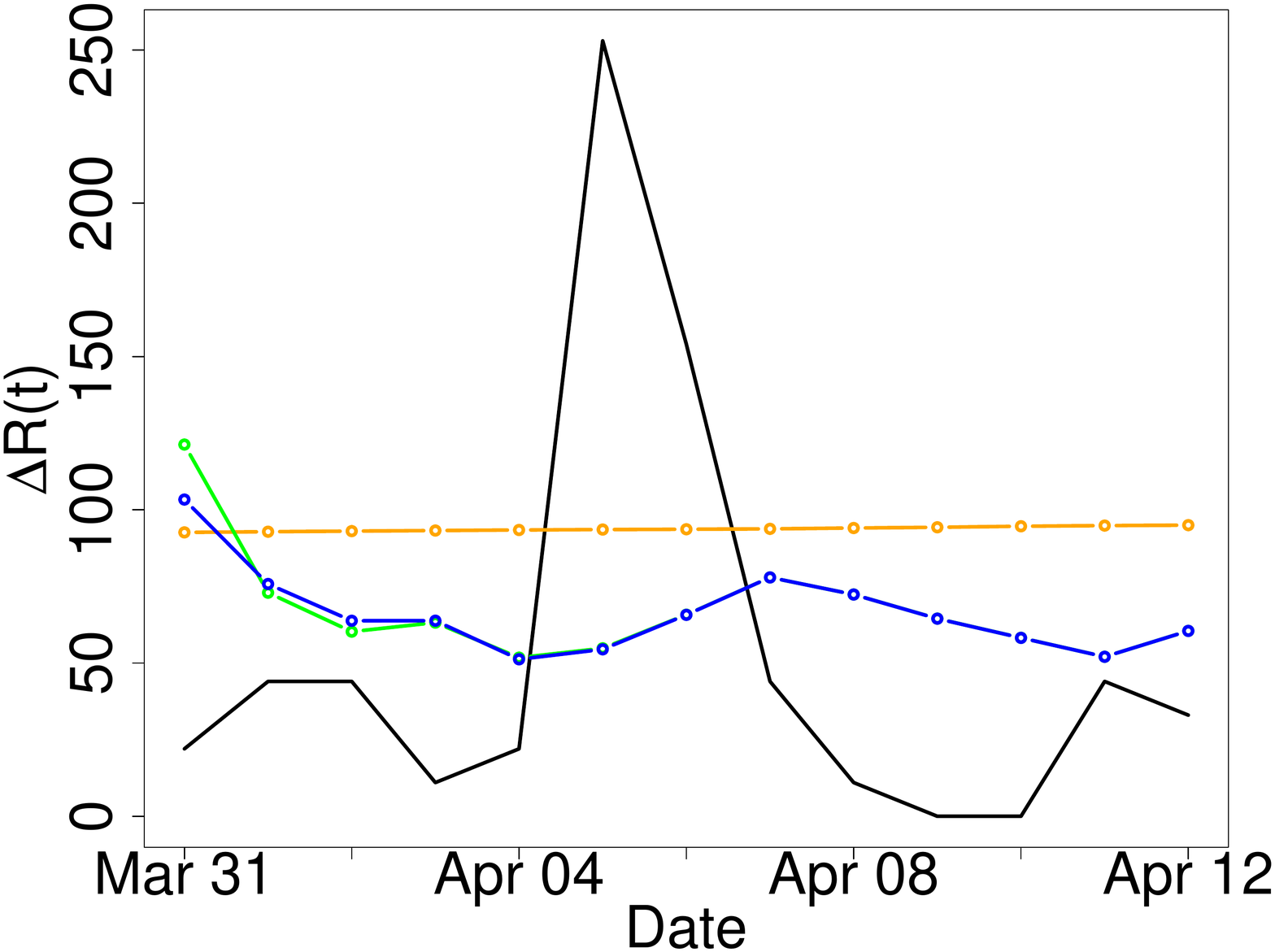}
         \subcaption{OR $\widehat{\Delta R}(t)$}
     \end{subfigure}
     \begin{subfigure}[b]{0.16\textwidth}
         \centering
         \includegraphics[width=\textwidth]{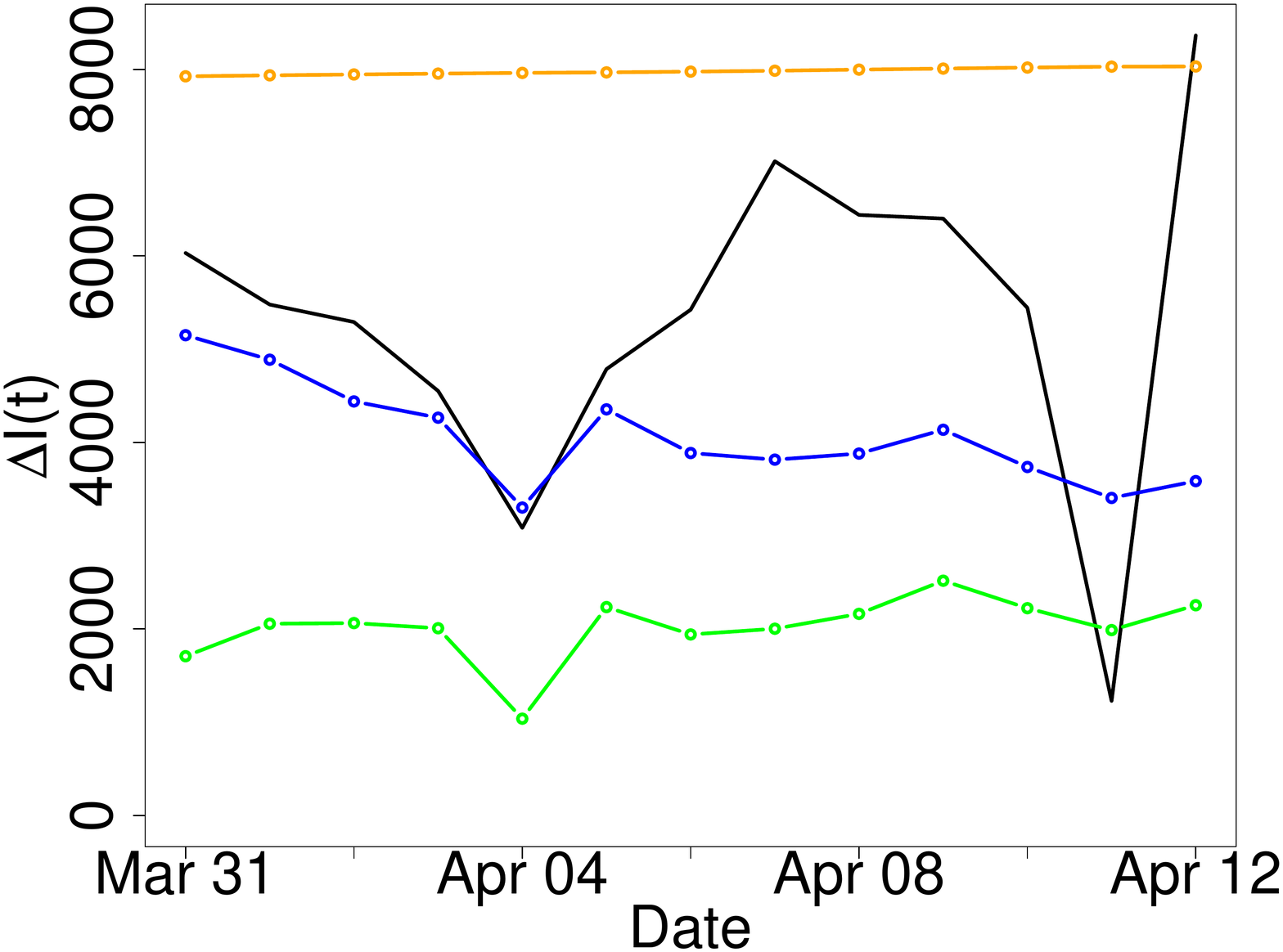}
         \subcaption{FL $\widehat{\Delta I}(t)$}
     \end{subfigure}
          \begin{subfigure}[b]{0.16\textwidth}
         \centering
         \includegraphics[width=\textwidth]{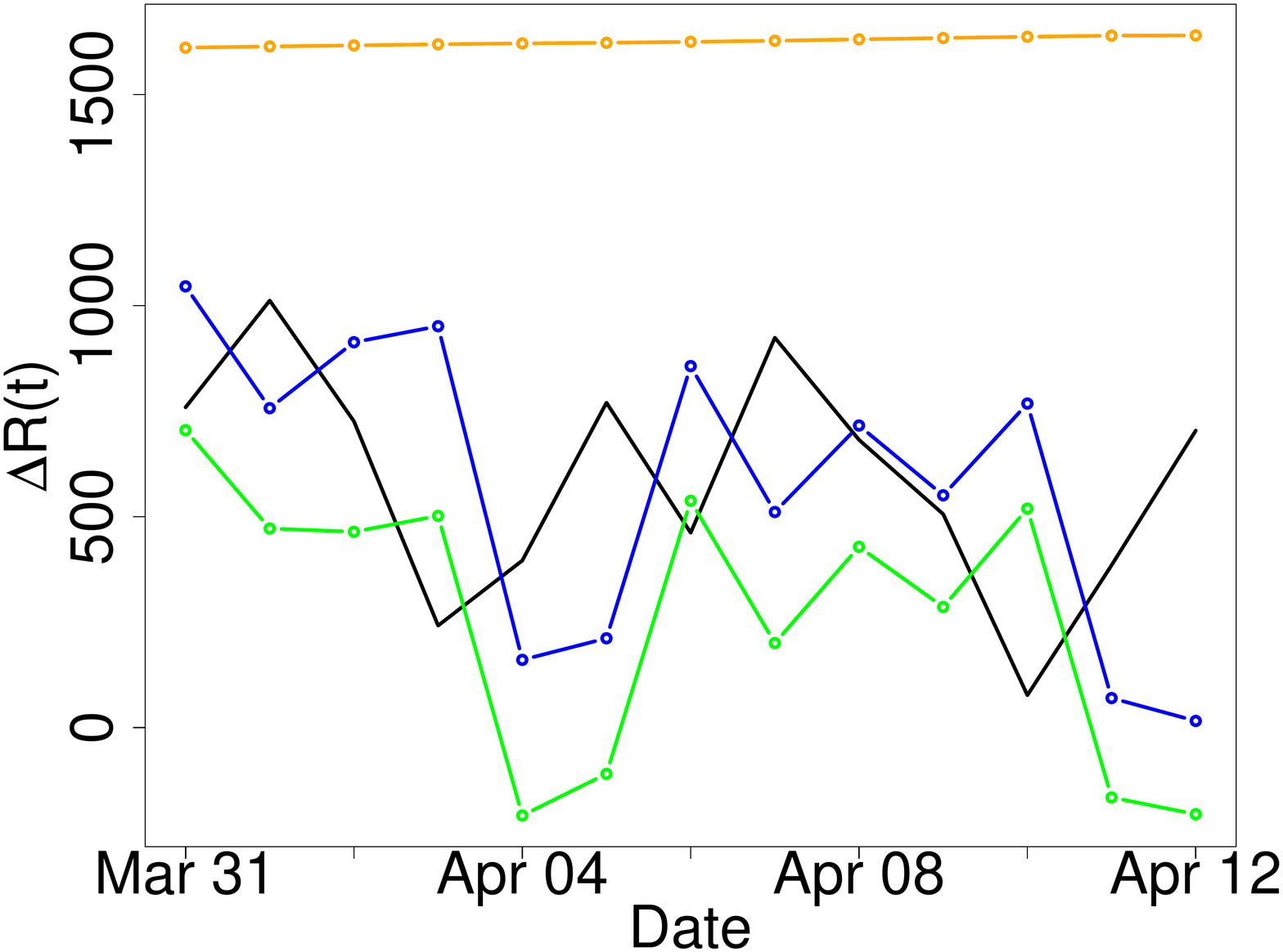}
         \subcaption{FL $\widehat{\Delta R}(t)$}
     \end{subfigure}
      \begin{subfigure}[b]{0.16\textwidth}
         \centering
         \includegraphics[width=\textwidth]{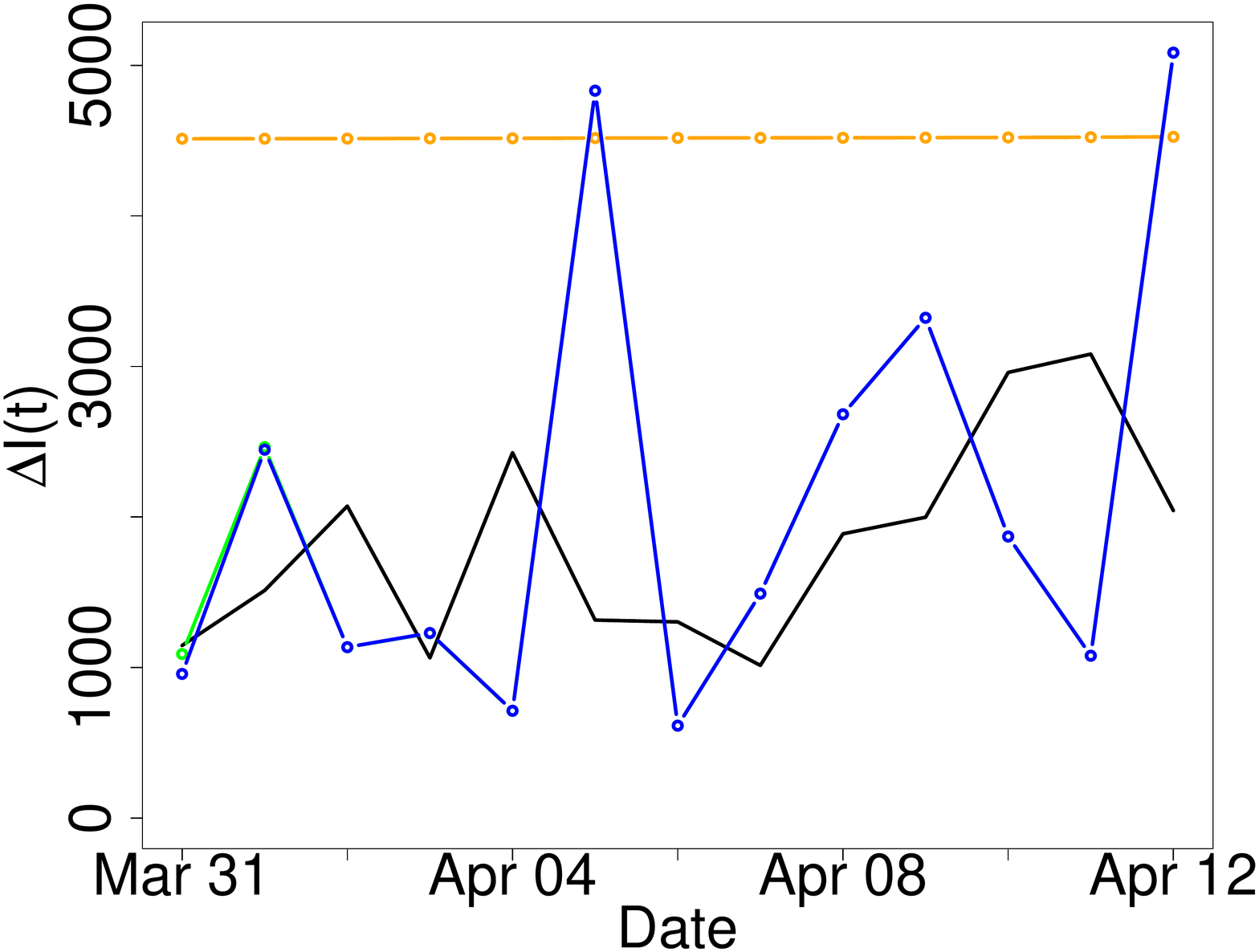}
         \subcaption{CA $\widehat{\Delta I}(t)$}
     \end{subfigure}
     \begin{subfigure}[b]{0.16\textwidth}
         \centering
         \includegraphics[width=\textwidth]{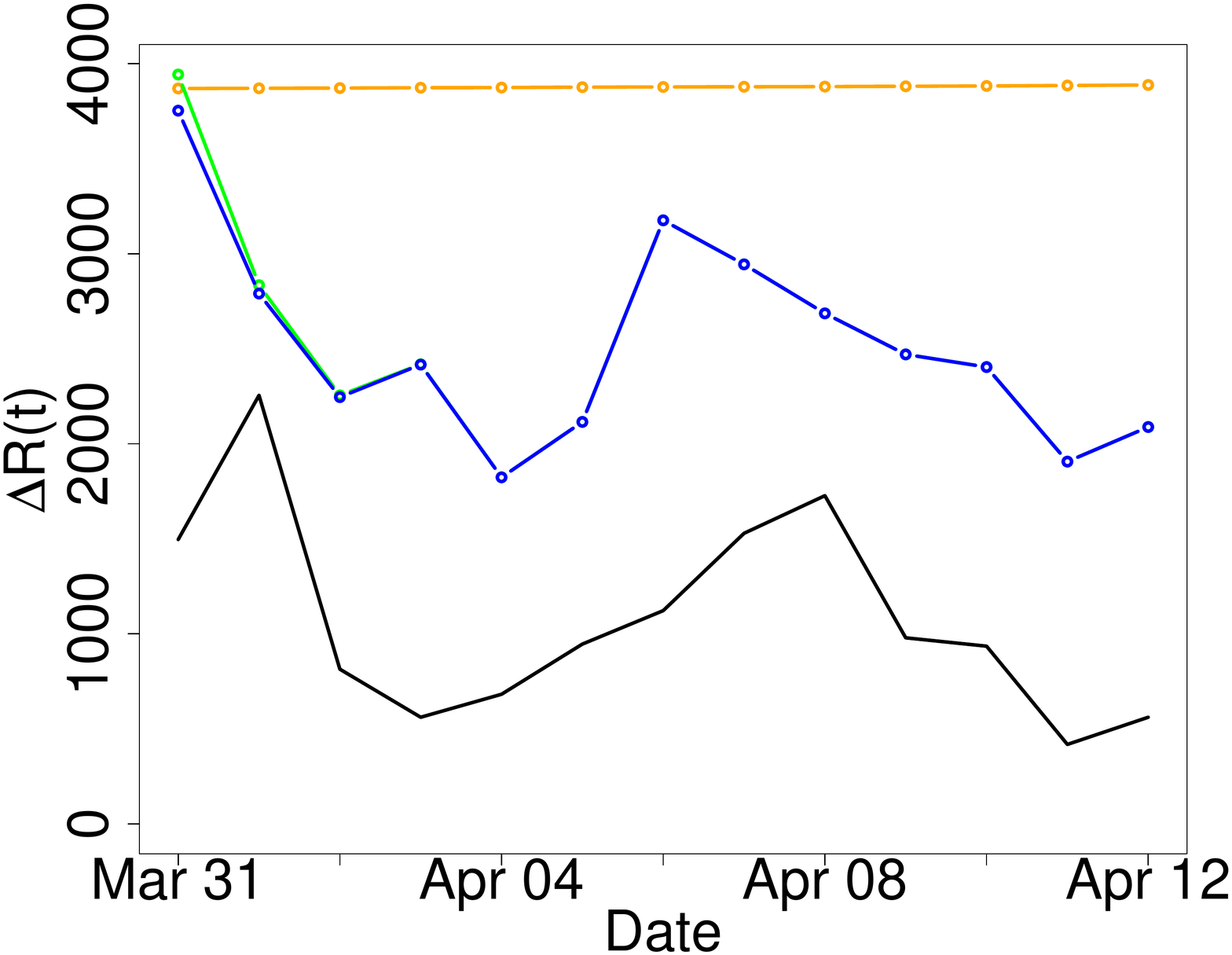}
         \subcaption{CA $\widehat{\Delta R}(t)$}
     \end{subfigure}
      \begin{subfigure}[b]{0.16\textwidth}
         \centering
         \includegraphics[width=\textwidth]{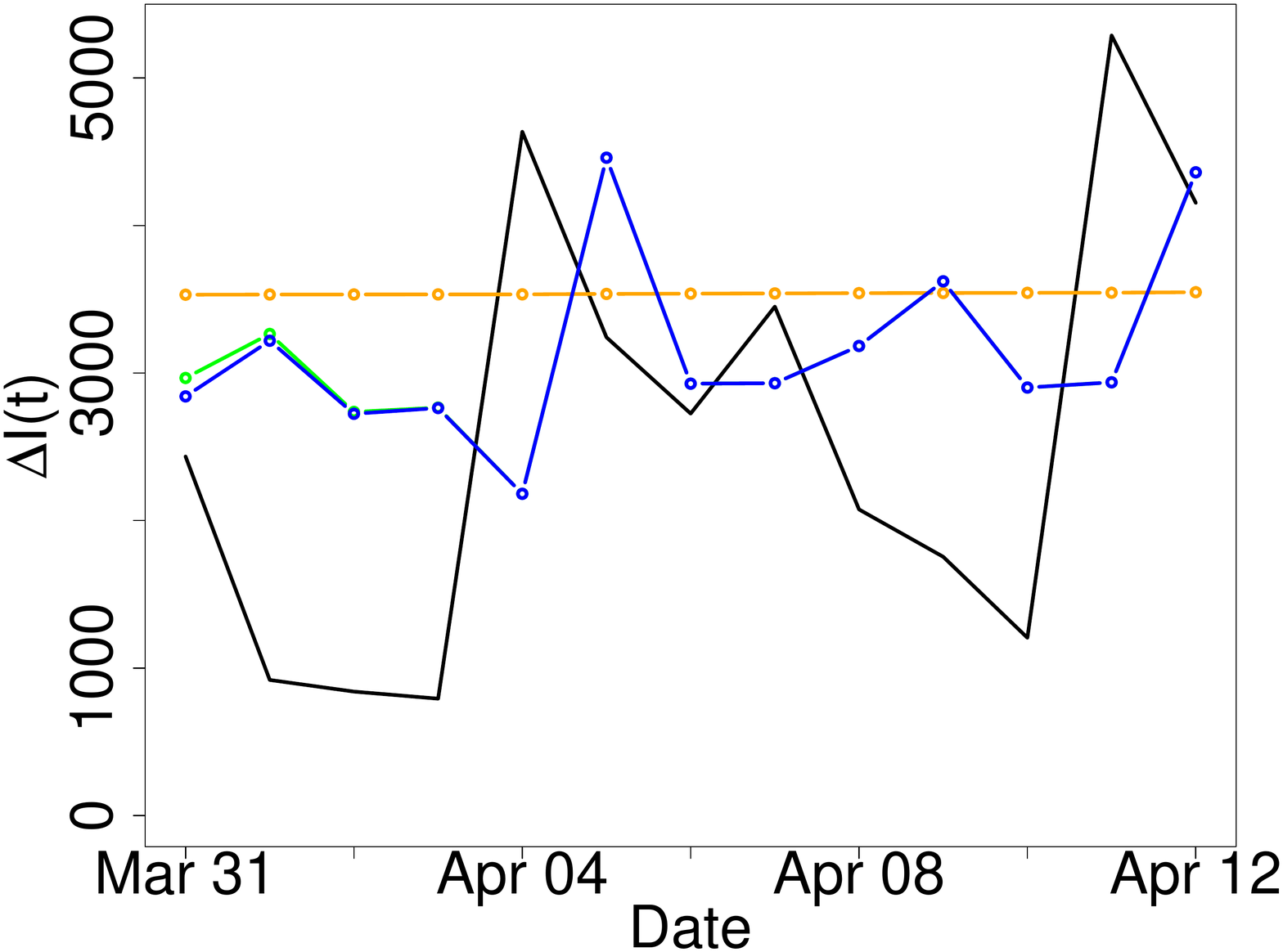}
         \subcaption{TX $\widehat{\Delta I}(t)$}
     \end{subfigure}
     \begin{subfigure}[b]{0.16\textwidth}
         \centering
         \includegraphics[width=\textwidth]{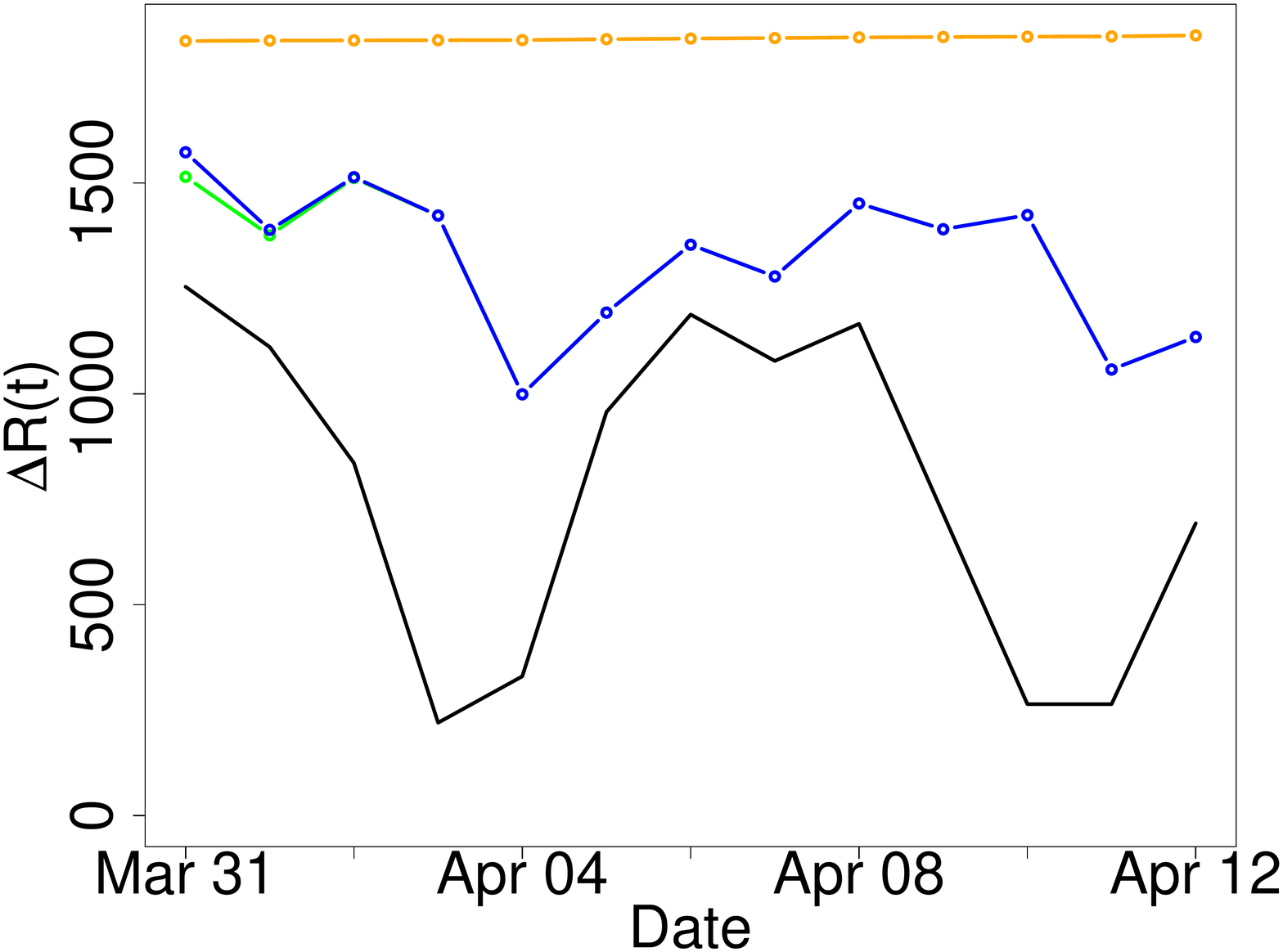}
         \subcaption{TX $\widehat{\Delta R}(t)$}
     \end{subfigure}
        \caption{Prediction of response variable $y_h$ in three models. Black line: true value;  orange: Model 1; green: Model 2; blue: Model 3.
        }
        \label{fig:prediction value}
\end{figure*}

In Figure \ref{fig:prediction value}, we provide the predicted values in the last 2 weeks from all three different models in five states: New York, Oregon, Florida, California and Texas.  
In both the daily infected cases $\Delta I(t)$ and the daily recovered cases $\Delta R(t)$, the predictions by Model 2 and Model 3 perform better than Model 1 in all  states.

{
The change point detection results using exponential function are presented in  Table~\ref{table_plan_all_exp}. 
As shown in Table~\ref{table_plan_all_exp}, using the exponential under-reporting function, we missed some meaningful change points. For example, in Oregon, Florida, California, and Texas, the change point related to the ``Stay-at-home" plan (March-April 2020) is missing.
}
\begin{table*}[ht!]
\caption{\label{table_plan_all_exp}
Statewide ``Stay-at-home'' plan and reopening plan begin dates, along with the detected change points (CPs) in states (using exponential under-reporting function). }
\centering
{
{\begin{tabular}{lccccc} 
  \hline
  \hline
Region  & ``Stay-at-home'' plan  &Reopening plan (statewide) & Detected change points \\
\hline
New York & March 22 & July 6 (Phase 3) &  April 06 2020  \\ 
Oregon  & March 23 & May 15 (Phase 1) &    Jan 10 2021 \\
Florida &   April 3 & June 5 (Phase 2) &   Aug 01 2020\\
California  & March 19 & June 12 (Phase 2)  &July 25 2020, Aug 17 2020, Jan 13 2021\\
Texas  & April 2  & June 3 (Phase 3) &   Aug 05 2020,  Feb 12 2021\\
 \hline
 \multicolumn{4}{l}{*NY reopening dates vary in counties. Here, we use the NYC reopening dates instead. }\\
\end{tabular}}}
\end{table*}

\vspace{-.5cm}

\subsection{Results for the State of Michigan}
Towards the end of March 2021, Michigan exhibited a large increase in cases.
To that end, we analyzed the daily count of cases for the period March 01, 2020 to May 15, 2021. As can be seen from Table \ref{table_plan_all},
five detected change points occurred in 2020, including April 11 2020, June 04 2020, November 01 2020, November 21 2020, and December 11 2020.
The first one in April 2020 is related to the decreased transmission rate due to the statewide lockdown, while the June one could be linked to the state reopening many activities on June 1, 2020. 
An additional change point was detected in early November as Michigan experienced a fall surge; the state recorded the highest number of cases on November 13. As a policy response, the state governor announced the closure of several businesses and public services, including high schools and universities, for three weeks, effective November 18, 2020. The latter led to sharp decrease of cases and hence to the two change points detected at the end of November and early December. Finally, a change point is detected on March 21, 2021, which is associated to variant B.1.1.7 becoming dominant in the state, while vaccination uptake was somewhat lagging.
\footnote{\url{https://www.deseret.com/coronavirus/2021/4/12/22379609/michigan-covid-19-coronavirus}}

\vspace{-0.5cm}

\section{Additional Results for U.S. Counties}\label{sec:counties}

We focus on the following counties in this analysis: New York City, King County in Washington, Miami-Dade County in Florida,  Charleston, Greenville, Richland, and Horry in South Caroline and Riverside, Santa Barbara in California. Note that the New York Times reports a single value for New York City which includes the cases in New York, Kings, Queens, Bronx and Richmond Counties.

The block size $b_n$ is set to 7 for all counties.
We can choose different block sizes if, in some regions, there are more than one break points of the transmission rate very close to each other which violates the assumptions of our model. To mitigate this issue, we suggest to select a smaller block size to detect most of the change points. Another situation is that the transmission rate in some regions exhibit many small noise disturbances. In this case, we may use a larger block size to smooth out the parameter estimates and detect the change point more accurately.
Further, for determining their neighbors displayed in Table \ref{table_adj_2}, a threshold of 100 miles is used.
Model 2.3 uses the top five counties in the corresponding state with the smallest similarity score, while Model 2.4 uses all counties in the given state.
The additional resulting neighbors for Model 2.3 are displayed in Table \ref{table_similar_2}.

Since the parameters obtained from \eqref{eq:model_var} vary greatly, the 7-day moving average of transmission  $\beta(t)$ and recovery $\gamma(t)$ rates is provided in Figure \ref{fig:rates_county_smooth}. As can be seen from those figures, our estimated piecewise parameters are very close to the  real parameters in the 7-day moving average.

\begin{table*}[ht!]
\caption{\label{table_plan_counties} 
Statewide ``Stay-at-home'' plan and reopening plan begin dates, along with the detected change points (CPs) in counties/cities.}
\scriptsize
\centering
 \resizebox{0.98\columnwidth}{!}{
\begin{tabular}{lccccc} 
  \hline
  \hline
Region  & ``Stay-at-home'' plan  &Reopening plan (statewide) &  Mask mandate  (county) & Detected CP \\
\hline
New York City (NY)  & March 22 & May 15 & -  &  April 05 2020 \\
King County (WA) & March 23 & May 30 & - &April 11 2020\\
 Miami-Dade County (FL) & April 03 & June 05 & - &  April 13 2020, June 23 2020, July 29 2020\\
Charleston  & April 7 & April 20 & July 01  &  April 09 2020, June 12 2020, July 06 2020 \\
Greenville & April 7 & April 20 & June 23 & May 06 2020, May 30 2020, Jun 20 2020\\
Richland & April 7 & April 20 & July 06 &  April 13 2020, July 22 2020\\
Horry & April 7 & April 20 & July 03 & June 06 2020, July 06 2020 \\
 \hline
\end{tabular}}
\end{table*}

\begin{table}[!ht]
\caption{\label{table_adj_2}Neighboring cities/counties by distance (for Model 2.1 and Model 2.2). }
\centering
{
\begin{tabular}{ll}
  \hline
  \hline
Region &  Neighboring Regions  (counties: within 100 miles)   \\
  \hline
  New York City (NY) & Hudson, Essex, Union, Bergen, Nassau      \\
  King County (WA) &  Pierce, Island, Kitsap, Kittitas,  Snohomish \\
    Miami-Dade County (FL)   & Broward, Collier, Hendry, Palm Beach, Monroe   \\ 
  Charleston (SC)&  Dorchester, Beaufort, Berkeley, Colleton, Georgetown \\
  Greenville (SC)&  Polk, Henderson, Anderson, Pickens, Spartanburg\\
  Richland (SC)& Calhoun,  Lexington, Sumter, Fairfield, Kershaw \\
  Horry  (SC)&    Columbus, Dillon, Marion, Florence, Georgetown\\
  Riverside (CA) & San Bernardino, San Diego, Imperial \\ 
  Santa Barbara (CA) & San Luis Obispo, Kern, Ventura \\ 
 \hline
\end{tabular}}
\end{table}

\begin{table}[!ht]
\caption{\label{table_similar_2}  Neighboring cities/counties by similarity score (for Model 2.3). }
\centering
{
\begin{tabular}{l l}
  \hline
  \hline
Region &  Neighboring Regions  (counties: in the same state)   \\
  \hline
  New York City (NY)& Dutchess, Erie, Onondaga, Monroe, Tompkins \\ 
  King County (WA)& Kitsap, Thurston, Clallam, Clark, Pierce \\ 
    Miami-Dade County (FL) &  Broward, Osceola, Duval, Escambia, Orange \\ 
    Charleston (SC)&  Horry, Dorchester, Berkeley, Beaufort, Greenville \\ 
  Greenville (SC)&  Lexington, Pickens, York, Spartanburg, Horry \\ 
  Richland (SC)& Lexington, York, Greenville, Berkeley, Beaufort \\ 
  Horry  (SC)&  Berkeley, Dorchester, York, Greenville, Pickens \\ 
  Riverside (CA)& Orange, Los Angeles, Ventura, San Diego, Monterey \\ 
  Santa Barbara (CA) & San Diego, Ventura, Orange, San Francisco, Contra Costa \\ 
 \hline
\end{tabular}}
\end{table}

\begin{figure*}[!ht]
     \centering
     \captionsetup[sub]{font=small,labelfont={bf,sf}}
     \begin{subfigure}[b]{0.19\textwidth}
         \centering
         \includegraphics[width=\textwidth]{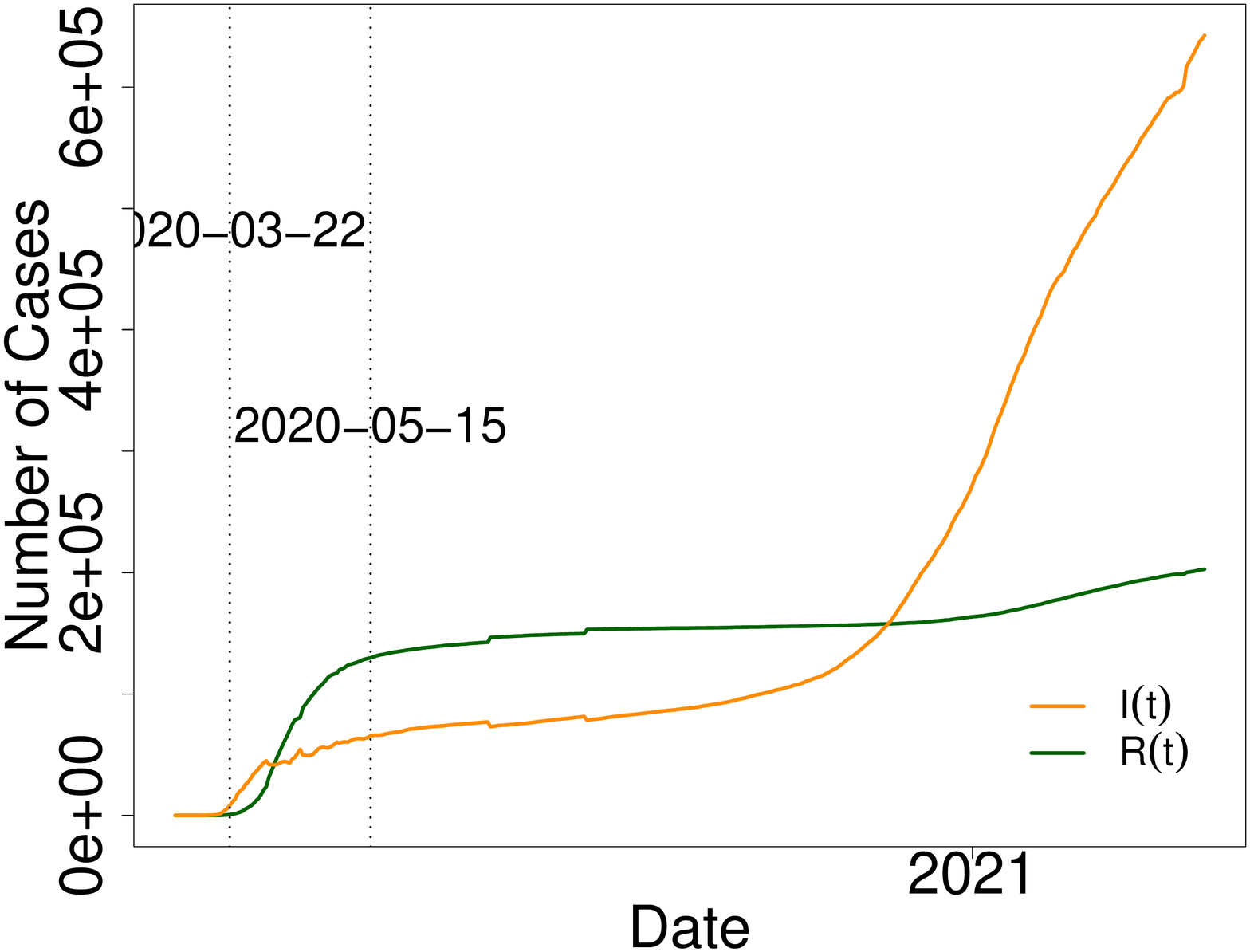}
         \subcaption{New York City}
     \end{subfigure}
     \begin{subfigure}[b]{0.19\textwidth}
         \centering
         \includegraphics[width=\textwidth]{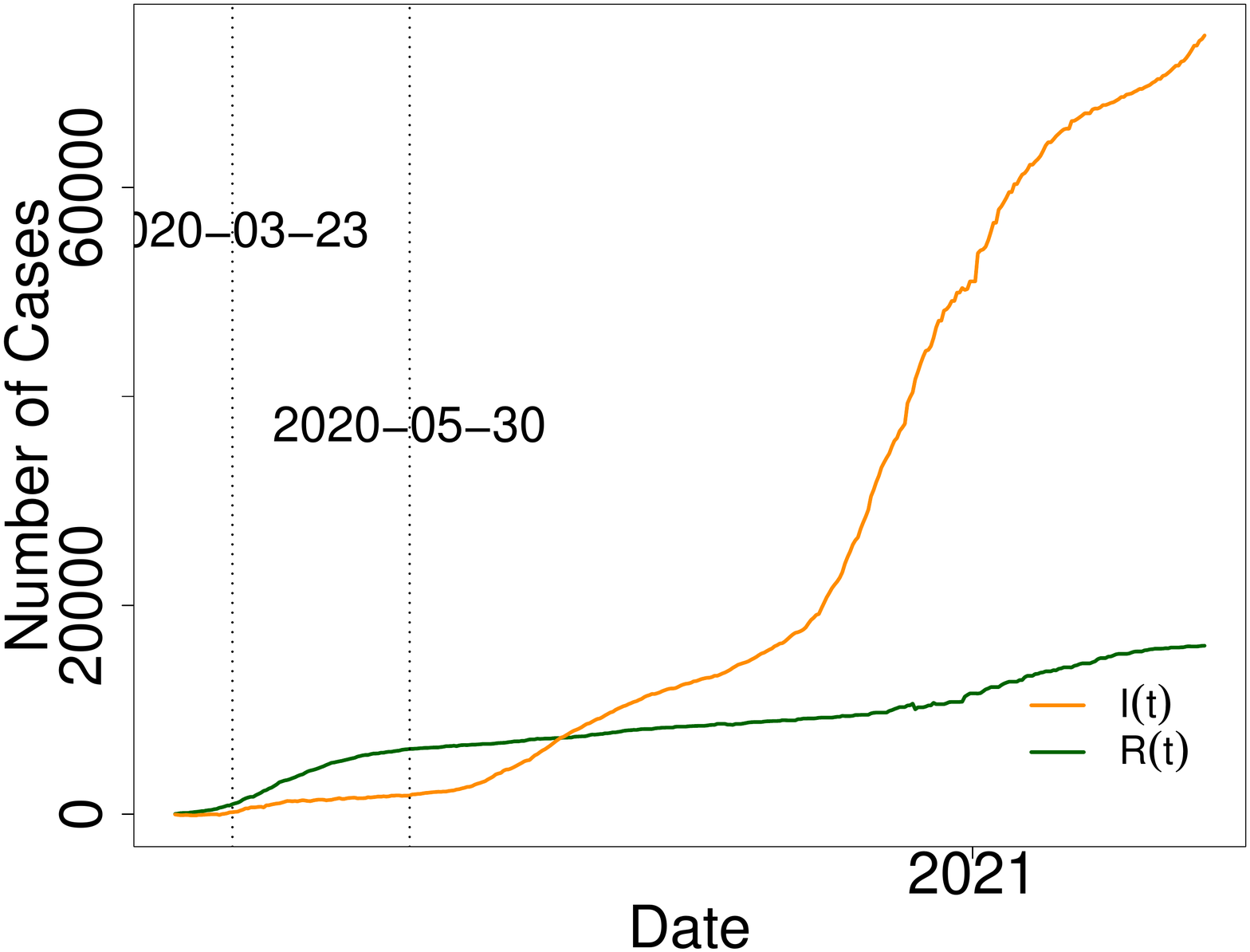}
         \subcaption{King County}
     \end{subfigure}
     \begin{subfigure}[b]{0.19\textwidth}
         \centering
         \includegraphics[width=\textwidth]{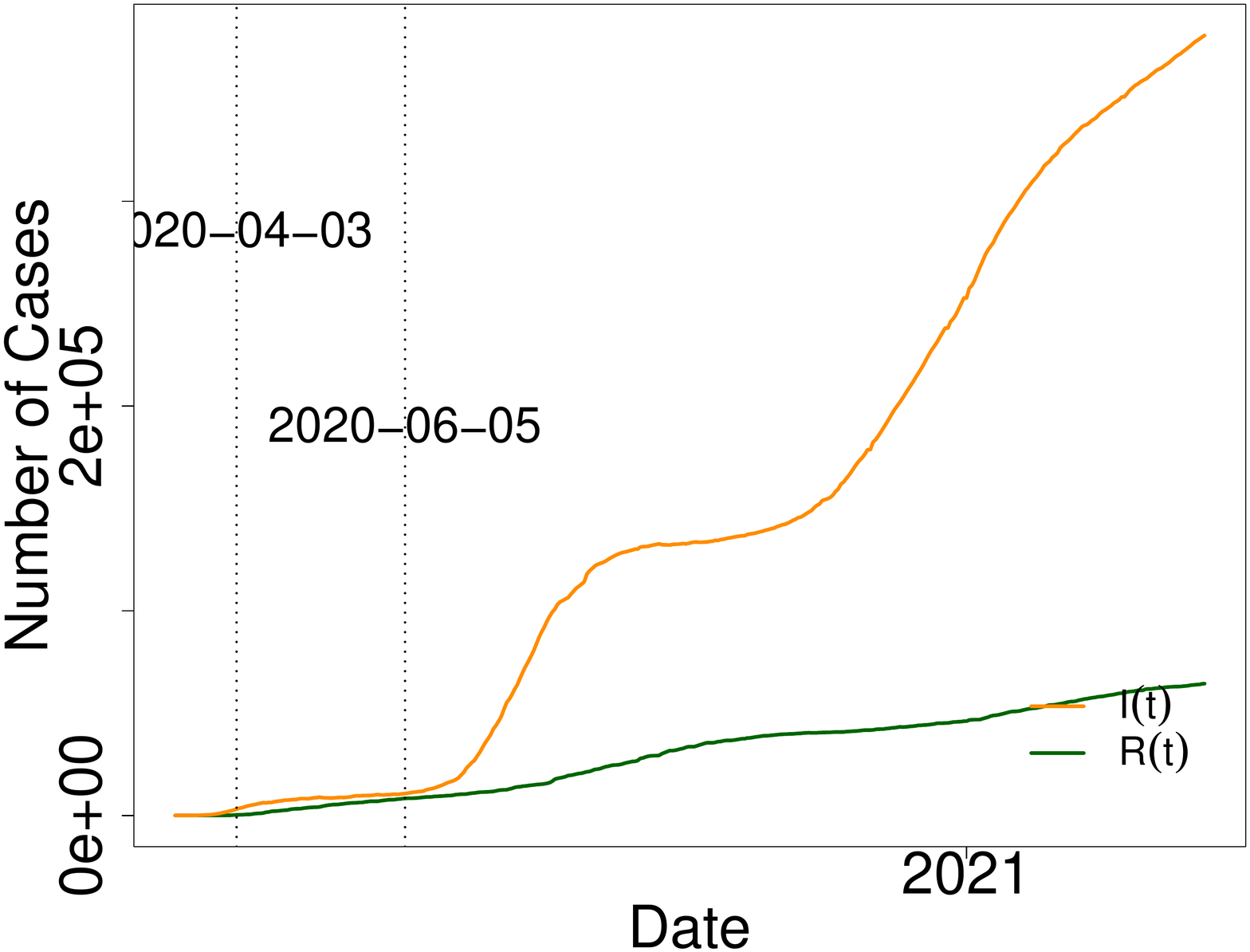}
         \subcaption{Miami-Dade }
     \end{subfigure}
     \begin{subfigure}[b]{0.19\textwidth}
         \centering
         \includegraphics[width=\textwidth]{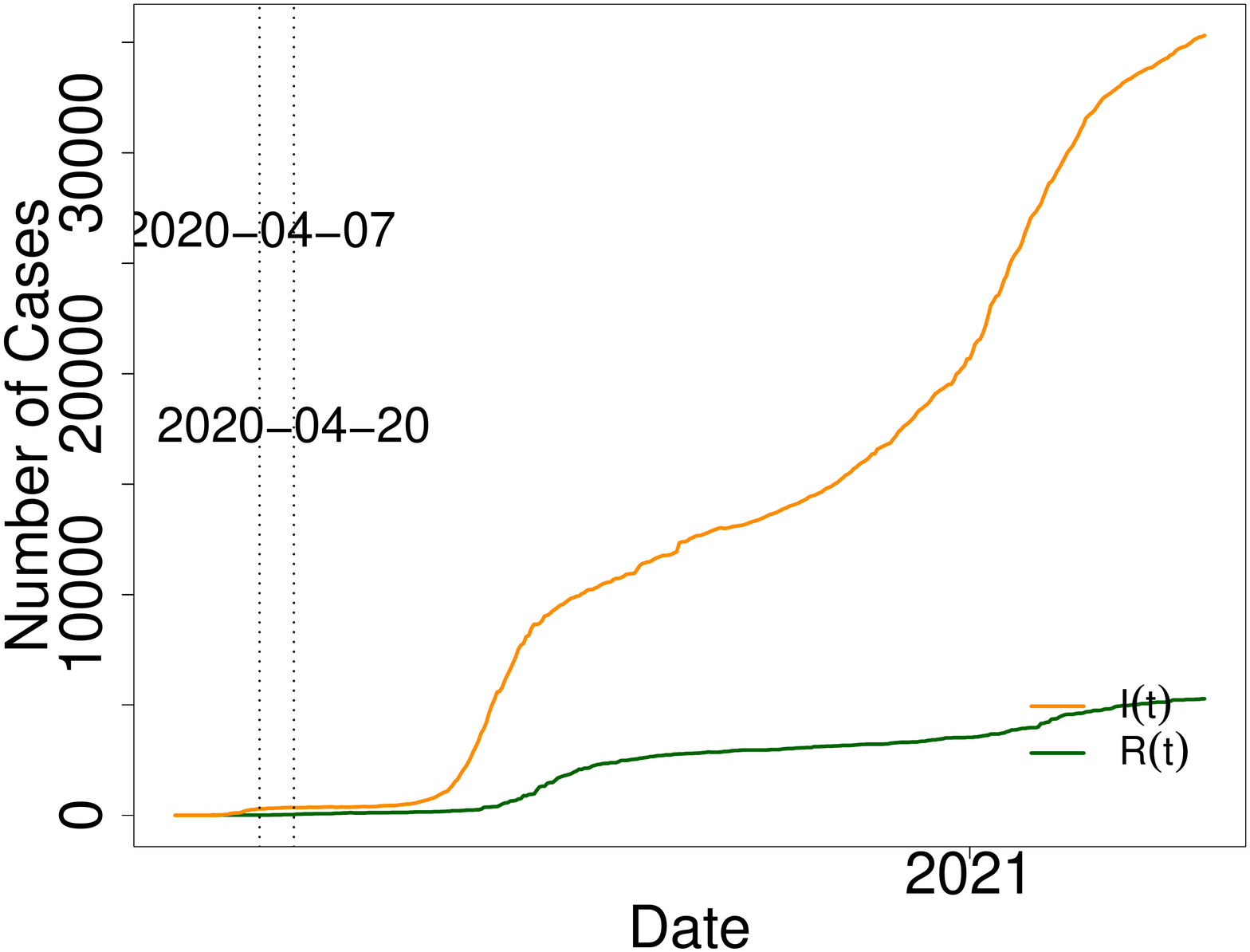}
         \subcaption{Charleston}
     \end{subfigure}
     \begin{subfigure}[b]{0.19\textwidth}
         \centering
         \includegraphics[width=\textwidth]{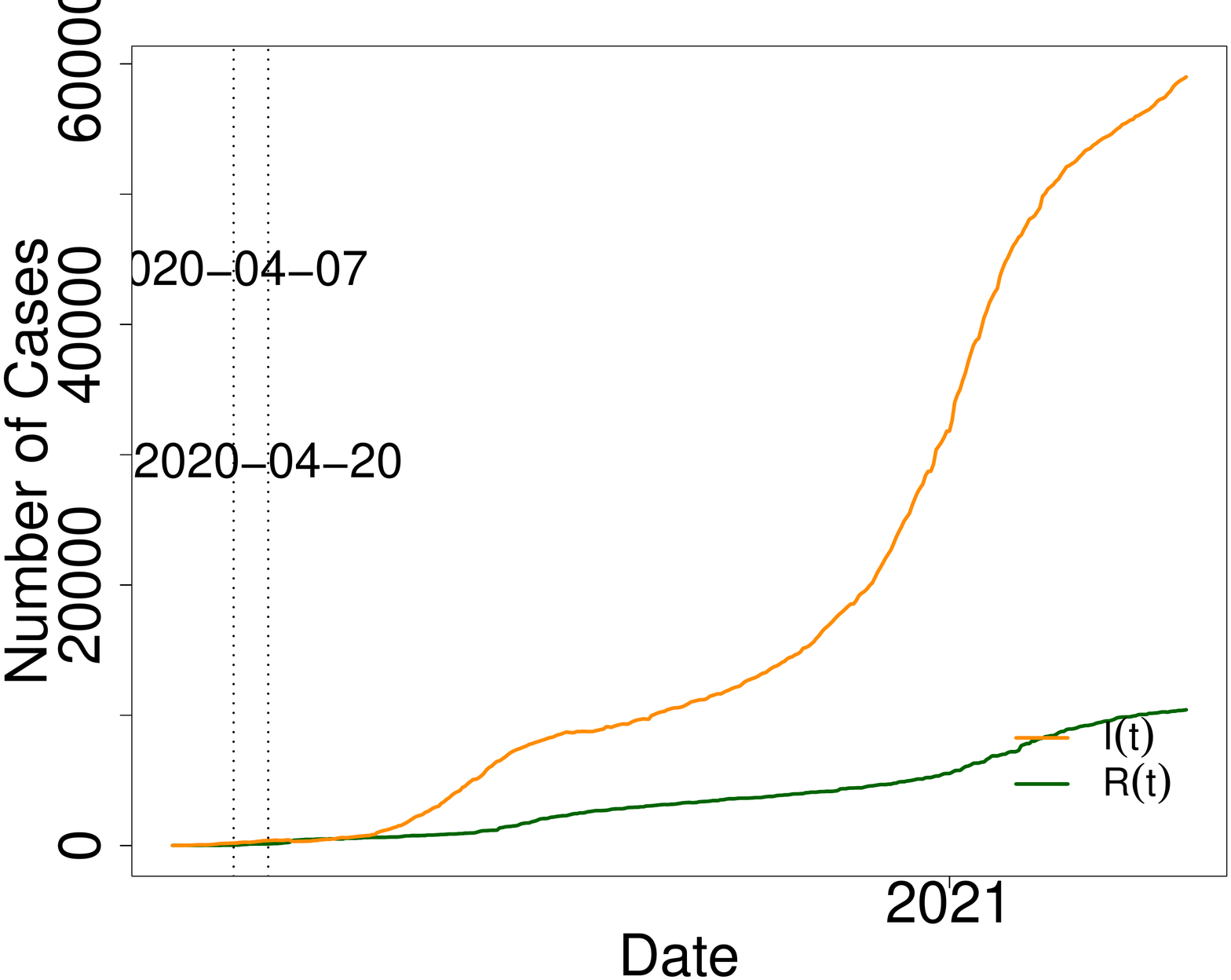}
         \subcaption{Greenville}
     \end{subfigure}
     \begin{subfigure}[b]{0.19\textwidth}
         \centering
         \includegraphics[width=\textwidth]{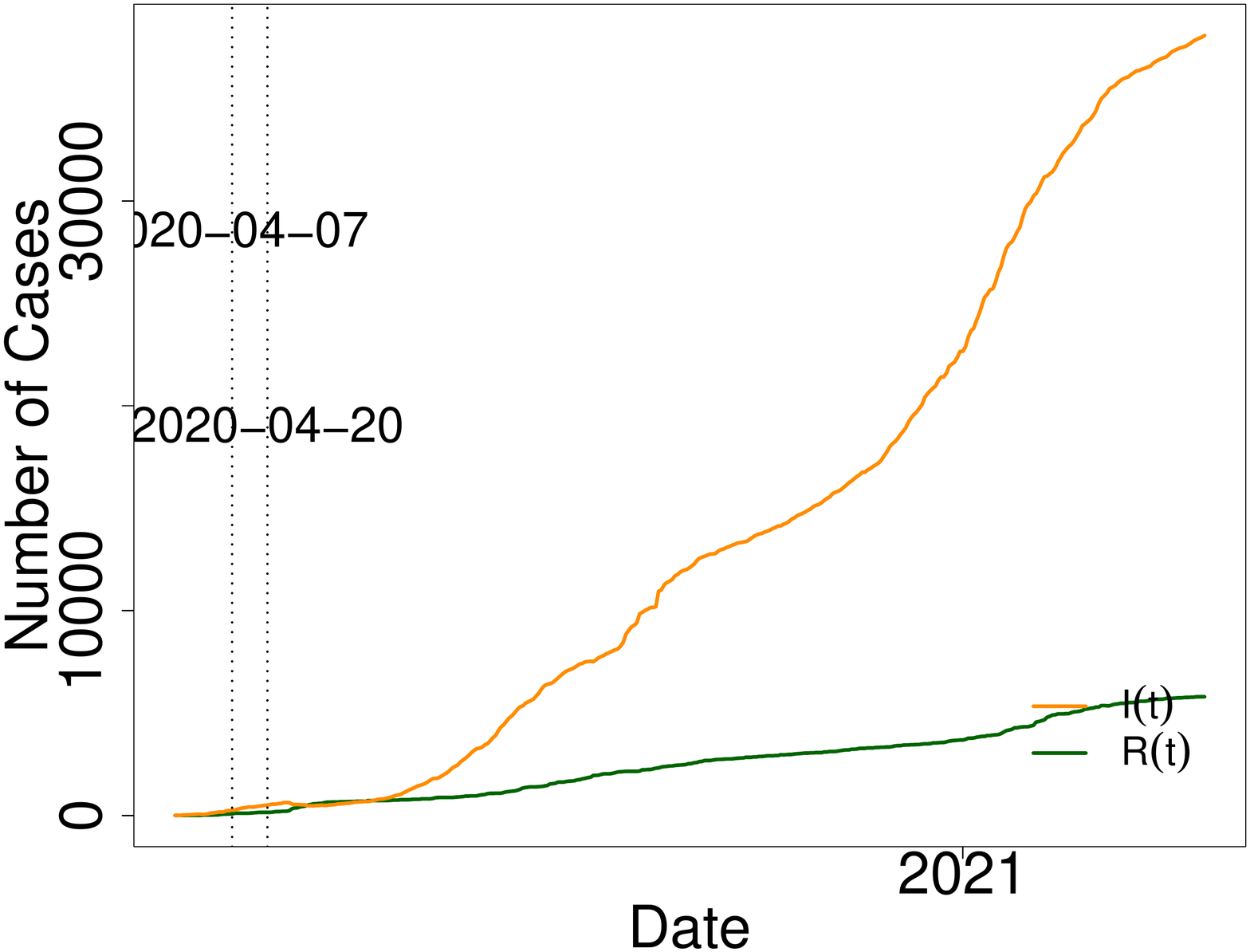}
         \subcaption{Richland}
     \end{subfigure}
     \begin{subfigure}[b]{0.19\textwidth}
         \centering
         \includegraphics[width=\textwidth]{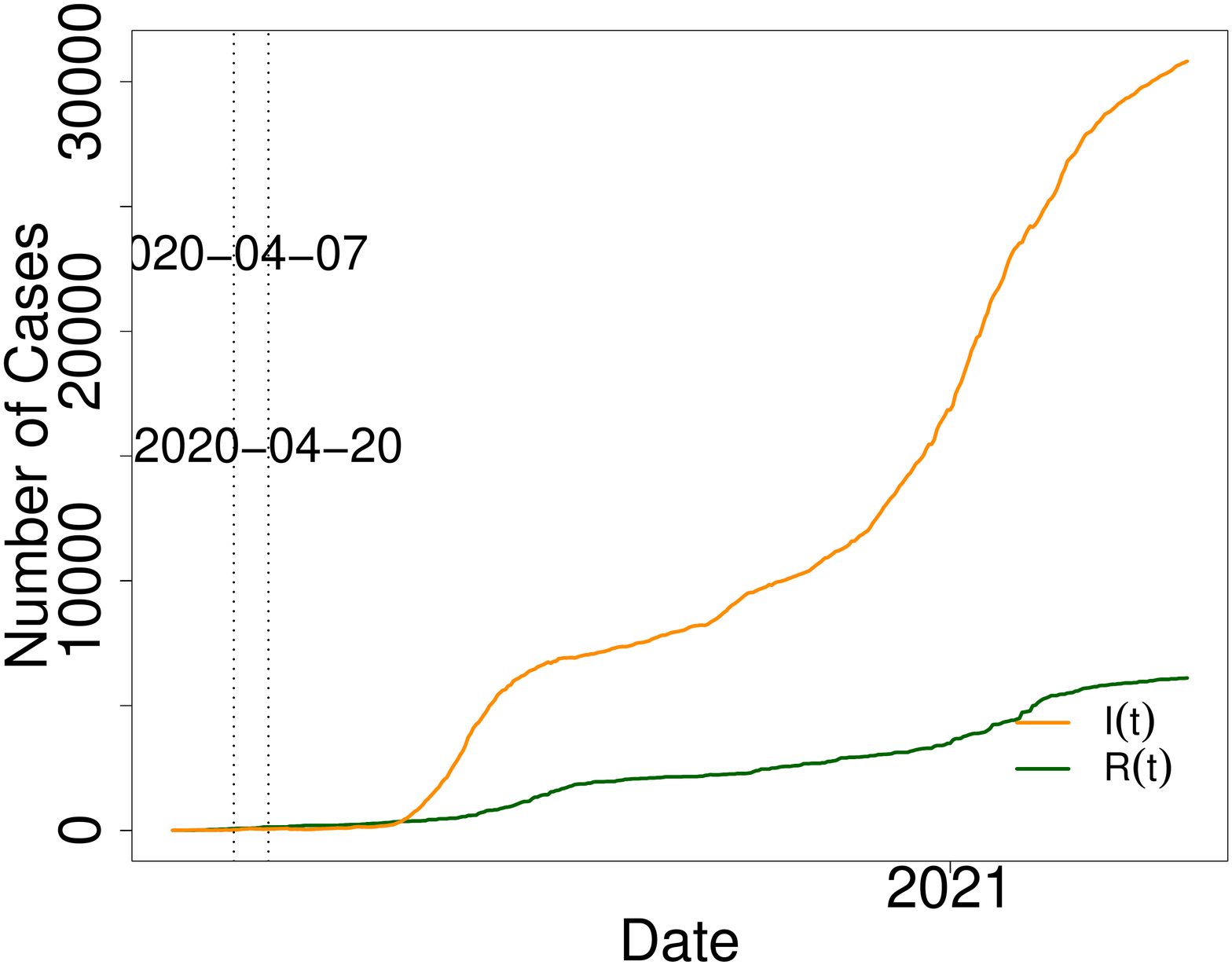}
         \subcaption{Horry}
     \end{subfigure}
     \begin{subfigure}[b]{0.19\textwidth}
         \centering
         \includegraphics[width=\textwidth]{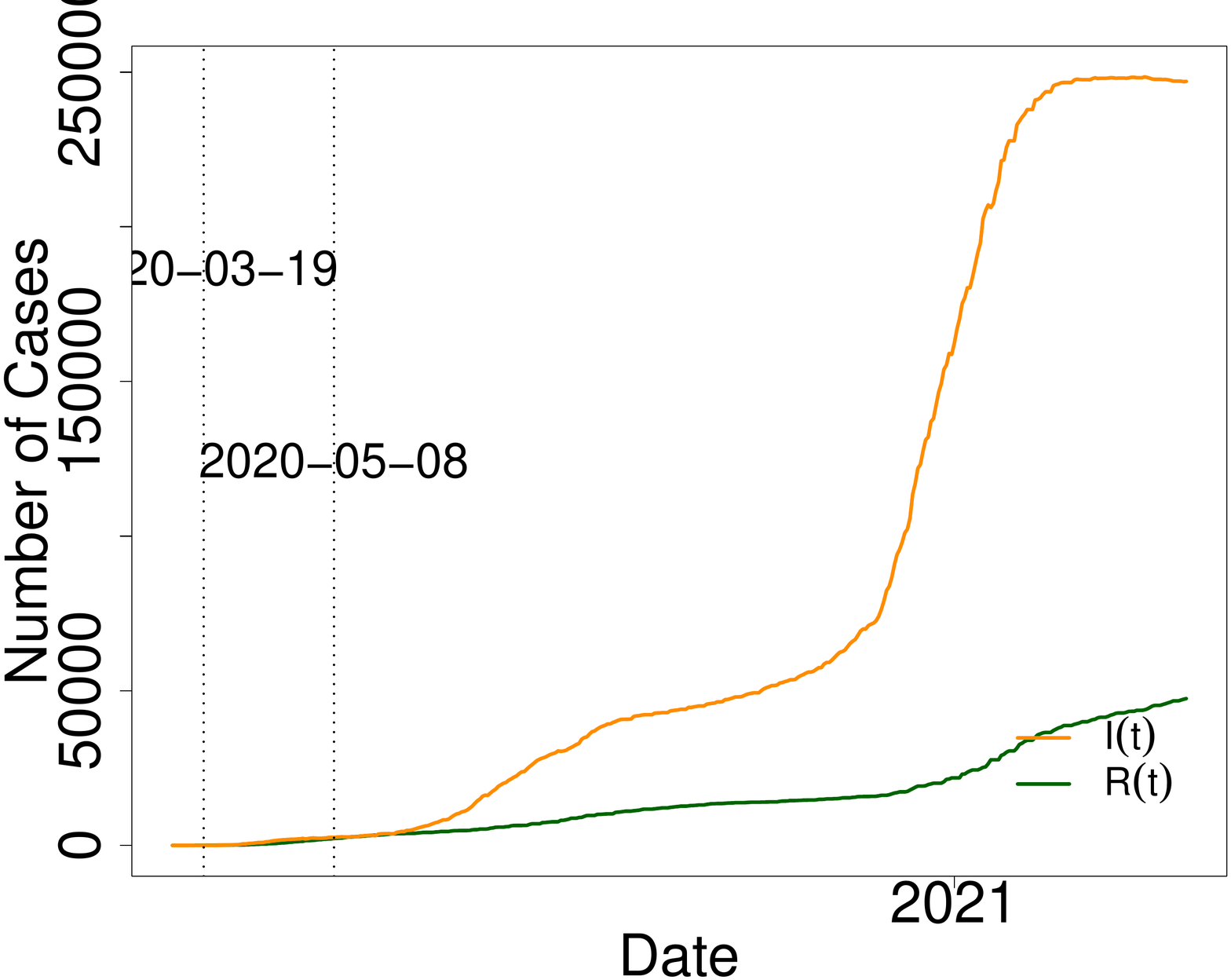}
         \subcaption{Riverside}
     \end{subfigure}
     \begin{subfigure}[b]{0.19\textwidth}
         \centering
         \includegraphics[width=\textwidth]{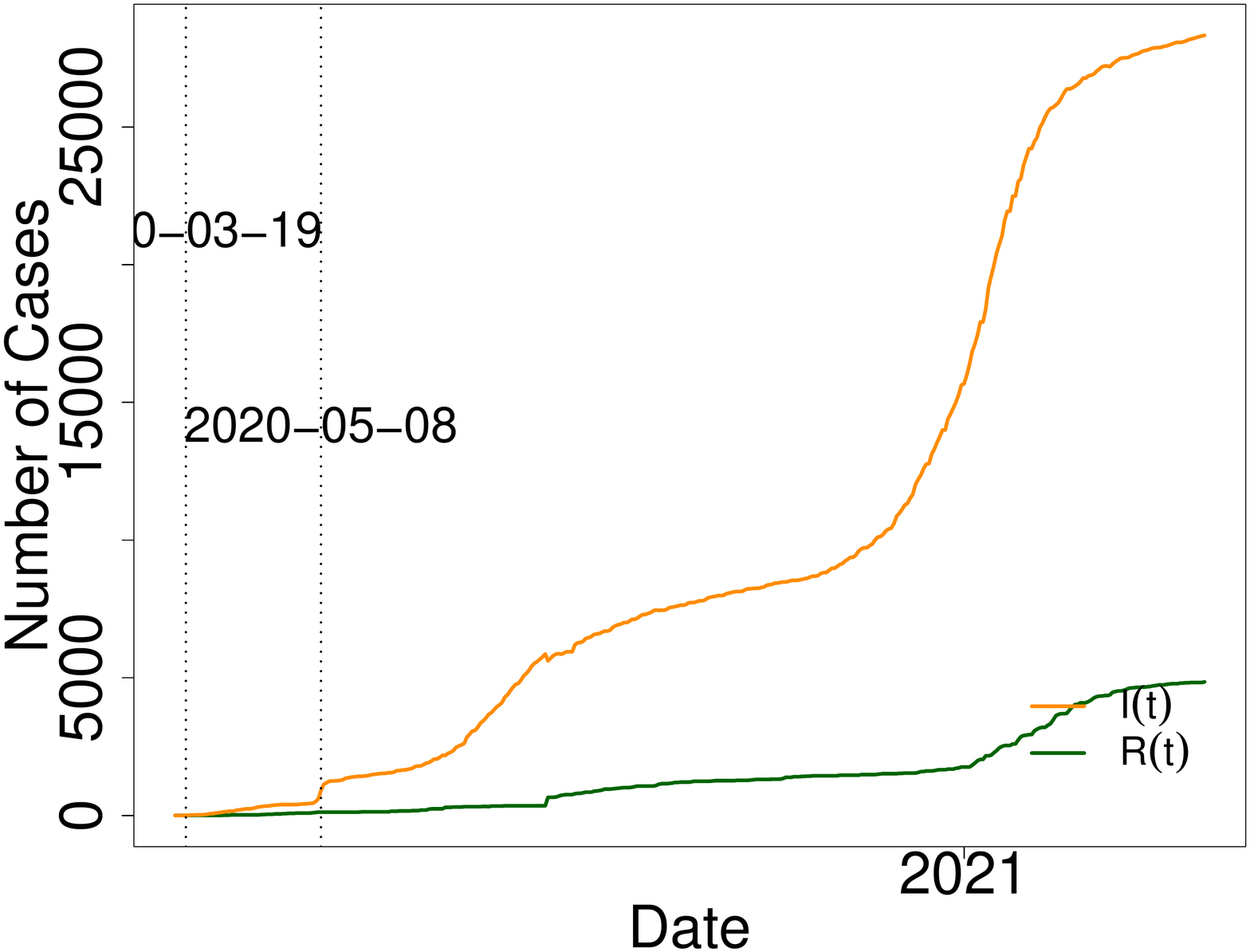}
         \subcaption{Santa Barbara}
     \end{subfigure}
        \caption{
        Number of infected cases (dark orange) and recovered cases (dark green) in counties/cities. The first vertical black dotted line indicates the ``stay-at-home'' order begin date for each corresponding state, while the second vertical black dotted line indicates the reopening begin date.}
        \label{fig:numbers_2}
\end{figure*}

\begin{figure*}[ht!]
     \centering
     \captionsetup[sub]{font=small, labelfont={bf,sf}}
     \begin{subfigure}[b]{0.19\textwidth}
         \centering
         \includegraphics[width=\textwidth]{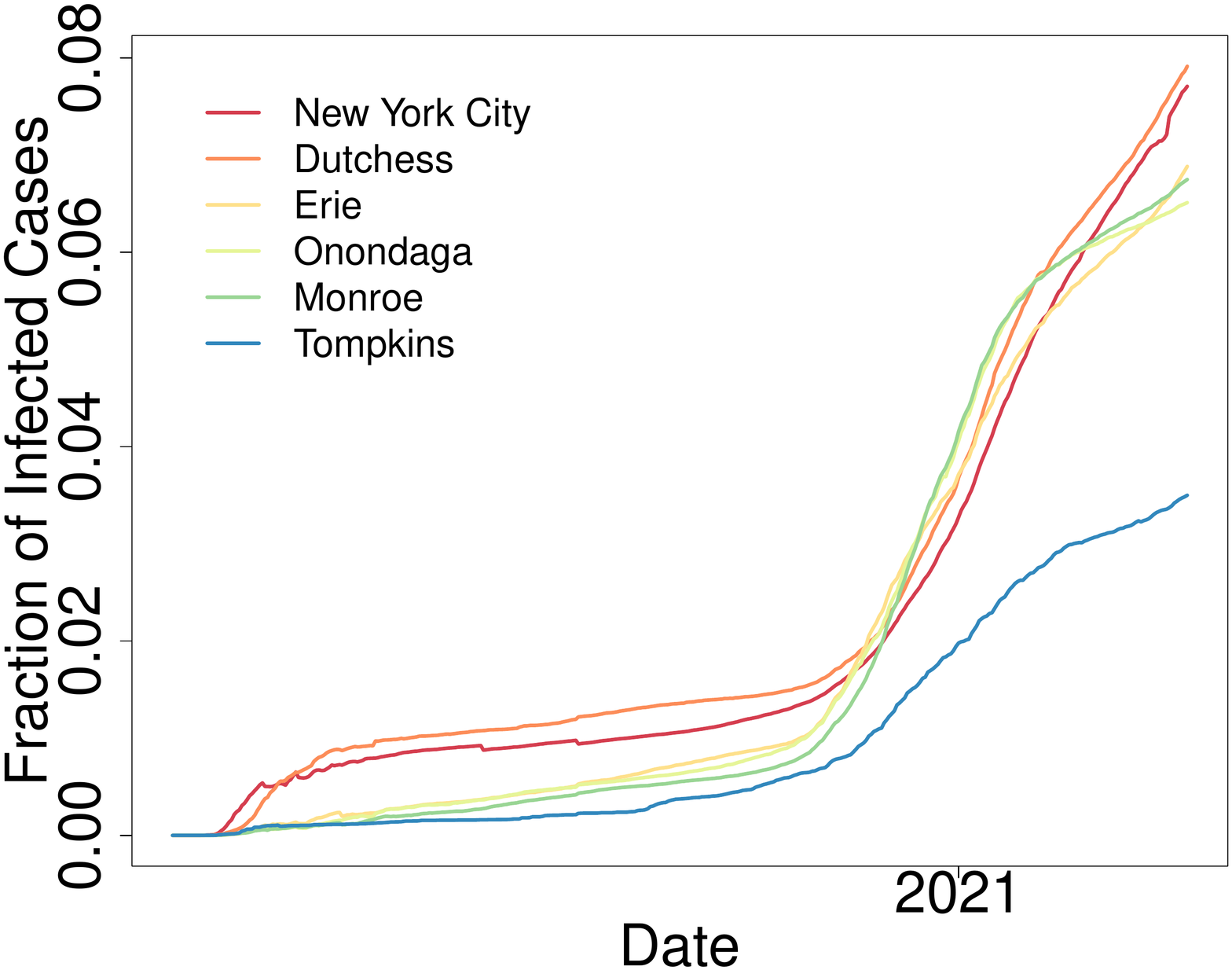}
         \subcaption{New York City}
     \end{subfigure}
     \begin{subfigure}[b]{0.19\textwidth}
         \centering
         \includegraphics[width=\textwidth]{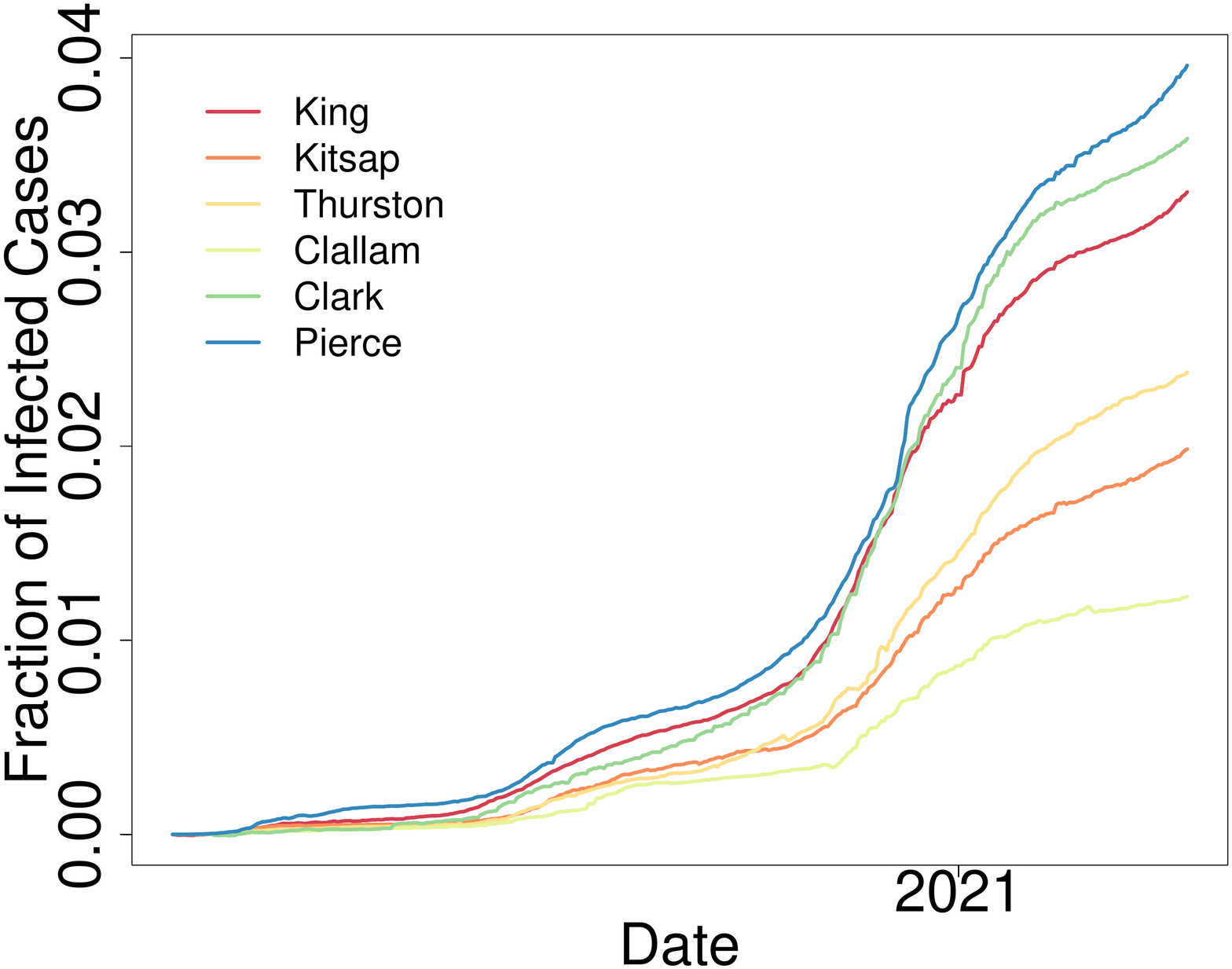}
         \subcaption{King County}
     \end{subfigure}
     \begin{subfigure}[b]{0.19\textwidth}
         \centering
         \includegraphics[width=\textwidth]{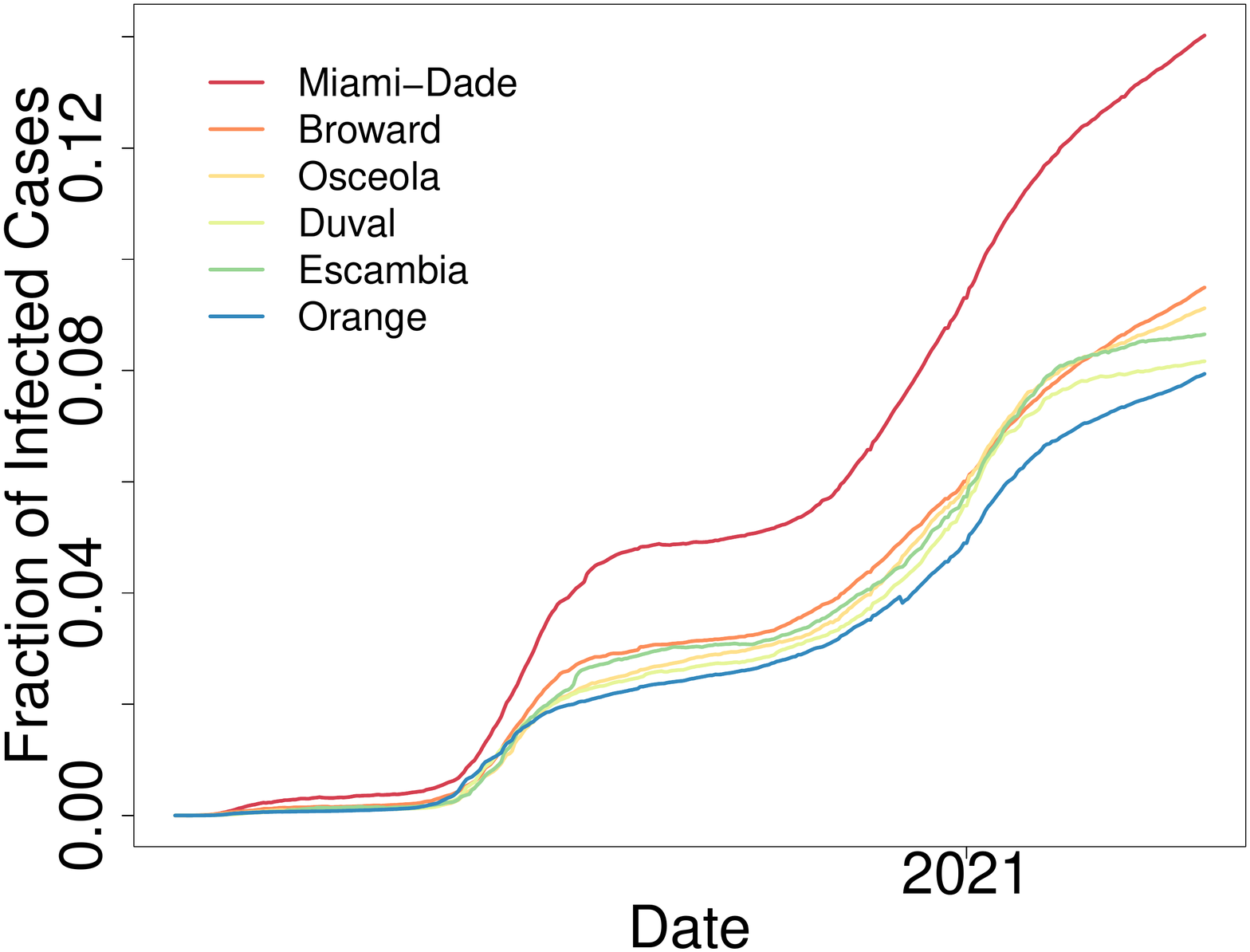}
         \subcaption{Miami-Dade}
     \end{subfigure}
     \begin{subfigure}[b]{0.19\textwidth}
         \centering
         \includegraphics[width=\textwidth]{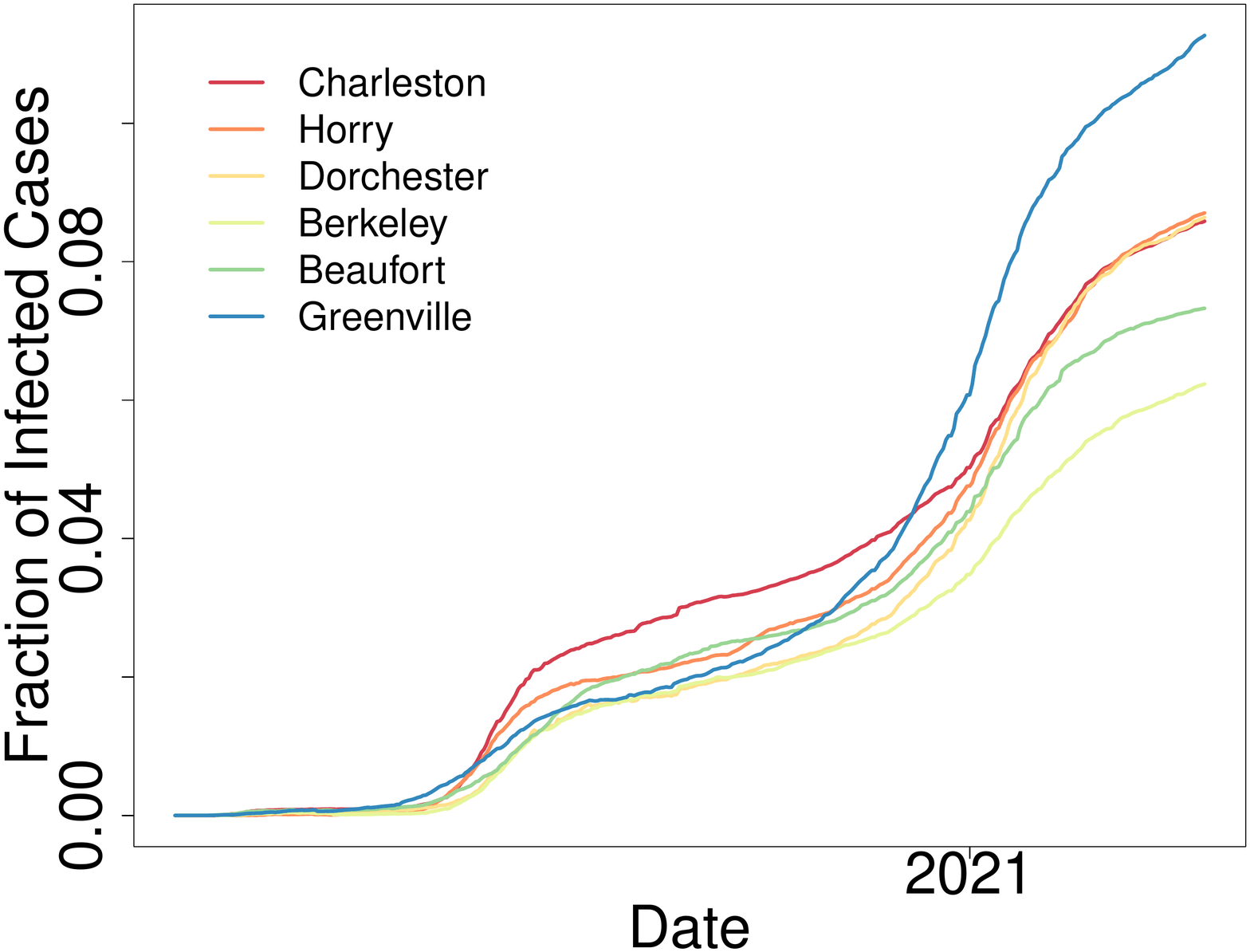}
         \subcaption{Charleston}
     \end{subfigure}
     \begin{subfigure}[b]{0.19\textwidth}
         \centering
         \includegraphics[width=\textwidth]{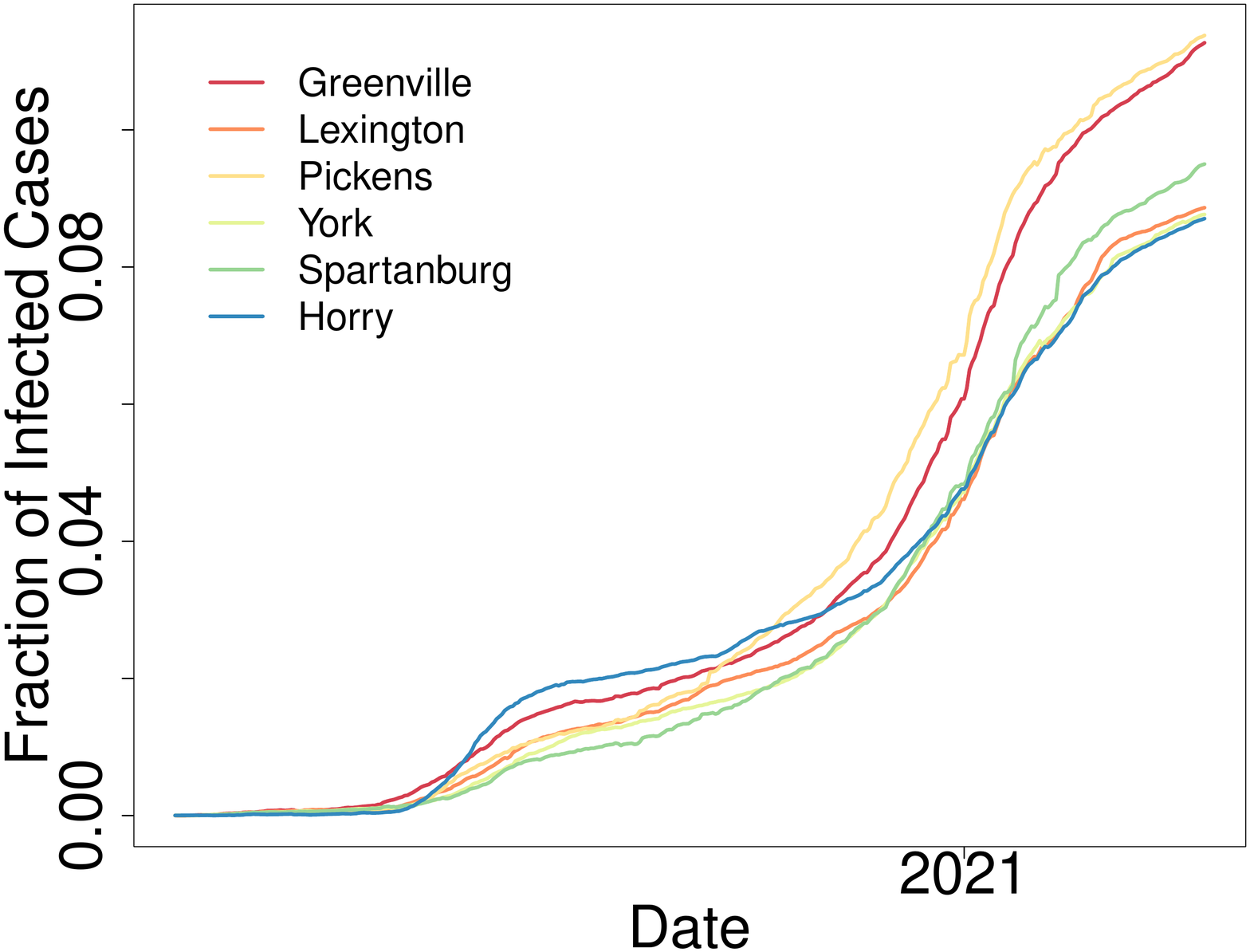}
         \subcaption{Greenville}
     \end{subfigure}
     \begin{subfigure}[b]{0.19\textwidth}
         \centering
         \includegraphics[width=\textwidth]{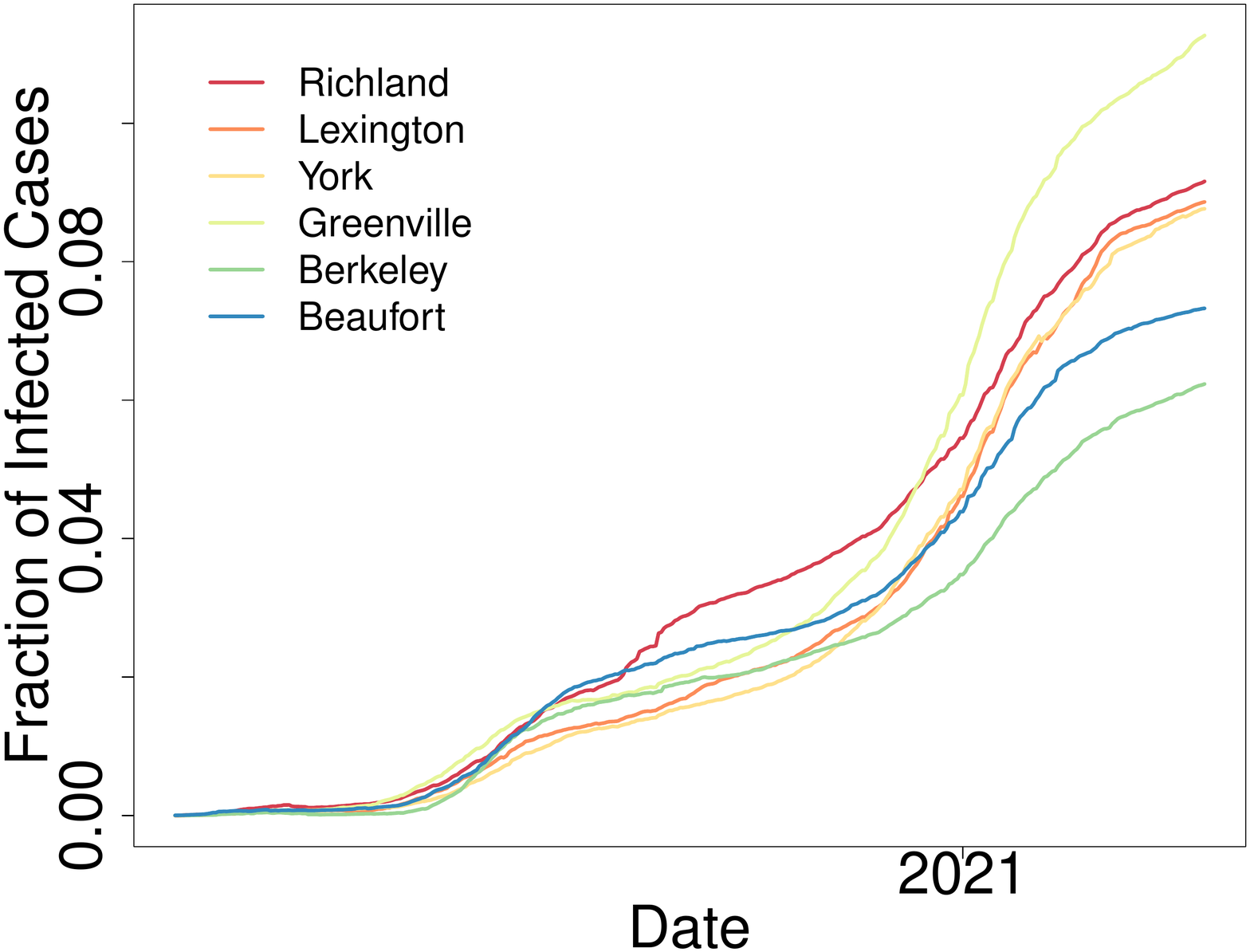}
         \subcaption{Richland}
     \end{subfigure}
     \begin{subfigure}[b]{0.19\textwidth}
         \centering
         \includegraphics[width=\textwidth]{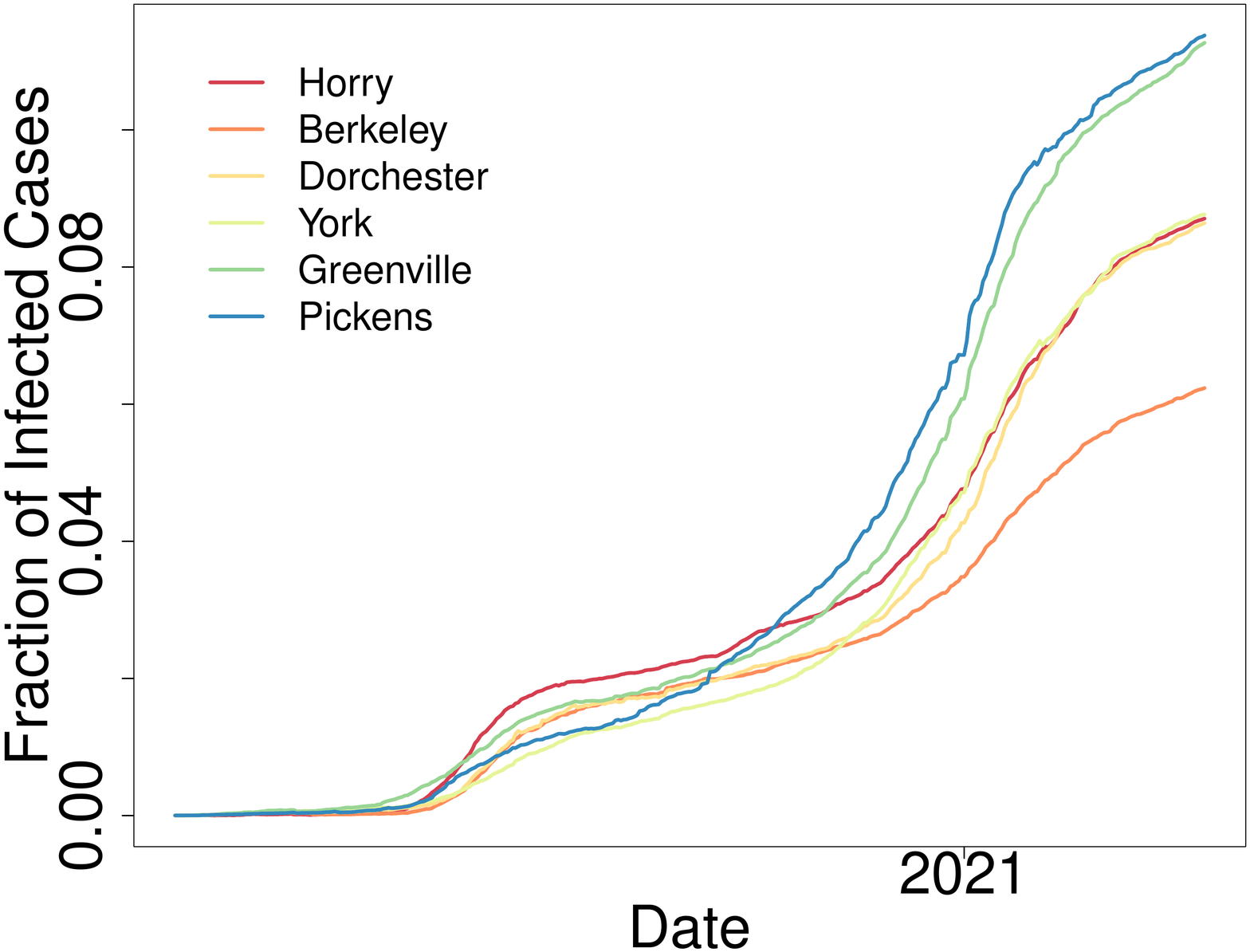}
         \subcaption{Horry}
     \end{subfigure}
     \begin{subfigure}[b]{0.19\textwidth}
         \centering
         \includegraphics[width=\textwidth]{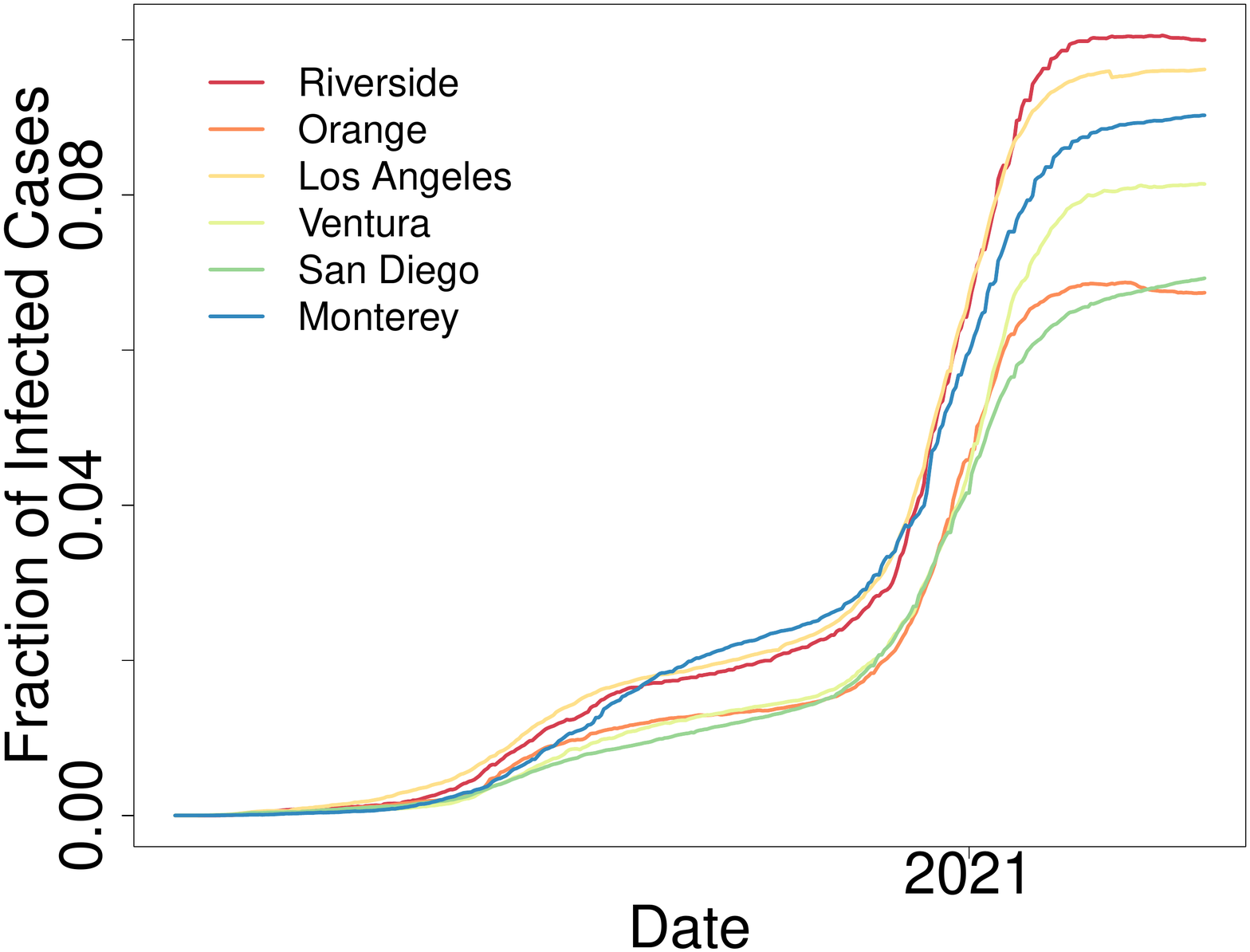}
         \subcaption{Riverside}
     \end{subfigure}
     \begin{subfigure}[b]{0.19\textwidth}
         \centering
         \includegraphics[width=\textwidth]{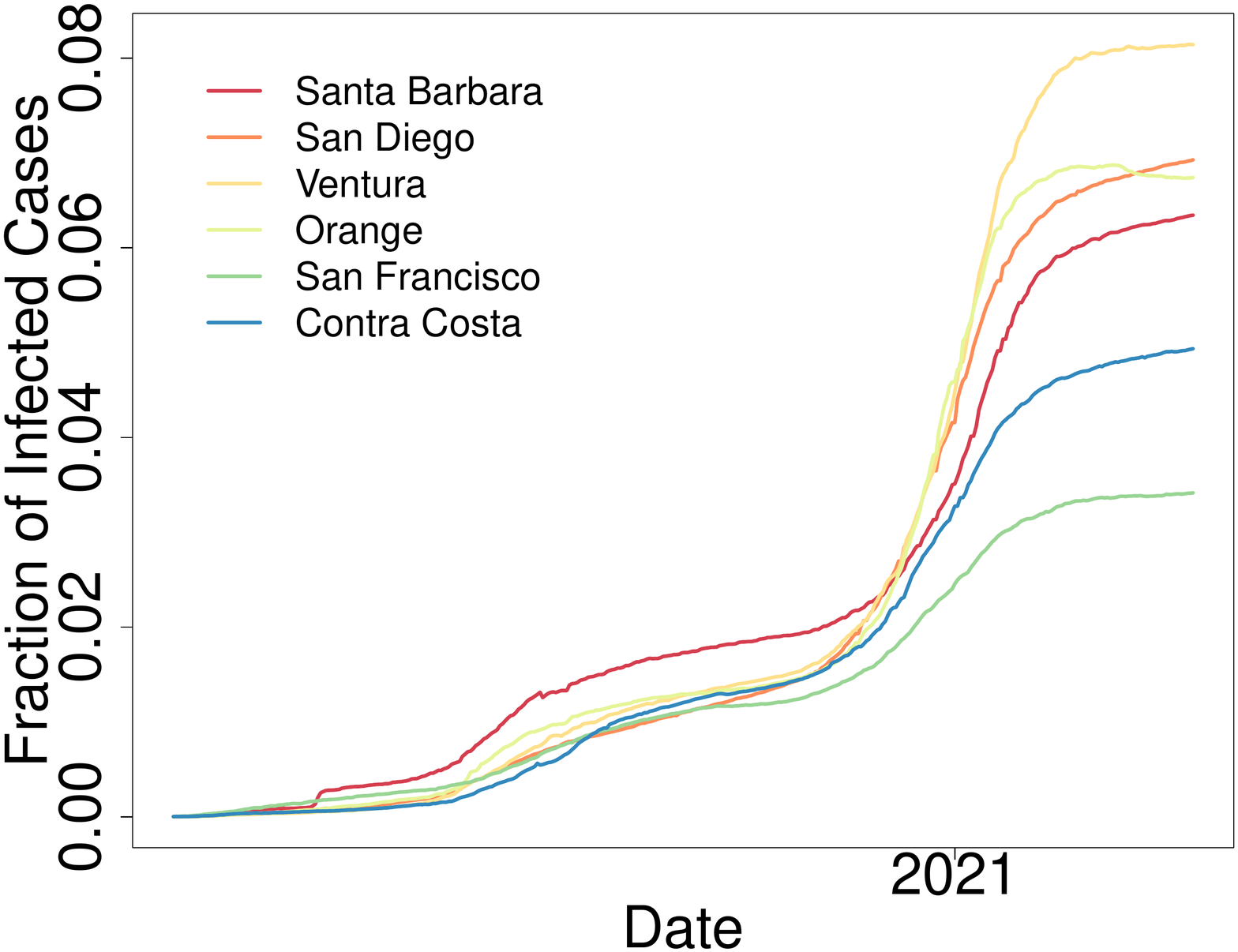}
         \subcaption{Santa Barbara}
     \end{subfigure}
    \caption{Fractions of infected in selected regions and their neighboring regions chosen by similarity score. }
        \label{fig:Infected_fraction_adj_2}
\end{figure*}

\begin{figure*}[ht!]
     \centering
      \captionsetup[sub]{font=small, labelfont={bf,sf}}
     \begin{subfigure}[b]{0.19\textwidth}
         \centering
         \includegraphics[width=\textwidth]{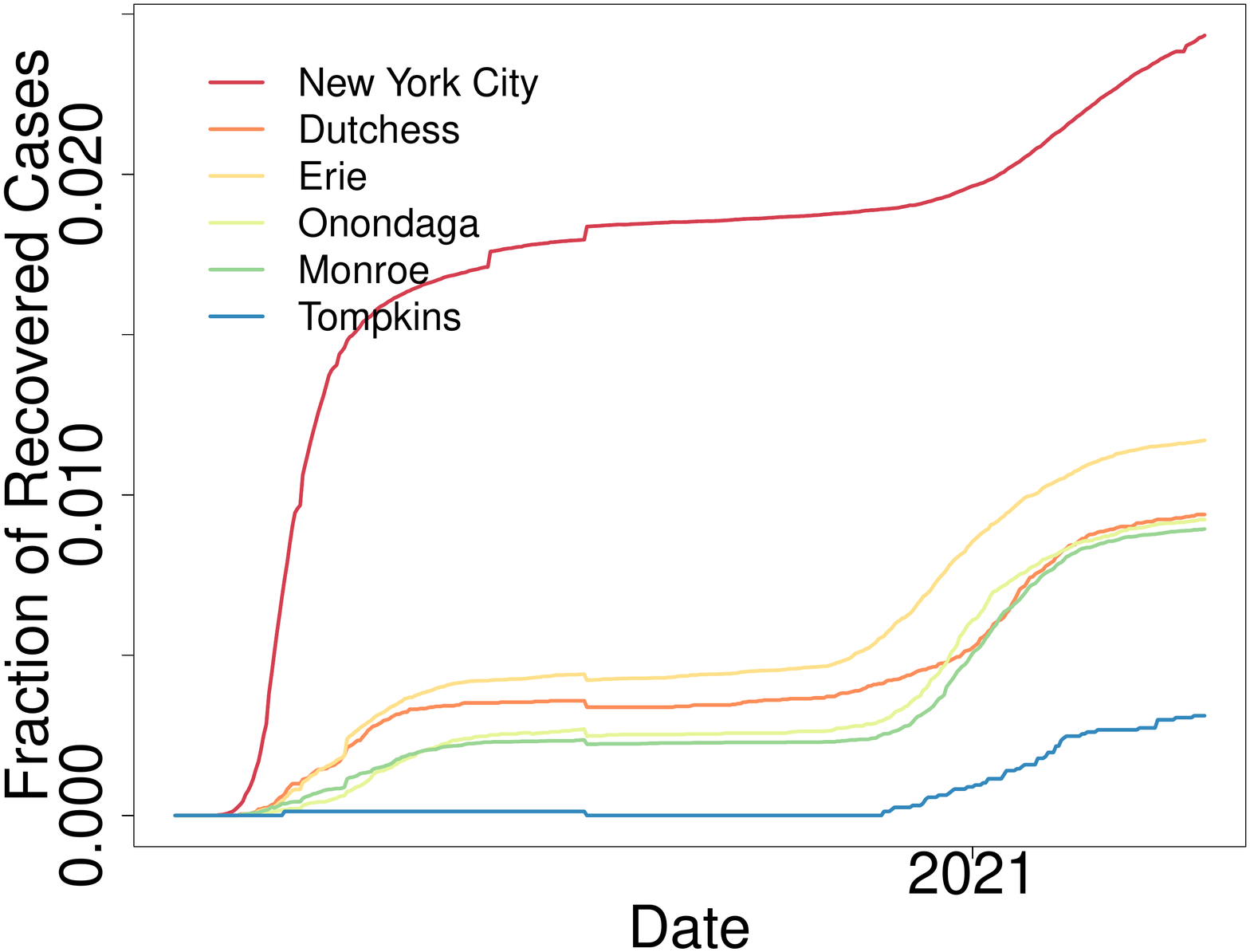}
         \subcaption{New York City}
     \end{subfigure}
     \begin{subfigure}[b]{0.19\textwidth}
         \centering
         \includegraphics[width=\textwidth]{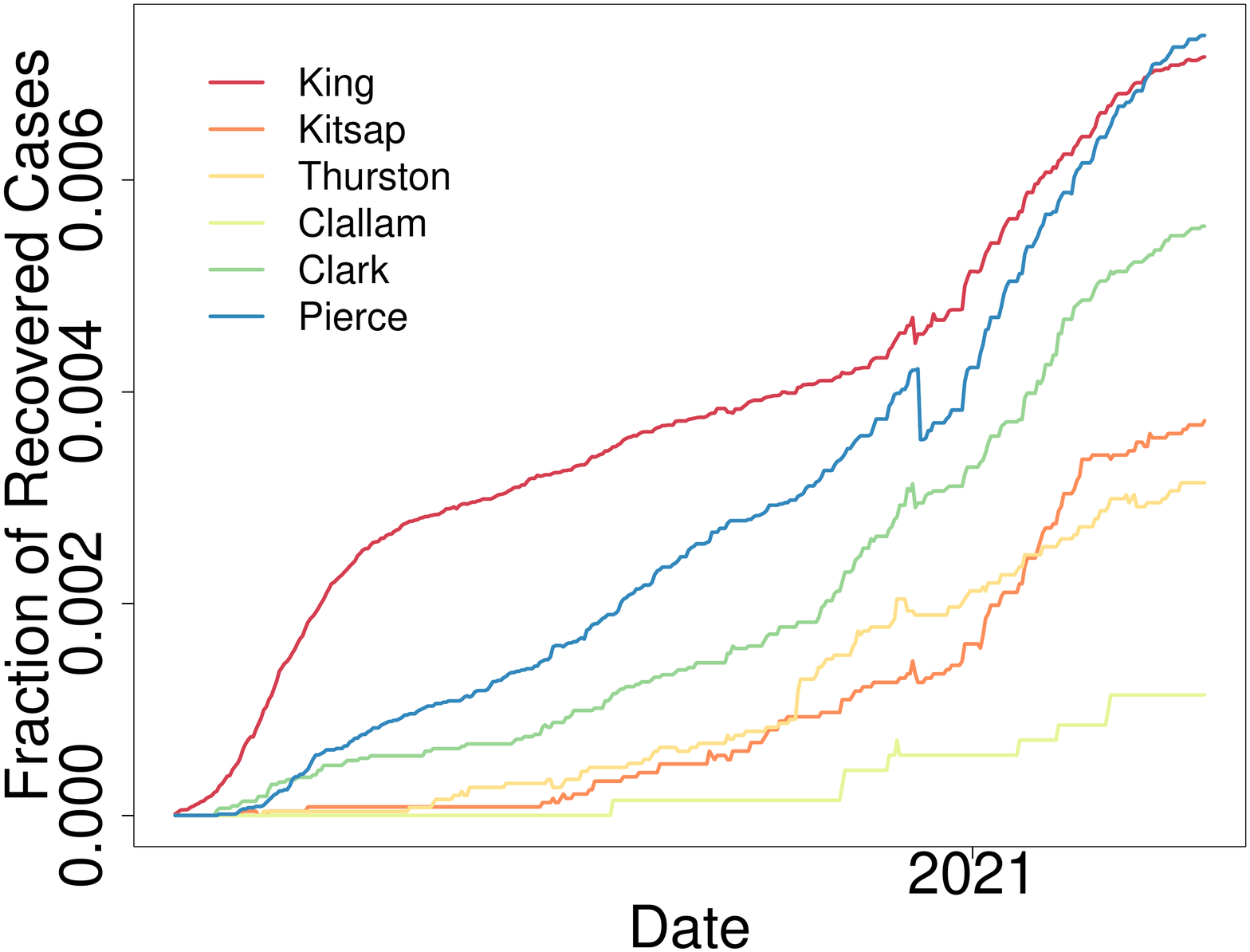}
         \subcaption{King County}
     \end{subfigure}
     \begin{subfigure}[b]{0.19\textwidth}
         \centering
         \includegraphics[width=\textwidth]{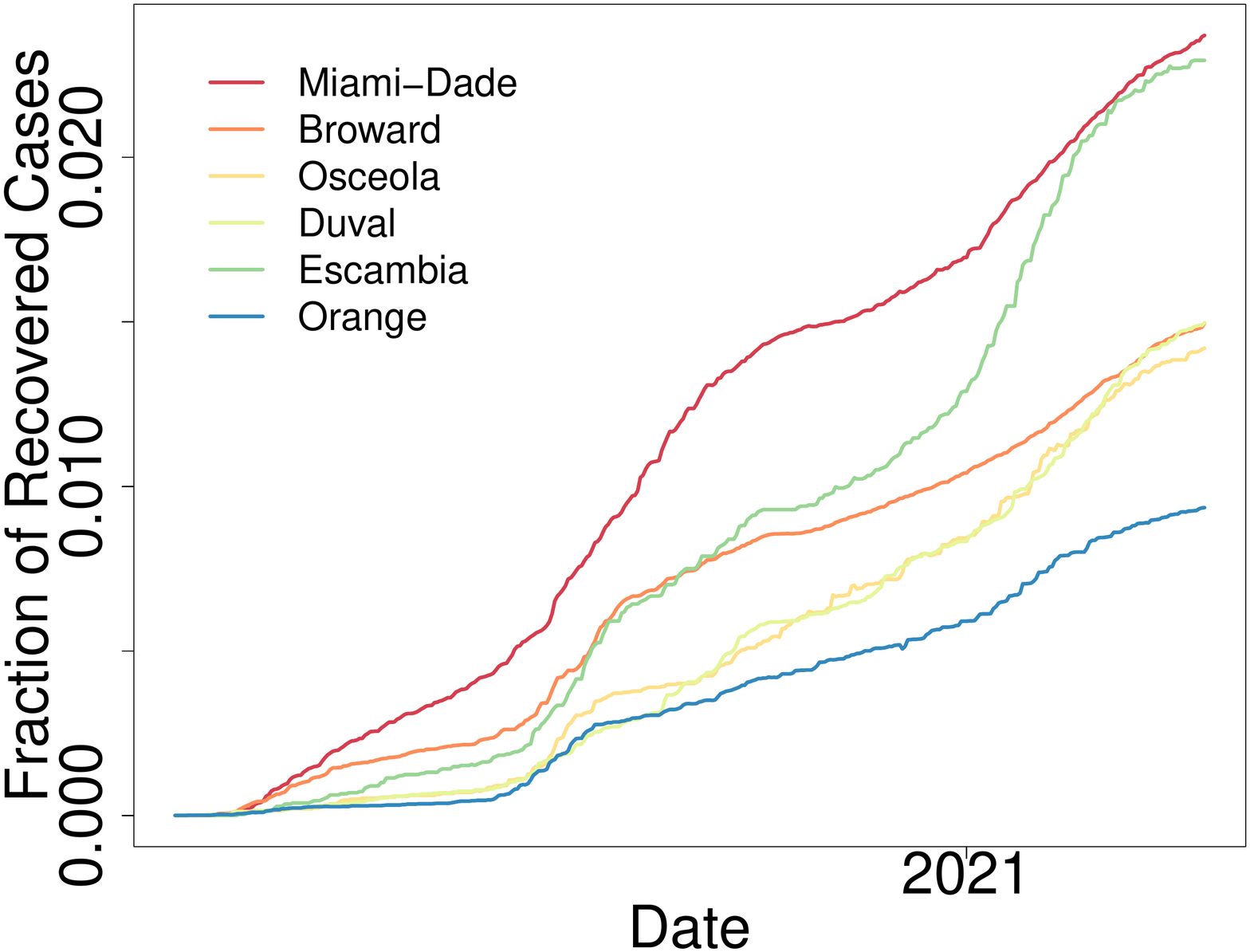}
         \subcaption{Miami-Dade}
     \end{subfigure}
     \begin{subfigure}[b]{0.19\textwidth}
         \centering
         \includegraphics[width=\textwidth]{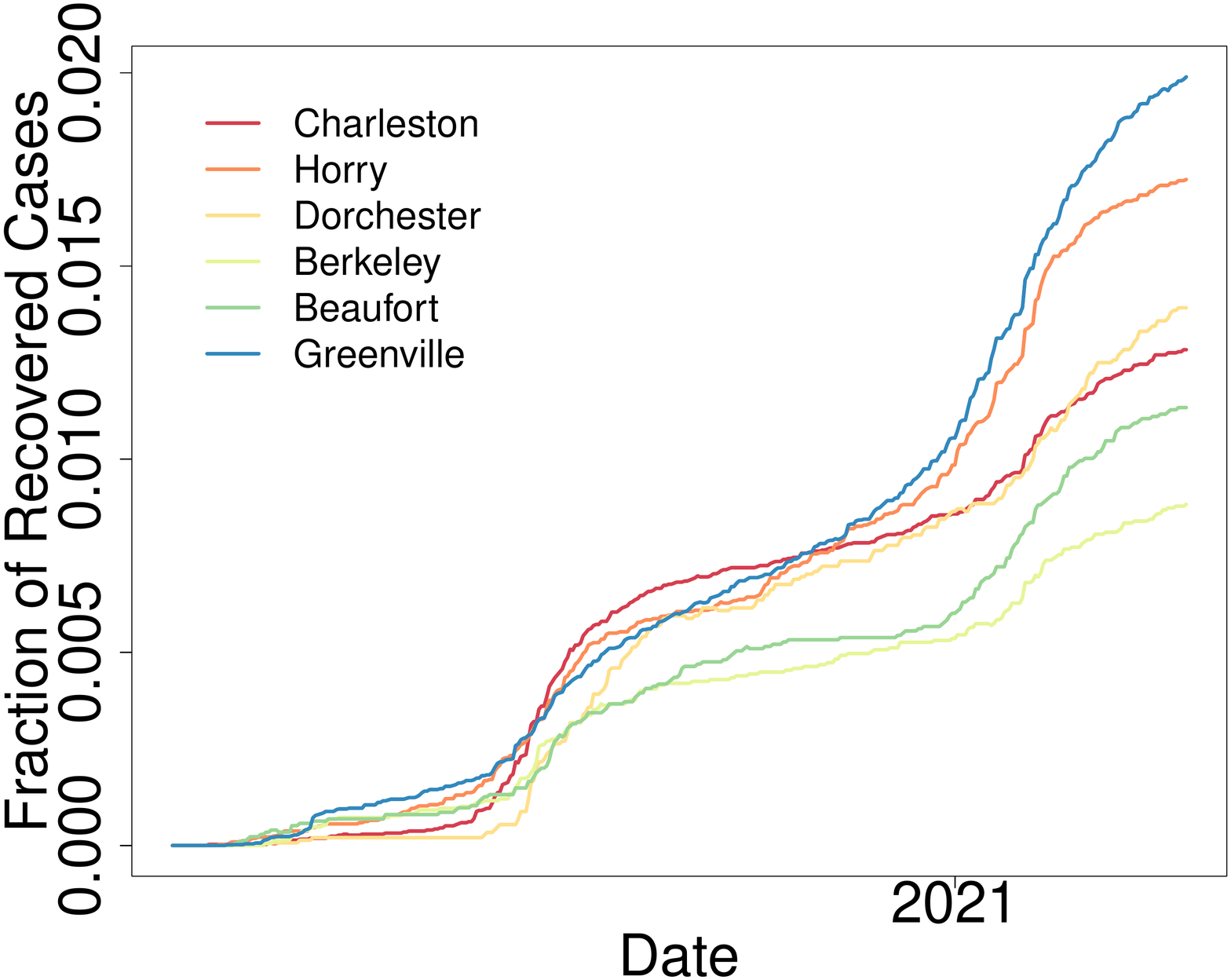}
         \subcaption{Charleston}
     \end{subfigure}
     \begin{subfigure}[b]{0.19\textwidth}
         \centering
         \includegraphics[width=\textwidth]{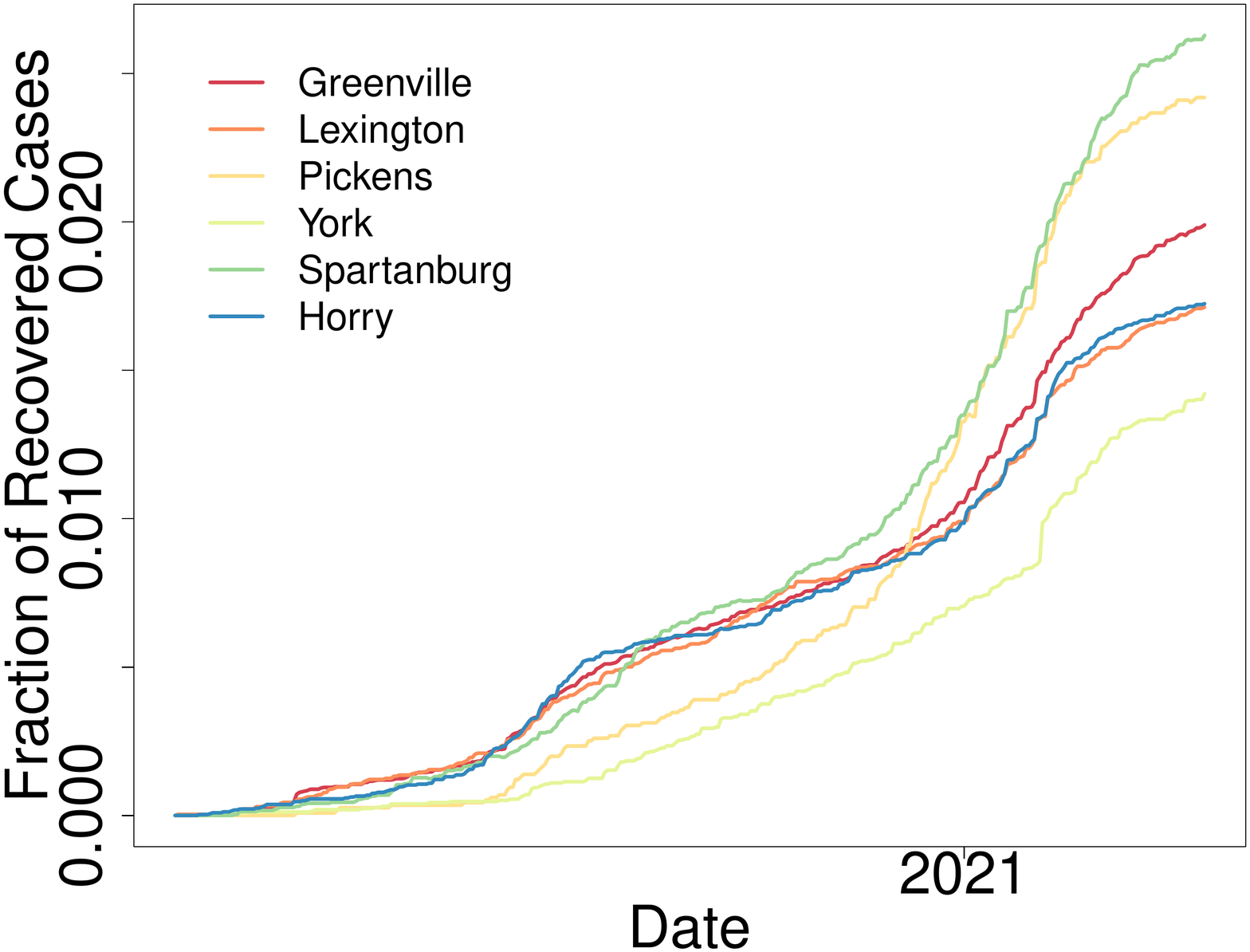}
         \subcaption{Greenville}
     \end{subfigure}
     \begin{subfigure}[b]{0.19\textwidth}
         \centering
         \includegraphics[width=\textwidth]{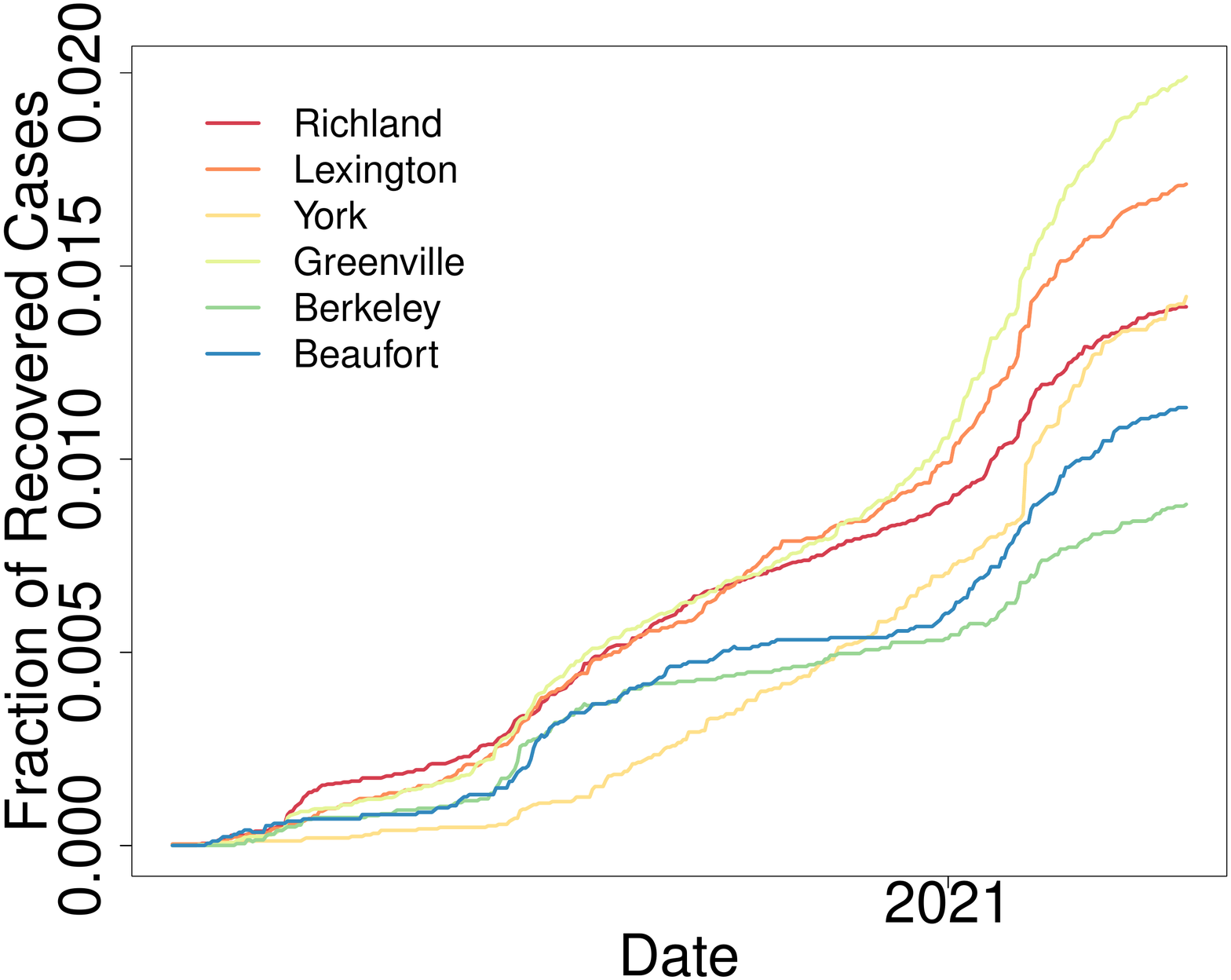}
         \subcaption{Richland}
     \end{subfigure}
     \begin{subfigure}[b]{0.19\textwidth}
         \centering
         \includegraphics[width=\textwidth]{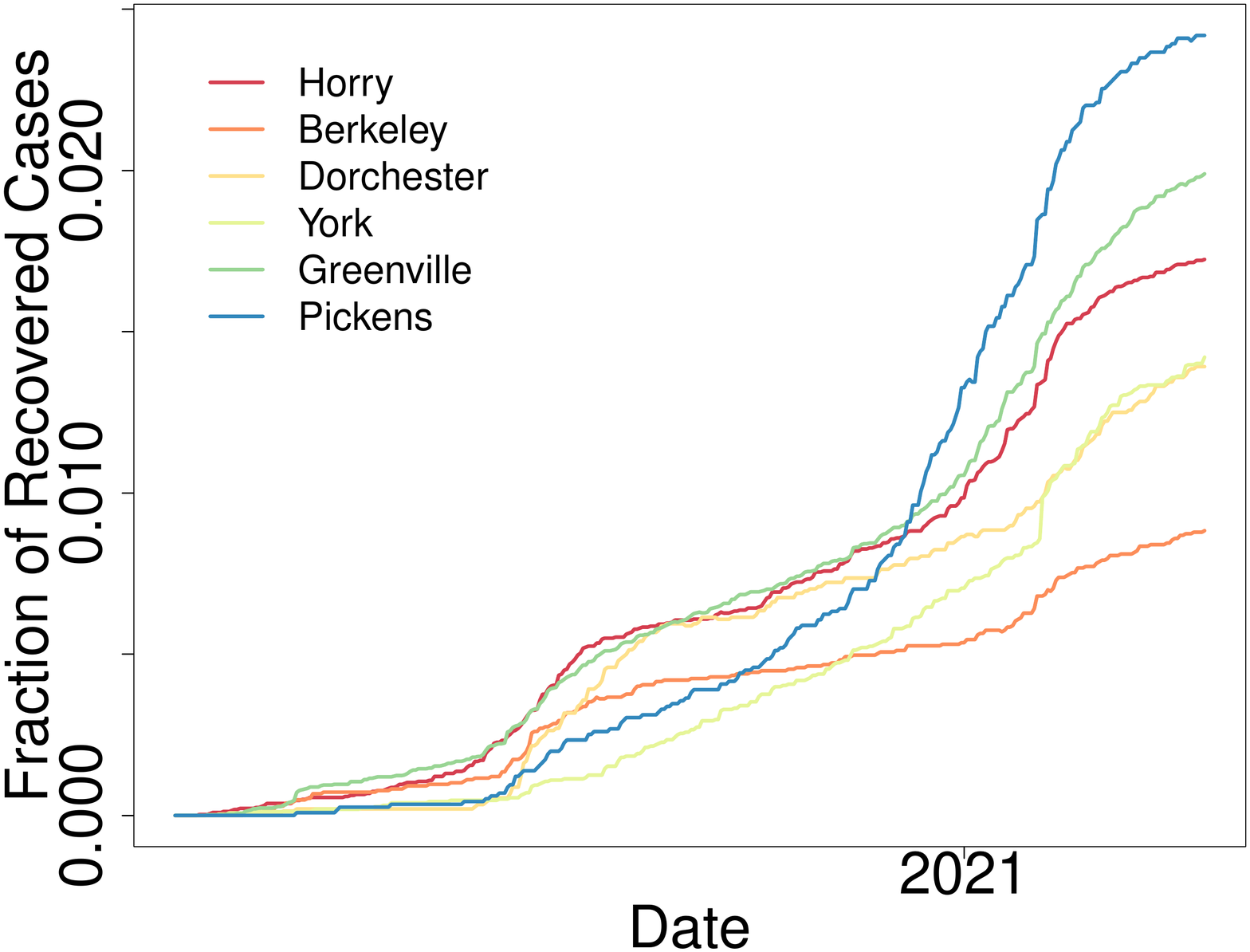}
         \subcaption{Horry}
     \end{subfigure}
     \begin{subfigure}[b]{0.19\textwidth}
         \centering
         \includegraphics[width=\textwidth]{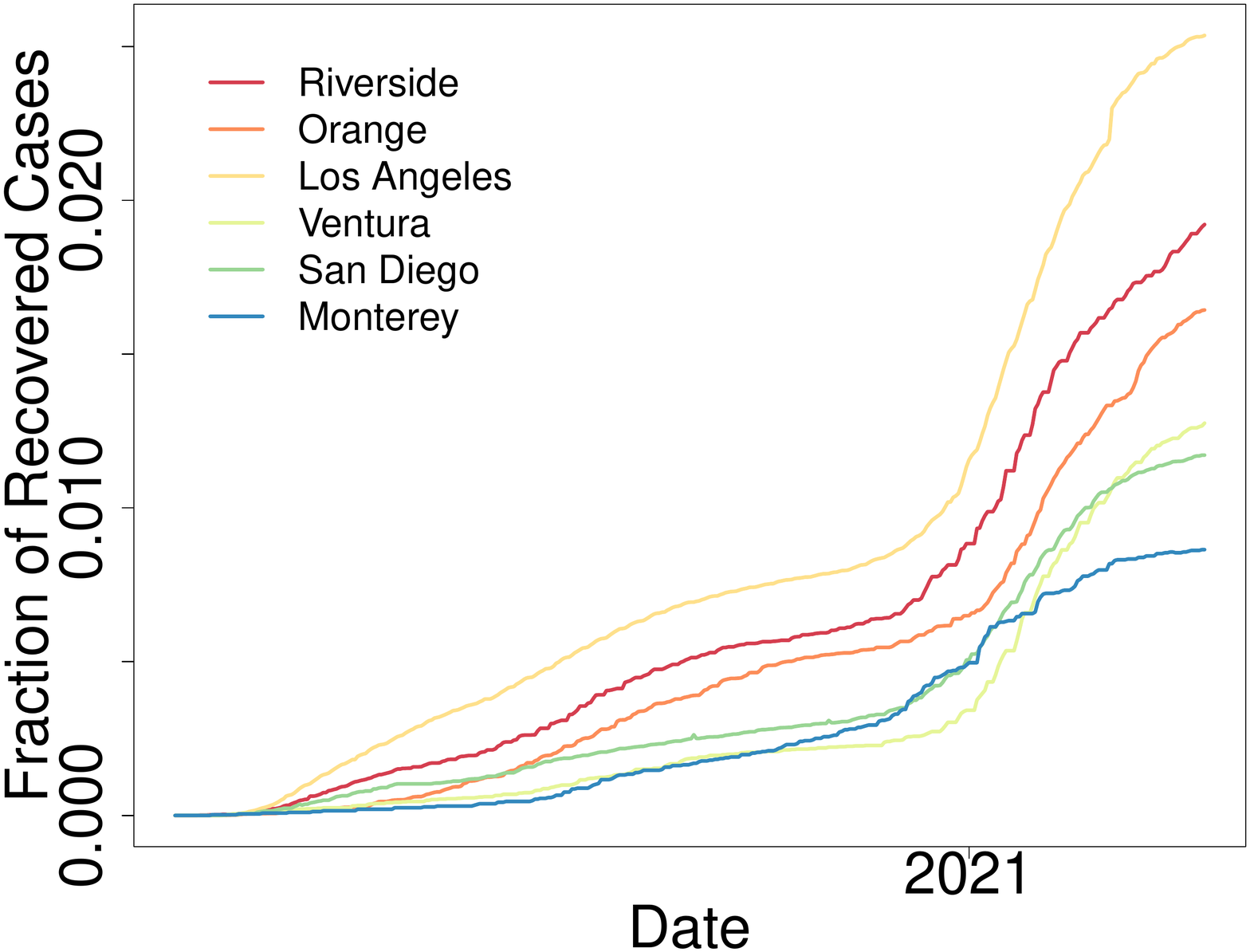}
         \subcaption{Riverside}
     \end{subfigure}
     \begin{subfigure}[b]{0.19\textwidth}
         \centering
         \includegraphics[width=\textwidth]{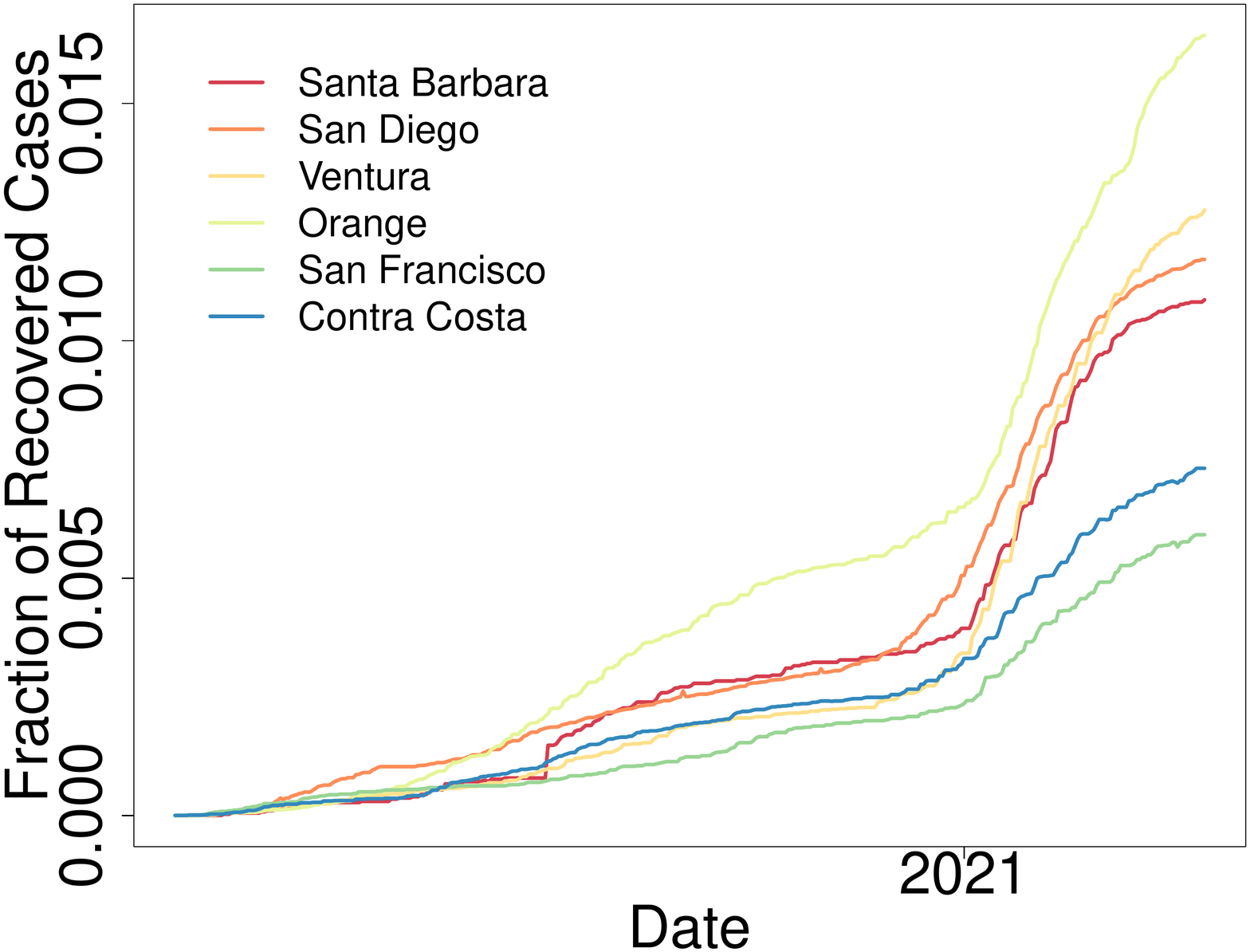}
         \subcaption{Santa Barbara}
     \end{subfigure}
        \caption{Fractions of  recovered in selected regions and their neighboring regions chosen by similarity score. }
        \label{fig:Recovered_fraction_adj_2}
\end{figure*}

Few additional plots and tables related to the results of applying our method to some U.S. counties are provided in this section. Figure \ref{fig:numbers_2} depicts the observed infected case numbers $I(t)$ and  recovered case numbers $R(t)$ in the nine counties/cities. In Figures \ref{fig:Infected_fraction_adj_2} and \ref{fig:Recovered_fraction_adj_2}, we provide fractions of infected and recovered cases in given counties and their neighboring regions selected by similarity score in the Model 2.3.

\begin{figure*}[ht!]
     \centering
     \captionsetup[sub]{font=small, labelfont={bf,sf}}
      \begin{subfigure}[b]{0.24\textwidth}
         \centering
         \includegraphics[width=\textwidth]{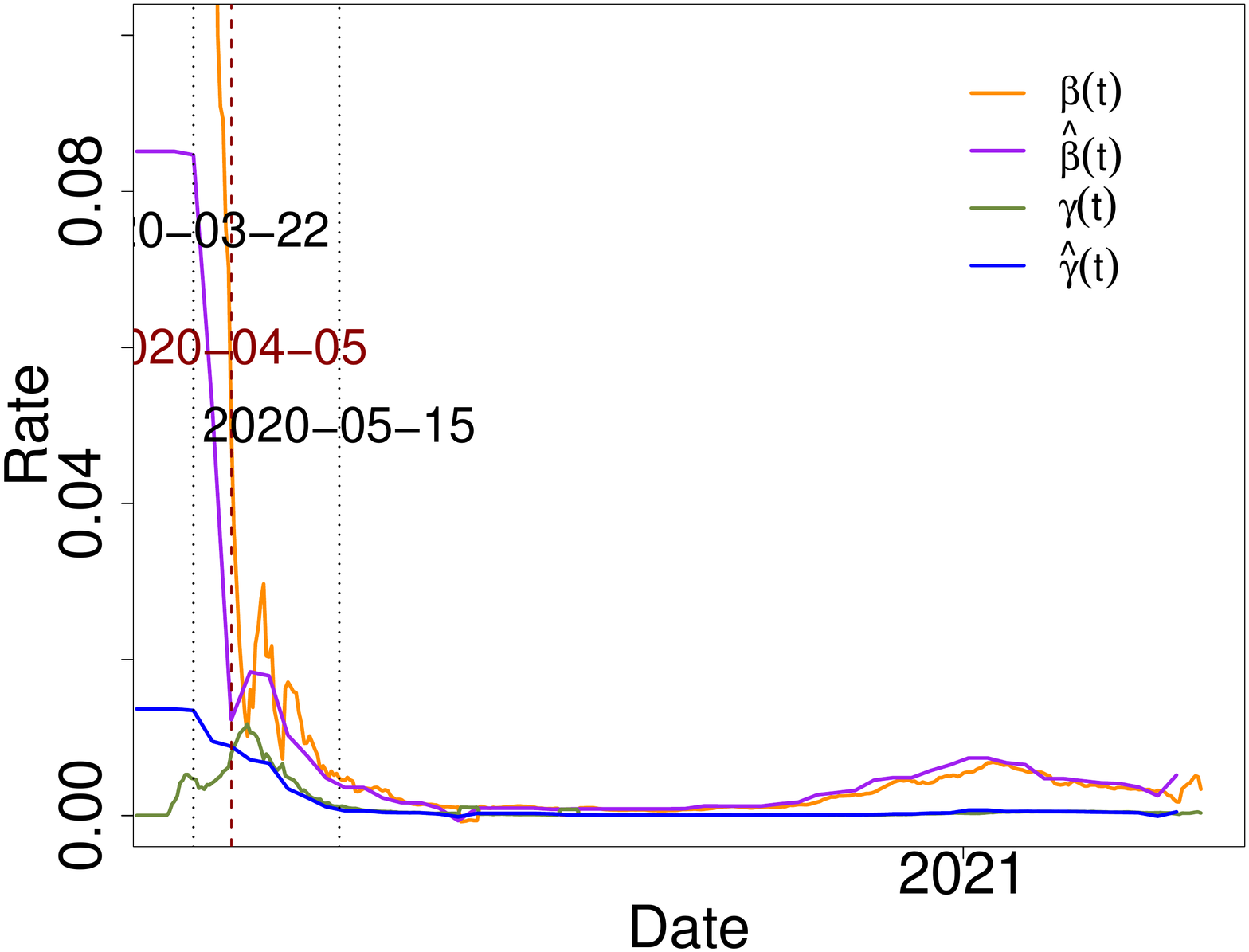}
         \subcaption{NYC  (Model 1) }
     \end{subfigure}
     \begin{subfigure}[b]{0.24\textwidth}
         \centering
         \includegraphics[width=\textwidth]{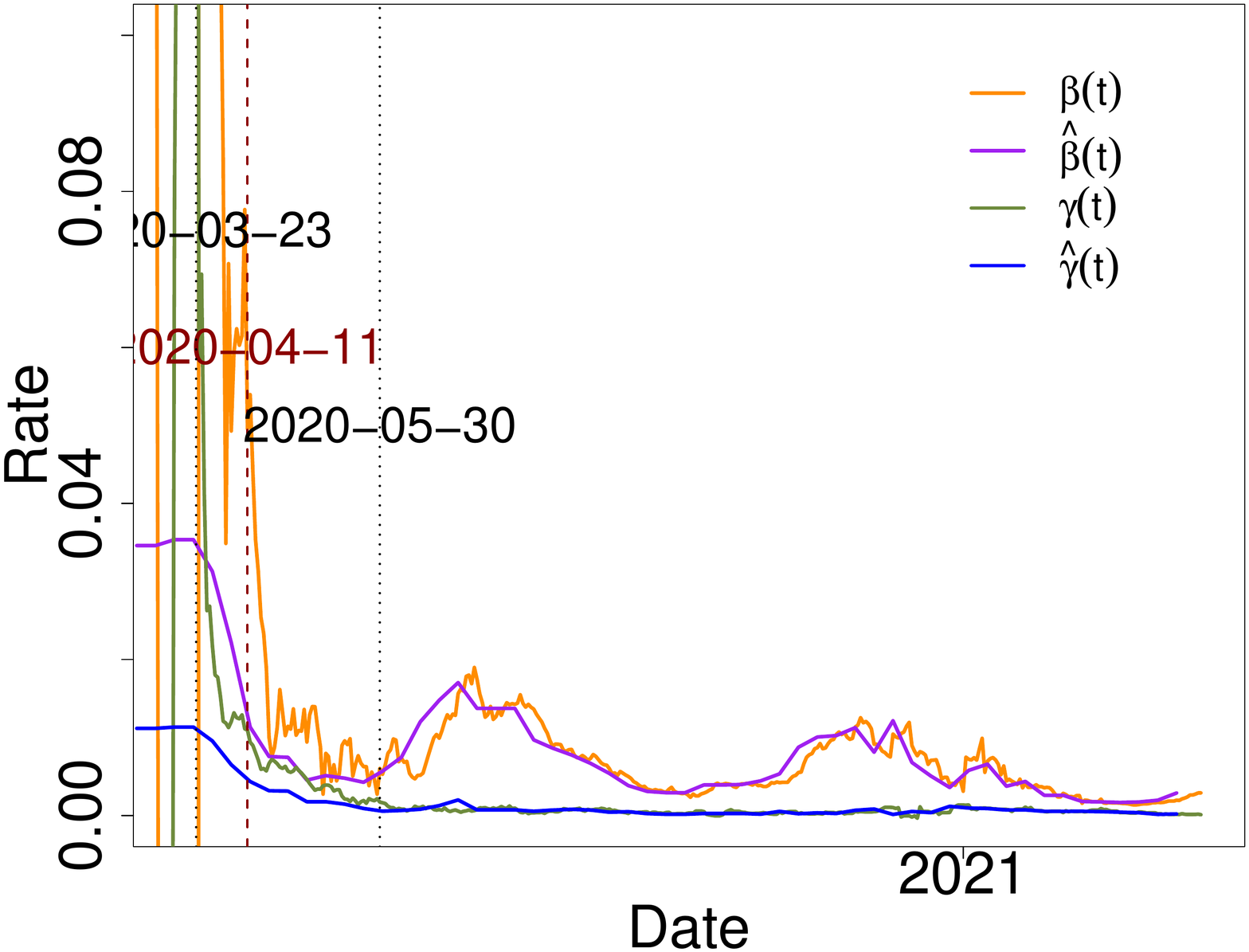}
         \subcaption{King  (Model 1) }
     \end{subfigure}
     \begin{subfigure}[b]{0.24\textwidth}
         \centering
         \includegraphics[width=\textwidth]{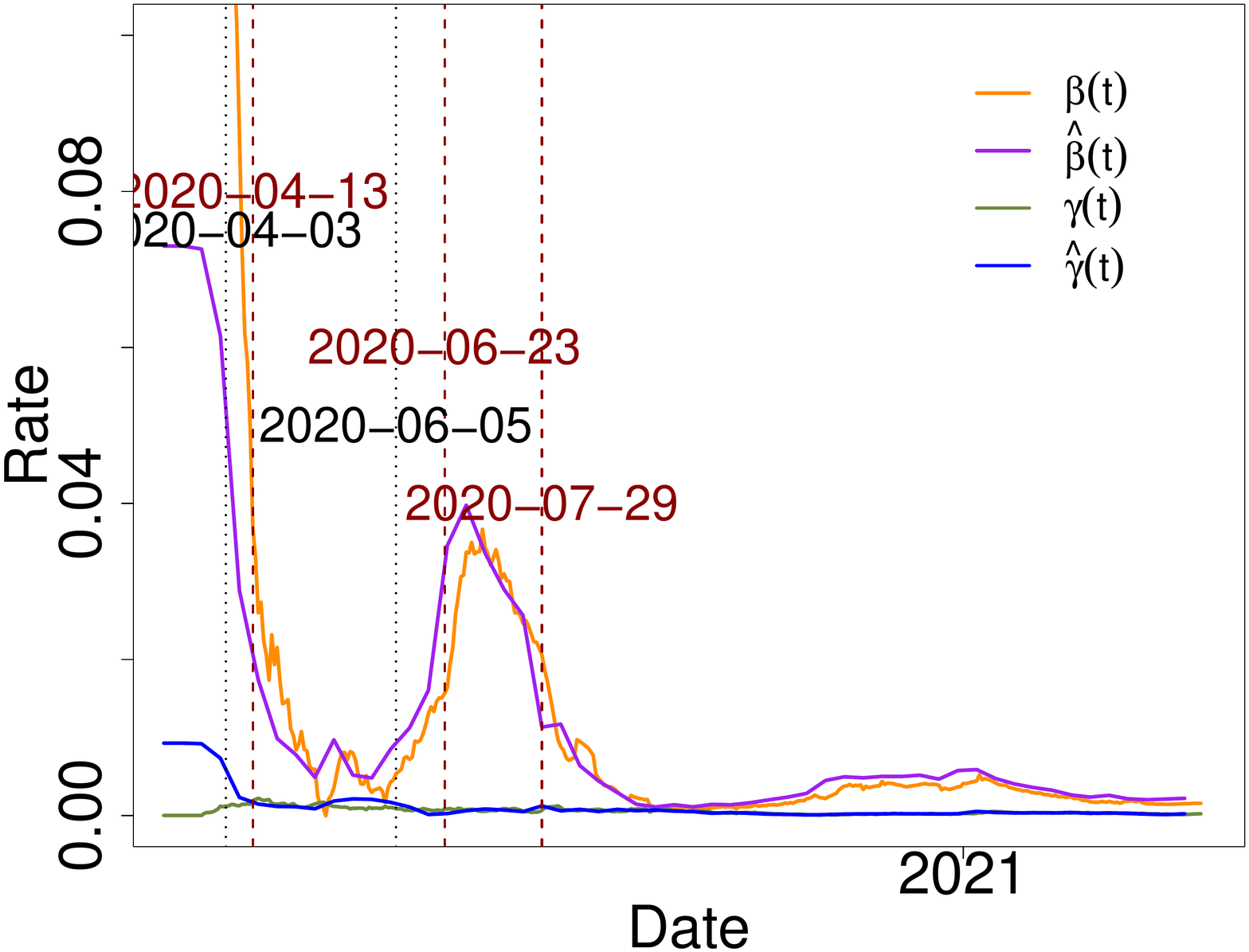}
         \subcaption{Miami-Dade (Model 1) }
     \end{subfigure}
     
     \begin{subfigure}[b]{0.24\textwidth}
         \centering
         \includegraphics[width=\textwidth]{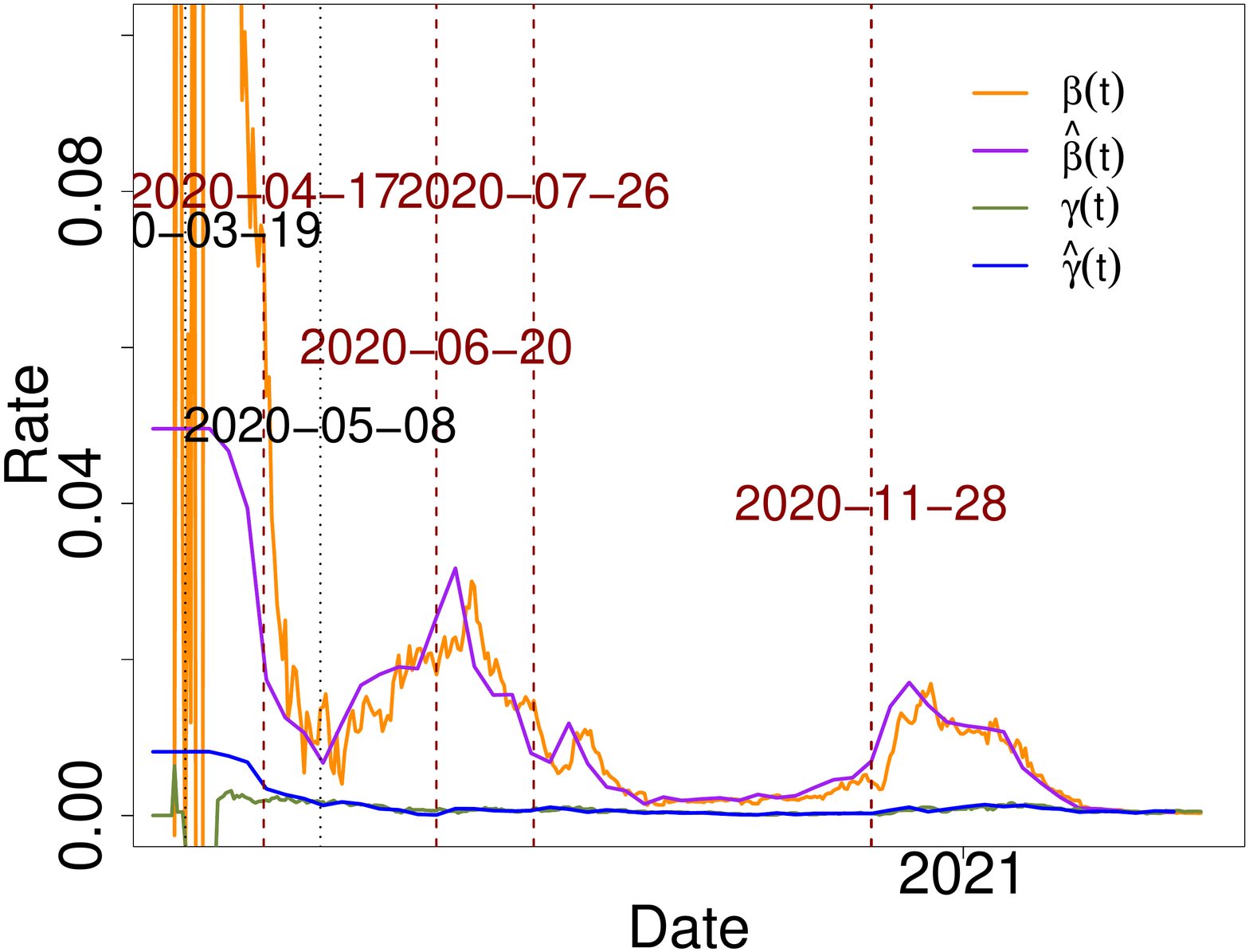}
         \subcaption{Riverside (Model 1) }
     \end{subfigure}
     \begin{subfigure}[b]{0.24\textwidth}
         \centering
         \includegraphics[width=\textwidth]{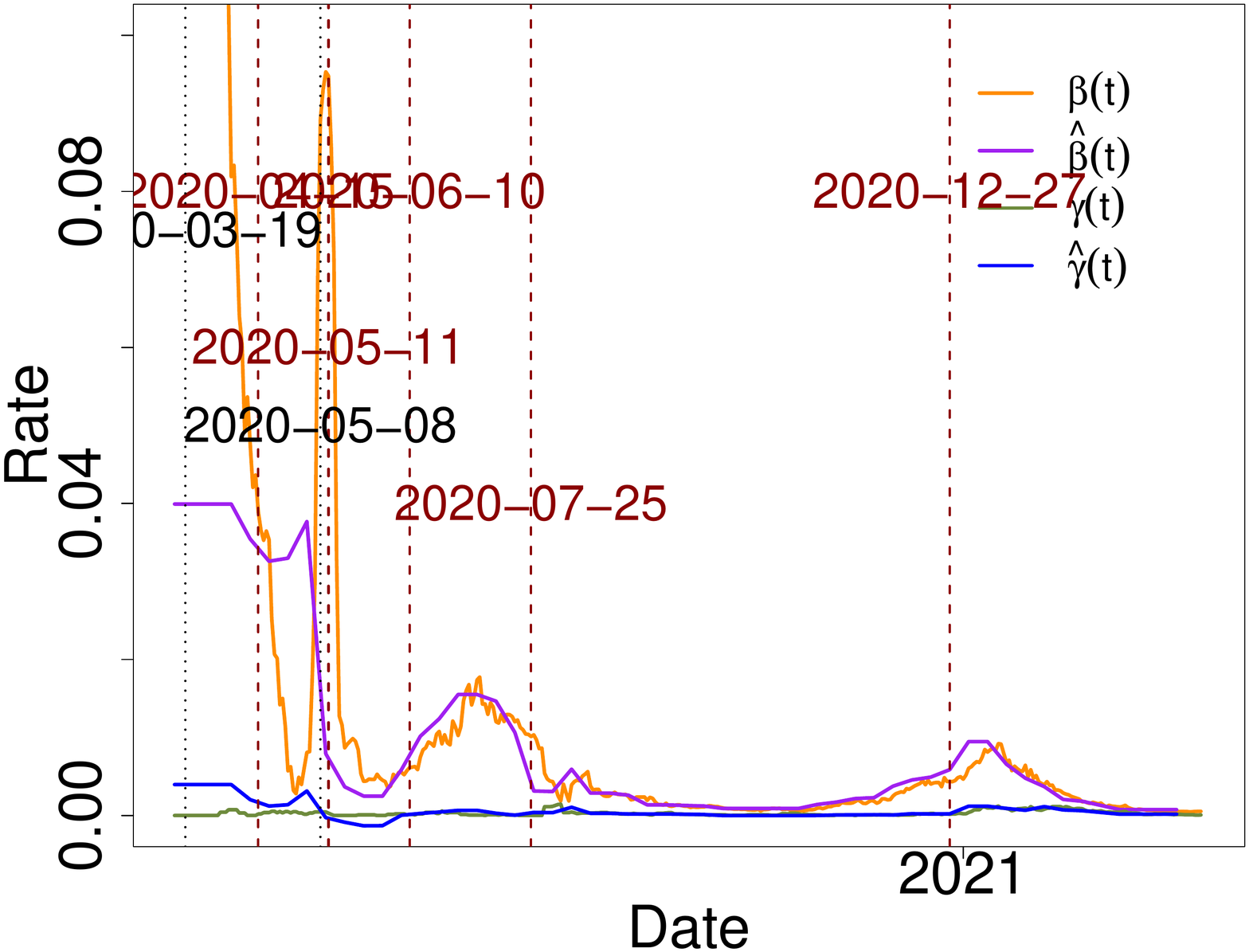}
         \subcaption{Santa Barbara (Model 1) }
     \end{subfigure}
     \begin{subfigure}[b]{0.24\textwidth}
         \centering
         \includegraphics[width=\textwidth]{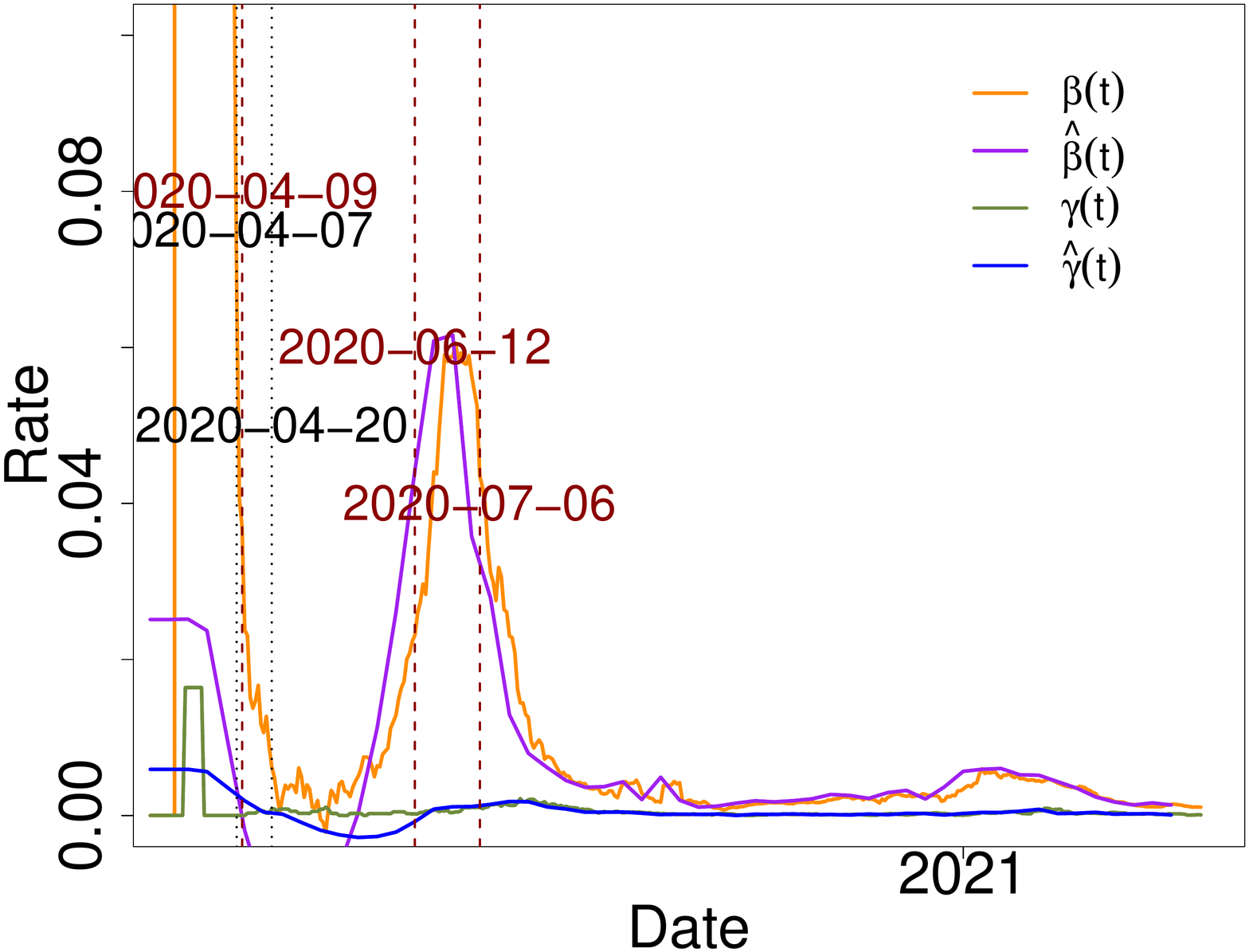}
         \subcaption{Charleston(Model 1) }
     \end{subfigure}
     \begin{subfigure}[b]{0.24\textwidth}
         \centering
         \includegraphics[width=\textwidth]{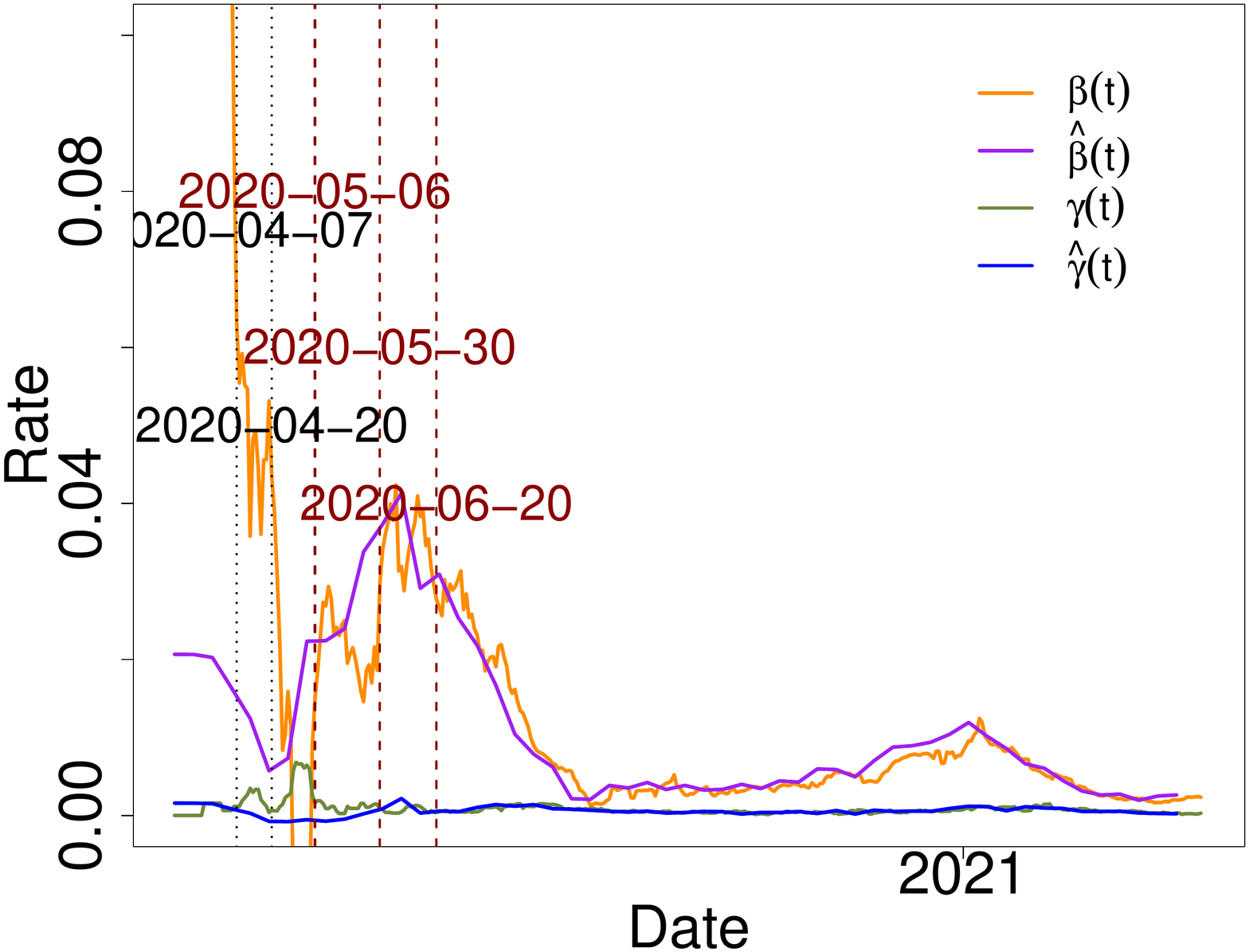}
         \subcaption{Greenville (Model 1) }
     \end{subfigure}
     \begin{subfigure}[b]{0.24\textwidth}
         \centering
         \includegraphics[width=\textwidth]{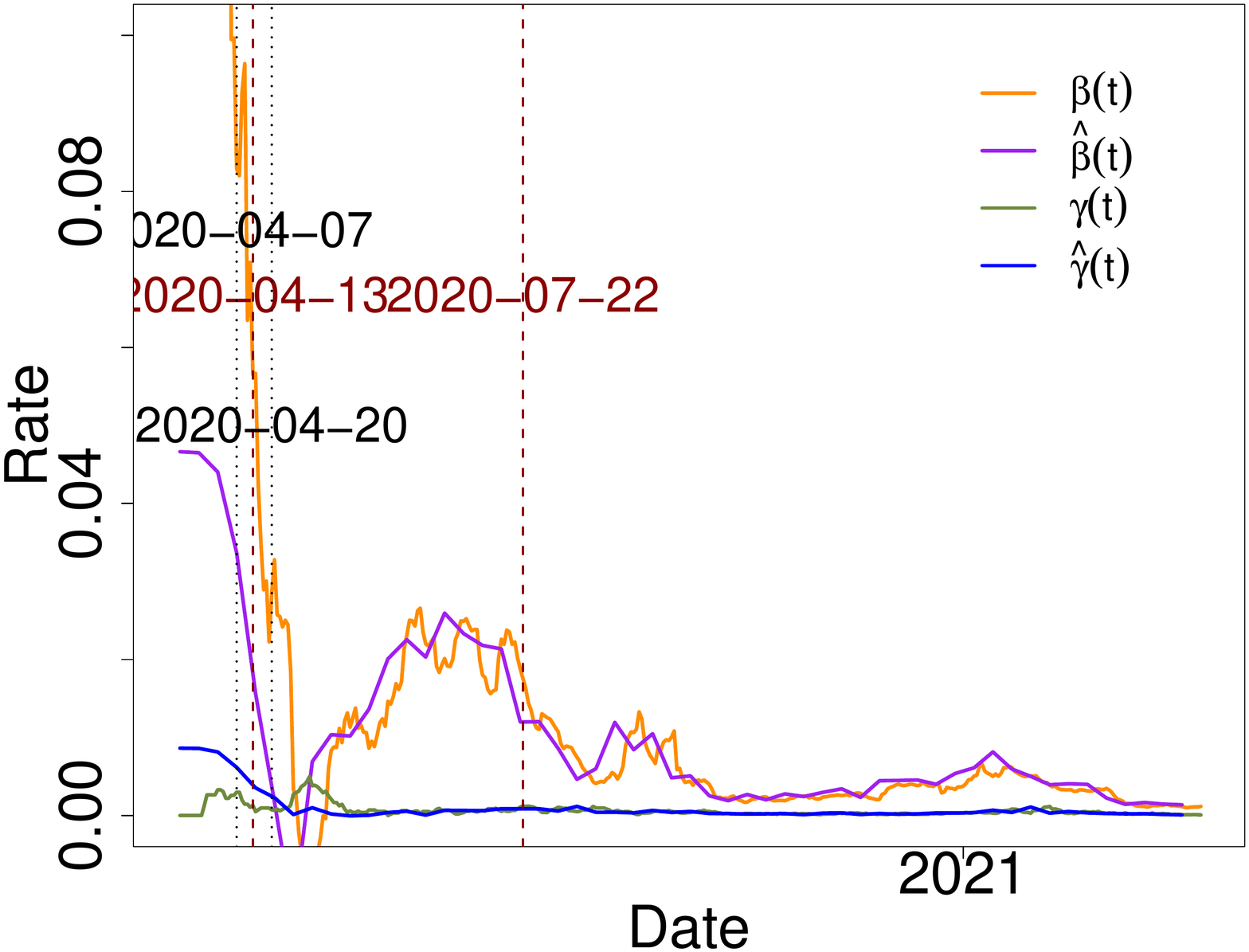}
         \subcaption{Richland  (Model 1) }
     \end{subfigure}
      \begin{subfigure}[b]{0.24\textwidth}
         \centering
         \includegraphics[width=\textwidth]{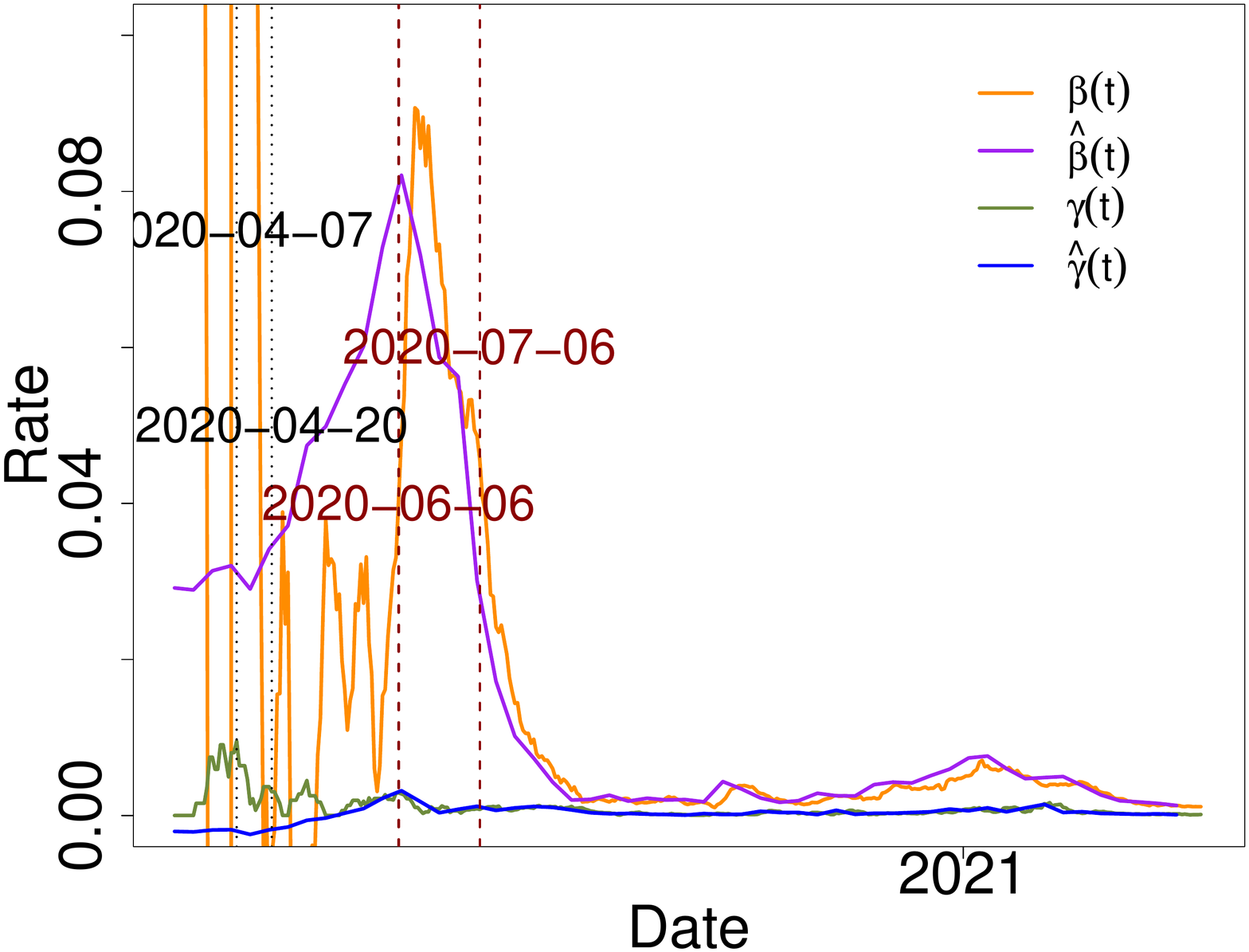}
         \subcaption{Horry  (Model 1) }
     \end{subfigure}
      \caption{The 7-day moving average of observed (orange and green) and estimated (purple and blue) transmission rate and recovery rate in counties/cities. The vertical black dotted line indicates the statewide ``stay-at-home'' order begin date or reopening  begin date in the corresponding state.
        The vertical dark red dashed line indicates the estimated change time point in the county/city. 
        }
        \label{fig:rates_county_smooth}
\end{figure*}


\begin{figure*}[ht!]
     \centering
      \captionsetup[sub]{font=scriptsize,labelfont={bf,sf}}
    \begin{subfigure}[b]{0.19\textwidth}
         \centering
         \includegraphics[width=\textwidth]{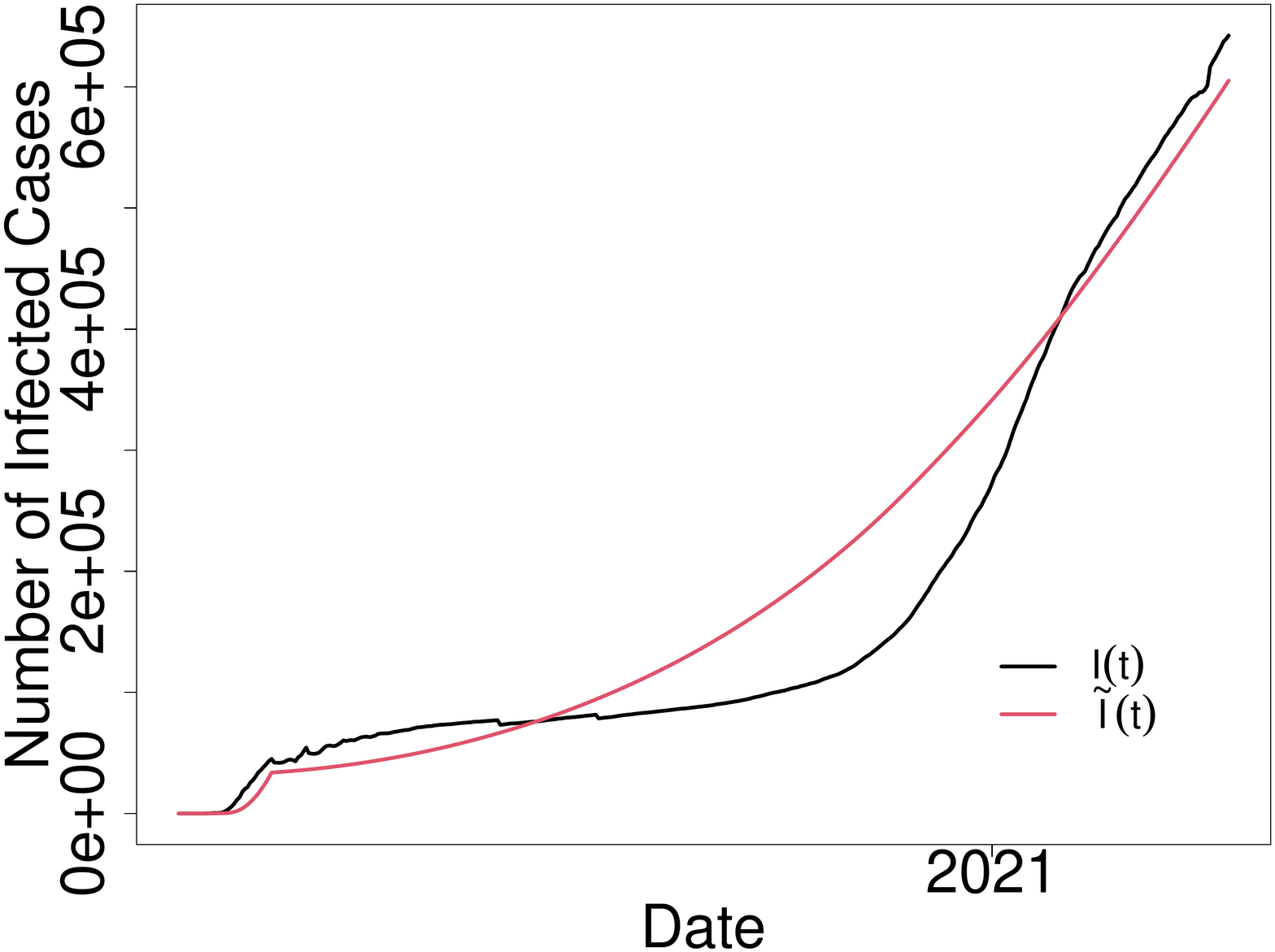}
          \subcaption{NYC (Model 1)}
     \end{subfigure} 
  \begin{subfigure}[b]{0.19\textwidth}
         \centering
         \includegraphics[width=\textwidth]{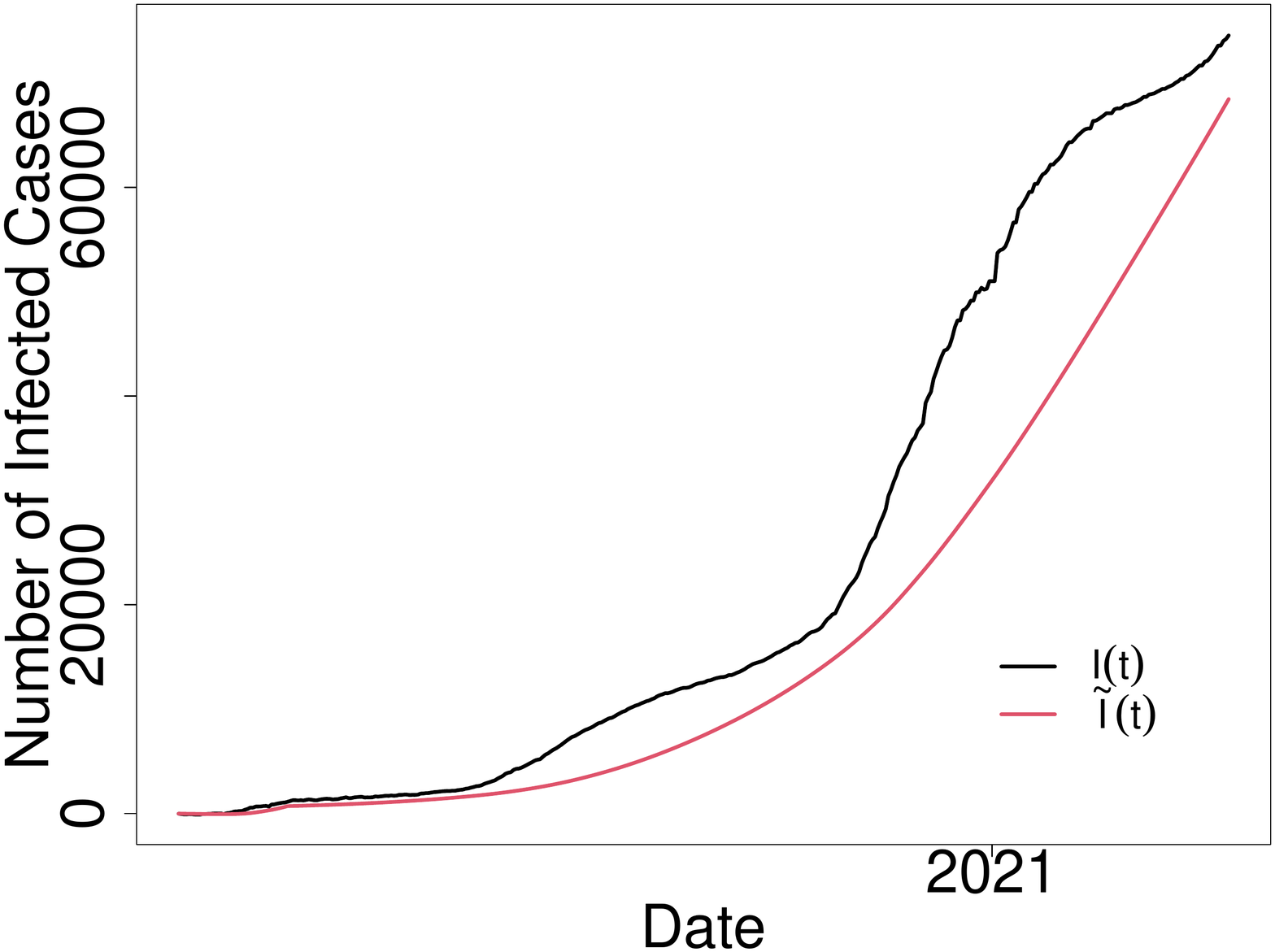}
         \subcaption{King (Model 1)}
     \end{subfigure}  
     \begin{subfigure}[b]{0.19\textwidth}
         \centering
         \includegraphics[width=\textwidth]{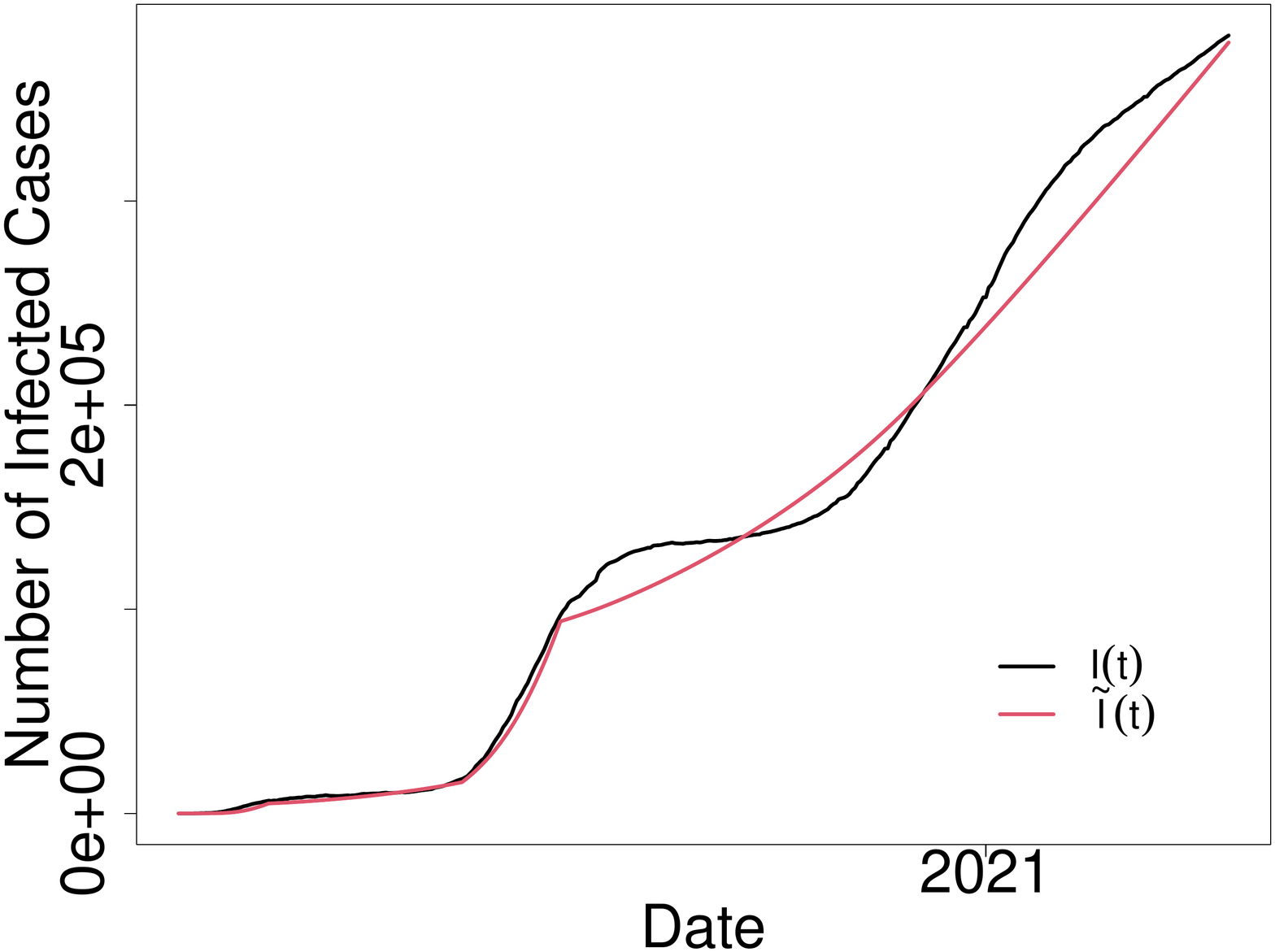}
         \subcaption{Miami-Dade (Model 1)}
     \end{subfigure}  
\begin{subfigure}[b]{0.19\textwidth}
         \centering
         \includegraphics[width=\textwidth]{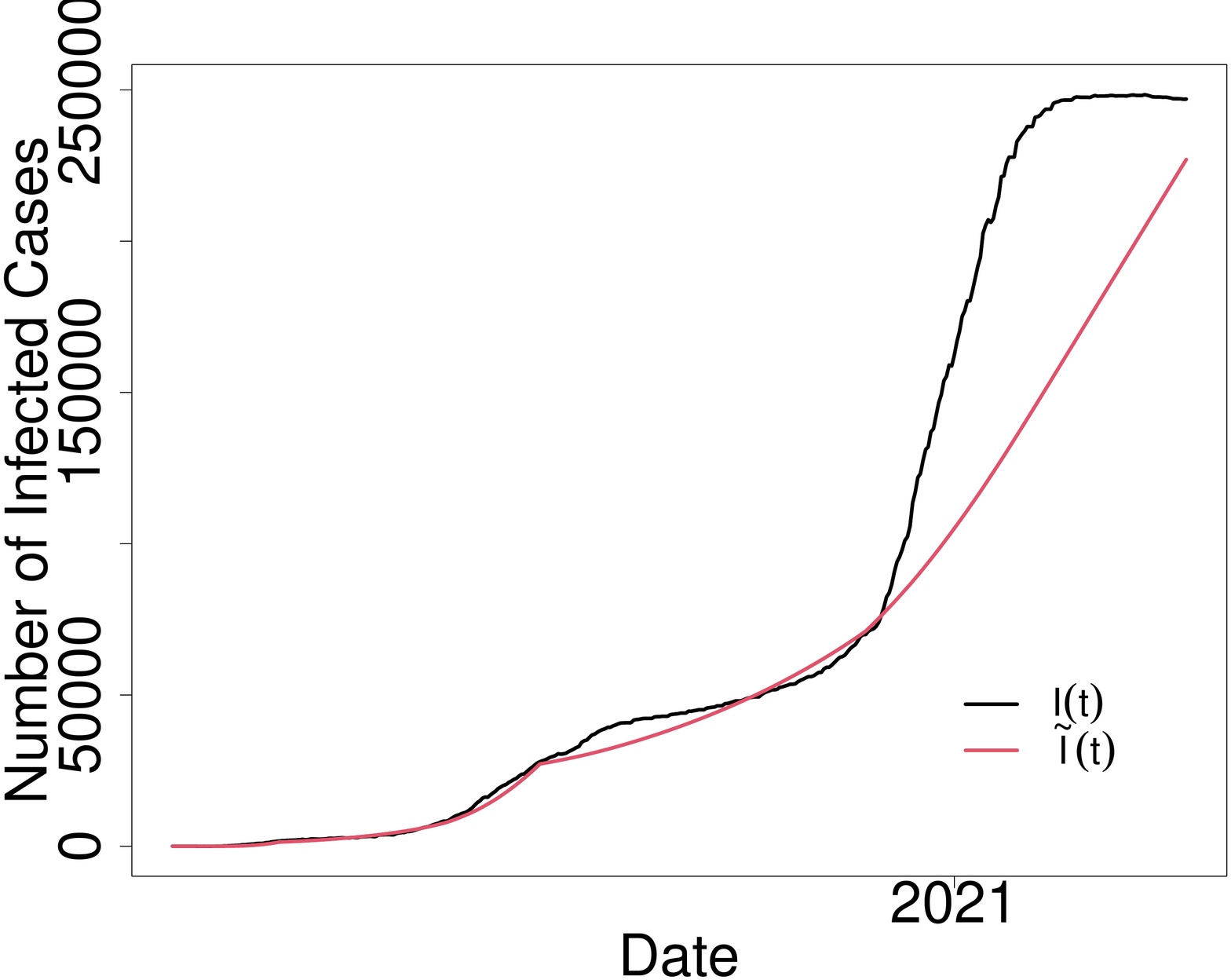}
         \subcaption{Riverside (Model 1)}
     \end{subfigure}
     \begin{subfigure}[b]{0.19\textwidth}
         \centering
         \includegraphics[width=\textwidth]{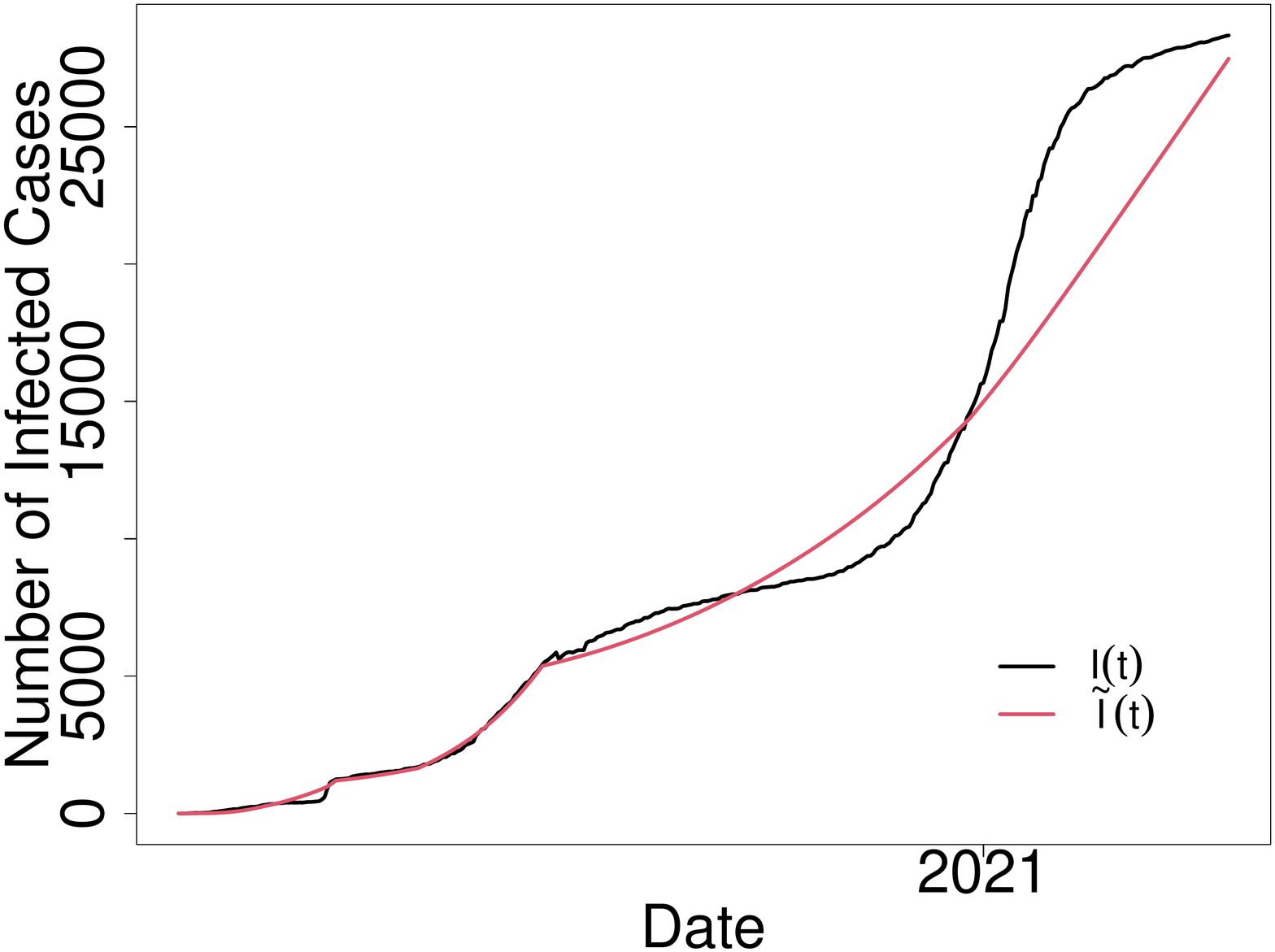}
         \subcaption{Santa Barbara (Model 1)}
     \end{subfigure}
    
      \begin{subfigure}[b]{0.19\textwidth}
         \centering
         \includegraphics[width=\textwidth]{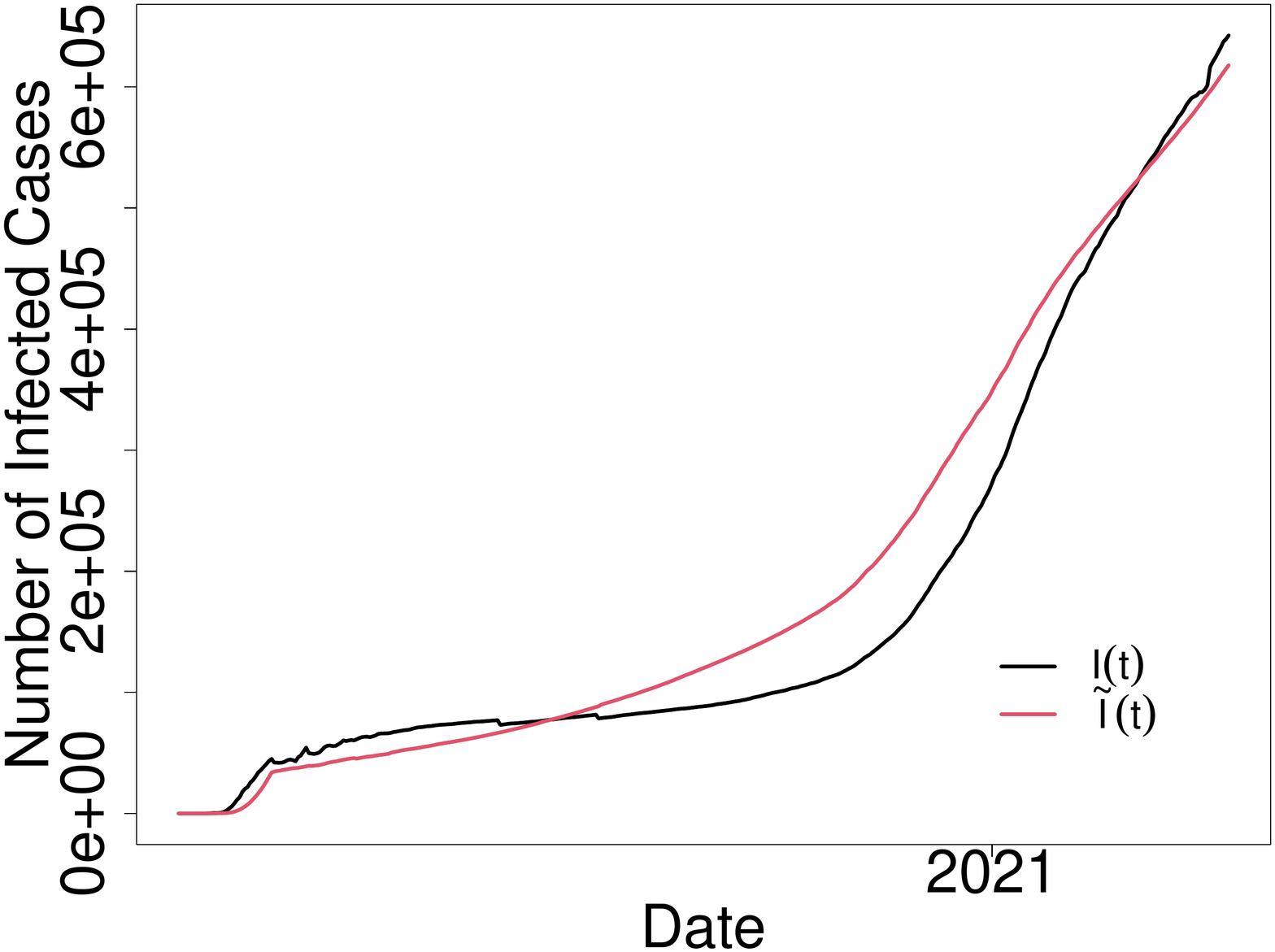}
          \subcaption{NYC (Model 2.3)}
     \end{subfigure}
     \begin{subfigure}[b]{0.19\textwidth}
         \centering
         \includegraphics[width=\textwidth]{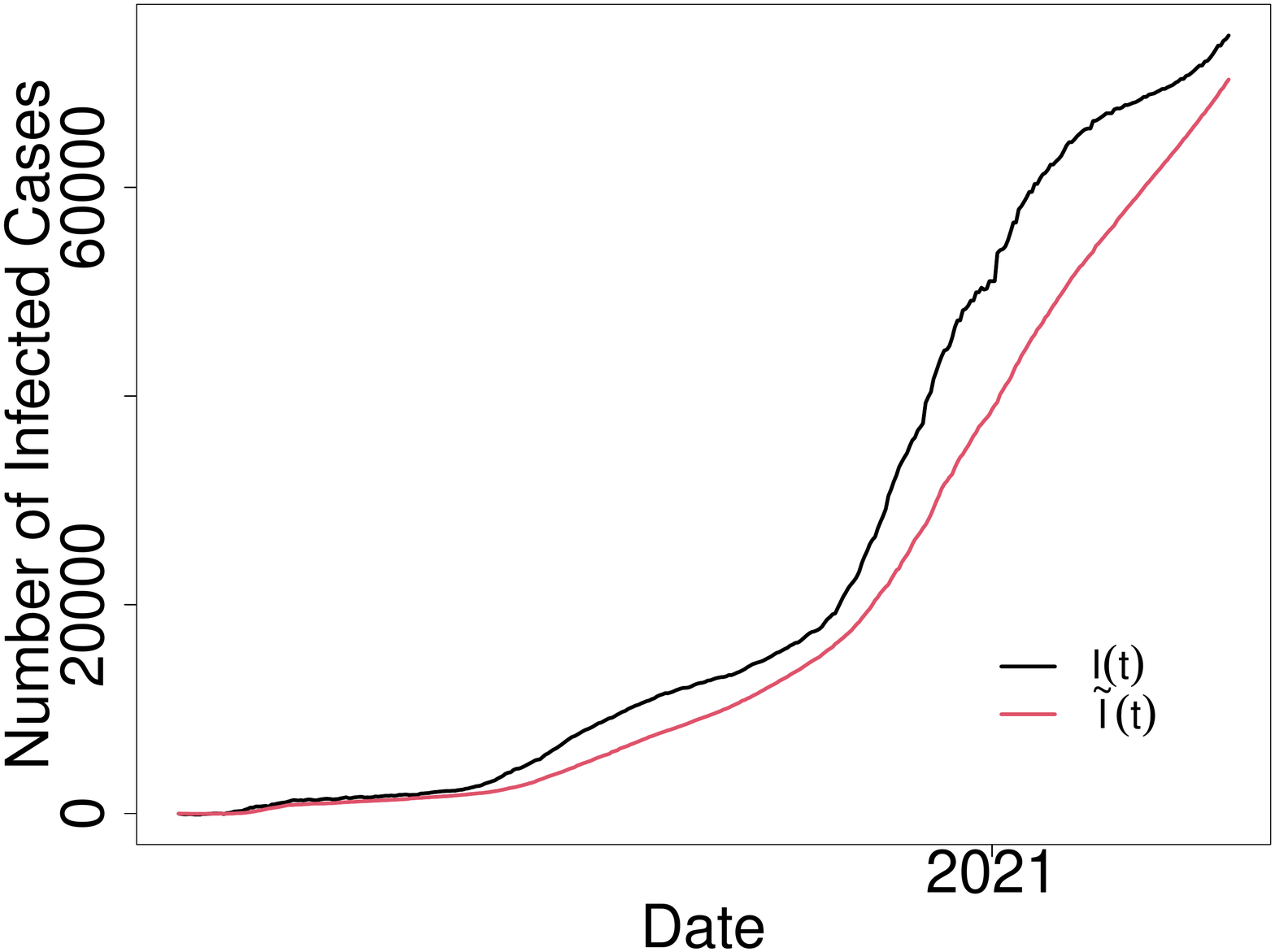}
         \subcaption{King (Model 2.3)}
     \end{subfigure}
     \begin{subfigure}[b]{0.19\textwidth}
         \centering
         \includegraphics[width=\textwidth]{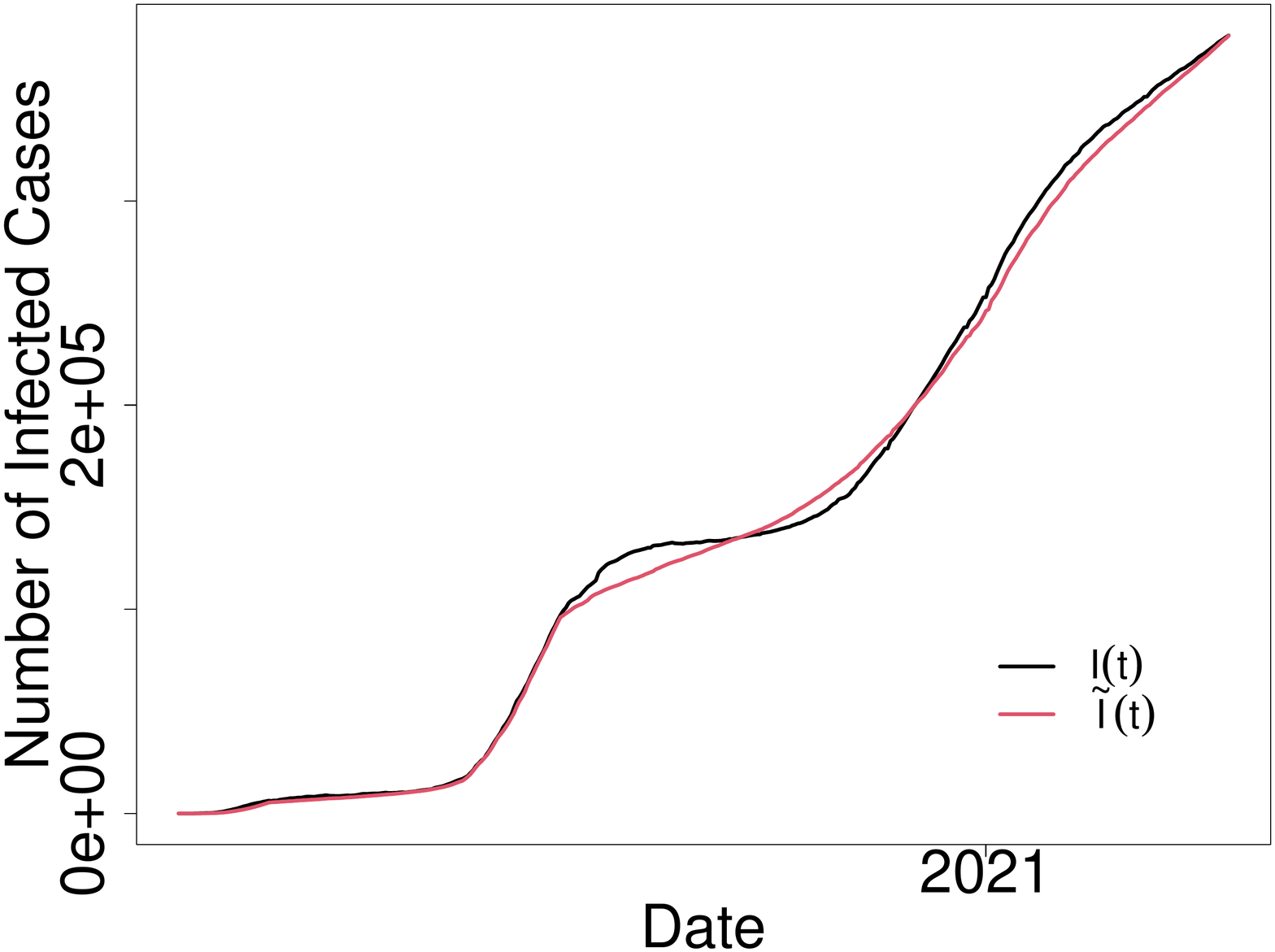}
         \subcaption{Miami-Dade (Model 2.3)}
     \end{subfigure} 
     \begin{subfigure}[b]{0.19\textwidth}
         \centering
         \includegraphics[width=\textwidth]{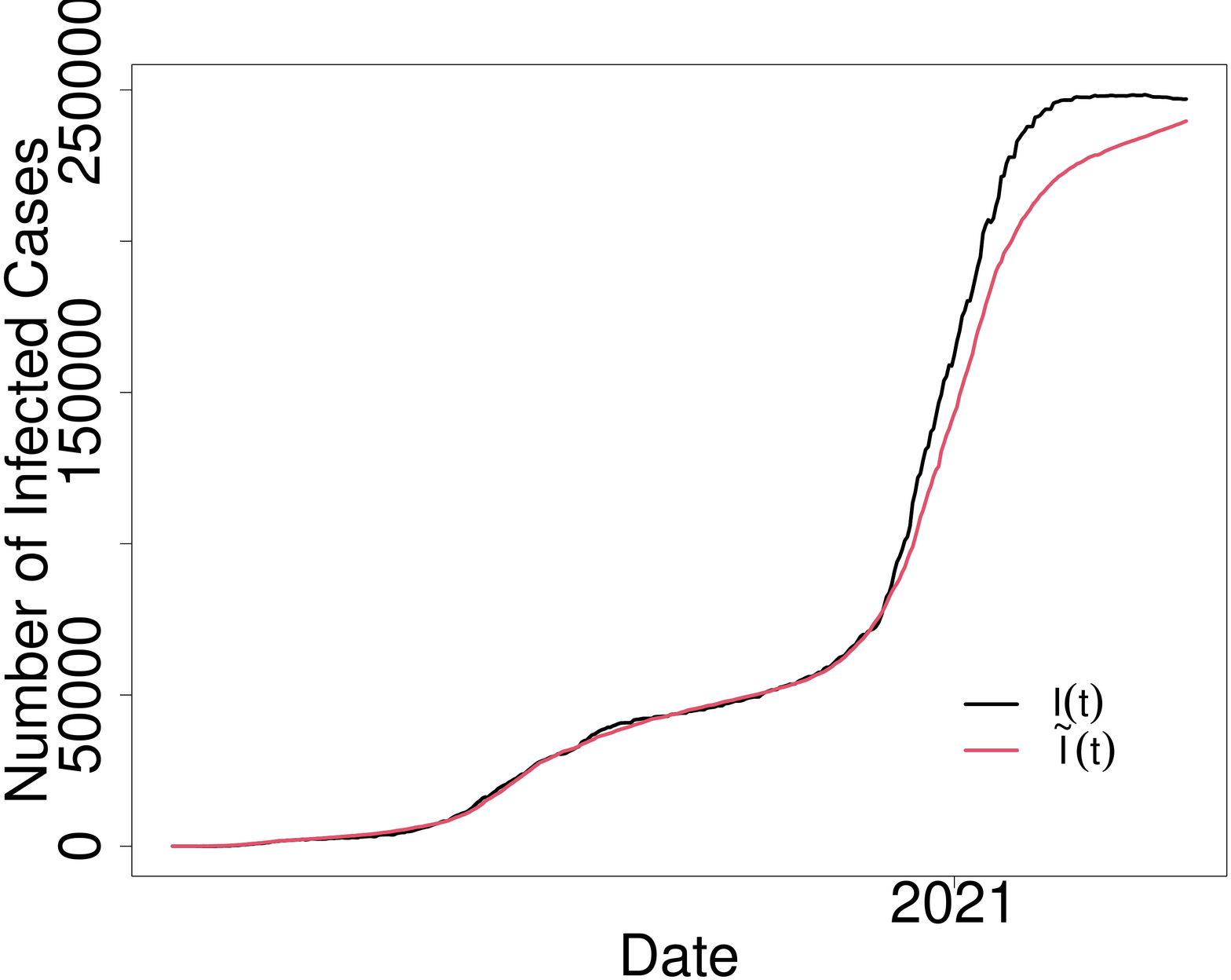}
         \subcaption{Riverside (Model 2.3)}
     \end{subfigure}
     \begin{subfigure}[b]{0.19\textwidth}
         \centering
         \includegraphics[width=\textwidth]{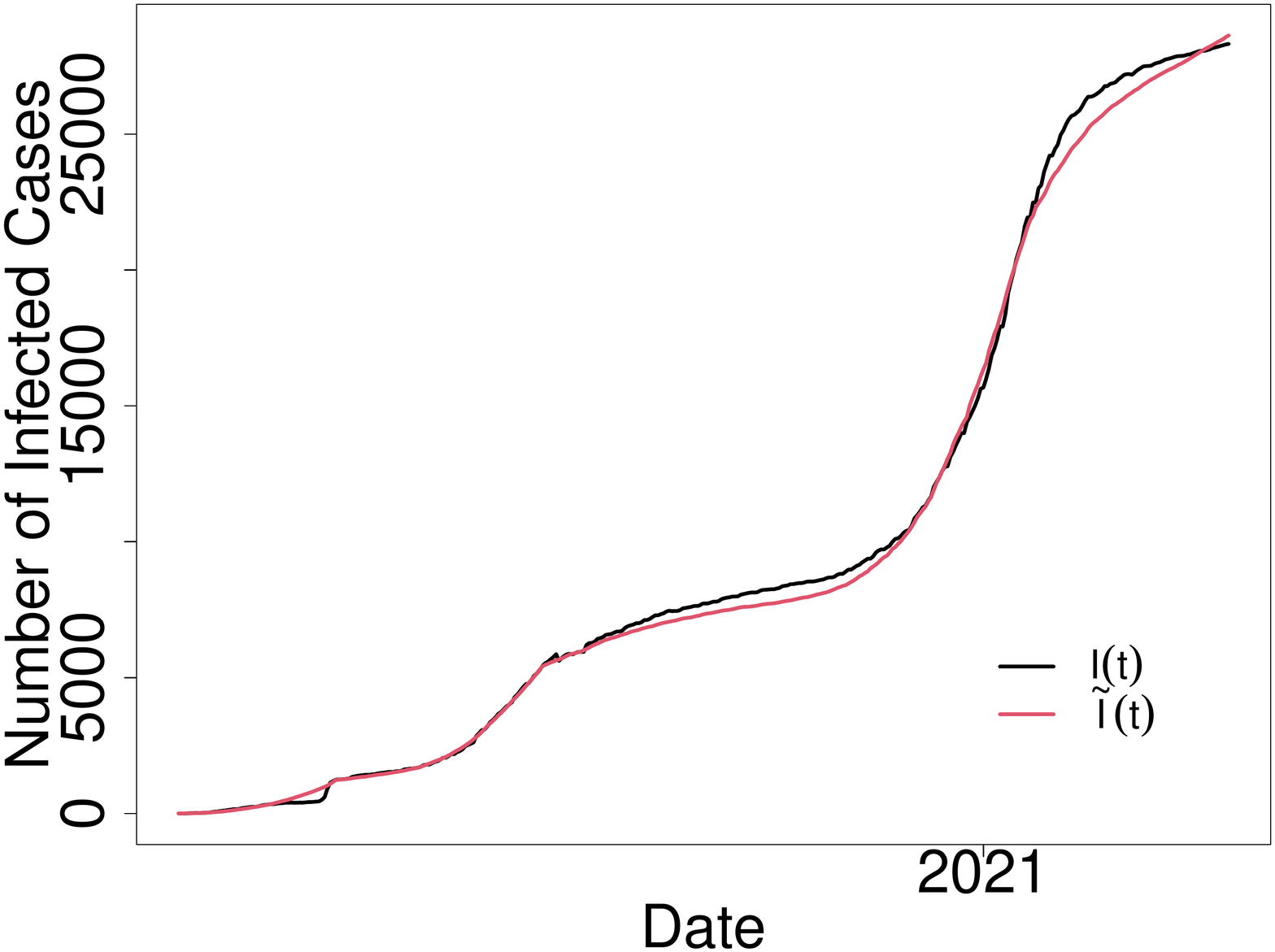}
         \subcaption{Santa Barbara (Model 2.3)}
     \end{subfigure}
    
      \begin{subfigure}[b]{0.19\textwidth}
         \centering
         \includegraphics[width=\textwidth]{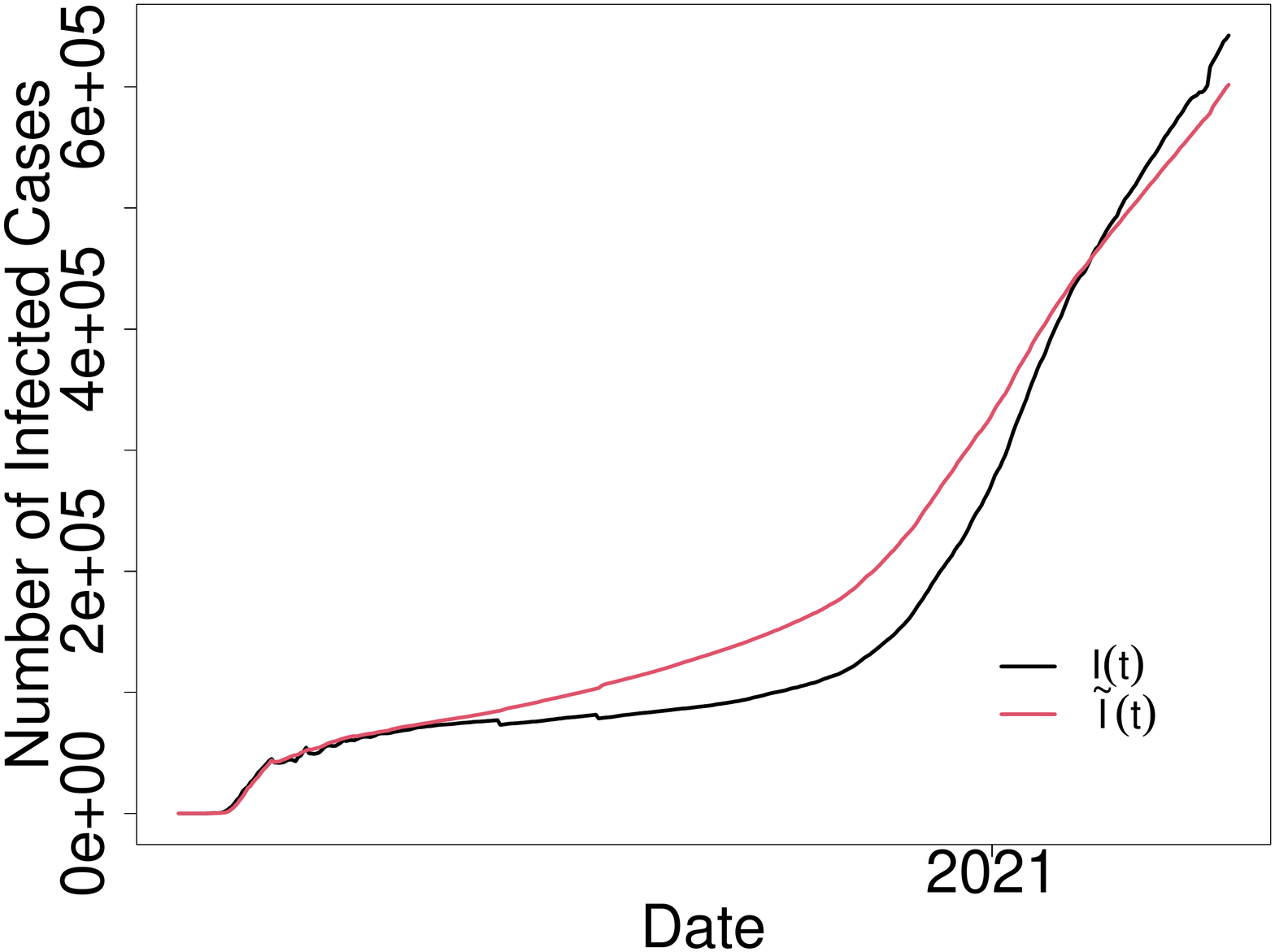}
          \subcaption{NYC (Model 3)}
     \end{subfigure}
     \begin{subfigure}[b]{0.19\textwidth}
         \centering
         \includegraphics[width=\textwidth]{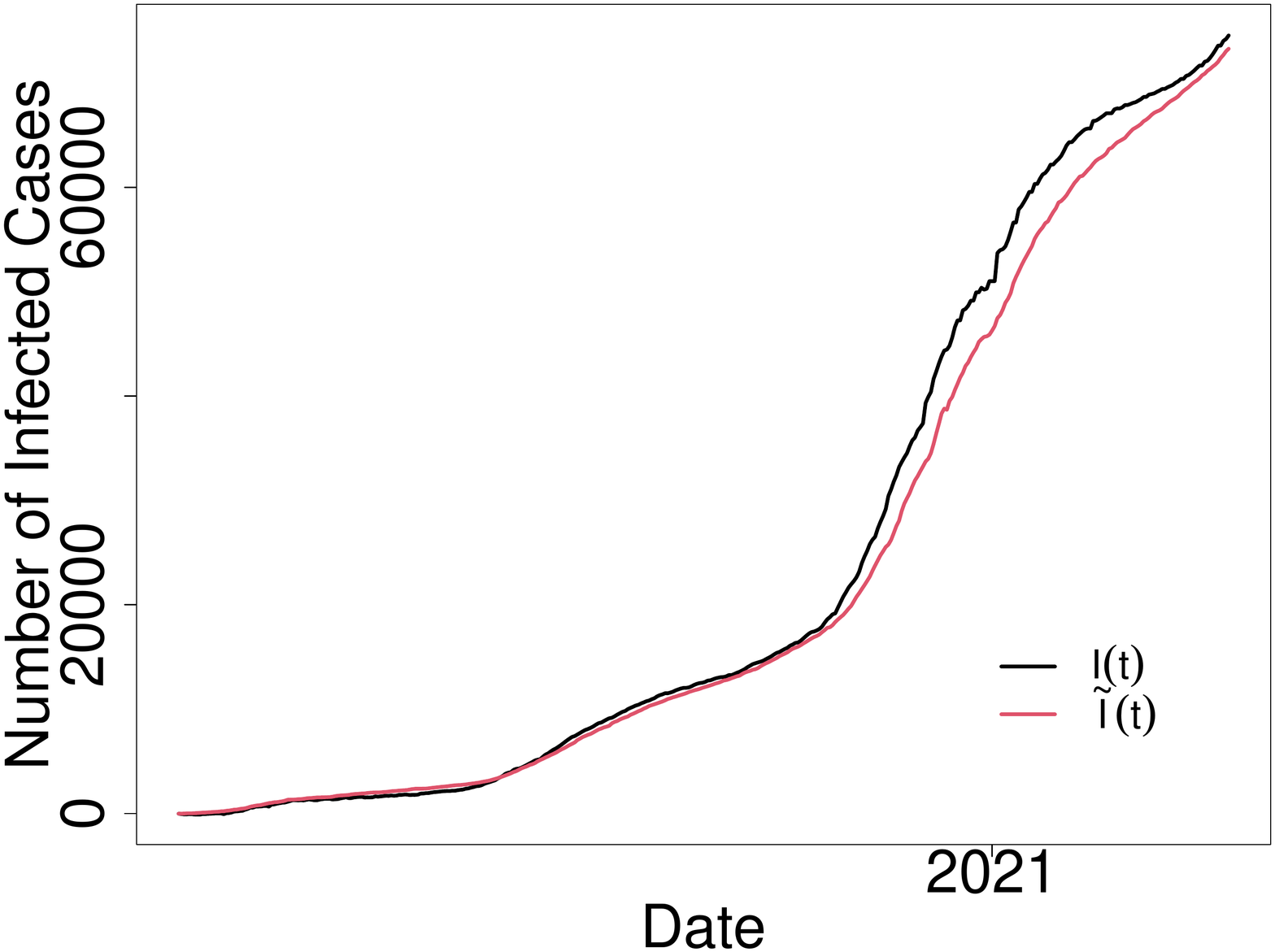}
         \subcaption{King (Model 3)}
     \end{subfigure}
     \begin{subfigure}[b]{0.19\textwidth}
         \centering
         \includegraphics[width=\textwidth]{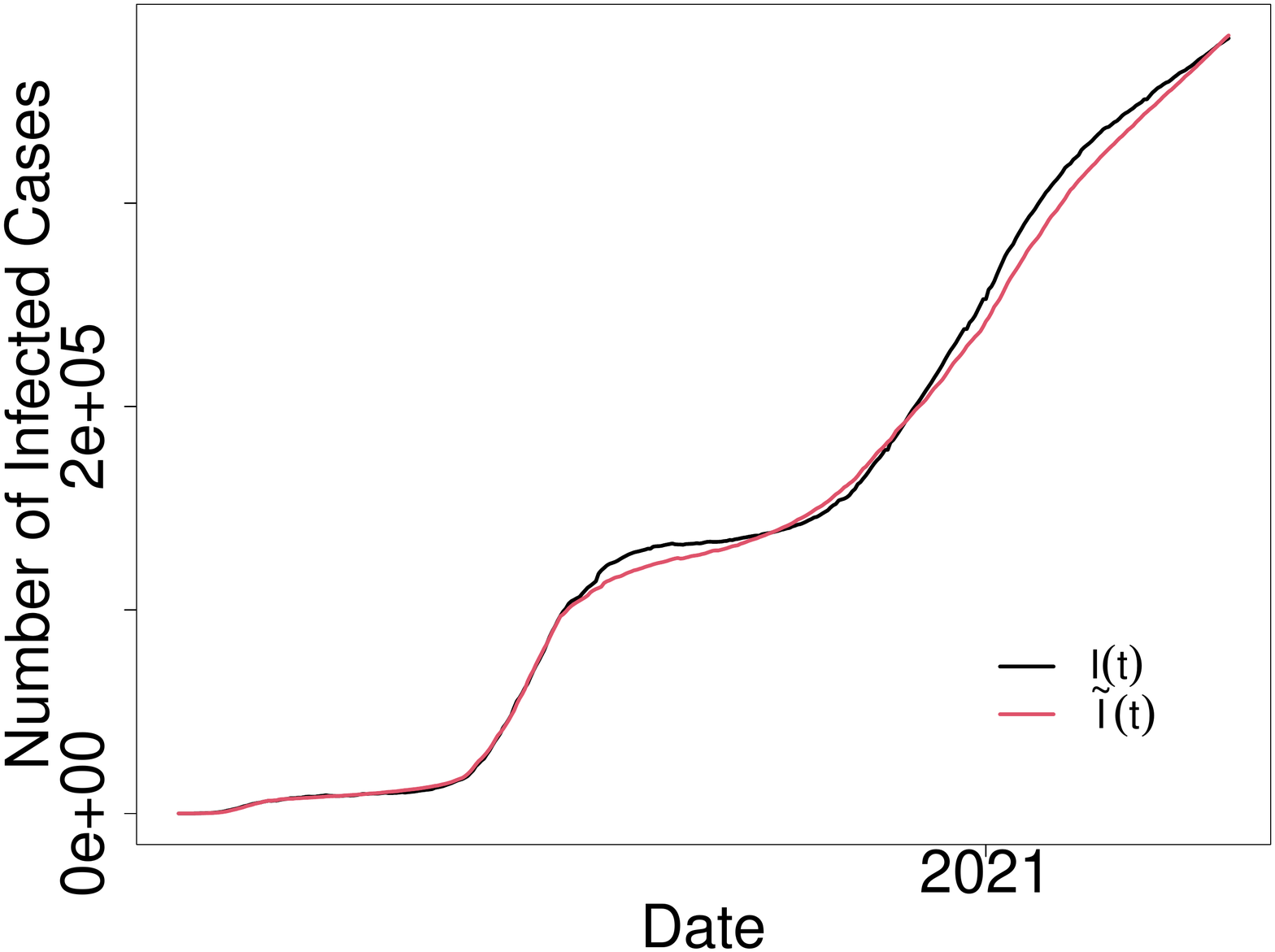}
         \subcaption{Miami-Dade (Model 3)}
     \end{subfigure} 
    \begin{subfigure}[b]{0.19\textwidth}
         \centering
         \includegraphics[width=\textwidth]{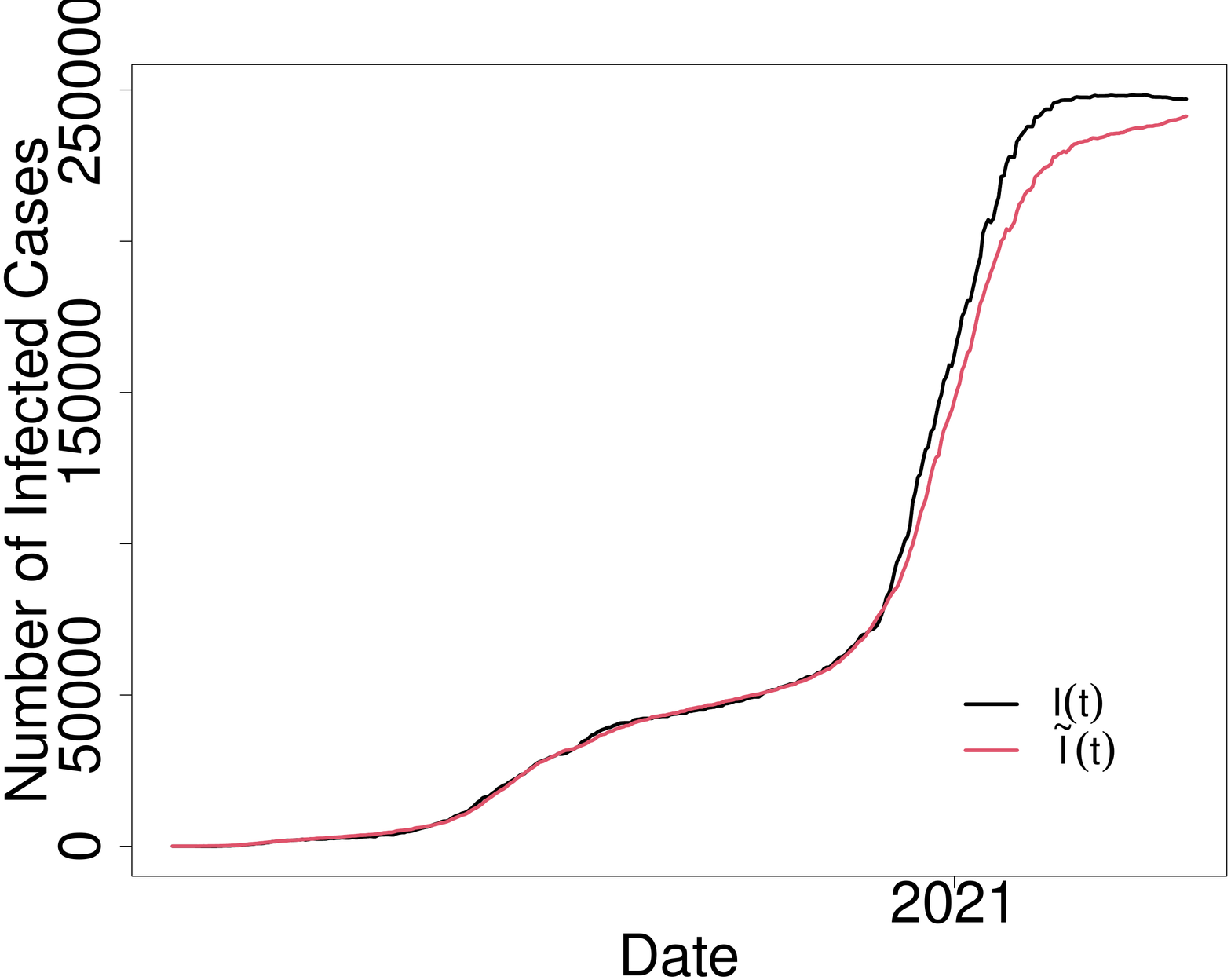}
         \subcaption{Riverside (Model 3)}
     \end{subfigure}
     \begin{subfigure}[b]{0.19\textwidth}
         \centering
         \includegraphics[width=\textwidth]{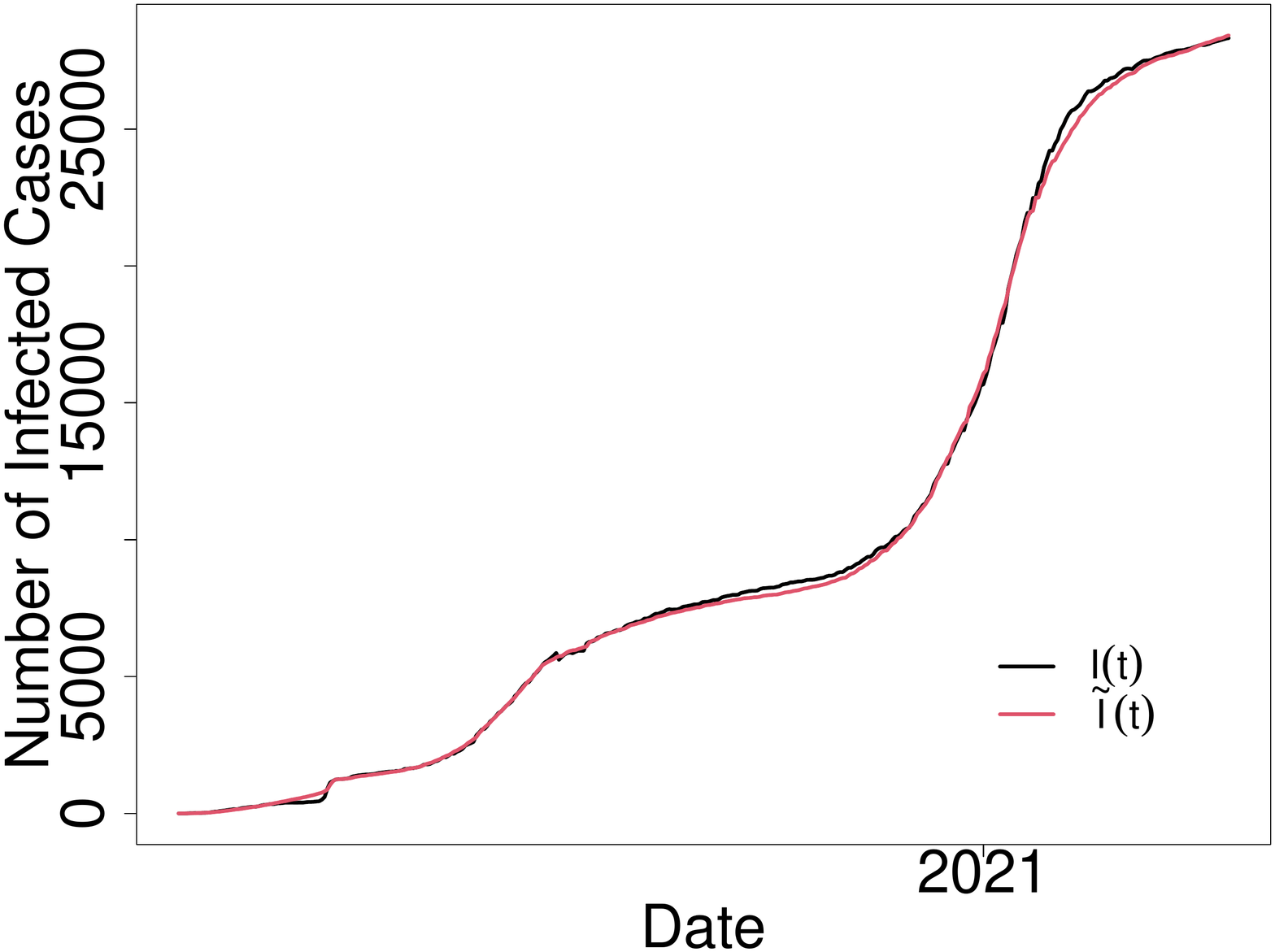}
         \subcaption{Santa Barbara (Model 3)}
     \end{subfigure}
        \caption{Observed (black) and fitted (red) number of infected cases estimated by three models in selected counties/cities. }
        \label{fig:number of infected_county}
\end{figure*}

\begin{figure*}[ht!]
     \centering
     \captionsetup[sub]{font=scriptsize,labelfont={bf,sf}}
  \begin{subfigure}[b]{0.19\textwidth}
         \centering
         \includegraphics[width=\textwidth]{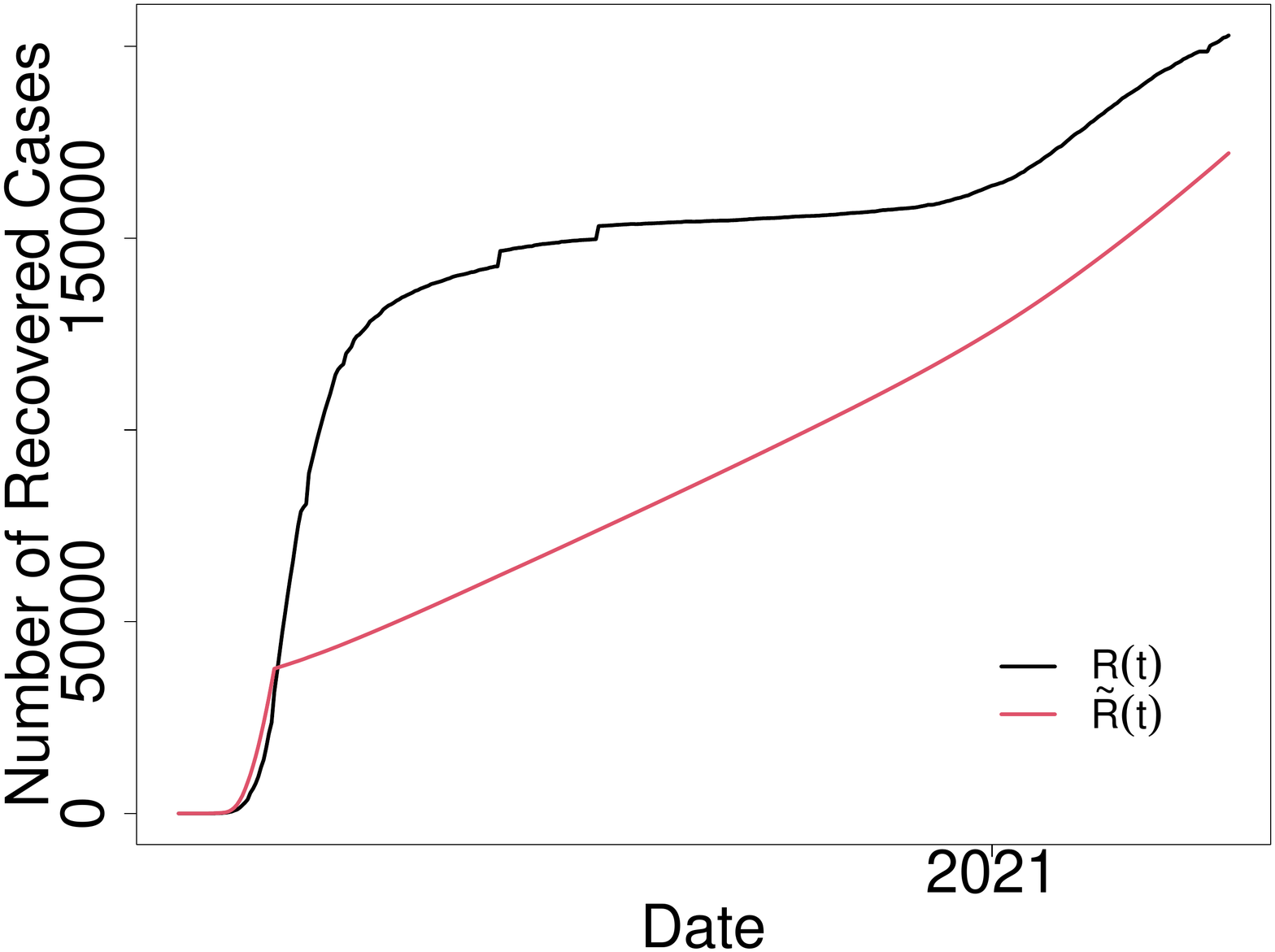}
         \subcaption{NYC (Model 1)}
     \end{subfigure}
     \begin{subfigure}[b]{0.19\textwidth}
         \centering
         \includegraphics[width=\textwidth]{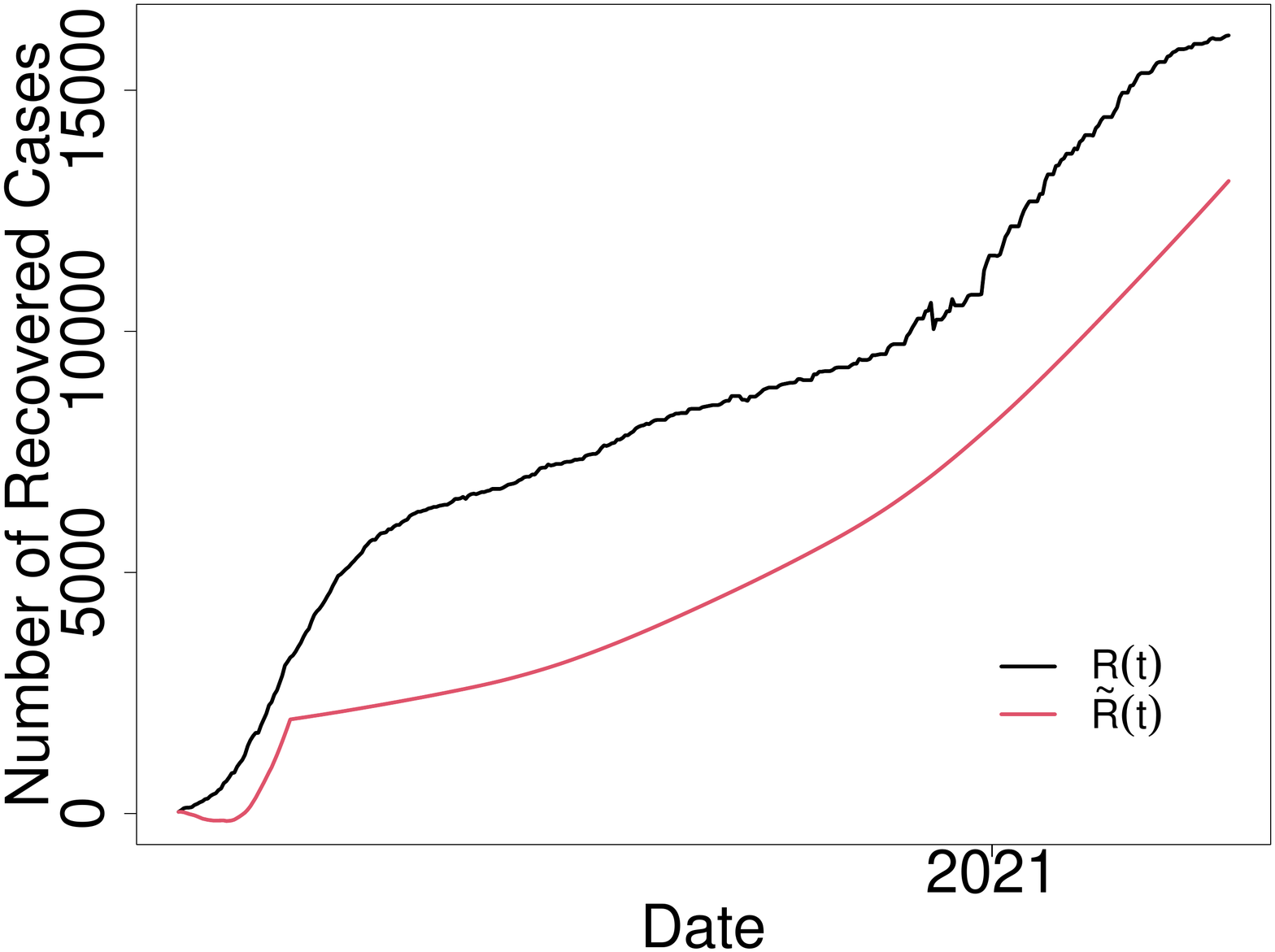}
         \subcaption{King (Model 1)}
     \end{subfigure}
    \begin{subfigure}[b]{0.19\textwidth}
         \centering
         \includegraphics[width=\textwidth]{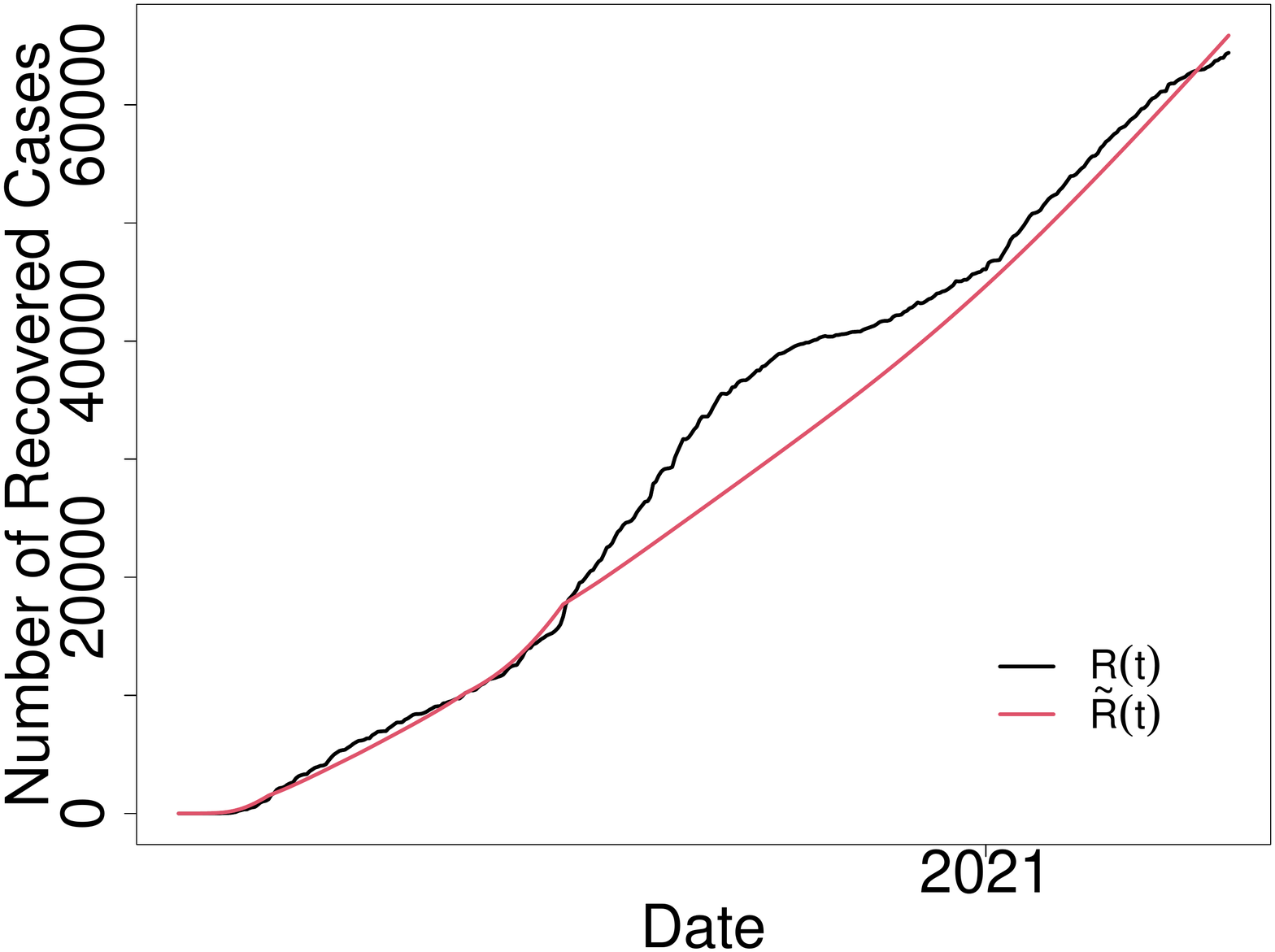}
         \subcaption{Miami (Model 1)}
     \end{subfigure} 
\begin{subfigure}[b]{0.19\textwidth}
         \centering
         \includegraphics[width=\textwidth]{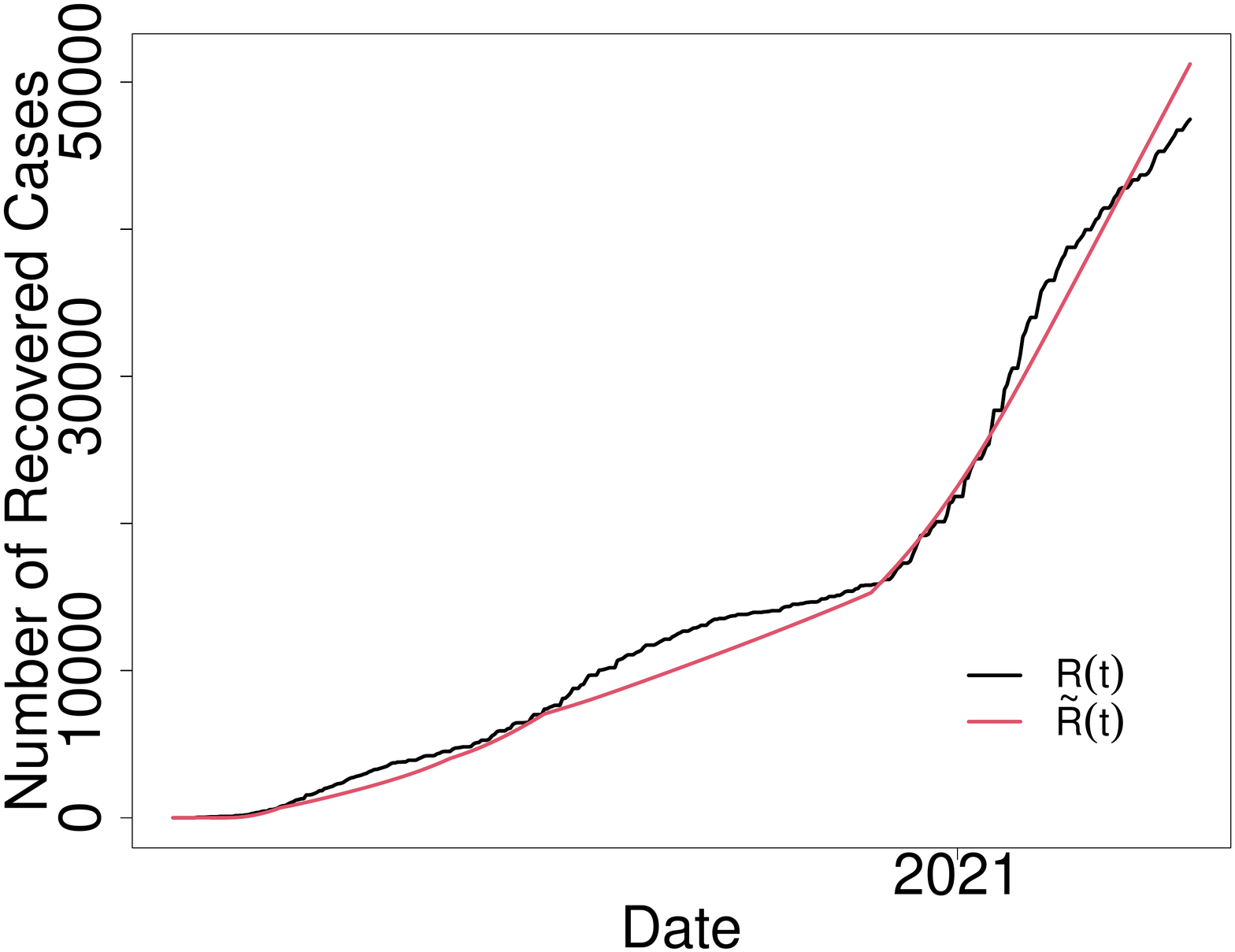}
         \subcaption{Riverside (Model 1)}
     \end{subfigure}
     \begin{subfigure}[b]{0.19\textwidth}
         \centering
         \includegraphics[width=\textwidth]{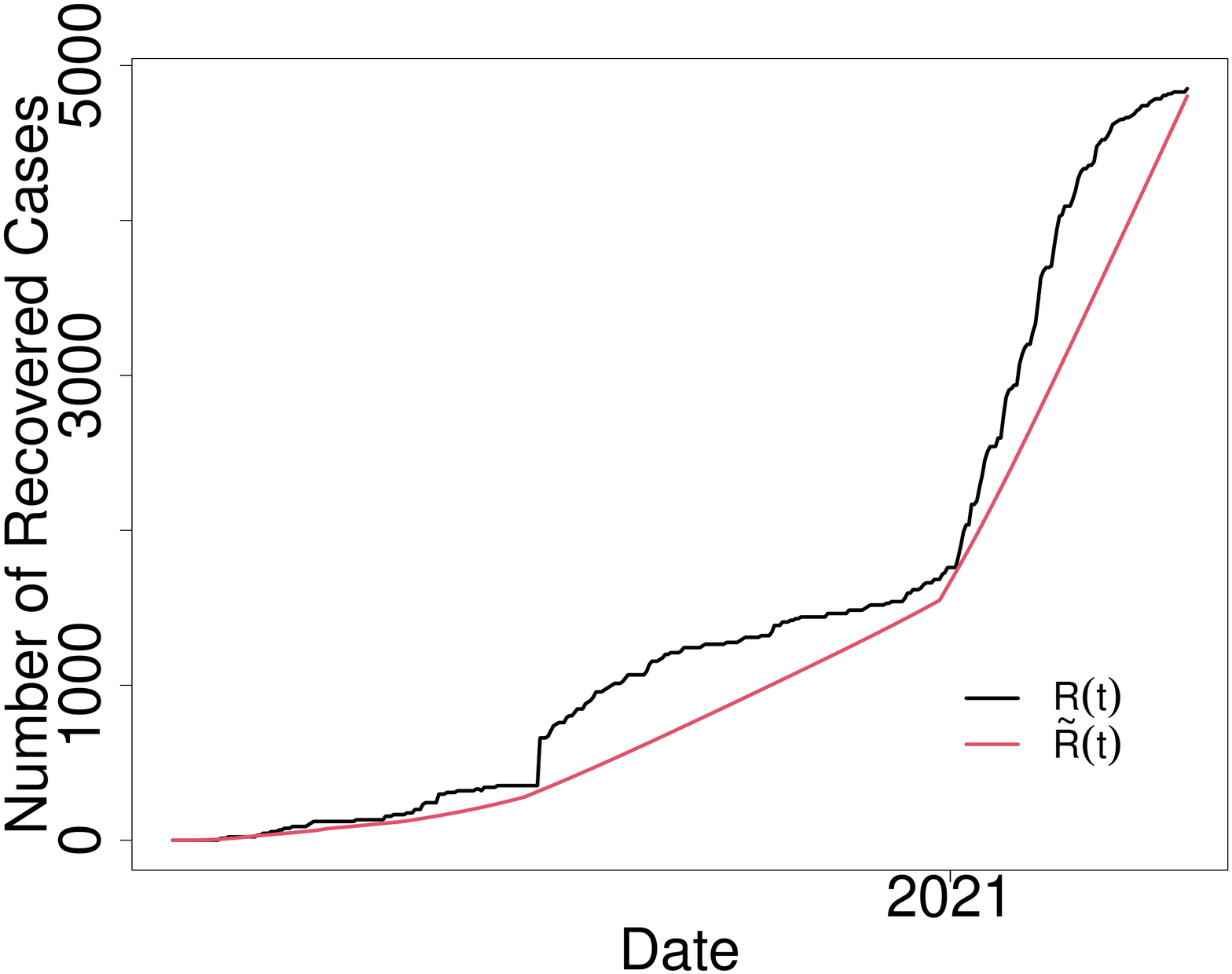}
         \subcaption{Santa Barbara (Model 1)}
     \end{subfigure}
    
     \begin{subfigure}[b]{0.19\textwidth}
         \centering
         \includegraphics[width=\textwidth]{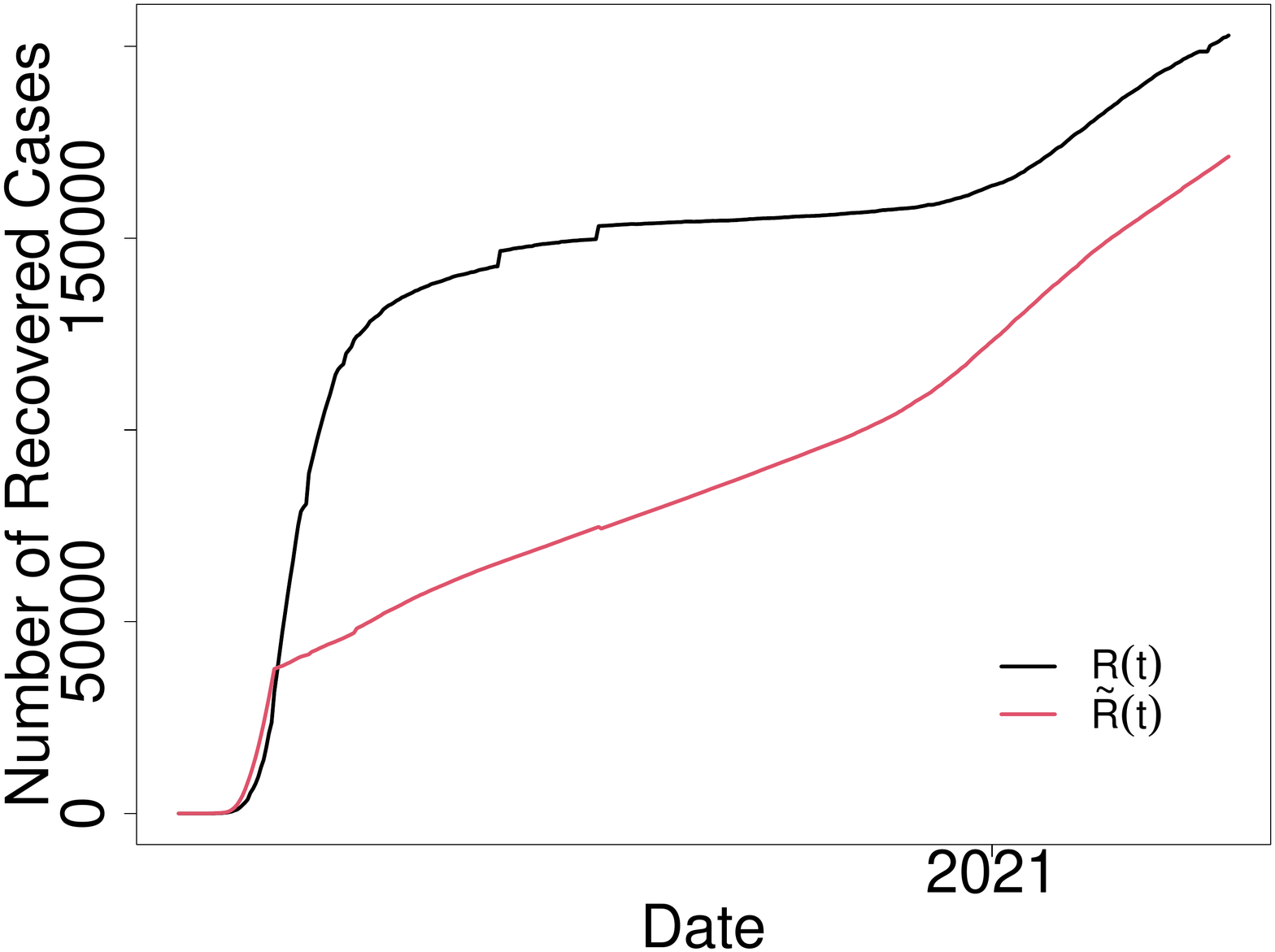}
         \subcaption{NYC (Model 2.3)}
     \end{subfigure}
     \begin{subfigure}[b]{0.19\textwidth}
         \centering
         \includegraphics[width=\textwidth]{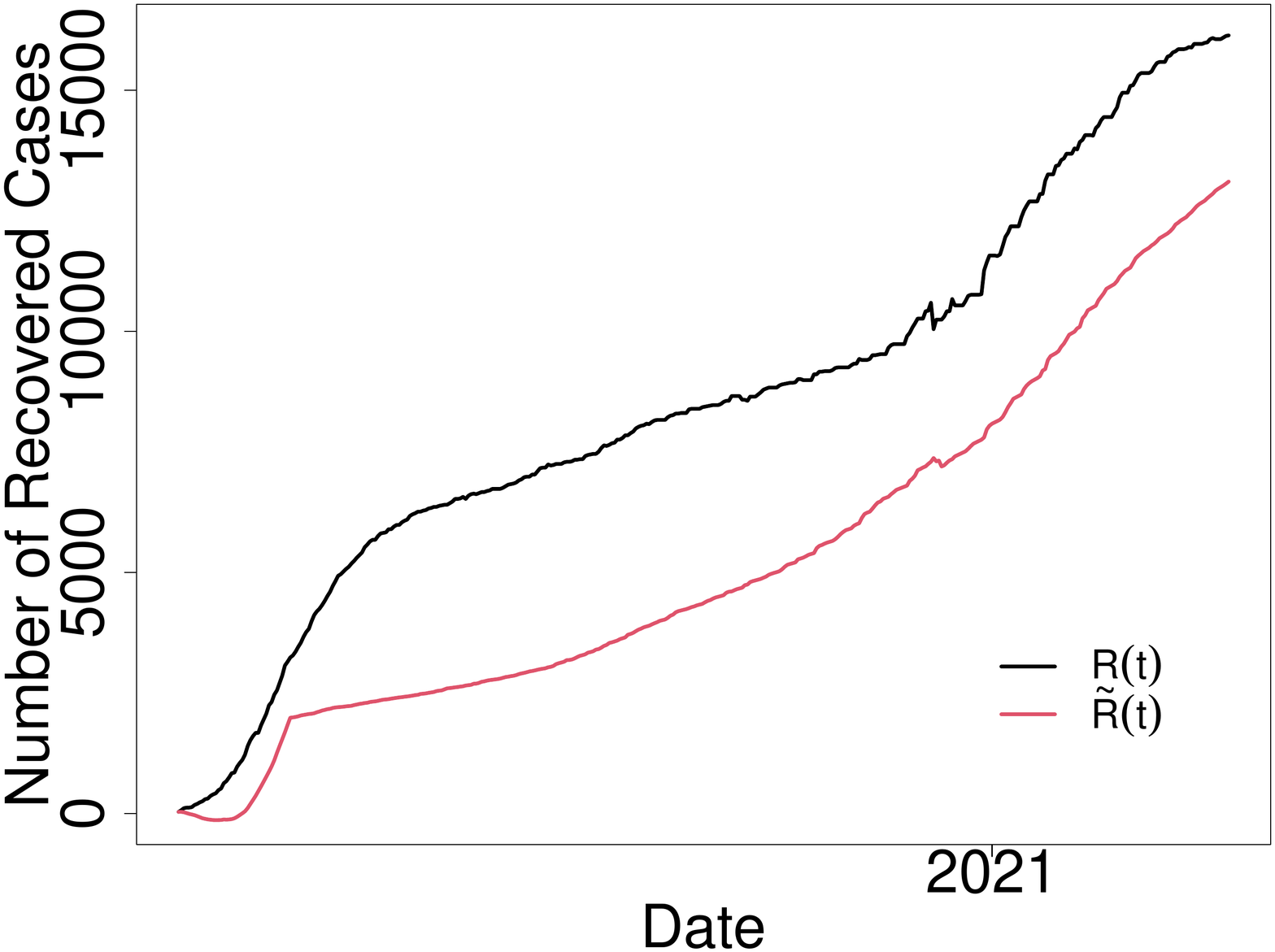}
         \subcaption{King (Model 2.3)}
     \end{subfigure}
     \begin{subfigure}[b]{0.19\textwidth}
         \centering
         \includegraphics[width=\textwidth]{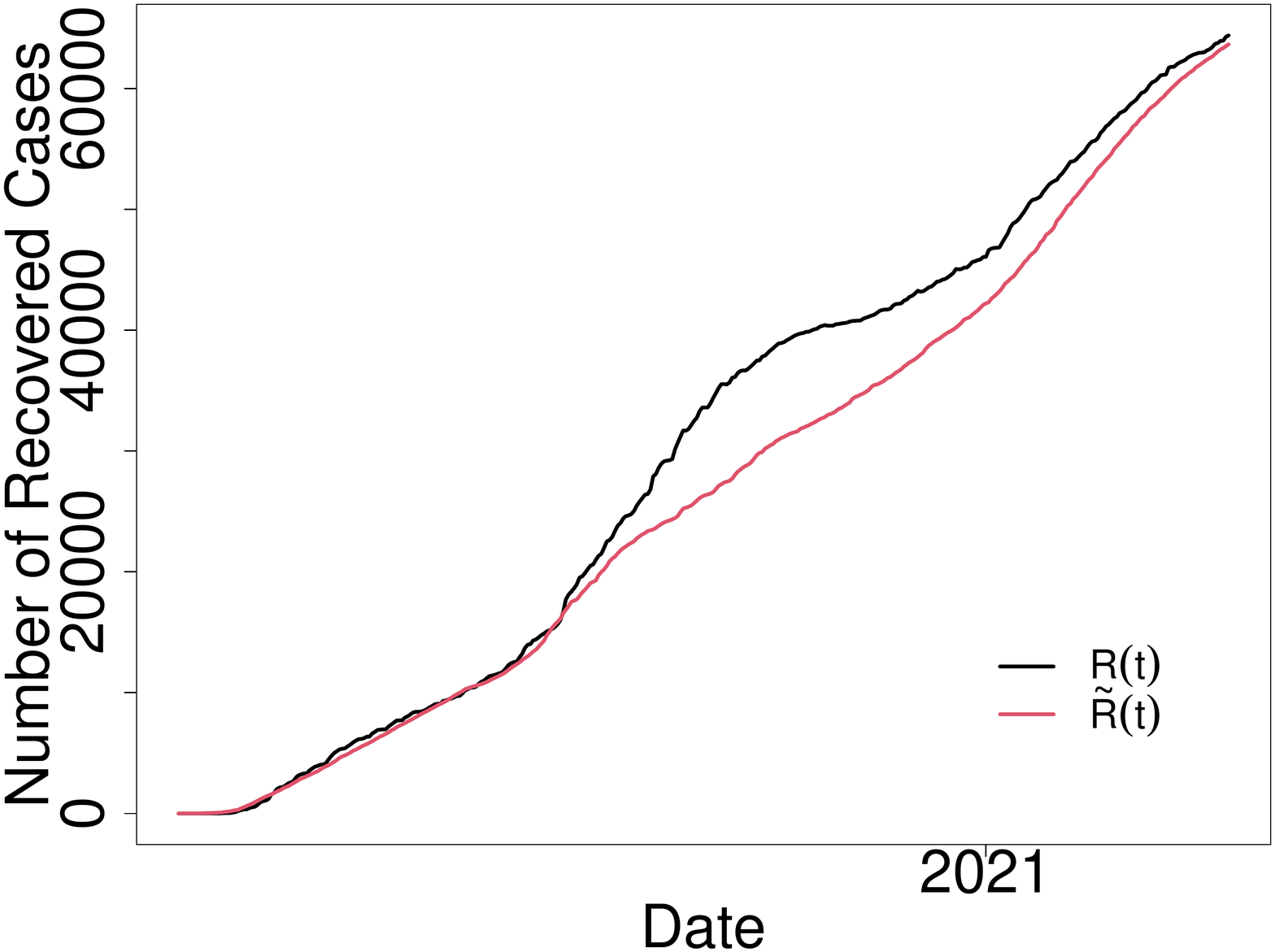}
         \subcaption{Miami (Model 2.3)}
     \end{subfigure}
   \begin{subfigure}[b]{0.19\textwidth}
         \centering
         \includegraphics[width=\textwidth]{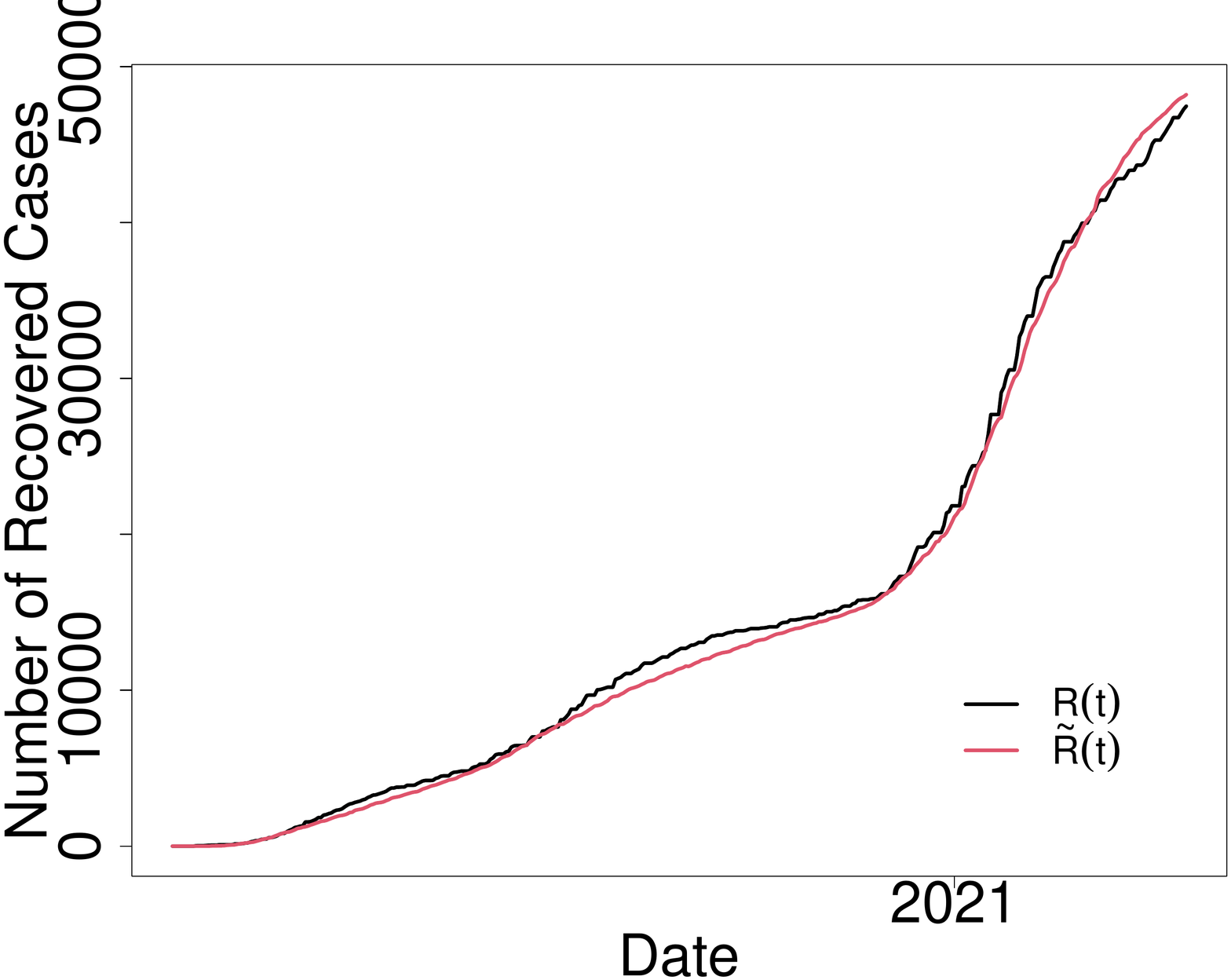}
         \subcaption{Riverside (Model 2.3)}
     \end{subfigure}
     \begin{subfigure}[b]{0.19\textwidth}
         \centering
         \includegraphics[width=\textwidth]{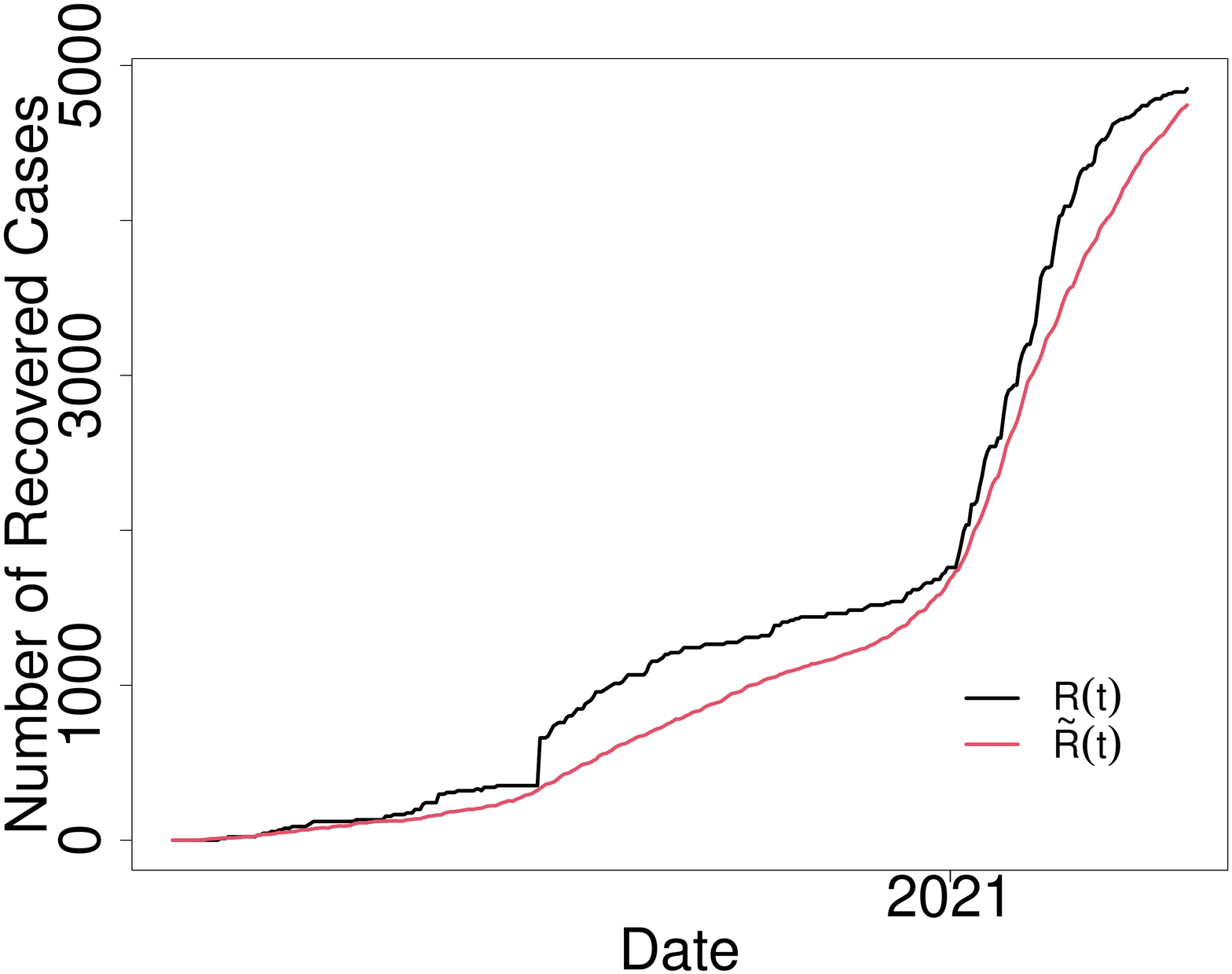}
         \subcaption{Santa Barbara (Model 2.3)}
     \end{subfigure}

     \begin{subfigure}[b]{0.19\textwidth}
         \centering
         \includegraphics[width=\textwidth]{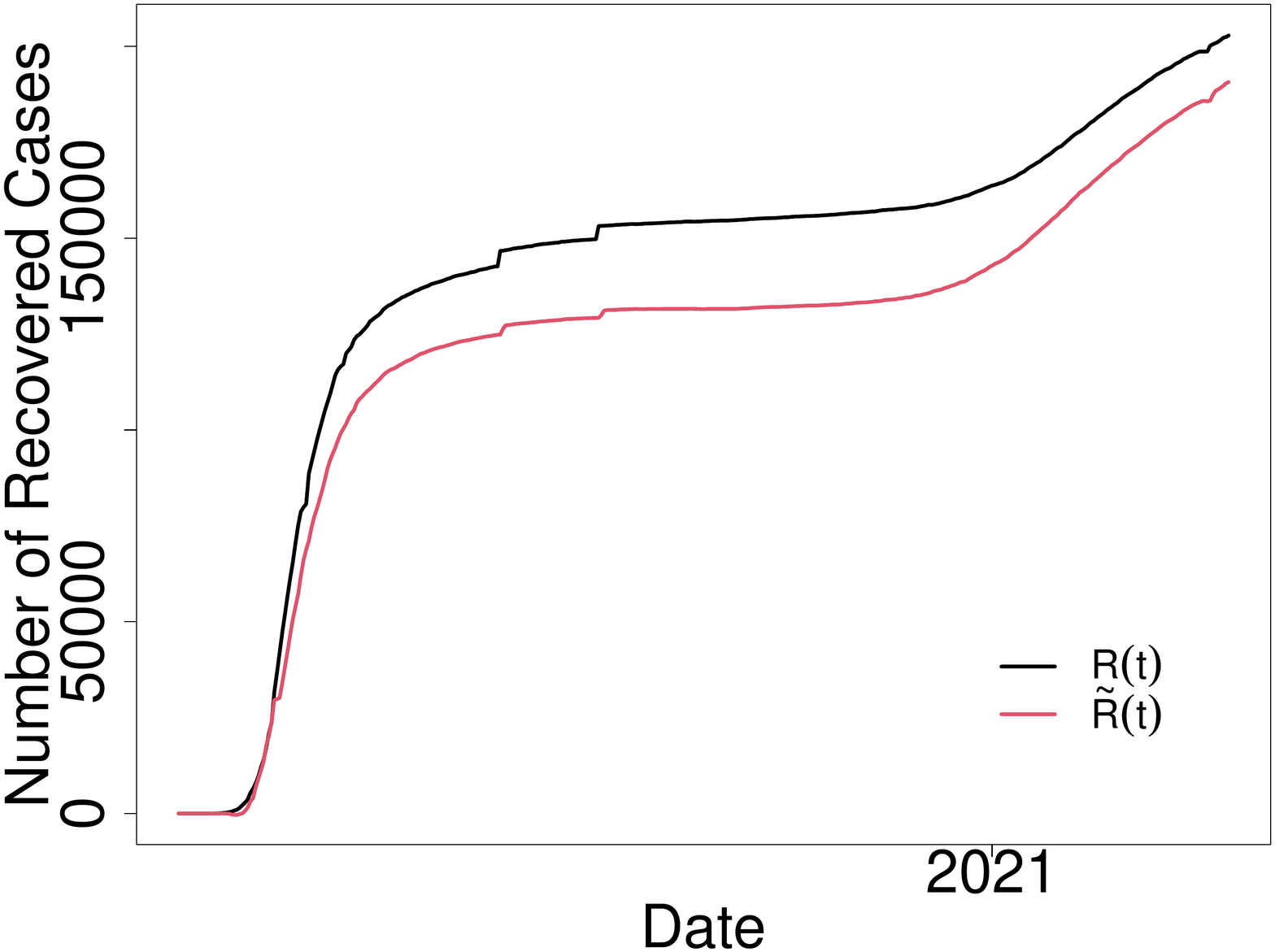}
         \subcaption{NYC (Model 3)}
     \end{subfigure}
     \begin{subfigure}[b]{0.19\textwidth}
         \centering
         \includegraphics[width=\textwidth]{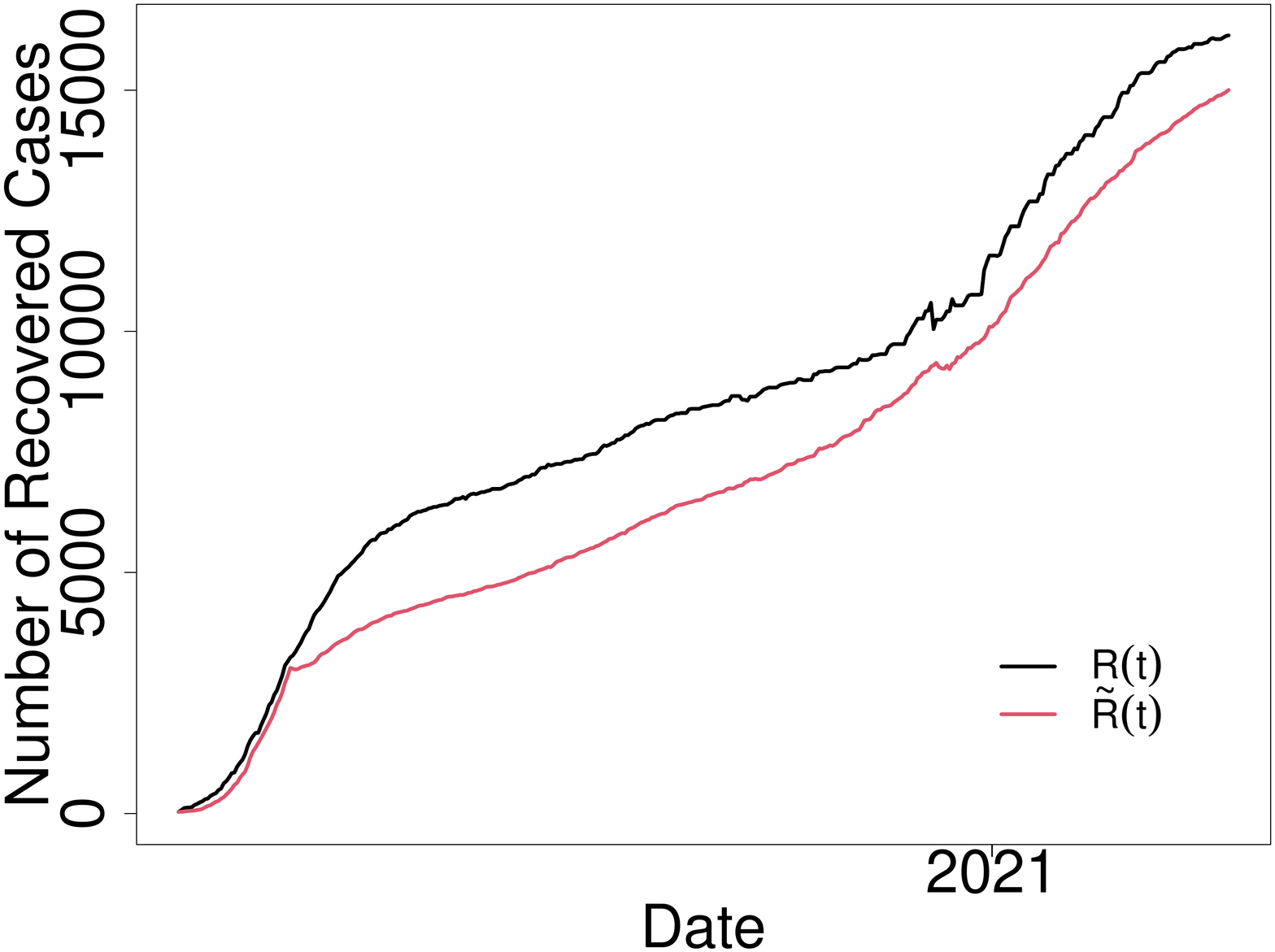}
         \subcaption{King (Model 3)}
     \end{subfigure}
     \begin{subfigure}[b]{0.19\textwidth}
         \centering
         \includegraphics[width=\textwidth]{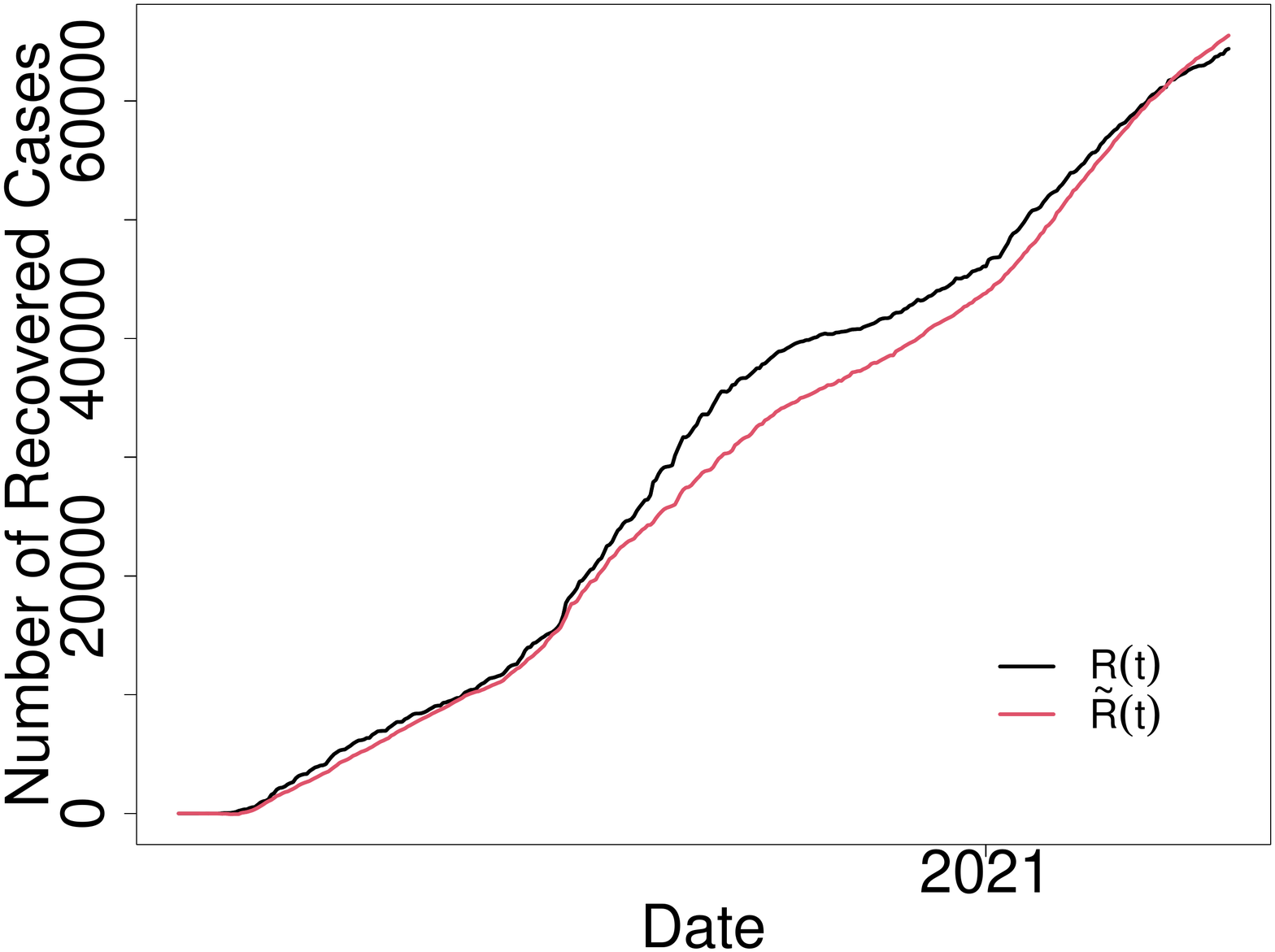}
         \subcaption{Miami (Model 3)}
     \end{subfigure}
     \begin{subfigure}[b]{0.19\textwidth}
         \centering
         \includegraphics[width=\textwidth]{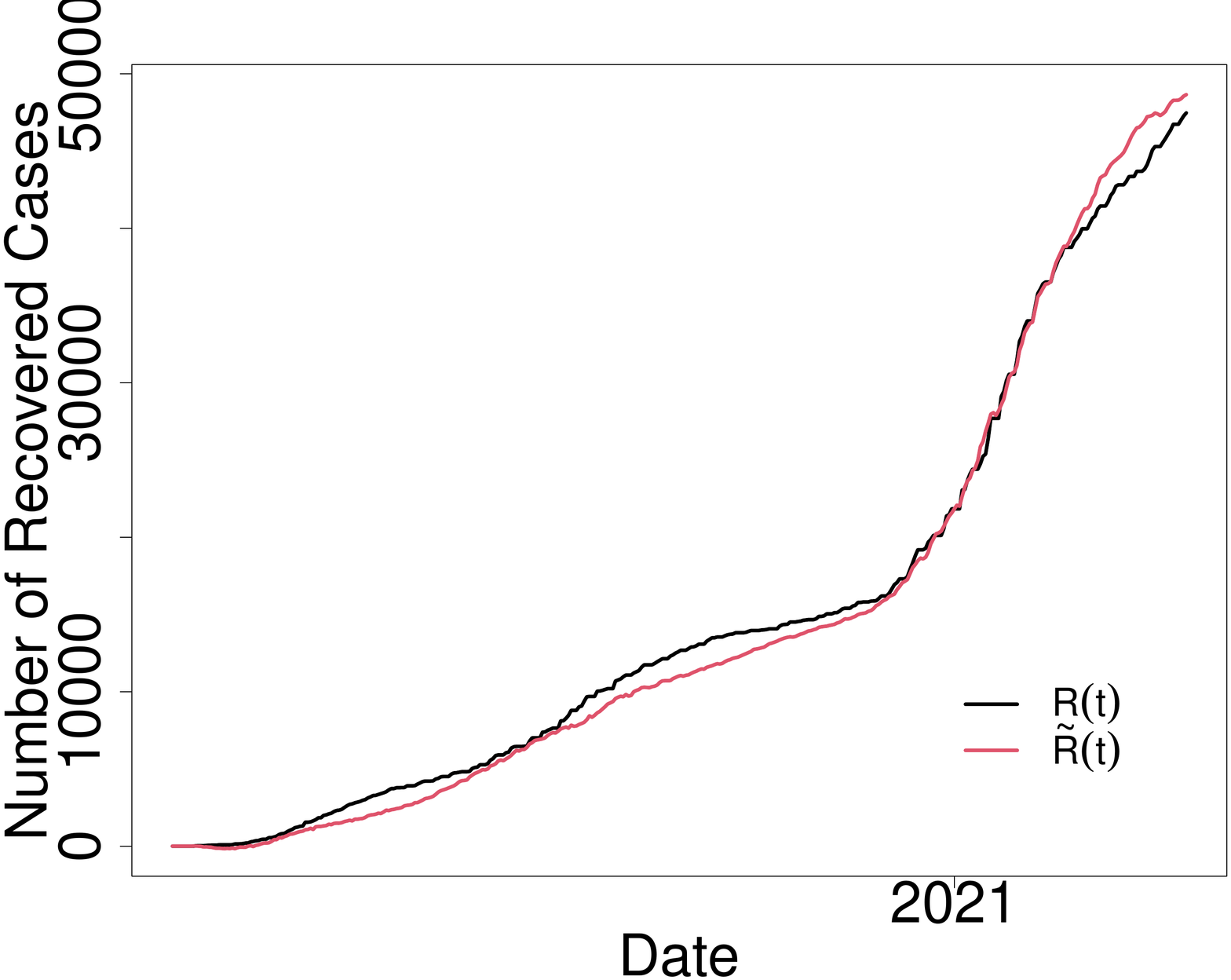}
         \subcaption{Riverside (Model 3)}
     \end{subfigure}
     \begin{subfigure}[b]{0.19\textwidth}
         \centering
         \includegraphics[width=\textwidth]{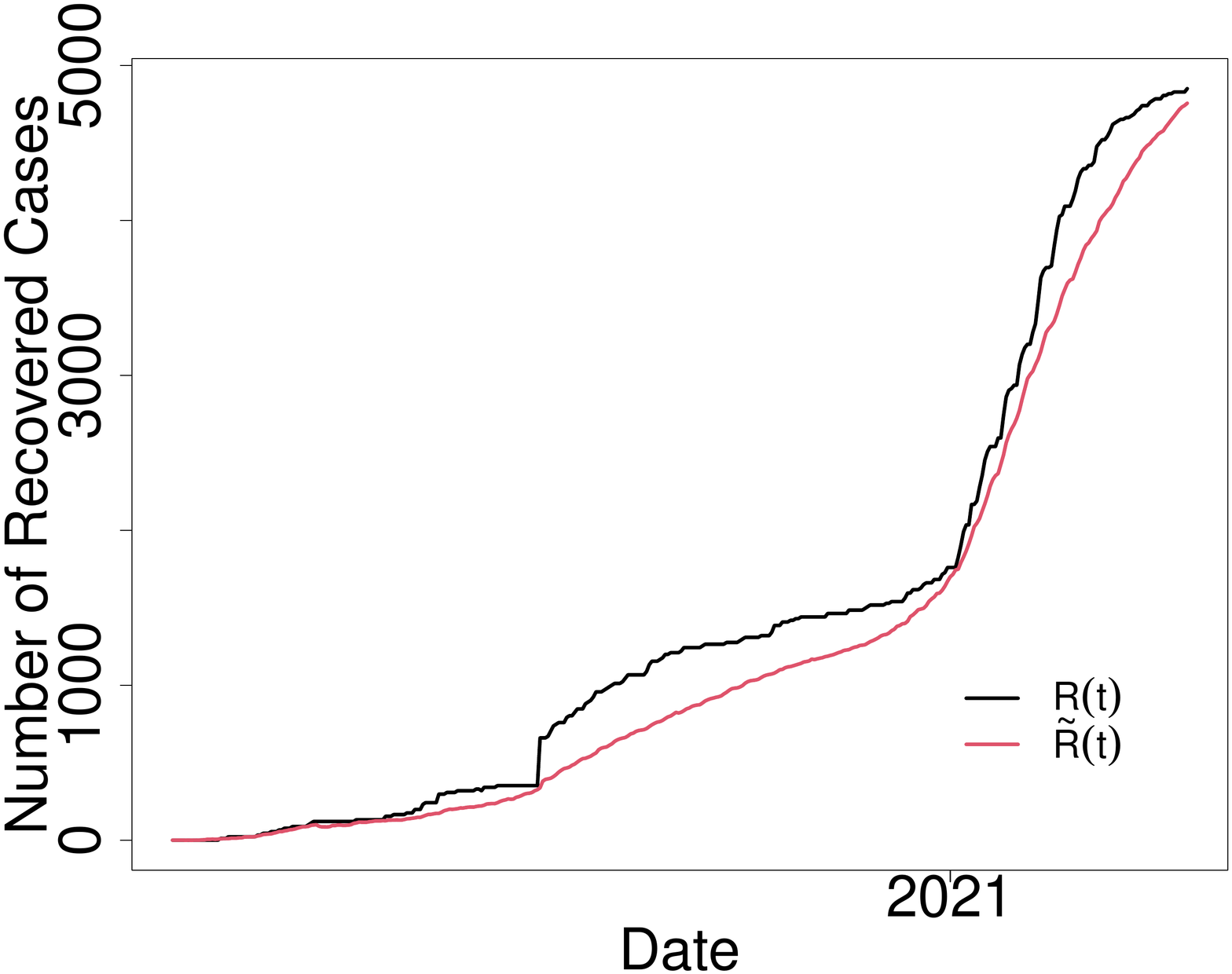}
         \subcaption{Santa Barbara (Model 3)}
     \end{subfigure}
        \caption{Observed (black) and fitted (red) number of recovered cases estimated by three models in selected  counties/cities.  }
        \label{fig:number of recovered_County}
\end{figure*}

The observed and fitted number of infected cases and recovered cases are displayed in Figure \ref{fig:number of infected_county} and Figure \ref{fig:number of recovered_County}, respectively. 
Note that we calculate the number of recovered cases by an approximated nationwide recovery-to-death ratio which probably leads to a less satisfactory fitted result of recovered cases at the county level.        
Nevertheless, it does not affect the change point detection result.

\begin{table}[!ht]
\caption{\label{table_alpha_county_2} Spatial effect estimation in Model 2, including the estimate, p-value, and 95\% confidence intervals for the parameter $\alpha$ in counties/cities.}
\scriptsize
\centering
\begin{tabular}{lcccc} 
  \hline
  \hline
    & Model  & Estimate & P-value & Confidence interval \\ 
  \hline
\multirow{ 4}{*}{NYC} & Model 2.1 & 0.0509 & 0.0798 & ( -0.0061 , 0.1078 ) \\ 
   & Model 2.2 & 0.0161 & 0.5714 & ( -0.0397 , 0.0719 ) \\ 
   & Model 2.3 & 0.3916 & 8.5486e-05 & ( 0.1969 , 0.5862 ) \\ 
   & Model 2.4 & 0.8174 & 1.4267e-15 & ( 0.6206 , 1.0143 ) \\ 
  \multirow{ 4}{*}{King} & Model 2.1 & 0.3247 & 4.2279e-20 & ( 0.2572 , 0.3922 ) \\ 
   & Model 2.2 & 0.3644 & 4.08359e-21 & ( 0.2907 , 0.4381 ) \\ 
   & Model 2.3 & 0.5357 & 1.6510e-11 & ( 0.3818 , 0.6895 ) \\ 
   & Model 2.4 & 0.4741 & 7.5644e-19 & ( 0.3718 , 0.5764 ) \\ 
  \multirow{ 4}{*}{Miami-Dade} & Model 2.1 & 0.7229 & 1.7405e-37 & ( 0.6179 , 0.8278 ) \\ 
   & Model 2.2 & 0.8381 & 5.3081e-46 & ( 0.7302 , 0.946 ) \\ 
   & Model 2.3 & 0.807 & 2.0541e-52 & ( 0.7109 , 0.9031 ) \\ 
   & Model 2.4 & 0.9815 & 1.2398e-46 & ( 0.8562 , 1.1068 ) \\ 
  \multirow{ 4}{*}{Charleston} & Model 2.1 & 0.6006 & 1.2029e-42 & ( 0.5197 , 0.6815 ) \\ 
   & Model 2.2 & 0.6502 & 2.0676e-47 & ( 0.5679 , 0.7325 ) \\ 
   & Model 2.3 & 0.8485 & 5.9081e-69 & ( 0.7629 , 0.934 ) \\ 
   & Model 2.4 & 0.7418 & 3.6122e-41 & ( 0.6398 , 0.8438 ) \\ 
  \multirow{ 4}{*}{Greenville} & Model 2.1 & 0.6386 & 4.3653e-42 & ( 0.552 , 0.7252 ) \\ 
   & Model 2.2 & 0.6544 & 2.4293e-44 & ( 0.5684 , 0.7404 ) \\ 
   & Model 2.3 & 0.7811 & 9.0873e-66 & ( 0.7002 , 0.862 ) \\ 
   & Model 2.4 & 0.7781 & 8.3114e-61 & ( 0.6935 , 0.8627 ) \\ 
  \multirow{ 4}{*}{Richland} & Model 2.1 & 0.1737 & 3.9669e-07 & ( 0.1071 , 0.2404 ) \\ 
   & Model 2.2 & 0.1749 & 4.7063e-07 & ( 0.1073 , 0.2424 ) \\ 
   & Model 2.3 & 0.5769 & 1.0419e-25 & ( 0.4729 , 0.681 ) \\ 
   & Model 2.4 & 0.551 & 5.3149e-22 & ( 0.4423 , 0.6597 ) \\ 
  \multirow{ 4}{*}{Horry} & Model 2.1 & 0.188 & 1.2290e-09 & ( 0.128 , 0.2479 ) \\ 
   & Model 2.2 & 0.1572 & 1.1276e-07 & ( 0.0996 , 0.2148 ) \\ 
   & Model 2.3 & 0.5857 & 1.1273e-46 & ( 0.5109 , 0.6604 ) \\ 
   & Model 2.4 & 0.5221 & 2.0689e-35 & ( 0.4438 , 0.6004 ) \\ 
  \multirow{ 4}{*}{Riverside} & Model 2.1 & 0.383 & 4.0323e-35 & ( 0.3252 , 0.4408 ) \\ 
   & Model 2.2 & 0.3339 & 4.3780e-30 & ( 0.2788 , 0.3889 ) \\ 
   & Model 2.3 & 1.0056 & 7.4856e-68 & ( 0.9032 , 1.1079 ) \\ 
   & Model 2.4 & 1.6968 & 7.4801e-90 & ( 1.5525 , 1.8412 ) \\ 
  \multirow{ 4}{*}{Santa Barbara} & Model 2.1 & 0.4413 & 3.8589e-16 & ( 0.3373 , 0.5454 ) \\ 
   & Model 2.2 & 0.513 & 2.0562e-18 & ( 0.401 , 0.6251 ) \\ 
   & Model 2.3 & 0.7747 & 3.3965e-20 & ( 0.6142 , 0.9351 ) \\ 
   & Model 2.4 & 0.8119 & 2.0459e-16 & ( 0.6224 , 1.0014 ) \\ 
  \hline
\end{tabular}
\end{table}

In Table \ref{table_alpha_county_2},  we provide the estimate, p-value, and 95\% confidence intervals for the parameter $\alpha$ in Model 2 for additional counties. 
The results in Table \ref{table_alpha_county_2} indicate that the neighboring counties are making an significant impact on estimating and forecasting the transmission dynamic. 
Finally, we use auto-correlation function (ACF) to measure the degree of dependence among residuals at different time points. Figure \ref{fig:acf_county}  shows the ACF of these residual time series in several counties/cities. After adding the VAR($p$) model component to the modeling framework, the residual series $\widetilde{\epsilon}$ in Model 3 has a  significantly reduced the auto-correlation.

\begin{figure*}[ht!]
     \centering
          \captionsetup[sub]{font=scriptsize, labelfont={bf,sf}}
     \begin{subfigure}[b]{0.19\textwidth}
         \centering
         \includegraphics[width=\textwidth]{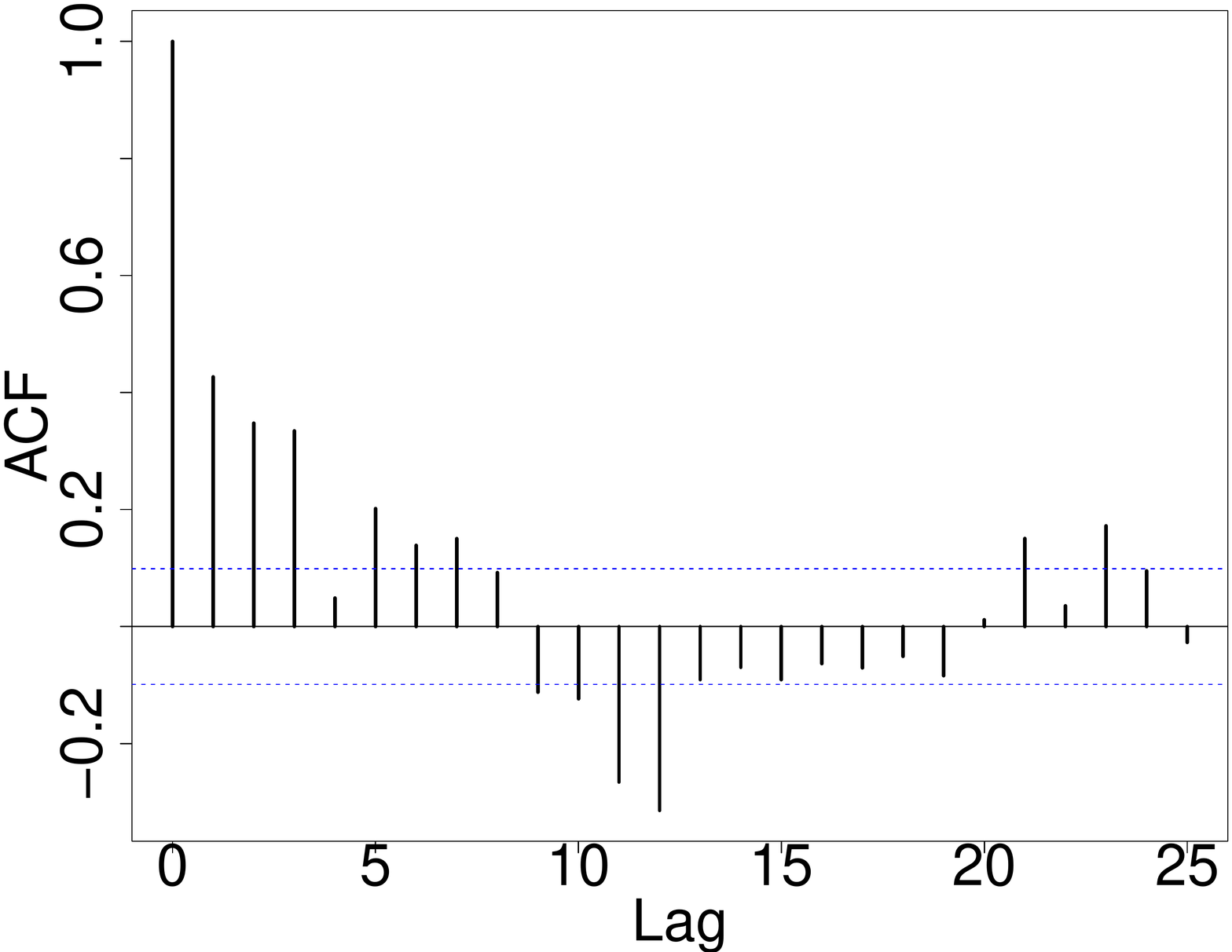}
         \subcaption{NYC $\widehat{\epsilon}(t) (\Delta I)$}
     \end{subfigure}
     \begin{subfigure}[b]{0.19\textwidth}
         \centering
         \includegraphics[width=\textwidth]{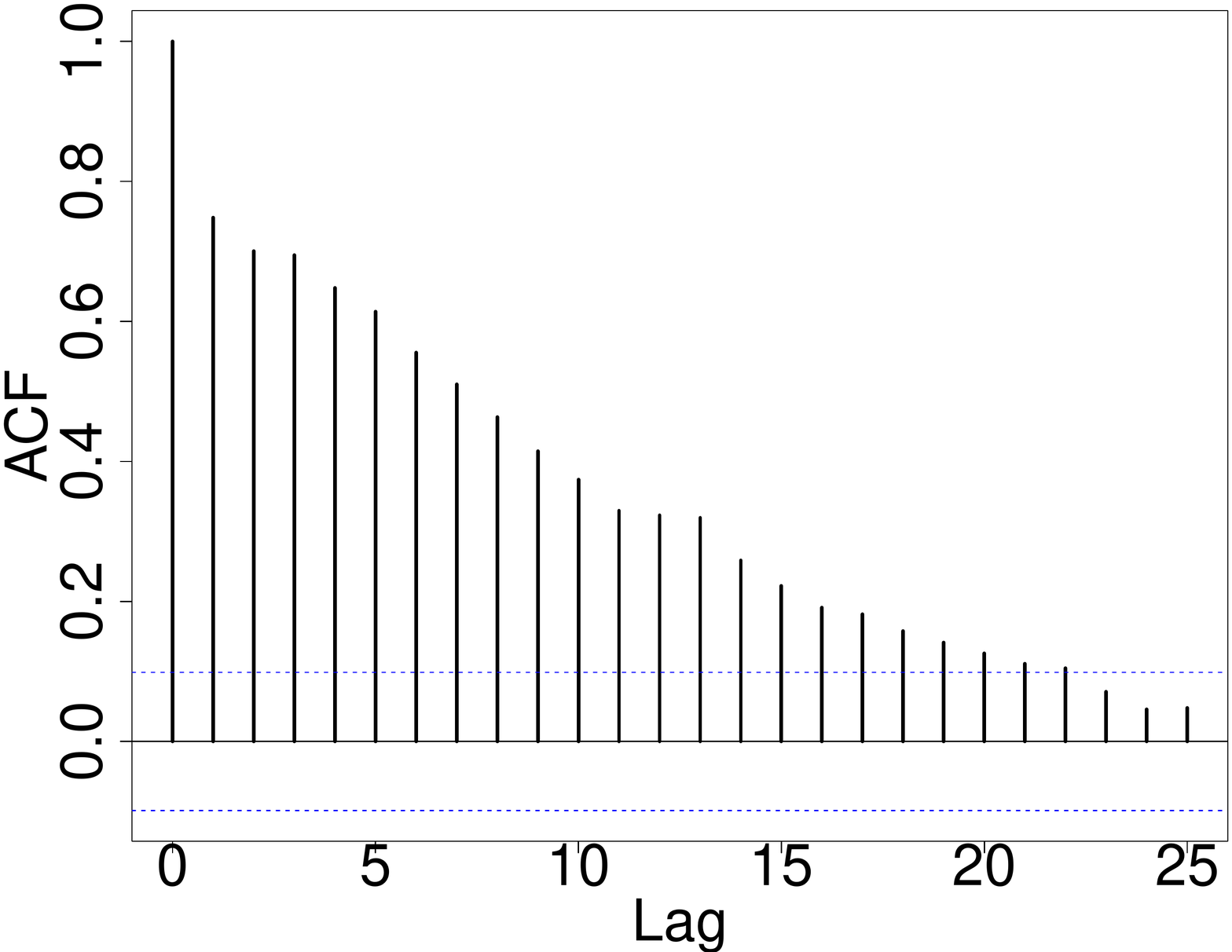}
         \subcaption{NYC $\widehat{\epsilon}(t) (\Delta R)$}
     \end{subfigure}
     \begin{subfigure}[b]{0.19\textwidth}
         \centering
         \includegraphics[width=\textwidth]{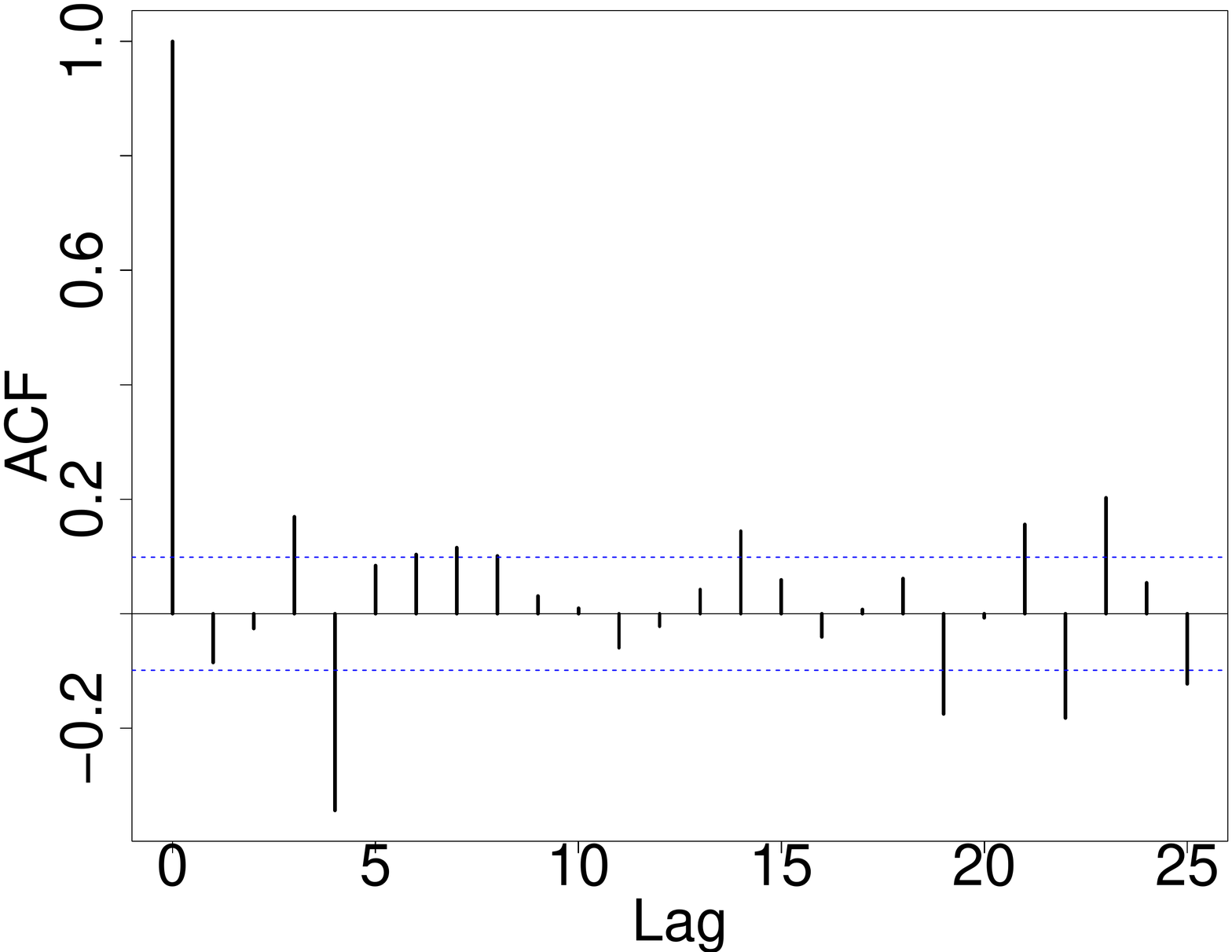}
         \subcaption{NYC $\widetilde{\epsilon}(t) (\Delta I)$}
     \end{subfigure}
     \begin{subfigure}[b]{0.19\textwidth}
         \centering
         \includegraphics[width=\textwidth]{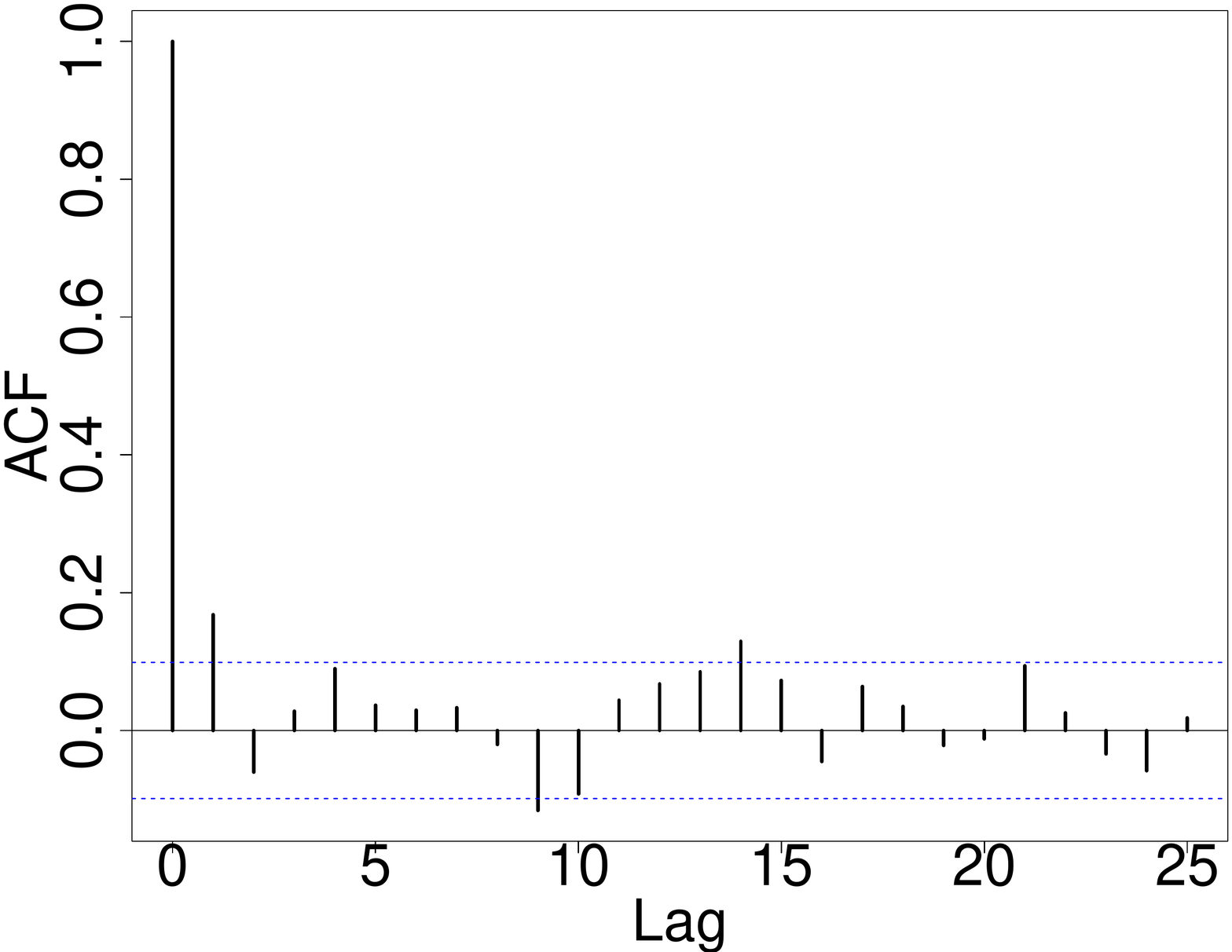}
         \subcaption{NYC $\widetilde{\epsilon}(t) (\Delta R)$}
     \end{subfigure}

     \begin{subfigure}[b]{0.19\textwidth}
         \centering
         \includegraphics[width=\textwidth]{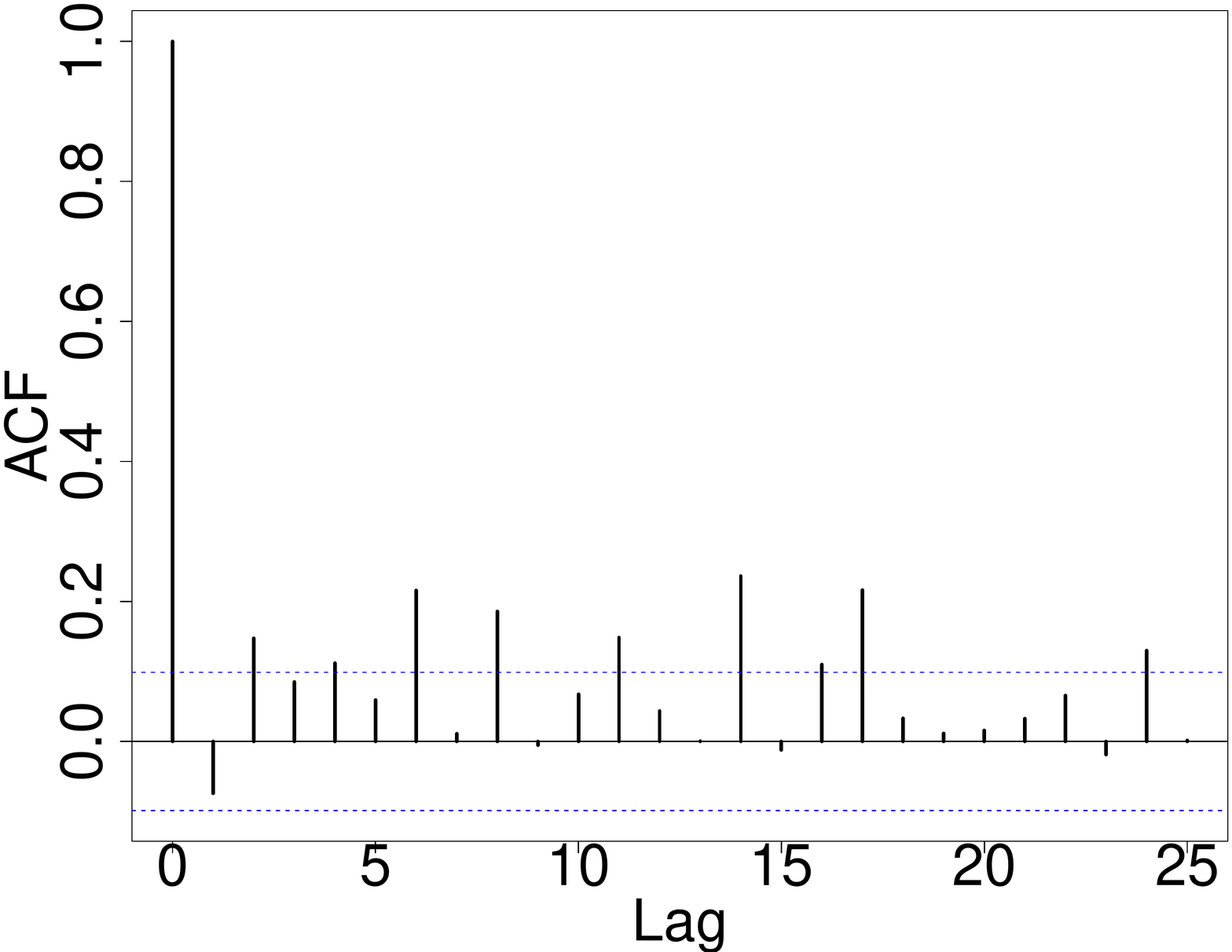}
         \subcaption{King $\widehat{\epsilon}(t) (\Delta I)$}
     \end{subfigure}
     \begin{subfigure}[b]{0.19\textwidth}
         \centering
         \includegraphics[width=\textwidth]{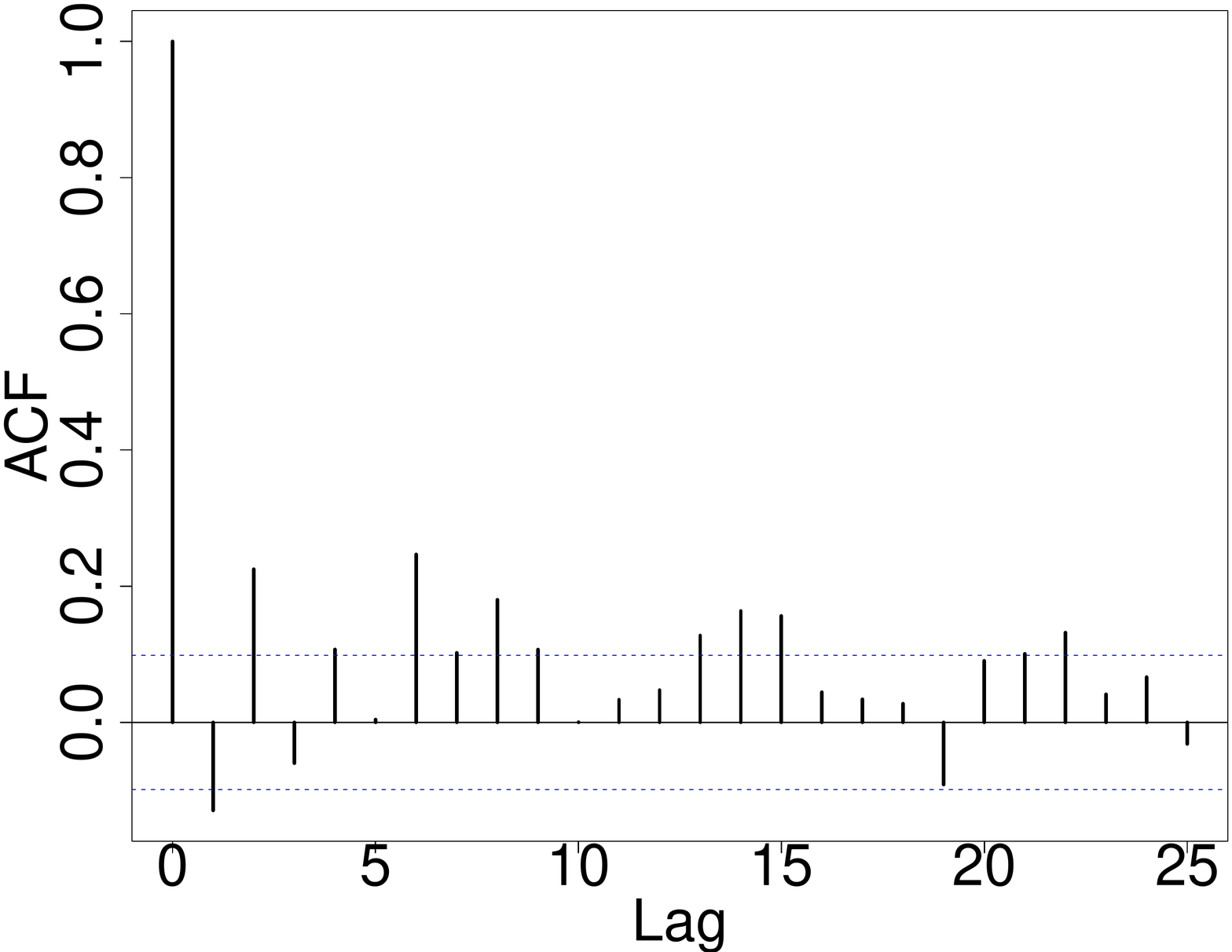}
         \subcaption{King $\widehat{\epsilon}(t) (\Delta R)$}
     \end{subfigure}
     \begin{subfigure}[b]{0.19\textwidth}
         \centering
         \includegraphics[width=\textwidth]{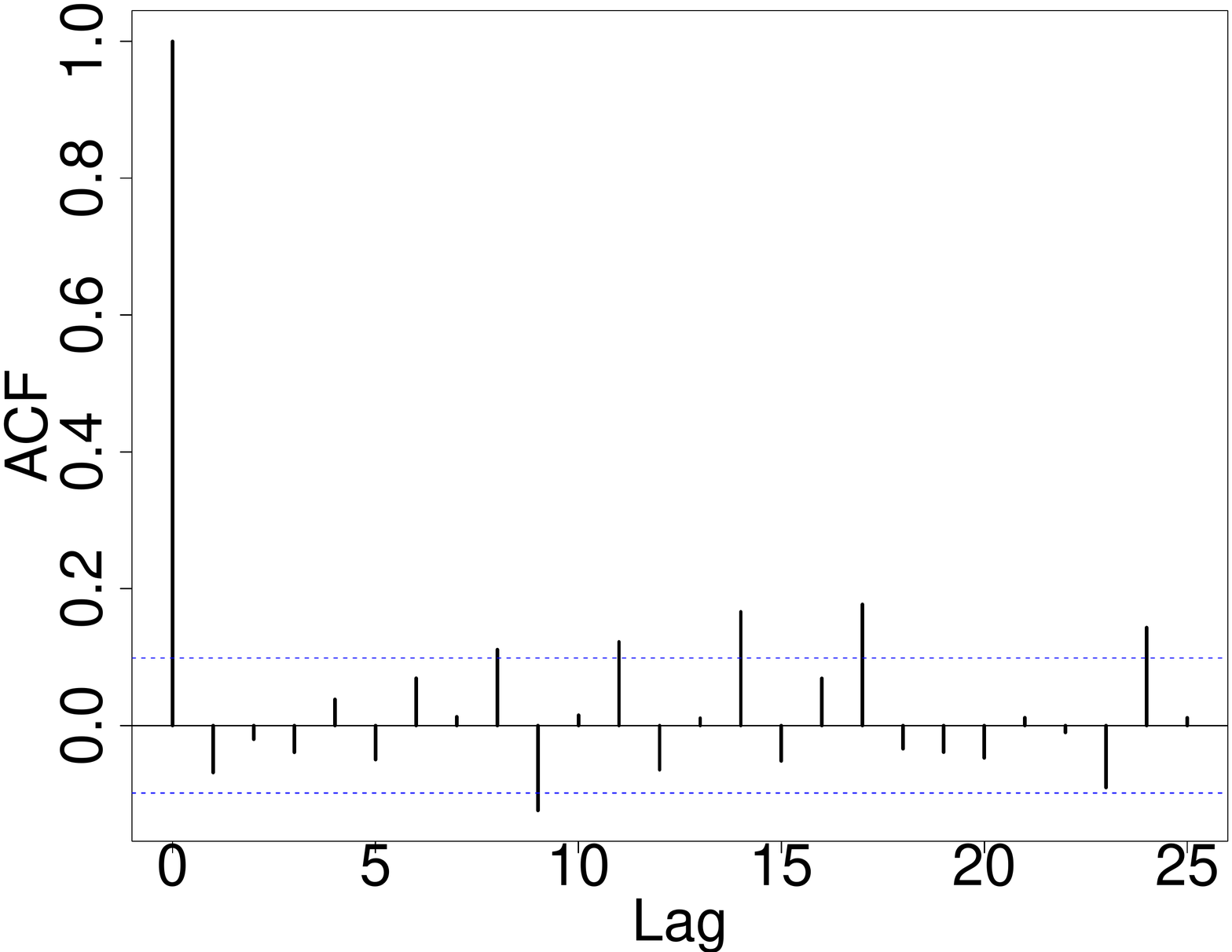}
         \subcaption{King $\widetilde{\epsilon}(t) (\Delta I)$}
     \end{subfigure}
     \begin{subfigure}[b]{0.19\textwidth}
         \centering
         \includegraphics[width=\textwidth]{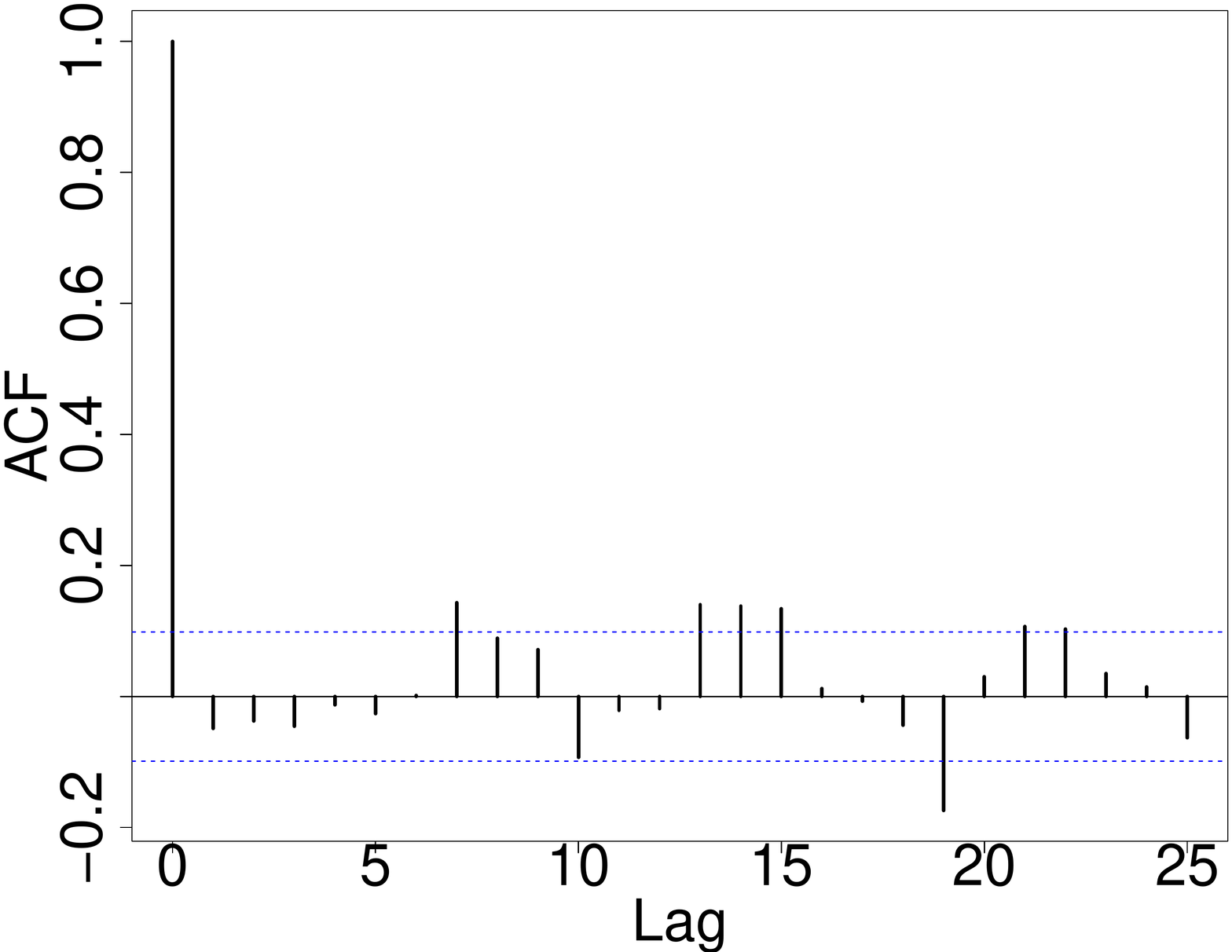}
         \subcaption{King $\widetilde{\epsilon}(t) (\Delta R)$}
     \end{subfigure}

     \begin{subfigure}[b]{0.19\textwidth}
         \centering
         \includegraphics[width=\textwidth]{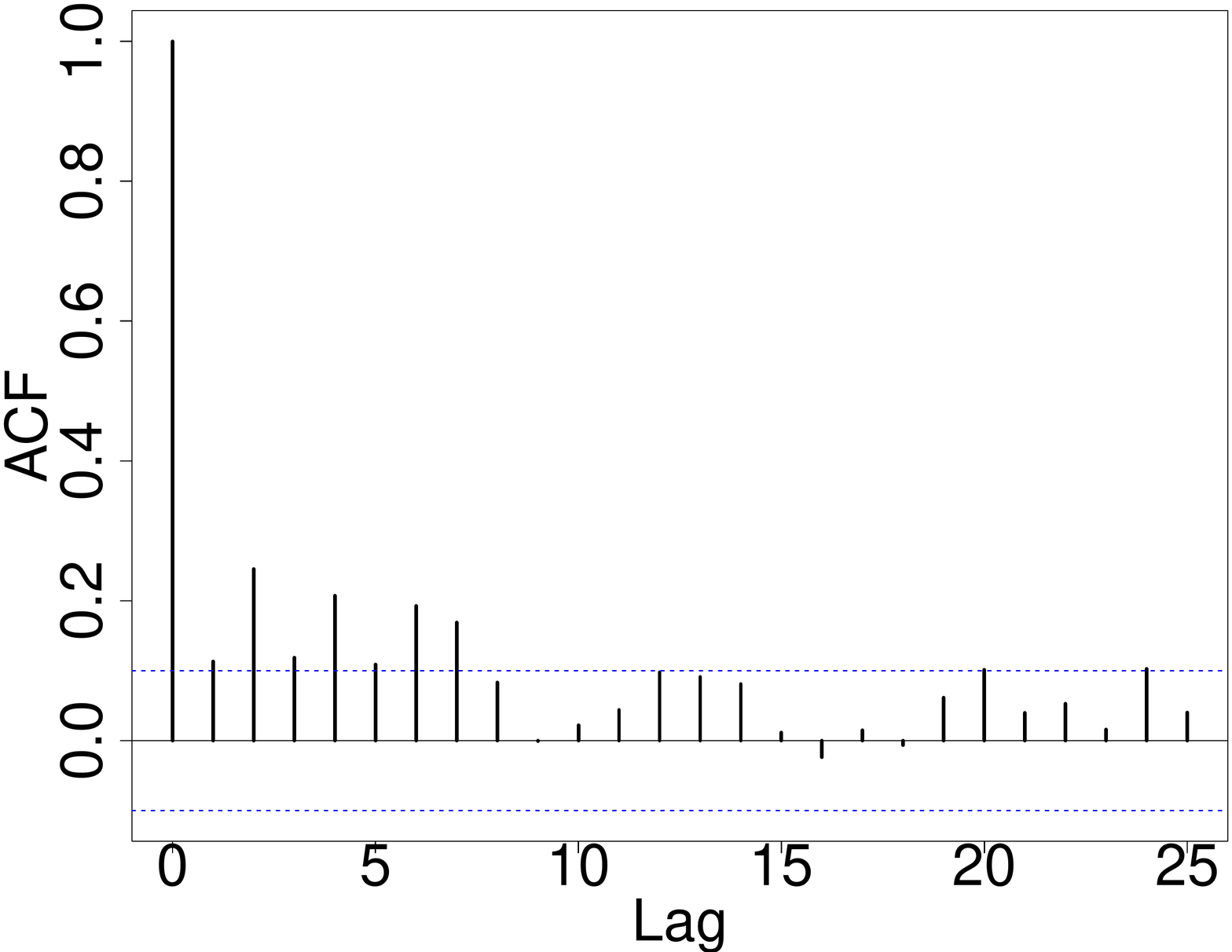}
         \subcaption{Miami  $ \widehat{\epsilon}(t)  (\Delta I)$}
     \end{subfigure}
     \begin{subfigure}[b]{0.19\textwidth}
         \centering
         \includegraphics[width=\textwidth]{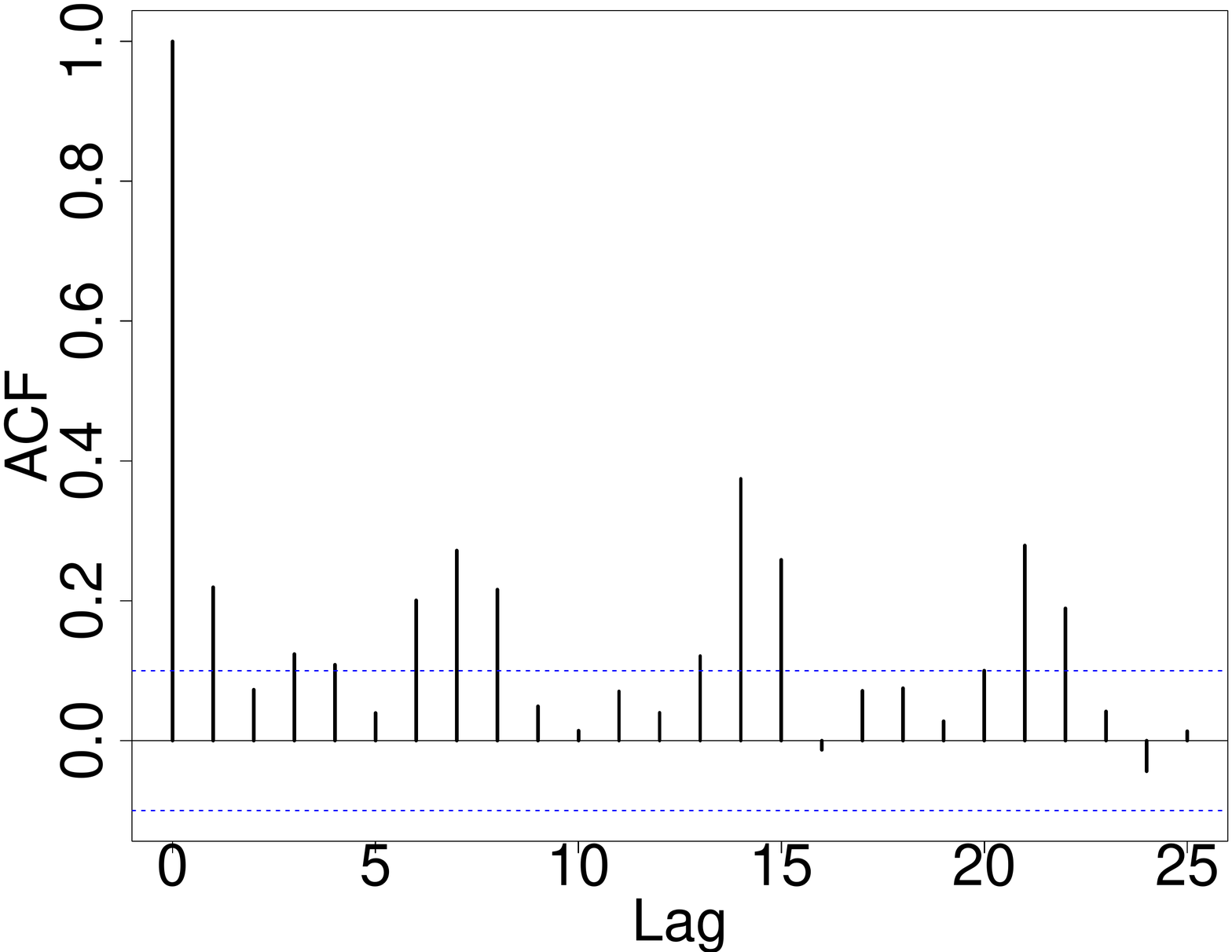}
         \subcaption{Miami $\widehat{\epsilon}(t) (\Delta R)$}
     \end{subfigure}
     \begin{subfigure}[b]{0.19\textwidth}
         \centering
         \includegraphics[width=\textwidth]{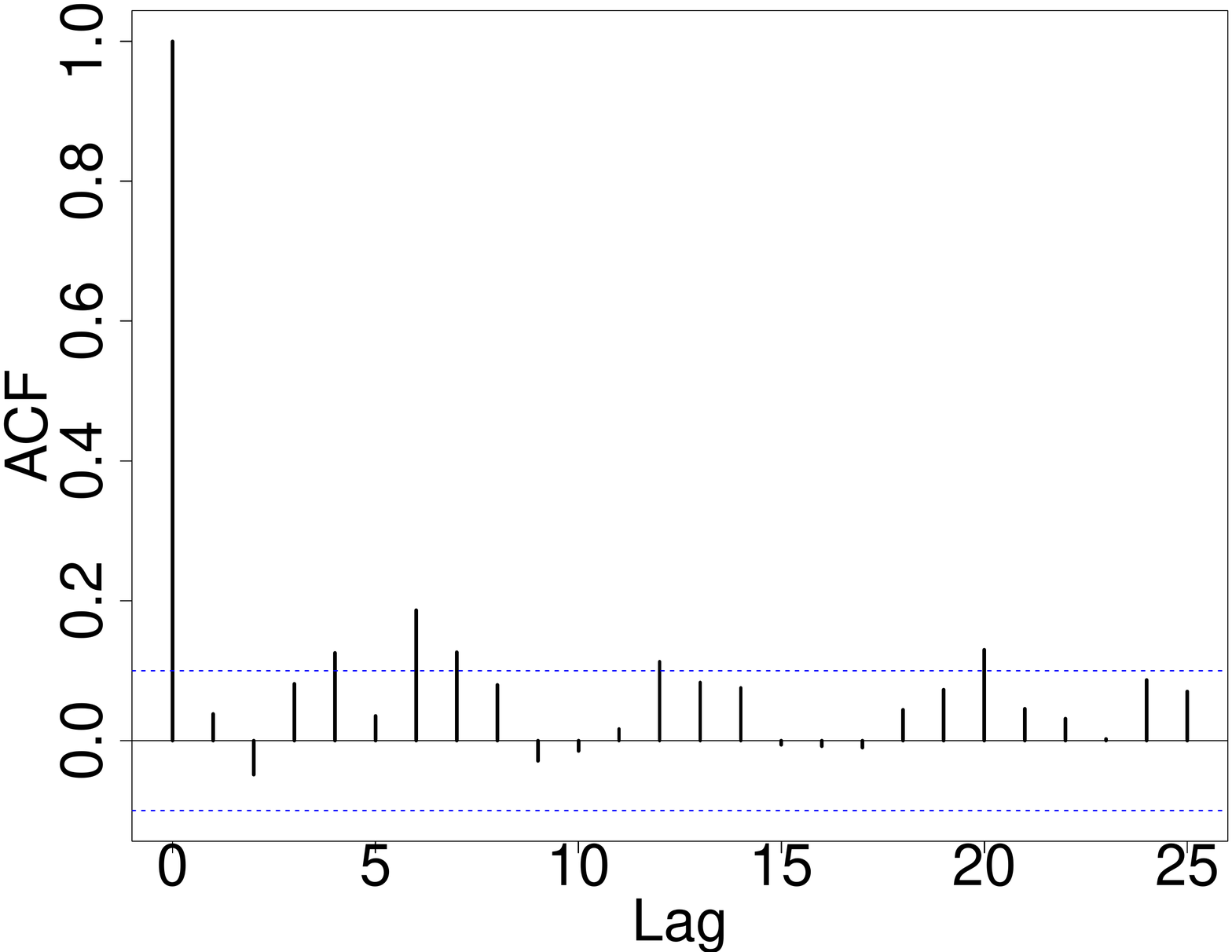}
         \subcaption{Miami $\widetilde{\epsilon}(t) (\Delta I)$}
     \end{subfigure}
     \begin{subfigure}[b]{0.19\textwidth}
         \centering
         \includegraphics[width=\textwidth]{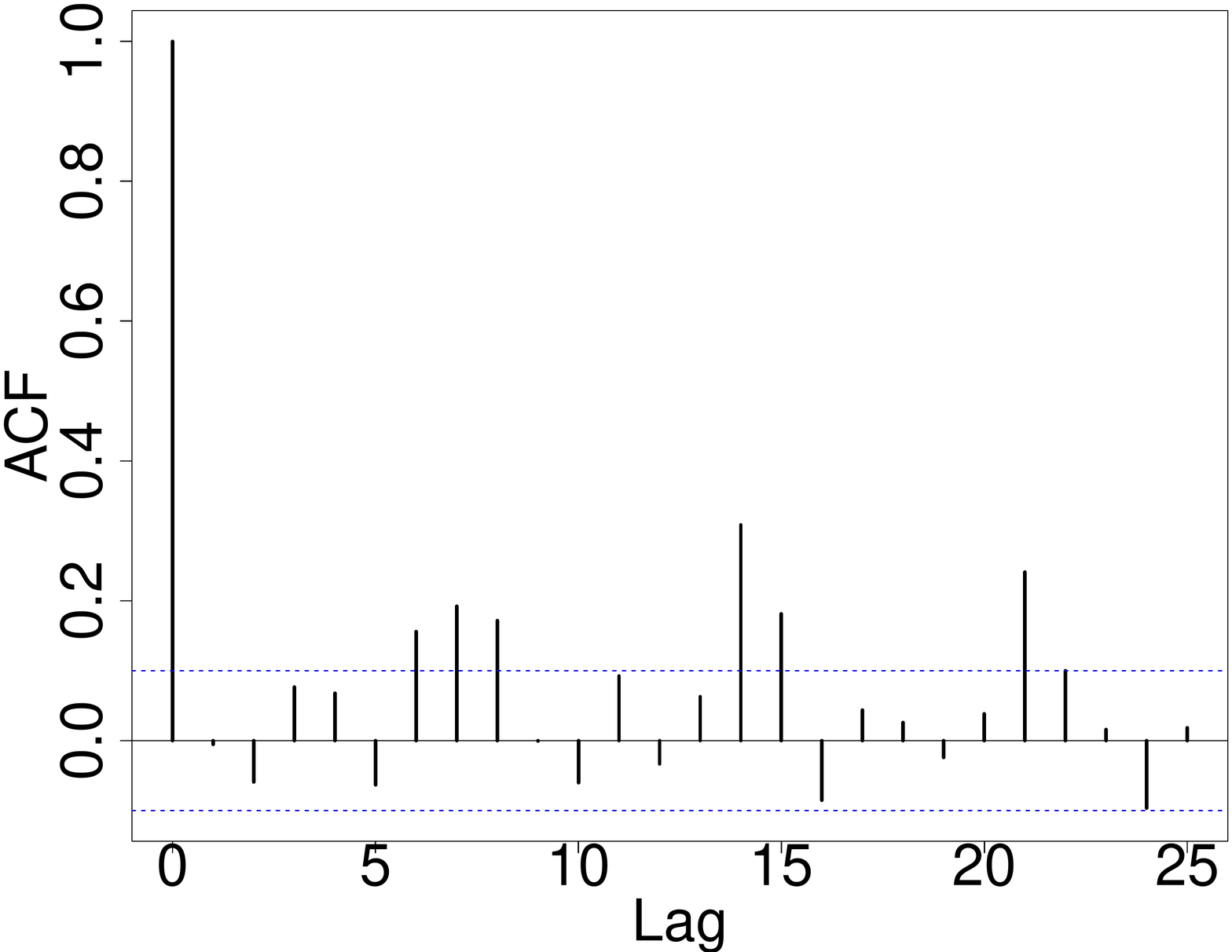}
         \subcaption{Miami $\widetilde{\epsilon}(t) (\Delta R)$}
     \end{subfigure}
     
     \begin{subfigure}[b]{0.19\textwidth}
         \centering
         \includegraphics[width=\textwidth]{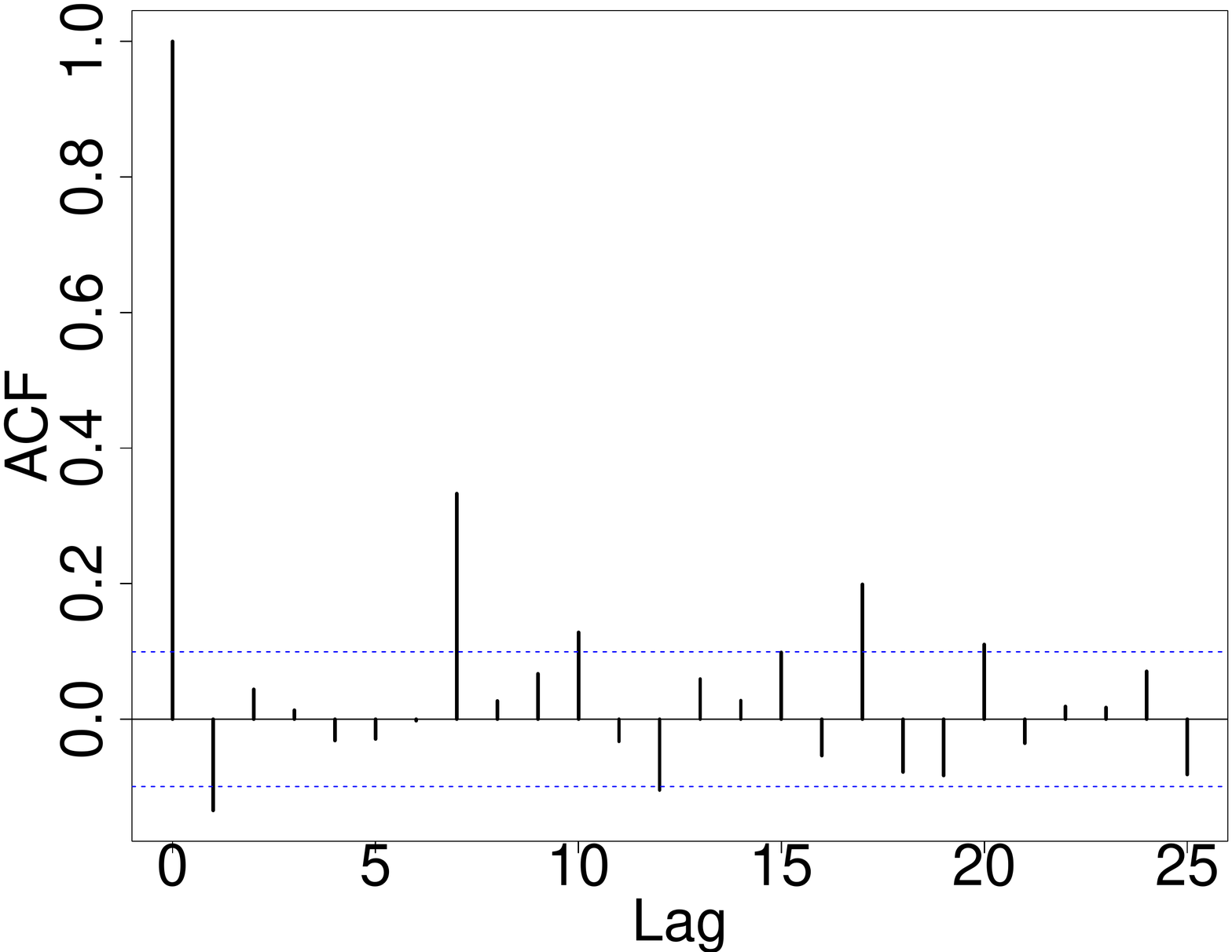}
         \subcaption{Riverside $\widehat{\epsilon}(t) (\Delta I)$}
     \end{subfigure}
     \begin{subfigure}[b]{0.19\textwidth}
         \centering
         \includegraphics[width=\textwidth]{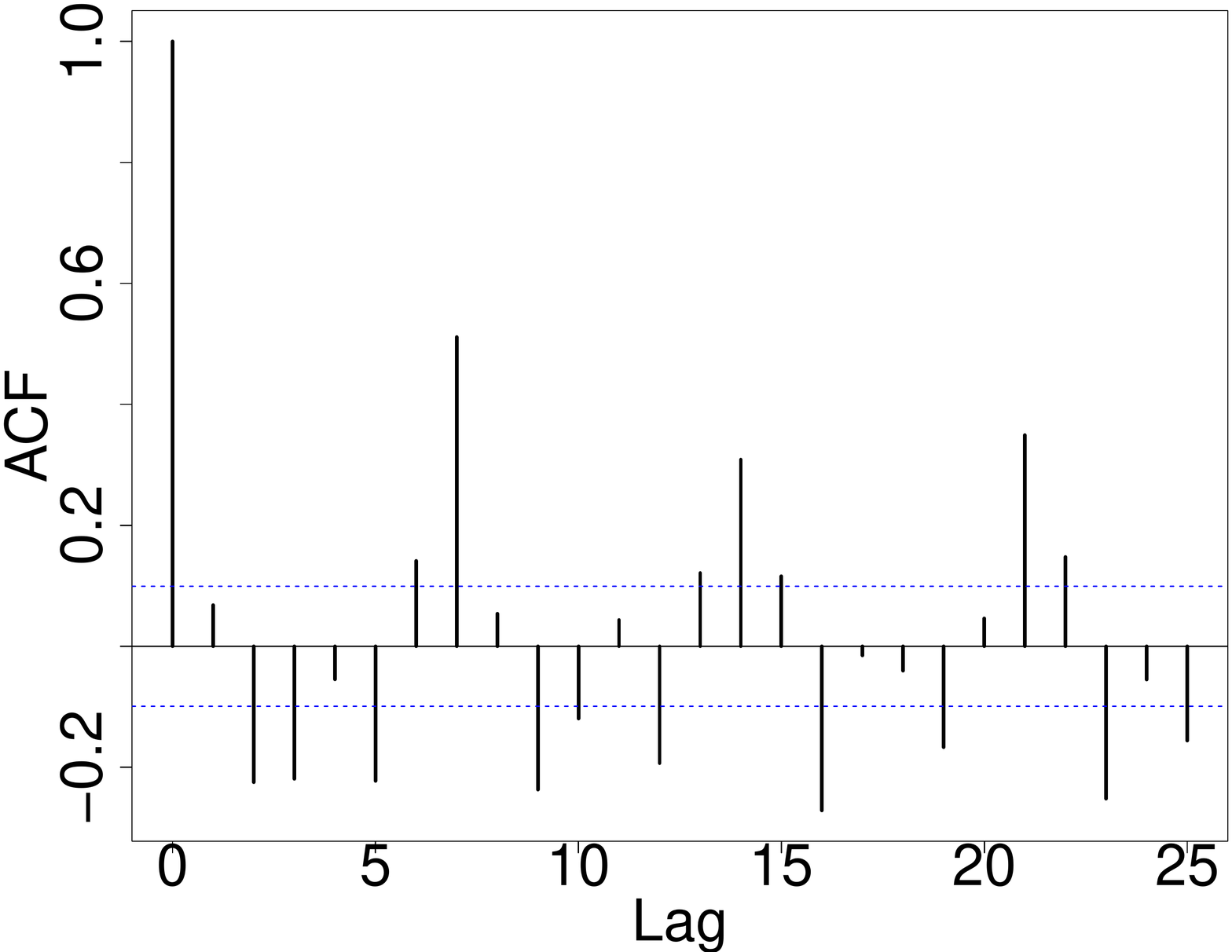}
         \subcaption{Riverside $\widehat{\epsilon}(t) (\Delta R)$}
     \end{subfigure}
     \begin{subfigure}[b]{0.19\textwidth}
         \centering
         \includegraphics[width=\textwidth]{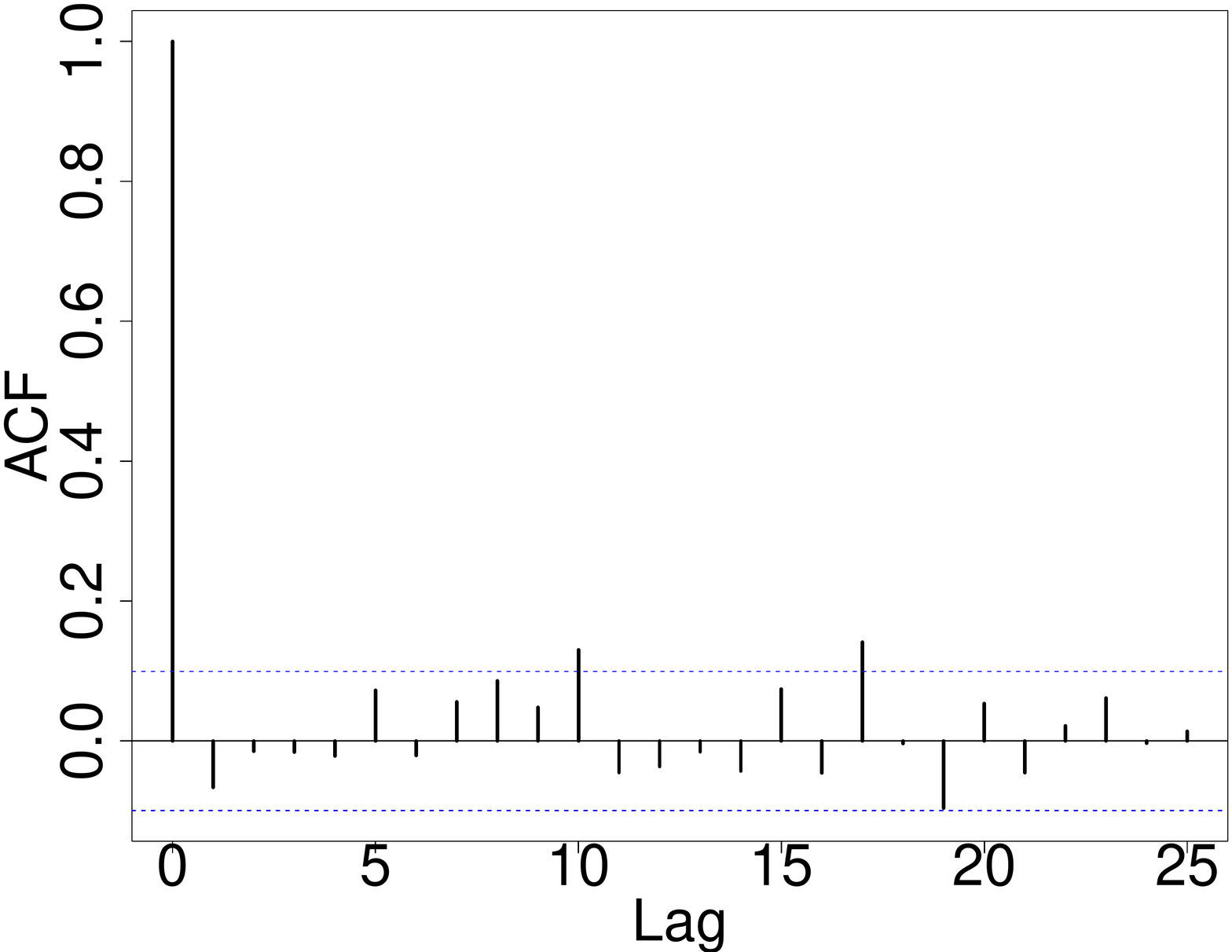}
         \subcaption{Riverside $\widetilde{\epsilon}(t) (\Delta I)$}
     \end{subfigure}
     \begin{subfigure}[b]{0.19\textwidth}
         \centering
         \includegraphics[width=\textwidth]{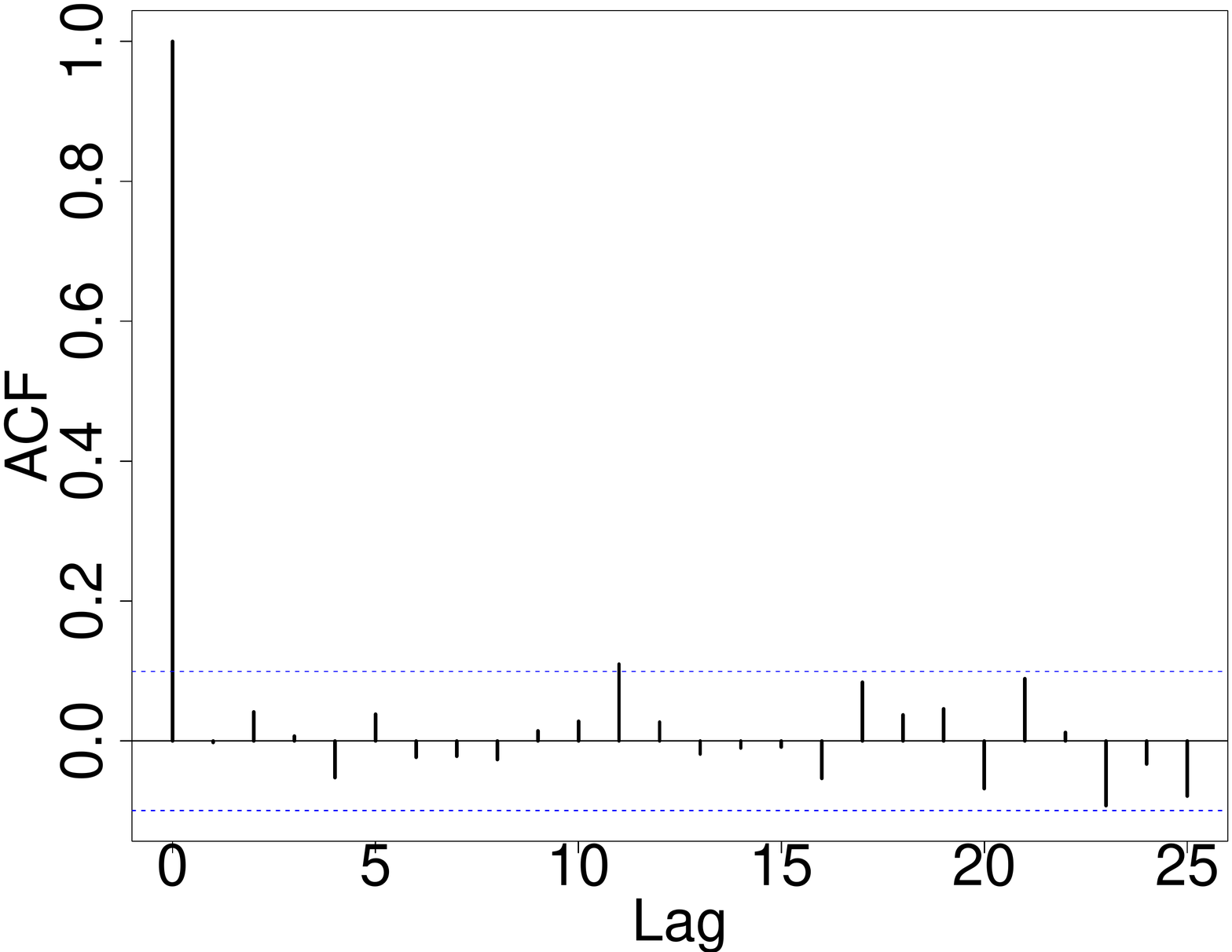}
         \subcaption{Riverside $\widetilde{\epsilon}(t) (\Delta R)$}
     \end{subfigure}
     
     \begin{subfigure}[b]{0.19\textwidth}
         \centering
         \includegraphics[width=\textwidth]{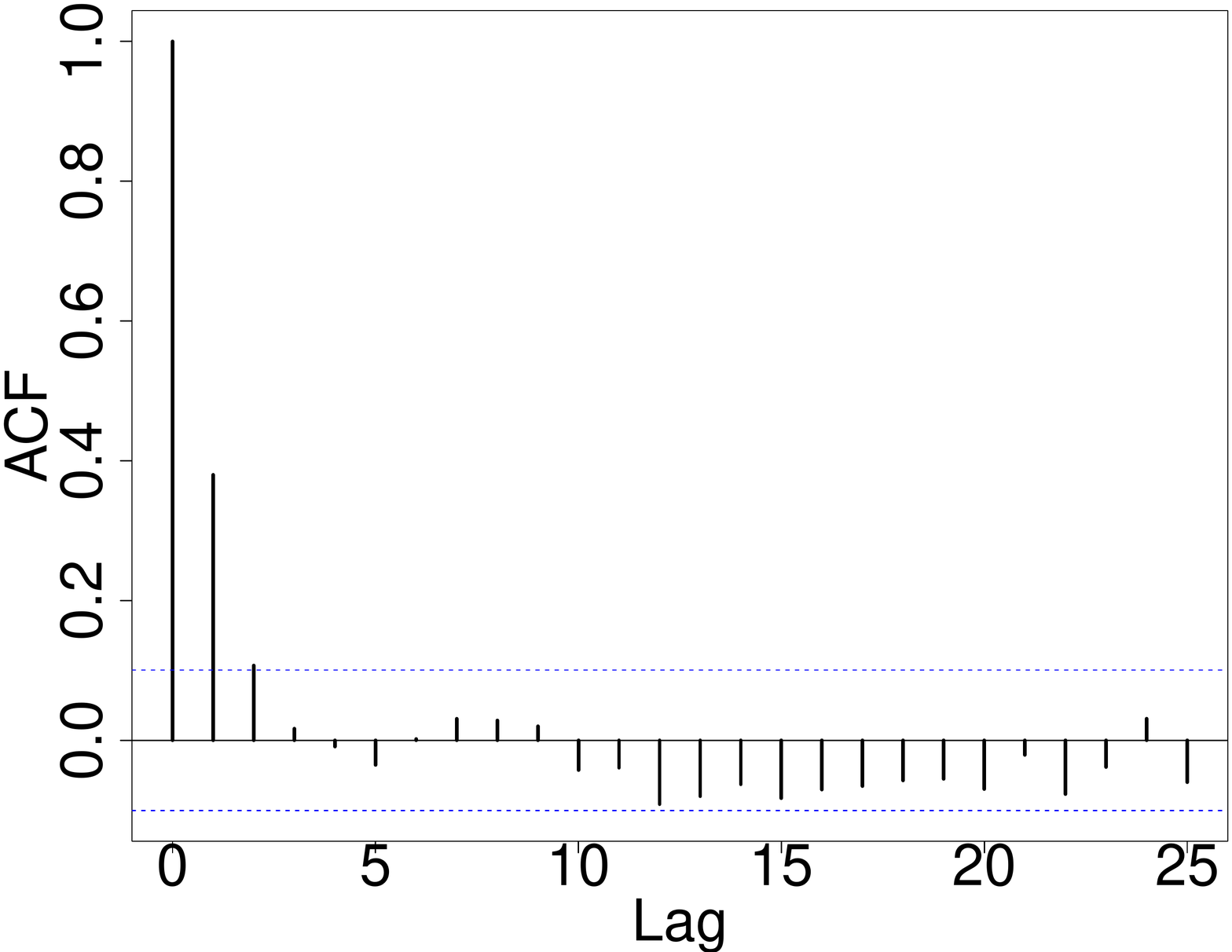}
         \subcaption{Santa Barbara  $ \widehat{\epsilon}(t)  (\Delta I)$}
     \end{subfigure}
     \begin{subfigure}[b]{0.19\textwidth}
         \centering
         \includegraphics[width=\textwidth]{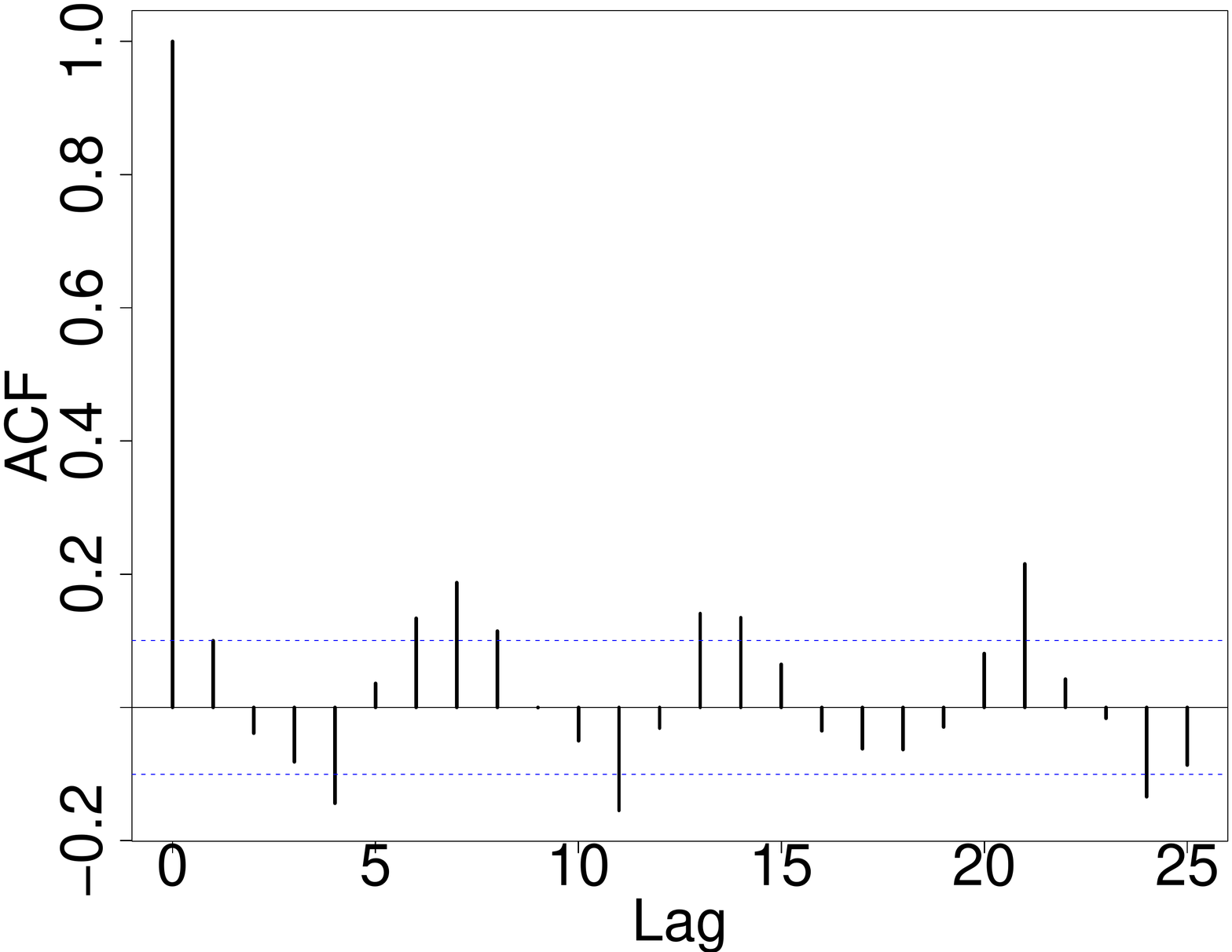}
         \subcaption{Santa Barbara $\widehat{\epsilon}(t) (\Delta R)$}
     \end{subfigure}
     \begin{subfigure}[b]{0.19\textwidth}
         \centering
         \includegraphics[width=\textwidth]{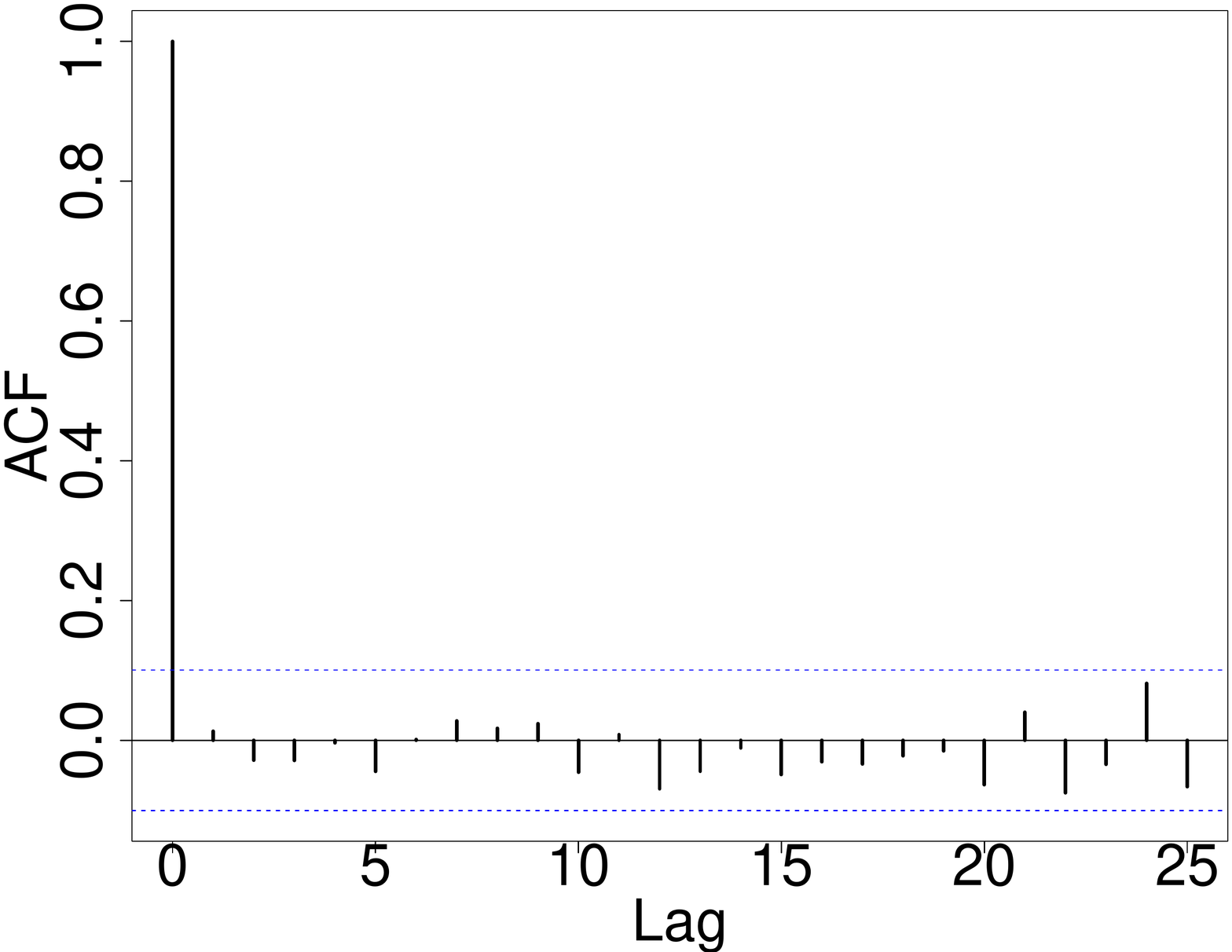}
         \subcaption{Santa Barbara $\widetilde{\epsilon}(t) (\Delta I)$}
     \end{subfigure}
     \begin{subfigure}[b]{0.19\textwidth}
         \centering
         \includegraphics[width=\textwidth]{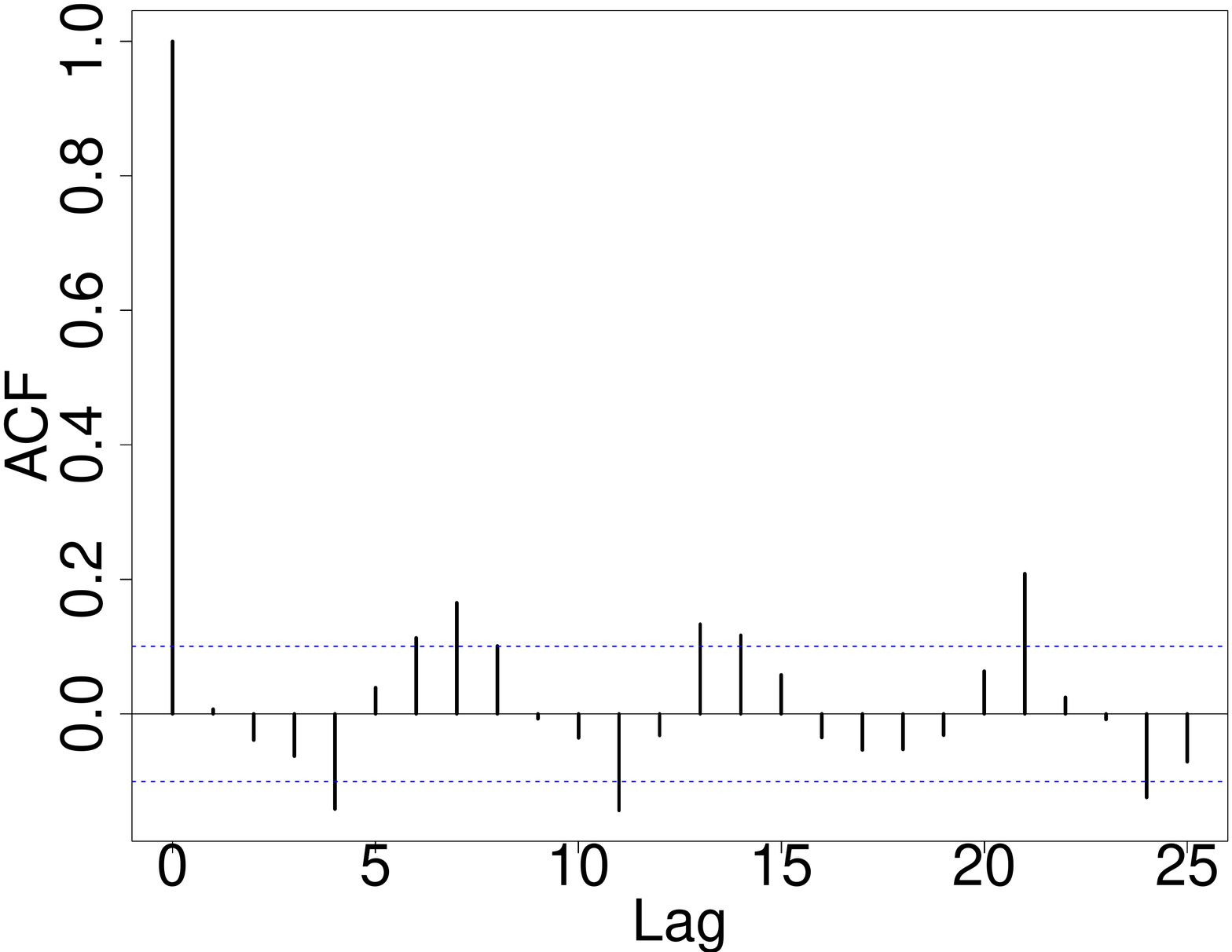}
         \subcaption{Santa Barbara $\widetilde{\epsilon}(t) (\Delta R)$}
     \end{subfigure}
        \caption{Auto-correlation plot of residuals in selected counties by  Model 2.3 (left: piecewise constant SIR model + spatial effect ) and Model 3 (right: piecewise constant SIR model + spatial effect + VAR($p$).}
        \label{fig:acf_county}
\end{figure*}

\begin{table}[!ht]
\caption{\label{table_MPE_county_out_refit}Out-of-sample mean relative prediction error (MRPE)   of $I(t)$ and $R(t)$.   }
\centering
{
\begin{tabular}{lcccccccccc} 
  \hline
  \hline
  & \multicolumn{2}{c}{NYC}& \multicolumn{2}{c}{King} & \multicolumn{2}{c}{Miami-Dade} & \multicolumn{2}{c}{Riverside} & \multicolumn{2}{c}{Santa Barbara} \\
    &I &R  &I &R & I & R& I & R & I & R  \\
  \hline
Model 1 & 8e-04 & 0.0013 & 0.0024 & 0.0026 & 0.0012 & 0.0022 & 0.0056 & 0.0036 & 0.0037 & 0.0071 \\ 
  Model 2.1 & 8e-04 & 0.0012 & 0.0027 & 0.0022 & 0.0011 & 0.0011 &  0.0043 & 0.0028 & 0.0016 & 0.0039 \\ 
  Model 2.2 & 8e-04 & 0.0013 & 0.0028 & 0.0021 & 0.0011 & 0.0011 & 0.0045 & 0.0027 & 0.0014 & 0.0032 \\ 
  Model 2.3 & 7e-04 & 8e-04 & 0.0027 & 0.0018 & 9e-04 & 8e-04 &  0.0014 & 0.0025 & 0.0014 & 0.0028 \\ 
  Model 2.4 & 7e-04 & 5e-04 & 0.0024 & 0.0011 & 0.001 & 0.001 &  0.0013 & 0.0027 & 0.0017 & 0.0033 \\ 
  Model 3 & 7e-04 & 5e-04 & 0.0027 & 0.0017 & 9e-04 & 8e-04 & 0.0012 & 0.0016 & 0.0014 & 0.0028 \\ 
  \hline
\end{tabular}}
\end{table}

\begin{table}[!ht]
\caption{\label{table_MPE_county_out_refit_2}Out-of-sample mean relative prediction error (MRPE)   of $I(t)$ and $R(t)$.   }
\centering
{
\begin{tabular}{llcccccccccccccc} 
  \hline
  \hline
  & \multicolumn{2}{c}{Charleston}& \multicolumn{2}{c}{Greenville} & \multicolumn{2}{c}{Richland} & \multicolumn{2}{c}{Horry} \\
    & I & R  & I & R  & I & R &I &R    \\
  \hline
 Model 1 & 0.0032 & 0.004 & 0.0034 & 0.0048 & 0.0039 & 0.0033 & 0.004 & 0.0037  \\ 
  Model 2.1 & 0.0013 & 0.0024 & 0.0015 & 0.0019 & 0.0033 & 0.0025 & 0.003 & 0.0027  \\ 
  Model 2.2 & 0.0012 & 0.0024 & 0.0015 & 0.0018 & 0.0033 & 0.0025 & 0.0032 & 0.0028  \\ 
  Model 2.3 & 8e-04 & 0.0023 & 0.0012 & 0.0018 & 0.0021 & 0.0021 & 0.002 & 0.0021 \\ 
  Model 2.4 & 8e-04 & 0.0024 & 0.0013 & 0.0017 & 0.0021 & 0.0018 & 0.0018 & 0.0018  \\ 
  Model 3 & 8e-04 & 0.0023 & 0.0012 & 0.0017 & 0.0021 & 0.0021 & 0.0019 & 0.0021 \\ 
  \hline
\end{tabular}}
\end{table}

We also provide the out-of-sample MRPE of $I(t)$ and $R(t)$  in  \Cref{table_MPE_county_out_refit,table_MPE_county_out_refit_2}. 
The MRPE of $I(t)$ results show that the piecewise constant model (Model 1) performs the best in King County and R, while the piecewise constant model with spatial effect (Model 2) performs the best in New York City, Charleston, Greenville, Richland and Horry County. Adding the VAR($p$) (Model 3) performs the best in  Miami-Dade County, Riverside County and Santa Barbara County.
 The MRPE of $R(t)$ results show that the piecewise constant model with spatial effect (Model 2) performs the best in New York City,  King County, Greenville, Richland and Horry County, while adding the VAR($p$) (Model 3) performs the best in Miami-Dade County, Riverside County and Santa Barbara County and Charleston.

\begin{figure*}[ht!]
     \centering
          \captionsetup[sub]{font=scriptsize, labelfont={bf,sf}}
     \begin{subfigure}[b]{0.16\textwidth}
         \centering
         \includegraphics[width=\textwidth]{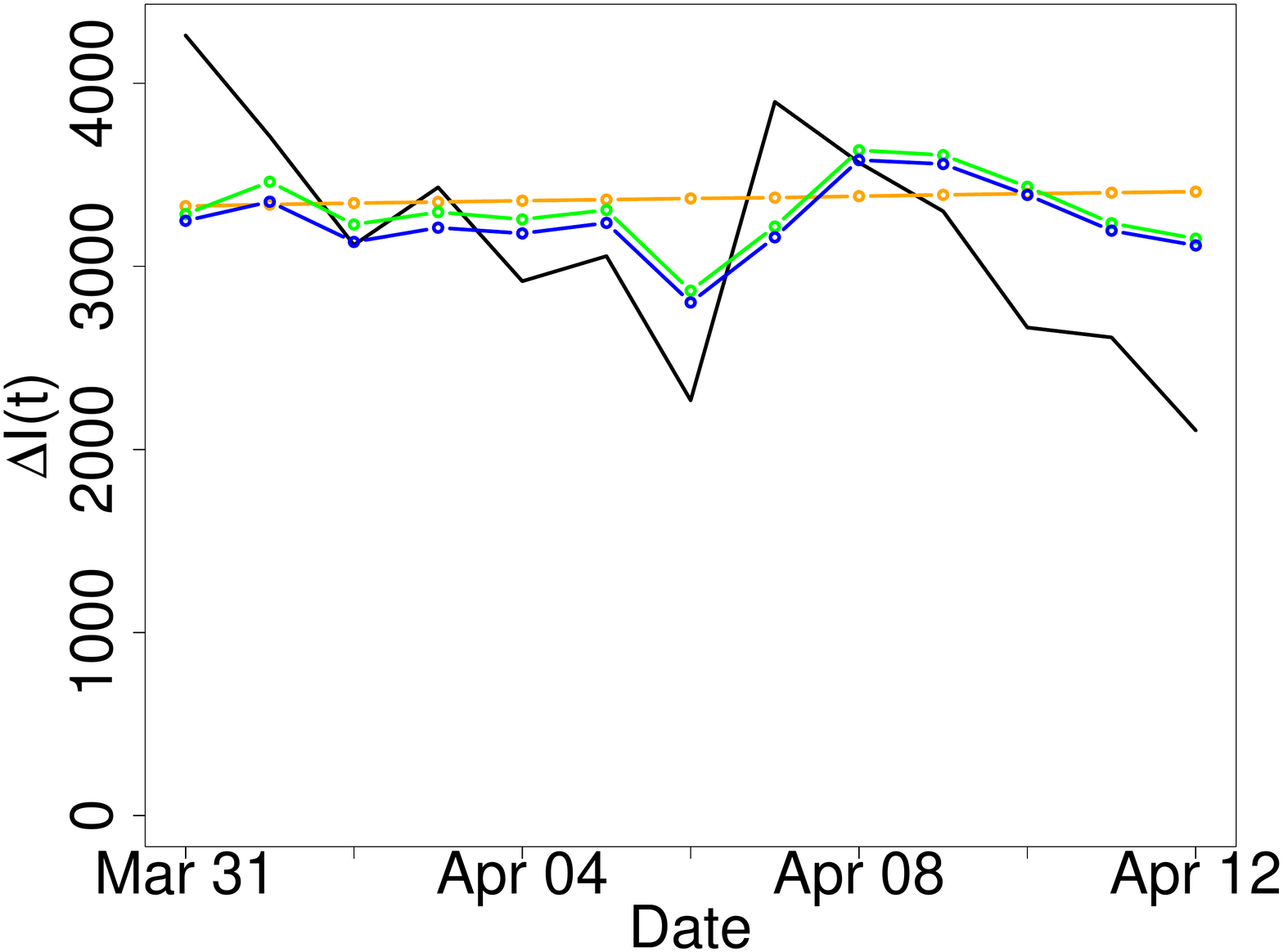}
         \subcaption{NYC $\widehat{\Delta I}(t)$}
     \end{subfigure}
     \begin{subfigure}[b]{0.16\textwidth}
         \centering
         \includegraphics[width=\textwidth]{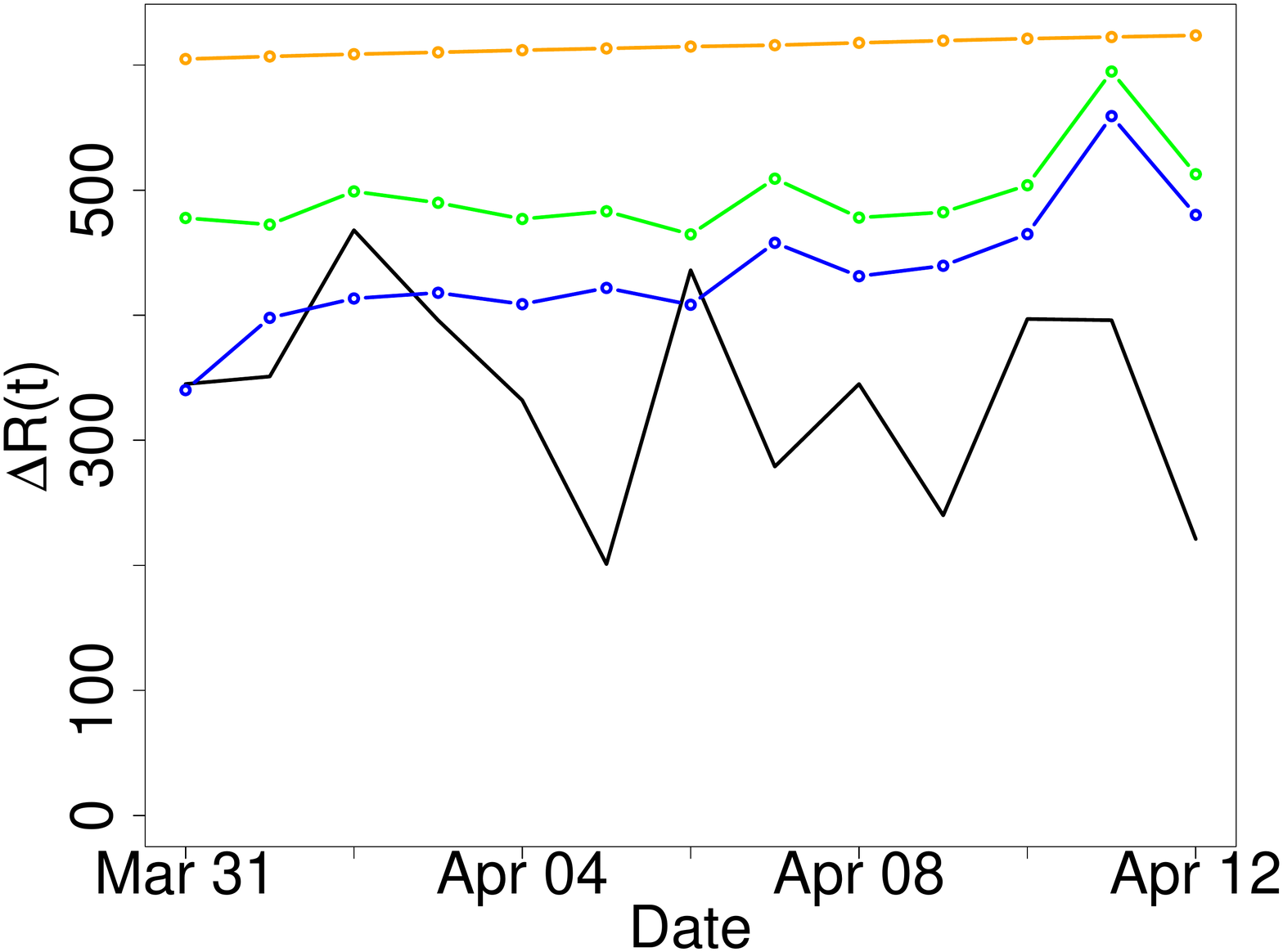}
         \subcaption{NYC $\widehat{\Delta R}(t)$}
     \end{subfigure}
     \begin{subfigure}[b]{0.16\textwidth}
         \centering
         \includegraphics[width=\textwidth]{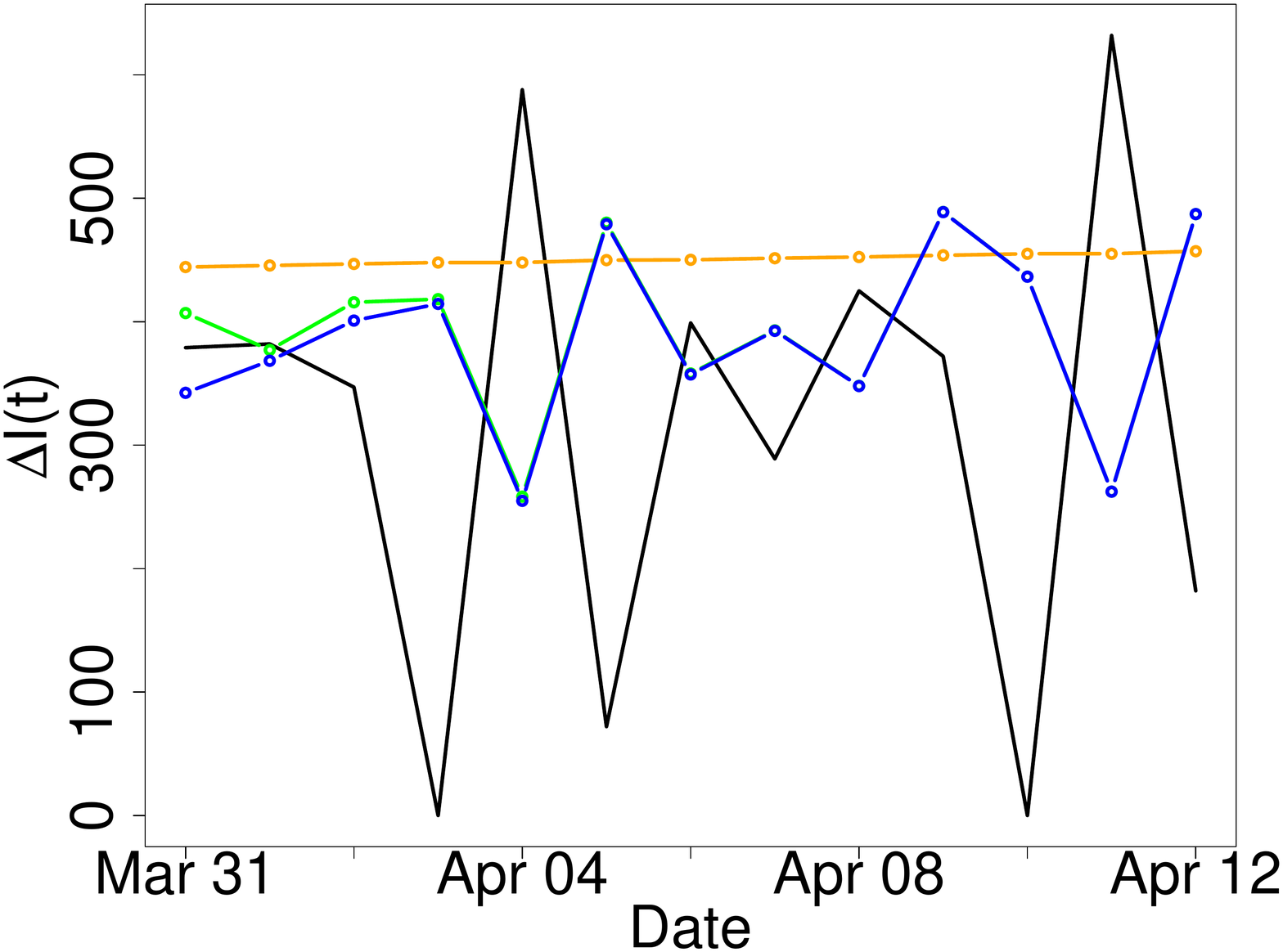}
         \subcaption{King $\widehat{\Delta I}(t)$}
     \end{subfigure}
     \begin{subfigure}[b]{0.16\textwidth}
         \centering
         \includegraphics[width=\textwidth]{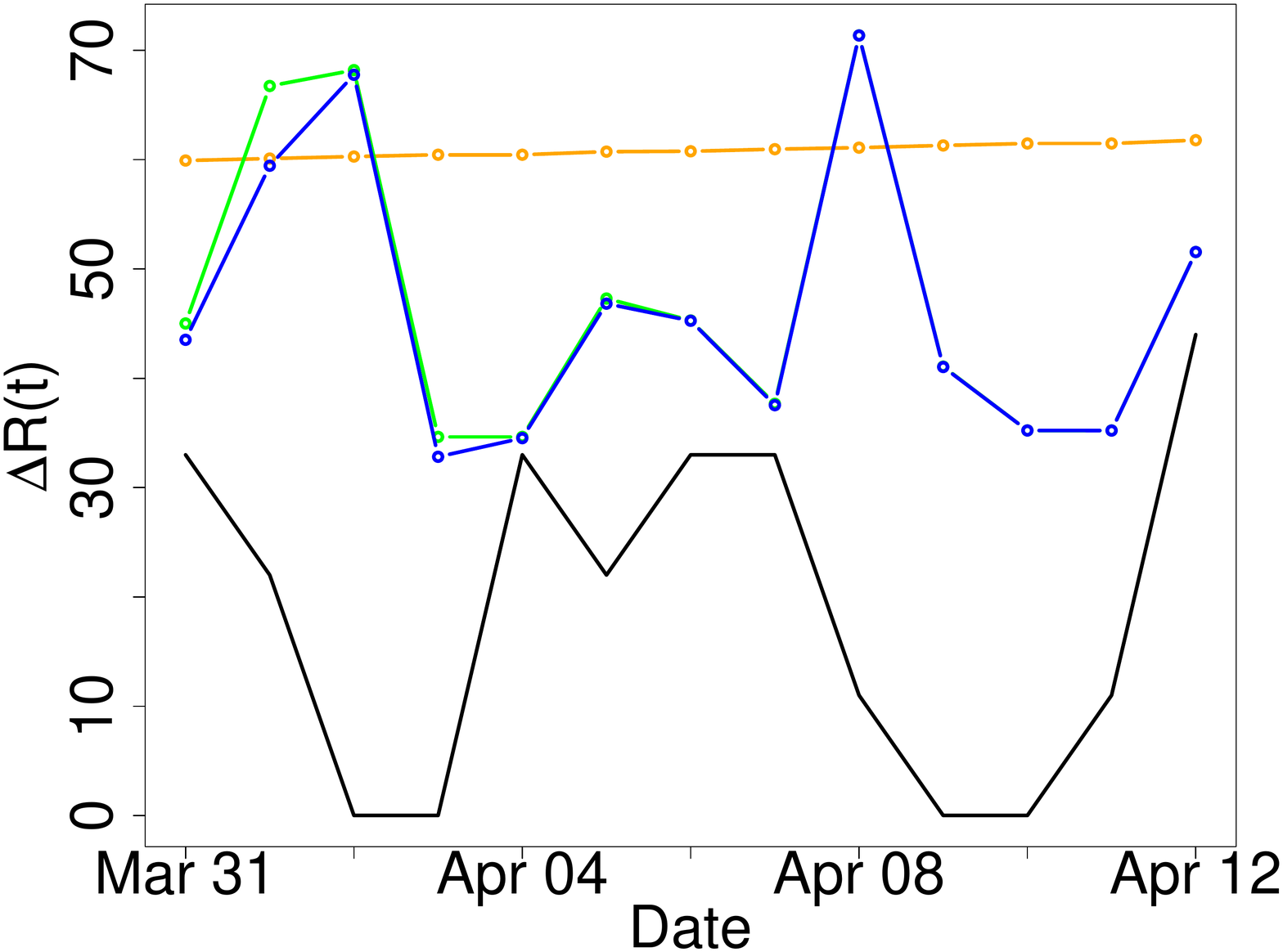}
         \subcaption{King $\widehat{\Delta R}(t)$}
     \end{subfigure}
     \begin{subfigure}[b]{0.16\textwidth}
         \centering
         \includegraphics[width=\textwidth]{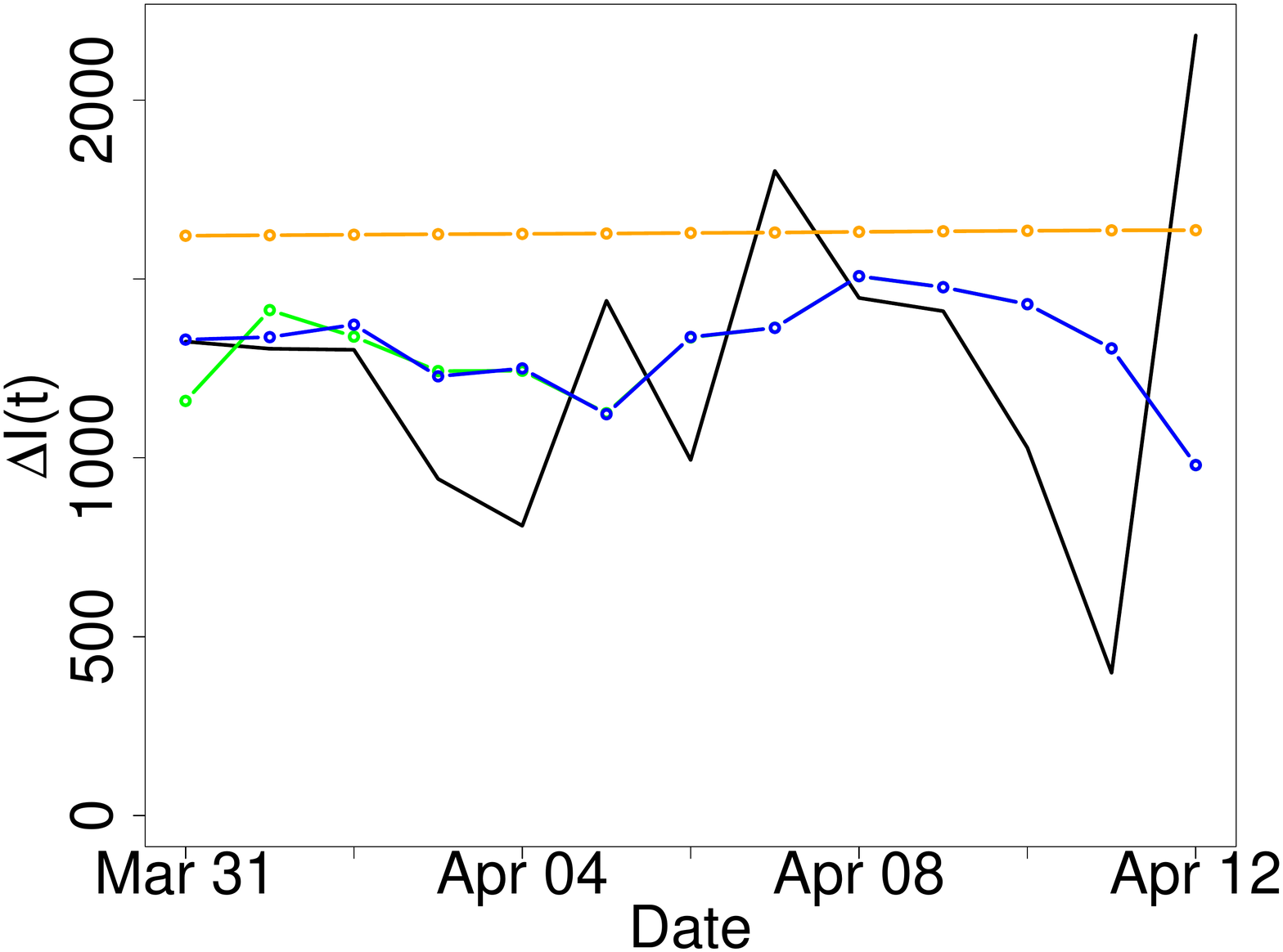}
         \subcaption{Miami $\widehat{\Delta I}(t)$}
     \end{subfigure}
     \begin{subfigure}[b]{0.16\textwidth}
         \centering
         \includegraphics[width=\textwidth]{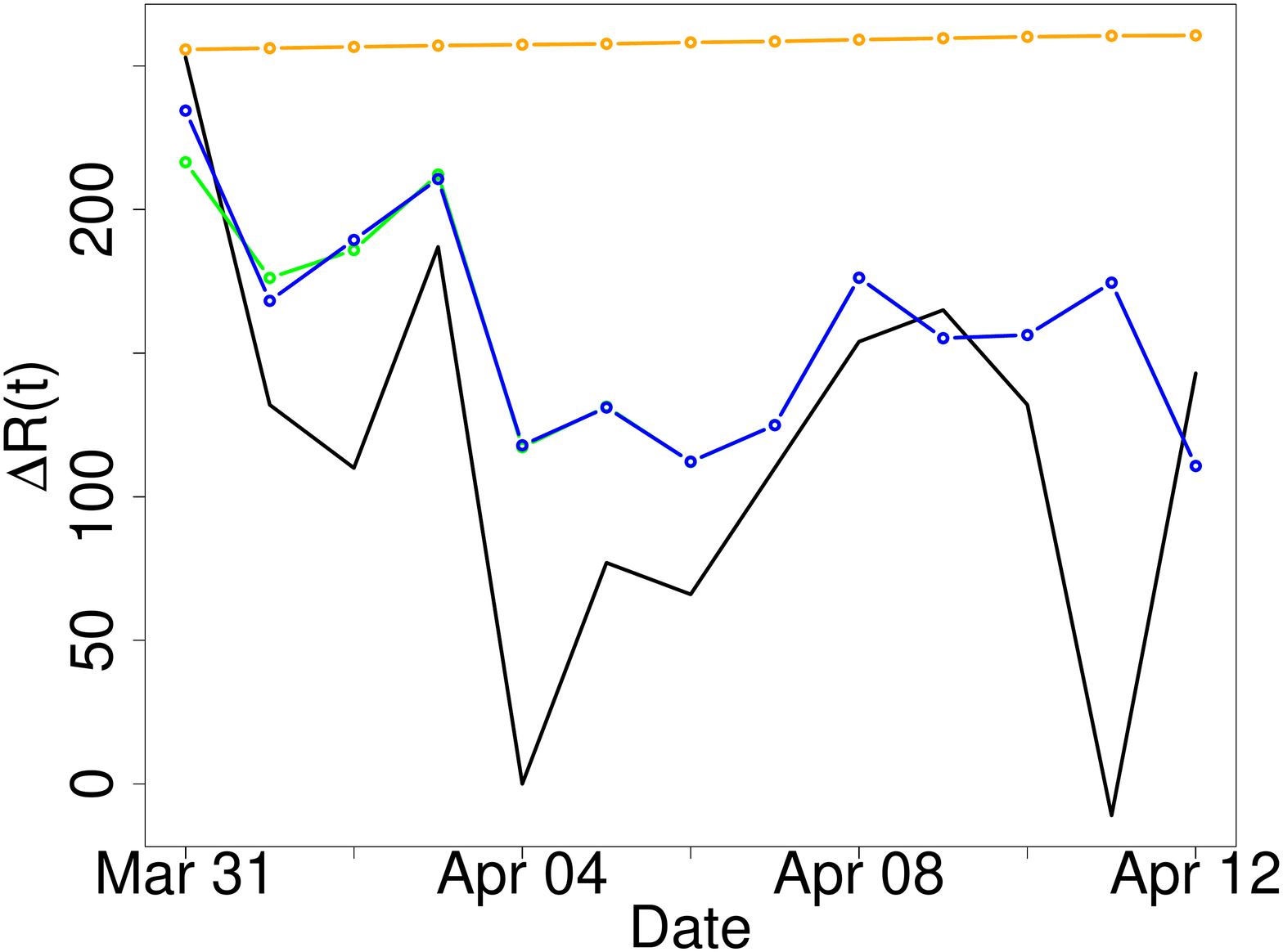}
         \subcaption{Miami $\widehat{\Delta R}(t)$}
     \end{subfigure}
     \begin{subfigure}[b]{0.16\textwidth}
         \centering
         \includegraphics[width=\textwidth]{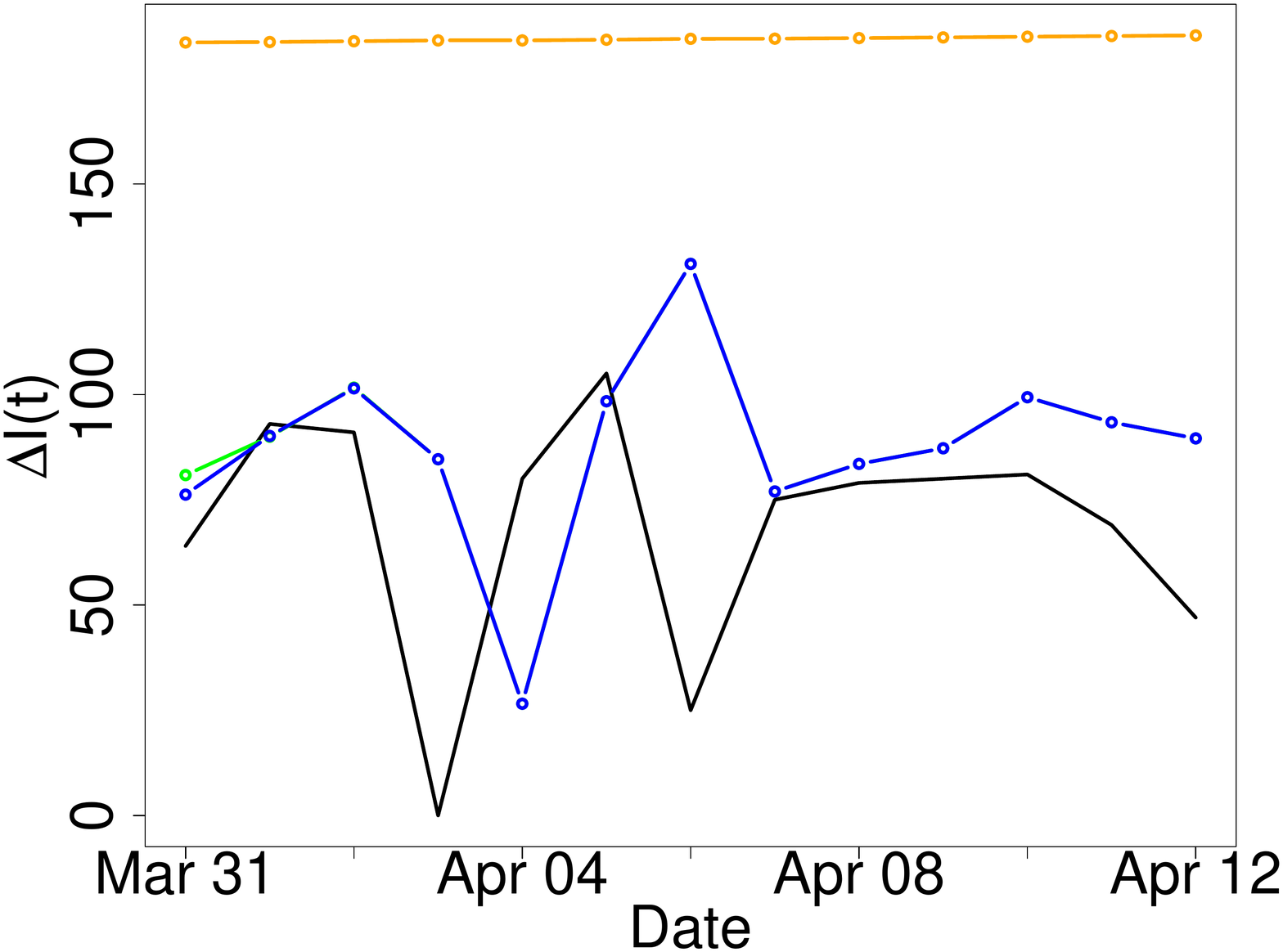}
         \subcaption{Charleston $\widehat{\Delta I}(t)$}
     \end{subfigure}
     \begin{subfigure}[b]{0.16\textwidth}
         \centering
         \includegraphics[width=\textwidth]{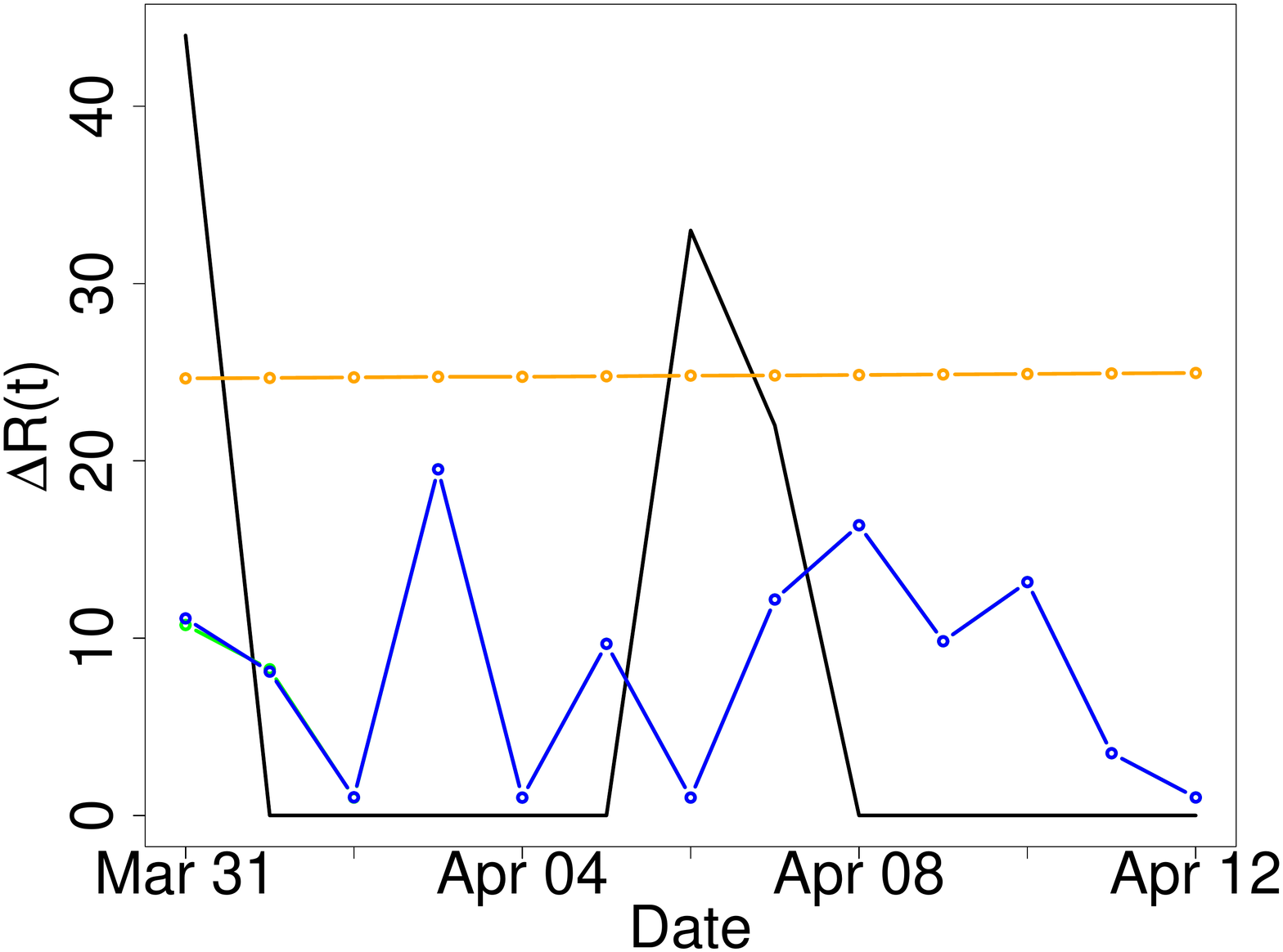}
         \subcaption{Charleston $\widehat{\Delta R}(t)$}
     \end{subfigure}
     \begin{subfigure}[b]{0.16\textwidth}
         \centering
         \includegraphics[width=\textwidth]{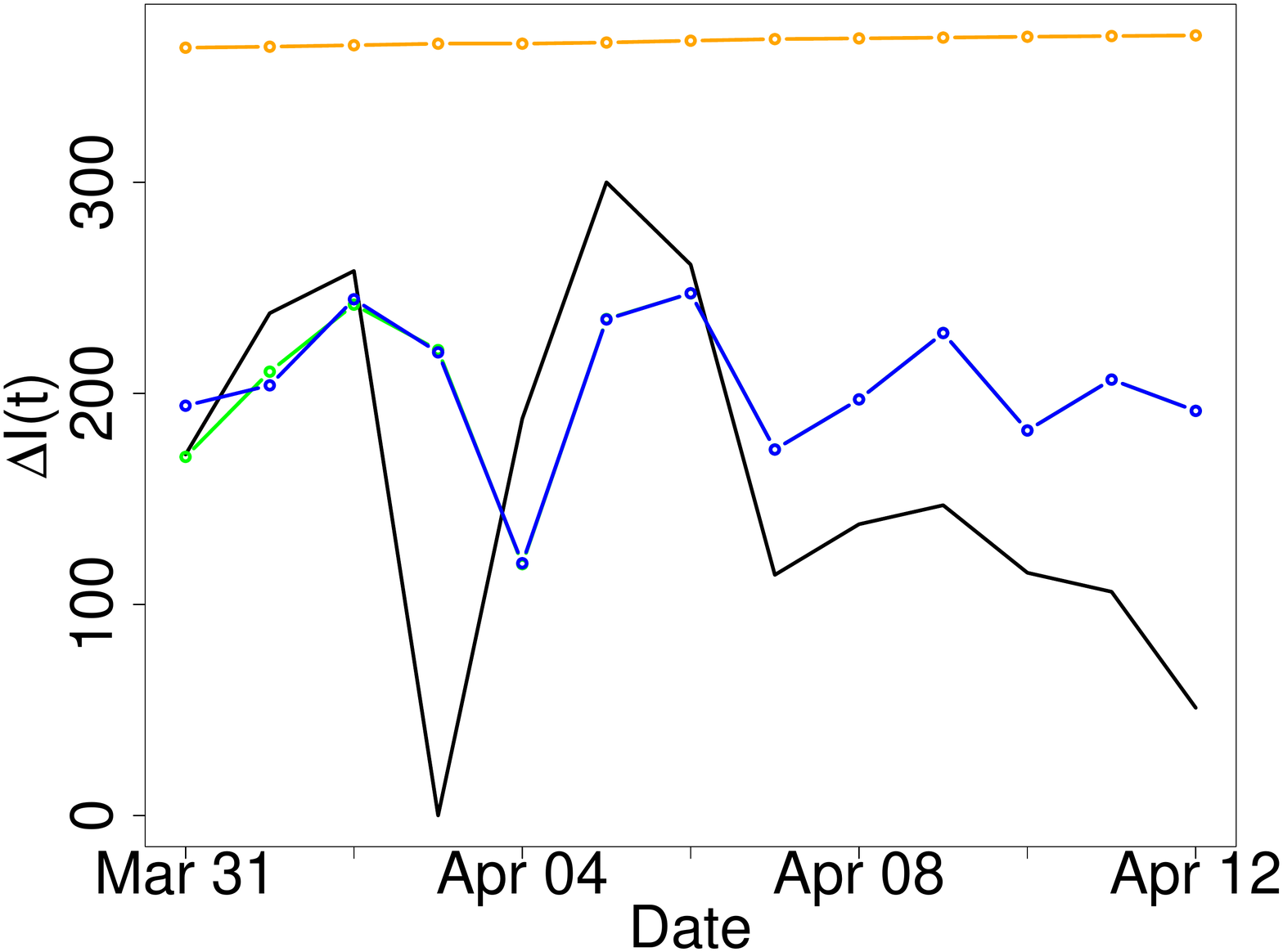}
         \subcaption{Greenville $\widehat{\Delta I}(t)$}
     \end{subfigure}
     \begin{subfigure}[b]{0.16\textwidth}
         \centering
         \includegraphics[width=\textwidth]{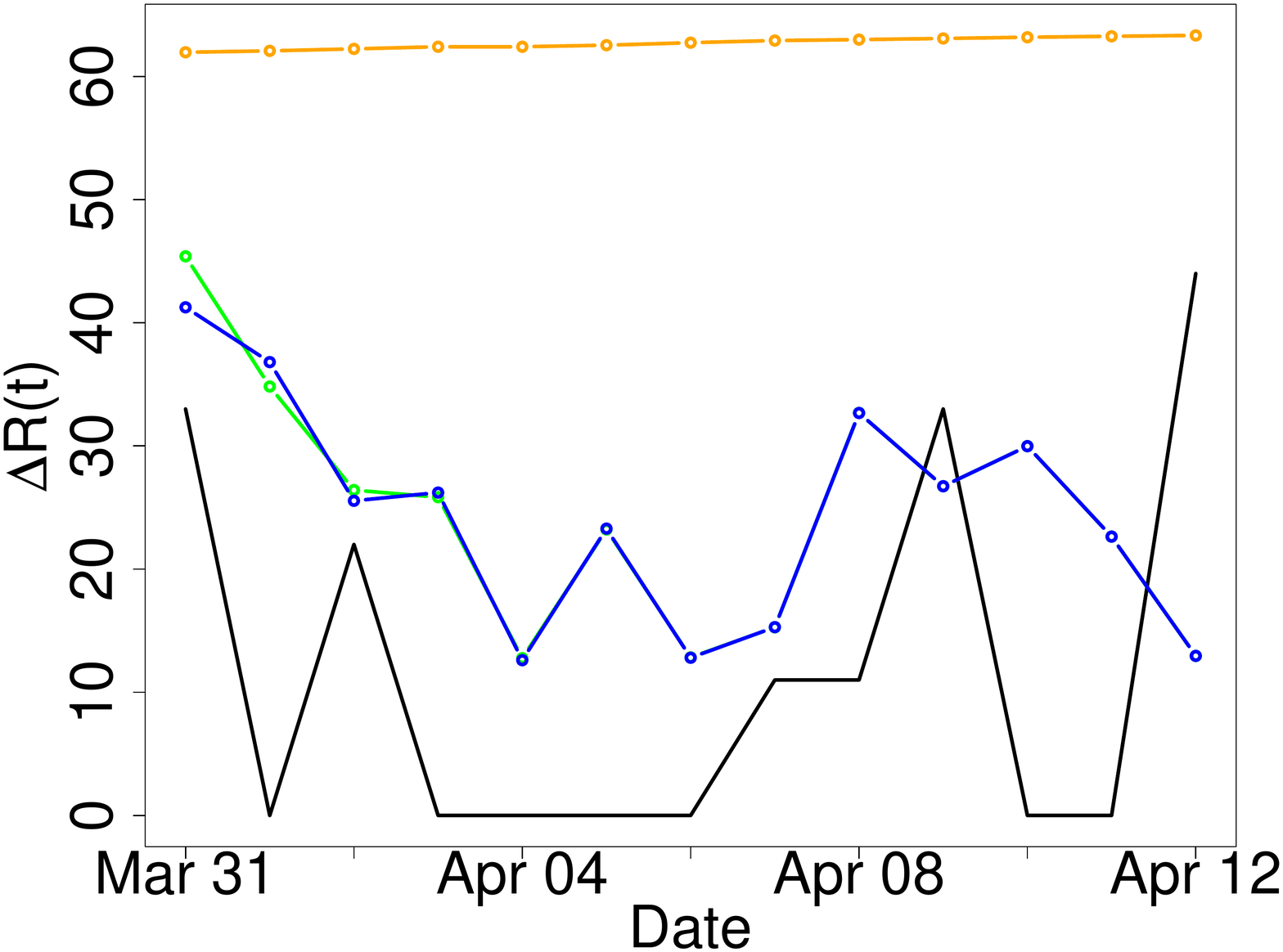}
         \subcaption{Greenville $\widehat{\Delta R}(t)$}
     \end{subfigure}
     \begin{subfigure}[b]{0.16\textwidth}
         \centering
         \includegraphics[width=\textwidth]{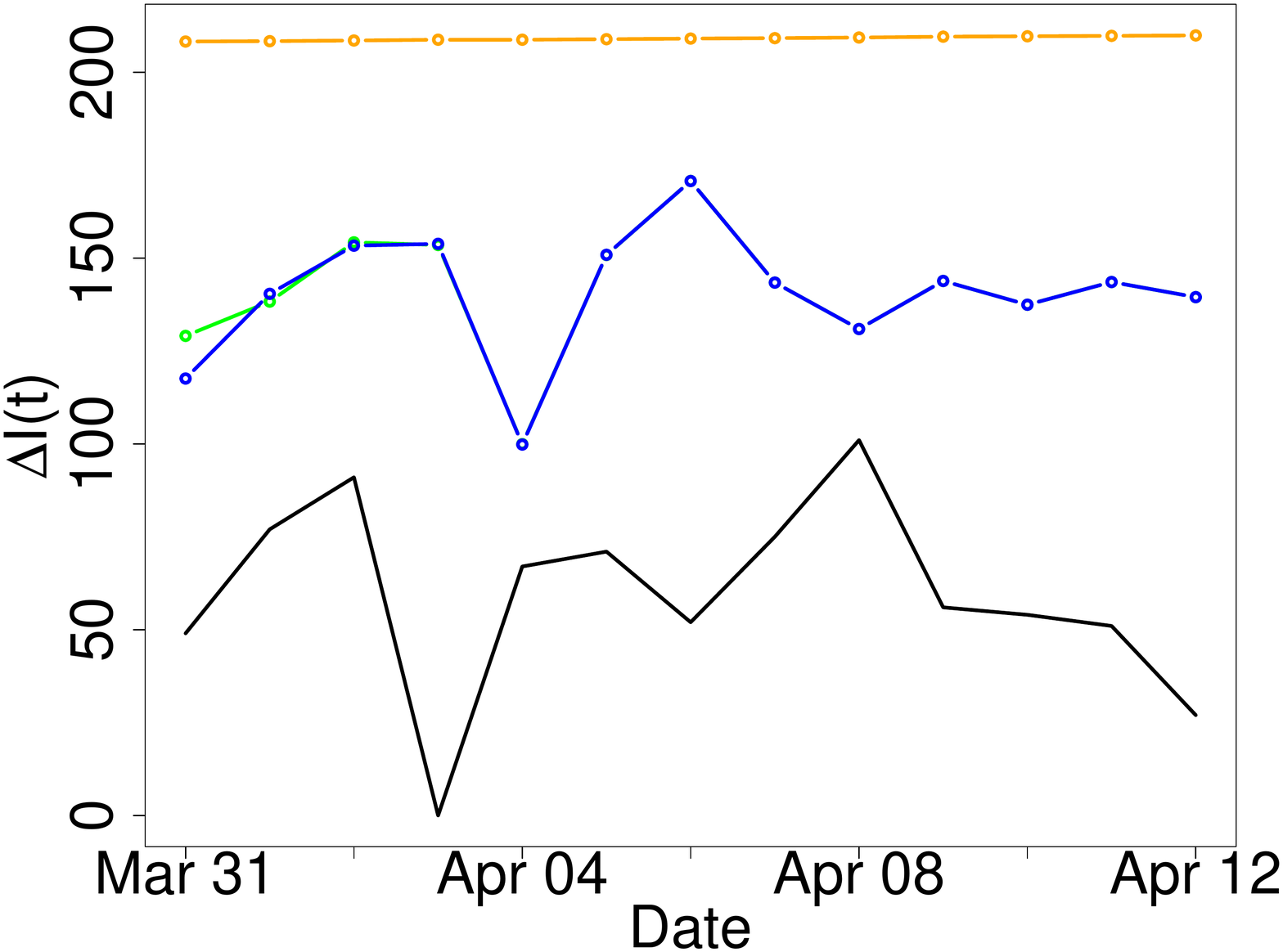}
         \subcaption{Richland $\widehat{\Delta I}(t)$}
     \end{subfigure}
     \begin{subfigure}[b]{0.16\textwidth}
         \centering
         \includegraphics[width=\textwidth]{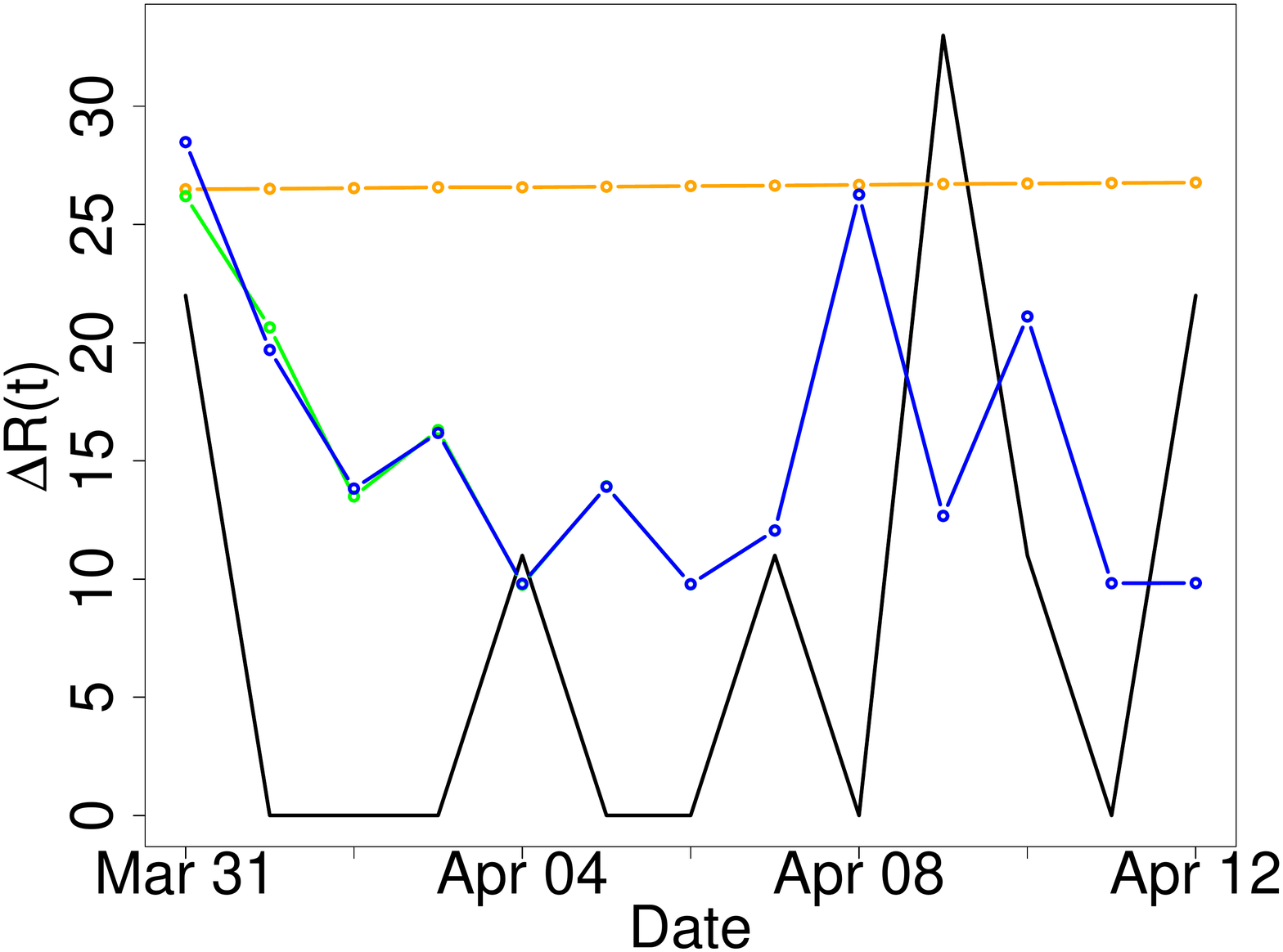}
         \subcaption{Richland $\widehat{\Delta R}(t)$}
     \end{subfigure}
     \begin{subfigure}[b]{0.16\textwidth}
         \centering
         \includegraphics[width=\textwidth]{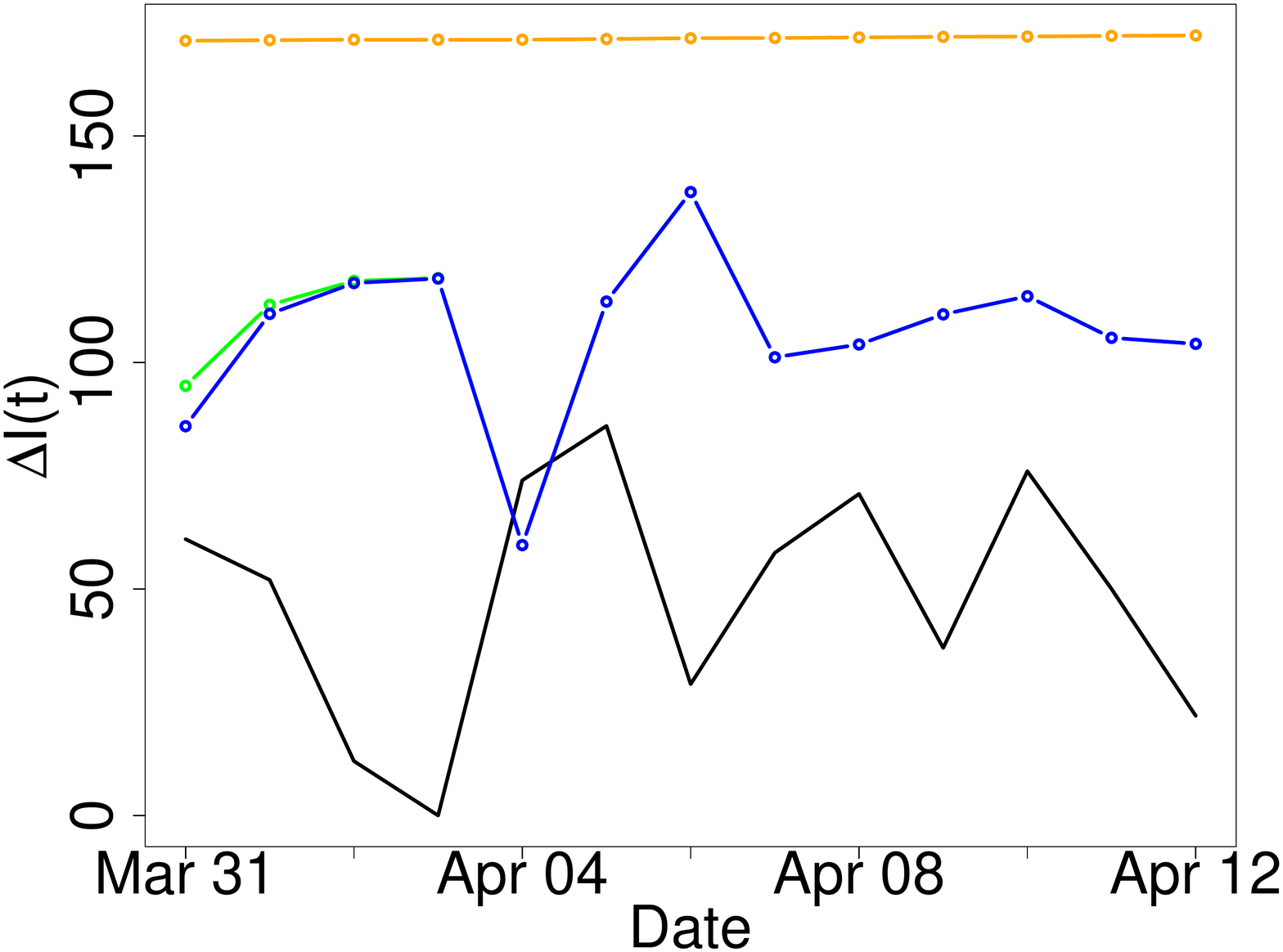}
         \subcaption{Horry $\widehat{\Delta I}(t)$}
     \end{subfigure}
     \begin{subfigure}[b]{0.16\textwidth}
         \centering
         \includegraphics[width=\textwidth]{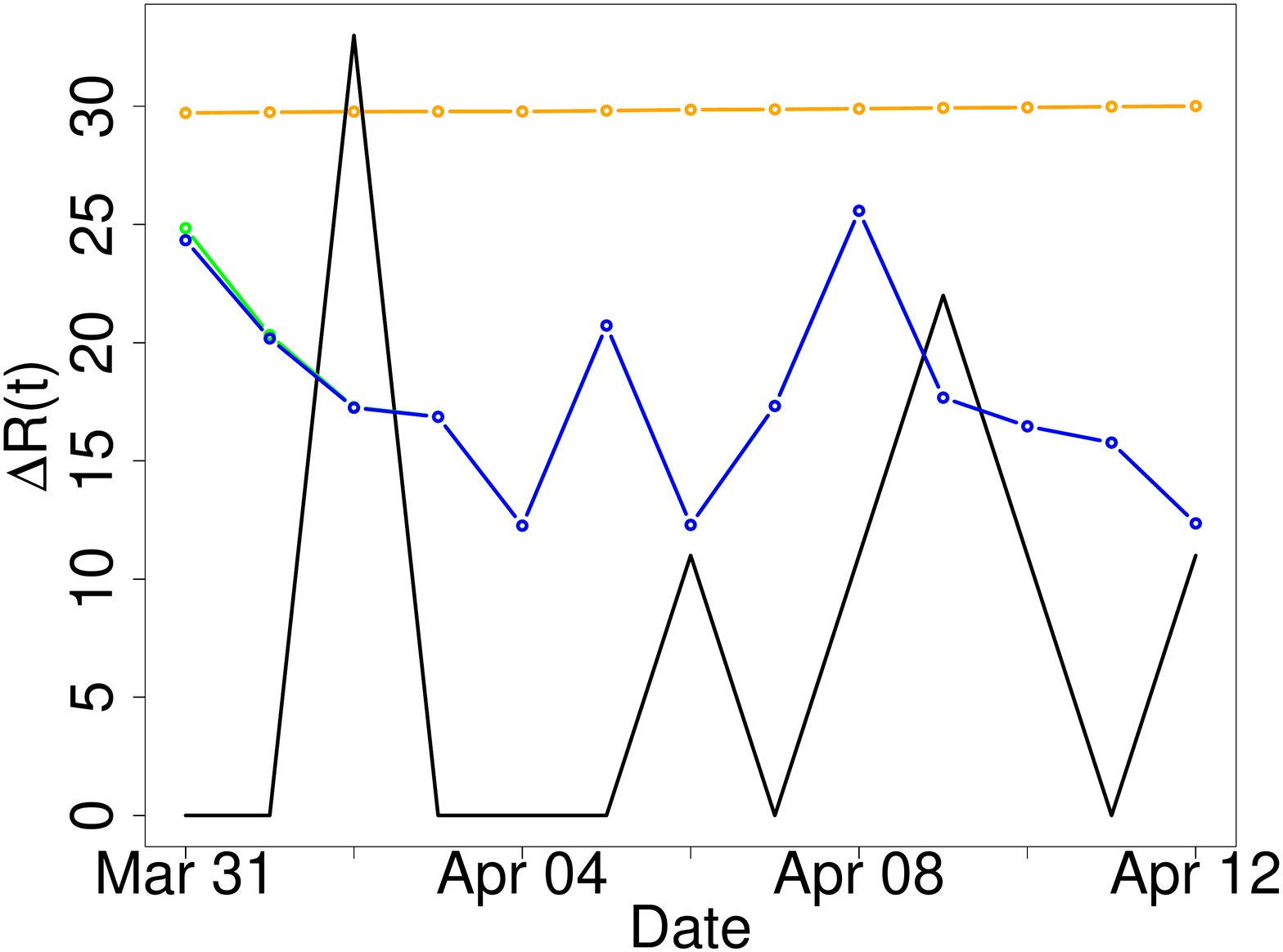}
         \subcaption{Horry $\widehat{\Delta R}(t)$}
     \end{subfigure}
     \begin{subfigure}[b]{0.16\textwidth}
         \centering
         \includegraphics[width=\textwidth]{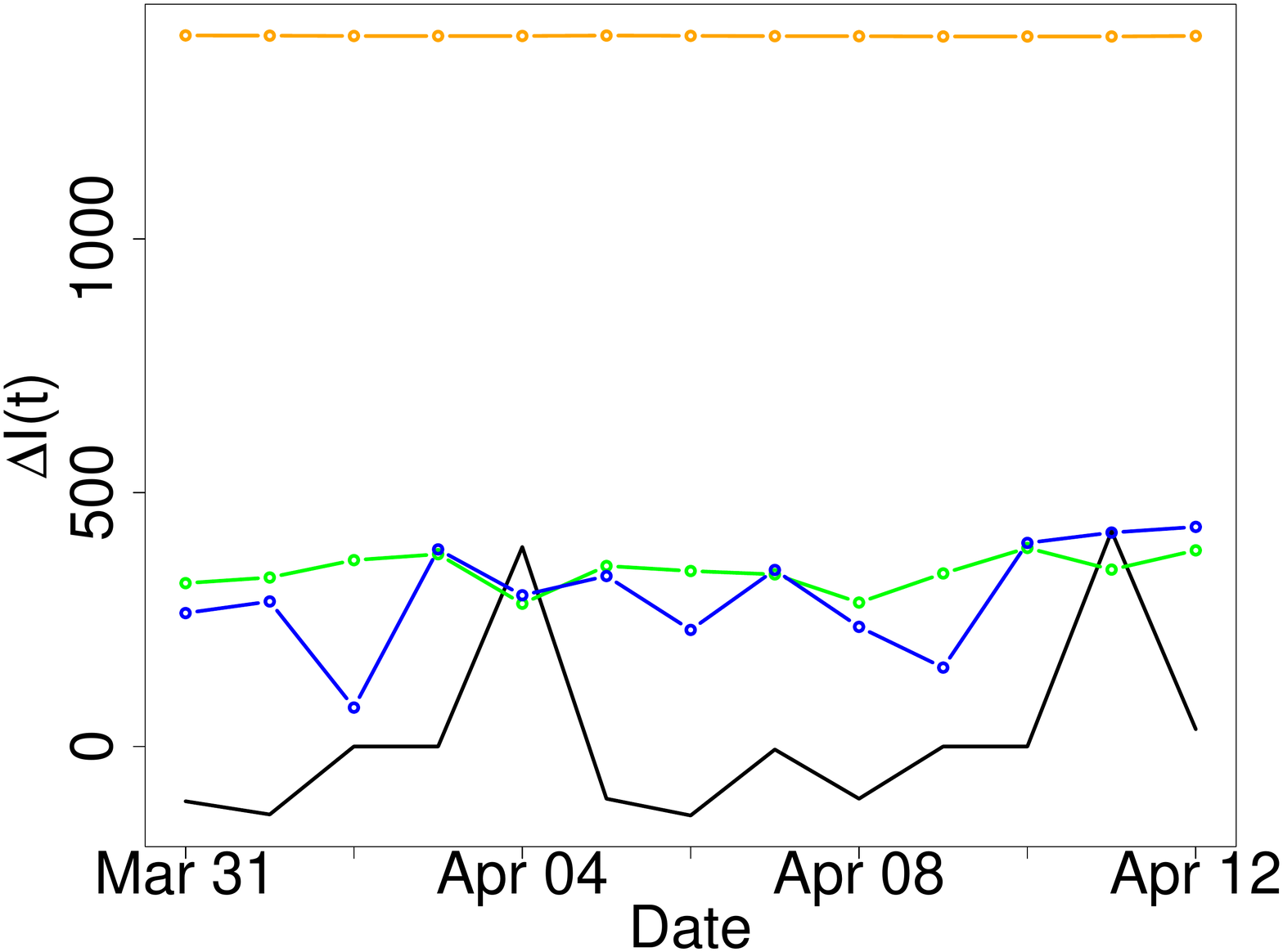}
         \subcaption{Riverside $\widehat{\Delta I}(t)$}
     \end{subfigure}
     \begin{subfigure}[b]{0.16\textwidth}
         \centering
         \includegraphics[width=\textwidth]{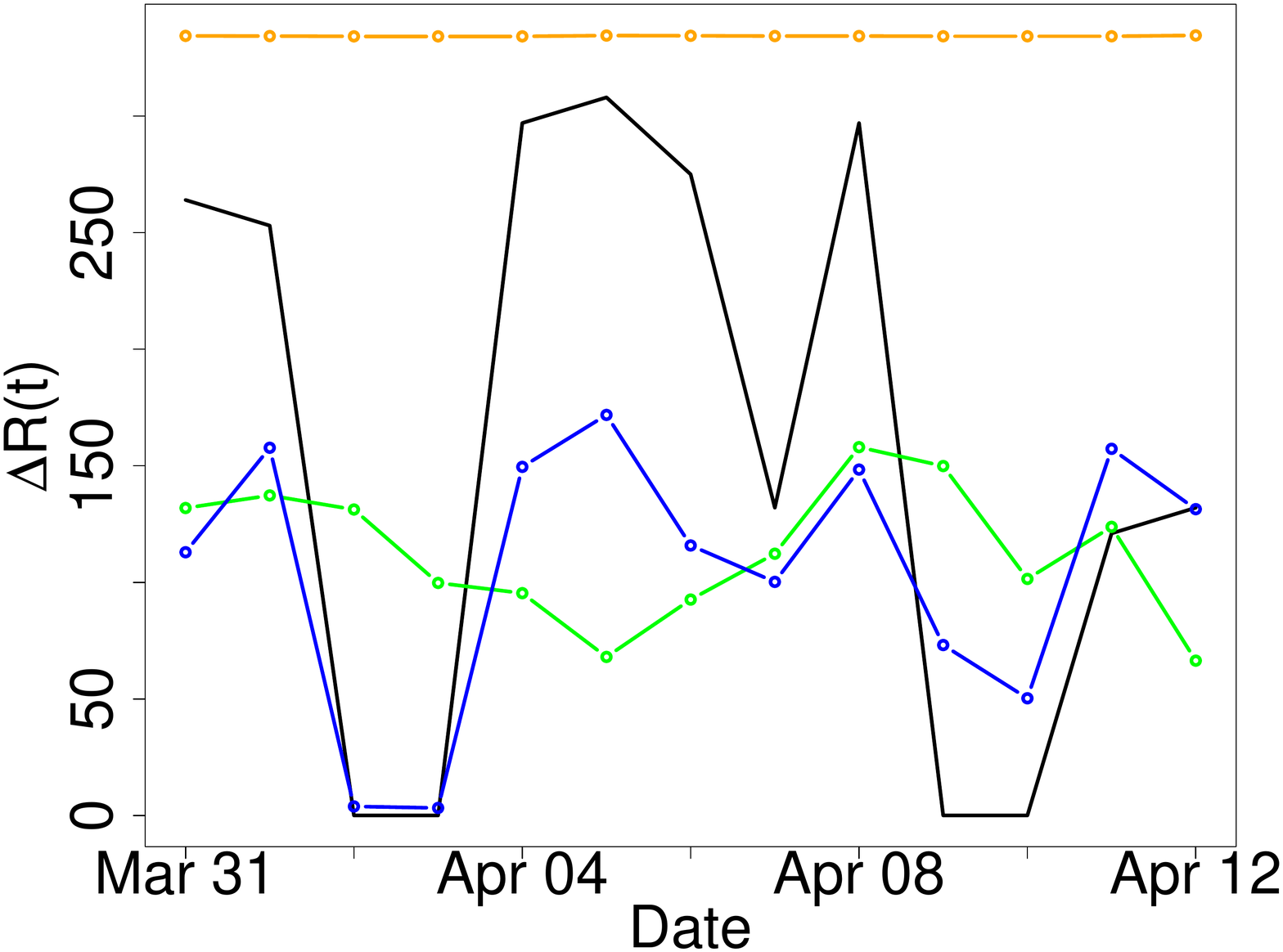}
         \subcaption{Riverside $\widehat{\Delta R}(t)$}
     \end{subfigure}
     \begin{subfigure}[b]{0.16\textwidth}
         \centering
         \includegraphics[width=\textwidth]{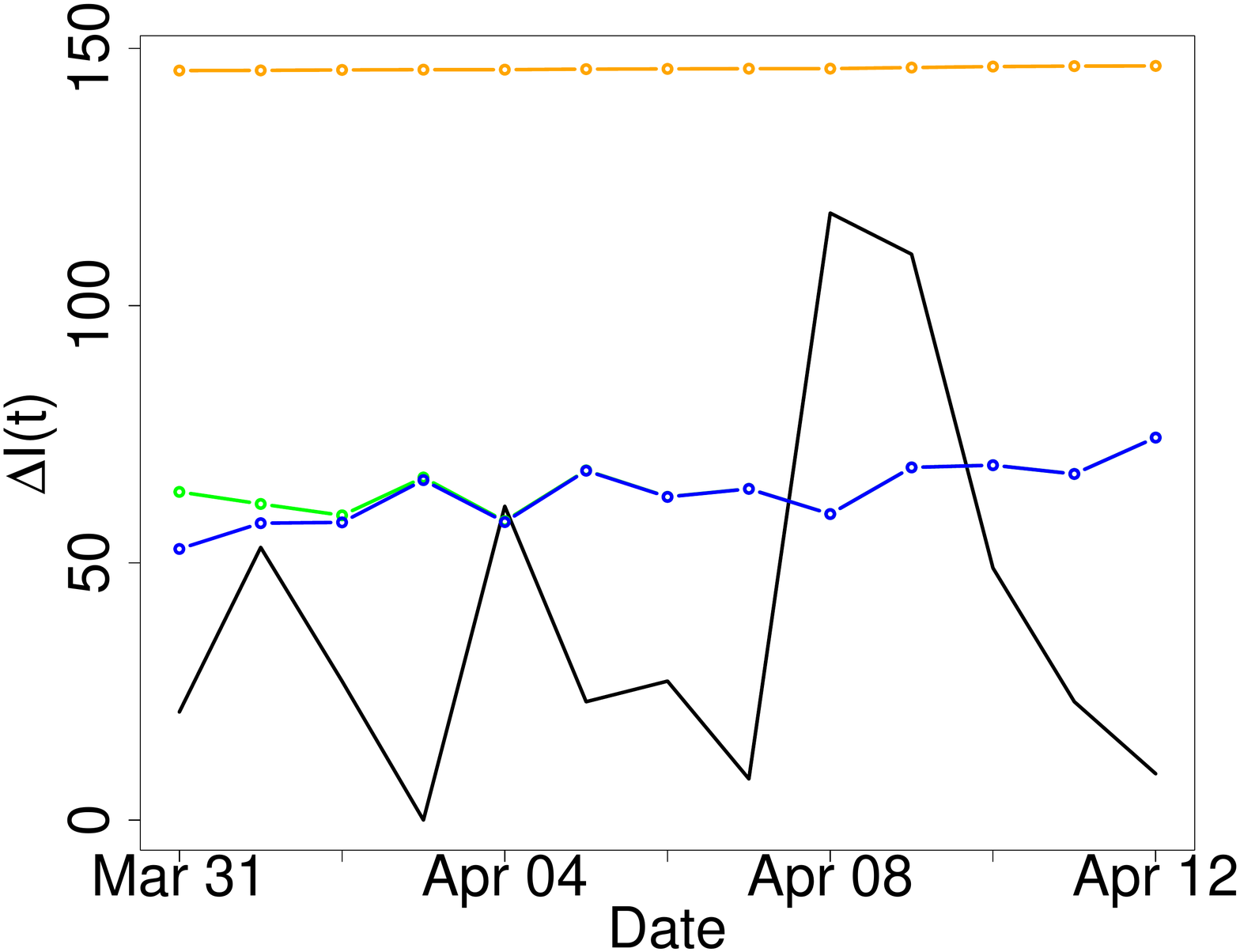}
         \subcaption{Santa Barbara $\widehat{\Delta I}(t)$}
     \end{subfigure}
     \begin{subfigure}[b]{0.16\textwidth}
         \centering
         \includegraphics[width=\textwidth]{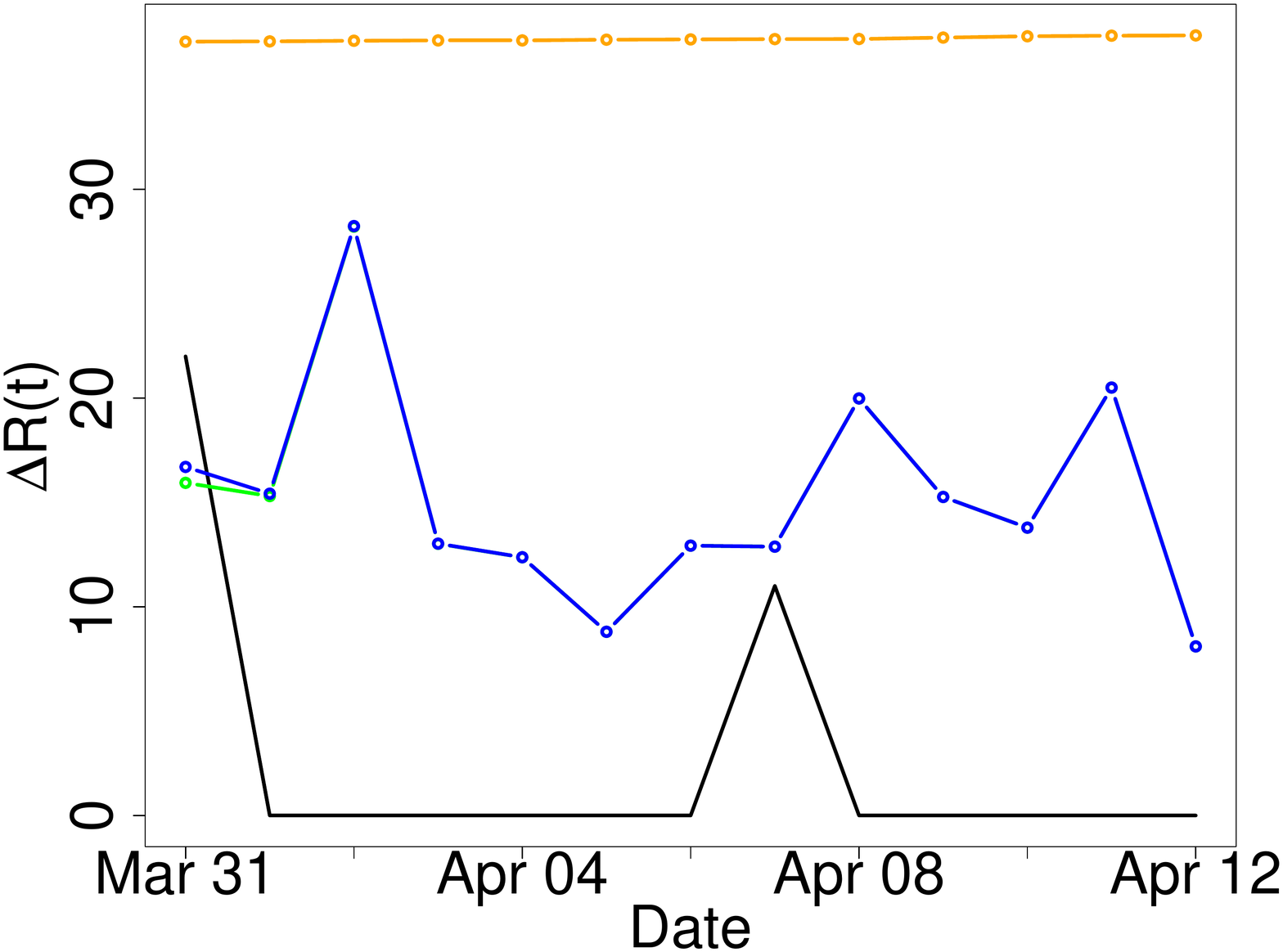}
         \subcaption{Santa Barbara $\widehat{\Delta R}(t)$}
     \end{subfigure}
        \caption{Prediction  for the response variable $y_h$ in three models in selected counties. Black line: true value; orange: Model 1; green: Model 2; blue: Model 3.
        }
        \label{fig:prediction_county}
\end{figure*}

In Figure \ref{fig:prediction_county}, we provide the predicted values in the last 2 weeks from all three different models in nine counties/cities.  
In both the daily infected cases $\Delta I(t)$ and the daily recovered cases $\Delta R(t)$, the predictions by Model 2 and Model 3 perform better than Model 1 in all regions.

\subsection{Details about Heatmap of Transmission Rate and Recovery Rate}\label{sec:heatmap}
In Figure \ref{fig:FL_counties} in the main paper, we provide heatmaps of the transmission and recovery rates in 67 counties in Florida. In this section, the details about the calculations of
 the estimated transmission rate  $\widehat{\beta}$, the estimated recovery rate  $\widehat{\gamma}$  and the corresponding
 standard errors of $\widehat{\beta}$ and $\widehat{\gamma}$ are presented.
 
Consider a linear model within the stationary segment, $Y = X B + \epsilon$, where $B =(\beta, \gamma)^\prime \in \mathbb{R}^{2}$, $Y = (Y_1^\prime, \dots, Y_n^\prime)^\prime \in \mathbb{R}^{2n}$, $X = (X_1^\prime, \dots, X_n^\prime)^\prime \in \mathbb{R}^{2n \times 2}$, $X_t  \in \mathbb{R}^{2 \times 2}$ and $Y_t  \in \mathbb{R}^{2}$ are  defined in \eqref{two_eq_1}. Using linear regression analysis
\cite{seber2012linear}, one can calculate the estimated coefficients and corresponding  standard errors as shown below:
 
 \begin{itemize}
     \item  The estimated transmission rate $\widehat{\beta}$  and recovery rate $\widehat{\gamma}$   are given by
     $ \widehat{B}   = (\widehat{\beta}, \widehat{\gamma})^\prime= (X^\prime X)^{-1}X^\prime Y$.
     \item The standard error of the estimated coefficient vector  $\widehat{B} $
      is given by
     $ SE(\widehat{B} )  = ((X^\prime X)^{-1}\widehat{\sigma}^2)^{\frac{1}{2}}$, 
     where $\widehat{\sigma}^2$ is the sample variance $\widehat{\sigma}^2 = (2n-2)^{-1}\sum_{t=1}^{n}((Y_{t,1} - X_{t,1}\widehat{B})^2 + (Y_{t,2} - X_{t,2}\widehat{B})^2 )$ where $Y_t = (Y_{t,1},Y_{t,2})^\prime$ and $X_{t,i}$ is the $i$-th row of $X$, $i=1,2$. 
     The standard errors of $\widehat{\beta}$ and  $\widehat{\gamma}$
     are the diagonal elements of the  matrix $SE(\widehat{B} )$.
 \end{itemize}

\section{Additional Results of VAR($p$) Model}\label{sec:var-results}

\begin{figure*}[!ht]
     \centering
     \begin{subfigure}[b]{0.19\textwidth}
         \centering
         \includegraphics[width=\textwidth]{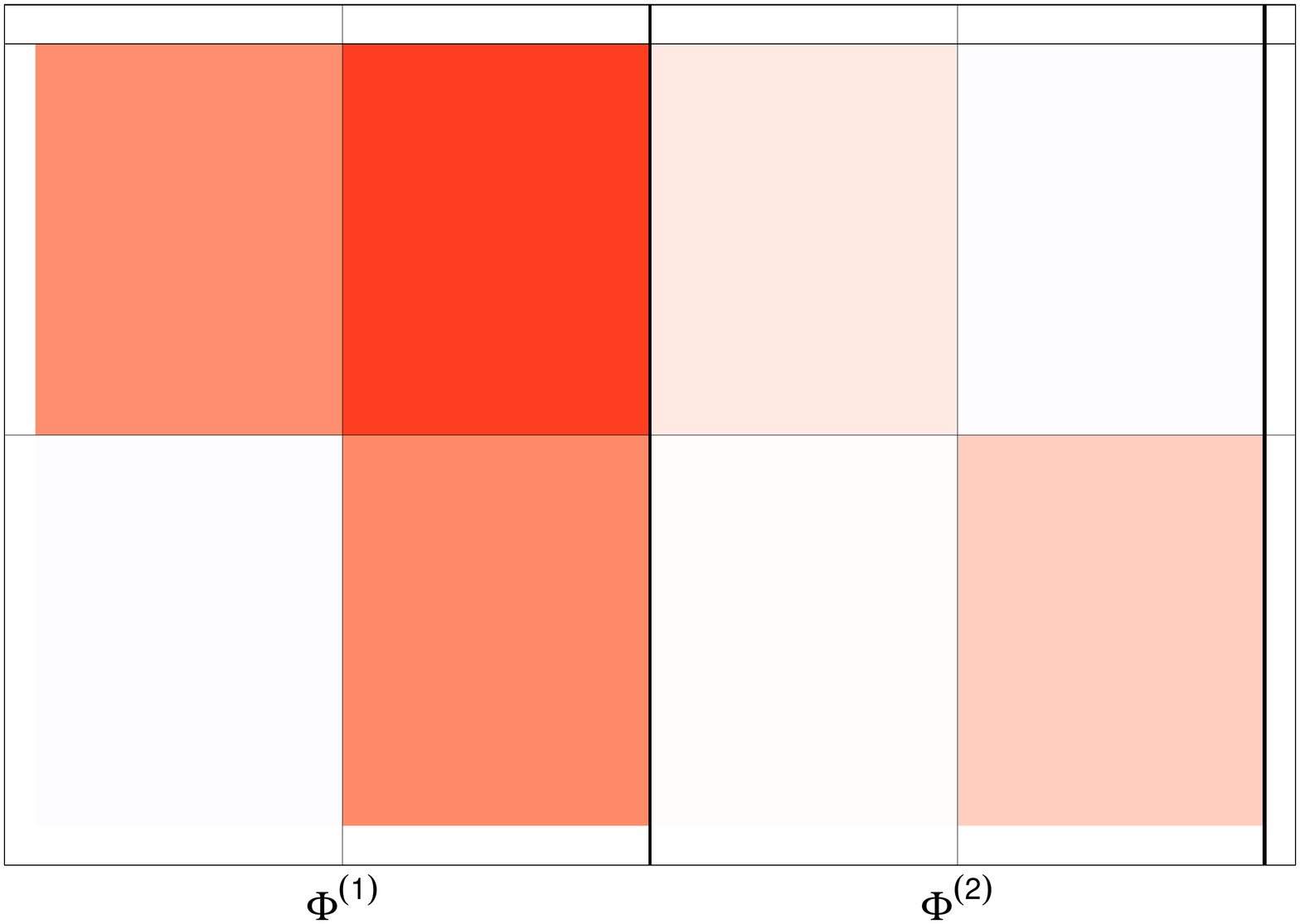}
         \subcaption{NY }
     \end{subfigure}
     \begin{subfigure}[b]{0.19\textwidth}
         \centering
         \includegraphics[width=\textwidth]{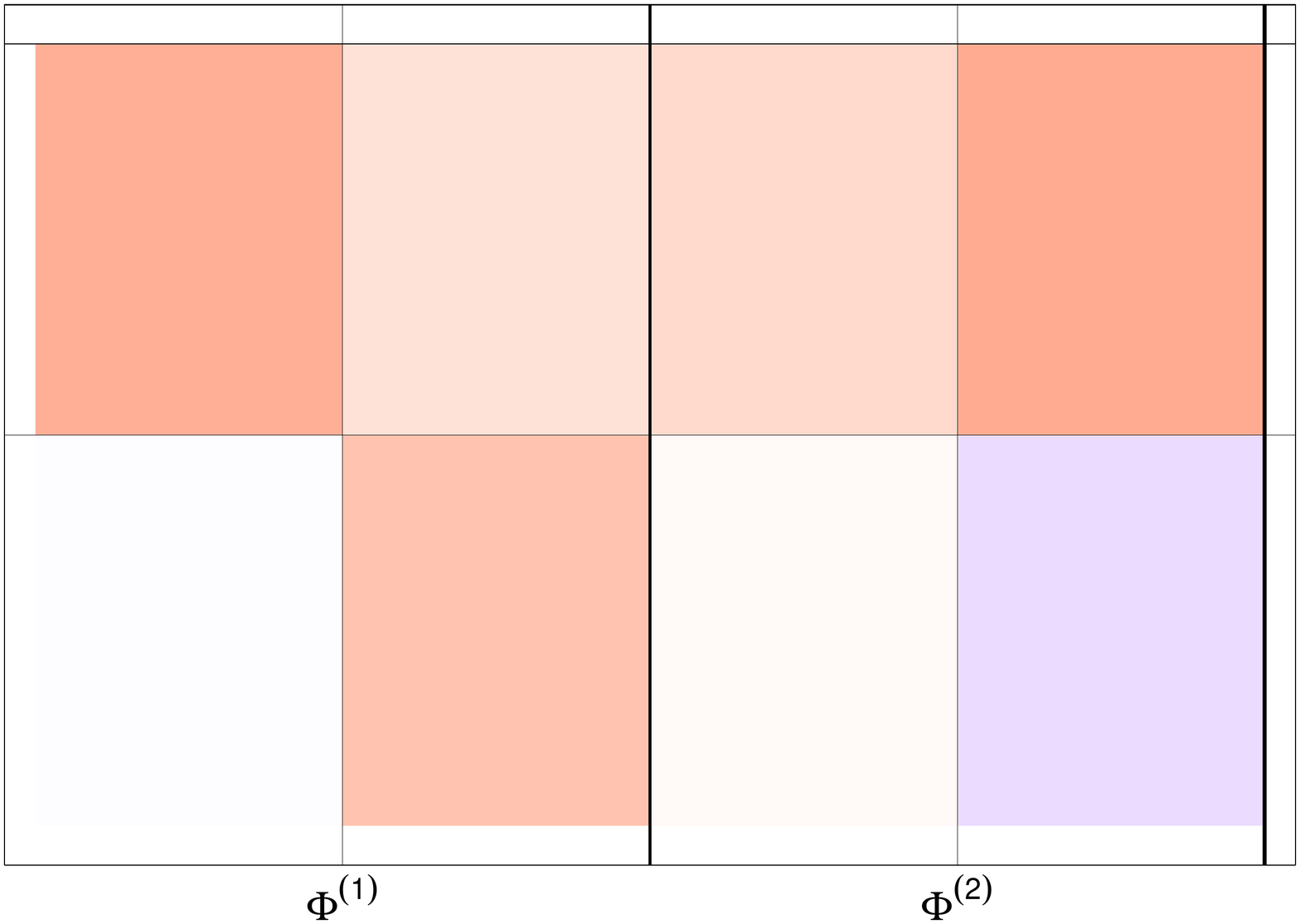}
         \subcaption{OR }
     \end{subfigure}
     \begin{subfigure}[b]{0.19\textwidth}
         \centering
         \includegraphics[width=\textwidth]{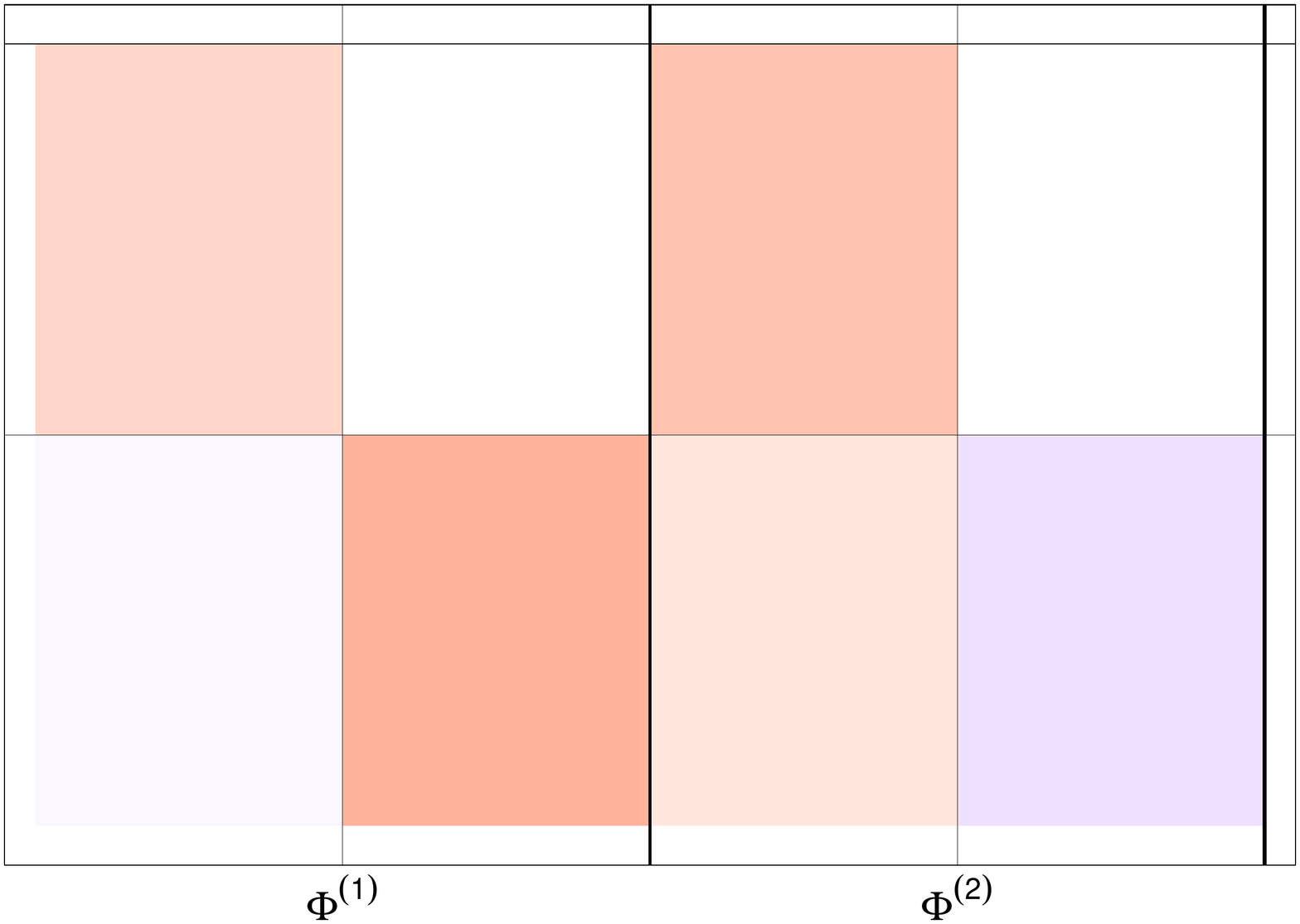}
         \subcaption{FL}
     \end{subfigure}
     \begin{subfigure}[b]{0.19\textwidth}
         \centering
         \includegraphics[width=\textwidth]{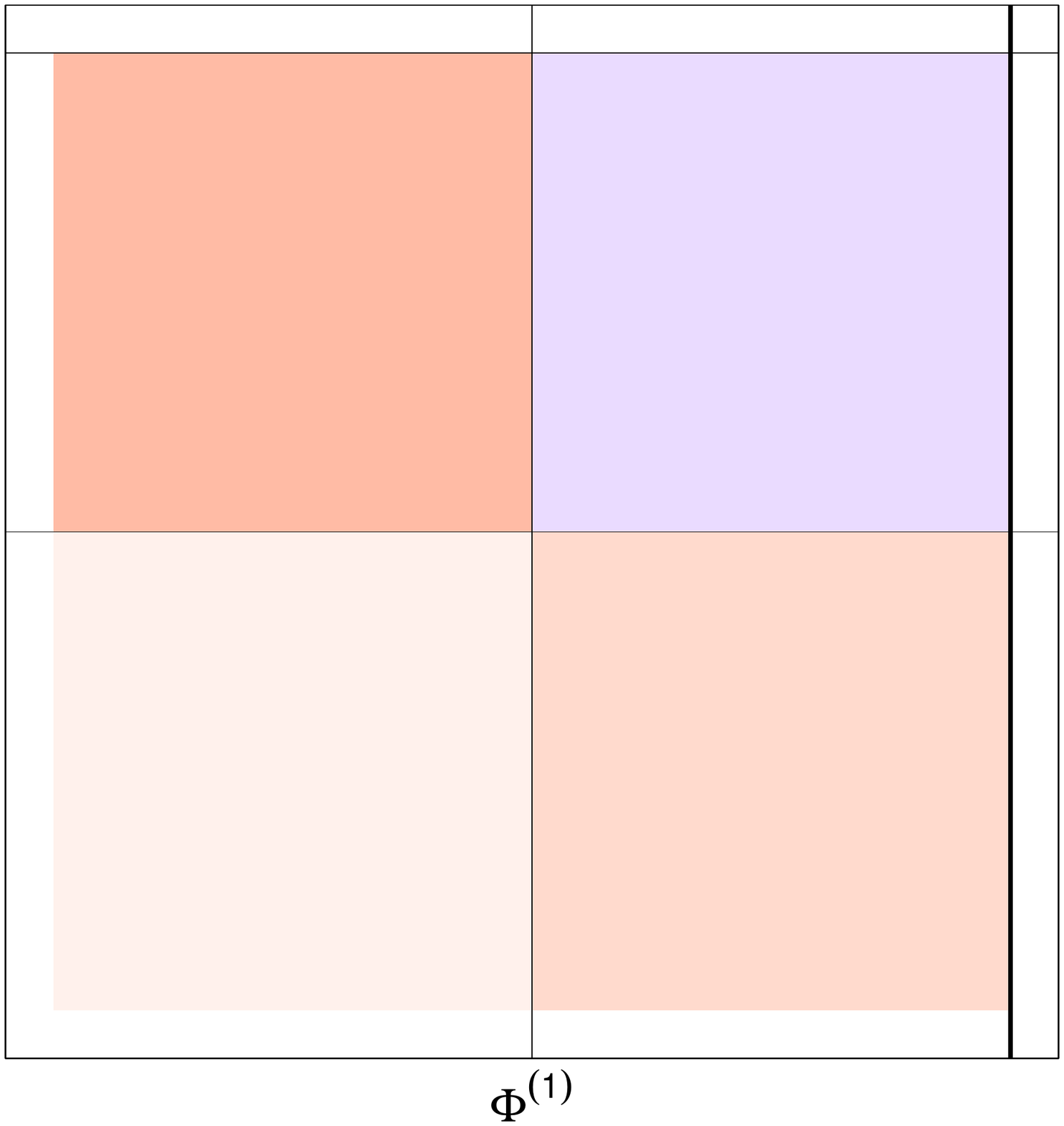}
         \subcaption{CA}
     \end{subfigure}
     \begin{subfigure}[b]{0.19\textwidth}
         \centering
         \includegraphics[width=\textwidth]{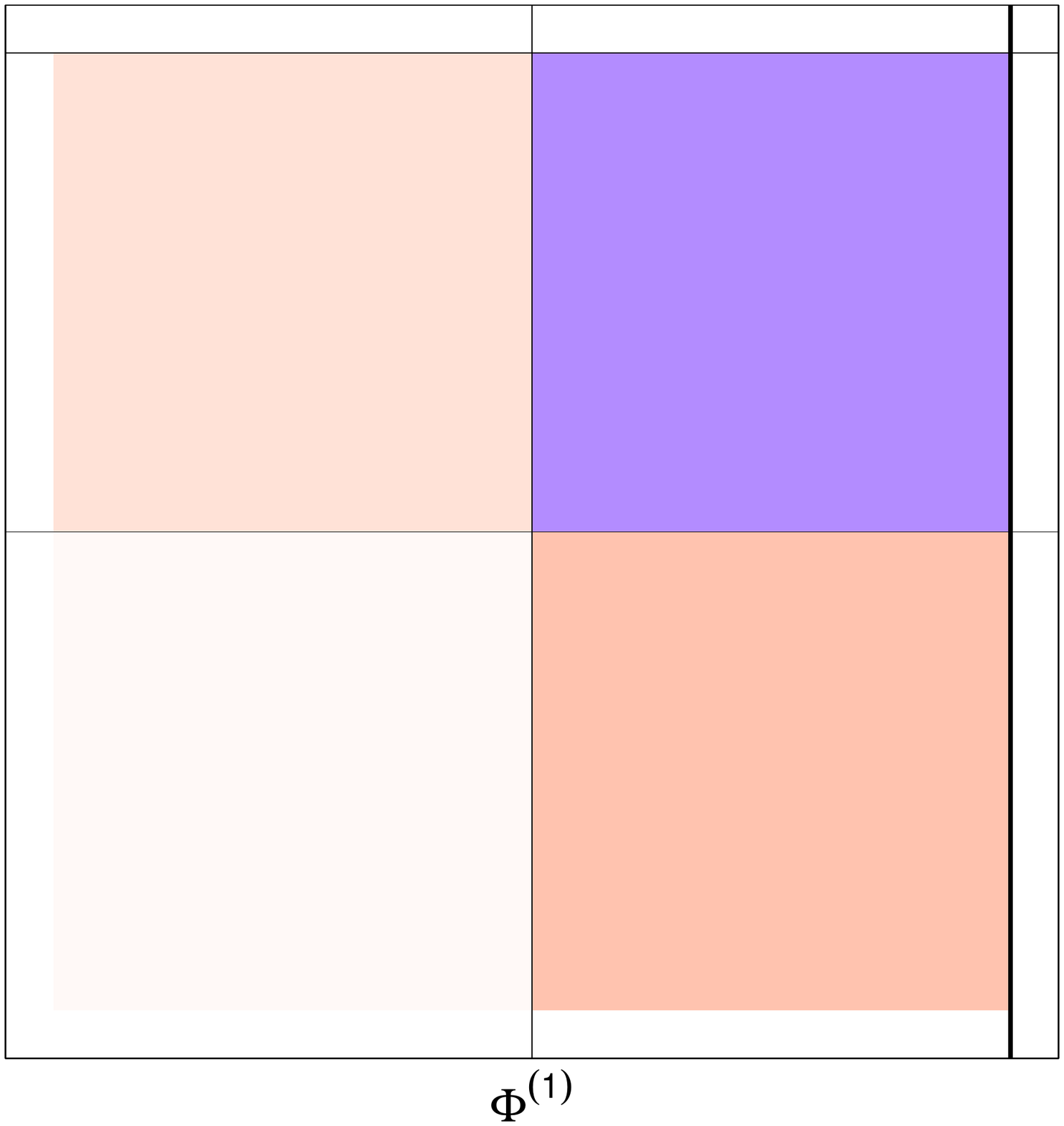}
         \subcaption{TX}
     \end{subfigure}
     
          \begin{subfigure}[b]{0.19\textwidth}
         \centering
         \includegraphics[width=\textwidth]{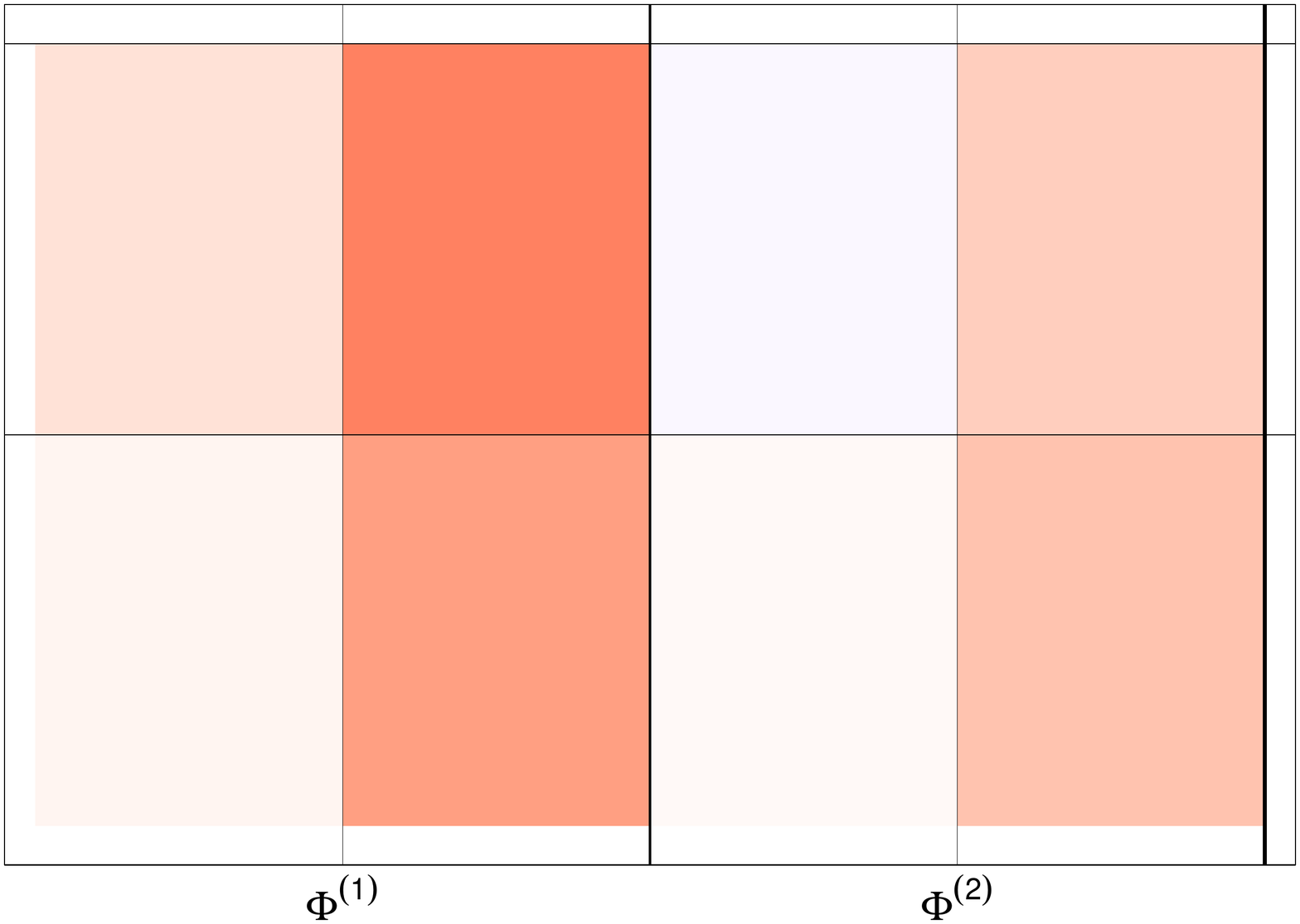}
         \subcaption{NYC}
     \end{subfigure}
     \begin{subfigure}[b]{0.19\textwidth}
         \centering
         \includegraphics[width=\textwidth]{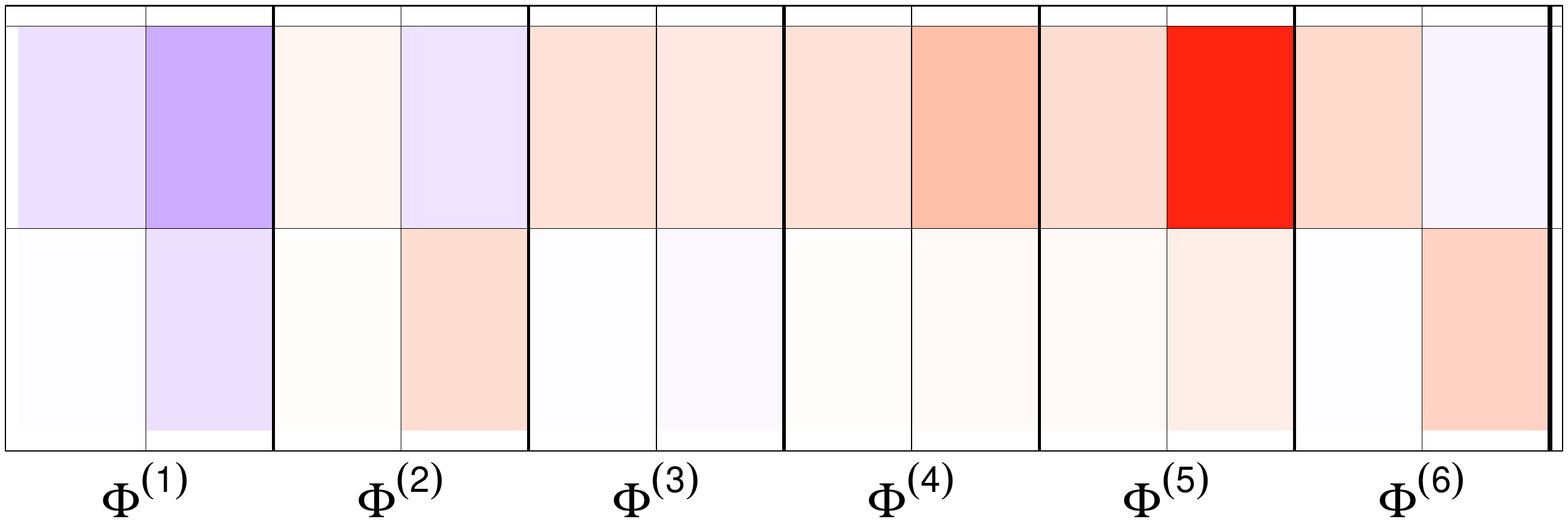}
         \subcaption{King}
     \end{subfigure}
     \begin{subfigure}[b]{0.19\textwidth}
         \centering
         \includegraphics[width=\textwidth]{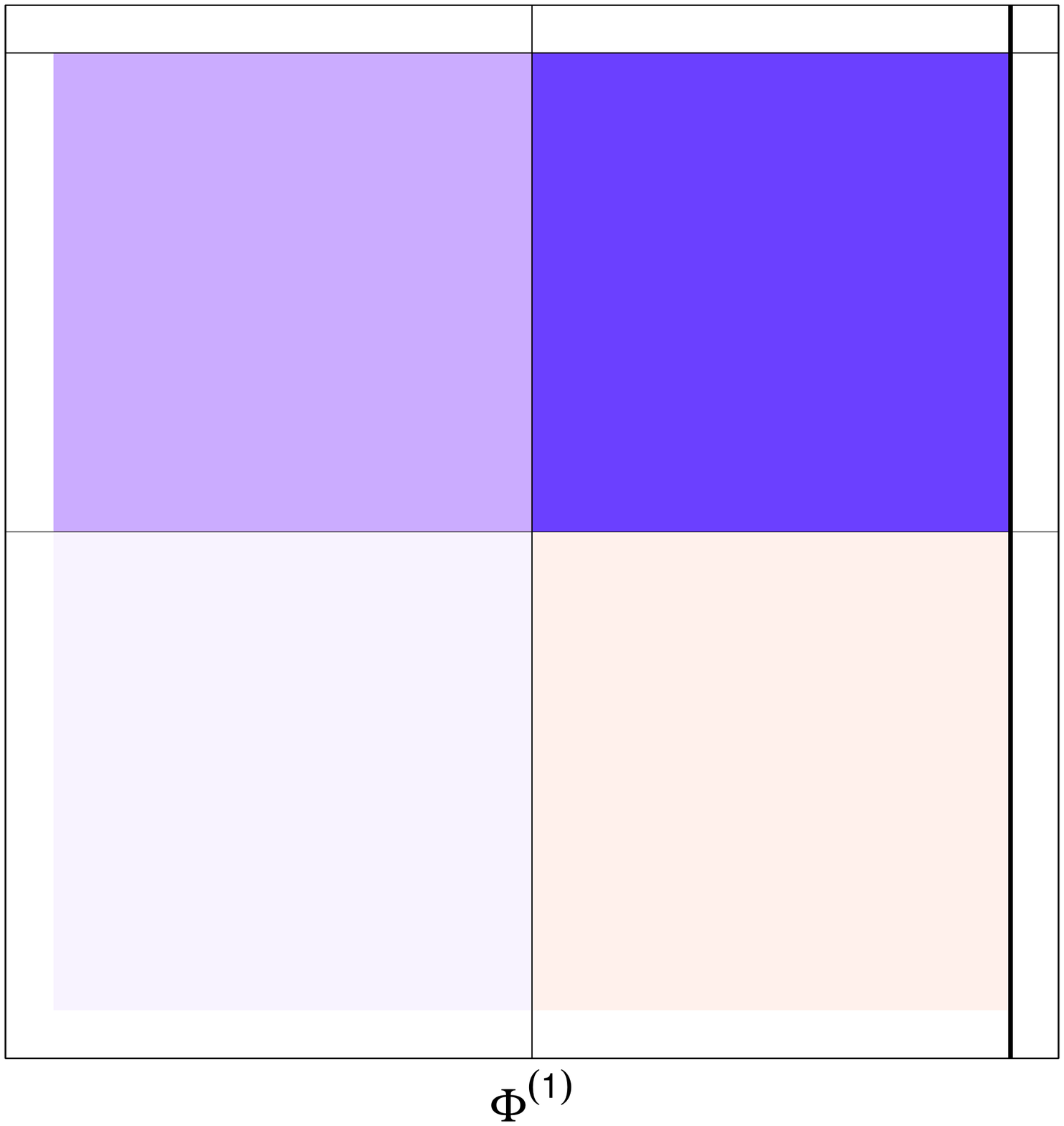}
         \subcaption{Miami-Dade}
     \end{subfigure}
     \begin{subfigure}[b]{0.19\textwidth}
         \centering
         \includegraphics[width=\textwidth]{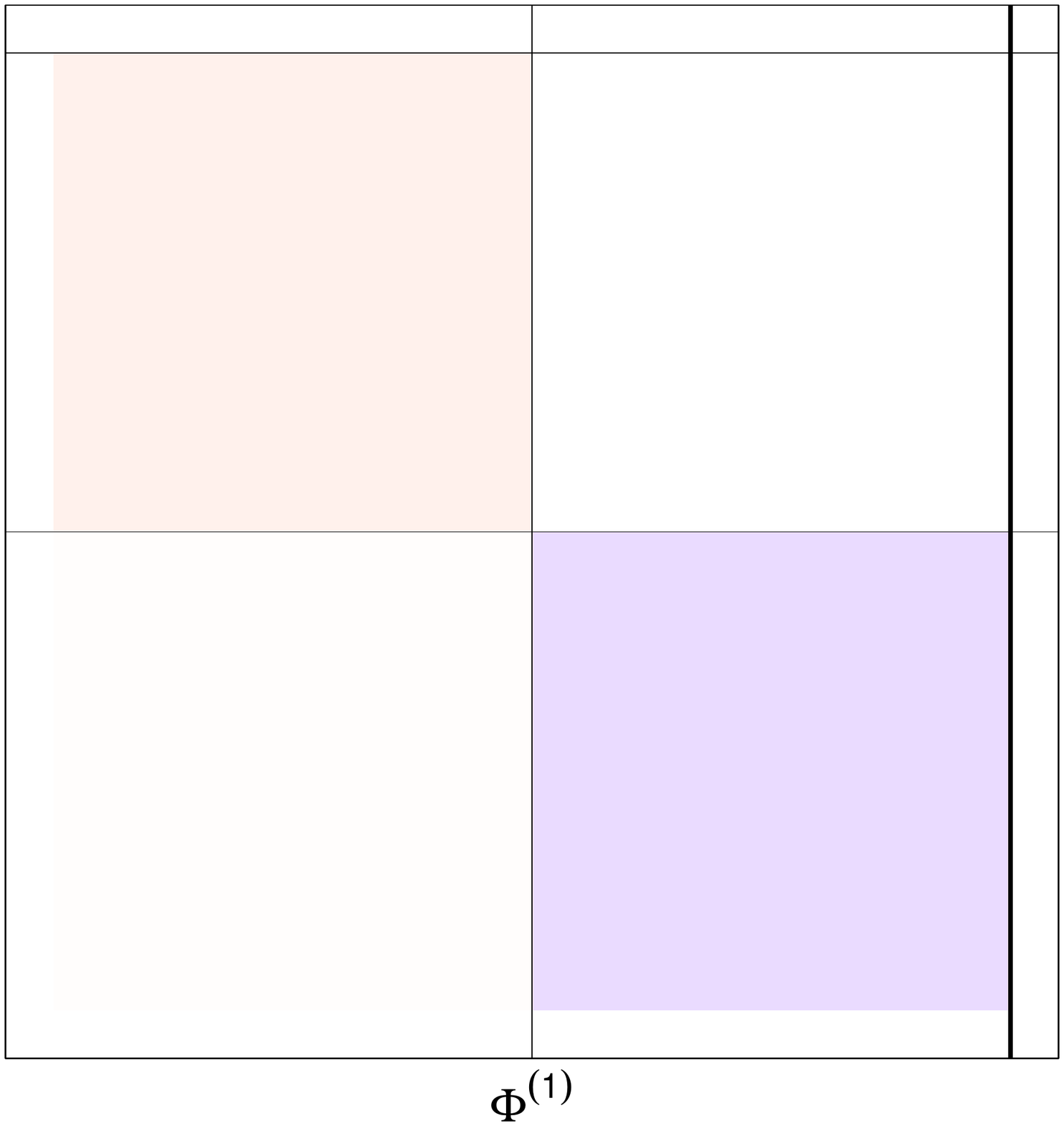}
         \subcaption{Charleston}
     \end{subfigure}
     \begin{subfigure}[b]{0.19\textwidth}
         \centering
         \includegraphics[width=\textwidth]{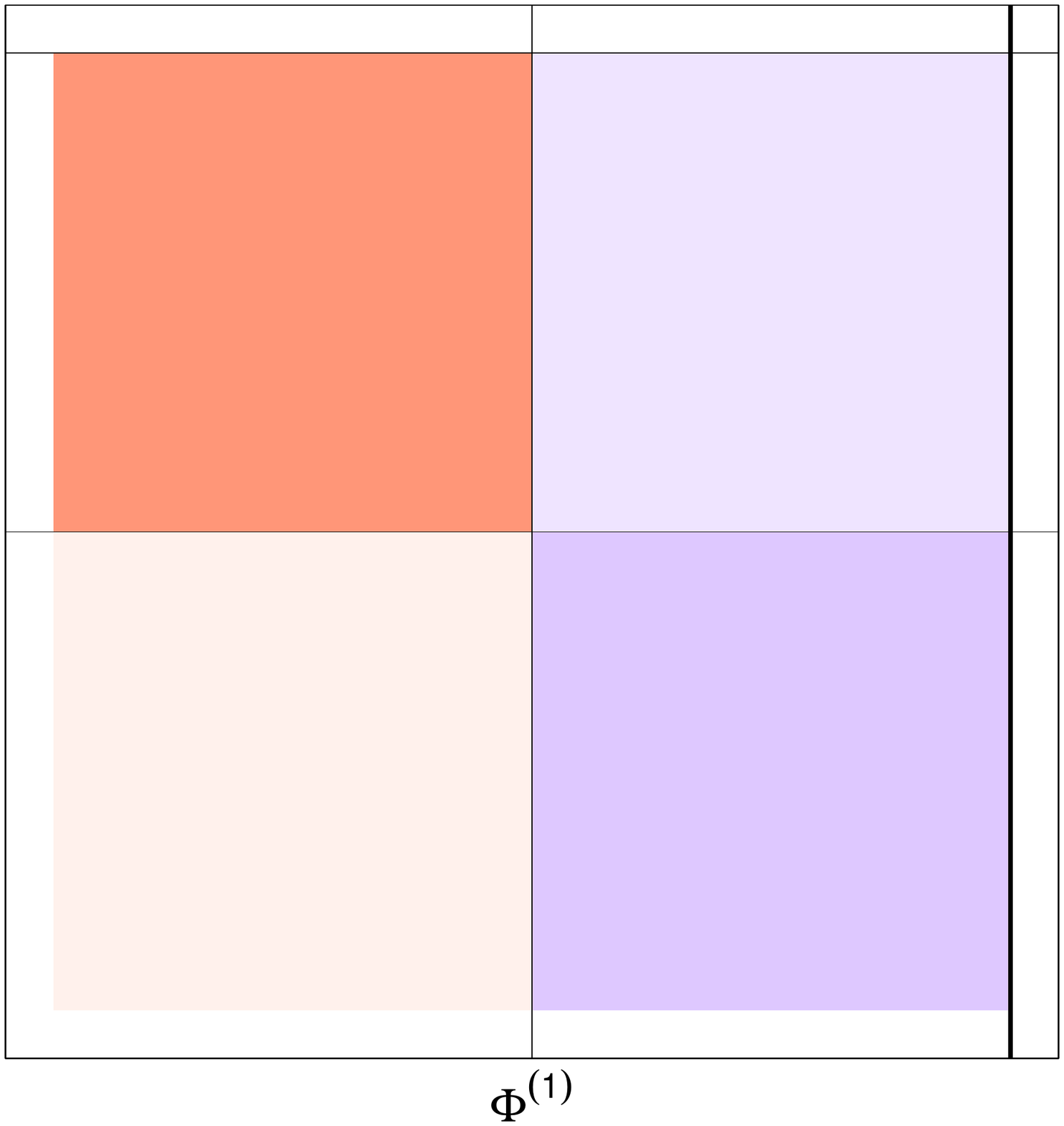}
         \subcaption{Greenville}
     \end{subfigure}
     \begin{subfigure}[b]{0.19\textwidth}
         \centering
         \includegraphics[width=\textwidth]{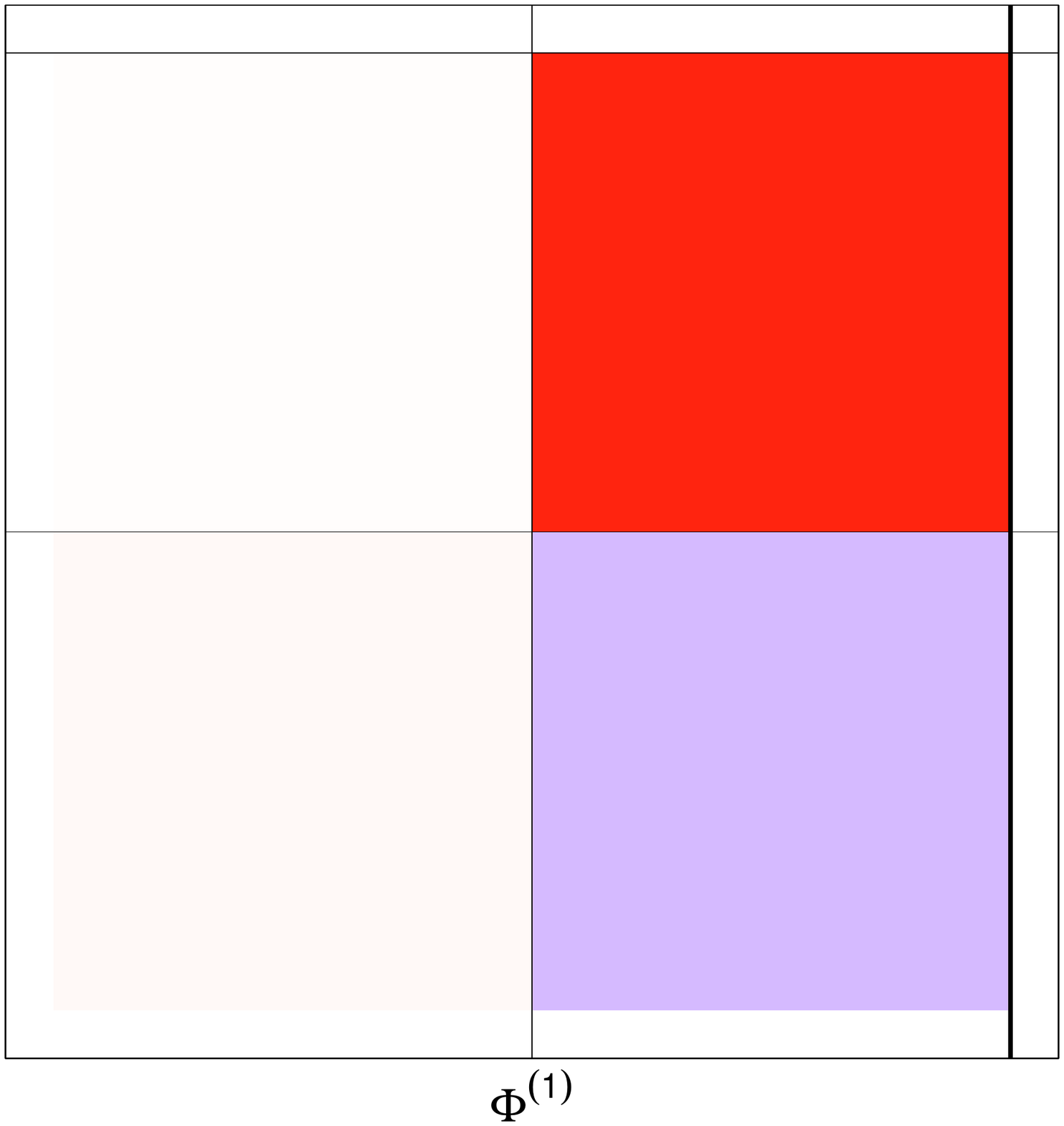}
         \subcaption{Richland}
     \end{subfigure}
     \begin{subfigure}[b]{0.19\textwidth}
         \centering
         \includegraphics[width=\textwidth]{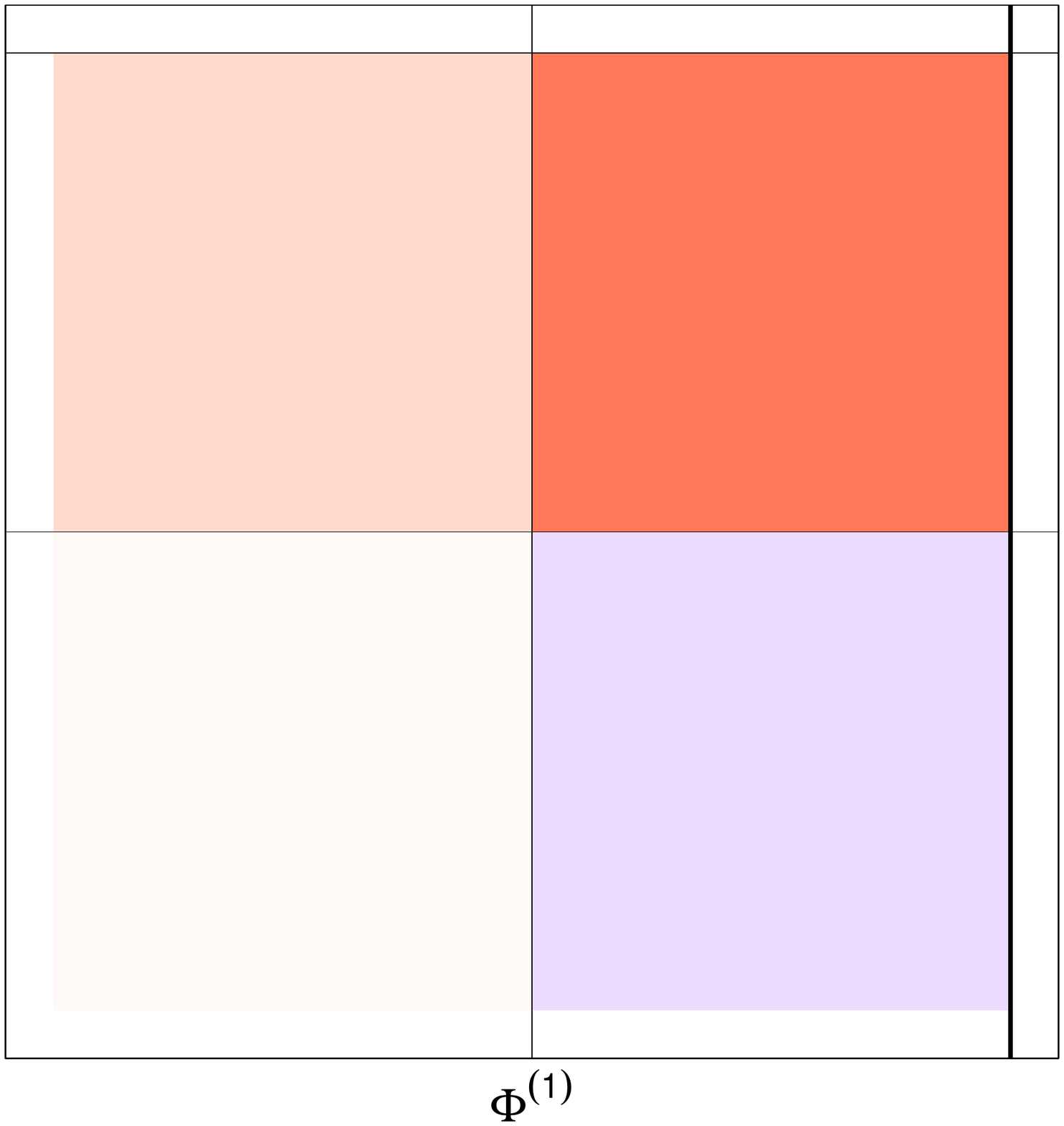}
         \subcaption{Horry}
     \end{subfigure}
     \begin{subfigure}[b]{0.19\textwidth}
         \centering
         \includegraphics[width=\textwidth]{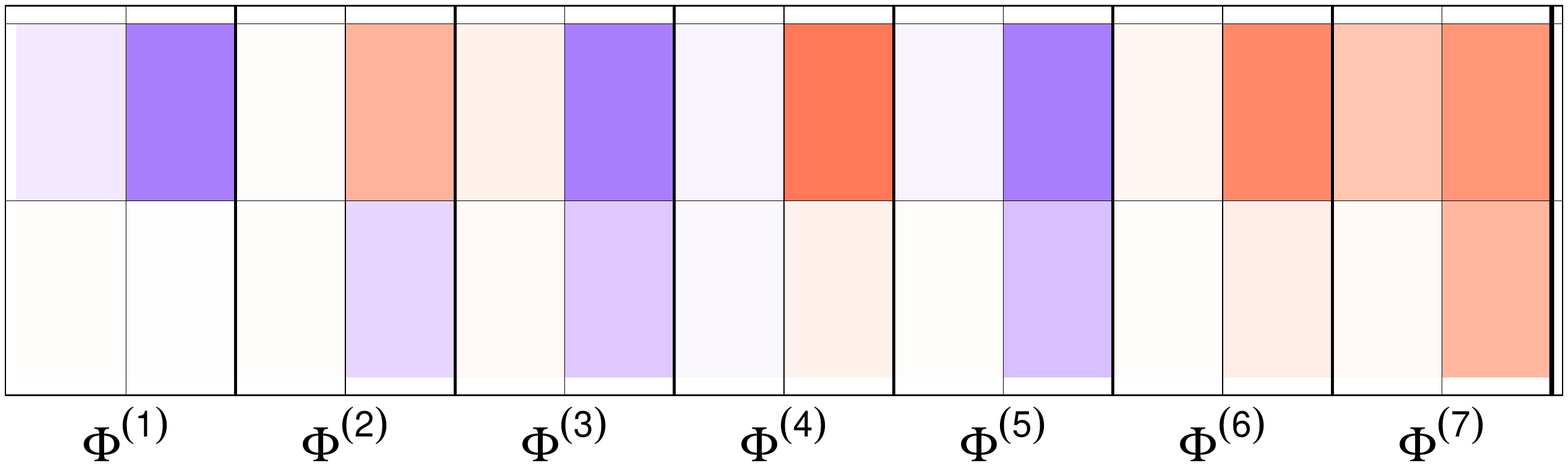}
         \subcaption{Riverside}
     \end{subfigure}
     \begin{subfigure}[b]{0.19\textwidth}
         \centering
         \includegraphics[width=\textwidth]{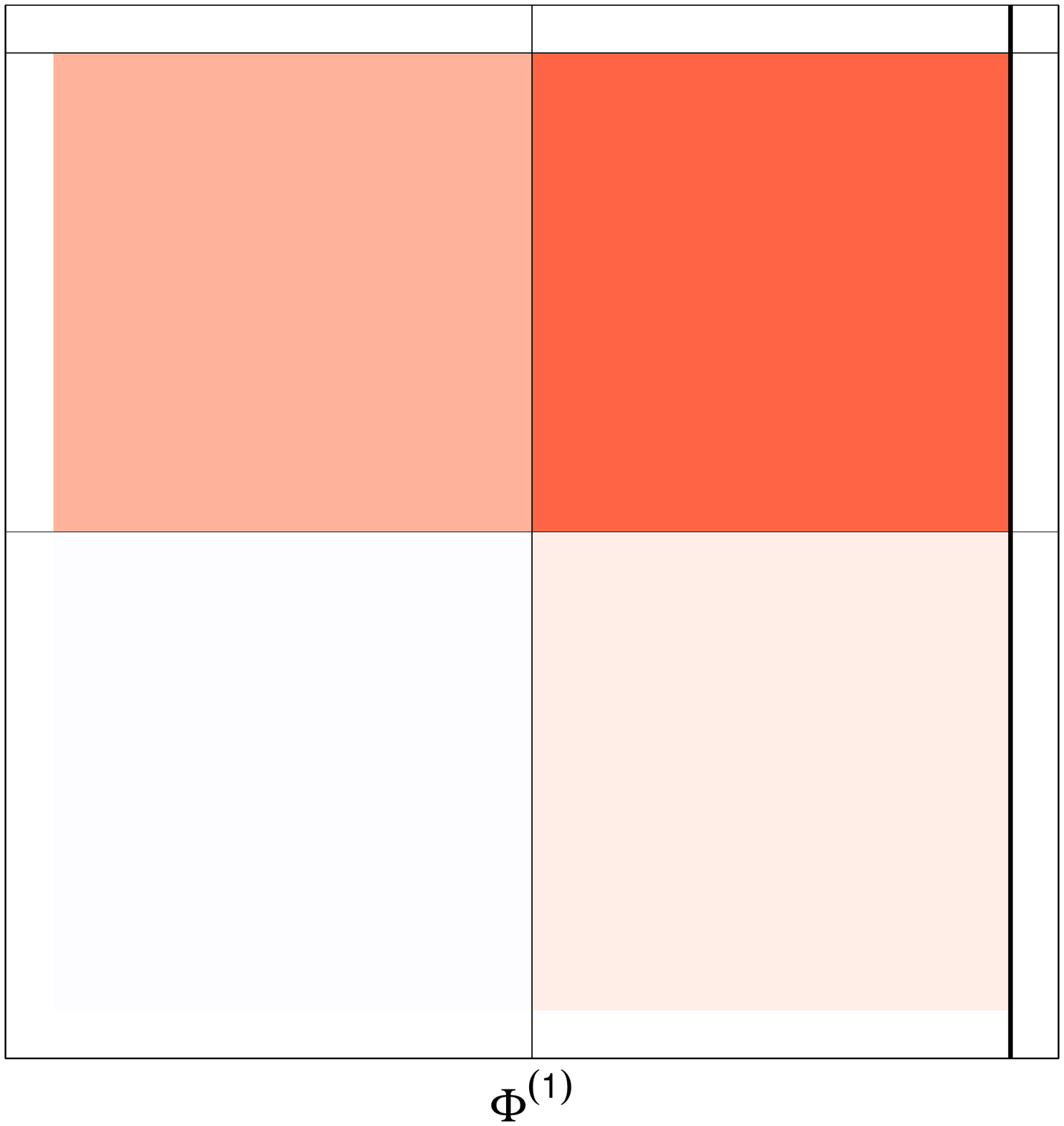}
         \subcaption{Santa Barbara}
     \end{subfigure}
        \caption{VAR($p$) parameters matrices of residuals
        in selected regions by Model 2.3 (piecewise constant SIR model + spatial effect). The red color stands for the positive value, the blue color stands for the negative value, and the white color stands for zero.}
        \label{fig:residual_coef}
\end{figure*}

\begin{table*}[!ht]
\caption{\label{table_cp_residual}
The detected change points of the residuals from Model 2.3 by block fused lasso in the training dataset (excluding the latest 2 weeks data). The BIC criterion is used to select the lag $p$.  }
\tiny
\centering
\begin{tabular}{lcccccccccc} 
  \hline
    & NY & OR & FL & CA & TX  \\
    \hline
Change point (date) & -& March 11 2020, May 01 2020 & Aug 27 2020 & July 30 2020& May 16 2020, July 20 2020 \\
 \hline
 \hline
  & NYC& {King} & {Miami}  \\
  \hline
 Change point (date) & March 19 2020 & - & \multicolumn{2}{c}{March 29 2020, May 08 2020, May 26 2020, July 20 2020}\\
  \hline
 \hline
  &  {Charleston} & {Greenville}  \\
  \hline
Change point (date) & -&\multicolumn{3}{c}{ March 24 2020, May 09 2020, Aug 03 2020, Aug 27 2020}\\
  \hline
 \hline
& Riverside & Santa Barbara & {Richland}  & & {Horry}  \\
  \hline
Change point (date)  & - & - & \multicolumn{2}{c}{April 05 2020, April 10 2020}, June 10 2020 & March 26 2020, May 05 2020 \\
\hline
\end{tabular}
\end{table*}

Before fitting the VAR($p$) model, we first use the block fused lasso method to check whether there are any change points in the structure of the VAR($p$) parameters. If there are some change points in the VAR component, we use the last segment refitted estimated VAR parameters for forecasting. The detected change points are shown in Table \ref{table_cp_residual}. In state-level, we detect a change point in Oregon, Florida, California and Texas; in county-level, we detect some change points in New York City, Miami-Dade County,  Greenville County, Richland County and Horry County. The estimated lag $p$ and VAR($p$) parameters matrices for each state and counties are displayed in Figure~\ref{fig:residual_coef}.


%

\ifCLASSOPTIONcaptionsoff
  \newpage
\fi

\end{document}